\newcommand{\noop}[1]{} 
\small\texttt{arXiv}\addcolon}\space%
\renewcommand*{\lay@value}[2]{%
  \strip@pt\dimexpr0.351459\dimexpr\csname#2\endcsname\relax\relax mm%
}
\renewcommand{\sectionmark}[1]{\markright{\thesection\ #1}} 
\let\origdoublepage\cleardoublepage
\newcommand{\clearemptydoublepage}{%
  \clearpage
  {\pagestyle{plain}\origdoublepage}
}
\let\cleardoublepage\clearemptydoublepage
\definecolor{light-gray}{gray}{0.7}
\titlespacing*{\section}{0pt}{5.25ex plus 1.5ex minus .2ex}{3.45ex plus .2ex}
\titlespacing*{\subsection} {0pt}{4.8ex plus 1.5ex minus .2ex}{2.25ex plus .2ex}
\titlespacing*{\subsubsection}{0pt}{4.8ex plus 1.5ex minus .2ex}{2.25ex plus .2ex}
\large\begin{center}%
\newcommand*{\insertlofspace}{
  \addtocontents{lof}{\protect\addvspace{10pt}}%
}
\newcommand\footnoteref[1]{\protected@xdef\@thefnmark{\ref{#1}}\@footnotemark}
\definecolor{links}{rgb}{0,0.3,0} 
\definecolor{mylinks}{rgb}{0.8,0.2,0} 
\newtheoremstyle{owntheorem}{}
  {}
  {\itshape}
  {}
  {}
  {:}
  {1em}
  {\textbf{\thmname{#1}\thmnumber{ #2}}\thmnote{ [#3]}}
\newtheoremstyle{owndefinition}%
  {}{}%
  {}{}%
  {}{:}%
  {1em}%
  {\textbf{\thmname{#1}\thmnumber{ #2}}\thmnote{ [#3]}}
\theoremstyle{owntheorem}
\newtheorem{theorem}{Theorem}[section]
\newtheorem{lemma}[theorem]{Lemma}
\newtheorem{cor}[theorem]{Corollary}
\theoremstyle{owndefinition}
\def\@endtheorem{\hfill{\lower0.3ex\hbox{\ensuremath{\triangle}}}\endtrivlist\@endpefalse } 
\newtheoremstyle{ownexample}{}{}{}{}{}{:}{1em}{\textbf{\thmname{#1}\thmnumber{ #2}}\thmnote{ [#3]}}
\theoremstyle{ownexample} 
\newtheorem{example}[theorem]{Example}
\DeclareMathAlphabet{\mathpgoth}{OT1}{pgoth}{m}{n} 
\DeclareMathAlphabet{\mathesstixfrak}{U}{esstixfrak}{m}{n} 
\DeclareMathAlphabet{\mathboondoxfrak}{U}{BOONDOX-frak}{m}{n} 
\DeclareMathAlphabet{\mathgoth}{U}{ygoth}{m}{n} 
\numberwithin{equation}{section} 
\newcommand{\dif}{\mathop{}\!\mathrm{d}}
\newcommand{\spx}[1]{%
  \if\relax\detokenize{#1}\relax
    \expandafter\@gobble
  \else
    \expandafter\@firstofone
  \fi
  {^{#1}}%
}
\newcommand\pd[3][]{\frac{\partial\spx{#1}#2}{\partial#3\spx{#1}}}
\newcommand{\od}[3][]{\frac{\dif\spx{#1}#2}{\dif#3\spx{#1}}}
\newcommand{\genericdel}[4]{%
  \ifcase#3\relax
  \ifx#1.\else#1\fi#4\ifx#2.\else#2\fi\or
  \bigl#1#4\bigr#2\or
  \Bigl#1#4\Bigr#2\or
  \biggl#1#4\biggr#2\or
  \Biggl#1#4\Biggr#2\else
  \left#1#4\right#2\fi
}
\newcommand{\sVert}[1][0]{%
  \ifcase#1\relax
  \rvert\or\bigr|\or\Bigr|\or\biggr|\or\Biggr
  \fi
}
\newcommand{\fd}[2][]{\,\Delta^{#1}\!#2} 
\newcommand{\fpd}[3][]{\,\Delta_{#2}^{#1}#3} 
\newcommand{\diag}{\operatorname{diag}}
\newcommand{\Ann}{\operatorname{Ann}}
\newcommand{\Disc}{\operatorname{Disc}}
\newcommand{\ch}{\operatorname{char}}
\newcommand{\Arg}{\operatorname{Arg}}
\renewcommand\Re{\operatorname{Re}} 		
\renewcommand\Im{\operatorname{Im}} 		
\newcommand{\vol}{\operatorname{vol}}
\newcommand{\rank}{\operatorname{rank}}
\newcommand{\Sing}{\operatorname{Sing}}
\newcommand{\codim}{\operatorname{codim}}
\newcommand{\Li}{\operatorname{Li}}
\newcommand{\Conv}{\operatorname{Conv}}
\newcommand{\Cone}{\operatorname{Cone}}
\newcommand{\relint}{\operatorname{relint}}
\newcommand{\Adj}{\operatorname{Adj}}
\newcommand{\Newt}{\operatorname{Newt}}
\newcommand{\Aff}{\operatorname{Aff}}
\newcommand{\Vertices}{\operatorname{Vert}}
\newcommand{\Dep}{\operatorname{Dep}}
\newcommand{\Val}{\operatorname{Val}}
\newcommand{\gr}{\operatorname{gr}}
\newcommand{\initial}{\operatorname{in}}
\newcommand{\sign}{\operatorname{sign}}
\newcommand{\Gale}{\operatorname{Gale}}
\newcommand{\ord}{\operatorname{ord}}
\newcommand{\Sol}{\operatorname{Sol}}
\newcommand{\ddif}{\mathop{}\!\mathrm{d}^d}
\newcommand{\order}[1]{\mathcal{O} \!\left( #1 \right)}
\newcommand{\bigslant}[2]{{\raisebox{.2em}{\ensuremath{#1}}\!\!\left/\raisebox{-.2em}{\ensuremath{#2}}\right.}}
\newcommand{\normalslant}[2]{#1/#2}
\newcommand{\contraction}[2]{#1/#2}
\newcommand{\Aa}{\mathcal{A}}
\newcommand{\Bb}{\mathcal{B}}
\newcommand{\Tt}{\mathcal{T}}
\newcommand{\Aas}{\mathcal{A}_\sigma}
\newcommand{\Aabs}{\mathcal{A}_{\bar{\sigma}}}
\newcommand{\zsigma}{z_\sigma}
\newcommand{\zbarsigma}{z_{\bar\sigma}}
\newcommand{\nuu}{{\underline{\nu}}}
\newcommand{\Cc}[1][f]{\overline{\mathcal C}_{#1}^{\mathrm c}}
\newcommand{\AnK}{\mathbb{A}^n_{\mathbb{K}}}
\newcommand{\PnK}{\mathbb{P}^n_{\mathbb{K}}}
\newcommand{\Uu}{\mathcal{U}}
\newcommand{\Ff}{\mathcal{F}}
\newcommand{\Gg}{\mathcal{G}}
\newcommand{\Ss}{\mathcal{S}}
\newcommand{\Csn}{(\mathbb C^*)^n}
\newcommand{\hatT}{\widehat{\mathcal T}}
\newcommand{\btau}{{\bar\tau}}
\newcommand{\mink}[1]{\mathgoth{#1}} 
\newcommand{\dhalf}{\frac{d}{2}}
\newcommand{\Var}{\mathlarger{\mathbb{V}}} 
\newcommand{\HypF}[3]{ {_2}F_1 \!\left.\left(\genfrac{}{}{0pt}{}{#1}{#2}\right| #3 \right) }
\newcommand{\HypFpq}[5]{ {_{#1}}F_{#2} \!\left.\left(\genfrac{}{}{0pt}{}{#3}{#4}\right| #5 \right) }
\newcommand{\StirlingFirst}[2]{\begin{bmatrix} #1 \\ #2 \end{bmatrix}}
\newcommand{\StirlingFirstSmall}[2]{\begin{bsmallmatrix} #1 \\ #2 \end{bsmallmatrix}}
\newcommand{\StirlingSecond}[2]{\begin{Bmatrix} #1 \\ #2 \end{Bmatrix}}
\newcommand{\StirlingSecondSmall}[2]{\begin{Bsmallmatrix} #1 \\ #2 \end{Bsmallmatrix}}
\newcommand{\Ssum}[2]{S^{(#1)}_{#2} }
\newcommand{\Zsum}[2]{Z^{(#1)}_{#2} }
\newcommand{\Binomial}[2]{\begin{pmatrix} #1 \\ #2 \end{pmatrix}}
\newcommand{\comma}{\quad \textrm{ ,}} 
\newcommand{\point}{\quad \textrm{ .}} 
\newcommand{\monthword}[1]{\ifcase#1\or \or \or \or April\fi}
\newcommand*{\vcenteredhbox}[1]{\begingroup \setbox0=\hbox{#1}\parbox{\wd0}{\box0}\endgroup}
\newcommand{\softwareName}[1]{\textsl{#1}}
\newcommand{\CNV}{completely non-vanishing }
\newcommand{\HKP}{Horn-Kapranov-parameterization }
\newcommand{\HKp}{Horn-Kapranov-parameterization}
\newcommand{\topologyDescr}[5]{\multirow{#1}{*}{\begin{tabular}{>{\centering\arraybackslash}p{.9cm} >{\centering\arraybackslash}p{.9cm} >{\centering\arraybackslash}p{.9cm}} \multicolumn{3}{c}{#2} \\ $ #3 $ & $ #4 $ & $ #5 $ \end{tabular}}}
\newcommand{\topologyDescrTikz}[7]{
    \multirow{#1}{*}{
        \makecell{
            \begin{minipage}{0.3cm}
                \hspace{-0.7cm}\rotatebox{90}{\textit{#2}}
            \end{minipage}%
            \begin{minipage}{2cm}
                \centering \begin{tikzpicture}[thick, dot/.style = {draw, shape=circle, fill=black, scale=.3}, every node/.style={scale=0.8}, scale=#7] #6 
                \end{tikzpicture}
            \end{minipage} \\[1cm] 
            \vspace*{\fill}
            \begin{tabular}{@{}>{\centering\arraybackslash}p{.9cm} >{\centering\arraybackslash}p{.9cm} >{\centering\arraybackslash}p{1cm}@{}}
                $ L=#3 $ & $ n=#4 $ & $ m=#5 $
            \end{tabular}
        }%
    }
}
\newcommand{\SymanzikSpec}[4]{\multirow{2}{*}{$ #1 $} & \multirow{2}{*}{$ #2 $} &\multirow{2}{*}{$ #3 $} &\multirow{2}{*}{$ #4 $} &} 
\newcommand{\mmultirow}[1]{\multirow{2}{*}{$ #1 $}} 
\newcommand{\tmultirow}[1]{\multirow{2}{*}{#1}} 
\newcommand{\triangsF}[6]{#1 & #2 & #3 & #4 & #5 \\} 
\newcommand{\triangsG}[5]{& & & & & & #1 & #2 & #3 & #4 & #5 \\}
\newcommand{\mNA}{\textsuperscript{$\dagger$}}
\newcommand{\mNR}{\textsuperscript{$\ddagger$}}
\title{Hypergeometric Feynman Integrals}
\author{René Pascal Klausen}
\date{\today}
\begin{document}


\newgeometry{bindingoffset=10mm, hmargin=25mm, vmargin=25mm} 
\frontmatter

\renewcommand{\thefootnote}{\fnsymbol{footnote}}
\begin{titlepage}
    
    {\centering
        \rule{\textwidth}{1.6pt}\vspace*{-\baselineskip}\vspace*{2pt}
        \rule{\textwidth}{0.4pt}\\[\baselineskip]
        {\Huge\MakeUppercase{Hypergeometric Feynman \\[0.2\baselineskip] integrals}}\\[0.4\baselineskip]
        \rule{\textwidth}{0.4pt}\vspace*{-\baselineskip}\vspace{3.2pt}
        \rule{\textwidth}{1.6pt}\\[1.5\baselineskip]
        {\Large René Pascal Klausen \!\raisebox{0.5em}{\scalebox{0.75}{\href{mailto:klausen@physik.hu-berlin.de}{\faEnvelope[regular]}}}\hspace{2pt}, \nth{25} February 2023 \par}
        \vspace{2\baselineskip}
    }
    
    {\noindent This is a minor updated version of my PhD thesis  (date of the original version \nth{3} August 2022), which I defend at the Institute of Physics at Johannes Gutenberg University Mainz on the \nth{24} of November 2022. The referees were Christian Bogner (Johannes Gutenberg University Mainz) and Dirk Kreimer (Humboldt University of Berlin).\par}
    \vspace{\baselineskip}
    
    \vfill
    {\centering\bfseries\Large Abstract}
    \vspace{\baselineskip}
    \currentpdfbookmark{Abstract}{bm:abstract}

    In this thesis we will study Feynman integrals from the perspective of $\Aa$-hypergeometric functions, a generalization of hypergeometric functions which goes back to Gelfand, Kapranov, Zelevinsky (GKZ) and their collaborators. This point of view was recently initiated by the works \cite{DeLaCruzFeynmanIntegralsAhypergeometric2019} and \cite{KlausenHypergeometricSeriesRepresentations2019}. Inter alia, we want to provide here a concise summary of the mathematical foundations of $\Aa$-hypergeometric theory in order to substantiate this viewpoint. This overview will concern aspects of polytopal geometry, multivariate discriminants as well as holonomic $D$-modules.

    As we will subsequently show, every scalar Feynman integral is an $\Aa$-hypergeometric function. Furthermore, all coefficients of the Laurent expansion as appearing in dimensional and analytical regularization can be expressed by $\Aa$-hypergeometric functions as well. By applying the results of GKZ we derive an explicit formula for series representations of Feynman integrals. Those series representations take the form of Horn hypergeometric functions and can be obtained for every regular triangulation of the Newton polytope $\Newt(\Uu+\Ff)$ of the sum of Symanzik polynomials. Those series can be of higher dimension, but converge fast for certain kinematical regions, which also allows an efficient numerical application. We will sketch an algorithmic approach which evaluates Feynman integrals numerically by means of these series representations. Further, we will examine possible issues which can arise in a practical usage of this approach and provide strategies to solve them. As an illustrative example we will present series representations for the fully massive sunset Feynman integral.

    Moreover, the $\Aa$-hypergeometric theory enables us to give a mathematically rigorous description of the analytic structure of Feynman integrals (also known as Landau variety) by means of principal $A$-determinants and $A$-discriminants. This description of the singular locus will also comprise the various second-type singularities. Furthermore, we will find contributions to the singular locus occurring in higher loop diagrams, which seem to have been overlooked in previous approaches. By means of the \HKP we also provide a very efficient way to determine parameterizations of Landau varieties. We will illustrate those methods by determining the Landau variety of the dunce's cap graph. We furthermore present a new approach to study the sheet structure of multivalued Feynman integrals by use of coamoebas.
\end{titlepage}
\renewcommand{\thefootnote}{\arabic{footnote}}

\pagestyle{empty}
\restoregeometry
\origdoublepage


\pagestyle{plain}

\cleardoublepage


\currentpdfbookmark{Contents}{bm:toc}

\tableofcontents

\clearpage

\currentpdfbookmark{List of figures}{bm:tof}
\listoffigures

\vfill

The figures in this thesis were generated with \softwareName{TikZ} \cite{TantauTikZPGFManual2021} and \softwareName{Mathematica} \cite{WolframResearchIncMathematicaVersion12}.


%
%
%
%
%
%
%
%
%
%
%
%
%

\pagestyle{withoutSections}

\mainmatter

\chapter{Introduction} \insertlofspace \label{ch:Introduction}

%

The modern physics' perspective to the fundamental interactions of nature is formulated in terms of quantum fields theories (QFTs). Tacitly, one often assumes perturbative quantum field theories when talking about QFT. This is because, with the exception of a few toy models, QFTs are only accessible perturbatively, i.e.\ we will consider interactions in the infinitesimal neighbourhood of free QFTs. In the framework of perturbative QFTs, Feynman integrals are indispensable building blocks for almost every prediction within these theories.

Hence, we are confronted with the problem of solving a huge number of Feynman integrals. But what does it actually mean to ``solve'' a Feynman integral? Of course, one is interested in a numerical result when comparing it with experimental data. However, on all the intermediate steps to this final result, analytical solutions of those integrals are preferred. This is less a question of elegance and more a need to understand the structure beyond Feynman integrals\footnote{Analytical expressions are also preferred in the renormalization procedure. However, one can also consider renormalized Feynman integrals so that an analytical intermediate solution can in principle be omitted.}. Hence, to get a deeper understanding of Feynman integrals and their role in perturbative QFTs, it is essential to investigate their functional relationships. Thus, ``solving Feynman integrals'' actually stands for rewriting Feynman integrals in terms of other functions. Clearly, the goal is to know more about these rewritten functions, and we should be able to efficiently evaluate these functions numerically.

A very successful example of a function class for rewriting Feynman integrals are the so-called multiple polylogarithms and their generalizations, e.g.\ elliptic polylogarithms \cite{RemiddiHarmonicPolylogarithms2000, GoncharovMultiplePolylogarithmsMixed2001, BrownMasslessHigherloopTwopoint2009, PanzerFeynmanIntegralsHyperlogarithms2015, BognerSymbolicIntegrationMultiple2012, BognerFeynmanIntegralsIterated2015, HiddingAllOrdersStructure2019}. Thus, multiple polylogarithms and related functions appear for many Feynman integrals as coefficients in a Laurent expansion in dimensional and analytical regularization. However, this applies not for all Feynman integrals. This is our main motivation for proposing another class of functions for rewriting Feynman integrals herein: $\Aa$-hypergeometric functions. These functions are in general less easy to handle than multiple polylogarithms, but we will show that every Feynman integral can be treated within this functional class. Thereby, the coefficients in a Laurent expansion of the Feynman integrals as well as the Feynman integral itself belong to the class of $\Aa$-hypergeometric functions. In doing so, we will take a closer look at two aspects in this thesis: On the one hand, we want to indicate areas where we can benefit from knowledge about this class of functions in the investigation of Feynman integrals and on the other hand we will discuss possibilities for an efficient numerical evaluation.\bigskip

Describing Feynman integrals by hypergeometric functions is by no means a new idea. Already in the early days of calculating Feynman amplitudes, it was proposed by Regge to consider Feynman integrals as a kind of generalized hypergeometric functions \cite{ReggeAlgebraicTopologyMethods1968}, where the singularities of those hypergeometric functions coincide with the Landau variety. Later on Kashiwara and Kawai \cite{KashiwaraHolonomicSystemsLinear1976} showed that Feynman integrals satisfy indeed holonomic differential equations, where the singularities of those holonomic differential equations are determined by the Landau variety.

Apart from characterizing the Feynman integral by ``hypergeometric'' partial differential equation systems, many applications determine the Feynman integral as a generalized hypergeometric series. Usually, the often used Mellin-Barnes approach \cite{BergereAsymptoticExpansionFeynman1974,SmirnovFeynmanIntegralCalculus2006} results in Pochhammer series ${}_pF_q$, Appell functions, Lauricella functions and related functions by applying the residue theorem \cite{BoosMethodCalculatingMassive1991}. Furthermore, for arbitrary one-loop Feynman integrals it is known that they can always be represented by a small set of hypergeometric series \cite{FleischerNewHypergeometricRepresentation2003,PhanScalar1loopFeynman2019}.
Thirdly, the Feynman integral may be expressed by ``hypergeometric'' integrals like the generalized Meijer $G$- or Fox $H$-functions \cite{BuschmanFunctionAssociatedCertain1990, Inayat-HussainNewPropertiesHypergeometric1987, Inayat-HussainNewPropertiesHypergeometric1987a}. The connections between Feynman integrals and hypergeometric functions was investigated over decades and a summary of these results can be found in \cite{KalmykovHypergeometricFunctionsTheir2008}.

Thus, there arise three different notions of the term ``hypergeometric'' in the Feynman integral calculus, where every notion generalizes different characterizations of the ordinary Gaussian hypergeometric function ${_2}F_1(a,b,c|z)$. In the late 1980s Gelfand, Kapranov, Zelevinsky (GKZ) and collaborators \cite{GelfandCollectedPapersVol1989, GelfandHolonomicSystemsEquations1988, GelfandDiscriminantsPolynomialsSeveral1991, GelfandGeneralizedEulerIntegrals1990, GelfandGeneralHypergeometricSystems1992, GelfandDiscriminantsResultantsMultidimensional1994, GelfandHypergeometricFunctionsToric1991, GelfandAdiscriminantsCayleyKoszulComplexes1990, GelfandHypergeometricFunctionsToral1989, GelfandDiscriminantsPolynomialsMany1990, GelfandEquationsHypergeometricType1988, GelfandNewtonPolytopesClassical1990} were starting to develop a comprehensive method to generalize the notion of ``hypergeometric'' functions in a consistent way\footnote{Indeed, Feynman integrals from QED were one of the motivations for Gelfand to develop generalized hypergeometric differential equations. However, this connection does not seem to have been pursued further by Gelfand. Moreover, it was indicated in \cite{GolubevaReggeGelfandProblem2014}, that also Regge's conjecture \cite{ReggeAlgebraicTopologyMethods1968} was influenced by Gelfand.}. Those functions are called $\Aa$-hypergeometric functions and are defined by a special holonomic system of partial differential equations. As Gelfand, Kapranov and Zelevinsky illustrated with Euler integrals, the GKZ approach not only generalizes the concept of hypergeometric functions but can also be used for analyzing and solving integrals \cite{GelfandGeneralizedEulerIntegrals1990}.\bigskip
    
For physicists the GKZ perspective is not entirely new. Already in the 1990s, string theorists applied the $\Aa$-hypergeometric approach in order to calculate certain period integrals and worked out the mirror symmetry \cite{HosonoMirrorSymmetryMirror1995, HosonoMirrorSymmetryMirror1995a}. Recently, the GKZ approach was also used to obtain differential equations for the Feynman integral from the maximal cut \cite{VanhoveFeynmanIntegralsToric2018}. Furthermore, $\Aa$-hypergeometric functions can also be used in several other branches of physics \cite{GolubevaReggeGelfandProblem2014}. Still, the approach of Gelfand, Kapranov and Zelevinsky is no common practice among physicists.\bigskip

However, Feynman integrals have recently begun to be considered from the perspective of GKZ. In 2016 Nasrollahpoursamami showed that the Feynman integral satisfies a differential equation system which is isomorphic to a GKZ system \cite{NasrollahpoursamamiPeriodsFeynmanDiagrams2016}. Independently of each other, the fact that Feynman integrals are $\Aa$-hypergeometric was also shown directly in 2019 by de la Cruz \cite{DeLaCruzFeynmanIntegralsAhypergeometric2019} and Klausen \cite{KlausenHypergeometricSeriesRepresentations2019} based on the Lee-Pomeransky representation \cite{LeeCriticalPointsNumber2013} of the Feynman integral. Furthermore, explicit series representations for all Feynman integrals admitting unimodular triangulations of $\Newt(\Uu+\Ff)$ were given \cite{KlausenHypergeometricSeriesRepresentations2019}. In the manner of these two works \cite{DeLaCruzFeynmanIntegralsAhypergeometric2019,KlausenHypergeometricSeriesRepresentations2019}, a number of examples were presented in detail by \cite{FengGKZhypergeometricSystemsFeynman2020}. Shortly afterwards, a series of articles considered Feynman integrals by use of GKZ methods developed in string theory \cite{KlemmLoopBananaAmplitude2020, BonischFeynmanIntegralsDimensional2021, BonischAnalyticStructureAll2021}, which were applied mainly to the banana graph family. Moreover, the Landau variety of Feynman integrals was considered from the $\Aa$-hypergeometric perspective for banana graphs in \cite{BonischAnalyticStructureAll2021} and for arbitrary Feynman graphs in \cite{KlausenKinematicSingularitiesFeynman2022}. \bigskip

This thesis builds on the two mentioned articles \cite{KlausenHypergeometricSeriesRepresentations2019} and \cite{KlausenKinematicSingularitiesFeynman2022}. However, we will include several major generalizations of these works, and we also want to provide a more substantial overview of the mathematical theory behind it.

In particular, we will show that every Feynman integral belongs to the class of $\Aa$-hypergeometric functions. Moreover, every coefficient in a Laurent expansion as appearing in dimensional or analytical regularization can be expressed by $\Aa$-hypergeometric functions. Furthermore, we will give an explicit formula for a multivariate Horn series representation of generalized Feynman integrals for every  regular (unimodular as well as non-unimodular) triangulation. Inter alia, this allows to evaluate Feynman integrals numerically very efficiently for convenient kinematic regions. 

Further, we will connect the Landau variety with principal $A$-determinants and $A$-discriminants. This allows us to describe Landau varieties, including second-type singularities, with mathematical rigour. In doing so, we will find certain contributions to the singular locus of Feynman integrals in higher loops which seem to have been overlooked so far. The simplest example where those additional contributions appear is the so-called dunce's cap graph. By its connection to the $\Aa$-hypergeometric theory we can also give a very efficient parameterization of Landau varieties. In addition, we will sketch an approach to describe parts of the monodromy structure of Feynman integrals by means of a simpler geometric object, which is known as coamoeba. \bigskip

We will start this thesis with a comprehensive summary of the mathematical fundament in \cref{ch:AHypergeometricWorld}, which in our opinion is essential for understanding the following chapters. We will try to keep this chapter as short as possible without compromising understanding. For hurried readers we recommend \cref{sec:AhypergeometricNutshell}, which summarizes the main aspects of \cref{ch:AHypergeometricWorld}. After a mathematical introduction, we will continue with a physical introduction in \cref{ch:FeynmanIntegrals}. Thereby, we will focus mainly on the various representations of Feynman integrals in parametric space and recall several aspects of them. For experienced readers, \cref{sec:FeynmanIntegralsAsAHyp} may be sufficient, which connects Feynman integrals and $\Aa$-hypergeometric functions. 

In \cref{ch:seriesRepresentations} we will introduce series representations of Feynman integrals based on $\Aa$-hypergeometric theory. Apart from general questions, we will also discuss some features as well as possible difficulties which can arise in the evaluation of Feynman integrals by means of those series representations. In particular, we also explain the steps that would be necessary in an algorithmic implementation. \Cref{ch:singularities} is devoted to the study of Landau varieties (or more generally to the singular locus) of Feynman integrals from the $\Aa$-hypergeometric perspective. In this chapter we will develop a mathematically rigorous description of the singular locus by means of the principal $A$-determinant and introduce the coamoeba.

\clearpage


\currentpdfbookmark{Acknowledgements}{bm:acknowledgements}

\vspace*{3\baselineskip}
\section*{\centering Acknowledgements}
\vspace{\baselineskip}
I would like to express my deepest gratitude to Christian Bogner for his great encouragement and support, which led to this thesis. I owe him the invaluable freedom to work on my own research topic, and I am grateful to him for always encouraging and supporting me in all my ideas. Furthermore, I also wish to thank him for his time to supervise my work, despite difficult circumstances. I would like to extend my sincere thanks to Dirk Kreimer for warmly welcoming me into his group at Humboldt university, for never wavering in his support and for his help in all organizational issues.

I very much appreciated the stimulating email discussions with Uli Walther about general questions on $\Aa$-hypergeometric theory and I would like to thank him for his invitation to the $\Aa$-hypergeometric conference. As well, I have also benefited from the discussions with Christoph Nega and Leonardo de la Cruz about their personal perspective on GKZ. Moreover, I also benefited from the discussions with Konrad Schultka about toric geometry, with David Prinz about renormalization, the discussions with Marko Berghoff and Max Mühlbauer as well as with Erik Panzer about Landau varieties and related questions. Further, I wish to thank Ruth Britto for the enlightening discussions and for inviting me to Dublin. I would also like to thank Stefan Theisen for his constructive comments on series representations of $\Aa$-hypergeometric functions. I also wish to express my sincere thanks to Erik Panzer and Michael Borinsky for initiating the very inspiring conference about ``Tropical and Convex Geometry and Feynman integrals'' at ETH Zürich. 

This research was supported by the International Max Planck Research School for Mathematical and Physical Aspects of Gravitation, Cosmology and Quantum Field Theory and by the cluster of excellence ``Precision Physics, Fundamental Interactions and Structure of Matter'' (PRISMA+) at Johannes-Gutenberg university of Mainz.

\newpage


\pagestyle{fancyStandard}

\chapter{The \texorpdfstring{$\Aa$}{A}-hypergeometric world} \label{ch:AHypergeometricWorld}
  
  
\section{Why \texorpdfstring{$\Aa$}{A}-hypergeometric systems?} \label{sec:AhypergeometricNutshell}


Remarkably often, functions appear in the calculation of Feynman integrals which are labelled as ``hypergeometric'' functions. For example the Gauss' hypergeometric function ${_2}F_1(a,b,c | z)$ shows up in the $1$-loop self-energy graph, the $2$-loop graph with three edges (also known as sunset graph) can be written by hypergeometric Lauricella functions \cite{BerendsClosedExpressionsSpecific1994}, the hypergeometric Appell $F_1$ function appears in the triangle graph and the hypergeometric Lauricella-Saran function is used in the $1$-loop box graph \cite{FleischerNewHypergeometricRepresentation2003}, to name a few. At first sight, all these hypergeometric functions appear to be rather a loose collection of unrelated functions. Thus, it is a justified question, what all these functions have in common and what is a general meaning of the term ``hypergeometric''. The subsequent question is then if all Feynman integrals are in that general sense ``hypergeometric''. 

Going back to a talk of Tullio Regge \cite{ReggeAlgebraicTopologyMethods1968} it is supposed for a long time that the second question has an affirmative answer and the appearance of hypergeometric functions in Feynman calculus is not just arbitrary. Regge also gave a suggestion as to which characteristics those general hypergeometric functions have to satisfy, which was later more specified by Sato \cite{SatoRecentDevolpmentHyperfunction1975}, Kashiwara and Kawai \cite{KashiwaraConjectureReggeSato1976} in the framework of microlocal analysis. All these approaches had in common that they based on holonomic $D$-modules, i.e.\ roughly speaking ``well-behaved'' systems of linear partial differential equations. \bigskip

A comprehensive investigation of the notion of generalized hypergeometric functions based on specific holonomic $D$-modules was then initiated by Gelfand, Graev, Kapranov and Zelevinsky in the late 1980s under the term $\Aa$-hypergeometric functions. Thereby, $\Aa$ is a finite configuration of vectors in $\mathbb Z^{n+1}$ which determines the type of the hypergeometric function. Additionally, an $\Aa$-hypergeometric function depends also on a parameter $\beta\in\mathbb C^{n+1}$ and variables $z\in\mathbb C^{|\Aa|}$.

It will turn out that the $\Aa$-hypergeometric functions fit perfectly into Regge's idea of hypergeometric functions. Further, with the $\Aa$-hypergeometric theory in mind we can answer both questions: firstly, what do all hypergeometric functions have in common, and secondly we can show that indeed any scalar Feynman integral without tadpoles is an $\Aa$-hypergeometric function. In that process, the vector configuration $\Aa$ will be determined by the Feynman graph, the parameter $\beta$ depends on the spacetime dimension and the indices of propagators and the variables $z$ turn out to be quotients of external momenta and masses. Therefore, the $\Aa$-hypergeometric functions cover the structure of Feynman integrals very naturally. Moreover, $\Aa$-hypergeometric theory has a very constructive nature, and we can make use of several features of this theory in the calculation and structural investigation of Feynman integrals.\bigskip

Thus, an $\Aa$-hypergeometric perspective on Feynman integrals could be very fruitful for physicists and the whole chapter is devoted to give an overview about the features of $\Aa$-hypergeometric functions. Before entering the $\Aa$-hypergeometric world in all its details, we want to give a first glimpse by highlighting the most important cornerstones.\bigskip

As the $\Aa$-hypergeometric system $H_\Aa(\beta)$ we will understand the holonomic $D$-ideal in the Weyl algebra $D=\mathbb C \langle z_1,\ldots,z_N,\partial_1,\ldots,\partial_N\rangle$ generated by
\begin{align}
	\{ \partial^u - \partial^v \,\rvert\, \Aa u = \Aa v, \, u,v\in\mathbb N^N \} \cup \langle \Aa \theta + \beta \rangle \label{eq:AHypIdeal1}
\end{align}
where $\Aa\in\mathbb Z^{(n+1)\times N}$ is understood as a matrix representation of a non-degenerated, acyclic vector configuration $\Aa\subset\mathbb Z^{n+1}$, $\beta\in\mathbb C^{n+1}$ is a complex parameter and $\theta$ is the Euler operator in $D$. Every holomorphic solution of those systems will be called an $\Aa$-hypergeometric function. It will turn out, that for generic $\beta$ the holonomic rank of $H_\Aa(\beta)$ will be given by the volume of the convex polytope $\Conv(A)$, where $A$ is the dehomogenized point configuration of $\Aa$. Hence, for a given triangulation $\Tt$ of $\Conv(A)$, we can construct a basis of the solution space $\Sol(H_\Aa(\beta))$ by assigning to each maximal cell of the triangulation as many independent solutions as the volume of the maximal cell is. This allows us to write $\Aa$-hypergeometric functions in terms of e.g.\ Horn series, Mellin-Barnes integrals or Euler integrals. Also, the singular locus of $H_\Aa(\beta)$, which is mainly the Landau variety in the application to Feynman integrals, gets a very natural expression in the $\Aa$-hypergeometric world in terms of principal $A$-determinants $E_A(f)$ and $A$-discriminants $\Delta_A(f)$
\begin{align}
    \Sing(H_\Aa(\beta)) = \Var (E_A(f)) = \bigcup_{\emptyset\neq \tau\subseteq\Conv(A)} \Var (\Delta_{A\cap\tau} (f_\tau)) \point
\end{align}

In the following sections we will go step by step more through all the details. Starting by the underlying geometric spaces, we will continue to convex polyhedra, which will cover the combinatorial structure of $\Aa$-hypergeometric functions. After a short introduction about $A$-discriminants and holonomic $D$-modules we are ready to state $\Aa$-hypergeometric systems and sketch their properties and their relations to $A$-discriminants.


\section{Affine and projective space} \label{sec:affineSpace}



The study of polynomials in several variables, as they appear for example in Feynman integrals, leads almost automatically to algebraic geometry. Therefore, we want to start by introducing the very basic notions of this subject adapted to $\Aa$-hypergeometric systems. This summary is orientated towards \cite{HartshorneAlgebraicGeometry1977, HolmeRoyalRoadAlgebraic2012}, which we also refer for further reading. \bigskip
    
Let $\mathbb K$ be a field. For convenience we will most often assume that $\mathbb K$ is algebraically closed, and we will use typically the complex numbers $\mathbb C$ or a convenient subfield. The \textit{affine $n$-space} $\AnK$ over $\mathbb K$ is simply the set of all $n$-tuples with elements from $\mathbb K$
\begin{align}
    \gls{AnK} := \{ (x_1,\ldots,x_n) \, \rvert \, x_i\in \mathbb K \quad  \textrm{for } i=1,\ldots,n \} \point
\end{align}
Thus, the affine $n$-space $\AnK$ is the vector space $\mathbb K^n$ as a set. However, $\AnK$ will have different morphisms as $\mathbb K^n$, which is the reason, why we want to use clearly separated terms. Especially, we do not want to have a distinguished point as the origin of $\mathbb K^n$. However, we can treat relative positions of points in $\AnK$ by referring on the corresponding points of $\mathbb K^n$. \bigskip

By $\gls{polyRing}$ we denote the coordinate ring of $\AnK$, i.e.\ the ring of polynomials in the variables $x_1,\ldots,x_n$. Thus, we consider elements of $\mathbb K[x_1,\ldots,x_n]$ as functions from $\AnK$ to $\mathbb K$. For a subset of polynomials $S\subseteq \mathbb K[x_1,\ldots,x_n]$ we define the zero locus
\begin{align}
    \gls{variety} := \left\{ x\in \AnK \, \rvert\, f(x) = 0 \quad\text{for all } f\in S \right\} \subseteq \AnK
\end{align}
as the \textit{affine variety} generated by $S$. By the structure of those subsets of $\AnK$ and Hilbert's basis theorem, we can restrict ourselves to considering only ideals $S$ or even merely their generators. It is not hard to see that the union and the intersection of affine varieties are again affine varieties \cite{HartshorneAlgebraicGeometry1977}
\begin{align}
    \Var(S_1) \cup \Var(S_2) &= \Var(S_1 \cdot S_2) \\
    \Var(S_1) \cap \Var(S_2) &= \Var(S_1 \cup S_2)
\end{align}
where $S_1\cdot S_2:=\{f\cdot g \,\rvert\, f\in S_1, g\in S_2\}$. Therefore, by setting these varieties as closed sets, we can define a topology on $\AnK$, the so-called \textit{Zariski topology}. Note that also the empty set and the full space are varieties generated by the polynomials $1$ and $0$, respectively. An affine variety is called \textit{irreducible}, if it can not be written as the union of two proper subvarieties\footnote{Two different definitions of affine varieties are common in literature. Several authors assume affine varieties always to be irreducible. They would call our notion of affine varieties an affine algebraic set.}. \bigskip

Laurent polynomials can be treated in an analogous way. The so-called \textit{algebraic torus} $\gls{Csn}:= (\mathbb C\setminus \{0\})^n$ is an affine variety in $\mathbb C^{2n}$ and \textit{Laurent monomials} in the variables $x_1,\ldots,x_n$ are nothing else than the characters of $\Csn$
\begin{align}
    (\mathbb C^*)^n \rightarrow \mathbb C^*, \qquad x \mapsto x^a := x_1^{a_1} \cdots x_n^{a_n} \label{eq:LaurentMonomial}
\end{align}
with the exponent $a=(a_1,\ldots,a_n)\in\mathbb Z^n$. Here and in the following we will make use of a multi-index notation as indicated in \cref{eq:LaurentMonomial}. A \textit{Laurent polynomial} is a finite linear combination of these monomials and can be uniquely written as  
\begin{align}
    f(x)=f(x_1,\ldots,x_n) = \sum_{a\in A} z_a x^a \in \mathbb C[x_1^{\pm 1},\ldots,x_n^{\pm 1}]
\end{align}
where $A\subset \mathbb Z^n$ is a finite set of non-repeating points and we consider the coefficients $z_a\in\mathbb C$ to be complex numbers not identically zero. We will call the subset $A$ the \textit{support} of $f$. The space of all Laurent polynomials with fixed monomials from $A$ but with indeterminate coefficients $z_a$ will be denoted as $\gls{CA}$. Thus, the set of coefficients $\{z_a\}_{a\in A}$ are coordinates of $\mathbb C^A$. Considering polynomials with fixed monomials but with variable coefficients is often referred as the ``$A$-philosophy'' \cite{GelfandDiscriminantsResultantsMultidimensional1994}. \bigskip 

The \textit{dimension} of an irreducible affine variety $V$ will be defined as the largest integer $d$ such that there exists a chain of irreducible closed subsets $V_1,\ldots,V_d$
\begin{align}
	V_1 \subsetneq \ldots \subsetneq V_d = V \point
\end{align}
The dimension of an affine variety is then the highest dimension of its irreducible components. This definition agrees with a definition via the Krull dimension of its coordinate ring. For the \textit{Krull dimension} of a commutative ring $R$ we consider chains of prime ideals $I_0 \subsetneq \ldots \subsetneq I_l$ in the ring $R$. The Krull dimension $\gls{KrullDimension}$ is then the suppremum of the lengths $l$ of all chains of prime ideals in $R$. We can extend the definition of Krull dimensions also to ideals $I\subseteq R$, by setting the dimension of an ideal, as the Krull dimension of its coordinate ring $\normalslant{R}{I}$. Thus, if the variety is generated by the ideal $I\subseteq R$, we have $\dim \Var(I) = \dim_{\operatorname{Kr}}(I) = \dim_{\operatorname{Kr}}(\normalslant{R}{I})$. This implies also that the dimension of the affine space (as a variety) agrees with the dimension of its corresponding vector space $\dim \AnK = \dim \mathbb K^n = n$. We refer to \cite{GreuelSingularIntroductionCommutative2002} for further details. \bigskip

For elements $a^{(1)},\ldots,a^{(k)}\in \AnK$ of an affine space we will call
\begin{align}
    \lambda_1 a^{(1)} + \ldots + \lambda_k a^{(k)} \qquad \text{with}\qquad \sum_{j=1}^k \lambda_j = 1, \quad \lambda_j\in\mathbb K
\end{align}
an \textit{affine combination}. Points in $\AnK$ are called \textit{affinely independent} if they can not be affinely combined to zero and \textit{affinely dependent} otherwise. Similarly, for a subset $S\subseteq \mathbb K$ we denote by
\begin{align}
    \gls{affineHull} := \left\{\lambda_1 a^{(1)} + \ldots + \lambda_k a^{(k)} \,\Big\rvert\, \lambda_j \in  S, \ \sum_{j=1}^k \lambda_j = 1\right\}
\end{align}
the \textit{affine span} or \textit{affine hull} generated by the elements $a^{(1)},\ldots,a^{(k)}\in\AnK$ over $S$.  A discrete subgroup of an affine space $\AnK$ is called an \textit{affine lattice} if the subgroup spans the full space $\AnK$. Further, a map $f: \AnK \rightarrow \mathbb A_{\mathbb K}^m$ between two affine spaces is called \textit{affine transformation}, if it preserves all affine combinations, i.e.\ $f(\sum_j \lambda_j a^{(j)}) = \sum_j \lambda_j f(a^{(j)})$ with $\sum_j \lambda_j = 1$. Every affine map $f(a) = l(a) + b$ can be splitted into a linear map $l(a)$ and a translation $b\in \mathbb A_{\mathbb K}^m$. \bigskip

Since every affine hull is a translated linear hull, we will similarly define the dimension of an affine hull to be the same as the dimension of its corresponding linear hull. However, there are at most $n+1$ affinely independent elements in $\AnK$. A set of those $n+1$ affinely independent elements $a^{(0)},\ldots,a^{(n)}$ is called a \textit{basis} or a \textit{barycentric frame} of an affine space $\AnK$, since it spans the whole affine space $\AnK$ over $\mathbb K$. Thus, for a given basis $a^{(0)},\ldots,a^{(n)}$ we can write every element $a\in \AnK$ as an affine combination of that basis and we will call the corresponding tuple $(\lambda_0,\ldots,\lambda_{n})$ the \textit{barycentric coordinates} of $a$. \bigskip

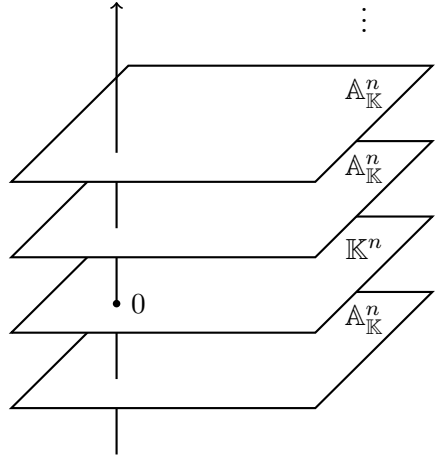
\begin{figure}
	\begin{center}
		\begin{tikzpicture}[scale=2]
            \draw[thick] (0,-1,0) -- (0,-.5,0); 
            \draw[thick,fill=white] (-.5,-.5,-1.5) -- ++(2,0,0) -- ++(0,0,2) -- ++(-2,0,0) -- cycle;
            \draw[thick] (0,-.5,0) -- (0,0,0); 
            \draw[thick,fill=white] (-.5,0,-1.5) -- ++(2,0,0) -- ++(0,0,2) -- ++(-2,0,0) -- cycle; 
            \draw[thick] (0,0,0) -- (0,0.5,0); 
            \draw[thick,fill=white] (-.5,.5,-1.5) -- ++(2,0,0) -- ++(0,0,2) -- ++(-2,0,0) -- cycle;
            \draw[thick] (0,.5,0) -- (0,1,0); 
            \draw[thick,fill=white] (-.5,1,-1.5) -- ++(2,0,0) -- ++(0,0,2) -- ++(-2,0,0) -- cycle;
            \draw[thick,->] (0,1,0) -- (0,2,0); 
            \coordinate[circle, fill, inner sep = 1pt, label=right:$0$] (O) at (0,0,0); 
            \coordinate[label=$\AnK$] (l1) at (2.4,2,2);
            \coordinate[label=$\AnK$] (l2) at (2.4,1.5,2);
            \coordinate[label=$\mathbb K^n$] (l3) at (2.4,1,2);
            \coordinate[label=$\AnK$] (l4) at (2.4,.5,2);
            \coordinate[label=$\vdots$] (l5) at (2.4,2.5,2);
		\end{tikzpicture}
	\end{center}
	\caption[Embedding of an affine space $\AnK$ in a vector space $\mathbb K^{n+1}$]{Embedding of affine spaces $\AnK$ in a vector space $\mathbb K^{n+1}$.} \label{fig:affineEmbedding}
\end{figure}

These coordinates indicate that we can naturally identify the affine space $\AnK$ as a hyperplane in the vector space $\mathbb K^{n+1}$. Thus, we consider the vector space $\mathbb K^{n+1} = \mathbb K^n \cup (\mathbb K^* \times \AnK)$ consisting in the slice of the vector space $\mathbb K^n$, containing the origin, and the remaining slices, each corresponding to an affine space $\AnK$, see also \cref{fig:affineEmbedding}. Since all slices, which do not contain the origin behave the same, we can identify without loss of generality $\AnK$ as the slice $1\times \AnK$ of $\mathbb K^{n+1}$. Therefore, we can accomplish the embedding, by adding an extra coordinate
\begin{align}
    a = (a_1,\ldots,a_n) \mapsto (1,a_1,\ldots,a_n) \label{eq:homogenization} \point
\end{align}
This will enable us to treat affine objects with the methods of linear algebra. Since points lying on a common hyperplane of $\mathbb K^{n+1}$, correspond to exponents of quasi-homogeneous polynomials, we will call the map of \cref{eq:homogenization} \textit{homogenization}. For a finite subset of points $A=\{a^{(1)},\ldots,a^{(N)}\}\subset \AnK$, we will write $\mathcal A = \{(1,a^{(1)}),\ldots,(1,a^{(N)})\} \subset \mathbb K^{n+1}$ as its homogenized version. Whenever it is convenient, we will denote by $\Aa$ also the $(n+1)\times N$ matrix collecting the elements of the subset $\Aa\subset\mathbb K^{n+1}$ as columns. \bigskip

Closely related to the affine space is the projective space. The \textit{projective space} $\gls{PnK}$ is the set of equivalence classes in $\mathbb K^{n+1}\setminus\{0\}$, where two elements $p,q\in \mathbb K^{n+1}\setminus\{0\}$ are equivalent if there exists a number $\lambda\in\mathbb K^*$ such that $p=\lambda q$. Thus, points of the projective space can be described by homogeneous coordinates. A point $p\in\PnK$ is associated to the homogeneous coordinates $\gls{homogeneousCoordinates}$ if an arbitrary element of the equivalence class of $p$ is described in the vector space $\mathbb K^{n+1}$ by the coordinates $(s_0,\ldots,s_n)$. Note that the homogeneous coordinates are not unique, as they can be multiplied by any element $\lambda\in\mathbb K^*$. \bigskip

Furthermore, we can decompose the projective space $\PnK= \AnK \cup \mathbb P^{n-1}_{\mathbb K}$ into an affine space and a projective space of lower dimension. In coordinates this decomposition means, that in case of $s_0\neq 0$, we can choose without loss of generality $s_0=1$, which defines the aforementioned affine hyperplane in $\mathbb K^{n+1}$. For $s_0=0$, sometimes referred as the ``points at infinity'' due to $\left[1 : \frac{s_1}{s_0} : \ldots : \frac{s_n}{s_0} \right]$, we obtain the projective space of lower dimension by the other remaining coordinates $[s_1 : \ldots : s_n]$. Analogously to the affine varieties we can define projective varieties as subsets of $\PnK$. Affine varieties can be completed to projective varieties by adding those ``points at infinity''.


\section{Convex polyhedra and triangulations} \label{sec:ConvexPolytopes}

Feynman integrals have a very rich combinatorial structure owing from the underlying Feynman graphs. Taking the parametric perspective of Feynman integrals this combinatorics will be caused by the Symanzik polynomials. Focussing on their extreme monomials, we will consider the Newton polytopes of Symanzik polynomials. Thus, we will uniquely attach a convex polytope to every Feynman integral and it will turn out, that the analytic structure of Feynman integrals will depend on the shape of that polytope. Thus, the study of those polytopes will lead us to the convergence region of Feynman integrals (\cref{thm:FIconvergence}), as well as the poles and the meromorphic continuation in the spacetime dimension $d$ and the indices of propagators $\nu$ (\cref{thm:meromorphicContinuation}), which are useful in dimensional and analytic regularization. Moreover, a triangulation of these polytopes will result in series representations of Feynman integrals in terms of Horn hypergeometric series (\cref{thm:FeynSeries}). And finally the set of all triangulations of these polytopes determines the extreme monomials of the defining polynomial of the Landau variety (\cref{thm:NewtSec}) and provides a nice factorization of it (\cref{thm:pAdet-factorization}). 

In the following subsection we will start to recall the most basic terms of convex polytopes which will generalize the intuitive perspective to higher dimensions. We will continue with a more technical perspective on convex polyhedra, which will reveal some structure of the underlying oriented matroids. In the last two parts of this section we will give an overview about triangulations and the structure of the set of all triangulations. The whole section is mostly based on \cite{ZieglerLecturesPolytopes1995,ThomasLecturesGeometricCombinatorics2006,DeLoeraTriangulations2010}, which we will refer for further reading. We will also recommend \cite{BrondstedIntroductionConvexPolytopes1983, HenkBasicPropertiesConvex2017, BrunsPolytopesRingsKtheory2009, GrunbaumConvexPolytopesSecond2003, SchrijverTheoryLinearInteger1986} for more detailed descriptions.


\subsection{Convex polytopes from point configurations} \label{ssec:PolytopesPointConfigs}
%

Let $\mathbb A_{\mathbb R}^n$ be the real affine $n$-space. By adding the standard inner product $ u^\top\cdot v := \sum_{i=1}^n u_i v_i$ to the associated vector space $\mathbb R^n$ of $\mathbb A_{\mathbb R}^n$, the real affine $n$-space $\mathbb A_{\mathbb R}^n$ becomes to the \textit{affine Euclidean space}. By slight abuse of notation we will denote such an affine Euclidean space also by $\mathbb R^n$. 
Every finite subset of labelled points in that space will be called a \textit{point configuration} $A \subset\mathbb R^n$. In general, points can be repeated in a point configuration, but labels are unique. We will usually use the natural numbers $\{1,\ldots,N\}$ to label those points. Arranging the elements of $A$ as columns will produce a matrix, which we will also denote by $A\in\mathbb R^{n\times N}$. Although, the point configuration is invariant under a reordering of columns, we will usually sort them for convenience ascending by their labels. \bigskip

For a point configuration  $A=\{a^{(1)},\ldots,a^{(N)}\}\subset\mathbb R^n$ of $N$ points we will call
\begin{align}
    \gls{Conv} := \left\{ \lambda_1 a^{(1)} + \ldots + \lambda_N a^{(N)} \,\Big\rvert\, \lambda_j \in\mathbb R, \lambda_j \geq 0, \sum_{j=1}^N \lambda_j = 1 \right\} \subset \mathbb R^n \label{eq:VPolytope}
\end{align}
the \textit{convex hull} of $A$ or a \textit{convex polytope} generated by $A$. Additionally, if all points $a^{(j)} \in \mathbb Z^n$ lying on an affine lattice, $\Conv(A)$ is called a \textit{convex lattice polytope}. As we will never consider any non-convex polytope, we will call them simply ``polytopes'' and ``lattice polytopes'', respectively.

Thus, convex hulls can be seen as a special case of affine hulls and we will consequentially set the \textit{dimension} of a polytope $P = \Conv (A)=\Aff_{\mathbb R_+}(A)$ to be the same as the dimension of the affine hull $\Aff_{\mathbb R}(A)$ defined in \cref{sec:affineSpace}. By using the homogenization of $A$ described in \cref{sec:affineSpace} (see also \cref{ssec:vectorConfigurations}), we can relate the dimension of a polytope to a matrix rank
\begin{align}
    \dim (\Conv (A)) = \rank (\Aa) - 1 \point
\end{align}
If a polytope $P\subset \mathbb R^n$ has dimension $n$ it is called to be \textit{full dimensional} and \textit{degenerated} otherwise. In most cases we want to assume full dimensional polytopes and adjust the dimension of the ambient space $\mathbb R^n$ if necessary.

The simplest possible polytope of dimension $n$ consists in $n+1$ vertices. We will call such a polytope an $n$-\textit{simplex}. By the \textit{standard $n$-simplex} we understand the full dimensional simplex generated by the standard unit vectors of $\mathbb R^{n}$ and the origin. \bigskip

A subset $\tau\subseteq P$ of a polytope for which there exists a linear map $\phi : \mathbb R^n \rightarrow \mathbb R$, such that \begin{align}
    \gls{tauface} = \big\{ p \in P \, \rvert \, \phi(p) \geq \phi(q) \quad\text{for all } q\in P \ \big\} \subseteq P \label{eq:facedef}
\end{align}
the map $\phi$ is maximized exactly for points on $\tau$ is called a \textit{face} of $P$.  Every face $\tau$ is itself a polytope generated by a subset of points of $A$. Whenever it is convenient we will identify with $\tau$ also this subset of $A$, as well as the subset of $\{1,\ldots,N\}$ labelling the elements of $A$ corresponding to this subset. Note that by construction also the full polytope $P$ and the empty set are faces of $P$ as well. Faces with dimension $0$, $1$ and $\dim(P)-1$ will be called \textit{vertices}, \textit{edges} and \textit{facets}, respectively. We will denote the set of all vertices by $\gls{Vert}$. Furthermore, the smallest face containing a point $p\in P$ is called the \textit{carrier} of $p$ and the polytope without its proper faces is called the \textit{relative interior} $\gls{relint}$.\bigskip

For full dimensional polytopes $P$ we want to introduce a \textit{volume} $\gls{vol}\in\mathbb R_{\geq 0}$, which is normalized such that the standard $n$-simplex has a volume equal to $1$. In other words, this volume is connected to the standard Euclidean volume $\vol_E (P)$ by a factorial of the dimension $\vol (P) = n! \vol_E (P)$. Especially, for simplices $P_\bigtriangleup$ generated by a point configuration $A=\{a^{(1)},\ldots,a^{(n+1)}\}\subset\mathbb R^n$ the volume is given by the determinant of its homogenized point configuration $\vol (P_\bigtriangleup) = |\det \Aa|$. The volume of a degenerated polytope will be set to zero. When restricting to lattice polytopes $P$, the volume $\vol(P)\in \mathbb Z_{\geq 0}$ is always a positive integer.\bigskip

The special interest in polytopes for this work results from the following construction which connects polytopes and polynomials. For a Laurent polynomial $f(x) = \sum_{a\in A} z_a x^a\in\mathbb C[x_1^{\pm 1},\ldots,x_n^{\pm 1}]$ we define its \textit{Newton polytope}
\begin{align}
    \gls{Newt} := \Conv \!\left(\left\{a\in A \,\rvert\, z_a\not\equiv 0\right\}\right) \subset \mathbb R^n
\end{align}
as the convex hull of its exponents. Furthermore, for a face $\tau\subseteq\Newt(f)$ of a Newton polytope, we define the \textit{truncated polynomial} with respect to $\tau$ as
\begin{align}
    \gls{truncpoly} := \sum_{a\in A\cap \tau} z_a x^a \label{eq:truncatedPolynomialDefinition}
\end{align}
consisting only in the monomials corresponding to that face $\tau$. \bigskip


\subsection{Vector configurations and convex polyhedra} \label{ssec:vectorConfigurations}
%

We will slightly generalize the point configurations from the previous section. Let $\mathbb R^{n+1}$ denote the Euclidean vector space and $(\mathbb R^{n+1})^\vee := \operatorname{Hom}(\mathbb R^{n+1},\mathbb R)$ its dual vector space. A finite collection of labelled elements from $\mathbb R^{n+1}$ will be called a \textit{vector configuration} and we will denote such a vector configuration by the symbol $\Aa\subset\mathbb R^{n+1}$. As described in \cref{sec:affineSpace} we can always embed $n$-dimensional affine spaces into $(n+1)$-dimensional vector spaces by homogenization. Thus, every homogenized point configuration is a vector configuration, even though we can also consider vector configurations not originating from a point configuration. As before we will denote by $\Aa\in\mathbb R^{(n+1)\times N}$ also the matrix constructed from the elements of $\Aa\subset\mathbb R^{n+1}$ considered as column vectors. \bigskip

The analogue of polytopes for vector configurations $\Aa = \{ a^{(1)},\ldots, a^{(N)} \} \subset\mathbb R^{n+1}$ are \textit{(convex) cones}
\begin{align}
    \gls{Cone} :=  \left\{ \lambda_1 a^{(1)} + \ldots + \lambda_N a^{(N)} \,\Big\rvert\, \lambda_j \in\mathbb R, \lambda_j \geq 0 \right\} \subset \mathbb R^{n+1} \point
\end{align}
Similar to polytopes, we can introduce faces of cones, i.e.\ subsets where a linear functional $\phi\in(\mathbb R^{n+1})^\vee$ is maximized or minimized. In contrast to the polytopal faces, maximal or minimal values will always be equal to zero. Furthermore, the empty set will not always be a face. Faces of dimension $1$ will be called \textit{rays}. \bigskip

A fundamental result of convex geometry is the following statement which is often also called the ``main theorem''.
\begin{theorem}[Weyl-Minkowski theorem for cones \cite{ZieglerLecturesPolytopes1995}] \label{thm:WeylMinkowski}
	A subset $C\subseteq\mathbb R^{n+1}$ is a convex cone $C=\Cone(\Aa)$ if and only if it is an intersection of finitely many closed linear halfspaces
	\begin{align}
	   C = P(M,0) := \left\{ \mu\in\mathbb R^{n+1} \,\rvert\, M \mu \leq 0 \right\} \subseteq \mathbb R^{n+1} \comma
    \end{align}
	where $M\in\mathbb R^{k\times (n+1)}$ is a real matrix and $M\mu \leq 0$ is understood componentwise $(M\mu)_i \leq 0$ for all $i=1,\ldots,k$.
\end{theorem}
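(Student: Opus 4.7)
My plan is to prove the two directions separately, using Fourier-Motzkin elimination together with polar duality as the main tools. The key auxiliary fact I would invoke (and prove first, if not already available) is the Fourier-Motzkin lemma: the coordinate projection of any finite intersection of closed linear halfspaces is again such an intersection. This is proved by iterated elimination of a single coordinate, pairing each upper bound with each lower bound.

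For the ``$\Cone(\Aa) \Rightarrow P(M,0)$'' direction, I would realize the cone as a projection. Given $\Aa = \{a^{(1)}, \ldots, a^{(N)}\}$, define
\begin{align*}
    \widetilde{C} := \left\{ (\mu,\lambda) \in \mathbb R^{n+1}\times\mathbb R^N \,\Big\rvert\, \mu - \Aa \lambda = 0, \ -\lambda \leq 0 \right\}.
\end{align*}
Rewriting each equality as two inequalities, $\widetilde{C}$ is manifestly of the form $P(M',0)$ in $\mathbb R^{n+1+N}$. Its image under the coordinate projection $(\mu,\lambda)\mapsto \mu$ is exactly $\Cone(\Aa)$, and iteratively applying Fourier-Motzkin to eliminate the $\lambda_j$ yields a matrix $M$ such that $\Cone(\Aa) = P(M,0)$.

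For the converse, I would pass through the polar cone $C^\vee := \{\phi \in (\mathbb R^{n+1})^\vee \mid \phi(\mu) \leq 0 \text{ for all } \mu \in C\}$. Using Farkas' lemma (itself a direct consequence of Fourier-Motzkin applied to the feasibility problem $M\mu \leq 0$, $\phi(\mu) > 0$), one sees that for an H-cone $C = P(M,0)$ the polar is precisely the V-cone generated by the rows of $M$, i.e.\ $C^\vee = \Cone(M^\top)$. Thus $C^\vee$ is finitely generated, and by the direction already proved it is itself an H-cone, say $C^\vee = P(M'',0)$. Applying the Farkas characterization once more gives that $C^{\vee\vee} = \Cone((M'')^\top)$ is a V-cone. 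Finally the bipolar identity $C^{\vee\vee} = C$, which holds for closed convex cones through the origin and follows from a separating-hyperplane argument, identifies $C$ itself as a finitely generated cone.

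The main obstacle is not the geometric idea but the bookkeeping: Farkas' lemma and the bipolar theorem are not stated in the excerpt, so establishing them from Fourier-Motzkin (and from the separation theorem for closed convex sets) is the part that requires actual work. Once these are in hand, both implications of Weyl-Minkowski fall out symmetrically from passing to the polar and back.
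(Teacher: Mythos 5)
Your proposal is correct and follows essentially the route the paper itself indicates: the paper gives no detailed proof but cites Ziegler and remarks that the theorem "can be proven iteratively via Fourier-Motzkin elimination," which is exactly the engine of your argument in both directions. Your use of Farkas' lemma in the converse is unproblematic since you derive it independently from Fourier-Motzkin rather than from the paper's Lemma 2.2.2 (which is itself deduced from this theorem).
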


Therefore, there are two equivalent representations of cones. \Cref{thm:WeylMinkowski} can be proven iteratively via Fourier-Motzkin elimination \cite{ZieglerLecturesPolytopes1995}. Note that the alternative representation of cones as intersection of halfspaces shows also that the intersection of cones is a cone as well. \bigskip

From \cref{thm:WeylMinkowski} one can also derive Farkas' lemma, which is known in many variants and which is very useful, when working with inequalities.
\begin{lemma}[Farkas' lemma, see e.g.\ \cite{ZieglerLecturesPolytopes1995}]  \label{lem:Farkas}
	Let $\Aa\in\mathbb R^{(n+1)\times N}$ be an arbitrary matrix and $b\in\mathbb R^{n+1}$. Then precisely one of the following assertions is true
	\begin{enumerate}[i)]
		\item there exists a vector $\lambda\in\mathbb R^N$ such that $\Aa\lambda = b$ and $\lambda\geq 0$
		\item there exists a vector $m\in\mathbb R^{n+1}$ such that $m^\top\Aa \leq 0$ and $m^\top b > 0$.
	\end{enumerate}
\end{lemma}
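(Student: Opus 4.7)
The plan is to derive Farkas' lemma as a direct corollary of the Weyl--Minkowski theorem (\cref{thm:WeylMinkowski}), exploiting the fact that the two alternatives have a clean geometric meaning: assertion (i) says $b\in\Cone(\Aa)$, while assertion (ii) says that some linear functional separates $b$ from $\Cone(\Aa)$.

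First I would show that (i) and (ii) are mutually exclusive. This part is purely formal: if $\Aa\lambda = b$ with $\lambda\geq 0$ componentwise, and simultaneously $m^\top \Aa \leq 0$ componentwise, then
\begin{align}
    m^\top b = m^\top \Aa\lambda = (m^\top \Aa)\,\lambda \leq 0 \comma
\end{align}
since each entry of $m^\top\Aa$ is nonpositive and each entry of $\lambda$ is nonnegative. This contradicts $m^\top b > 0$, so at most one of (i), (ii) can hold.

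Next I would show that at least one of them must hold. Assume (i) fails, i.e.\ $b\notin\Cone(\Aa)$. By \cref{thm:WeylMinkowski}, there exists a matrix $M\in\mathbb R^{k\times(n+1)}$ such that
\begin{align}
    \Cone(\Aa) = \{\mu\in\mathbb R^{n+1}\,\rvert\, M\mu \leq 0\}\point
\end{align}
Since $b\notin\Cone(\Aa)$, at least one component of $Mb$ is strictly positive; say $(Mb)_i > 0$ for some row index $i$. Let $m\in\mathbb R^{n+1}$ be the transpose of the $i$-th row of $M$, so that $m^\top b = (Mb)_i > 0$. On the other hand, every column $a^{(j)}$ of $\Aa$ belongs to $\Cone(\Aa)$, hence satisfies $M a^{(j)} \leq 0$, which reading off the $i$-th entry gives $m^\top a^{(j)}\leq 0$ for every $j=1,\ldots,N$, i.e.\ $m^\top\Aa \leq 0$. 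This is precisely the $m$ required by alternative (ii).

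I do not expect any real obstacle: once the Weyl--Minkowski theorem is available, the argument is essentially a separating-hyperplane observation applied to the explicit halfspace description of the cone. The only subtle point is being careful with the sign conventions (the halfspaces in \cref{thm:WeylMinkowski} are given by $M\mu\leq 0$, matching the orientation required by (ii)), and noting that by definition $b\notin\Cone(\Aa)$ really does force strict positivity of \emph{some} component of $Mb$, not merely nonnegativity.
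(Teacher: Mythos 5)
Your proposal is correct and follows essentially the same route as the paper: mutual exclusivity via the computation $m^\top b = (m^\top\Aa)\lambda \leq 0$, and existence via the Weyl--Minkowski halfspace description of $\Cone(\Aa)$, picking the violated inequality as the separating functional $m$ and noting that every column of $\Aa$ lies in the cone. No gaps.
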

\begin{proof}
    The proof is roughly oriented towards \cite{SchrijverTheoryLinearInteger1986}. Note first that not both statements can be true, as 
    \begin{align}
    	0 < m^\top b = m^\top (\Aa \lambda) = (m^\top \Aa) \lambda \leq 0
    \end{align}
    gives a contradiction. 
    
    The first statement i) describes a cone $b\in C = \Cone(\Aa)$. From \cref{thm:WeylMinkowski} we know that there are vectors $m_1,\ldots,m_k\in\mathbb R^{n+1}$ such that $C = \{\mu\in\mathbb R^{n+1} \,\rvert\, m_1^\top\mu \leq 0, \ldots, m_k^\top\mu \leq 0\} $. As every column of $\Aa$ corresponds to a point in the cone, we will surely have $m_i^\top\Aa \leq 0$ for all $i=1,\ldots,k$. On the other hand, i) is false if and only if $b\notin C$, thus there has to be at least one $m_i$ such that $m_i^\top b > 0$.
\end{proof}

A vector configuration $\Aa$ is called \textit{acyclic} (occasionally also \textit{pointed}) if there is no non-negative dependence, i.e.\ there is no $\lambda\in\mathbb R^{N}_{\geq 0}\setminus\{0\}$ which satisfies $\Aa \lambda = 0$. By construction, every homogenized point configuration is acyclic, since every dependence vector $\lambda$ has to satisfy $\sum_{i=j}^N \lambda_j = 0$, which allows no solution for $\lambda\in\mathbb R^{N}_{\geq 0}\setminus\{0\}$. In contrast to acyclic configurations, we call $\Aa$ \textit{totally cyclic}, if the cone $\Cone(\Aa)$ equals the linear span of $\Aa$. By means of the following variant of Farkas' lemma we can give an alternative characterization of acyclic vector configurations. 

\begin{cor} \label{cor:farkas}
	Let $\Aa\in\mathbb R^{(n+1)\times N}$ be a real matrix. Then precisely one of the two assertions is true
    \begin{enumerate}[i)]
    	\item there exists a vector $\lambda \in \mathbb R^N$ with $\Aa\lambda=0$, $\lambda_j \geq 0$ and $\lambda \neq 0$
        \item there exists a linear functional $h\in(\mathbb R^{n+1})^\vee$ such that $h\Aa > 0$
    \end{enumerate}
\end{cor}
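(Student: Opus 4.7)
The plan is to derive this corollary from the standard Farkas' lemma (Lemma 2.3.2) by an augmentation trick. The key idea is that assertion i) can be rewritten so that a normalization condition separates the trivial solution $\lambda = 0$ from genuine nontrivial dependences.

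First I would observe that assertion i) is equivalent to the existence of $\lambda\in\mathbb R^N$ satisfying $\Aa\lambda = 0$, $\lambda\geq 0$, and the normalization $\mathbf{1}_N^\top \lambda = 1$: any nontrivial $\lambda\geq 0$ with $\Aa\lambda=0$ has $\mathbf{1}_N^\top \lambda > 0$ and can therefore be rescaled. This normalized system is precisely condition i) of Lemma 2.3.2 applied to the augmented matrix $\tilde{\Aa} := \begin{pmatrix}\Aa\\ \mathbf{1}_N^\top\end{pmatrix}\in\mathbb R^{(n+2)\times N}$ and the right-hand side $\tilde b := \begin{pmatrix}\mathbf{0}_{n+1}\\ 1\end{pmatrix}$.

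Applying Farkas' lemma to the pair $(\tilde{\Aa}, \tilde b)$, the failure of i) is equivalent to the existence of some $\tilde m = (m, c)\in\mathbb R^{n+1}\oplus \mathbb R$ with $\tilde m^\top \tilde\Aa\leq 0$ and $\tilde m^\top \tilde b > 0$. Unpacking these conditions yields $m^\top \Aa + c\,\mathbf{1}_N^\top \leq 0$ componentwise together with $c > 0$, so each entry of $m^\top \Aa$ is bounded above by $-c < 0$. Setting $h := -m^\top \in (\mathbb R^{n+1})^\vee$ then gives $h\Aa \geq c\,\mathbf{1}_N^\top > 0$ componentwise, which is assertion ii). The reverse implication — that ii) precludes i) — is immediate: for any $\lambda\geq 0$, $\lambda\neq 0$ with $\Aa\lambda = 0$ one would obtain the contradiction
\begin{align*}
0 \;=\; h(\Aa\lambda) \;=\; (h\Aa)\lambda \;\geq\; c\,\mathbf{1}_N^\top\lambda \;>\; 0.
\end{align*}

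The only conceptual subtlety is upgrading the weak inequality $m^\top\Aa \leq 0$ from the classical Farkas' lemma to the strict inequality $h\Aa > 0$ required here. The extra row of ones in $\tilde{\Aa}$ is precisely what produces the uniform positive slack $c > 0$ that realizes this strictness; since this trick is standard, I do not anticipate any real obstacle in carrying out the argument.
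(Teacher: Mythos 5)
Your proof is correct, and it reaches the result by a route that is dual to the paper's. You normalize the sought dependence to $\mathbf 1_N^\top\lambda=1$ and identify statement i) with the \emph{primal} alternative of \cref{lem:Farkas} for the augmented data $\tilde\Aa=\begin{psmallmatrix}\Aa\\ \mathbf 1_N^\top\end{psmallmatrix}$, $\tilde b=\begin{psmallmatrix}\mathbf 0_{n+1}\\ 1\end{psmallmatrix}$; the dual alternative then produces $h=-m^\top$ together with a uniform slack $c>0$, which is precisely what upgrades the weak inequality of the classical lemma to the strict inequality $h\Aa>0$. The paper instead rewrites statement i) as the \emph{dual} alternative of \cref{lem:Farkas}: it splits $\Aa\lambda=0$ into $\pm\Aa\lambda\leq0$, absorbs $\lambda\geq 0$ into the block matrix $(\Aa^\top,-\Aa^\top,-\mathbbm 1)$ with right-hand side $t\,\mathbf 1$ for an arbitrary $t>0$, and reads off the non-existence of $h$ from the failure of the corresponding primal system. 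Your augmentation is the leaner one (one extra row rather than $2(n+1)+N$ extra columns) and makes the origin of the strictness transparent, at the cost of the (trivial) preliminary rescaling observation; the paper's version avoids that normalization but needs the auxiliary parameter $t$. The only point worth flagging is cosmetic: you apply \cref{lem:Farkas} to an $(n+2)\times N$ matrix, which is legitimate because the lemma holds for matrices of arbitrary size despite being phrased with $n+1$ rows.
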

\begin{proof}
    Let $t>0$ be an arbitrary positive real number. Then the first statement i) can be formulated as
    \begin{align}
    	& \exists \lambda \in \mathbb R^N \text{ such that } \Aa\lambda \leq 0, \, - \Aa \lambda \leq 0 ,\, \lambda \geq 0  \text{ and } (t,\ldots,t) \cdot \lambda >0 \nonumber\\
    	\Leftrightarrow & \exists \lambda\in\mathbb R^N \text{ such that }  \lambda^\top \cdot (\Aa^\top,-\Aa^\top,-\mathbbm{1})\leq 0 \text{ and } t \ \lambda^\top \mathbf{1} > 0
    \end{align}
    where $\mathbbm{1}$ denotes the unit matrix and $\mathbf{1}=(1,\ldots,1)^\top$ denotes the column vector of ones. Now we can apply Farkas' \cref{lem:Farkas}. Therefore, the first statement is equivalent with
    \begin{align}
    	\Leftrightarrow & \nexists x,y\in\mathbb R^{n+1},z\in\mathbb R^N \text{ such that } \Aa^\top x - \Aa^\top y - \mathbbm{1} z = t \ \mathbf{1} \text{ and } x\geq 0, y\geq 0, z\geq 0 \nonumber\\
    	\Leftrightarrow & \nexists \widetilde h\in\mathbb R^{n+1}, z\in\mathbb R^N \text{ such that } \Aa^\top \widetilde h = t \ \mathbf{1} + \mathbbm{1} z \text{ and } z\geq 0 \point
    \end{align}
    This shows the assertion, as we can choose any $t > 0$.
\end{proof}

Therefore, acyclic configurations can also be described by the existence of a linear functional $h$ with $h\Aa > 0$. In other words, every acyclic vector configuration can be scaled in such a way, that it becomes a homogenized point configuration. We will visualize this fact in \cref{fig:acyclicCone}. Thus, we can dehomogenize acyclic vector configurations in the following way. Let $h\in(\mathbb R^{n+1})^\vee$ be any linear functional with $h\Aa>0$. By scaling the vectors of $\Aa = \{a^{(1)},\ldots,a^{(N)}\}$ with the map $a^{(j)} \mapsto \frac{a^{(j)}}{h\cdot a^{(j)}}$ we obtain a vector configuration describing points on the hyperplane $\{\mu\in\mathbb R^{n+1} \,\rvert\, h \mu = 1\} \cong \mathbb A^n_{\mathbb R}$. This can be considered as an affine point configuration in $\mathbb R^n$. \bigskip

\begin{figure}
	\begin{center}
        \begin{tikzpicture}[scale=2]
            \coordinate (O) at (0.5,0);
            \coordinate (A) at (.8,1.4);
            \coordinate (B) at (1.8,1.2);
            \coordinate (C) at (2.9,1.5);
            \coordinate (D) at (2,1.8);
            \coordinate (E) at (1.3,1.7);
            \draw[thick] (0,1) -- (3,1) -- (4,2) -- (1,2) -- (0,1); 
            \draw[thick] (A) -- (B) -- (C) -- (D) -- (E) -- (A); 
            \coordinate (AX) at ($(O)!1.8!(A)$);
            \coordinate (BX) at ($(O)!2!(B)$);
            \coordinate (CX) at ($(O)!1.5!(C)$);
            \coordinate (DX) at ($(O)!1.5!(D)$);
            \coordinate (EX) at ($(O)!1.5!(E)$);
            \draw[thick,->] (O) -- (AX);
            \draw[thick,->] (O) -- (BX);
            \draw[thick,->] (O) -- (CX);
            \draw[thick,->] (O) -- (DX);
            \draw[thick,->] (O) -- (EX);
            \draw[thick,->] (.3,1.1) -- ++(0,.3);
            \coordinate[label=right:$h$] (h) at (.3,1.25);
        \end{tikzpicture}
	\end{center}
	\caption[Acyclic vector configurations]{Example of an acyclic vector configuration with five vectors in $\mathbb R^3$. We can see the two characterizations of acyclic vector configurations. On the one hand, the only linear combination of these vectors resulting in the origin is the trivial combination. On the other hand, we can construct a hyperplane intersecting every vector in exactly one point. Thus, we can scale acyclic vector configurations in such a way that it becomes a homogenized point configuration constructing a polytope.}
	\label{fig:acyclicCone}
\end{figure}
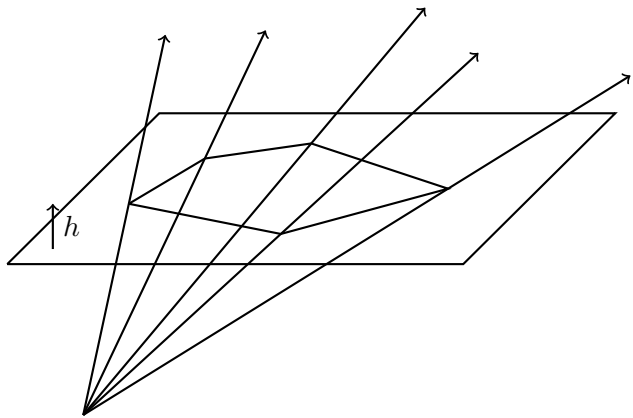

By the homogenization and dehomogenization procedure, we can transfer \cref{thm:WeylMinkowski} from cones also to polytopes. We call an intersection of finitely many, closed, linear halfspaces in $\mathbb R^n$
\begin{align}
    \gls{PMb} := \left\{ \mu\in\mathbb R^n \,\rvert\, M \mu \leq b \right\} \subseteq \mathbb R^n \label{eq:HPolytope}
\end{align}
with $M \in \mathbb R^{k\times n}$ and $b\in\mathbb R^k$ a \textit{(convex) polyhedron}. We will usually assume that \cref{eq:HPolytope} contains no redundant inequalities, i.e.\ $P(M,b)$ will change if we remove an inequality. It can be shown, that bounded polyhedra are equivalent to polytopes, which is also known as the Weyl-Minkowski theorem for polytopes \cite{ZieglerLecturesPolytopes1995}. Thus, we also have two different ways to represent polytopes: either by their vertices \cref{eq:VPolytope} or by their facets \cref{eq:HPolytope}. \bigskip

The conversion between the vertex representation and the facet representation is called facet enumeration problem and vertex enumeration problem, respectively. There are various implementations providing algorithms for the enumeration problem, e.g.\ the C-library \softwareName{lrslib} \cite{AvisLrslib} or the program \softwareName{polymake} \cite{GawrilowPolymakeFrameworkAnalyzing2000}. In the \cref{sec:SoftwareTools} we provide tips for the usage of these programs.

Although there is no analytic solution of the enumeration problem in general, for simplices it is nearly trivial. For the conversion in that case let $A = \{ a^{(1)},\ldots,a^{(n+1)}\} \subset \mathbb R^n$ be the generating set of a full dimensional simplex. Therefore, we have $\det\Aa\neq 0$ and we can invert the homogenized point configuration $\Aa$
\begin{align}
	P_\bigtriangleup = \Conv(A) = \left\{\mu\in\mathbb R^n \, \left| \, \begin{pmatrix} 1 \\ \mu \end{pmatrix} = \Aa k, k \geq 0 \right\}\right. =  \left\{ \mu\in\mathbb R^n \, \left| \, \Aa^{-1} \begin{pmatrix} 1 \\ \mu \end{pmatrix} \geq 0 \right\}\right.
\end{align}
which transforms the vertex representation into the facet representation. Furthermore, a vertex of a simplex is the intersection of $n$ facets. Thus, a vertex $v$ of $P_\bigtriangleup$ is the solution of $n$ rows of the $(n+1)\times(n+1)$ linear equation system $\Aa^{-1} x = 0$. Clearly, the $i$-th column of $\Aa$ solves the system $\Aa^{-1} x =0$ except for the $i$-th row. Hence, the intersection of $n$ facets is the $i$-th column of $\Aa$, where $i$ is the index which belongs to the facet which is not involved in the intersection. In a simplex that means that the $i$-th facet is opposite to the $i$-th vertex. Therefore, the $i$-th row of $\Aa^{-1} \begin{psmallmatrix} 1 \\ \mu \end{psmallmatrix} = 0$ describes the facet, which is opposite to the point defined by the $i$-th column of $\Aa$. \bigskip


\subsection{Gale duality} \label{ssec:GaleDuality}
%
%

In the previous section we have seen that the linear dependences determine whether a vector configuration is acyclic or not. However, linear dependences are much more powerful and reveal the structure of an oriented matroid. We will give a very short overview about the connection to oriented matroids and refer to \cite{ZieglerLecturesPolytopes1995} for a more detailed description.\bigskip

As before, let $\Aa = \{a^{(1)},\ldots,a^{(N)} \} \subset\mathbb R^{n+1}$ be a vector configuration spanning $\mathbb R^{n+1}$ as a vector space, for example a full dimensional homogenized point configuration. The set of all \textit{linear dependences} between those vectors generates a linear subspace of $\mathbb R^N$
\begin{align}
    \gls{Dep} := \left\{ \lambda\in\mathbb R^N \,\Big\vert\, \sum_{j=1}^N \lambda_j a^{(j)} = 0 \right \} \subseteq \mathbb R^N \label{eq:linearDep}
\end{align}
of dimension $r:=N-n-1$, which is nothing else than the kernel of $\Aa$. As seen above we are mainly interested whether these linear dependences are positive, zero or negative. Therefore, let $\sign : \mathbb R^N \rightarrow \{-,0,+\}^N$ be the componentwise sign function and for any vector $\lambda\in\mathbb R^N$, we call the non-zero components of $\lambda$ its \textit{support}. The elements of $\sign(\Dep(\Aa))$ are called the \textit{signed vectors of $\Aa$}. Furthermore, the elements of $\sign(\Dep(\Aa))$ having a minimal, non-empty support are called the \textit{signed circuits of $\Aa$}. \bigskip

As a further application of Farkas' lemma we can read off the face structure from the subspace of linear dependences. 

\begin{lemma}[{similar results can be found e.g.\ in \cite[ch. 5.4]{GrunbaumConvexPolytopesSecond2003} or \cite[ch. 5]{ThomasLecturesGeometricCombinatorics2006}}] \label{lem:face-kernel}
    Let $A=\{a^{(1)},\ldots,a^{(N)}\}\subset\mathbb R^n$ be a point configuration and $\Aa\subset\mathbb R^{n+1}$ its homogenization. Then $\tau$ is a (proper) face of $\Conv(A)$ if and only if there is no positive dependence, i.e.\ there is no $l\in\Dep(\Aa)$ with $l_j\geq 0$ for $j\notin\tau$ and $\{l_j\}_{j\notin\tau}\neq 0$. 
\end{lemma}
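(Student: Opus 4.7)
The strategy is to translate ``$\tau$ is a face'' into the existence of a separating linear functional via \cref{eq:facedef}, and to derive such a functional from the hypothesis by Farkas-type duality. Throughout I write $\bar\tau := \{1,\ldots,N\} \setminus \tau$ and decompose $\Aa = (\Aa_\tau \,\rvert\, \Aa_{\bar\tau})$ accordingly; denote the $j$-th column of $\Aa$ by $\hat a^{(j)} = (1, a^{(j)})^\top \in \mathbb R^{n+1}$.

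For the forward direction, suppose $\tau$ is a proper face of $\Conv(A)$. By \cref{eq:facedef} there exists $\phi \in (\mathbb R^n)^\vee$ maximized on $\Conv(A)$ precisely on $\tau$, with maximum value $c$. Then $h := (-c, \phi) \in (\mathbb R^{n+1})^\vee$ satisfies $h \cdot \hat a^{(j)} = \phi(a^{(j)}) - c$, which equals $0$ for $j \in \tau$ and is strictly negative for $j \in \bar\tau$. If a hypothetical $l \in \Dep(\Aa)$ with $l_{\bar\tau} \geq 0$ and $l_{\bar\tau} \neq 0$ existed, applying $h$ to $\Aa l = 0$ would yield
\[
0 \;=\; h\, (\Aa l) \;=\; \sum_{j \in \bar\tau} l_j \, \bigl( h \cdot \hat a^{(j)} \bigr),
\]
a sum of non-positive terms with at least one strictly negative summand, a contradiction.

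For the reverse direction, the plan is to produce the required separating $h$ by applying \cref{cor:farkas} inside the subspace $W := \{ h \in (\mathbb R^{n+1})^\vee \,\rvert\, h \, \Aa_\tau = 0 \} = \mathrm{span}(\Aa_\tau)^\perp$. Orthogonally decompose each column of $\Aa_{\bar\tau}$ as $\hat a^{(j)} = p^{(j)} + q^{(j)}$ with $p^{(j)} \in \mathrm{span}(\Aa_\tau)$ and $q^{(j)} \in W$; then $h \cdot \hat a^{(j)} = h \cdot q^{(j)}$ for every $h \in W$. Applying \cref{cor:farkas} to the configuration $\{-q^{(j)}\}_{j \in \bar\tau}$ viewed inside $W$ gives two alternatives. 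Either there exists $h \in W$ with $-h \cdot q^{(j)} > 0$ for every $j \in \bar\tau$, in which case $h$ vanishes on $\Aa_\tau$ and is strictly negative on $\Aa_{\bar\tau}$, so $\tau$ is exactly the face of $\Conv(A)$ maximizing the corresponding affine functional; or there exist $\mu_j \geq 0$, not all zero, with $\sum_{j \in \bar\tau} \mu_j q^{(j)} = 0$, which forces $\sum_{j \in \bar\tau} \mu_j \hat a^{(j)} \in \mathrm{span}(\Aa_\tau)$, so there exist $\nu_j \in \mathbb R$ with $\sum_{j \in \bar\tau} \mu_j \hat a^{(j)} - \sum_{j \in \tau} \nu_j \hat a^{(j)} = 0$. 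Setting $l_{\bar\tau} := \mu$ and $l_\tau := -\nu$ then exhibits a positive dependence on $\bar\tau$, contradicting the hypothesis; hence the first alternative must hold.

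The principal delicacy is handling the strict inequality on $\bar\tau$ together with the equality constraint on $\tau$ simultaneously, since a direct application of \cref{cor:farkas} to all of $\Aa$ does not accommodate both at once. Passing to the subspace $W$ absorbs the equality constraint into the ambient space, after which the first alternative of \cref{cor:farkas} delivers precisely the strict separation on $\bar\tau$ demanded by \cref{eq:facedef}.
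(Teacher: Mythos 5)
Your proof is correct, and it reaches the conclusion by a somewhat different mechanism than the paper. The paper runs a single chain of equivalences: it encodes the positive dependence as one block linear system (splitting the unconstrained variables $l_\tau = x - y$ with $x,y\geq 0$ so that everything becomes a non-negative feasibility problem) and applies \cref{lem:Farkas} once; the dual alternative then simultaneously produces the functional $m$ that is constant on $A_\tau$ and strictly smaller on $A_{\bar\tau}$, so both implications fall out of one dualization. You instead prove the forward direction by an elementary pairing argument (evaluating the separating functional on the dependence relation), and for the reverse direction you eliminate the equality constraints $h\,\Aa_\tau = 0$ geometrically, by projecting the columns $\hat a^{(j)}$, $j\in\bar\tau$, onto $\operatorname{span}(\Aa_\tau)^\perp$ and invoking \cref{cor:farkas} there; the non-trivial non-negative combination in the second alternative then lifts back to a positive dependence of $\Aa$ because the projected relation only fails to close up by an element of $\operatorname{span}(\Aa_\tau)$, which you absorb into $l_\tau$. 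What the paper's version buys is uniformity — one Farkas application, no choice of inner product, and both directions at once; what yours buys is transparency — the forward direction needs no duality at all, and the subspace trick makes explicit why the equality constraint on $\tau$ and the strict inequality on $\bar\tau$ can be satisfied simultaneously. Both proofs share the paper's (acknowledged) shortcut of identifying ``$\phi$ maximized exactly on $A_\tau$'' with ``$\tau$ is a face of $\Conv(A)$'', so you are not assuming anything the paper does not.
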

\begin{proof}
    Let $A_\tau = \{ a^{(j)} \}_{j\in\tau}$ and $A_\btau = \{ a^{(j)} \}_{j\notin\tau}$ be the subsets (matrices) collecting the elements (columns) corresponding to $\tau$ and its complement, respectively. We use the same convention for $l_\tau$ and $l_\btau$. The existence of a positive dependence means 
    \begin{align}
    	\exists l \in \mathbb R^N \text{ such that } A_\tau l_\tau + A_\btau l_\btau = 0,\ \sum_{j=1}^N l_j = 0,\, l_\btau \geq 0 \text{ and } l_\btau \neq 0 \comma
    \end{align}
    where $l_\btau\geq 0$ is understood componentwise. By writing $l_\tau = x - y$ with $x\geq 0$ and $y\geq 0$ we can reformulate this to
    \begin{align}
    	\Leftrightarrow \exists x,y\in\mathbb R^{|\tau|},\ l_\btau\in\mathbb R^{|\btau|},\ & r\in\mathbb R \text{ with } x\geq 0,\ y\geq 0,\ l_\btau\geq 0,\ r>0 \text{ such that} \nonumber\\
    	& \begin{pmatrix} 
            A_\tau & -A_\tau & A_\btau \\
            \mathbf{1}^\top & -\mathbf{1}^\top &\mathbf{1}^\top \\
            0 & 0 &\mathbf{1}^\top 
        \end{pmatrix} \begin{pmatrix} x \\ y \\ l_\btau \end{pmatrix} = \begin{pmatrix} 0 \\ 0 \\ r \end{pmatrix} \point
    \end{align}
    Hence, we can apply \cref{lem:Farkas}. Thus, the non-existence of a positive dependence is equivalent with
    \begin{align}
    	&\exists m\in\mathbb R^n, \exists s,t\in\mathbb R \text{ such that } m^\top A_\tau = s \mathbf{1}^\top,\, m^\top A_\btau  \leq (s-t) \mathbf{1}^\top \text{ and } t>0 \nonumber \\
    	\Leftrightarrow & \exists m\in\mathbb R^n, \exists s\in\mathbb R \text{ such that } m^\top A_\tau = s \mathbf{1}^\top \text{ and }  m^\top A_\btau  < s\mathbf{1}^\top \point
    \end{align}
    Since one can convince oneself easily that we can restrict the definition in equation \cref{eq:facedef} to the elements of $A$, this is nothing else than the definition of a face, where a linear functional $m^\top$ maximizes the points of $A_\tau$. 
\end{proof}

Analogously to the linear dependences, we call the subspace of linear forms
\begin{align}
    \gls{Val} := \left\{ \phi \Aa \,\rvert\, \phi \in \left(\mathbb R^{n+1}\right)^\vee \right \} \subseteq \left(\mathbb R^{N}\right)^\vee
\end{align}
the space of \textit{linear evaluations}, which has dimension $n+1$. The set $\sign(\Val(\Aa))$ will be called the \textit{signed covectors of $\Aa$} and its elements with a minimal, non-empty support are called the \textit{signed cocircuits of $\Aa$}.

These signed vectors and covectors together with the signed circuits and cocircuits can be associated with an oriented matroid of $\Aa$. Without going into detail we will focus on two aspects of oriented matroids only: the duality and the operations deletion and contraction. For more information about the connection to oriented matroids we refer to \cite{ZieglerLecturesPolytopes1995}. \bigskip

Note, that $\Dep(\Aa)$ and $\Val(\Aa)$ are orthogonal complements of each other, i.e.\ for every linear functional $\phi\in\Val(\Aa)$ every vector of $\Dep(\Aa)$ vanishes and vice versa. Thus, these subspaces are dual to each other, which vindicates the naming ``covectors'' and ``cocircuits'' from above. We will call this duality the \textit{Gale duality} and the transformation between these spaces the \textit{Gale transform}. Thus, a Gale dual of $\Aa$ is a vector configuration $\gls{Bb}$, such that the linear dependences of $\Aa$ are the linear evaluations of $\Bb$ and vice versa.

More specific, we will call a basis $\Bb\subset \mathbb R^N$ of the vector subspace $\Dep(\Aa)$ a Gale dual or a Gale transform of $\Aa$. Gale duals are not unique, as we can choose any basis of the space $\Dep(\Aa)$. However, all possible Gale duals are connected by regular linear transformations and we write $\gls{Gale}$ for the set of all Gale duals of $\Aa$. Note that if the vector configuration $\Aa$ is acyclic, every Gale dual is totally cyclic and vice versa. As before, we will associate $\Bb$ also with a matrix $\Bb\in\mathbb R^{N\times r}$. But contrary to the vector configurations $\Aa$, we suppose $\Bb$ as a collection of row vectors and we denote its elements by $\{b_1,\ldots,b_N\}$. The vector configuration of the rows of $\Bb$ is called a \textit{Gale diagram}.

\begin{example} \label{ex:GaleAppell}
    Consider the following point configuration of $6$ points in $\mathbb R^3$
    \begin{align}
        A = \begin{pmatrix} 
            0 & 1 & 0 & 0 & 1 & 0 \\
            0 & 0 & 1 & 0 & 0 & 1 \\
            0 & 0 & 0 & 1 & 1 & 1 \\
            \end{pmatrix} \label{eq:PtConfigAppell}
    \end{align}
    which turns out to generate the $\Aa$-hypergeometric system of the Appell $F_1$ function. The corresponding polytope $P=\Conv(A)$ is depicted in \cref{fig:PolytopeAppell}.

    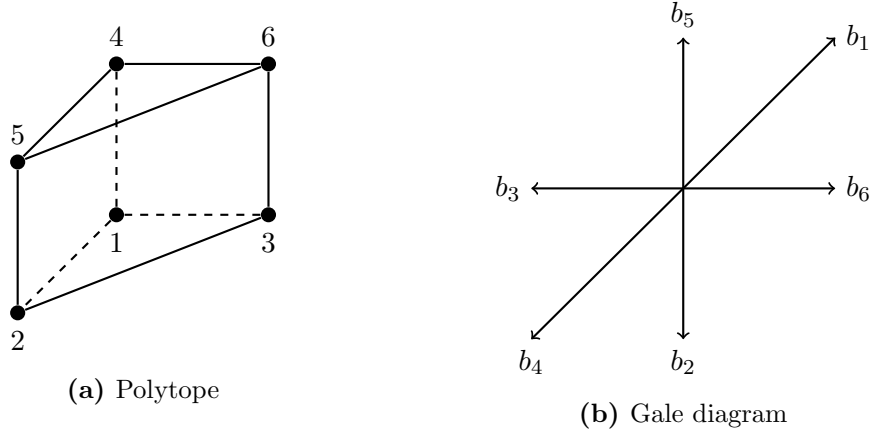
\begin{figure}
        \centering
        \begin{subfigure}{.48\textwidth}
            \centering
        	\begin{tikzpicture}[scale=1]
                \coordinate[circle, fill, inner sep = 2pt, label=below:$1$] (A) at (0,0);
                \coordinate[circle, fill, inner sep = 2pt, label=below:$2$] (B) at (-1.3,-1.3);
                \coordinate[circle, fill, inner sep = 2pt, label=below:$3$] (C) at (2,0);
                \coordinate[circle, fill, inner sep = 2pt, label=above:$4$] (D) at (0,2);
                \coordinate[circle, fill, inner sep = 2pt, label=above:$5$] (E) at (-1.3,.7);
                \coordinate[circle, fill, inner sep = 2pt, label=above:$6$] (F) at (2,2);
                \draw[thick] (E) -- (B) -- (C) -- (F);
                \draw[thick] (D) -- (E) -- (F) -- (D);
                \draw[thick,dashed] (A) -- (B);
                \draw[thick,dashed] (C) -- (A);
                \draw[thick,dashed] (A) -- (D);
            \end{tikzpicture}
            \caption{Polytope}  \label{fig:PolytopeAppell}
        \end{subfigure}
        \begin{subfigure}{.48\textwidth}
            \centering
            \begin{tikzpicture}[scale=2]
                \draw[thick,->] (0,0) -- (1,1) node[right] {$b_1$};
                \draw[thick,->] (0,0) -- (0,-1) node[below] {$b_2$};
                \draw[thick,->] (0,0) -- (-1,0) node[left] {$b_3$};
                \draw[thick,->] (0,0) -- (-1,-1) node[below] {$b_4$};
                \draw[thick,->] (0,0) -- (0,1) node[above] {$b_5$};
                \draw[thick,->] (0,0) -- (1,0) node[right] {$b_6$};
            \end{tikzpicture}
            \caption{Gale diagram}  \label{fig:GaleDualAppell}
        \end{subfigure}
    	\caption[Polytope and Gale diagram for Appell $F_1$ function]{Polytope and Gale diagram of the point configuration \cref{eq:PtConfigAppell}, which is related to the Appell $F_1$ function.}
    \end{figure}
    
    As pointed out above, the Gale transform is not unique. For example we can choose
    \begin{align}
        \Bb^\top = \begin{pmatrix}
            1 & 0 & -1 & -1 & 0 & 1 \\
            1 & -1 & 0 & -1 & 1 & 0
            \end{pmatrix} \point
    \end{align}
    By means of \cref{lem:face-kernel} we can read off the faces also from the Gale dual (\cref{fig:GaleDualAppell}). The cofaces of $\Conv(A)$, i.e.\ the complements of faces of $\Conv(A)$, are precisely those, where the cones spanned by corresponding elements of the Gale dual have the origin $0$ as their relative interior. E.g.\ $\Cone(b_2,b_5)$ contains the origin as its relative interior. Hence, the points $1,3,4,6$ generate a face of $\Conv(A)$. $\Cone(b_5,b_6)$ contains the origin only on his boundary. Therefore, $1,2,3,4$ is not related to a face of $\Conv(A)$.
\end{example}

For a vector configuration $\Aa\subset\mathbb R^{n+1}$ containing an element with label $j$, we mean by the \textit{deletion} \gls{deletion} simply the vector subconfiguration where we removed the element corresponding to the label $j$. Its dual operation, the \textit{contraction} \gls{contraction}, can be understood as a projection of $\Aa$ to a hyperplane not containing the element $a^{(j)}$. Thus, let $c\in\mathbb R^{n+1}$ with $c^\top a^{(j)} \neq 0$ be a vector defining a hyperplane. Then,
\begin{align}
    \contraction{\Aa}{j} := (\pi \Aa)\setminus j \,\text{ with }\, \pi := \mathbbm{1} - a^{(j)} \frac{1}{c^\top \cdot a^{(j)}} c^\top
\end{align}
is such a projection to a hyperplane. As a Gale dual $\Bb$ of $\Aa$ is nothing else than a matrix satisfying $\Aa \Bb = 0$, $\Bb$ will be also a Gale dual of $\pi \Aa$. Thus, it is not hard to see the duality of deletion and contraction \cite{DeLoeraTriangulations2010}
\begin{align}
	\Gale(\contraction{\Aa}{j}) = \Gale(\Aa)\setminus j \point
\end{align}


\subsection{Triangulations of polyhedra} \label{ssec:TriangulationsPolyhedra}
%

For the combinatorial structure of Feynman graphs CW-complexes are of special significance, see e.g.\ \cite{BerghoffGraphComplexesFeynman2021}. As is also for $\Aa$-hypergeometric functions, where we will consider subdivisions of polytopes. A \textit{subdivision} $\Ss$ of a point configuration $A\subset\mathbb R^n$ is a set of polytopes $\sigma$, which are called \textit{cells}, such that
\begin{enumerate}[(i)]
	\item if $\sigma\in\Ss$ then also every face of $\sigma$ is contained in $\Ss$,
	\item for all $\sigma,\tau\in\Ss$, the intersection $\sigma\cap\tau$ is either a face of both or empty,
	\item the union of all $\sigma\in \Ss$ is equal to $\Conv(A)$.
\end{enumerate}
Note that this definition does not demand to use all points of $A$. If additionally all cells are simplices we call the subdivision a \textit{triangulation} $\gls{Tt}$. By $\gls{hatT}$ we want to refer to the set of maximal cells of a triangulation $\Tt$, i.e.\ those cells which are not contained in other cells. If all $\sigma\in \hatT$ belong to simplices with volume $1$ (i.e.\ $|\det\Aas|=1$) the triangulation is called \textit{unimodular}. 

We say that a subdivision $\Ss$ refines $\Ss^\prime$, in symbols $\gls{refinement}$, if for every cell $c\in\Ss$ there exists a cell $c^\prime\in\Ss^\prime$ containing it $c\subseteq c^\prime$. Therefore, we can group the subdivisions by their refinements as a partially ordered set (poset). \bigskip

The underlying structure of subdivisions and triangulations are polyhedral complexes. We call a set $\mathcal K$ of polyhedra a \textit{polyhedral complex} if it satisfies
\begin{enumerate}[i)]
    \item if $P\in\mathcal K$ and $F\subseteq P$ is a face of $P$, it implies $F\in\mathcal K$
    \item $P\cap Q$ is a face of $P$ and a face of $Q$ for all $P,Q\in\mathcal K$.
\end{enumerate}
Thus, a triangulation is a simplicial polyhedral complex of polytopes covering $\Conv(A)$. But also the set of all faces of a polytope is a polyhedral complex. A subset $\mathcal K^\prime\subseteq\mathcal K$ of a polyhedral complex $\mathcal K$, generating a polyhedral complex by itself is called a \textit{polyhedral subcomplex} of $\mathcal K$. \bigskip

The definition of polyhedral complexes is not only restricted to polytopes, i.e.\ bounded polyhedra. We call a polyhedral complex consisting only in cones a \textit{(polyhedral) fan}. A fan is \textit{complete} if it covers $\mathbb R^n$. Especially, we can associate a fan to the faces of a polytope as follows. Let $P\subset\mathbb R^n$ be a polytope and $x\in P$ a point of it. The set of linear functionals
\begin{align}
	\gls{NPx} := \{ \phi\in(\mathbb R^n)^\vee \,\rvert\, \phi x \geq \phi y \quad \forall y\in P\}
\end{align}
is called the \textit{outer normal cone of $x$}. By considering the definition of a face \cref{eq:facedef} we see that for all relative interior points of a face $\tau\subseteq P$ the outer normal cone does not change. Consequentially, we will write $N_P(\tau)$. Note that $N_P(x)$ is full dimensional if and only if $x\in\Vertices(P)$. The collection of all outer normal cones of a polytope will be called the \textit{outer normal fan of $P$}
\begin{align}
	\gls{NP} := \{ N_P(\tau) \,\rvert\, \tau \,\text{ is a face of } \,P \} = \{ N_P(x) \,\rvert\, x\in P \} \subseteq (\mathbb R^n)^\vee \point
\end{align}
Analogously, we understand by the \textit{inner normal cone} the negatives of outer normal cones and the \textit{inner normal fan} will be the set of all inner normal cones. \bigskip

After the formal description of subdivisions by polyhedral complexes, we want to show several ways to construct them. Let $A = \{a^{(1)},\ldots,a^{(N)}\}\subset\mathbb R^n$ be a point configuration and $\gls{omega} : A \rightarrow \mathbb R$ a \textit{height} function, assigning a real number to every point of $A$. The so-called lifted point configuration $A^\omega := \{ (\omega_1, a^{(1)}), \ldots, (\omega_N,a^{(N)})\}$, where $\omega_j:=\omega(a^{(j)})$, describes a polytope $P^\omega = \Conv(A^\omega)\subset\mathbb R^{n+1}$ in dimension $n+1$. We call a face of $P^\omega$ \textit{visible from below} or a \textit{lower face} if the face is generated by a linear functional $\phi\in(\mathbb R^{n+1})^\vee$ having its first coordinate negative. By the projection $\pi: A^\omega\rightarrow A$, forgetting the first coordinate, we project all lower faces down to $\Conv(A)$ (see \cref{fig:regularTriangulation}). These projected lower faces satisfy all conditions of a subdivision and we will note such a subdivision by\glsadd{regSubdivision}\glsunset{regSubdivision} $\Ss(A,\omega)$. Moreover, if all projected lower faces are simplices, we have constructed a triangulation. All subdivisions and triangulations which can be generated by such a lift construction are called \textit{regular subdivisions} and \textit{regular triangulations}, respectively. Regular triangulations will become of great importance in the following and it can be shown, that every convex polytope admits always a regular triangulation \cite{DeLoeraTriangulations2010}. 

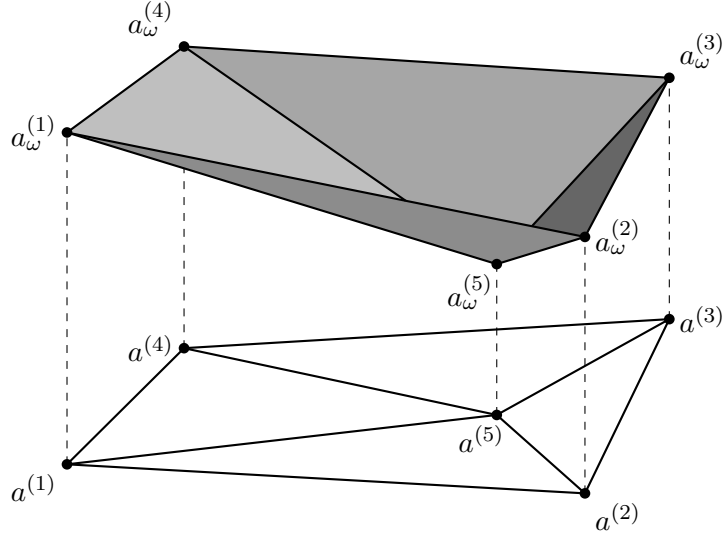
\begin{figure}[tb]
	\centering
	\begin{tikzpicture}[scale=2]              
        \coordinate[label=below left:$a^{(1)}$] (1) at (0,0,0);
        \coordinate[label=below right:$a^{(2)}$] (2) at (3.6,0,0.5);
        \coordinate[label=right:$a^{(3)}$] (3) at (3,0,-2.5);
        \coordinate[label=left:$a^{(4)}$] (4) at (0,0,-2);
        \coordinate[label=below:$\hspace{-.4cm}a^{(5)}$] (5) at (2.5,0,-0.85);

        \coordinate[label=left:$a_\omega^{(1)}$] (1h) at  ($(1) + (0,2.2,0)$);
        \coordinate[label=right:$a_\omega^{(2)}$] (2h) at  ($(2) + (0,1.7,0)$);
        \coordinate[label=above right:$a_\omega^{(3)}$] (3h) at  ($(3) + (0,1.6,0)$);
        \coordinate[label=above left:$a_\omega^{(4)}$] (4h) at  ($(4) + (0,2,0)$);
        \coordinate[label=below:$\hspace{-.7cm}a_\omega^{(5)}$] (5h) at  ($(5) + (0,1,0)$);
        \coordinate (I1) at (intersection of 1h--2h and 3h--5h);
        \coordinate (I2) at (intersection of 1h--2h and 4h--5h);

        \draw[thick] (1) -- (2) -- (3) -- (4) -- (1);
        \draw[dashed] (1) -- (1h); \draw[dashed] (2) -- (2h); \draw[dashed] (3) -- (3h); \draw[dashed] (4) -- (4h); \draw[dashed] (5) -- (5h);
        
        \fill[gray!50] (1h) -- (4h) -- (5h);
        \fill[gray!70] (3h) -- (4h) -- (5h);
        \fill[gray!120] (2h) -- (3h) -- (5h);
        \fill[gray!90] (1h) -- (2h) -- (5h);

        \draw[thick] (1h) -- (2h) -- (3h) -- (4h) -- (1h);
        \draw[thick] (1h) -- (5h); \draw[thick] (2h) -- (5h); \draw[thick] (3h) -- (I1); \draw[thick] (4h) -- (I2);
        
        \foreach \x in {1,2,3,4,5,1h,2h,3h,4h,5h} \fill (\x) circle (1pt);

        \draw[thick] (1) -- (5); \draw[thick] (2) -- (5); \draw[thick] (3) -- (5); \draw[thick] (4) -- (5);
	\end{tikzpicture}
	\caption[Constructing regular triangulations]{Regular triangulation of a point configuration $A=\{a^{(1)},\ldots,a^{(5)}\}\subset\mathbb R^2$. We denote $a_\omega^{(j)} = (\omega_j,a^{(j)})$ for the lifted points. The projection of the lower faces of the lifted point configuration to the polytope $\Conv(A)$ generates the regular triangulation.} \label{fig:regularTriangulation}
\end{figure}

Similarly, we can define a regular subdivision $\Ss(\Aa,\omega)$ of vector configurations $\Aa$. However, for a subdivision of a vector configuration we have to replace convex hulls by cones in the definitions above. Thus, a subdivision of a vector configuration is a polyhedral complex of cones covering $\Cone(\Aa)$. According to \cite{SturmfelsGrobnerBasesConvex1995} we can reformulate regular subdivisions as follows. For a convenient height $\omega\in\mathbb R^N$ a regular subdivision $\Ss(\Aa,\omega)$ of $\Aa=\{a^{(1)},\ldots,a^{(N)}\}$ consists of all subsets $\sigma\subseteq\{1,\ldots,N\}$ such that there exists a linear functional $c\in(\mathbb R^{n+1})^\vee$ with
\begin{align}
	c \cdot a^{(j)} &= \omega_j \qquad\text{for}\quad j \in \sigma \label{eq:regTriangVector1} \\
	c \cdot a^{(j)} &< \omega_j \qquad\text{for}\quad j \notin \sigma \point \label{eq:regTriangVector2}
\end{align}
For acyclic vector configurations this will produce a regular subdivision for any height $\omega\in\mathbb R^N$. Moreover, this subdivision will agree with the regular subdivision of the dehomogenized vector configuration $A\subset\mathbb R^n$. If $\omega$ is sufficiently generic $\Ss(\Aa,\omega)$ will be a regular triangulation. However, for non-acyclic vector configurations not all heights will produce a subdivision and we have to restrict us to non-negative heights $\omega\in\mathbb R_{\geq 0}^N$ in that case \cite{DeLoeraTriangulations2010}.\bigskip

Another, iterative construction of triangulations of point configurations is the so-called \textit{placing triangulation}. Consider a face $\tau$ of a convex polytope $P\subset \mathbb R^n$ and an arbitrary point in the relative interior of that face $x\in\relint (\tau)$. The face $\tau$ is \textit{visible} from another point $p\notin \tau$, if the line segment $[x,p]$ intersects $P$ only at $x$. With the concept of visibility, we can construct triangulations iteratively. Let $\Tt$ be a (regular) triangulation of the point configuration $A$. Then the set
\begin{align}
\Tt^\prime = \Tt \cup \{ \tau \cup \{p\} \, | \,  \tau \in \Tt \textrm{ and } \tau \textrm{ is visible from } p \} 
\end{align}
is a (regular) triangulation of the point configuration $A \cup p$ \cite{DeLoeraTriangulations2010}. Thus, by starting with a triangulation of an arbitrary point of $A$ we can place step by step the other points to the previous triangulation. The order of the added points will determine the triangulation. 

The placing triangulation is slightly more convenient in an algorithmical use. However, more efficient algorithms make use of the connection to oriented matroids. We refer to \cite{RambauTOPCOMTriangulationsPoint2002} for a consideration about efficiency of algorithms and its implementation in the software \softwareName{Topcom}. Also the comprehensive software \softwareName{polymake} \cite{GawrilowPolymakeFrameworkAnalyzing2000} includes algorithms to construct triangulations. We will discuss those possibilities in \cref{sec:SoftwareTools}. \bigskip

Let $\Tt$ be a point configuration of $A$ and $\Tt^\prime$ a triangulation of $A^\prime$. $\Tt^\prime$ is a \textit{subtriangulation} of $\Tt$, in symbols $\Tt^\prime \subseteq \Tt$, if every cell of $\Tt^\prime$ is contained in $\Tt$. Thus, it will be also $A^\prime\subseteq A $. In other words, $\Tt^\prime$ is a subcomplex of $\Tt$. The placing triangulation shows that for every point configurations $A$, $A^\prime$ with $A^\prime\subseteq A$ there will be (regular) triangulations $\Tt$, $\Tt^\prime$ with $\Tt^\prime\subseteq \Tt$. However, not all triangulations $\Tt$ of the point configuration $A$ will have a subtriangulation corresponding to $A^\prime$. But for regular triangulations, we will always find consistent triangulations in the following sense.

\begin{lemma}[triangulations of deleted point configurations \cite{DeLoeraTriangulations2010}] \label{lem:subtriangulations}
	Let $\Tt$ be a regular triangulation (subdivision) of a point configuration $A\subset \mathbb R^n$ and $a^{(j)}\in A$ a point with the label $j$. Then there is a regular triangulation (subdivision) $\Tt^\prime$ of $A\setminus j$, using all simplices (cells) of $\Tt$ which do not contain $a^{(j)}$.
\end{lemma}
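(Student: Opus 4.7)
The plan is to exploit the lifting characterization of regularity. Since $\Tt$ is regular, I would begin by fixing a height $\omega : A \to \mathbb R$ with $\Tt = \Ss(A,\omega)$ and taking as candidate
\begin{align*}
\Tt^\prime := \Ss\bigl(A\setminus j,\ \omega|_{A\setminus j}\bigr),
\end{align*}
the regular subdivision defined by the same heights restricted to the remaining points. This is the only natural guess, because any modification of the heights could, in principle, destroy cells of $\Tt$ that we wish to preserve.

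The central step is to show that any cell $\sigma\in\Tt$ not containing $a^{(j)}$ is also a cell of $\Tt^\prime$. By construction $\sigma$ is the projection of a lower face $F$ of the lifted polytope $P^\omega = \Conv(A^\omega)$. The hypothesis $a^{(j)}\notin\sigma$ forces the lifted point $(a^{(j)},\omega_j)$ to be absent from $F$, so every vertex of $F$ already lies in $(A\setminus j)^{\omega|_{A\setminus j}}$, and hence
\begin{align*}
F \subseteq P^\prime := \Conv\bigl((A\setminus j)^{\omega|_{A\setminus j}}\bigr) \subseteq P^\omega.
\end{align*}
I would then argue that the linear functional $\phi$ (with $\phi_0<0$) certifying $F$ as a lower face of $P^\omega$ also certifies $F$ as a lower face of $P^\prime$: the minimum of $\phi$ on the smaller polytope $P^\prime$ is bounded below by its minimum on $P^\omega$, and equality holds because the vertices of $F$ themselves lie in $P^\prime$. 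Projecting back via $\pi$ shows $\sigma\in\Tt^\prime$, and $\sigma\subseteq\Conv(A\setminus j)$ automatically since its vertices do.

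For the subdivision part this already finishes the argument. For the triangulation part one must additionally verify that $\Tt^\prime$ remains simplicial, and I expect this to be the main subtlety, since one might fear that deleting $(a^{(j)},\omega_j)$ leaves the remaining lifted points in a non-generic configuration on the newly exposed region. Fortunately the worry dissolves: the assertion that $\Tt$ is a triangulation is equivalent to no $n+2$ points of $A^\omega$ lying on a common lower supporting hyperplane of $P^\omega$, and this genericity condition is trivially inherited by the subconfiguration $(A\setminus j)^{\omega|_{A\setminus j}}$. Hence $\Tt^\prime$ is again a regular triangulation, and the construction is complete.
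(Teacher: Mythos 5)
The first half of your argument is sound: restricting the heights to $A\setminus j$ gives a regular subdivision $\Ss(A\setminus j,\omega|_{A\setminus j})$, and your comparison of the minima of $\phi$ over $P^\prime\subseteq P^\omega$ correctly shows that every lower face of $P^\omega$ avoiding the lift of $a^{(j)}$ remains a lower face of $P^\prime$, so all cells of $\Tt$ not containing $a^{(j)}$ survive. This settles the subdivision version of the lemma. (The paper itself quotes the statement from the literature without proof, so there is nothing to compare against on that side.)

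The gap is in the final step. Your claim that the genericity needed for $\Ss(A\setminus j,\omega|_{A\setminus j})$ to be a triangulation is ``trivially inherited'' is false, because the relevant condition is about points lying on a common \emph{lower supporting} hyperplane, and deleting a point changes which hyperplanes are supporting: a hyperplane that cut through the interior of $P^\omega$ (because the lift of $a^{(j)}$ lay strictly below it) can become a lower supporting hyperplane of $P^\prime$, exposing an affinely dependent cell. Concretely, take $A=\{0,1,2,3\}\subset\mathbb R^1$ with $\omega=(0,-1,1,\tfrac{3}{2})$; then $\Ss(A,\omega)$ is the regular triangulation with maximal cells $\{0,1\}$ and $\{1,3\}$, but after deleting the point $1$ the remaining lifted points $(0,0),(2,1),(3,\tfrac{3}{2})$ are collinear, so $\Ss(A\setminus 1,\omega|_{A\setminus 1})$ has the single non-simplicial cell $\{0,2,3\}$. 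The repair is standard but must be said: refine the regular subdivision $\Ss(A\setminus j,\omega|_{A\setminus j})$ to a regular triangulation by a generic small perturbation $\omega|_{A\setminus j}+\epsilon\mu$ of the heights, and observe that any cell of $\Tt$ not containing $a^{(j)}$ is a simplex, i.e.\ an affinely independent point set, which admits only the trivial triangulation and therefore persists unchanged under any refinement. Without this extra refinement-and-survival argument the proof of the triangulation version is incomplete.
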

Note that \cref{lem:subtriangulations} demand the triangulations to be regular. Except for several special cases as e.g.\ $n=2$, the lemma holds not necessarily for non-regular triangulations.

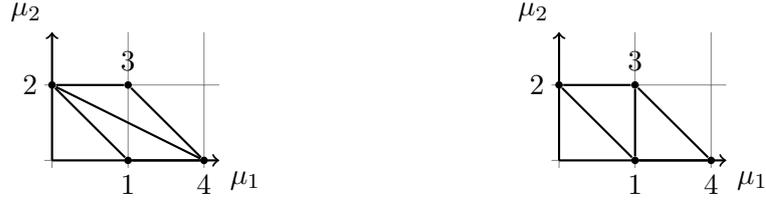
\begin{figure}
    \centering
    \begin{subfigure}{.45\textwidth}
        \centering
        \begin{tikzpicture}[scale=1]
            \draw[step=1cm,gray,very thin] (-0.1,-0.1) grid (2.2,1.7);  
            \draw[thick,->] (0,0) -- (2.2,0) node[anchor=north west] {$\mu_1$};
            \draw[thick,->] (0,0) -- (0,1.7) node[anchor=south east] {$\mu_2$};
            \coordinate[circle, fill, inner sep = 1pt, label=below:$1$] (A) at (1,0);
            \coordinate[circle, fill, inner sep = 1pt, label=left:$2$] (B) at (0,1);
            \coordinate[circle, fill, inner sep = 1pt, label=above:$3$] (C) at (1,1);
            \coordinate[circle, fill, inner sep = 1pt, label=below:$4$] (D) at (2,0);
            \draw[thick] (A) -- (B) -- (C) -- (D) -- (A);
            \draw[thick] (B) -- (D);
        \end{tikzpicture}
        \caption{regular triangulation $\Ss(A,\omega)$ of $A$ generated by $\omega = (0,0,1,0)$}
    \end{subfigure}
    \begin{subfigure}{.45\textwidth}
    	\centering
    	\begin{tikzpicture}[scale=1]
            \draw[step=1cm,gray,very thin] (-0.1,-0.1) grid (2.2,1.7);  
            \draw[thick,->] (0,0) -- (2.2,0) node[anchor=north west] {$\mu_1$};
            \draw[thick,->] (0,0) -- (0,1.7) node[anchor=south east] {$\mu_2$};
            \coordinate[circle, fill, inner sep = 1pt, label=below:$1$] (A) at (1,0);
            \coordinate[circle, fill, inner sep = 1pt, label=left:$2$] (B) at (0,1);
            \coordinate[circle, fill, inner sep = 1pt, label=above:$3$] (C) at (1,1);
            \coordinate[circle, fill, inner sep = 1pt, label=below:$4$] (D) at (2,0);
            \draw[thick] (A) -- (B) -- (C) -- (D) -- (A);
            \draw[thick] (A) -- (C);
        \end{tikzpicture}
        \caption{regular triangulation $\Ss(A,\omega)$ of $A$ generated by $\omega = (0,0,0,1)$}
    \end{subfigure}
    \caption[Example of regular triangulations]{Example of the two possible regular triangulations of the point configuration $A = \{ (1,0) ; (0,1) ; (1,1) ; (2,0) \}$ with heights $\omega$ generating those triangulations. This example turns out to describe the $1$-loop self-energy Feynman integral with one mass, see \cref{ex:1loopbubbleA}. The possible proper subtriangulations of these two triangulations are simply the single simplices itself.}
\end{figure}


\subsection{Secondary polytopes and secondary fans} \label{ssec:secondaryPolytope}
%

In the last part about convex polyhedra we will study the structure of the set of all subdivisions. For every triangulation $\Tt$ of a full dimensional point configuration $A\subset\mathbb R^n$ we will introduce the \textit{weight map} (occasionally also known as \textit{GKZ-vector}) $\gls{weight} :A\rightarrow \mathbb Z_{\geq 0}$
\begin{align}
    \varphi_\Tt (a) :=  \!\!\!\!\!\!\!\!\!\! \sum_{\substack{\sigma\in \Tt \,\,\text{s.t.}\\ a\in\Vertices(\Conv(\sigma))}} \!\!\!\!\!\!\!\!\!\! \vol \!\left(\Conv(\sigma)\right) = \varphi_{\hatT} (a)
\end{align}
which is the sum of all simplex volumes, having $a$ as its vertex. We define degenerated polytopes to have volume zero, which is the reason why we only have to consider the full dimensional simplices $\hatT$. We write $\varphi_\Tt(A) = \left(\varphi_\Tt(a^{(1)}),\ldots,\varphi_T(a^{(N)})\right)$ for the image of $A$. Note that two distinct regular triangulations also have a different weight $\varphi_\Tt(A)$, whereas two distinct non-regular triangulations may have the same weight \cite{DeLoeraTriangulations2010}.

The weights themselves define a further polytope of dimension $r:=N-n-1$, which is the so-called \textit{secondary polytope} $\Sigma(A)$. It is the convex hull of all weights
\begin{align}
    \gls{SecPoly} := \Conv\!\left(\left\{\varphi_\Tt(A) \, | \, \Tt \text{ is a triangulation of } A\right\}\right) \subset\mathbb R^N \point
\end{align}
The vertices of the secondary polytope $\Sigma(A)$ correspond precisely to the regular triangulations $\Tt$. Moreover, the refinement poset of regular subdivisions is isomorphic to the face lattice of $\Sigma(A)$ \cite{DeLoeraTriangulations2010}. We will demonstrate this with the example shown on the book cover of \cite{GelfandDiscriminantsResultantsMultidimensional1994}.

\begin{example} \label{ex:5pointsPlane}
	Let $A\subset\mathbb R^2$ be the following point configuration
	\begin{align}
		A = \begin{pmatrix}
		    	0 & 2 & 2 & 0 & 1 \\
		    	0 & 0 & 2 & 2 & 1
		    \end{pmatrix}
	\end{align}
	of five points forming a rectangle in the plane. There are three possible triangulations. Note that not all points of $A$ has to be used in a triangulation. These three triangulations have the weights $\varphi_{\Tt_1} = (8,4,8,4,0)$, $\varphi_{\Tt_2} = (4,8,4,8,0)$ and $\varphi_{\Tt_3} = (4,4,4,4,8)$. These weights generate the secondary polytope $\Sigma (A)\subset\mathbb R^5$, whose actual dimension is only $2$. Thus, $\Sigma(A)$ is a $2$-dimensional triangle in $\mathbb R^5$ as depicted in \cref{fig:SecondaryPolytope5pointsPlane}. As aforementioned the vertices of $\Sigma(A)$ correspond to the regular triangulations of $A$. The edges of $\Sigma(A)$ will correspond to regular subdivisions of $A$, which have the triangulations as their only strict refinements. Thus, in this example it will be the regular subdivisions containing $4$ points. The full secondary polytope will correspond to the trivial subdivision.
	\begin{figure}
		\centering
		\begin{tikzpicture}[scale=0.3]
		    \coordinate (T1) at (0,0); \coordinate (T2) at (15,0); \coordinate (T3) at (7.5,10);
		    \draw[thick] (T1) -- (T2) -- (T3) -- cycle;  
		
            \coordinate (A5) at ($(T1) + (-1.5,-1.5)$);
            \coordinate[circle, fill, inner sep = 1pt] (A1) at ($(A5) + (-1,-.8)$);
            \coordinate[circle, fill, inner sep = 1pt] (A2) at ($(A5) + (1,-.8)$);
            \coordinate[circle, fill, inner sep = 1pt] (A3) at ($(A5) + (1,.8)$);
            \coordinate[circle, fill, inner sep = 1pt] (A4) at ($(A5) + (-1,.8)$);
            \draw[thick] (A1) -- (A2) -- (A3) -- (A4) -- (A1); \draw[thick] (A1) -- (A3);
            
            \coordinate[circle, fill, inner sep = 1pt] (B5) at ($(T1)!0.5!(T2) + (0,-1.5)$);
            \coordinate[circle, fill, inner sep = 1pt] (B1) at ($(B5) + (-1,-.8)$);
            \coordinate[circle, fill, inner sep = 1pt] (B2) at ($(B5) + (1,-.8)$);
            \coordinate[circle, fill, inner sep = 1pt] (B3) at ($(B5) + (1,.8)$);
            \coordinate[circle, fill, inner sep = 1pt] (B4) at ($(B5) + (-1,.8)$);
            \draw[thick] (B1) -- (B2) -- (B3) -- (B4) -- (B1);  \draw[thick] (B1) -- (B3);
            
            \coordinate[circle, fill, inner sep = 1pt] (C5) at ($(T2) + (1.5,-1.5)$);
            \coordinate[circle, fill, inner sep = 1pt] (C1) at ($(C5) + (-1,-.8)$);
            \coordinate[circle, fill, inner sep = 1pt] (C2) at ($(C5) + (1,-.8)$);
            \coordinate[circle, fill, inner sep = 1pt] (C3) at ($(C5) + (1,.8)$);
            \coordinate[circle, fill, inner sep = 1pt] (C4) at ($(C5) + (-1,.8)$);
            \draw[thick] (C1) -- (C2) -- (C3) -- (C4) -- (C1);  
            \draw[thick] (C1) -- (C5); \draw[thick] (C2) -- (C5); \draw[thick] (C3) -- (C5); \draw[thick] (C4) -- (C5); 
            
            \coordinate[circle, fill, inner sep = 1pt] (D5) at ($(T2)!0.5!(T3) + (2.3,-.2)$);
            \coordinate[circle, fill, inner sep = 1pt] (D1) at ($(D5) + (-1,-.8)$);
            \coordinate[circle, fill, inner sep = 1pt] (D2) at ($(D5) + (1,-.8)$);
            \coordinate[circle, fill, inner sep = 1pt] (D3) at ($(D5) + (1,.8)$);
            \coordinate[circle, fill, inner sep = 1pt] (D4) at ($(D5) + (-1,.8)$);
            \draw[thick] (D1) -- (D2) -- (D3) -- (D4) -- (D1); \draw[thick] (D2) -- (D4); 
            
            \coordinate (E5) at ($(T3) + (0,1.5)$);
            \coordinate[circle, fill, inner sep = 1pt] (E1) at ($(E5) + (-1,-.8)$);
            \coordinate[circle, fill, inner sep = 1pt] (E2) at ($(E5) + (1,-.8)$);
            \coordinate[circle, fill, inner sep = 1pt] (E3) at ($(E5) + (1,.8)$);
            \coordinate[circle, fill, inner sep = 1pt] (E4) at ($(E5) + (-1,.8)$);
            \draw[thick] (E1) -- (E2) -- (E3) -- (E4) -- (E1); \draw[thick] (E2) -- (E4); 
            
            \coordinate (F5) at ($(T1)!0.5!(T3) + (-2.3,-.2)$);
            \coordinate[circle, fill, inner sep = 1pt] (F1) at ($(F5) + (-1,-.8)$);
            \coordinate[circle, fill, inner sep = 1pt] (F2) at ($(F5) + (1,-.8)$);
            \coordinate[circle, fill, inner sep = 1pt] (F3) at ($(F5) + (1,.8)$);
            \coordinate[circle, fill, inner sep = 1pt] (F4) at ($(F5) + (-1,.8)$);
            \draw[thick] (F1) -- (F2) -- (F3) -- (F4) -- (F1);
            
            \coordinate[circle, fill, inner sep = 1pt] (G5) at ($(T1)!0.5!(T2) + (0,4.8)$);
            \coordinate[circle, fill, inner sep = 1pt] (G1) at ($(G5) + (-1,-.8)$);
            \coordinate[circle, fill, inner sep = 1pt] (G2) at ($(G5) + (1,-.8)$);
            \coordinate[circle, fill, inner sep = 1pt] (G3) at ($(G5) + (1,.8)$);
            \coordinate[circle, fill, inner sep = 1pt] (G4) at ($(G5) + (-1,.8)$);
            \draw[thick] (G1) -- (G2) -- (G3) -- (G4) -- (G1);
		\end{tikzpicture}
		\caption[Example of a secondary polytope $\Sigma(A)$]{The secondary polytope $\Sigma(A)$ from \cref{ex:5pointsPlane} together with the regular subdivisions. Vertices of the secondary polytope correspond to regular triangulations, whereas edges of $\Sigma(A)$ corresponding to subdivisions into polytopes consisting in four points. The trivial subdivision is equal to the full secondary polytope.} \label{fig:SecondaryPolytope5pointsPlane}
	\end{figure}
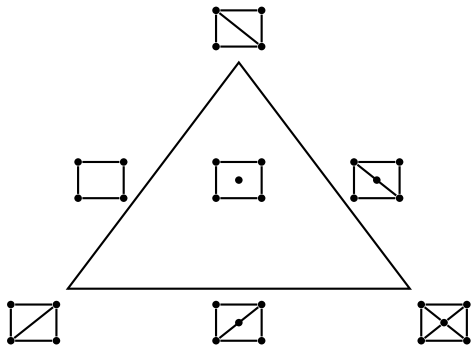
\end{example}

We want to consider this structure more in detail. Let $\Aa\subset\mathbb R^{n+1}$ be a vector configuration and $\Tt$ a regular subdivision of it. We call the set of heights $\omega$ generating $\Tt$ or a coarser subdivision
\begin{align}
	\gls{SecCone} := \left\{\omega\in\mathbb R^N \,\rvert\, \Tt \preceq \mathcal S(\Aa,\omega)\right\}
\end{align}
the \textit{secondary cone} of $\Tt$ in $\Aa$. The secondary cone $\Sigma C(\Aa,\Tt)$ is a polyhedral convex cone and it is full dimensional if and only if $\Tt$ is a regular triangulation \cite{DeLoeraTriangulations2010}. Furthermore, $\Sigma C(\Aa,\Tt^\prime)$ is a proper face of $\Sigma C(\Aa,\Tt)$ if and only if $\Tt \prec \Tt^\prime$. Therefore, the relative interior of the secondary cone describes the heights generating the regular subdivision $\Tt$, i.e.\ $\relint \!\left(\Sigma C(\Aa,\Tt)\right) = \{\omega\in\mathbb R^N \,\rvert\, \Tt = \mathcal S(\Aa,\omega)\}$. This set is known as \textit{relatively open secondary cone}. The set of all secondary cones is called the secondary fan
\begin{align}
	\gls{SecFan} = \big\{ \Sigma C(\Aa,\Tt) \,\rvert\, \Tt \text{ is a regular subdivision of } \Aa \big\} \point
\end{align}
It can be shown that $\Sigma F (\Aa)$ is a polyhedral fan, which is complete if and only if $\Aa$ is acyclic. Moreover, when $\Aa$ is the homogenized point configuration of $A$, $\Sigma F(\Aa)$ is the inner normal fan of the secondary polytope $\Sigma(A)$. Thus, by the secondary fan, we get the aforementioned relation between the secondary polytope and the refinement poset of subdivisions, which is encoded in the secondary cones. However, the secondary polytope is not full dimensional. Therefore, we have to find the right projection to connect the possible heights $\omega$ of subdivisions to the secondary polytope. \bigskip

It turns out that the Gale dual provides the right projection from heights to the secondary structure. Let $\Aa=\{a^{(1)},\ldots,a^{(N)}\}\subset\mathbb R^{n+1}$ be a full dimensional vector configuration and $\sigma\subseteq\{1,\ldots,N\}$ be a set of labels corresponding to a full dimensional simplex of $\Aa$. We write $\bar\sigma=\{1,\ldots,N\}\setminus \sigma$ for the complement of $\sigma$ and $\Aas = (a^{(j)})_{j\in\sigma}$ as well as $\Aabs = (a^{(j)})_{j\notin\sigma}$. Eliminating the linear forms $c$ from \cref{eq:regTriangVector1} and \cref{eq:regTriangVector2} we see that
\begin{align}
	- \omega_\sigma \Aas^{-1}\Aabs + \omega_{\bar\sigma} > 0 \label{eq:subdivisionCondGale}
\end{align}
describes a necessary and sufficient condition for $\sigma\in\Ss(\Aa,\omega)$, where we use the same nomenclature $\omega_\sigma := (\omega_j)_{j\in\sigma}$ and $\omega_{\bar\sigma} := (\omega_j)_{j\notin\sigma}$. Note, that $\Bb = \begin{psmallmatrix} - \Aas^{-1}\Aabs \\ \mathbbm 1 \end{psmallmatrix}$ is a possible Gale dual of $\Aa$, where the first rows correspond to $\sigma$ and the last rows correspond to $\bar\sigma$. Hence, we can equivalently write $\omega \Bb > 0$ instead of \cref{eq:subdivisionCondGale}. This relation can be extended to any Gale dual $\Bb\in\Gale(\Aa)$ as follows. Denote by\glsadd{betaMap}\glsunset{betaMap} $\beta : \mathbb R^N \rightarrow \mathbb R^r$ the projection $\omega \mapsto \omega \Bb$ and $\Bb_{\bar\sigma} = (b_i)_{i\in\bar\sigma}$ are the rows of $\Bb$ corresponding to $\bar\sigma$. Then we have \cite{DeLoeraTriangulations2010}
\begin{align}
    \sigma\in\Ss(\Aa,\omega) \, \Leftrightarrow \, \beta(\omega) \in \relint \!\left(\Cone (\Bb_{\bar\sigma})\right) \point \label{eq:subdivisionCondGaleBeta}
\end{align}
Therefore, the (maximal) cells of regular subdivisions $\Ss(\Aa,\omega)$ are directly related to the structure of the Gale diagram. Furthermore, regular subdivisions of $\Aa$ correspond to the intersection of cones spanned by subconfigurations of $\Bb$. Those intersections will also be called \textit{chambers}\footnote{To be precise, a \textit{relatively open chamber} is a minimal, non-empty intersection of cones corresponding to subconfigurations of the Gale dual $\Bb$. A \textit{closed chamber} is defined to be the closure of a relatively open chamber. The set of all closed chambers is called the \textit{chamber complex} or the \textit{chamber fan} \cite{DeLoeraTriangulations2010}.} of $\Bb$. These chambers are projections of secondary cones $\beta(\Sigma C(\Aa,\Tt))$ into the Gale diagram. Therefore, by projecting the secondary fan also by the map $\beta$ we obtain the so-called chamber complex which  can be constructed from the Gale diagram. Therefore, the Gale diagram contains all the essential part of the secondary fan. We will demonstrate this fact by the following example.

\begin{example} \label{ex:SecondaryAppell}
    We will continue the \cref{ex:GaleAppell} of six points in $\mathbb R^3$. This point configuration has $6$ triangulations, all of them are regular and unimodular \par    
    {\centering    
    \vspace{\baselineskip}
    \begin{tabular}{ccp{.8cm}cc}
        I:   & $\{1,2,3,4\}, \{2,3,4,5\}, \{3,4,5,6\}$ & & IV: & $\{1,2,3,5\}, \{1,4,5,6\}, \{1,3,5,6\}$ \\
        II:  & $\{1,2,3,4\}, \{2,4,5,6\}, \{2,3,4,6\}$ & & V:  & $\{2,4,5,6\}, \{1,2,4,6\}, \{1,2,3,6\}$ \\
        III: & $\{3,4,5,6\}, \{1,3,4,5\}, \{1,2,3,5\}$ & & VI: & $\{1,4,5,6\}, \{1,2,3,6\}, \{1,2,5,6\}$
    \end{tabular}
    \vspace{\baselineskip} \par
    }    
    \noindent where the numbers stand for the labels of the points and the curly braces denote the full dimensional simplices. The triangulations can be constructed either by hand, considering \cref{fig:PolytopeAppell} or by using a software program, e.g.\ \softwareName{Topcom} \cite{RambauTOPCOMTriangulationsPoint2002}. Using the same Gale dual as in \cref{ex:GaleAppell} the Gale diagram is depicted in \cref{fig:AppellGaleTriangs}  where we also denote the cones/chambers generating the triangulations. For example the intersection of the cones $\Cone(b_5,b_6) \cap\Cone(b_1,b_6)\cap\Cone(b_1,b_2)$ is the chamber generating the triangulation I. Thus, for any height vector $\omega$ with $\omega \Bb$ lying inside this intersection we will generate the triangulation I. Triangulations which are ``neighbours'' in the Gale diagram, i.e.\ they have a common facet, will be said to be related by a flip. Triangulations which are related  by a flip share all except two of their simplices \cite{DeLoeraTriangulations2010}.

    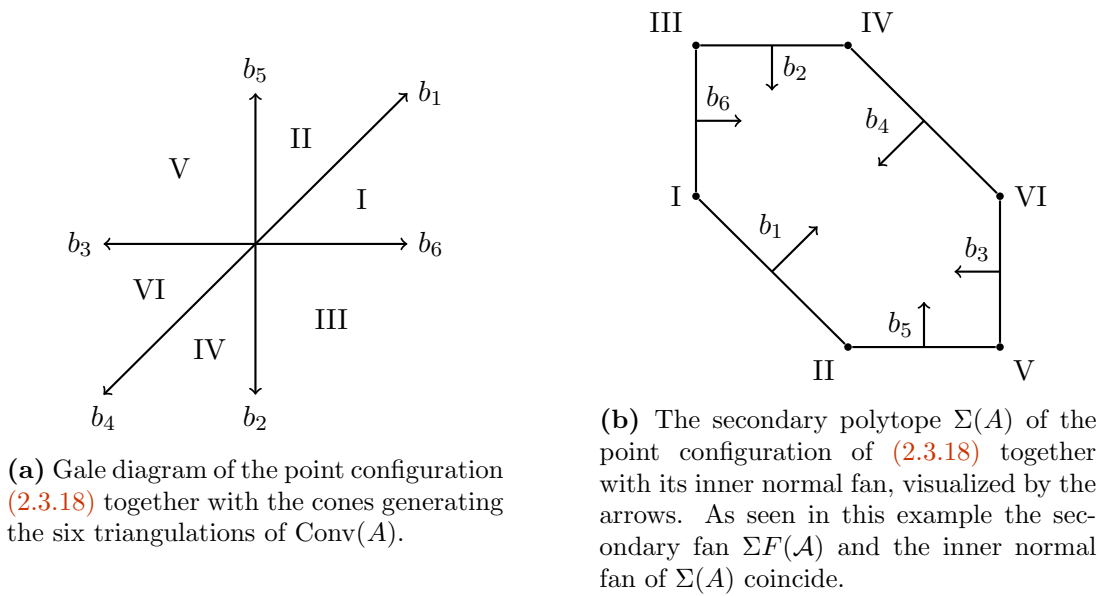
\begin{figure}[thb]
        \centering
    	\begin{subfigure}{.45\textwidth}
    		\centering
    		\begin{tikzpicture}[scale=2]
                \draw[thick,->] (0,0) -- (1,1) node[right] {$b_1$};
                \draw[thick,->] (0,0) -- (0,-1) node[below] {$b_2$};
                \draw[thick,->] (0,0) -- (-1,0) node[left] {$b_3$};
                \draw[thick,->] (0,0) -- (-1,-1) node[below] {$b_4$};
                \draw[thick,->] (0,0) -- (0,1) node[above] {$b_5$};
                \draw[thick,->] (0,0) -- (1,0) node[right] {$b_6$};

                \node at (.7,.3){I};
                \node at (.3,.7){II};
                \node at (.5,-.5){III};
                \node at (-.3,-.7){IV};
                \node at (-.5,.5){V};
                \node at (-.7,-.3){VI};
            \end{tikzpicture}
            \caption{Gale diagram of the point configuration \cref{eq:PtConfigAppell} together with the cones generating the six triangulations of $\Conv(A)$.} \label{fig:AppellGaleTriangs}
    	\end{subfigure}    	
    	\hspace{1cm}
    	\begin{subfigure}{.45\textwidth}
            \centering
            \begin{tikzpicture}[scale=2]
                \coordinate[circle,inner sep=1pt,fill,label=left:I] (I) at (0,0);
                \coordinate[circle,inner sep=1pt,fill,label=below left:II] (II) at (1,-1);
                \coordinate[circle,inner sep=1pt,fill,label=above left:III] (III) at (0,1);
                \coordinate[circle,inner sep=1pt,fill,label=above right:IV] (IV) at (1,1);
                \coordinate[circle,inner sep=1pt,fill,label=below right:V] (V) at (2,-1);
                \coordinate[circle,inner sep=1pt,fill,label=right:VI] (VI) at (2,0);
                \draw[thick] (I) -- (II) -- (V) -- (VI) -- (IV) -- (III) -- (I);
                \draw[thick,->] ($(I)!0.5!(II)$) -- +(.3,.3) node[midway,above left] {$b_1$};
                \draw[thick,->] ($(II)!0.5!(V)$) -- +(0,.3) node[midway,left] {$b_5$};
                \draw[thick,->] ($(V)!0.5!(VI)$) -- +(-.3,0) node[midway,above] {$b_3$};
                \draw[thick,->] ($(VI)!0.5!(IV)$) -- +(-.3,-.3) node[midway,above left] {$b_4$};
                \draw[thick,->] ($(IV)!0.5!(III)$) -- +(0,-.3) node[midway,right] {$b_2$};
                \draw[thick,->] ($(III)!0.5!(I)$) -- +(.3,0) node[midway,above] {$b_6$};
            \end{tikzpicture} 
            \caption{The secondary polytope $\Sigma(A)$ of the point configuration of \cref{eq:PtConfigAppell} together with its inner normal fan, visualized by the arrows. As seen in this example the secondary fan $\Sigma F(\Aa)$ and the inner normal fan of $\Sigma(A)$ coincide.}  \label{fig:AppellSecPolyFan}
    	\end{subfigure}
    	\caption[Gale diagram with regular triangulations and secondary polytope for the Appell $F_1$ function]{Gale diagram with regular triangulations and secondary polytope for the Appell $F_1$ function corresponding to \cref{eq:PtConfigAppell}.}
    \end{figure}

    Furthermore, we can calculate the weights of these triangulations\par  
    {\centering
    \vspace{\baselineskip}
    \begin{tabular}{llp{1.5cm}ll}
        I:   & $(1,2,3,3,2,1)$ & & IV: & $(3,1,2,1,3,2)$ \\
        II:  & $(1,3,2,3,1,2)$ & & V:  & $(2,3,1,2,1,3)$ \\
        III: & $(2,1,3,2,3,1)$ & & VI: & $(3,2,1,1,2,3)$ .
    \end{tabular}
    \vspace{\baselineskip} \par
    }
    \noindent The convex hull of these weights is the secondary polytope, which is a polytope of dimension $2$ in $\mathbb R^6$. On a convenient subspace one obtains the representation shown in \cref{fig:AppellSecPolyFan}. One can see the connection of the secondary fan and the Gale diagram. The secondary fan is the inner normal fan of the secondary polytope, visualized by the arrows. These agree with the Gale diagram. 
    
    However, we want to remark that the considered examples show the simplest, non-trivial case, where $r=N-n-1 = 2$. In these cases the Gale diagram is a diagram in the plane $\mathbb R^2$ and the intersection of cones of subconfigurations of $\Bb$ works out very simply. Thus, for $r=2$ the chambers are also spanned by the Gale diagram. In consequence, point configurations with $r=2$ can have at most $N$ regular triangulations. Furthermore, in that case there are no non-regular triangulations \cite{DeLoeraTriangulations2010}. For point configurations with $r=N-n-1>2$ the intersection of cones is much more involved. The procedure described above works also for those cases. However, the chamber complex can not be read off from the Gale diagram as easy as in the case $r=2$.
\end{example}


\sectionmark{$A$-discriminants, $A$-resultants \& principal $A$-determinants}
\section[\texorpdfstring{$A$}{A}-discriminants, \texorpdfstring{$A$}{A}-resultants and principal \texorpdfstring{$A$}{A}-determinants]{$A$-discriminants, $A$-resultants and\\ principal $A$-determinants} 
\sectionmark{$A$-discriminants, $A$-resultants \& principal $A$-determinants}

\label{sec:ADiscriminantsReultantsPrincipalADets}

We are often interested in the question whether a system of simultaneous polynomial equations 
\begin{align}
    f_0  (x_1,\ldots,x_n) = \ldots = f_n  (x_1,\ldots,x_n) = 0 \label{eq:polysys}
\end{align}
has a solution in a given (algebraically closed) field $\mathbb K$ or if it is inconsistent. That question could be answered in general by calculating the Gröbner basis of the ideal generated by $f_0,\ldots,f_n$. Thus, as a consequence of Hilbert's weak Nullstellensatz, a system of polynomial equations is inconsistent if and only if the corresponding reduced Gröbner basis is equal to $1$. Unfortunately, the calculation of Gröbner bases can be hopelessly complicated and computers fail even in simpler examples. This problem gets even harder if we want to vary the coefficients in the polynomials of \cref{eq:polysys}. Resultants, instead, can answer this question much more efficiently. In general a resultant is a polynomial in the coefficients of the polynomials $f_0,\ldots,f_n$, which vanishes whenever the system \cref{eq:polysys} has a common solution. \bigskip

However, the theory of multivariate resultants comes with several subtleties. We have to distinguish between \textit{classical multivariate resultants} (also known as \textit{dense resultants}) and \textit{(mixed) $A$-resultants} (or \textit{sparse resultants}). The classical multivariate resultant will be applied to $n$ homogeneous polynomials in $n$ variables, where every polynomial consists in all possible monomials of a given degree and detects common solutions in projective space $\mathbb P^{n-1}_{\mathbb K}$. In contrast, the $A$-resultant is usually used, if the polynomials do not consist in all monomials of a given degree. For a system of $n+1$ polynomials in $n$ variables, the $A$-resultant is a custom-made polynomial and reveals common solutions, which are ``mostly'' located in the affine space $(\mathbb C^*)^n$. However, what we accept as a ``solution'' in the latter case is slightly subtle, and we will give a precise definition below. Note, that the classical multivariate resultant is a special case of the $A$-resultant \cite{CoxUsingAlgebraicGeometry2005}. Furthermore, we want to distinguish between the case where all polynomials $f_0,\ldots,f_n$ have the same monomial structure, i.e.\ they all have the same support $A$ and the mixed case where the polynomials $f_0,\ldots,f_n$ have different monomial structure defined by several supports $A_0,\ldots,A_n$. 

Closely related to resultants are discriminants, which determine whether a polynomial $f$ has a multiple root. This is equivalent to ask if there is a solution such that the polynomial $f$ and its first derivatives vanish. Hence, discriminants play also an important role for identifying singular points of algebraic varieties.

In the following we will sketch various key features of the theory of $A$-resultants and $A$-discriminants, which were mainly introduced in a series of articles by Gelfand, Kapranov and Zelevinsky \cite{GelfandAdiscriminantsCayleyKoszulComplexes1990, GelfandDiscriminantsPolynomialsMany1990, GelfandNewtonPolytopesClassical1990, GelfandDiscriminantsPolynomialsSeveral1991} in the study of $A$-hypergeometric functions \cite{GelfandHypergeometricFunctionsToral1989, GelfandGeneralizedEulerIntegrals1990, GelfandGeneralHypergeometricSystems1992} and were collected in \cite{GelfandDiscriminantsResultantsMultidimensional1994}. For an introduction to $A$-resultants as well as the classical multivariate resultants we refer to \cite{CoxUsingAlgebraicGeometry2005} and \cite{SturmfelsSolvingSystemsPolynomial2002}.


\subsection{Mixed \texorpdfstring{$(A_0,\ldots,A_n)$}{(A\unichar{"2080},...,A\unichar{"2099})}-resultants and \texorpdfstring{$A$}{A}-resultants}

The key idea of resultants is to specify coefficients and variables in a system of polynomial equations and eliminate the variables from it. Hence, resultants are a main tool in elimination theory. We will summarize the basic definitions and several properties of the multivariate resultants, which can be found in \cite{PedersenProductFormulasResultants1993, SturmfelsNewtonPolytopeResultant1994, SturmfelsIntroductionResultants1997, SturmfelsSolvingSystemsPolynomial2002, CoxUsingAlgebraicGeometry2005, GelfandDiscriminantsResultantsMultidimensional1994}. \bigskip      

Let $A_0,\ldots, A_{n} \subset \mathbb Z^n$ be finite subsets of the affine lattice $\mathbb Z^n$ and for every set $A_i$ we will consider the corresponding Laurent polynomial 
\begin{align}
    f_i(x)=f_i(x_1,\ldots,x_n) = \sum_{a\in A_i} z^{(i)}_a x^a \in \mathbb C [x_1^{\pm 1}, \ldots, x_n^{\pm 1}]\point
\end{align}
For simplicity, we will assume that the supports $A_0,\ldots,A_n$ jointly generate the affine lattice $\mathbb Z^n$. Furthermore, by $P_i := \Conv(A_i)=\Newt(f_i)$ we denote the Newton polytope of $f_i$. According to \cite{SturmfelsNewtonPolytopeResultant1994, SturmfelsIntroductionResultants1997, SturmfelsSolvingSystemsPolynomial2002}, we will call a configuration $A_0,\ldots,A_n$ \textit{essential} if
\begin{align}
    \dim\!\left( \sum_{j=0}^n P_j \right) = n \qquad \text{and} \qquad  \dim\!\left(\sum_{j\in J} P_j\right) \geq  |J| \quad \text{for every } J \subsetneq\{0,\ldots,n\} \comma \label{eq:essential}
\end{align}
where the sum of polytopes denotes the Minkowski sum and $|J|$ is the cardinality of the proper subset $J$. If all polytopes $P_i$ are $n$-dimensional, the equations in \cref{eq:essential} are trivially satisfied.

In order to define the general resultants, we are interested in the set of coefficients $z_a^{(i)}$ for which there exists a solution of $f_0(x) = \ldots = f_n(x)=0$ in $x\in\Csn$. In other words we consider the following set in $\prod_{i=0}^n \mathbb C^{A_i}$
\begin{align}
    \mathscr Z = \left\{ (f_0,\ldots,f_{n}) \in \prod_i \mathbb C^{A_i} \,\rvert\, \Var(f_0,\ldots,f_{n})\neq\emptyset \text{ in } (\mathbb C^*)^{n} \right\} \subseteq \mathbb \prod_{i=0}^n \mathbb C^{A_i} \point \label{eq:defMixedAresultantsZ}
\end{align}
Furthermore, by $\overline{\mathscr Z}$ we will denote the Zariski closure of $\mathscr Z$. The \textit{mixed $(A_0,\ldots,A_n)$-resultant} $\gls{mixedARes}\in\mathbb Z[\{\{z^{(i)}_a\}_{a\in A_i}\}_{i = 0,\ldots,n}]$ is an irreducible polynomial in the coefficients of the polynomials $f_0,\ldots,f_n$. In case where $\overline{\mathscr Z}$ describes a hypersurface in $\prod_i \mathbb C^{A_i}$ we will define $R_{A_0,\ldots,A_n}(f_0,\ldots,f_n)$ to be the minimal defining polynomial of this hypersurface $\overline{\mathscr Z}$. Otherwise, so if $\codim \overline{\mathscr Z} \geq 2$, we will set $R_{A_0,\ldots,A_n}(f_0,\ldots,f_n)=1$. The mixed $(A_0,\ldots,A_n)$-resultant always exists and is uniquely defined up to a sign, which was shown in \cite{GelfandDiscriminantsResultantsMultidimensional1994}.

Further, we have $\codim \overline{\mathscr Z} = 1$ if and only if there exists a unique subset of $A_0,\ldots,A_n$ which is essential \cite{SturmfelsNewtonPolytopeResultant1994}. In that case the mixed $(A_0,\ldots,A_n)$-resultant coincides with the resultant of that essential subset.\bigskip

One has to remark as a warning, that the mixed $(A_0,\ldots,A_n)$-resultants not only detect common solutions of $f_0 = \ldots = f_n = 0$ inside $x\in\Csn$. Due to the Zariski closure in the definition of the resultants, the mixed $(A_0,\ldots,A_n)$-resultants may also describe solutions outside of $x\in\Csn$, e.g.\ ``roots at infinity''.\bigskip

If all polynomials $f_0,\ldots,f_n$ have the same monomial structure, i.e.\ $A_0 = \ldots = A_{n} =: A$, we will call $\gls{ARes}:=R_{A,A,\ldots,A}(f_0,\ldots,f_{n})$ simply the \textit{$A$-resultant}. The $A$-resultants satisfy a natural transformation law.

\begin{lemma}[Transformation law of $A$-resultants \cite{GelfandDiscriminantsResultantsMultidimensional1994}] \label{lem:trafoAres}
    Consider the polynomials $f_0,\ldots, f_n\in\mathbb C^A$ and let $D$ be an invertible $(n+1)\times (n+1)$ matrix. For the transformation $g_i = \sum_{j=0}^{n} D_{ij} f_j$ for $i=0,\ldots,n$ we have
    \begin{align}
    R_A(g_0,\ldots,g_n) = \det (D)^{\vol (P)} R_A (f_0,\ldots,f_n)
    \end{align}
    where $P = \Conv (A)$.
\end{lemma}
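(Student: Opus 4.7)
The strategy is to use the irreducibility of the $A$-resultant together with the fact that the invertible linear change $f\mapsto Df$ preserves the set of tuples admitting a common zero. This forces the transformation law to be given by a character of $GL_{n+1}(\mathbb C)$, which is necessarily a power of the determinant; the exponent is then pinned down by the known degree of $R_A$ in the coefficients of a single polynomial.

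\textbf{Step 1 (the zero set is invariant).} First I would observe that, because $D$ is invertible, for every $x\in(\mathbb C^*)^n$ one has $g_0(x)=\ldots=g_n(x)=0$ if and only if $f_0(x)=\ldots=f_n(x)=0$, since the row operations defining $g$ and $D^{-1}g=f$ preserve the common vanishing. Hence $(f_0,\ldots,f_n)\in\overline{\mathscr Z}$ if and only if $(g_0,\ldots,g_n)\in\overline{\mathscr Z}$ (the Zariski closure commutes with the linear isomorphism $D:\prod_i\mathbb C^A\to\prod_i\mathbb C^A$).

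\textbf{Step 2 (reduction to a character).} Viewed as a polynomial in the coefficients of the $f_i$'s, the composition $R_A(Df)$ is, by Step 1, an irreducible defining polynomial of the same hypersurface $\overline{\mathscr Z}$ as $R_A(f)$. By uniqueness of the minimal defining polynomial of an irreducible hypersurface up to a scalar, there is a constant $c(D)\in\mathbb C^*$ with
\begin{align}
    R_A(Df)=c(D)\,R_A(f).
\end{align}
Applying this to the composition $D_1D_2$ and using $(D_1D_2)f=D_1(D_2f)$ shows that $c:GL_{n+1}(\mathbb C)\to\mathbb C^*$ is a group homomorphism. Moreover, since the coefficients of $g_i=\sum_jD_{ij}f_j$ are polynomial in the entries of $D$ and of $f$, the quotient $R_A(Df)/R_A(f)$ is polynomial in $D$ (evaluated at any $f$ with $R_A(f)\neq 0$), so $c$ is a polynomial character of the algebraic group $GL_{n+1}$. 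All such characters are non-negative integer powers of the determinant, so $c(D)=\det(D)^{k}$ for some $k\in\mathbb N_0$.

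\textbf{Step 3 (identifying the exponent).} To pin down $k$, I would specialise to $D=\operatorname{diag}(\lambda,1,\ldots,1)$, for which $g_0=\lambda f_0$ and $g_i=f_i$ for $i\geq 1$, while $\det(D)=\lambda$. Since $R_A$ is homogeneous in the coefficients of each $f_i$ separately,
\begin{align}
    R_A(\lambda f_0,f_1,\ldots,f_n)=\lambda^{\deg_{f_0}R_A}\,R_A(f_0,\ldots,f_n),
\end{align}
so $k=\deg_{f_0}R_A$. The classical degree formula for $A$-resultants (a consequence of the Bernstein--Kushnirenko bound applied to the Poisson product expression of $R_A$, see \cite{GelfandDiscriminantsResultantsMultidimensional1994}) gives $\deg_{f_i}R_A=\vol(P)$ for each $i$, completing the proof.

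\textbf{Main obstacle.} The conceptual content of Steps 1--2 is essentially formal; the substance lies in Step 3, namely invoking the known degree $\vol(P)$ of the $A$-resultant in the coefficients of a single $f_i$. Beyond that, one should be a little careful that the polynomiality of $c(D)$ really follows from the manifestly polynomial expression $R_A(Df)$, so that $c$ is indeed a polynomial (not merely rational) character, ruling out negative exponents of $\det$.
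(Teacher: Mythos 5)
Your proof is correct. The paper itself gives no proof of this lemma (it is quoted directly from Gelfand--Kapranov--Zelevinsky), and your argument is essentially the standard one from their book: invariance of $\overline{\mathscr Z}$ under the $GL_{n+1}$-action, uniqueness of the reduced irreducible defining polynomial to get a character $c(D)$, polynomiality of $c$ to force $c=\det^k$ with $k\geq 0$, and the specialisation to $\operatorname{diag}(\lambda,1,\ldots,1)$ to identify $k$ with $\deg_{f_0}R_A=\vol(P)$. Two small remarks: the whole argument presupposes that $\overline{\mathscr Z}$ really is a hypersurface (otherwise $R_A\equiv 1$ and the asserted formula would be false for $\det D\neq 1$), which is guaranteed here by the paper's standing assumption that $A$ affinely generates $\mathbb Z^n$, so that $P$ is full-dimensional and $(A,\ldots,A)$ is its own unique essential subset; and, as you note yourself, all the genuine content sits in the degree formula $\deg_{f_i}R_A=\vol(P)$, which you correctly flag as an external input from \cite{GelfandDiscriminantsResultantsMultidimensional1994} rather than something your argument establishes.
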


Especially, for linear functions $g_0,\ldots,g_n$ with $g_i:= \sum_{j=0}^{n} D_{ij}x_j$ in homogenization, this lemma implies $R_A(g_0,\ldots,g_n) = \det(D)$. This result extends also to all cases, where $A$ forms a simplex \cite{GelfandDiscriminantsResultantsMultidimensional1994}, which we want to demonstrate with an example.

\begin{example}
	Consider the system of polynomials
	\begin{align}
		f &= a_1 x_1^2 + a_2 x_1 x_2 + a_3 \nonumber \\
		g &= b_1 x_1^2 + b_2 x_1 x_2 + b_3 \label{eq:ExampleRes}\\
		h &= c_1 x_1^2 + c_2 x_1 x_2 + c_3 \point \nonumber
	\end{align}
	Since $A = \{(2,0),(1,1),(0,0)\}$ generates a full dimensional polytope the configuration is essential, and we will have $\codim \overline {\mathscr Z} = 1$. By eliminating the variables $x_1,x_2$ from the system $f=g=h=0$ we will obtain the $A$-resultant $R_A(f,g,h)$. Alternatively, we can make use of \cref{lem:trafoAres}. It follows that $R_A(f,g,h) = \det (D)^{\vol (P)} R_A(x_1^2,x_1 x_2,1)$, where $D$ is the coefficient matrix of \cref{eq:ExampleRes}. As the resultant of $x_1^2,x_1x_2,1$ is obviously equal to $1$, the $A$-resultant $R_A(f,g,h)$ is simply given by the determinant of $D$
	\begin{align}
		R_A(f,g,h) = - a_3 b_2 c_1 + a_2 b_3 c_1 + a_3 b_1 c_2 - a_1 b_3 c_2 - a_2 b_1 c_3 + a_1 b_2 c_3 \point
	\end{align}

\end{example}


\subsection{\texorpdfstring{$A$}{A}-discriminants} \label{ssec:Adiscriminants}

Closely related to the $A$-resultant is the so-called $A$-discriminant. For a given polynomial $f\in\mathbb C[x_1^{\pm 1},\ldots,x_n^{\pm 1}]$ it describes when the hypersurface $\{f=0\}$ is singular. Equivalently, the $A$-discriminant determines whether $f$ has multiple roots. Let $A\subset \mathbb Z^n$ be the support of the polynomial $f(x) = \sum_{a\in A} z_a x^a$ and consider
\begin{align}
    \nabla_0 = \left\{ f\in\mathbb C^A \,\rvert\, \Var\!\left(f,\frac{\partial f}{\partial x_1},\ldots,\frac{\partial f}{\partial x_n}\right) \neq\emptyset \text{ in } (\mathbb C^*)^n \right\} \subseteq \mathbb C^A \label{eq:Adisc}
\end{align}
the set of polynomials $f\in\mathbb C^A$ for which there exists a solution $x \in (\mathbb C^*)^n$ such that $f$ and its first derivatives vanish simultaneously. In analogy to the $A$-resultant, if the Zariski closure of $\nabla_0$ has codimension $1$, we will set the \textit{$A$-discriminant} $\gls{ADisc}\in\mathbb Z [\{z_a\}_{a\in A}]$ of $f$ as the minimal defining polynomial of the hypersurface $\overline \nabla_0$. For higher codimensions $\operatorname{codim}\!\left(\overline{\nabla}_0\right) > 1$ we will fix the $A$-discriminant to be $1$. Configurations $A$, which having $\Delta_A(f)=1$ are called \textit{defective}. Combinatorial criteria of defective configurations can be found in \cite{EsterovNewtonPolyhedraDiscriminants2010, CurranRestrictionADiscriminantsDual2006, DickensteinTropicalDiscriminants2007}. By definition, the $A$-discriminant is an irreducible polynomial in the coefficients $\{z_a\}_{a\in A}$, which is uniquely determined up to a sign \cite{GelfandDiscriminantsResultantsMultidimensional1994}. 
\bigskip

\begin{example} \label{ex:cubicDisc}
    Consider the cubic polynomial in one variable $f = z_0 + z_1 x + z_2 x^2 + z_3 x^3$ with its support $A = \{0,1,2,3\}$. Its $A$-discriminant is given (up to a sign) by
    \begin{align}
        \Delta_A(f) = z_1^2 z_2^2 - 4 z_0 z_2^3 - 4 z_1^3 z_3 + 18 z_0 z_1 z_2 z_3 - 27 z_0^2 z_3^2
    \end{align}
    which can be calculated either by eliminating $x$ from $f(x) = \pd{f(x)}{x} = 0$ or by the use of a convenient mathematical software program e.g.\ \softwareName{Macaulay2} \cite{GraysonMacaulay2SoftwareSystem} with additional libraries \cite{StaglianoPackageComputationsClassical2018, StaglianoPackageComputationsSparse2020}. Thus, the equations $f(x) = \pd{f(x)}{x} = 0$ have a common solution for $x\neq 0$ if and only if $\Delta_A(f) = 0$.
\end{example}

However, many polynomials share the same $A$-discriminant. Consider two finite subsets $A\subset\mathbb Z^n$ and $A'\subset\mathbb Z^m$, which are related by an injective, affine transformation $T:\mathbb Z^n \rightarrow \mathbb Z^m$ with $T(A)=A'$. Then, the corresponding transformation of $T$ connects also $\Delta_A$ with $\Delta_{A'}$, which was shown in \cite{GelfandDiscriminantsResultantsMultidimensional1994}. Thus, the $A$-discriminant only depends on the affine geometry of $A$. For example consider a configuration $\Aa\subset\mathbb Z^{n+1}$ which generates a homogeneous polynomial $\tilde f\in\mathbb C^\Aa$ and another configuration $A\subset \mathbb Z$, that arises when removing the first entries of $\Aa$. Therefore, we have the dehomogenization map
\begin{align}
    \mathbb C^{\Aa} \rightarrow \mathbb C^A, \qquad \tilde f(x_0,\ldots,x_n) \mapsto f(x_1,\ldots,x_n) = \tilde f(1,x_1,\ldots,x_n) \point
\end{align}
This map is affine and injective, and we can identify both discriminants $\Delta_{\Aa}(\tilde f)=\Delta_A (f)$. Similarly, we obtain for a finite subset $A\subset \mathbb Z^n$ and its homogenization $\Aa\subset\mathbb Z^{n+1}$ the same discriminants.\bigskip

By definition \cref{eq:Adisc} it can be seen, that $\Delta_A(f)$ has to be a homogeneous polynomial. Additionally, $\Delta_A(f)$ is even quasi-homogeneous for any weight defined by a row of $A$ \cite{GelfandDiscriminantsResultantsMultidimensional1994}.  Removing these homogenities leads us to the \textit{reduced $A$-discriminant} $\gls{redADisc}$. Let $\Aa\subset\mathbb Z^{n+1}$ be the homogenization of the support $A$ and $\Bb\in\Gale(\Aa)$ a Gale dual of $\Aa$. Then we can introduce ``effective'' variables\glsadd{effectiveVars}\glsunset{effectiveVars}
\begin{align}
    y_j = \prod_{i=1}^N z_i^{b_{ij}} \qquad\text{for}\quad j=1,\ldots, r \label{eq:effectiveVarsADisc}
\end{align}
where $b_{ij}$ denotes the elements of the Gale dual $\Bb$. The $A$-discriminant can always be rewritten as  $\Delta_A(f) = z^{\Lambda} \Delta_\Bb(f)$, where the reduced $A$-discriminant $\Delta_\Bb(f)$ is an inhomogeneous polynomial in the effective variables $y_1,\ldots,y_r$ and $\Lambda\in\mathbb Z^N$ defines a factor $z^\Lambda$. We will usually choose the smallest $\Lambda$ such that $\Delta_\Bb$ is a polynomial.\bigskip

\begin{example}[Continuation of \cref{ex:cubicDisc}] \label{ex:cubicGale}
    We will consider the cubic polynomial in one variable from \cref{ex:cubicDisc}. As a possible choice for the Gale dual $\Bb$ of the homogenized point configuration $\Aa\subset\mathbb Z^2$ we will use
    \begin{align}
        \mathcal B = \begin{pmatrix}
                    1 & 2 \\
                    -2 & -3\\
                    1 & 0\\
                    0 & 1
                \end{pmatrix}\text{ ,} \qquad y_1= \frac{z_0 z_2}{z_1^2}, \qquad y_2 = \frac{z_0^2 z_3}{z_1^3} \point \label{eq:exEffective}
    \end{align}
    Thus, we can rewrite the $A$-discriminant 
    \begin{align}
        \Delta_A(f) = \frac{z_1^6}{z_0^2} \left(27 y_2^2 + 4 y_1^3 + 4 y_2 - y_1^2 - 18 y_1y_2\right) =  \frac{z_1^6}{z_0^2} \Delta_\Bb(f)
    \end{align}
    as a reduced discriminant $\Delta_\Bb(f)\in\mathbb Z[y_1,y_2]$.
\end{example}

Except for special cases, where $A$ forms a simplex or a circuit \cite{GelfandDiscriminantsResultantsMultidimensional1994}, the determination of the $A$-discriminant can be a very intricate issue for bigger polynomials and calculations quickly get out of hand. Fortunately, there is an indirect description of $A$-discriminants which was invented by Kapranov \cite{KapranovCharacterizationAdiscriminantalHypersurfaces1991} with slight adjustments in \cite{CuetoResultsInhomogeneousDiscriminants2006}. This so called \textit{\HKP} states a very efficient way to study discriminants. Let $\mathcal S\subset(\mathbb C^*)^r$ be the hypersurface defined by the reduced $A$-discriminant $\{\Delta_\Bb(f)=0\}$. Then this hypersurface $\mathcal S$ can be parameterized by the map $\psi :\mathbb P^{r-1}_{\mathbb C} \rightarrow (\mathbb C^*)^r$, where $\psi$ is given by
\begin{align}
    \gls{HKP} = \left( \prod_{i=1}^N \left(\sum_{j=1}^r b_{ij} t_j \right)^{b_{i1}}, \ldots, \prod_{i=1}^N \left(\sum_{j=1}^r  b_{ij} t_j \right)^{b_{ir}}\right) \label{eq:HKPpsi}
\end{align}
and $b_{ij}$ are again the elements of a Gale dual $\mathcal B$ of $\Aa$. Hence, we can give an implicit representation of the $A$-discriminant very quickly only by knowing a Gale dual.

\begin{example}[Continuation of \cref{ex:cubicGale}] \label{ex:cubicParameterization}
    For the example of the cubic polynomial in one variable, we obtain with the Gale dual from \cref{eq:exEffective}
    \begin{align}
        \psi [t_1 : t_2] = \left( \frac{t_1 + 2 t_2}{(2t_1+3t_2)^2} t_1 , - \frac{(t_1 + 2 t_2)^2}{(2t_1+3t_2)^3} t_2\right) \point
    \end{align}
    Since $[t_1 : t_2]$ are homogeneous coordinates, only defined up to multiplication, we can set without loss of generality $t_2=1$. Hence, the statement of \HKP is, that for every $t_1\in\mathbb C$ we can identify
    \begin{align}
        y_1 = \frac{t_1 + 2 }{(2t_1+3)^2} t_1, \qquad y_2 = - \frac{(t_1 + 2)^2}{(2t_1+3)^3}
    \end{align}
    as the points characterizing the hypersurface $\Delta_\Bb(f)(y_1,y_2) = 0$ or equivalently the hypersurface $\Delta_A(f)(z_0,z_1,z_2,z_3)=0$ by means of the relations \cref{eq:exEffective}.
\end{example}

Moreover, Kapranov showed \cite{KapranovCharacterizationAdiscriminantalHypersurfaces1991}, that the map $\psi$ is the inverse of the (logarithmic) Gauss map, which is defined for an arbitrary hypersurface $\mathcal S_g = \{y\in(\mathbb C^*)^r \, |\, g(y)=0\}$ as $\gamma : (\mathbb C^*)^r \rightarrow \mathbb P^{r-1}_{\mathbb C}$, with $\gamma(y) = [ y_1 \partial_1 g(y) : \ldots : y_r \partial_r g(y)]$ for all regular points of $\mathcal S_g$. It is a remarkable fact, that all hypersurfaces $S_g$, which have a birational Gauss map are precisely those hypersurfaces defined by reduced $A$-discriminants \cite{KapranovCharacterizationAdiscriminantalHypersurfaces1991, CuetoResultsInhomogeneousDiscriminants2006}. \bigskip

To conclude this section, we want to mention the relation between $A$-discriminants and the mixed $(A_0,\ldots,A_n)$-resultants, which is also known as Cayley's trick.
\begin{lemma}[Cayley's trick \cite{GelfandDiscriminantsResultantsMultidimensional1994}] \label{lem:CayleysTrick}
    Let $A_0,\ldots,A_{n}\subset \mathbb Z^n$ be finite subsets jointly generating $\mathbb Z^n$ as an affine lattice. By $f_i\in\mathbb C^{A_i}$ we denote the corresponding polynomials of the sets $A_i$. Then we have
    \begin{align}
        R_{A_0,\ldots,A_{n}} (f_0,\ldots,f_{n}) = \Delta_A \!\left(f_{0}(x) + \sum_{i=1}^n y_i f_i(x)\right)
    \end{align}
    where $A\subset \mathbb Z^{2n}$ is defined as the support of the polynomial $f_{0}(x) + \sum_{i=1}^n y_i f_i(x)\in \mathbb C[x_1^{\pm 1},\ldots,x_{n}^{\pm 1},y_1,\ldots,y_n]$.        
\end{lemma}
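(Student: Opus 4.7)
The plan is to verify the two irreducible polynomials coincide by showing they define the same hypersurface in a common coefficient space. After identifying $\mathbb C^A \cong \prod_{i=0}^n \mathbb C^{A_i}$ -- which is canonical because the support $A\subset\mathbb Z^{2n}$ consists of the distinct lattice points $(a,0)$ for $a\in A_0$ together with $(a,e_i)$ for $a\in A_i$ and $i=1,\ldots,n$, so the coefficients of $F(x,y)=f_0(x)+\sum_i y_i f_i(x)$ are in bijection with the coefficients of the tuple $(f_0,\ldots,f_n)$ -- both objects live on the same affine space and the claim becomes an equality of varieties. Equality of the minimal irreducible defining polynomials then follows up to a sign, which is fixed by the standard normalisation conventions.

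First I would compute the partial derivatives of $F$, obtaining $\partial_{y_i}F=f_i(x)$ for $i=1,\ldots,n$ and $\partial_{x_j}F=\partial_{x_j}f_0+\sum_i y_i\,\partial_{x_j}f_i$ for $j=1,\ldots,n$. The discriminantal condition $F\in\nabla_0$ demands the existence of a point $(x,y)\in(\mathbb C^*)^{2n}$ making $F$ and all its first derivatives vanish. The equations $\partial_{y_i}F=0$ force $f_1(x)=\cdots=f_n(x)=0$, and then $F(x,y)=0$ collapses to $f_0(x)=0$. Hence $x\in(\mathbb C^*)^n$ is a common zero of all the $f_i$, i.e.\ $(f_0,\ldots,f_n)\in\mathscr Z$ in the notation of \cref{eq:defMixedAresultantsZ}. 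Passing to Zariski closures gives $\overline{\nabla_0}\subseteq\overline{\mathscr Z}$.

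For the reverse inclusion, given $(f_0,\ldots,f_n)\in\mathscr Z$ with common zero $x_0\in(\mathbb C^*)^n$, I would solve the linear system $\sum_{i=1}^n y_i\,\partial_{x_j}f_i(x_0)=-\partial_{x_j}f_0(x_0)$ for $y_0\in\mathbb C^n$. On a dense open subset of $\mathscr Z$ the corresponding $n\times n$ matrix is invertible and the unique solution $y_0$ has all components nonzero, so $(x_0,y_0)\in(\mathbb C^*)^{2n}$ is a singular point of $\{F=0\}$ and $F\in\nabla_0$. Taking closures yields $\overline{\mathscr Z}\subseteq\overline{\nabla_0}$, so the two hypersurfaces coincide and the statement follows.

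The main obstacle is the openness/density argument in the last paragraph: one must verify that the simultaneous conditions ``Jacobian invertible at $x_0$'' and ``all components of $y_0$ are nonzero'' cut out a non-empty open subset of $\mathscr Z$. This reduces to exhibiting a single tuple in $\mathscr Z$ realising both conditions, which is possible precisely when $\overline{\mathscr Z}$ is a hypersurface -- equivalently, when the configuration $(A_0,\ldots,A_n)$ contains an essential subfamily in the sense of \cref{eq:essential}. If no essential subfamily exists then $\overline{\mathscr Z}$ has codimension $\geq 2$ and a parallel analysis shows the same is true for $\overline{\nabla_0}$, so both sides equal $1$ and the identity holds trivially.
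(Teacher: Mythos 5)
The paper does not actually prove this lemma---it is quoted from Gelfand--Kapranov--Zelevinsky without proof---so your attempt can only be measured against the standard argument, and your overall strategy (identify $\mathbb C^A$ with $\prod_i\mathbb C^{A_i}$, observe that the critical equations of $F=f_0+\sum_iy_if_i$ decouple into the common-zero conditions $f_i(x)=0$ plus a linear system for $y$) is exactly that argument. The forward inclusion $\overline{\nabla_0}\subseteq\overline{\mathscr Z}$ is clean; note also that it alone settles the degenerate case, since $\codim\overline{\mathscr Z}\geq 2$ then forces $\codim\overline{\nabla_0}\geq 2$ and both sides equal $1$, so no separate ``parallel analysis'' is needed there.

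The genuine gap is in the reverse inclusion, and it is more serious than your closing paragraph suggests. That the solution $y_0$ of the linear system lies in $(\mathbb C^*)^n$ on a dense subset of $\mathscr Z$ is not a routine genericity check: it requires the \emph{full} family $(A_0,\ldots,A_n)$ to be essential, whereas $\overline{\mathscr Z}$ being a hypersurface only requires a unique, possibly \emph{proper}, essential subfamily; in that intermediate case the lifting fails identically and so does the stated identity. Concretely, take $n=2$, $f_0=a_0+a_1x_1$, $f_1=b_0+b_1x_1$, and $f_2$ generic with support $\{0,1\}^2$. Then $\{A_0,A_1\}$ is the unique essential subfamily, $\overline{\mathscr Z}=\{a_0b_1-a_1b_0=0\}$ is a hypersurface and $R_{A_0,A_1,A_2}=a_0b_1-a_1b_0$; but the equation $\partial_{x_2}F=y_2\,\partial_{x_2}f_2(x_0)=0$ forces $y_2=0$ at a generic common zero (since $f_0,f_1$ do not involve $x_2$), so no critical point of $F$ lies in the torus above a generic point of $\mathscr Z$. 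One checks that $\overline{\nabla_0}$ is then contained in a proper subvariety of $\overline{\mathscr Z}$ of codimension at least $2$, whence $\Delta_A=1\neq R_{A_0,A_1,A_2}$. So under the lemma's stated hypothesis (the $A_i$ merely generate $\mathbb Z^n$ jointly) the identity is false, and your proof cannot be completed; the missing hypothesis is essentialness of the whole family in the sense of \cref{eq:essential}. Under that hypothesis your plan does go through: generically any $n$ of the $n+1$ gradients $\nabla f_0(x_0),\ldots,\nabla f_n(x_0)$ are linearly independent, which yields both the invertibility of the matrix and $(y_0)_i\neq 0$ for every $i$, and the cleanest way to make ``generically'' precise is to work on the irreducible incidence variety $\{(x,(f_0,\ldots,f_n)) : f_i(x)=0\ \forall i\}$, where these are open conditions verified by a single witness.
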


Thus, every mixed $(A_0,\ldots,A_n)$-resultant can always be reduced to the case of an $A$-discriminant. Hence, the $A$-discriminant is the more general object than the resultant.


\subsection{Principal \texorpdfstring{$A$}{A}-determinants} \label{ssec:principalAdet}

The last object we want to introduce from the book of Gelfand, Kapranov and Zelevinsky \cite{GelfandDiscriminantsResultantsMultidimensional1994} is the principal $A$-determinant, which is a special $A$-resultant. Once again, we consider a finite subset $A\subset \mathbb Z^n$, and we will assume for the sake of simplicity that $\Aff_{\mathbb Z} (A) = \mathbb Z^n$. By $f=\sum_{a\in A} z_a x^a\in\mathbb C^A$ we denote the polynomial corresponding to $A$. The \textit{principal $A$-determinant} is then defined as the following $A$-resultant\footnote{A rather subtle point about this definition is the order of evaluation. A generic $A$-resultant $R_A(g_0, \ldots ,g_n)$ is determined first and the corresponding polynomials $g_0=f, g_1 = x_1 \pd{f}{x_1}, \ldots, g_n = x_n \pd{f}{x_n}$ are inserted second. I would like to thank Simon Telen for pointing this out to me.}
\begin{align}
    \gls{pAdet} := R_A\!\left(f,x_1 \pd{f}{x_1}, \ldots,x_n \pd{f}{x_n}\right) \in \mathbb Z [\{z_a\}_{a\in A}] \point \label{eq:principalAdetResultant}
\end{align}
Thus, the principal $A$-determinant is a polynomial with integer coefficients depending on $\{z_a\}_{a\in A}$, which is uniquely determined up to a sign \cite{GelfandDiscriminantsResultantsMultidimensional1994}. It indicates when the system of polynomial equations $f = x_1 \pd{f}{x_1} = \ldots = x_n \pd{f}{x_n}=0$ has a common solution.\bigskip

In the application of $\Aa$-hypergeometric functions and also for Feynman integrals, it will turn out, that the principal $A$-determinant will be the central object, when considering the analytic structure. We will consider the analytic structure, which is also known as Landau variety for the Feynman integral, more in detail in \cref{ch:singularities}. In this section we will focus on two main properties of the principal $A$-determinant, that will be crucial when applying them to the analytical structure.

The first property, we want to discuss here, is the decomposition of the principal $A$-determinant into a product of several $A$-discriminants.

\begin{theorem}[{Prime factorization of principal $A$-determinant \cite[ch. 10, thm. 1.2]{GelfandDiscriminantsResultantsMultidimensional1994}}] \label{thm:pAdet-factorization}
    The principal $A$-determinant can be written as a product of $A$-discriminants
    \begin{align}
        E_A(f) = \pm \prod_{\tau \subseteq \operatorname{Newt} (f)} \Delta_{A\cap\tau} (f_\tau)^{\mu(A,\tau)} \comma \label{eq:pAdet-factorization}
    \end{align}
    where the product is over all faces $\tau$ of the Newton polytope $\Newt (f)$ and $ \mu(A,\tau) \in\mathbb N_{> 0}$ are certain integers, called multiplicity of $A$ along $\tau$. The exact definition of the multiplicities is not crucial for the following, which is why we refer to \cite{GelfandDiscriminantsResultantsMultidimensional1994, ForsgardZariskiTheoremMonodromy2020} at this point.
\end{theorem}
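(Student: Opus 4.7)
The plan is to analyze the zero locus $\Var(E_A(f)) \subseteq \mathbb{C}^A$ by identifying all its irreducible components and matching them to faces of $\Newt(f)$. By definition $E_A(f)$ is the minimal polynomial (up to sign) cutting out the Zariski closure of those coefficient vectors $(z_a)_{a\in A}$ for which the \emph{toric critical system} $f = x_1 \partial_1 f = \cdots = x_n \partial_n f = 0$ admits a common root in $(\mathbb{C}^*)^n$. The decomposition \cref{eq:pAdet-factorization} arises because, after compactifying, such a solution can either remain in the open torus (contributing $\Delta_A(f)$ itself) or degenerate toward a toric boundary stratum indexed by a proper face $\tau$.

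First I would embed the situation into a projective toric variety $X$ whose fan refines the inner normal fan of $\Newt(f)$, so that $f$ extends to a section of a globally generated line bundle on $X$. The torus orbits of $X$ are in bijection with the faces of $\Newt(f)$, and on the orbit $O_\tau$ corresponding to $\tau$ the extension of $f$ restricts to the truncated polynomial $f_\tau$ defined in \cref{eq:truncatedPolynomialDefinition}; likewise, the logarithmic derivatives $x_i \partial_i f$ restrict to $x_i \partial_i f_\tau$, because only the monomials with exponents on $\tau$ survive the degeneration.

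Second, I would establish divisibility: each $\Delta_{A\cap\tau}(f_\tau)$ divides $E_A(f)$. If $f_\tau$ has a singular root on the torus of $O_\tau$, then a one-parameter degeneration inside $X$ produces a family of coefficient vectors in $\mathbb{C}^A$, converging to the given one, at which the toric critical system has solutions in $(\mathbb{C}^*)^n$ approaching the boundary point. Hence the Zariski closure $\Var(E_A(f))$ contains the hypersurface $\Var(\Delta_{A\cap\tau}(f_\tau))$ pulled back along $\mathbb{C}^A \twoheadrightarrow \mathbb{C}^{A\cap\tau}$. Since $\Delta_{A\cap\tau}(f_\tau)$ is irreducible (or trivially $1$ in the defective case), it must appear in the prime factorization of $E_A(f)$ with some positive multiplicity $\mu(A,\tau)$.

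Third, I would show no further prime factors occur and pin down $\mu(A,\tau)$. The former follows from a compactness argument on $X$: any family of common solutions of the toric critical system has a limit lying in some orbit $O_\tau$, which forces the limiting coefficient vector into one of the face discriminantal loci. The multiplicities are then extracted by a local intersection calculation at a generic point of $\Var(\Delta_{A\cap\tau}(f_\tau))$, controlling the order of contact of the critical scheme with the stratum. The main obstacle is precisely this multiplicity count: while the geometric decomposition into components is fairly transparent, computing $\mu(A,\tau)$ explicitly requires a careful analysis of how the critical system degenerates along $O_\tau$, and this is where GKZ invoke Cayley's trick (\cref{lem:CayleysTrick}) together with a toric resolution argument; an additional subtlety is that for defective pairs $(A,\tau)$ the factor $\Delta_{A\cap\tau}(f_\tau) = 1$ drops out of the product, so the multiplicity is only meaningful on the nondefective strata.
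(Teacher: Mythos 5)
The paper does not actually prove this theorem: it is imported verbatim from GKZ \cite[ch.~10, thm.~1.2]{GelfandDiscriminantsResultantsMultidimensional1994}, so there is no in-paper argument to compare against. Judged against the known GKZ proof, your sketch correctly captures the geometric mechanism: the toric compactification whose orbits are indexed by the faces of $\Newt(f)$, the fact that $f$ and its Euler derivatives restrict on the orbit $O_\tau$ to $f_\tau$ and its Euler derivatives (truncation commutes with $x_i\partial_i$ since these act diagonally on monomials), and properness forcing every degenerating family of torus solutions of the critical system onto some orbit. This is enough to obtain the set-theoretic statement $\Var(E_A(f))=\bigcup_\tau \Var\!\left(\Delta_{A\cap\tau}(f_\tau)\right)$, which is all the paper actually uses (it immediately passes to the simple principal $A$-determinant $\widehat E_A(f)$, discarding the multiplicities).

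Two steps of your sketch are genuinely thinner than you present them. First, in the divisibility direction, a common zero of the critical system lying on a boundary orbit does not by itself place the coefficient vector in $\overline{\mathscr Z}$, since $\mathscr Z$ is defined by solutions in the \emph{open} torus; you still need that a singular point of $f_\tau$ on $O_\tau$ deforms to a singular point of $f$ in $(\mathbb C^*)^n$ for nearby coefficients, and this deformation argument is the substantive content of the inclusion $\Var(\Delta_{A\cap\tau}(f_\tau))\subseteq\Var(E_A(f))$. Second, and more seriously, the multiplicities $\mu(A,\tau)$ are not obtained in GKZ by a local intersection count at a generic point of the face-discriminantal stratum; they compute $E_A(f)$ as the determinant of the Cayley--Koszul complex and extract the factorization, with $\mu(A,\tau)$ equal to the multiplicity of the toric variety $X_A$ along the orbit closure $\overline{O_\tau}$, by localizing that complex along the toric strata. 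Your proposed route to the multiplicities is plausible in principle but is precisely the hard part of the theorem and is not carried out. For the purposes of this thesis the set-theoretic version suffices, so modulo the deformation step your sketch would support everything the paper does with \cref{thm:pAdet-factorization}, but it does not reconstruct the full statement with exponents.
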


Most often we are only interested in the roots of the principal $A$-determinant. Therefore, we want to define a \textit{simple principal $A$-determinant} according to \cite{ForsgardZariskiTheoremMonodromy2020} where all multiplicities $\mu(A,\tau)$ in \cref{eq:pAdet-factorization} are set to $1$
\begin{align}
    \gls{spAdet} = \pm \prod_{\tau \subseteq \operatorname{Newt} (f)} \Delta_{A\cap\tau} (f_\tau)
\end{align}
which generates the same variety as $E_A(f)$. \bigskip

\begin{example}[principal $A$-determinants of homogeneous polynomials] 
    To illustrate the principal $A$-determinant we will recall an example from \cite{GelfandDiscriminantsResultantsMultidimensional1994}. Let $\tilde f\in\mathbb C[x_0,\ldots,x_n]$ be a homogeneous polynomial consisting in all monomials of a given degree $d\geq 1$. The support of $\tilde f$ will be called $\Aa\subset\mathbb Z^{n+1}$ and its Newton polytope $\Newt(\tilde f)\subset \mathbb R^{n+1}$ is an $n$-dimensional simplex, having $2^{n+1}-1$ faces. Moreover, the faces $\tau\subseteq\Newt(\tilde f)$ are generated by all non-empty subsets of the set of vertices $\{0,\ldots,n\}$ and one can show that all multiplicities $\mu(\Aa,\tau)$ are equal to one \cite{GelfandDiscriminantsResultantsMultidimensional1994}. Thus, we get
    \begin{align}
        E_\Aa(\tilde f) = \pm \prod_{\emptyset\neq \tau \subseteq \{0,\ldots,n\}} \Delta_{\Aa\cap \tau} (\tilde f_{\tau}) \label{eq:pAdetHom} \point
    \end{align}
    Note that in this particular example, we can write $\tilde f_\tau$ as the polynomial $\tilde f$ with all variables $x_i=0$ set to zero that do not belong to $\tau$, i.e.\ $\tilde f_{\tau} (x) = \tilde f(x)|_{x_i=0, i\notin \tau}$. That decomposition is exactly the behaviour we would naively expect in the polynomial equation system
    \begin{align}
        \tilde f (x) = x_0 \frac{\partial \tilde f(x)}{\partial x_0} = \ldots =  x_n \frac{\partial \tilde f(x)}{\partial x_n} = 0 \point \label{eq:expAdetPolySystem}
    \end{align}
    That is, in \cref{eq:expAdetPolySystem} we can consider each combination of how variables $x_0,\ldots,x_n$ can vanish separately. However, it should be remarked that this behaviour is not true in general since multivariate division is not necessarily unique. Hence, when not considering simplices, we have rather to take the truncated polynomials into account as described in \cref{thm:pAdet-factorization}. \bigskip
\end{example}

Another noteworthy result of Gelfand, Kapranov and Zelevinsky is the connection between the principal $A$-determinant and the triangulations of $A$.
\begin{theorem}[Newton polytopes of principal $A$-determinants \cite{GelfandNewtonPolytopesClassical1990, SturmfelsNewtonPolytopeResultant1994, GelfandDiscriminantsResultantsMultidimensional1994}] \label{thm:NewtSec}
    The Newton polytope of the principal $A$-determinant and the secondary polytope coincide
    \begin{align}
        \Newt(E_A(f)) = \Sigma(A) \point
    \end{align}
    Further, if $\Tt$ is a regular triangulation of $A$, then the coefficient of the monomial $\prod_{a\in A} z_a^{\varphi_\Tt(a)}$ in $E_A(f)$ is equal to
    \begin{align}
        \pm \prod_{\sigma\in \hatT} \vol(\sigma)^{\vol(\sigma)} \point
    \end{align}
    And also the relative signs between the coefficients in $E_A(f)$ can be determined \cite[chapter 10.1G]{GelfandDiscriminantsResultantsMultidimensional1994}.
\end{theorem}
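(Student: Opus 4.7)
The plan is to identify the vertices of $\Newt(E_A(f))$ through a one-parameter tropical degeneration, match them with the weights $\varphi_\Tt(A)$ of regular triangulations of $A$, and compute the associated coefficients using the prime factorization of \cref{thm:pAdet-factorization}.

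First I would translate the claim about the Newton polytope into a statement about initial forms: a lattice point $m=(m_a)_{a\in A}$ is a vertex of $\Newt(E_A(f))$ if and only if there exists a weight $\omega\in\mathbb R^N$ such that $\initial_\omega(E_A(f))$ is a nonzero scalar multiple of $\prod_{a\in A}z_a^{m_a}$. To compute this initial form for generic $\omega$, substitute $z_a\mapsto t^{\omega_a}z_a$ in $f$, obtaining a one-parameter family $f_t$, and extract the leading term in $t$ of $E_A(f_t)$, viewed through the resultant expression $E_A(f_t)=R_A(f_t,x_1\partial_1 f_t,\ldots,x_n\partial_n f_t)$. By construction the lifted support of $f_t$ projects to the regular subdivision $\Ss(\Aa,\omega)$, so for $\omega$ in the relative interior of the secondary cone $\Sigma C(\Aa,\Tt)$ of a regular triangulation $\Tt$ this subdivision coincides with $\Tt$.

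For such generic $\omega$, the next step is to argue that the system $f_t=x_1\partial_1 f_t=\ldots=x_n\partial_n f_t=0$ factorizes, to leading order in $t$, into decoupled subsystems whose supports are the individual maximal simplices $\sigma\in\hatT$. Combining the factorization of \cref{thm:pAdet-factorization} with Cayley's trick (\cref{lem:CayleysTrick}) to reduce each resultant to a discriminant problem, the contribution to $E_A(f_t)$ becomes a product over $\sigma\in\hatT$. Each simplicial factor is then computed via the transformation law of \cref{lem:trafoAres}, which reduces the simplex case to a model polynomial with an explicit determinantal resultant, yielding a factor $\vol(\sigma)^{\vol(\sigma)}\prod_{a\in\sigma}z_a^{\vol(\sigma)}$ (the power $\vol(\sigma)$ arising as the lattice index $|\det\Aa_\sigma|$ from the transformation law). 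Multiplying over $\sigma\in\hatT$ and using the identity $\varphi_\Tt(a)=\sum_{\sigma\ni a}\vol(\sigma)$ gives exactly the monomial $\prod_a z_a^{\varphi_\Tt(a)}$ with coefficient $\pm\prod_{\sigma\in\hatT}\vol(\sigma)^{\vol(\sigma)}$, proving the coefficient formula at every regular triangulation.

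The main obstacle is the reverse inclusion: showing that no further vertices of $\Newt(E_A(f))$ occur. The plan is to observe that every weight $\omega\in\mathbb R^N$ lies in the closure of some relatively open secondary cone, so the analysis above exhausts all possible initial monomials of $E_A(f)$ at generic weights; combined with the bijection between regular triangulations of $A$ and vertices of the secondary polytope $\Sigma(A)$, this forces the vertex sets of $\Newt(E_A(f))$ and $\Sigma(A)$ to agree, whence the two polytopes coincide. The relative signs between coefficients are then tracked by following the signs of $\det\Aa_\sigma$ and of the multiplicities $\mu(A,\tau)$ in the prime factorization of $E_A(f)$.
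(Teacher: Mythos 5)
The thesis does not actually prove \cref{thm:NewtSec}; it is quoted from Gelfand--Kapranov--Zelevinsky and Sturmfels without proof, so there is no in-paper argument to compare yours against. Judged on its own, your proposal reproduces the correct high-level strategy of the GKZ proof (one-parameter degeneration $z_a\mapsto t^{\omega_a}z_a$, identification of vertices of $\Newt(E_A(f))$ with initial monomials for generic $\omega$, reduction of each simplex to the determinantal case, and the closing argument that generic weights sweep out all vertices). The logical skeleton of the "reverse inclusion" is also fine.

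However, there is a genuine gap at the crux of the argument: the claim that the leading term in $t$ of $E_A(f_t)$ factors into decoupled contributions, one for each maximal simplex $\sigma\in\hatT$. You justify this by "combining the factorization of \cref{thm:pAdet-factorization} with Cayley's trick," but \cref{thm:pAdet-factorization} decomposes $E_A(f)$ over the \emph{faces of $\Newt(f)$}, which have nothing to do with the maximal cells of a triangulation $\Tt$ of $A$; it cannot produce a product indexed by $\hatT$. In the actual GKZ proof this step is the substantial one: it requires realizing $E_A(f)$ (up to a monomial factor) as the Chow form / principal determinant of the toric variety $X_A$, degenerating $X_A$ along $\omega$ into a union of toric varieties $X_{A\cap\sigma}$ with multiplicities given by lattice indices, and invoking the multiplicativity of the associated hypersurface under flat degeneration into components. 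None of that machinery is present or replaced in your sketch, so the coefficient formula $\pm\prod_{\sigma\in\hatT}\vol(\sigma)^{\vol(\sigma)}$ and, with it, the identification of the vertex set of $\Newt(E_A(f))$ are not actually established. The per-simplex computation itself (reducing to $R_A(x^{a^{(0)}},\ldots,x^{a^{(n)}})$ via \cref{lem:trafoAres} and extracting $|\det\Aas|=\vol(\sigma)$) is the right idea, but it only becomes usable once the decoupling has been proven.
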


Thus, by the knowledge of all regular triangulations we can approximate the form of the principal $A$-determinant. As \cref{thm:NewtSec} gives us the extreme monomials we can make a suitable ansatz for the principal $A$-determinant. The unknown coefficients of the monomials corresponding to potential interior points of the Newton polytope $\Newt(E_A(f))$ can be determined then by \HKp.

Further, the theorem explains the typical appearing coefficients\footnote{Based on that theorem, Gelfand, Kapranov and Zelevinsky \cite{GelfandDiscriminantsResultantsMultidimensional1994} speculate on a connection between discriminants and probabilistic theory. Their starting point for this consideration are the ``entropy-like'' expressions as $\prod v_i^{v_i}=e^{\sum v_i \log v_i}$ for the coefficients of the principal $A$-determinant, as well as in the \HKp. In the latter case we can define an ``entropy'' $S := \ln\!\left (\prod_{l=1}^r \psi_l^{t_l}\right) = \sum_{i=1}^N \rho_i(t) \ln(\rho_i(t))$ where $\rho_i(t):= \sum_{j=1}^r b_{ij} t_j$. To the author's knowledge, more rigorous results about such a potential relation are missing. However, there are further connections known between tropical toric geometry and statistical thermodynamics as presented in \cite{KapranovThermodynamicsMomentMap2011, PassareAmoebasComplexHypersurfaces2013}. In any case, a fundamental understanding of such a relation could be very inspiring for a physical point of view.} in principal $A$-determinants and Landau varieties, like $1=1^1$, $4=2^2$, $27=3^3$, $4^4=256$, etc.

\begin{example}[Continuation of \cref{ex:cubicParameterization}] \label{ex:cubicPrincipalAdetFromTriangs}
    We will continue the example from \cref{ssec:Adiscriminants}. From the set $A=(0,1,2,3)$ we can determine $4$ triangulations with the weights $\varphi_{\Tt_1} = (3,0,0,3)$, $\varphi_{\Tt_2}=(1,3,0,2)$, $\varphi_{\Tt_3}=(2,0,3,1)$ and $\varphi_{\Tt_4}=(1,2,2,1)$. Adding the only possible interior point $(2,1,1,2)$, we obtain the following ansatz by use of \cref{thm:NewtSec}
    \begin{align}
        E_A(f) = 27 z_0^3 z_3^3 + 4 z_0 z_1^3 z_3^2 + 4 z_0^2 z_2^3 z_3 - z_0 z_1^2 z_3 + \alpha z_0^2 z_1 z_2 z_3^2
    \end{align}
    where we have to determine $\alpha\in\mathbb Z$. By considering the faces of $\Newt(f)$ we can split $E_A(f)$ into discriminants
    \begin{align}
        E_A(f) = z_0 z_3 \Delta_A(f) = z_0 z_3 \frac{z_1^6}{z_0^2} \Delta_\Bb(f)
    \end{align}
    with the reduced $A$-discriminant $\Delta_\Bb(f) = 27 y_2^2 + 4y_1^3 +4y_2 -y_1^2 + \alpha y_1 y_2$ with the same conventions as in \cref{ex:cubicGale}.  By the \HKP (see \cref{ex:cubicParameterization}) we can calculate those points $(y_1,y_2)$ which satisfy $\Delta_\Bb(f)=0$. Choosing for example $t_1=-1$, we obtain the point $(y_1,y_2)=(-1,-1)$, which leads to $\alpha=-18$. Hence, we determined the principal $A$-determinant $E_A(f)$ by means of the triangulations of the point configuration $A$ and the \HKp.
\end{example}

In general, we can replace $E_A(f)$ by means of Cayley's trick (\cref{lem:CayleysTrick}) by one single $A$-discriminant in order to simplify the usage of Horn-Kapranov-parameterization. However, we have to mention that the number of vertices of the secondary polytope -- or equivalently the number of regular triangulations -- grows very fast. To determine the Landau variety in the application of Feynman integrals (see \cref{ch:singularities}) of the $2$-point, $3$-loop Feynman diagram (also known as $3$-loop banana) we have to consider $78.764$ possible regular triangulations. Hence, the principal $A$-determinant will have more than $78\,764$ monomials. The $2$-loop double-edged triangle graph (or dunce's cap) generates even $885\,524$ triangulations. Further numbers to regular triangulations of Feynman diagrams can be found in the \cref{sec:AppendixCharacteristics}. Nevertheless, this approach could be faster, than the direct calculation of principal $A$-determinants by standard algorithms, since there are very efficient methods known for triangulations \cite{DeLoeraTriangulations2010, RambauTOPCOMTriangulationsPoint2002}. However, we want to remark that the triangulations are not the only way to construct the secondary polytope. We will refer to \cite{BilleraConstructionsComplexitySecondary1990} for details.


\section{Holonomic \texorpdfstring{$D$}{D}-modules} \label{sec:holonomicDmodules}
%

The theory of $\Aa$-hypergeometric functions will be expressed by solutions of systems of partial linear differential equations. We therefore will rephrase certain basic terms of $D$-modules. In this short overview we will forgo to have a recourse to the notion of sheaves. This section will be oriented towards \cite{SaitoGrobnerDeformationsHypergeometric2000, TakayamaGrobnerBasisRings2013, SattelbergerDModulesHolonomicFunctions2019} which we also refer for a more detailed description together with \cite{OakuComputationCharacteristicVariety1994}. For a general introduction to the subject of $D$-modules we suggest \cite{BjorkAnalyticDModulesApplications1993}.

The consideration of $D$-modules takes place in the \textit{Weyl algebra}
\begin{align}
	\gls{Weyl} = \mathbb K \langle z_1,\ldots,z_N, \partial_1,\ldots,\partial_N \rangle 
\end{align}
where $\mathbb K$ is any field and all generators suppose to commute except for $\partial_i z_i = z_i\partial_i + 1$. The Weyl algebra is isomorphic to the ring of differential operators on the affine space $\mathbb A_{\mathbb K}^N$ \cite{SaitoGrobnerDeformationsHypergeometric2000}, which is the reason for our special interest. However, when we want to consider differential operators with rational functions as coefficients we will use the so-called \textit{rational Weyl algebra}
\begin{align}
	R = \mathbb K (z_1,\ldots,z_N) \langle \partial_1,\ldots,\partial_N\rangle
\end{align}
where $\mathbb K(z_1,\ldots,z_N)$ is the field of rational functions and $\partial_i r(z) = r(z)\partial_i + \pd{r(z)}{z_i}$ will be the analogue commutator relation for any rational function $r(z)\in\mathbb K(z_1,\ldots,z_N)$. Note, that $D$ is a subalgebra of $R$.

Any left module over the ring $D$ will be called a $D$-module $M$. Many function spaces can be considered as $D$-modules, for example the ring of polynomials $\mathbb K[z_1,\ldots,z_N]$, the space of formal power series $\mathbb K[[z_1,\ldots,z_N]]$ or the space of holomorphic functions $\mathcal O(U)$ on an open set $U\subseteq \mathbb C^N$ by using the natural identification of $\partial_i$ as a derivative and $z_i$ as a multiplication
\begin{align}
	\bullet : D \times M \rightarrow M,\quad \partial_i \bullet f = \pd{f}{z_i}, \quad z_i\bullet f = z_i f \quad \text{ for } f\in M \point
\end{align}
In order to distinguish the module action $\bullet$ from the multiplication in the Weyl algebra $\cdot : D \times D \rightarrow D$ we will use different symbols. Thus, a system of linear differential equations with polynomial coefficients can be identified with a left ideal $I\subset D$ in the Weyl algebra. The \textit{solution space} for such an ideal $I\subset D$ is the $\mathbb C$-vector space 
\begin{align}
    \gls{Sol}= \{ f\in\mathcal O(U) \,\rvert\, P \bullet f = 0 \quad \forall\, P\in I \}
\end{align}
where $\mathcal O(U)$ is the $D$-module of holomorphic functions on the domain $U\subseteq\mathbb C^N$. 
%

\begin{example} \label{ex:D1ideal}
	To illustrate the definitions above we will consider an example in the well-known univariate case. Thus, let $I_1 = \langle (z \partial + 1)^2 (z\partial - 2) \rangle=\langle z^3 \partial^3 + 3 z^2\partial^2 - 2 z \partial - 2\rangle$ be a $D$-ideal. It is not hard to solve the corresponding differential equation, and we will obtain $\Sol(I_1) = \mathbb C \{ z^{-1}, z^{-1}\ln(z), z^2 \}$. Surely, in order to have a well-defined solution space, we will consider $z\in U$ where $U\subseteq \mathbb C^*$ is a simply connected domain.
\end{example}

As seen in this example the solutions are not necessarily entire functions, they may have singularities. Let us first recall the situation in the univariate case to get an intuition for the multivariate generalization. Let $P=c_m(z) \partial^m + \ldots +c_1(z) \partial + c_0(z)\in D$ be a differential operator in a single variable $z\in\mathbb C$ with polynomials $c_i(z)\in\mathbb C[z]$ as coefficients and $c_m(z) \not\equiv 0$. We will call the roots of the leading coefficient $c_m(z)$ the singular points of $P$ and the set of all roots is known as the singular locus $\Sing(D\cdot P)$. Standard existence theorems state then \cite{CoddingtonTheoryOrdinaryDifferential2012,SaitoGrobnerDeformationsHypergeometric2000}, that for a simply connected domain $U\subseteq\mathbb C\setminus\Sing(D\cdot P)$ there exist holomorphic functions on $U$, which are solutions of $P\bullet f=0$. Moreover, the dimension of the solution space (for holomorphic functions on $U$) will be equal to $m$. Therefore, the singular locus will describe potential singularities in the analytic continuation of the solutions. In the univariate case one usually distinguish further between irregular and regular singular points. By Frobenius' method, one can construct series solutions for the latter. A compact summary of this method in the univariate case can be found e.g.\ in \cite{CattaniThreeLecturesHypergeometric2006}. \bigskip

We will now turn to the multivariate case. Every element $P\in D$ has a \textit{normally ordered expression}
\begin{align}
	P = \sum_{(\alpha,\beta) \in E} c_{\alpha\beta} z^\alpha \partial^\beta \label{eq:normallyOrderedEx}
\end{align}
where we use the multi-index notation as usual and $c_{\alpha\beta} \in \mathbb K^*$. Since we have more than one generator in the Weyl algebra $D$, generators can enter with different weights, when introducing an order on $D$. Hence, the vector $(u,v)\in\mathbb R^{2N}$ will be a \textit{weight} for the Weyl algebra $D$, where we associate $u$ with the weights of the generators $z_1,\ldots,z_N$ and $v$ with the weights of the generators $\partial_1,\ldots,\partial_N$. In doing so, we can define an \textit{order} for every element $P\in D$ by $\gls{order} := \max_{(\alpha,\beta)\in E} (u\alpha+v\beta)$, where $E$ refers to the normally ordered expression \cref{eq:normallyOrderedEx}. The order allows to define a filtration $\ldots \subseteq F_{-1} \subseteq F_0\subseteq F_1 \subseteq \ldots$ of $D$ by
\begin{align}
	F_m = \left\{ P\in D \,\rvert\, \ord(P)\leq m \right\} \point
\end{align}
If $u_i\geq 0$ and $v_i\geq 0$ the filtration is bounded by below $\{0\}\subseteq F_0 \subseteq \ldots$, and we will always assume that case in this work. The \textit{associated graded ring} $\gls{assocGrRing}$ of this filtration will be generated by
\begin{align}
	\{z_1,\ldots,z_N\} \cup \{ \partial_i \,\rvert\, u_i+v_i=0\} \cup \{ \xi_i \,\rvert\, u_i + v_i > 0 \}
\end{align}
where we replace the generators $\partial_i$ in the Weyl algebra $D$ by commuting generators $\xi_i$ whenever$u_i+v_i > 0$. For example, we have $\gr_{(\mathbf 0,\mathbf 0)}(D) = D$ and $\gr_{(\mathbf 0, \mathbf 1)}(D) = \mathbb K [z,\xi] = \mathbb K[z_1,\ldots,z_N,\xi_1,\ldots,\xi_N]$, where $\mathbf 0:=(0,\ldots,0)$ and $\mathbf 1:=(1,\ldots,1)$ stand for the constant zero and the constant one vector, respectively.

The order allows us also to define an analogue of the leading term. By the \textit{initial form} of an element from the Weyl algebra $P\in D$ we understand the following element in $\gr_{(u,v)} (D)$ 
\begin{align}
	\initial_{(u,v)} (P) = \sum_{\substack{ (\alpha,\beta)\in E \\ \alpha u +\beta v = \ord (P)}} c_{\alpha\beta} \prod_{i:\ u_i+v_i > 0} z_i^{\alpha_i} \xi_i^{\beta_i} \prod_{i:\ u_i+v_i = 0} z_i^{\alpha_i} \partial_i^{\beta_i} \in\gr_{(u,v)} (D) \point \label{eq:initalForm}
\end{align}
Furthermore, for any left ideal $I\subseteq D$ we define $\gls{initialuv} := \mathbb K \cdot \{ \initial_{(u,v)} (P) \,\rvert\, P \in I \} \subseteq \gr_{(u,v)} (D)$ as the \textit{initial ideal} of $I$ with respect to $(u,v)$. \bigskip

However, $\ord(P)$ defines only a partial order on $D$ because different elements of the Weyl algebra can have the same order. A total order $\prec$ is called a \textit{term order}, if (i) $z^\alpha\partial^\beta \prec z^{\alpha^\prime} \partial^{\beta^\prime}$ implies $z^{\alpha+s}\partial^{\beta+t} \prec z^{\alpha^\prime+s} \partial^{\beta^\prime+t}$ for any $(s,t)\in\mathbb N^{2N}$ and (ii) $1=z^0\partial^0$ is the smallest element with respect to $\prec$. When $z^\alpha\partial^\beta$ is the largest monomial of $P\in D$ in the normally ordered expression \cref{eq:normallyOrderedEx} with respect to a term order $\prec$, we will call $z^\alpha\xi^\beta \in\mathbb K[z,\xi]$ the \textit{initial monomial} $\initial_\prec(P)$ of $P$. By slightly abuse of notation, we call the ideal of all initial monomials $\initial_\prec(I) := \mathbb K \cdot \{ \initial_\prec (P) \,\rvert\, P \in I \}\subseteq\mathbb K[z,\xi]$ the \textit{initial ideal} with respect to $\prec$. The monomials which do not lie in $\initial_\prec(I)$ are called the \textit{standard monomials} of $I$. The number of standard monomials can be finite or infinite. \bigskip

Analogue to the commuting case we call a generating set $G=\{g_1,\ldots,g_m\}$ of the ideal $I\subseteq D$ a \textit{Gröbner basis} of $I$ with respect to a term order $\prec$, if the initial monomials $\{\initial_\prec(g_1),\ldots,\initial_\prec(g_m)\}$ generate the initial ideal $\initial_\prec (I)$. Note, that $\initial_\prec (I)$ is an ideal in the commutative ring $\mathbb K[z,\xi]$, whereas $I\subseteq D$ is an ideal in a noncommutative ring. Let $\prec$ be a term order sorting monomials by their order $\ord(P)$ in case when the orders a different and by a further term order, e.g.\ a lexicographic order, in case where the orders are equal. A Gröbner basis $G$ with respect to such a term order satisfies $\langle\initial_{(u,v)}(G)\rangle = \initial_{(u,v)} (I)$ \cite{SaitoGrobnerDeformationsHypergeometric2000}. The benefit of Gröbner basis is, that every element of $I$ has a standard representation in terms of its Gröbner basis, and we can reduce ideal operations to operations on its Gröbner basis. We refer to \cite{SaitoGrobnerDeformationsHypergeometric2000} for the determination of those Gröbner bases by an extended version of Buchberger's algorithm. For an actual calculation of Gröbner bases we recommend the use of software e.g.\ \softwareName{Macaulay2} \cite{GraysonMacaulay2SoftwareSystem} or \softwareName{Singular} \cite{DeckerSingular421Computer2021}. We included in the \cref{sec:SoftwareTools} several examples on the usage of those programs. \bigskip

In the following we will specify the weight to $(\mathbf 0,\mathbf 1)$, i.e.\ we will give the generators $\partial_i$ the weight $1$, whereas the generators $z_i$ get the weight zero. In this case the associated graded algebra is a commutative polynomial ring $\gr_{(\mathbf0,\mathbf 1)}(D) = \mathbb K [z,\xi]$. For this specific weight, the initial forms $\initial_{(\mathbf 0,\mathbf 1)}(P)$ are also called \textit{principal symbols} and the initial ideal is known as \textit{characteristic ideal} $\initial_{(\mathbf 0,\mathbf 1)}(I)\subseteq \mathbb K[z,\xi]$. Furthermore, the variety generated by this ideal is called the \textit{characteristic variety}
\begin{align}
	\gls{char} := \Var\!\left(\initial_{(\mathbf 0,\mathbf 1)}(I)\right) \subseteq \mathbb A_{\mathbb K}^{2N} \point
\end{align}

\begin{example} \label{ex:D2ideal}
	In order to illustrate these definitions, we will study two more examples. The first will be the ideal $I_2 = \langle z_1 \partial_1 - 1, z_2\partial_2^2 + \partial_1^2 \rangle$. Its (reduced) Gröbner basis is $\{z_1\partial_1 - 1, \partial_1^2, z_2\partial_2^2\}$, which can be determined by a convenient software, e.g.\ \softwareName{Macaulay2} \cite{GraysonMacaulay2SoftwareSystem} or \softwareName{Singular} \cite{DeckerSingular421Computer2021}. Thus, its characteristic ideal is given by $\initial_{(\mathbf 0,\mathbf 1)}(I_2) = \langle z_1 \xi_1, \xi_1^2, z_2\xi_2^2\rangle$. 
\end{example}
\begin{example} \label{ex:D3ideal}
	For a slightly more extensive example with $N=3$ consider $I_3 = \langle \partial_1\partial_3 - \partial_2^2, z_1\partial_1 + z_2\partial_2+z_3\partial_3 + 2, z_2 \partial_2 + 2 z_3 \partial_3 - 1\rangle$. The Gröbner basis with respect to a degree reverse lexicographic ordering can be calculated with \softwareName{Macaulay2} or \softwareName{Singular} and one obtains
	\begin{align}
		\{&\partial_2^2-\partial_1\partial_3,   z_2\partial_2+2z_3\partial_3-1,   z_1\partial_1-z_3\partial_3+3,    z_2\partial_1\partial_3+2z_3\partial_2\partial_3, \nonumber \\
		&\quad 2z_1z_3\partial_2\partial_3+z_2z_3\partial_3^2-2z_2\partial_3,    z_2^2z_3\partial_3^2-4z_1z_3^2\partial_3^2-2z_2^2\partial_3-2z_1z_3\partial_3\} \point
	\end{align}
	Therefore, its characteristic ideal is given by 
	\begin{align}
		\initial_{(\mathbf 0,\mathbf 1)} (I_3) &= \langle \xi_2^2-\xi_1\xi_3,  z_2\xi_2+2z_3\xi_3,   z_1\xi_1-z_3\xi_3,    z_2\xi_1\xi_3 + 2z_3\xi_2\xi_3,  \nonumber \\
		&\quad 2z_1z_3\xi_2\xi_3 + z_2z_3\xi_3^2, z_2^2z_3\xi_3^2 - 4z_1z_3^2\xi_3^2 \rangle \nonumber \\
		& = \langle  \xi_1\xi_3 - \xi_2^2, z_1\xi_1  - z_3\xi_3, z_2 \xi_2 + 2 z_3 \xi_3 \rangle \point
	\end{align}
\end{example}

The dimension of the characteristic variety will be crucial for the behaviour of the $D$-ideal. The so-called \textit{weak fundamental theorem of algebraic analysis} going back to \cite{BernshteinAnalyticContinuationGeneralized1973} states that for an ideal $I\subsetneq D$
\begin{align}
	\dim (\ch (I)) \geq N \label{eq:weakFTAA}
\end{align}
the dimension of the characteristic ideal is at least $N$. Moreover, it can be shown that even every irreducible component of $\ch(I)$ has at least dimension $N$, which is the statement of the \textit{strong fundamental theorem of algebraic analysis} \cite{SaitoGrobnerDeformationsHypergeometric2000, KomatsuMicrofunctionsPseudodifferentialEquations1973}. We will call an $D$-ideal $I$ \textit{holonomic} if its characteristic variety has the minimal dimension $N$. Holonomic $D$-ideals will behave in a certain way much better than non-holonomic $D$-ideals. A module $M=\normalslant{D}{I}$ is called holonomic if the ideal $I$ is holonomic.

Furthermore, we can classify ideals $I$ by the number of their standard monomials. Thus, we will slightly change our perspective, and we will consider the $D$-ideal as an ideal of the rational Weyl algebra $R$. For any term order $\prec$, we call the number of standard monomials of $\initial_\prec(I)$ the \textit{holonomic rank} $\rank (I)$ of $I\subseteq R$. We will see in \cref{thm:CKKT}, that the holonomic rank will be the generalization of the order of ordinary differential equations. Note, that if $I$ is holonomic, then its holonomic rank is finite \cite{SaitoGrobnerDeformationsHypergeometric2000}. However, the converse is not necessarily true.

\begin{example}
	We will continue the \cref{ex:D1ideal,,ex:D2ideal,,ex:D3ideal}. All ideals $I_1$, $I_2$ and $I_3$ are holonomic ideals. For a convenient term order $\prec$ we have $\initial_\prec(I_1) = \langle \xi^3 \rangle$ as the initial ideal of $I_1$ in the rational Weyl algebra $R$. Thus, its standard monomials are $\{1,\xi,\xi^2\}$ and its holonomic rank is $3$. Note, that the holonomic rank equals the order of the differential operator.
	
	For the second example, the initial ideal in the rational Weyl algebra will be $\initial_\prec(I_2) = \langle \xi_1,\xi_1^2,\xi_2^2\rangle$. Therefore, the standard monomials are $\{1,\xi_2\}$ and the holonomic rank is $2$. 
	
	For the last example with $N=3$, we obtain $\rank(I_3) = 2$, where the standard monomials are $\{1,\xi_3\}$.
\end{example}

By the holonomic rank we introduced a generalization of the order of ordinary differential equations. The last object we have to generalize to the multivariate case is the singular locus. Let $\pi:\mathbb C^{2N} \rightarrow \mathbb C^N$ denotes the projection on the first coordinates $(z,\xi)\mapsto z$. The Zariski closure of such a projection of the characteristic variety without the trivial solution $\xi_1=\ldots=\xi_N=0$ will be defined to be the \textit{singular locus}
\begin{align}
	\gls{Sing} := \overline{ \pi (\ch(I)\setminus\{\xi=0\})} \subseteq \mathbb C^N \point \label{eq:SingLocusDef}
\end{align}
If $I$ is a holonomic $D$-ideal, the singular locus $\Sing(I) \subsetneq \mathbb C^N$ is a proper subset of $\mathbb C^N$ \cite{SattelbergerDModulesHolonomicFunctions2019}. \bigskip

We can now turn to the central existence and uniqueness theorem for partial differential equations. It is a special case of the Cauchy-Kovalevskaya-Kashiwara theorem.

\begin{theorem}[Cauchy-Kovalevskaya-Kashiwara \cite{SaitoGrobnerDeformationsHypergeometric2000}] \label{thm:CKKT}
	Let $I\subseteq D$ be a holonomic $D$-ideal and $U\subset \mathbb C^N\setminus\Sing(I)$ a simply connected domain. Then the dimension of the $\mathbb C$-vector space of holomorphic solutions on $U$ is equal to its holonomic rank
	\begin{align}
		\dim\!\left(\Sol(I)\right) = \rank (I) \point
	\end{align}
\end{theorem}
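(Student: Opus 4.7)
My approach is to reduce the theorem to the classical existence-and-uniqueness theorem for completely integrable Pfaffian systems (Frobenius / Cauchy-Kovalevskaya in the commutative sense), by exploiting holonomicity to turn the $D$-ideal $I$ into a finite-rank connection on $U$.

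The first step is to pass from $D$ to the rational Weyl algebra $R$ and consider the quotient $R/RI$. Since $I$ is holonomic, I would invoke the characterization of $\rank(I)$ as the $\mathbb{C}(z)$-dimension of $R/RI$: a basis is provided by the classes of the standard monomials $\{\partial^{\beta_1},\ldots,\partial^{\beta_r}\}$ of some Gröbner basis of $RI$ with respect to a term order refining the weight $(\mathbf{0},\mathbf{1})$, with $r=\rank(I)$. Left multiplication by $\partial_i$ defines a $\mathbb{C}(z)$-linear endomorphism of $R/RI$, which in the standard-monomial basis is represented by a matrix $A_i(z)\in M_r(\mathbb{C}(z))$. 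The denominators appearing in $A_i$ come from clearing leading coefficients of Gröbner reductions, hence divide some power of the polynomial defining $\Sing(I)$; in particular the $A_i$ are holomorphic on $U\subseteq\mathbb{C}^N\setminus\Sing(I)$.

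Next, for any $f\in\Sol(I)$ on $U$, the vector $F(z):=(\partial^{\beta_1}f(z),\ldots,\partial^{\beta_r}f(z))^{\top}$ satisfies the Pfaffian system
\begin{align}
\partial_i F = A_i(z)\,F, \qquad i=1,\ldots,N,
\end{align}
because reducing $\partial_i\partial^{\beta_j}$ modulo $RI$ produces precisely the $j$-th column of $A_i$. Conversely, given any holomorphic solution $F$ of this system, I would check that the component $f$ corresponding to the standard monomial $1=\partial^{\mathbf{0}}$ (which is always a standard monomial, since $1\notin\initial_\prec(I)$ unless $I=R$) satisfies $P\bullet f=0$ for every $P\in I$: expanding $P$ in terms of the Gröbner basis and the standard monomials expresses $P\bullet f$ as a $\mathbb{C}(z)$-linear combination of the $F_j$ that vanishes identically by the construction of the $A_i$. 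Since the commutators $[\partial_i,\partial_j]=0$ in $D$ descend to $R/RI$, the associativity of the module action forces the integrability conditions
\begin{align}
\partial_i A_j-\partial_j A_i+[A_i,A_j]=0,
\end{align}
so the Pfaffian system is completely integrable on $U$.

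Finally, the classical Frobenius / Cauchy-Kovalevskaya theorem applied on the simply connected open set $U$ (where all $A_i$ are holomorphic) yields, for each initial vector $F_0\in\mathbb{C}^r$ and base point $z_0\in U$, a unique holomorphic $F:U\to\mathbb{C}^r$ with $F(z_0)=F_0$. Combining the two directions above, the map $\Sol(I)\to\mathbb{C}^r$, $f\mapsto(\partial^{\beta_1}f(z_0),\ldots,\partial^{\beta_r}f(z_0))$, is a $\mathbb{C}$-linear isomorphism, giving $\dim\Sol(I)=r=\rank(I)$. The main obstacle I anticipate is the verification that the first component of an arbitrary solution $F$ of the Pfaffian system actually satisfies \emph{every} operator in $I$ (not just the Gröbner basis modulo rational functions): this requires controlling that the reductions take place inside $R$ and that no spurious denominators vanish on $U$, which is precisely why the singular locus is excised in the hypothesis.
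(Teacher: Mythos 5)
The paper does not actually prove this theorem: it is imported verbatim from Saito--Sturmfels--Takayama with a citation and no argument, so there is no in-paper proof to compare against. Your proposal is the standard reduction to an integrable connection, and most of it is sound: the Gr\"obner basis of $RI$ in the rational Weyl algebra yields connection matrices $A_i$ in the standard-monomial basis, commutativity of the $\partial_i$ in $R/RI$ gives the integrability conditions, Frobenius on a simply connected domain produces an $r$-dimensional solution space for the Pfaffian system, and the back-and-forth between $\Sol(I)$ and solutions of the Pfaffian system works as you describe (the staircase property of the standard monomials lets you recover $F_j=\partial^{\beta_j}F_1$, and normal-form reduction then kills every element of $I$, with denominators cleared by a polynomial that is nonvanishing on the relevant open set). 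One small imprecision: left multiplication by $\partial_i$ on $R/RI$ is not $\mathbb C(z)$-linear, since $\partial_i c(z)=c(z)\partial_i+\partial c/\partial z_i$; it is a connection operator, but this does not damage the construction of the $A_i$.

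The genuine gap is the parenthetical claim that the denominators of the $A_i$ ``divide some power of the polynomial defining $\Sing(I)$.'' That need not hold, and it is exactly where the depth of the theorem lies. The singular locus is defined through the characteristic variety, i.e.\ the $(\mathbf 0,\mathbf 1)$-initial ideal of $I$ computed in $D$, whereas your denominators come from the leading coefficients of a Gr\"obner basis of $RI$ with respect to a term order on the $\partial$'s alone; the Gr\"obner trivialization can introduce apparent singularities (just as passing to a cyclic vector does for ODE systems). Consequently your argument establishes the equality only for $U$ contained in the complement of $\Sing(I)\cup\Var(d)$, where $d$ is the product of the offending denominators --- a possibly strictly smaller open set on which $U$ need not remain simply connected, and across whose extra hypersurfaces solutions need not continue with constant rank for free. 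Closing the gap amounts to showing that $D/I$ is $\mathcal O$-coherent, hence a locally free integrable connection of rank $\rank(I)$, on \emph{all} of $\mathbb C^N\setminus\Sing(I)$; this is precisely Kashiwara's contribution (constructibility of the solution complex of a holonomic module) and cannot be extracted from the Gr\"obner computation alone.
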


Therefore, the notions from the univariate case will transfer also to the multivariate case. However, the involved objects are much harder to determine in most cases, as the calculation of Gröbner basis can be very time-consuming. We will conclude this section by continuing the examples.

\begin{example}
	Let $I_1$, $I_2$ and $I_3$ be the $D$-ideals defined in the \cref{ex:D1ideal,,ex:D2ideal,,ex:D3ideal}, respectively. As expected, the singular locus for the first example is simply $\Sing(I_1) = \Var (z)$. Thus, $z=0$ is the only singular point for $I_1$ and also the only possible singularity of its solutions. Quite similar is the situation for $I_2$, where we obtain $\Sing(I_2) = \Var (z_2)$.
	
	For the last example we have $\Sing(I_3) = \Var (z_1 z_3 (z_2^2 - 4z_1 z_3))$, which can be determined e.g.\ by \softwareName{Macaulay2} or by eliminating $\xi$ from $\ch(I_3)$. Note that the singular locus is generated by the principal $A$-determinant of the quadratic polynomial $f = z_1 + z_2 x + z_3 x^2$. Thus, we have $E_A(f) = z_1 z_3 (z_2^2 - 4z_1 z_3)$. In the following section it will turn out that this is not just an arbitrary coincidence but rather an instance of a more general correspondence between principal $A$-determinants and singular loci.	
\end{example}


\section{\texorpdfstring{$\Aa$}{A}-hypergeometric systems} \label{sec:AHypSystems}
%

Since the first hypergeometric function was studied by Euler and Gauss more than 200 years ago, many different generalization of hypergeometric functions were introduced: Pochhammer series ${}_pF_q$, Appell's, Lauricella's and Kampé-de-Fériet functions, to name a few. Those functions can be characterized in three different ways: by series representations, by integral representations and as solutions of partial differential equations. Starting with Gauss' hypergeometric function ${_2}F_1$ we have
\begin{align}
	{_2}F_1 (a,b,c|z) = \sum_{k\geq 0} \frac{(a)_k(b)_k}{(c)_k} \frac{z^k}{k!} \label{eq:GaussSeries}
\end{align}
as a series representation. We call $\gls{Pochhammer} := \frac{\Gamma(a+k)}{\Gamma(a)}$ the Pochhammer symbol, and we assume appropriate limits, if both $\Gamma$-functions have a pole. 

Alternatively, there are many integral representations known for the Gauss' hypergeometric function, e.g.\ \cite{OlverNISTHandbookMathematical2010, BerkeschEulerMellinIntegrals2013}
\begin{align}
	{_2}F_1 (a,b,c|z) &= \frac{\Gamma(c)}{\Gamma(b)\Gamma(c-b)} \int_0^1 \frac{t^{b-1} (1-t)^{c-b-1}}{(1-zt)^a} \dif t \label{eq:GaussEuler}\\
	{_2}F_1 (a,b,c|z) &= \frac{\Gamma(c)^2}{\Gamma(a)\Gamma(b)\Gamma(c-a)\Gamma(c-b)} \int_{\mathbb R^2_+} \frac{x_1^{a-1}x_2^{b-1}}{\left(1+x_1+x_2+(1-z)x_1x_2\right)^c} \dif x_1 \dif x_2 \label{eq:GaussEulerMellin}\\
	{_2}F_1 (a,b,c|z) &= \frac{\Gamma(c)}{\Gamma(a)\Gamma(b)} \frac{1}{2\pi i} \int_{-i\infty}^{i\infty} \frac{\Gamma(a+t)\Gamma(b+t)}{\Gamma(c+t)} \Gamma(-t) (-z)^t \dif t \label{eq:GaussMellinBarnes}
\end{align}
for convenient domains, which are known as Euler integral, Euler-Mellin integral and Mellin-Barnes integral, respectively. And finally Gauss' hypergeometric function can also be considered as a solution of the differential equation
\begin{align}
	\big[ z(z-1) \partial^2_z + ((a+b+1)z-c)\partial_z + ab\big] \bullet {_2}F_1(a,b,c|z) = 0 \point \label{eq:GaussODE}
\end{align}
We can reformulate this differential equation also in a more symmetric form, which shows the connection to \cref{eq:GaussSeries}
\begin{align}
	\big[ (\theta_z+a)(\theta_z+b) - (\theta_z+c)\partial_z \big] \bullet  {_2}F_1(a,b,c|z) = 0 \label{eq:GaussODE2}
\end{align}
where $\theta_z:= z\partial_z$ is the Euler operator.

Therefore, there are in principle three different branches to generalize the notion of a hypergeometric function: by generalizing the series representation \cref{eq:GaussSeries}, the various integral representations \cref{eq:GaussEuler}, \cref{eq:GaussEulerMellin}, \cref{eq:GaussMellinBarnes} or by generalizing the differential equations \cref{eq:GaussODE}, \cref{eq:GaussODE2}. However, these generalizations do not necessarily agree. Nonetheless, we would like to have the three different kinds of representation also for generalized hypergeometric functions. \bigskip

The most general series representation goes back to Horn \cite{HornUberHypergeometrischeFunktionen1940} and was later investigated by Ore and Sato (a summarizing discussion can be found in \cite{GelfandGeneralHypergeometricSystems1992}). A \textit{Horn hypergeometric series} is a multivariate power series in the variables $y_1,\ldots,y_r\in\mathbb C$
\begin{align}
    \sum_{k\in\mathbb N_0^r} c(k) y^k \label{eq:DefHornHypergeometric}
\end{align}
where ratios of the coefficients $\frac{c(k+e_i)}{c(k)}$ are supposed to be rational functions in $k_1,\ldots k_r$. By $e_1,\ldots,e_r$ we denote the elements of the standard basis in Euclidean space. Thus, the coefficients $c(k)$ can be represented mainly by a product of Pochhammer symbols\footnote{By the property of the Pochhammer symbols to satisfy $(a)_n^{-1} = (-1)^n (1-a)_{-n}$ for $n\in\mathbb Z$ one can convert Pochhammer symbols in the denominator to Pochhammer symbols in the numerator and vice versa. The most general form of those coefficients $c(k)$ is given by the Ore-Sato theorem \cite{GelfandGeneralHypergeometricSystems1992}.} $\prod_i (a_i)_{l_i(k)}$. Thereby, we consider $a_i\in\mathbb C$ as complex numbers and $l_i(k)$ as integer linear combinations of the summation indices $k_1,\ldots,k_r$. Since derivatives of Pochhammer symbols with respect to $a_i$ can be expressed by a sum of other Pochhammer symbols, derivatives of Horn hypergeometric functions with respect to their parameters are again Horn hypergeometric functions \cite{BytevDerivativesHorntypeHypergeometric2017}. For further studies of Horn hypergeometric functions we refer to \cite{SadykovHypergeometricFunctionsSeveral2002}. However, we want to remark that for $r>2$ not all Horn hypergeometric functions can be expressed as a solution of a holonomic $D$-ideal \cite{GelfandGeneralHypergeometricSystems1992}. \bigskip

Another option to generalize the notion of hypergeometric functions is to start with the integral representations. This was worked out by Aomoto, among others, who generalized the Euler integral representation to integrals over a product of linear forms up to certain powers \cite{AomotoStructureIntegralsPower1977}. However, the integration region for multivariate hypergeometric integrals can be very intricate. We refer to \cite{AomotoTheoryHypergeometricFunctions2011} for a comprehensive overview about those hypergeometric integrals. \bigskip

Finally, we can choose the differential equation as starting point for a generalized meaning of hypergeometric functions. This approach was initiated in the late 1980s by Gelfand, Graev, Kapranov, Zelevinsky and collaborators \cite{GelfandCollectedPapersVol1989, GelfandHypergeometricFunctionsToric1991, GelfandGeneralizedEulerIntegrals1990, GelfandGeneralHypergeometricSystems1992, GelfandDiscriminantsResultantsMultidimensional1994}. Due to their dependence on a finite set of lattice vectors $\Aa$, they are known as $\Aa$-hypergeometric functions or occasionally as Gelfand-Kapranov-Zelevinsky (GKZ) hypergeometric functions. 

$\Aa$-hypergeometric functions are defined as solutions of certain holonomic $D$-ideals and include Aomoto's hypergeometric functions \cite{OpdamMultivariableHypergeometricFunctions2001}, as well as all holonomic Horn hypergeometric functions \cite{CattaniThreeLecturesHypergeometric2006}. And vice versa, we can express $\Aa$-hypergeometric functions also in terms of series or integrals as in the case of Gauss' hypergeometric function. The numerous representations for Feynman integrals thus appear naturally in the light of $\Aa$-hypergeometric functions. In the application to Feynman integrals we can identify a generalization of \cref{eq:GaussEulerMellin} with the parametric representation of Feynman integrals, and we will find its connection to the Mellin-Barnes representations \cref{eq:GaussMellinBarnes} which are ubiquitous in Feynman calculus (see e.g.\ \cite{SmirnovFeynmanIntegralCalculus2006}) in \cref{thm:MellinBarnesRepresentation}. Also, a relation to Euler type integrals \cref{eq:GaussEuler} can be given, which we will elaborate in \cref{sec:EulerIntegrals}. Furthermore, the whole \cref{ch:seriesRepresentations} will be devoted to the connection of Feynman integrals to hypergeometric series representations.

The theory of $\Aa$-hypergeometric functions not only characterizes those functions, it will also give us deep insights into the structure of hypergeometric functions. This can also have profit in the application to Feynman integrals, which we want to demonstrate by the examination of the singular locus also known as Landau variety. Moreover, $\Aa$-hypergeometric functions are connected to many branches in mathematics, e.g.\ combinatorics, number theory, motives, and Hodge theory (see exemplarily \cite{ReicheltAlgebraicAspectsHypergeometric2020, RobertsHypergeometricMotives2021}). Therefore, we will have the reasonable hope, that these various connections will also contain useful insights for Feynman integrals. \bigskip

In this section we will introduce $\Aa$-hypergeometric systems, and we will work out a representation of $\Aa$-hypergeometric functions by multivariate series. Moreover, we will draw their connection to $A$-discriminants. The following collection can only give a small glimpse of this rich theory. We refer to \cite{GelfandCollectedPapersVol1989, GelfandGeneralHypergeometricSystems1992, AomotoTheoryHypergeometricFunctions2011, SaitoGrobnerDeformationsHypergeometric2000, StienstraGKZHypergeometricStructures2005, CattaniThreeLecturesHypergeometric2006, VilenkinGelFandHypergeometric1995} for a more detailed description.


\subsection{Basic properties of \texorpdfstring{$\Aa$}{A}-hypergeometric systems} \label{ssec:BasicAhypergeometricSystems}
%
%
%
%

Let $\Aa=\{a^{(1)},\ldots,a^{(N)}\}\subset \mathbb Z^{n+1}$ be a finite subset of lattice points, which span $\mathbb R^{n+1}$ as a vector space $\operatorname{span}_{\mathbb R}(\Aa) = \mathbb R^{n+1}$. Thus, we always want to consider the case $n+1\leq N$. As before, we will denote by $\Aa\in\mathbb Z^{(n+1)\times N}$ also the matrix generated by the elements of the subset $\Aa$ as column vectors. Therefore, $\operatorname{span}_{\mathbb R}(\Aa) = \mathbb R^{n+1}$ is nothing else than the requirement that $\Aa$ has full rank. Further, we will assume that there exists a linear map $h:\mathbb Z^{n+1}\rightarrow \mathbb Z$, such that $h(a)=1$ for any $a\in\Aa$. Thus, all elements of $\Aa$ lie on a common hyperplane off the origin, which allows us to consider $\Aa$ as describing points of an affine space in homogenization. We write $A\subset\mathbb Z^n$ for the dehomogenized point configuration of $\Aa$ (see \cref{ssec:vectorConfigurations}). Equivalent to requiring such a linear map $h$, we can also demand $f=\sum_{a\in\Aa} z_a x^a$ to be a quasi-homogeneous polynomial. The integer kernel of $\Aa$ will be denoted by
\begin{align}
    \gls{Ll} := \ker_{\mathbb Z}(\Aa) = \left\{ (l_1,\ldots,l_N)\in\mathbb Z^N \, \big\rvert \, l_1 a^{(1)} + \ldots + l_N a^{(N)} = 0 \right\} = \Dep(\Aa)\cap\mathbb Z^N \point
\end{align}
By the notions of \cref{ssec:GaleDuality} this is the space of integer linear dependences and every Gale dual of $\Aa$ provides a basis of $\mathbb L$.\bigskip

The $\Aa$-hypergeometric system will be generated by a left ideal in the Weyl algebra $D := \langle z_1,\ldots,z_N,\allowbreak\partial_1,\ldots,\partial_N\rangle$, which consists of two types of differential operators, called toric and homogeneous operators
\begin{align}
    \gls{toricOp} &:= \prod_{l_j>0} \partial_j^{l_j} - \prod_{l_j<0} \partial_j^{-l_j} \qquad\text{for}\quad l\in\mathbb L \label{eq:toricOperators}\\
    \gls{homOp} &:= \sum_{j=1}^N a_i^{(j)} z_j \partial_j + \beta_i \qquad\text{for}\quad i=0,\ldots,n \label{eq:homogeneousOperators}
\end{align}
where $\gls{GKZpar}\in\mathbb C^{n+1}$ is an arbitrary complex number. We call the $D$-ideal
\begin{align}
    \gls{GKZIdeal} = \sum_{i=0}^{n} D\cdot E_i(\beta) + \sum_{l\in\mathbb L} D\cdot \square_l \label{eq:AhypIdeal}
\end{align}
the \textit{$\Aa$-hypergeometric ideal}. Note that this definition agrees with the definition in \cref{eq:AHypIdeal1}. We will call the latter part $\gls{toricId} = \sum_{l\in\mathbb L} D\cdot \square_l$ the toric ideal. This is an ideal in the (commutative) ring $\mathbb K[\partial_1,\ldots,\partial_N]$. The $D$-module of equivalence classes $\gls{GKZModule} = \normalslant{D}{H_\Aa(\beta)}$ is referred as the \textit{$\Aa$-hypergeometric system} or the \textit{$\Aa$-hypergeometric module}\footnote{Note, that there is a unique isomorphism between $\Aa$-hypergeometric systems and the $n$-th relative de Rham cohomology group $\mathbb H^n$, i.e.\ $\mathcal M_\Aa(\beta) \cong \mathbb H^n$, see \cite[prop. 2.3]{ChestnovMacaulayMatrixFeynman2022}.}. Holomorphic solutions on convenient domains $U\subseteq\mathbb C^N$ of these differential equation systems will be called \textit{$\Aa$-hypergeometric functions}. Thus, $\Aa$-hypergeometric functions are the elements of $\Sol (H_\Aa(\beta))$.\bigskip

One of the most important properties of those hypergeometric ideals \cref{eq:AhypIdeal} is to be always holonomic. Therefore, the dimension of $\Sol(H_\Aa(\beta))$ will be finite and the definition of $\Aa$-hypergeometric functions is meaningful. Furthermore, the holonomic rank can be characterized by the volume of the polytope $\Conv(A)$. In the following theorem, we will collect the most essential properties about $\Aa$-hypergeometric systems, which summarize the results of over more than $10$ years of research in that field.

\begin{theorem}[Holonomic rank of $\Aa$-hypergeometric systems \cite{GelfandHolonomicSystemsEquations1988, GelfandEquationsHypergeometricType1988,  GelfandGeneralizedEulerIntegrals1990, AdolphsonHypergeometricFunctionsRings1994, SaitoGrobnerDeformationsHypergeometric2000, MatusevichHomologicalMethodsHypergeometric2004}] \label{thm:HolRankAHyp}
    The $\Aa$-hypergeometric ideal $H_\Aa(\beta)$ is always a holonomic $D$-ideal. Its holonomic rank is bounded by the volume of a polytope
    \begin{align}
        \rank (H_\Aa(\beta)) \geq \vol (\Conv (A)) \point \label{eq:AHypHolRank}
    \end{align}
    For generic values of $\beta\in\mathbb C^{n+1}$ equality holds in \cref{eq:AHypHolRank}. Furthermore, equality in \cref{eq:AHypHolRank} holds for all values of $\beta\in\mathbb C^{n+1}$ if and only if $I_\Aa $ is Cohen-Macaulay.
\end{theorem}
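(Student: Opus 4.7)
The plan is to combine several pieces that were developed over more than a decade. To establish holonomicity, I would study the characteristic ideal $\initial_{(\mathbf 0,\mathbf 1)}(H_\Aa(\beta))$ directly. Since the toric operators $\square_l$ lie in the commutative subring $\mathbb K[\partial_1,\ldots,\partial_N]$, their principal symbols generate the toric ideal $I_\Aa$ viewed inside $\mathbb K[z,\xi]$ (with $\partial_i$ replaced by $\xi_i$). The homogeneous operators $E_i(\beta)$ contribute the linear symbols $\sum_j a_i^{(j)} z_j \xi_j$, since $\beta_i$ is of strictly lower order. Hence the characteristic variety is contained in the conormal variety to the affine toric variety $X_\Aa = \Var(I_\Aa)\subseteq\mathbb A^N_{\mathbb K}$, and an explicit count using that $\dim X_\Aa = n+1$ and that the fiber conditions $\Aa(z\odot \xi)=0$ cut out a $(N-n-1)$-dimensional space of cotangent directions over the smooth locus yields $\dim\ch(H_\Aa(\beta))=N$. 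Thus $H_\Aa(\beta)$ is holonomic.

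For the lower bound $\rank(H_\Aa(\beta))\geq \vol(\Conv(A))$, I would invoke the $\Gamma$-series construction, which will be elaborated later in the text. Fix a regular triangulation $\Tt$ of $\Conv(A)$. For each maximal simplex $\sigma\in\hatT$ and each lattice point representative $k\in K_\sigma$ of the quotient $\mathbb Z^{N}/\mathbb Z\Aa_\sigma$ (which has cardinality $\vol(\sigma)$), one can write down a formal $\Gamma$-series $\varphi_{\sigma,k}(z)$ supported on $\Lambda_k\subseteq\ker_{\mathbb Z}\Aa$ whose formal action by every $\square_l$ and every $E_i(\beta)$ vanishes identically. Because the initial exponents of the $\varphi_{\sigma,k}$ are pairwise distinct modulo $\mathbb L$, these series are linearly independent as formal solutions. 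Summing over $\sigma$ yields $\sum_{\sigma\in\hatT}\vol(\sigma)=\vol(\Conv(A))$ independent solutions, and regularity of $\Tt$ guarantees a common domain of convergence where Cauchy-Kovalevskaya-Kashiwara (\cref{thm:CKKT}) applies.

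For the generic equality, I would combine the preceding lower bound with an upper bound obtained from the initial ideal. Specifically, under a weight $(-\omega,\omega)$ induced by the height function of a regular triangulation $\Tt$, Saito-Sturmfels-Takayama show that $\initial_{(-\omega,\omega)}(H_\Aa(\beta))$ becomes a sum of monomial ideals indexed by the simplices of $\Tt$, whose number of standard monomials equals $\vol(\Conv(A))$ whenever $\beta$ avoids a certain exceptional affine arrangement. Since Gröbner deformations do not increase the holonomic rank in the rational Weyl algebra, this matches the lower bound. The exceptional set depends only on local cohomology data of the toric module $S_\Aa=\normalslant{\mathbb C[\partial]}{I_\Aa}$ at the grading $-\beta$.

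The deepest part, and the main obstacle, is the Cohen-Macaulay characterization. The ``if'' direction is easier: when $I_\Aa$ is Cohen-Macaulay, the Euler-Koszul complex of $S_\Aa$ is a resolution, which forces the natural map from the ``generic'' solution count to the actual rank to be an isomorphism for every $\beta$. The ``only if'' direction is the substantial theorem of Matusevich, Miller and Walther: one constructs, for each non-Cohen-Macaulay graded piece, an explicit parameter $\beta$ at which the Euler-Koszul homology $H_1$ of $S_\Aa$ in degree $-\beta$ is nonzero, and then shows this homology injects into the rank jump $\rank(H_\Aa(\beta)) - \vol(\Conv(A))$. Assembling these four ingredients completes the proof of the theorem as stated.
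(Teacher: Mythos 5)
Your outline is essentially the same route the paper takes, with the caveat that the paper does not actually prove this theorem: it is stated as a survey of literature results, and the only part the text substantiates is the lower bound and generic equality, which it does exactly as you propose — via the $\Gamma$-series $\varphi_{\sigma,k}$ attached to a regular triangulation, the partition of $\mathbb N_0^r$ by the representatives $K_\sigma$ of $\bigslant{\mathbb Z^{n+1}}{\mathbb Z\Aas}$, and the common convergence domain coming from the secondary cone (\cref{thm:GammaConverge,thm:SolutionSpaceGammaSeries}), combined with \cref{thm:CKKT}. Your sketch of holonomicity is also consistent with what the paper later develops for the singular locus: note only that the characteristic variety is contained not in the conormal variety of $X_\Aa$ alone but in the union of conormal varieties to the torus-orbit strata indexed by the faces of $\Conv(\Aa)$ — this is precisely the content of \cref{lem:face-characteristic} — and each such stratum contributes a Lagrangian (hence $N$-dimensional) component, so your dimension count survives once the lower-dimensional orbits are included. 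The remaining two ingredients, the Gröbner-deformation upper bound of Saito--Sturmfels--Takayama and especially the ``only if'' direction of the Cohen--Macaulay characterization due to Matusevich--Miller--Walther, are named rather than proved in your proposal; that is acceptable here because the paper treats them the same way (as citations), but you should be explicit that the rank-jump construction via Euler--Koszul homology is being taken as a black box rather than established.
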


We will justify the statement for (very) generic $\beta$ by an explicit construction in \cref{ssec:GammaSeries}.\bigskip

By the restriction to \textit{generic} values $\beta\in\mathbb C^{n+1}$ we mean, that $\rank (H_\Aa(\beta)) = \vol (\Conv (A))$ holds except for values $\beta$ of a proper Zariski closed subset of $\mathbb C^{n+1}$. Hence, the set of generic points of $\mathbb C^{n+1}$ is a non-empty open Zariski set, which is dense in $\mathbb C^{n+1}$. Occasionally, we consider the stronger restriction \textit{very generic}, which describes elements from a countable intersection of non-empty Zariski open sets. Typically, non-generic points $\beta$ concern integer values. We will specify this statement in \cref{ssec:GammaSeries}. \bigskip

\begin{example}[Gauss' hypergeometric function] \label{ex:GaussAhyp}
	We want to consider Gauss' hypergeometric function from \cref{sec:AHypSystems} in the context of $\Aa$-hypergeometric systems. For this function the $\Aa$-hypergeometric system will be generated by four points in $\mathbb Z^3$
	\begin{align}
		\Aa = \begin{pmatrix}
			1 & 1 & 1 & 1 \\
			0 & 1 & 0 & 1 \\
			0 & 0 & 1 & 1
		\end{pmatrix} , \qquad \beta = \begin{pmatrix} a + b - c \\ a \\ b \end{pmatrix} \point \label{eq:ABetaGauss}
	\end{align}
	The lattice of linear dependences of $\Aa$ is given by $\mathbb L = (1,-1,-1,1)\ \mathbb Z$. Hence, the $\Aa$-hypergeometric ideal $H_\Aa(\beta)$ will be spanned by the differential operators
	\begin{align}
		\partial_1 \partial_4 - \partial_2\partial_3 = 0, & \qquad z_1\partial_1 + z_2\partial_2 + z_3\partial_3 + z_4\partial_4 + (a+b-c) = 0 \\
		z_2\partial_2 + z_4\partial_4 + a = 0, & \qquad z_3\partial_3 + z_4\partial_4 + b = 0 \point		
	\end{align}
	We can combine these four differential operators to a single one
	\begin{align}
		z_4 \left(\frac{z_4}{z_2z_3} - \frac{1}{z_1} \right) \partial_4^2 + \left( \frac{1+a+b}{z_2z_3} z_4 - \frac{c}{z_1}\right) \partial_4 + \frac{ab}{z_2z_3} = 0 \point \label{eq:GaussAhypDiff}
	\end{align}
	Thus, for $z_1 = z_2 = z_3 = 1$ we obtain the differential equation for Gauss' hypergeometric function \cref{eq:GaussODE}. Hence, the Gauss' hypergeometric function ${_2}F_1(a,b,c|z)$ is an $\Aa$-hypergeometric function $\Phi(1,1,1,z)$, with $\Phi\in\Sol(H_\Aa(\beta))$ where $\Aa$ and $\beta$ are given by \cref{eq:ABetaGauss}. The hypergeometric differential equation has two basic solutions, which will agree to the $\Aa$-hypergeometric description, since $\vol\!\left(\Conv(A)\right)=2$.
\end{example}

At first sight, it seems that we have introduced too many variables in $\Aa$-hyper\-geometric systems as it appears in \cref{ex:GaussAhyp}. However, closer inspection of this example shows, that there is actual only one effective variable $y = \frac{z_1z_4}{z_2 z_3}$. It is the beautiful observation of Gelfand, Kapranov and Zelevinsky that the introduction of these extra variables will greatly simplify the hypergeometric structures \cite{StienstraGKZHypergeometricStructures2005}. Note that this behaviour is the same as observed for the $\beta$-projection of secondary fans (see \cref{ssec:secondaryPolytope}) and the reduced $\Aa$-discriminants of \cref{ssec:Adiscriminants}.\bigskip

A simple but nevertheless useful property of $\Aa$-hypergeometric functions is that they obey certain shift relations.
\begin{lemma}
	Let $\Phi(z)\in\Sol\!\left(H_\Aa(\beta)\right)$ an $\Aa$-hypergeometric function. Then its derivative $\pd{\Phi(z)}{z_j} \in \Sol\!\left(H_\Aa\!\left(\beta + a^{(j)}\right)\right)$ is an $\Aa$-hypergeometric function, where the parameter $\beta$ is shifted by $a^{(j)}$.
\end{lemma}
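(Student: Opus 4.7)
The plan is to verify that $\partial_j \Phi(z)$ lies in the kernel of every generator of the shifted ideal $H_\Aa(\beta + a^{(j)})$, which by \cref{eq:toricOperators} and \cref{eq:homogeneousOperators} consists of the \emph{same} toric operators $\square_l$ for $l\in\mathbb L$ together with the shifted homogeneous operators $E_i(\beta + a^{(j)}) = \sum_{k=1}^N a_i^{(k)} z_k \partial_k + \beta_i + a_i^{(j)}$. Thus it suffices to establish two commutation identities in the Weyl algebra $D$, and the lemma then follows by applying each to $\Phi$.

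For the toric operators the verification is trivial: since $\square_l$ is a polynomial in the commuting generators $\partial_1,\ldots,\partial_N$ only, it commutes with $\partial_j$. Hence $\square_l \bullet (\partial_j \Phi) = \partial_j \bullet (\square_l \bullet \Phi) = 0$ for all $l\in\mathbb L$.

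The only non-trivial step is the computation for the homogeneous operators, namely to show
\begin{align*}
E_i(\beta + a^{(j)}) \cdot \partial_j = \partial_j \cdot E_i(\beta) \qquad \text{in } D
\end{align*}
for every $i=0,\ldots,n$. To carry this out I would use the only non-vanishing commutator $[\partial_j, z_j] = 1$ to rewrite $z_j \partial_j^2 = (\partial_j z_j - 1)\partial_j = \partial_j (z_j \partial_j) - \partial_j$, while for $k \neq j$ the generators $z_k$ and $\partial_j$ commute, so $z_k \partial_k \partial_j = \partial_j z_k \partial_k$. Expanding $E_i(\beta + a^{(j)}) \cdot \partial_j$ and applying these relations, the extra term $a_i^{(j)}\partial_j$ produced by the $k=j$ summand precisely cancels the shift $a_i^{(j)}\partial_j$ coming from $\beta_i \mapsto \beta_i + a_i^{(j)}$, leaving $\partial_j \cdot E_i(\beta)$.

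With this identity in hand, $E_i(\beta + a^{(j)}) \bullet (\partial_j \Phi) = \partial_j \bullet (E_i(\beta) \bullet \Phi) = 0$, so $\partial_j \Phi$ is annihilated by every generator of $H_\Aa(\beta + a^{(j)})$ and therefore belongs to $\Sol(H_\Aa(\beta + a^{(j)}))$. The main obstacle is simply bookkeeping the Weyl-algebra commutator correctly; there is no global or analytic difficulty since the statement is purely algebraic and the domain of holomorphy of $\partial_j \Phi$ coincides with that of $\Phi$.
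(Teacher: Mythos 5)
Your proposal is correct and follows exactly the paper's argument: the toric operators commute with $\partial_j$ trivially, and the key identity $\partial_j \, E_i(\beta) = E_i\bigl(\beta + a^{(j)}\bigr)\, \partial_j$ is verified via the single non-trivial commutator $[\partial_j, z_j]=1$, with the extra term $a_i^{(j)}\partial_j$ absorbed into the shifted parameter. Your computation is a correctly worked-out version of the step the paper dismisses as "easily verified."
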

\begin{proof}
	The generator $\partial_j$ commutes with $\square_l$ of equation \cref{eq:toricOperators}. By the use of the commutator relations of the Weyl algebra we can verify easily $\partial_j E_i(\beta) = E_i\!\left(\beta + a_i^{(j)}\right) \partial_j$, which shows the assertion.
\end{proof}

Another simple but useful consequence of the existence of a linear map $h : \mathbb Z^{n+1}\rightarrow \mathbb Z$ with $h(a)=1$ for all $a\in\Aa$ is the following observation.
\begin{lemma} \label{lem:HomogenityRelationsForAa} 
    Let $\Aa = \{a^{(1)},\ldots,a^{(N)} \} \subset\mathbb Z^{n+1}$ be a vector configuration satisfying the conditions for an $\Aa$-hypergeometric system (i.e.\ $\operatorname{span}_{\mathbb R}(\Aa) = \mathbb R^{n+1}$ and all elements of $\Aa$ lying on a common hyperplane off the origin) and let $\sigma \subseteq \{1,\ldots,N\}$ be an index set of cardinality $n+1$, such that $\Aas := \{ a^{(j)} \}_{j\in\sigma}$ is also full dimensional. In other words, $\sigma$ describes a full dimensional simplex from points of $\Aa$. Then we have
    \begin{align}
        \sum_{i\in\sigma} \left(\Aas^{-1}\Aabs\right)_{ij} = 1 \label{eq:AA1a}
    \end{align}
    for all $j=1,\ldots,N$ where $\bar\sigma$ is the complement of $\sigma$. If additionally $\Aa$ is a homogenized point configuration, i.e.\ $\Aa$ is of the form $\Aa = \begin{psmallmatrix} 1 & \ldots & 1 \\ & A & \end{psmallmatrix}$, it follows
    \begin{align}
        \sum_{i\in\sigma} \left(\Aas^{-1}\nuu\right)_i = \nu_0 \label{eq:AA1b}
    \end{align}
    for any complex vector $\nuu = (\nu_0,\nu_1,\ldots,\nu_n)\in\mathbb C^{n+1}$.
\end{lemma}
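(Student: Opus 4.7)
The plan is to exploit the hyperplane hypothesis that the paper imposes on $\Aa$, namely the existence of a linear functional $h : \mathbb Z^{n+1}\to\mathbb Z$ with $h(a)=1$ for every $a\in\Aa$. Written as a row vector, this means $h\Aa = \mathbf{1}_N^\top$, and in particular $h\Aas = \mathbf{1}_{n+1}^\top$ and $h\Aabs = \mathbf{1}_{|\bar\sigma|}^\top$. Since $\sigma$ indexes a full-dimensional simplex, $\Aas$ is invertible, so we can solve the first of these for $h$:
\begin{align}
    h = \mathbf{1}_{n+1}^\top \, \Aas^{-1}\point
\end{align}

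Substituting this back into $h\Aabs = \mathbf{1}_{|\bar\sigma|}^\top$ yields
\begin{align}
    \mathbf{1}_{n+1}^\top \, \Aas^{-1}\Aabs = \mathbf{1}_{|\bar\sigma|}^\top \comma
\end{align}
and reading off the $j$-th entry gives exactly $\sum_{i\in\sigma}(\Aas^{-1}\Aabs)_{ij}=1$, proving \cref{eq:AA1a}.

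For \cref{eq:AA1b}, I would observe that in the special case when $\Aa$ is a homogenized point configuration, the first row of $\Aa$ is $(1,1,\ldots,1)$, so the functional $h$ can be chosen concretely as $h = e_1^\top = (1,0,\ldots,0)$. The identity $h\Aas = \mathbf{1}_{n+1}^\top$ then reads $e_1^\top = \mathbf{1}_{n+1}^\top \Aas^{-1}$, and applying both sides to $\nuu \in\mathbb C^{n+1}$ gives
\begin{align}
    \sum_{i\in\sigma}(\Aas^{-1}\nuu)_i = \mathbf{1}_{n+1}^\top \, \Aas^{-1}\,\nuu = e_1^\top \nuu = \nu_0 \point
\end{align}

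No step here is a real obstacle; the only mild subtlety is recognizing that the ``hyperplane off the origin'' assumption built into the definition of $\Aa$-hypergeometric data is precisely what makes $\mathbf{1}_{n+1}^\top\Aas^{-1}$ a left inverse for the top row of $\Aa$, and that in the homogenized case this left inverse is forced to equal $e_1^\top$ by the explicit shape of $\Aa$.
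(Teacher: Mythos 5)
Your proof is correct and is essentially the paper's argument: the paper derives \cref{eq:AA1a} by noting that $\Bb=\begin{psmallmatrix}-\Aas^{-1}\Aabs\\ \mathbbm 1\end{psmallmatrix}$ is a Gale dual whose column sums must vanish because $h\Aa=\mathbf 1^\top$, which is the same computation you perform directly with $h=\mathbf 1_{n+1}^\top\Aas^{-1}$, just without the Gale-dual packaging. The second identity is likewise obtained in the paper from the first row of $\Aas\Aas^{-1}\nuu=\nuu$, which matches your $e_1^\top=\mathbf 1_{n+1}^\top\Aas^{-1}$ step.
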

\begin{proof}
    $\Bb=\begin{psmallmatrix} -\Aas^{-1}\Aabs \\ \mathbbm{1}\end{psmallmatrix}\in\mathbb{Q}^{N\times r}$ is a possible Gale dual of $\Aa$, which can be verified by the direct computation $\Aa \Bb = 0$. From the existence of a linear form $h$ with $h(a) = 1$ for all $a\in\Aa$ it follows in particularly that $\sum_j \Bb_{jk}=0$ and thereby equation \cref{eq:AA1a}. The second statement \cref{eq:AA1b} follows trivially from $\sum_{j,k} (\Aas)_{ij} (\Aas^{-1})_{jk} \nuu_k=\nuu_i$.
\end{proof}

As seen in \cref{thm:HolRankAHyp}, $\Aa$-hypergeometric systems are in a certain sense well-behaved, and we can construct a basis of its solution space $\Sol(H_\Aa(\beta))$. There are several ways to construct those bases. In the following section we will give a basis in terms of multivariate power series, which was the first solution of $\Aa$-hypergeometric systems invented in \cite{GelfandHolonomicSystemsEquations1988}. Thereby, the observations in \cref{lem:HomogenityRelationsForAa} will help us to simplify the convergence conditions of those series. There are many alternatives to construct a basis of the solution space of $\Aa$-hypergeometric systems. We refer to \cite{Matsubara-HeoLaplaceResidueEuler2018} for a collection of various types of bases and to \cite{SaitoGrobnerDeformationsHypergeometric2000} for a systematic treatment of series solutions.


\vspace{.3\baselineskip}
\subsection{\texorpdfstring{$\Gamma$}{\unichar{"0393}}-series}\label{ssec:GammaSeries}
\vspace{.3\baselineskip}

%

Let $\Aa\in\mathbb Z^{(n+1)\times N}$ be a full rank integer matrix with $n+1\leq N$ and $\mathbb {L} = \operatorname{ker}_\mathbb{Z} (\Aa)$ its kernel with $\rank (\mathbb L) = N-n-1 =: r$ as before. Then for every $\gamma\in\mathbb{C}^N$ the formal series invented in \cite{GelfandHolonomicSystemsEquations1988}\glsadd{gammaSer}\glsunset{gammaSer}
\begin{align}
    \varphi_\gamma(z) = \sum_{l\in\mathbb L} \frac{z^{l+\gamma}}{\Gamma(\gamma+l+1)} \label{eq:GammaSeriesOriginalDef}
\end{align}
is called a \textit{$\mathit\Gamma$-series}. We recall the use of a multi-index notation, i.e.\ we write $z^{l+\gamma} = \prod_{i=1}^N z_i^{l_i+\gamma_i}$ and $\Gamma(\gamma+l+1) = \prod_{i=1}^N \Gamma(\gamma_i+l_i+1)$. For an appropriate choice of $\gamma$, those $\Gamma$-series are formal solutions of the $\Aa$-hypergeometric systems \cref{eq:AhypIdeal}.

\begin{lemma}[$\Gamma$-series as formal solutions of GKZ hypergeometric systems \cite{GelfandHolonomicSystemsEquations1988,GelfandHypergeometricFunctionsToral1989}]
    Let $\gamma\in\mathbb C^N$ be a complex vector satisfying $\Aa \gamma + \beta = 0$. Then the $\Gamma$-series $\varphi_\gamma(z)$ is a formal solution of the $\Aa$-hypergeometric system
    \begin{align*}
    H_\Aa (\beta) \bullet \varphi_\gamma(z)=0 \point
    \end{align*}
\end{lemma}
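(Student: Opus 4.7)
The plan is to apply each of the two types of generators of $H_\Aa(\beta)$ directly to the formal series $\varphi_\gamma(z)$ and check that the result vanishes termwise (for the homogeneous operators) or after a lattice reindexing (for the toric operators).

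For the homogeneous operators $E_i(\beta)=\sum_j a_i^{(j)} z_j\partial_j+\beta_i$, I would use that $z_j\partial_j$ acts on the Laurent monomial $z^{l+\gamma}$ as multiplication by $l_j+\gamma_j$. Hence
\begin{align*}
E_i(\beta)\bullet\varphi_\gamma(z)
=\sum_{l\in\mathbb L}\left(\sum_{j=1}^N a_i^{(j)}(l_j+\gamma_j)+\beta_i\right)\frac{z^{l+\gamma}}{\Gamma(\gamma+l+1)} .
\end{align*}
The parenthesis equals the $i$-th component of $\Aa(\gamma+l)+\beta=\Aa\gamma+\Aa l+\beta=-\beta+0+\beta=0$, where we used the hypothesis $\Aa\gamma+\beta=0$ and $l\in\mathbb L=\ker_{\mathbb Z}(\Aa)$. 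So the series is annihilated termwise by each $E_i(\beta)$.

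For the toric operators I decompose every $l\in\mathbb L$ as $l=l^+-l^-$ with $l^\pm\in\mathbb N_0^N$ of disjoint support, so that $\square_l=\partial^{l^+}-\partial^{l^-}$. The key calculation is
\begin{align*}
\partial^{l^+}\bullet\frac{z^{m+\gamma}}{\Gamma(\gamma+m+1)}
=\frac{z^{m+\gamma-l^+}}{\Gamma(\gamma+m-l^++1)} ,
\end{align*}
which follows from $\partial_j z^{\alpha}=\alpha_j z^{\alpha-e_j}$ together with the functional identity $\alpha_j/\Gamma(\alpha_j+1)=1/\Gamma(\alpha_j)$ (and the convention $1/\Gamma$ for the poles, which makes every step well-defined even if an entry of $\gamma$ is a non-positive integer). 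Applying this formula to each term of $\varphi_\gamma(z)$ and reindexing $m''=m-l^+$ gives
\begin{align*}
\partial^{l^+}\bullet\varphi_\gamma(z)=\sum_{m''\in\mathbb L-l^+}\frac{z^{m''+\gamma}}{\Gamma(\gamma+m''+1)} ,
\end{align*}
and analogously for $\partial^{l^-}$ with $\mathbb L-l^+$ replaced by $\mathbb L-l^-$.

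The final step, which I expect to be the main (but purely combinatorial) obstacle, is to recognize that the two affine lattices $\mathbb L-l^+$ and $\mathbb L-l^-$ coincide as subsets of $\mathbb Z^N$: if $m+l^+\in\mathbb L$ then $m+l^-=(m+l^+)-l\in\mathbb L$ because $l\in\mathbb L$, and the converse inclusion is symmetric. Hence both summations run over the same index set with the same summands, so $\square_l\bullet\varphi_\gamma(z)=0$. Combined with the previous step this proves that $H_\Aa(\beta)\bullet\varphi_\gamma(z)=0$ as formal series in $z$.
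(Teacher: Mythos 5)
Your proof is correct and follows essentially the same route as the paper: termwise verification of the homogeneity operators using $\Aa(\gamma+l)+\beta=0$, and the identity $\partial^u z^r=\tfrac{\Gamma(r+1)}{\Gamma(r-u+1)}z^{r-u}$ together with the invariance of the series under shifting the index set by an element of $\mathbb L$ (your observation that $\mathbb L-l^+=\mathbb L-l^-$ is exactly the paper's statement $\varphi_{\gamma-u}=\varphi_{\gamma-v}$). No gaps.
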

\begin{proof}
    For any $u\in\mathbb N_0^N$ and any $r\in\mathbb C^N$ it is $\left(\frac{\partial}{\partial z}\right)^u z^r = \frac{\Gamma(r+1)}{\Gamma(r-u+1)} z^{r-u}$ (with an appropriate limit, respectively). Furthermore, one can add an element of $\mathbb L$ to $\gamma$, without changing the $\Gamma$-series. Every element of $\mathbb L$ can be written as $u-v$, where $u,v\in\mathbb N_0^N$ are two vectors satisfying $\Aa u = \Aa v$. Therefore, we have
    \begin{align}
        \partial^u \bullet \varphi_{\gamma}(z) &= \sum_{l\in\mathbb L} \frac{z^{l+\gamma-u}}{\Gamma(\gamma+l-u +1)} = \varphi_{\gamma-u}(z) = \varphi_{\gamma-v}(z) = \partial^v \bullet \varphi_{\gamma}(z)
    \end{align}
    which shows that the $\Gamma$-series fulfil the toric operators \cref{eq:toricOperators}. For the homogeneous operators \cref{eq:homogeneousOperators} one considers
    \begin{align}
    \sum_{j=1}^N a^{(j)} z_j \partial_j \bullet \varphi_\gamma(z) = \sum_{l\in\mathbb L} \left(\sum_{j=1}^N a^{(j)} (\gamma_j+l_j)\right) \frac{z^{l+\gamma}}{\Gamma(\gamma+l+1)} = - \beta \varphi_\gamma(z) \point
    \end{align}
\end{proof}

The restriction $\Aa \gamma + \beta=0$ allows in general many choices of $\gamma$. Let $\sigma \subseteq \{1,\ldots,N\}$ be an index set with cardinality $n+1$, such that the matrix $\Aa_\sigma := \{a^{(j)}\}_{j\in \sigma}$ restricted to columns corresponding to $\sigma$ is non-singular $\det (\Aas) \neq 0$. Denote by $\bar\sigma = \{1,\ldots,N\} \setminus \sigma$ the complement of $\sigma$ and $\Aabs:=\{a^{(j)}\}_{j\in \bar\sigma}$. If one sets $\gamma_\sigma = -\Aas^{-1} (\beta + \Aabs k)$ and $\gamma_{\bar\sigma} = k$ the condition $\Aa \gamma + \beta = 0$ is satisfied for any $k\in\mathbb{C}^{r}$.

On the other hand we can split the lattice $\mathbb L = \left\{ l\in\mathbb Z^N \,\rvert\, \Aa l =0\right\}$ in the same way $\Aas l_\sigma + \Aabs l_{\bar\sigma}=0 $ and obtain a series only over $l_{\bar\sigma}$
\begin{align}
    \varphi_{\gamma_\sigma} (z) &= \sum_{\substack{l_{\bar\sigma}\in\mathbb Z^r \\ \textrm{s.t. } \Aas^{-1}\Aabs l_{\bar\sigma}\in\mathbb Z^{n+1} }} \frac{z_\sigma^{-\Aas^{-1} (\beta+\Aabs k + \Aabs l_{\bar\sigma}) } z_{\bar\sigma}^{k+l_{\bar\sigma}}}{\Gamma\!\left(-\Aas^{-1}(\beta+\Aabs k +\Aabs l_{\bar\sigma})+1\right) \Gamma(k+l_{\bar\sigma}+1)} \point
\end{align}
In order to simplify the series we will restrict $k\in\mathbb Z^r$ to integers, since terms with $(k+l_{\bar\sigma})_i\in\mathbb{Z}_{< 0}$ will vanish. The $\Gamma$-series depends now on $k$ and $\sigma$\glsadd{gammaSer}\glsunset{gammaSer}
\begin{align}
    \varphi_{\sigma,k}(z) =z_\sigma^{-\Aas^{-1}\beta} \sum_{\lambda\in\Lambda_k} \frac{z_\sigma^{-\Aas^{-1}\Aabs\lambda} z_{\bar\sigma}^\lambda}{\lambda!\ \Gamma\!\left(-\Aas^{-1}(\beta+\Aabs \lambda)+1\right)} \label{eq:GammaPowerSeriesVarphi}
\end{align}
where $\gls{Lambdak}=\left\{ k + l_{\bar\sigma} \in \mathbb{N}_0^r \,\rvert\, \Aabs l_{\bar\sigma} \in \mathbb Z \Aas \right\}\subseteq \mathbb{N}_0^r$. Therefore, the $\Gamma$-series is turned into a power series in the variables\glsadd{effectiveVars}
\begin{align}
	y_j = \frac{(z_{\bar\sigma})_j}{\prod_i (z_\sigma)_i^{(\Aas^{-1} \Aabs)_{ij}}} \quad\text{for } j=1,\ldots,r \text{ .}\label{eq:effectiveVarsGammaSeries}
\end{align}
Note that these are the same ``effective'' variables, which we introduced in \cref{eq:effectiveVarsADisc}.

However, for certain values of $\beta\in\mathbb C^{n+1}$ some of the series \cref{eq:GammaPowerSeriesVarphi} may vanish. A $\Gamma$-series $\varphi_{\sigma,k}$ vanishes if and only if for every $\lambda\in\Lambda_k$ the vector $\Aas^{-1}(\beta+\Aabs\lambda)$ has at least one (strictly) positive integer component. Hence, a $\Gamma$-series will vanish if and only if $\beta$ takes values on one of certain countable hyperplanes. Therefore, we will demand $\beta$ to be very generic in order to avoid those cases. Note, that the set of very generic $\beta$ is dense in $\mathbb C^{n+1}$.\bigskip

At next, we will consider the dependence of $\varphi_{\sigma,k}$ on the choice of $k\in\mathbb Z^r$. Note first, that $\Lambda_k = \Lambda_{k+k'}$ if and only if $\Aas^{-1}\Aabs k' \in \mathbb Z^{n+1}$. Furthermore, there is always such a positive integer vector $k'\in\mathbb N^r$ satisfying $\Aas^{-1}\Aabs k' \in \mathbb Z^{n+1}$, e.g.\ by choosing $k'\in |\det(\Aas)|\cdot \mathbb N^r$. Thus, we can shift every $k\in\mathbb Z^r$ to positive integers, which is why we want to restrict our consideration to $k\in\mathbb N_0^r$. Moreover, the cardinality of the set $\left\{\Lambda_k \,\rvert\, k \in \mathbb Z^r\right\} = \left\{\Lambda_k \,\rvert\, k \in \mathbb N_0^r\right\}$ is given by $|\det(\Aas)|$, which is nothing else than the volume of $\Conv(\Aas)$ \cite{Fernandez-FernandezIrregularHypergeometricDmodules2009}. In order to generate a basis of the solution space $\Sol\!\left(H_\Aa(\beta)\right)$, we want to pick those elements from $\left\{\Lambda_k \,\rvert\, k \in \mathbb N_0^r\right\}$ such that the resulting $\Gamma$-series $\varphi_{\sigma,k}$ are linearly independent. Thus, let $k^{(1)},\ldots,k^{(s)}$ be representatives of $\bigslant{\mathbb Z^{n+1}}{\mathbb Z\Aas}$, i.e.\
\begin{align}
	\bigslant{\mathbb Z^{n+1}}{\mathbb Z\Aas} = \left\{ \left[\Aabs k^{(j)}\right] \, \rvert \, j = 1, \ldots, s = \vol\!\left(\Conv (\Aas)\right) \right\} \point \label{eq:representativesK}
\end{align}
Hence, $\left\{\Lambda_{k^{(j)}} \,\rvert\, j=1,\ldots,s\right\}$ defines a partition of $\mathbb N_0^r$ \cite{Fernandez-FernandezIrregularHypergeometricDmodules2009}:
\begin{align}
	\Lambda_{k^{(i)}} \cap \Lambda_{k^{(j)}} = \emptyset\quad \text{for all} \quad i\neq j \quad\text{and} \qquad \bigcup_{j=1}^s \Lambda_{k^{(j)}} = \mathbb N^r_0 \point \label{eq:LambdaPartition}
\end{align}
Therefore, the $\Gamma$-series $\varphi_{\sigma,k^{(1)}},\ldots,\varphi_{\sigma,k^{(s)}}$ have different support and are linear independent if none of them is identically zero, which we have avoided by assuming very generic $\beta$.

\begin{example}
	We want to illustrate the construction of partitions of $\mathbb N_0^r$ by means of the following example. Let
	\begin{align}
		\Aa = \begin{pmatrix}
		      	1 & 1 & 1 & 1 \\
		      	1 & 0 & 2 & 0 \\
		      	0 & 1 & 0 & 2 
		      \end{pmatrix}
	\end{align}
	be a vector configuration and $\sigma = \{1,3,4\}$ an index set, where the corresponding full dimensional simplex has volume $\vol\!\left(\Conv(\Aas)\right) = |\det \Aas | = 2$. As we will later see, this is a specific configuration for a fully massive bubble self-energy graph. Considering the inverse of $\Aas$, we find that the ideal $\mathbb Z \Aas = (\mathbb Z,\mathbb Z,2\mathbb Z)^\top$ contains only even integers in its last component. Hence, the quotient ring $\bigslant{\mathbb Z^3}{\mathbb Z \Aas} = \{ (\mathbb Z,\mathbb Z,2\mathbb Z)^\top, (\mathbb Z,\mathbb Z,2\mathbb Z + 1)^\top \}$ consists in the two equivalence classes with even and odd integers in the last component. Thus, we can choose the representatives $[0,0,0]^\top, [0,0,1]^\top$ or equivalently  $[0,0,0]^\top, [1,0,1]^\top$. Due to \cref{eq:representativesK}, these representatives will be generated by $k^{(1)} = 0$ and $k^{(2)} = 1$. The two resulting summation regions $\Lambda_{k^{(1)}} = 2\mathbb N_0$ and $\Lambda_{k^{(2)}} = 2\mathbb N_0+1$ consist in even and odd natural numbers, respectively.
\end{example}

$\Gamma$-series were hitherto only treated as formal series. Therefore, we want to fill this gap now and examine the convergence of $\Gamma$-series. Thus, we will show in the following that there is a non-vanishing convergence radius $R\in\mathbb R^r_{>0}$, such that \cref{eq:GammaPowerSeriesVarphi} converges absolutely for $|y_j|< R_j$ with $j=1,\ldots,r$. Hence, we have to estimate the summands of $\Gamma$-series. As an application of the Stirling formula, one can state the following lemma.

\begin{lemma}[Bounds of $\Gamma$-functions (similar to \cite{StienstraGKZHypergeometricStructures2005})] \label{lem:gammaest}
    For every $C\in\mathbb C$ there are constants $\kappa,R\in\mathbb R_{>0}$ independent of $M$, such that
    \begin{align}
    \frac{1}{|\Gamma (C+M)|} \leq \kappa R^{|M|} |M|^{-M} \label{eq:gammest}
    \end{align}
    for all $M\in\mathbb Z$.
\end{lemma}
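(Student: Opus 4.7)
The plan is to apply Stirling's asymptotic formula for the Gamma function, splitting the integers $M$ into the three regimes $M\geq 1$, $M=0$ and $M\leq -1$, and absorbing finitely many bounded terms by enlarging $\kappa$.

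First I would handle $M\geq 1$. Stirling's formula yields
\begin{align*}
\log|\Gamma(C+M)| = \bigl(M+\Re C - \tfrac12\bigr)\log M - M + O(1)
\end{align*}
uniformly in $M\geq 1$ with $C$ fixed. Exponentiating and inverting gives $|\Gamma(C+M)|^{-1}\leq \kappa_1\, e^{M} M^{-M} M^{-\Re C + 1/2}$. The subpolynomial factor $M^{-\Re C + 1/2}$ is bounded by $R_1^M$ for any $R_1>1$ (and $M\geq 1$), after a suitable constant is absorbed into $\kappa_1$. Hence $|\Gamma(C+M)|^{-1}\leq \kappa_1'\,(eR_1)^{M} M^{-M}$, which is the claimed bound with $|M|^{-M}=M^{-M}$.

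For $M\leq -1$, I would apply Euler's reflection formula
\begin{align*}
\Gamma(C+M) = \frac{\pi}{\sin\!\bigl(\pi(C+M)\bigr)\,\Gamma(1-C-M)}
\end{align*}
and invoke Stirling on $\Gamma(1-C-M)$ with $N:=-M\to+\infty$. Provided $C\notin\mathbb Z$, the factor $|\sin(\pi(C+M))|$ is bounded below uniformly in $M$ (it equals $|\sin(\pi C)|$ for real non-integer $C$, and is bounded below by $\sinh(\pi|\Im C|)>0$ for $\Im C\neq 0$). The same Stirling estimate as above then produces $|\Gamma(C+M)|^{-1}\leq \kappa_2\, N^{N} e^{-N} N^{\Re C - 1/2}$, and since $N^{N}=|M|^{|M|}=|M|^{-M}$ for $M<0$, this is again of the required form (here the polynomial factor and even the decaying $e^{-N}$ are trivially absorbed into $R^{|M|}$). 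If $C\in\mathbb Z_{\leq 0}$, then $\Gamma(C+M)$ has a pole for all sufficiently negative $M$, so the left-hand side vanishes there and the inequality is vacuous; the remaining positive-argument cases are covered by the first regime after an index shift.

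The case $M=0$ is immediate, as $|M|^{-M}=0^{0}=1$ by convention and $1/|\Gamma(C)|$ is a finite constant (or zero, if $C$ is a pole of $\Gamma$). The remaining finitely many small $|M|$ contribute only bounded terms that can be absorbed into $\kappa$. The only delicate point in the argument is the uniform lower bound on $|\sin(\pi(C+M))|$ for negative $M$, which is precisely where the assumption that $C$ is a complex parameter (with the integer case handled trivially by the poles of $\Gamma$) is used; everything else reduces to a careful comparison between the two regimes so that $\kappa$ and $R$ can be chosen simultaneously for all $M\in\mathbb Z$.
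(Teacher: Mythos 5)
Your proof is correct, but it takes a genuinely different route from the paper's. The paper never invokes the full Stirling asymptotic for $\Gamma(C+M)$; instead it writes $\Gamma(C+M)=\Gamma(C)\prod_{j=0}^{M-1}(C+j)$ for $M>0$ and $\Gamma(C+M)=\Gamma(C)\prod_{j=1}^{|M|}(C-j)^{-1}$ for $M<0$, bounds the factors elementarily via the (reverse) triangle inequality, and only then applies Stirling to $M!$; this yields completely explicit constants, namely $\kappa=|\Gamma(C)|^{-1}$ and $R=\max(1+|C|,\,eQ^{-1})$ with $Q=\min_j\bigl|\tfrac{|C|-j+1}{j}\bigr|$. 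You instead use the asymptotic expansion of $\log\Gamma$ directly for $M\geq 1$ and Euler's reflection formula for $M\leq -1$. Your approach is shorter and more conceptual but non-constructive in the constants (everything hides in the $O(1)$ and in absorbing finitely many exceptional $M$, which you do correctly acknowledge); the paper's approach is longer but produces usable constants and sidesteps the question of where Stirling's formula is uniformly valid. Both handle the integer-$C$ case the same way, via the poles of $\Gamma$ and a reduction to the generic case.

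One small point to fix: in the reflection-formula step you argue that $|\sin(\pi(C+M))|$ is bounded \emph{below}, but since
\begin{align*}
\frac{1}{|\Gamma(C+M)|}=\frac{|\sin(\pi(C+M))|\,|\Gamma(1-C-M)|}{\pi},
\end{align*}
what you actually need is an upper bound on the sine factor. This is harmless here because $\sin(\pi(C+M))=(-1)^M\sin(\pi C)$, so the factor is exactly the constant $|\sin(\pi C)|$ for every $M\in\mathbb Z$ and every $C\in\mathbb C$ (not only real $C$), but the stated direction of the estimate is the wrong one. The sign of the subpolynomial exponent ($N^{\Re C-1/2}$ versus $N^{1/2-\Re C}$) is likewise off, though immaterial since either factor is absorbed into $\kappa R^{|M|}$.
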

\begin{proof}
    Firstly, consider the non-integer case $C\notin \mathbb{Z}$. For $M>0$ it is
    \begin{align}
        |\Gamma(C+M)| &= |\Gamma(C)| \prod_{j=0}^{M-1} |C+j| \geq |\Gamma(C)| \prod_{j=0}^{M-1} \big| |C|- j\big| \nonumber\\
        &= |\Gamma(C)| \prod_{j=1}^M j \left|\frac{|C|-j+1}{j}\right| \geq M! \ Q^M |\Gamma(C)|
    \end{align}
    where $Q=\min \left|\frac{|C|-j+1}{j}\right| > 0$ and we used a variation of the triangle inequality $|a+b| \geq  \big| |a| - |b|\big|$ (also known as reverse triangle inequality). With Stirling's approximation one obtains further
    \begin{align}
        |\Gamma(C+M)| &\geq |\Gamma(C)| \sqrt{2\pi} Q^M M^{M+\frac{1}{2}} e^{-M} \geq |\Gamma(C)| \sqrt{2\pi} \left(\frac{Q}{e}\right)^M M^M \point
    \end{align}
    In contrast, for $M<0$ using the triangle inequality $|a-b| \leq |a| + |b|$ we have
    \begin{align}
        |\Gamma(C+M)| &= |\Gamma(C)| \prod_{j=1}^{|M|} \left|\frac{1}{C-j}\right| \geq |\Gamma(C)| \prod_{j=1}^{|M|} \left|\frac{1}{|C|+j}\right| \geq |\Gamma(C)| \prod_{j=1}^{|M|} \left|\frac{1}{|C|+|M|}\right| \nonumber\\
        &= \left(1+\frac{|C|}{|M|}\right)^{-|M|} |M|^{M} |\Gamma(C)| \geq |\Gamma(C)| \left(1+|C|\right)^{-|M|} |M|^{M} \point
    \end{align}
    By the setting $\kappa = |\Gamma(C)|^{-1}$ and $R=\max \left( 1+|C|,e Q^{-1}\right)$ one can combine both cases to equation \cref{eq:gammest}. The case $M=0$ is trivially satisfied whereat we set $0^0:=1$.
    
    Consider now the case where $C\in\mathbb Z$. If $C+M\leq 0$ the $\Gamma$-function has a pole and the lemma is satisfied automatically. For $C+M\geq 1$ it is $\Gamma(C+M) = \prod_{j=0}^{C+M-2} (j+1) \geq \prod_{j=0}^{C+M-2} \left(j+\frac{1}{2}\right) = \frac{\Gamma(C+M-\frac{1}{2})}{\Gamma(\frac{1}{2})} = \frac{1}{\sqrt \pi} \Gamma(C+M-\frac{1}{2})$ which recurs to the non-integer case with $C^\prime=C-\frac{1}{2}\notin \mathbb Z$.
\end{proof}

To apply this estimation to a product of $\Gamma$ functions the following lemma is helpful.

\begin{lemma} \label{lem:apowera}
    Let $a_1,\ldots,a_N\in\mathbb{R}_{>0}$ be a set of positive real numbers. Then it holds
    \begin{align}
        \left(\sum_{i=1}^N a_i\right)^{\sum_{i=1}^N a_i} \geq \prod_{i=1}^N a_i^{a_i} \geq \left(\frac{1}{N} \sum_{i=1}^N a_i\right)^{\sum_{i=1}^N a_i} \point
    \end{align}
\end{lemma}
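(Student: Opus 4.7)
The plan is to take logarithms of both sides and reduce everything to standard convexity statements about $f(x) = x\log x$. Write $S := \sum_{i=1}^N a_i$. The claimed inequalities are equivalent to
\begin{align*}
    S\log S \;\geq\; \sum_{i=1}^N a_i \log a_i \;\geq\; S\log(S/N)
\end{align*}
after taking the logarithm, where one uses that the quantities $a_i^{a_i}$, $S^S$, $(S/N)^S$ are all positive so the logarithm is order preserving.

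For the upper bound, I would simply observe that $0 < a_i \leq S$ for each $i$, hence $\log a_i \leq \log S$. Multiplying by $a_i > 0$ and summing over $i$ gives $\sum_i a_i \log a_i \leq \log S \cdot \sum_i a_i = S\log S$, which is the first inequality. This step is completely elementary.

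For the lower bound, the natural tool is the convexity of $f(x) = x\log x$ on $(0,\infty)$, which is immediate from $f''(x) = 1/x > 0$. Jensen's inequality applied with equal weights $1/N$ yields
\begin{align*}
    \frac{1}{N} \sum_{i=1}^N a_i \log a_i \;=\; \frac{1}{N} \sum_{i=1}^N f(a_i) \;\geq\; f\!\left(\frac{1}{N}\sum_{i=1}^N a_i\right) \;=\; \frac{S}{N} \log\!\left(\frac{S}{N}\right),
\end{align*}
and multiplying through by $N$ delivers the second inequality. Exponentiating both sides of each derived inequality gives the assertion.

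I do not expect any substantial obstacle here. The only point that deserves a brief remark is the degenerate case where some $a_i$ coincides with $S$ (forcing all other $a_j = 0$, excluded by hypothesis) or the boundary behaviour as $a_i \to 0^+$, where one invokes the standard convention $0^0 := 1$ and $\lim_{x\to 0^+} x\log x = 0$ to see that the inequalities remain valid under this limit. Since all $a_i$ are strictly positive by assumption, this is not needed for the stated lemma.
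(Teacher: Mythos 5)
Your proof is correct. Both inequalities follow from your argument: the upper bound from the monotone estimate $\log a_i \leq \log S$, and the lower bound from Jensen's inequality applied to the convex function $f(x)=x\log x$ with equal weights, followed by exponentiation.

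Your route differs from the paper's. The paper proves the right-hand inequality by a Cauchy-type forward--backward induction: it first establishes the case $N=2$ by reducing it to the Bernoulli inequality $\left(1+\frac{\rho-1}{\rho+1}\right)^{\rho+1}\geq\rho$ with $\rho=a_1/a_2\geq 1$, then iterates to handle $N=2^j$, and finally treats general $N$ by padding the list with copies of the mean $\mu=\frac{1}{N}\sum_i a_i$ until the length is a power of two. Your argument replaces this entire induction scheme with a single application of Jensen's inequality after taking logarithms, which is shorter and avoids the case analysis; the trade-off is that it invokes convexity of $x\log x$ as an external (albeit completely standard) fact, whereas the paper's proof is self-contained modulo Bernoulli. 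Your closing remark on the boundary behaviour as $a_i\to 0^+$ is correctly flagged as unnecessary under the hypothesis $a_i>0$.
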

\begin{proof}
    The left inequality is trivially true. The right inequality will be proven by considering different cases of $N$. For $N=2$ without loss of generality it is $\frac{a_1}{a_2} =: \rho \geq 1$. The latter is then equivalent to the Bernoulli inequality
    \begin{align}
        a_1^{a_1} a_2^{a_2} \geq \left(\frac{a_1+a_2}{2}\right)^{a_1+a_2} \Leftrightarrow  \left(1+\frac{\rho-1}{\rho+1}\right)^{\rho+1} \geq \rho \point
    \end{align}
    For $N=2^j$ with $j\in\mathbb N$ the lemma can be reduced to the case $N=2$ by an iterative use. All the other cases can be reduced to a $2^j$-case by adding the mean value $\mu := \frac{1}{N} \sum_{i=1}^N a_i$
    \begin{align}
        \left(\mu^\mu\right)^{2^j-N}  \prod_{i=1}^N a_i^{a_i} &\geq \left(\frac{\sum_{i=1}^N a_i + (2^j-N)\mu}{2^j}\right)^{\sum_{i=1}^N a_i + (2^j-N)\mu} \nonumber \\
        &= \left(\frac{\sum_{i=1}^N a_i}{N}\right)^{\sum_{i=1}^N a_i} \mu^{(2^j-N)\mu}
    \end{align}
    with $2^j-N>0$.
\end{proof}

Combining the estimations in \cref{lem:gammaest} and \cref{lem:apowera}, we can show the convergence of $\Gamma$-series by a convergent majorant.

\begin{theorem}[Convergence of $\Gamma$-series \cite{GelfandHolonomicSystemsEquations1988, GelfandHypergeometricFunctionsToral1989,StienstraGKZHypergeometricStructures2005}] \label{thm:GammaConverge}
    There is always a positive real tuple $R\in\mathbb R_{>0}^r$, such that the series $\varphi_{\sigma,k}$ \cref{eq:GammaPowerSeriesVarphi} converges absolutely for any $y\in\mathbb C^r$ with $|y_j| < R_j$ for $j=1,\ldots,r$.
\end{theorem}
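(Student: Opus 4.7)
The plan is to construct a convergent majorant for the series $\varphi_{\sigma,k}$ by bounding each summand termwise. First, I would rewrite the series purely in the effective variables $y_j$ of \cref{eq:effectiveVarsGammaSeries}, so that $\varphi_{\sigma,k}(z) = z_\sigma^{-\Aas^{-1}\beta} \sum_{\lambda\in\Lambda_k} c_\lambda y^\lambda$ with coefficients
\begin{align*}
c_\lambda = \frac{1}{\lambda!\ \Gamma\!\left(-\Aas^{-1}(\beta+\Aabs\lambda)+1\right)}.
\end{align*}
Convergence of $\sum_\lambda c_\lambda y^\lambda$ for $|y_j|<R_j$ (with $R_j>0$ small enough) will then immediately give the claimed convergence. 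The prefactor $z_\sigma^{-\Aas^{-1}\beta}$ is harmless since it is a fixed Laurent monomial.

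Next, I would bound $|c_\lambda|$. The $\Gamma$-factor in the denominator is a product over the $n+1$ components indexed by $i\in\sigma$. For each such component the argument is $-(\Aas^{-1}\beta)_i - \sum_j (\Aas^{-1}\Aabs)_{ij} \lambda_j + 1$. This is of the form $C_i + M_i(\lambda)$ with $C_i\in\mathbb C$ fixed and $M_i(\lambda)\in\mathbb Z$ (recall $\lambda\in\Lambda_k$ guarantees integrality). Applying \cref{lem:gammaest} componentwise yields constants $\kappa_i,R_i'>0$ with
\begin{align*}
\frac{1}{\left|\Gamma\!\left(-\Aas^{-1}(\beta+\Aabs\lambda)+1\right)\right|} \leq \prod_{i\in\sigma} \kappa_i \, (R_i')^{|M_i(\lambda)|}\, |M_i(\lambda)|^{-M_i(\lambda)}.
\end{align*}
Stirling applied to $1/\lambda!$ gives an analogous bound $\kappa'' \prod_j (R_j'')^{\lambda_j}\lambda_j^{-\lambda_j}$. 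Since $|M_i(\lambda)|$ grows at most linearly in $|\lambda|$, the factors $(R_i')^{|M_i|}$ contribute only an extra geometric factor in $|\lambda|$, which is easily absorbed into the radii $R_j$.

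The decisive step is to tame the ``entropy-like'' factor $\prod_{i\in\sigma}|M_i(\lambda)|^{-M_i(\lambda)} \cdot \prod_j \lambda_j^{-\lambda_j}$. Here the homogeneity relation of \cref{lem:HomogenityRelationsForAa}, namely $\sum_{i\in\sigma}(\Aas^{-1}\Aabs)_{ij}=1$ for every $j$, is essential: it guarantees that $\sum_{i\in\sigma}M_i(\lambda)$ equals $-|\lambda|$ up to a constant in $\lambda$. By \cref{lem:apowera} (applied to the positive parts of the $M_i$ together with the $\lambda_j$), the product $\prod_{i}|M_i|^{-M_i}\prod_j \lambda_j^{-\lambda_j}$ is bounded from above by $(C/|\lambda|)^{c|\lambda|}$ for suitable constants; the negative contributions of $M_i$ cancel the corresponding factor polynomially. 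This produces an overall bound $|c_\lambda| \leq K \prod_j \rho_j^{\lambda_j}$ for some $K>0$ and $\rho_j>0$, so that $\sum_\lambda |c_\lambda y^\lambda|$ is dominated by a convergent geometric multi-series as soon as $|y_j|<R_j:=1/\rho_j$.

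The main obstacle I expect is bookkeeping in the last step: the entries $M_i(\lambda)$ can be positive or negative (even for large $|\lambda|$, since $\Aas^{-1}\Aabs$ need not have non-negative entries), so \cref{lem:gammaest} genuinely produces two different behaviours, and \cref{lem:apowera} must be applied only to the positive part of $(M_1,\ldots,M_{n+1},\lambda_1,\ldots,\lambda_r)$. Making sure that the homogeneity identity $\sum_i M_i(\lambda) + |\lambda| = \text{const}$ lets the negative entries be reabsorbed into geometric factors, and verifying that the resulting $R_j$ are strictly positive, is the delicate part of the argument.
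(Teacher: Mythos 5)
Your proposal is correct and follows essentially the same route as the paper: termwise majorization via \cref{lem:gammaest}, the homogeneity identity of \cref{lem:HomogenityRelationsForAa} forcing the positive and negative parts of the exponents to balance, and \cref{lem:apowera} to cancel the entropy-like factors down to a geometric bound. The only cosmetic difference is that the paper absorbs the $1/\lambda!$ factors into the Gale dual $\Bb=\begin{psmallmatrix}-\Aas^{-1}\Aabs\\ \mathbbm 1_r\end{psmallmatrix}$ so that all $N$ Gamma factors are treated uniformly (with column sums exactly zero), whereas you handle the factorial separately; note also that the correct intermediate bound is merely geometric in $|\lambda|$ (of the form $N_+^{D}$ with $D$ linear in $|\lambda|$), not of the decaying form $(C/|\lambda|)^{c|\lambda|}$, though your final conclusion is the right one.
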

\begin{proof}
    The series in equation \cref{eq:GammaPowerSeriesVarphi} can be written in the form
    \begin{align}
        \sum_{\lambda\in\Lambda_k} \frac{y^\lambda}{\Gamma(\mathbf 1_N + C + \Bb\lambda)}
    \end{align}
    where $\Bb =\begin{psmallmatrix} -\Aas^{-1}\Aabs \\ \mathbbm{1}_r \end{psmallmatrix}\in\mathbb Q^{N\times r}$ is a Gale dual of $\Aa$, $y$ are the effective variables from \cref{eq:effectiveVarsGammaSeries} and $C = (-\Aas^{-1}\beta,\mathbf 0_r)$. From the definition of $\Lambda_k$ we see that $\Bb \lambda\in\mathbb Z^N$ for all $\lambda\in\Lambda_k$. Thus, one can estimate by \cref{lem:gammaest}
    \begin{align}
        \left| \prod_{i=1}^N \frac{1}{\Gamma(1 + C_i + \sum_{j=1}^r b_{ij} \lambda_j)}\right| \leq \kappa \prod_{i=1}^N R_i^{\left|\sum_{j=1}^r b_{ij}\lambda_j\right|} \left|\sum_{j=1}^r b_{ij}\lambda_j\right|^{-\sum_{j=1}^r b_{ij} \lambda_j} \comma
    \end{align}
    where $b_{ij}$ are the components of $\Bb$. Furthermore, by \cref{lem:HomogenityRelationsForAa} it is also $\sum_{i=1}^N b_{ij} = 0$, which will imply $D:=\sum_{b_{ij}>0} b_{ij}\lambda_j = - \sum_{b_{ij}<0} b_{ij}\lambda_j = \frac{1}{2} \left|\sum_{ij} b_{ij}\lambda_j\right|$. With \cref{lem:apowera} we can continue to estimate
    \begin{align}
        \prod_{i=1}^N \left|\sum_{j=1}^r b_{ij}\lambda_j\right|^{-\sum_{j=1}^r b_{ij} \lambda_j} \leq D^D \ N_+^D D^{-D} = N_+^D
    \end{align}
    by splitting the product in $N_+$ factors where $\sum_{j=1}^r b_{ij}\lambda_j>0$ is positive and the $N_-$ factors where $\sum_{j=1}^r b_{ij}\lambda_j<0$ is negative. With $R_{\textrm{max}}=\max_i (R_i)$ we obtain
    \begin{align}
        \left| \prod_{i=1}^N \frac{1}{\Gamma(a_i+\sum_{j=1}^r C_{ij} \lambda_j)}\right| \leq \kappa N_+^D R_{\textrm{max}}^{2D} \point
    \end{align}
    Thus, the $\Gamma$-series will be bounded by a geometric series and there is always a non-vanishing region of absolute convergence.
\end{proof}

Applying this result to the variables $z_1,\ldots,z_N$, a $\Gamma$-series $\varphi_{\sigma,k}$ converges absolutely if those variables satisfy
\begin{align}
	\ln |y_j| = \sum_{i=1}^N b_{ij} \ln |z_i| < \rho_j \label{eq:convergenceConditionLog}
\end{align}
for $j=1,\ldots,r$ where $\rho_j\in\mathbb R$ are real numbers and $b_{ij}$ denote the elements of the Gale dual $\Bb = \Bb(\sigma) =\begin{psmallmatrix} -\Aas^{-1}\Aabs \\ \mathbbm{1}_r \end{psmallmatrix}$. Let
\begin{align}
	C(\sigma) = \left\{ \omega\in\mathbb R^N \,\rvert\, \omega\Bb(\sigma) \geq 0 \right\}
\end{align}
be the cone generated by the Gale dual $\Bb(\sigma)$. As assumed, $\Aa$ is an acyclic vector configuration and therefore $\Bb$ is totally cyclic (see \cref{ssec:GaleDuality}). Thus, there is a vector $p\in\mathbb R^N$ such that $- p \Bb(\sigma) = \rho$, and we can reformulate the convergence condition \cref{eq:convergenceConditionLog} into points of a translated cone 
\begin{align}
    \left(-\ln|z_1|,\ldots,-\ln|z_N|\right) \in \relint\!\left(C(\sigma)\right) + p \point \label{eq:logMapTranslatedCone}
\end{align}

Comparing the cone $C(\sigma)$ with the results from \cref{eq:subdivisionCondGale} and \cref{eq:subdivisionCondGaleBeta} we obtain $\relint\!\left(C(\sigma)\right) = \{\omega\in\mathbb R^N \,\rvert\, \sigma\in \Ss(\Aa,\omega)\}$. Therefore, for any regular triangulation $\Tt = \Ss(\Aa,\omega)$ of $\Aa$ generated by a height $\omega\in\mathbb R^N$ we consider the intersection $C(\Tt) := \cap_{\sigma\in\Tt} C(\sigma)$ which is nothing else than a secondary cone. Hence, $C(\Tt)$ is full dimensional \cite{DeLoeraTriangulations2010} and thus also the intersection of translated cones is full dimensional. Therefore, there is a common convergence region of all $\{\varphi_{\sigma,k}\}_{\sigma\in\Tt}$. Note that regularity of triangulations is necessary, as otherwise $C(\Tt)$ will not be full dimensional \cite{DeLoeraTriangulations2010}. Furthermore, all those series will be linearly independent. Recall from \cref{thm:HolRankAHyp} that the holonomic rank for generic $\beta$ is given by the volume of $\Conv(A)$. Hence, by collecting all $\Gamma$-series $\varphi_{\sigma,k}$ corresponding to maximal cells $\sigma$ of a triangulation $\Tt$ and varying $k$ according to \cref{eq:representativesK} we will obtain exactly $\vol(\Conv(A))$ independent series $\varphi_{\sigma,k}$. We will combine all these results in the following theorem.

\begin{theorem}[Solution space with $\Gamma$-series \cite{GelfandHolonomicSystemsEquations1988, GelfandHypergeometricFunctionsToral1989, GelfandGeneralHypergeometricSystems1992}] \label{thm:SolutionSpaceGammaSeries}
    Let $\Tt$ be a regular triangulation of the vector configuration $\Aa\subset\mathbb Z^{n+1}$ and $\beta\in\mathbb C^{n+1}$ very generic with respect to every $\sigma\in \hatT$, where $\hatT$ are the maximal cells of $\Tt$. Further, let $\gls{Ksigma} = \left\{ k^{(1)},\ldots, k^{(\vol(\Conv(\Aas)))} \right\}\subset \mathbb N_0^r$ be a set of representatives of $\bigslant{\mathbb Z^{n+1}}{\mathbb Z\Aas}$ for any $\sigma\in\hatT$ according to \cref{eq:representativesK}. Then the set of power series
    \begin{align}
    	\left\{ \left\{ \varphi_{\sigma,k} \right\}_{k\in K_\sigma} \right\}_{\sigma\in\hatT}
    \end{align}
    is a basis of the solution space $\Sol\!\left(H_\Aa(\beta)\right)$ and all those power series have a common, non-empty domain of absolute convergence.
\end{theorem}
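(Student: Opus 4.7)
The plan is to assemble the ingredients that have been essentially developed already in the preceding discussion and organize them into a clean proof. The core task is three-fold: (i) verify that the proposed collection consists of genuine $\mathcal{A}$-hypergeometric functions, (ii) count them correctly against the holonomic rank, and (iii) establish linear independence together with a common domain of absolute convergence. The Cauchy-Kovalevskaya-Kashiwara theorem (\cref{thm:CKKT}) together with \cref{thm:HolRankAHyp} then upgrades the collection to a basis.

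First I would observe that each $\varphi_{\sigma,k}$ with $k\in K_\sigma$ is a formal solution by the lemma on $\Gamma$-series as formal solutions, and that the very generic assumption on $\beta$ ensures $\varphi_{\sigma,k}\not\equiv 0$ for every $\sigma\in\hatT$ and every $k\in K_\sigma$ (as explained after \cref{eq:GammaPowerSeriesVarphi}, vanishing requires $\beta$ to lie on a countable union of hyperplanes, which we exclude). Next I would count: for each $\sigma\in\hatT$ the set $K_\sigma$ has cardinality $\vol(\Conv(\Aas))$, and since $\Tt$ is a triangulation of $\Aa$ we have $\sum_{\sigma\in\hatT}\vol(\Conv(\Aas)) = \vol(\Conv(A))$. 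By \cref{thm:HolRankAHyp}, for generic $\beta$ this is exactly $\rank(H_\Aa(\beta))$, so the count matches.

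The linear independence splits into two pieces. Within a fixed simplex $\sigma$, independence of $\{\varphi_{\sigma,k}\}_{k\in K_\sigma}$ follows from the fact that $\{\Lambda_{k^{(j)}}\}_{j=1,\ldots,\vol(\Conv(\Aas))}$ forms a partition of $\mathbb{N}_0^r$ (\cref{eq:LambdaPartition}): the different series have pairwise disjoint supports in the monomials $y^\lambda$. Across different simplices $\sigma\neq\sigma'$, the leading monomial structure $z_\sigma^{-\Aas^{-1}\beta}$ and the exponent patterns differ for generic $\beta$, so no non-trivial linear combination can vanish; one can make this precise by comparing the initial exponents along the respective secondary cones.

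The main obstacle, and the one deserving most care, is the common convergence domain. Here I would invoke \cref{thm:GammaConverge} for each individual $\varphi_{\sigma,k}$, giving a full-dimensional cone $C(\sigma)$ of $(-\ln|z_1|,\ldots,-\ln|z_N|)$ (up to translation) in which convergence holds, as formulated in \cref{eq:logMapTranslatedCone}. The crucial point is to intersect these over $\sigma\in\hatT$ and recognize the intersection
\[
C(\Tt) = \bigcap_{\sigma\in\hatT} C(\sigma)
\]
as (a representative of) the secondary cone of $\Tt$ in the sense of \cref{ssec:secondaryPolytope}. Regularity of $\Tt$ is exactly what ensures $C(\Tt)$ is full dimensional, so a non-empty open convergence region exists for all $\varphi_{\sigma,k}$ simultaneously. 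Finally, combining the equal-rank count with linear independence and \cref{thm:CKKT} shows that the collection spans $\Sol(H_\Aa(\beta))$ on this common domain, completing the argument.
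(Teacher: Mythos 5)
Your proposal is correct and follows essentially the same route as the paper: formal solutions plus non-vanishing from very generic $\beta$, disjoint supports via the partition $\{\Lambda_{k}\}$ for independence within a simplex, the identification of $\bigcap_{\sigma\in\hatT}C(\sigma)$ with a full-dimensional secondary cone (using regularity) for the common convergence domain, and the count $\sum_{\sigma\in\hatT}\vol(\Conv(\Aas))=\vol(\Conv(A))=\rank(H_\Aa(\beta))$ combined with \cref{thm:CKKT} to conclude. Your remark on cross-simplex independence via distinct initial exponents $-\Aas^{-1}\beta$ is in fact slightly more explicit than the paper, which merely asserts this step.
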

Above we only showed that a non-empty, common domain of convergence exists. The investigation of the exact shape of this domain therefore starts at \cref{eq:logMapTranslatedCone} and led to the introduction of the so-called \textit{amoeba}. For further reading about amoebas we refer \cite[ch. 6.1]{GelfandDiscriminantsResultantsMultidimensional1994}, \cite{PassareSingularitiesHypergeometricFunctions2005, ForsbergLaurentDeterminantsArrangements2000}. See also \cref{fig:ComplexLogarithmMaps}.\bigskip

For very generic $\beta\in\mathbb C^N$ we can normalize the $\Gamma$-series\glsadd{norGammaSer}\glsunset{norGammaSer}
\begin{align}
    \phi_{\sigma,k} (z) := \Gamma(-\Aas^{-1}\beta+1)\ \varphi_{\sigma,k}(z) = z_\sigma^{-\Aas^{-1}\beta} \sum_{\lambda\in\Lambda_k} \frac{z_\sigma^{-\Aas^{-1}\Aabs\lambda} z_{\bar\sigma}^\lambda}{\lambda!\ (1 -\Aas^{-1}\beta)_{-\Aas^{-1}\Aabs \lambda}}
\end{align}
such that the first term in the series is equal to $1$. Note that this definition of $\Gamma$-series agrees with the definition in \cite{SaitoGrobnerDeformationsHypergeometric2000}.

Especially, for unimodular triangulations ($|\det \Aas| = 1$) we can simplify the $\Gamma$-series further. Note that in this case $\Lambda_k = \mathbb N_0^r$ for any $k\in\mathbb Z^r$ since $\Aas^{-1}\in\mathbb Z^{(n+1)\times (n+1)}$, and we obtain\glsadd{norGammaSer}\glsunset{norGammaSer}
\begin{align}
    \phi_{\sigma} (z) = z_\sigma^{-\Aas^{-1}\beta} \sum_{\lambda\in\mathbb N_0^r} \frac{(\Aas^{-1}\beta)_{\Aas^{-1}\Aabs\lambda}}{\lambda!} \frac{z_{\bar\sigma}^\lambda}{(-z_\sigma)^{\Aas^{-1}\Aabs\lambda} } 
\end{align}
by the properties of Pochhammer symbols. \bigskip

A slight variation of $\Gamma$-series are the \textit{Fourier $\Gamma$-series} \cite{StienstraGKZHypergeometricStructures2005} where we replace the variables $z_j \mapsto e^{2\pi i w_j}$ for $j=1,\ldots,N$. Those Fourier $\Gamma$-series are more flexible than the original definition \cref{eq:GammaSeriesOriginalDef}. This replacement will simplify the convergence criterion \cref{eq:convergenceConditionLog} and also considerations about the monodromy of $\Aa$-hypergeometric functions are more accessible. In this context we refer also to the coamoeba (see \cref{sec:Coamoebas}), which adopts the spirit of this idea.


\subsection{Singular locus of \texorpdfstring{$\Aa$}{A}-hypergeometric systems} \label{ssec:SingularLocusAHyp}
%

In the previous section we constructed a basis of the solution space in terms of power series. However, we restricted the domain of these functions in order to ensure the convergence of these series. In this section we will ask for the analytic continuation of solutions of $H_\Aa(\beta)$, i.e.\ we will look for a maximal domain. By the Cauchy-Kashiwara-Kovalevskaya \cref{thm:CKKT} we have to consider the singular locus of $H_\Aa(\beta)$ for the analytic continuation. Let us first remark, that we can restrict ourselves to the codimension $1$ part of the singular locus, since all singularities in higher codimensions are removable singularities due to Riemann's second removable theorem \cite{KaupHolomorphicFunctionsSeveral1983, BerkeschZamaereSingularitiesHolonomicityBinomial2014}.

This section will also establish the link between $\Aa$-hypergeometric systems and $A$-discriminants. This connection was developed in a series of articles by Gelfand, Kapranov and Zelevinsky, mainly in \cite{GelfandEquationsHypergeometricType1988, GelfandAdiscriminantsCayleyKoszulComplexes1990, GelfandHypergeometricFunctionsToric1991, GelfandDiscriminantsResultantsMultidimensional1994}. A major part of this correspondence was also shown in \cite{AdolphsonHypergeometricFunctionsRings1994}, where the following deduction is mostly based on. A generalization of this relation can be found in \cite{BerkeschZamaereSingularitiesHolonomicityBinomial2014} and \cite{SchulzeIrregularityHypergeometricSystems2006}.

Recall, that we will always assume for $\Aa$-hypergeometric systems, that $\Aa$ describes points lying on a common hyperplane off the origin (see \cref{ssec:BasicAhypergeometricSystems}). This will imply a certain regularity of $H_\Aa(\beta)$, i.e.\ local solutions have at worst logarithmic singularities near the singular locus \cite{CattaniThreeLecturesHypergeometric2006}. This is a generalization of the behaviour of regular singular points in ordinary differential equations (see also \cref{sec:holonomicDmodules}). \bigskip
    
As a first step we want to establish a connection between the faces of $\Conv(\Aa)$ and the characteristic variety of $H_\Aa(\beta)$. The following two lemmata are inspired by \cite{AdolphsonHypergeometricFunctionsRings1994} with slight adjustments.

\begin{lemma} \label{lem:face-characteristic}
    For every point $(\hat z,\hat \xi)\in\ch\!\left(H_\Aa(\beta)\right)$ of the characteristic variety, there exists a unique face $\tau\subseteq\Conv(\Aa)$ such that $\hat \xi_j\neq 0$ if and only if $j\in\tau$.
\end{lemma}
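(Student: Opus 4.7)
The plan is to analyse the support $S := \{j \in \{1,\ldots,N\} \mid \hat\xi_j \neq 0\}$ and show, via \cref{lem:face-kernel}, that $S$ is exactly the index set of some face of $\Conv(\Aa)$. Uniqueness will then be immediate, since a face of $\Conv(\Aa)$ is, by the conventions of \cref{ssec:PolytopesPointConfigs}, identified with its index set in $\{1,\ldots,N\}$. The homogeneous operators $E_i(\beta)$ play no role here; all the information comes from the toric part of $H_\Aa(\beta)$.

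First I would isolate the toric part of the characteristic ideal. Under the weight $(\mathbf 0,\mathbf 1)$ the initial form of each toric operator $\square_l = \partial^{l_+} - \partial^{l_-}$ is the \emph{binomial} $\xi^{l_+}-\xi^{l_-}$. This requires that $|l_+| = |l_-|$, and this is precisely where the standing assumption of \cref{ssec:BasicAhypergeometricSystems}---that there exists $h\in(\mathbb Z^{n+1})^\vee$ with $h(a)=1$ for every $a\in\Aa$---enters: applying $h$ to $\Aa l_+ = \Aa l_-$ yields $|l_+|=|l_-|$. Consequently the characteristic ideal contains the toric ideal $I_\Aa \subseteq \mathbb C[\xi]$, and $\hat\xi$ lies on the affine toric variety $\Var(I_\Aa)$. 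In particular, for every pair $u,v\in\mathbb N^N$ with $\Aa u = \Aa v$ one has $\hat\xi^u = \hat\xi^v$.

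Next I would argue by contradiction that $S$ satisfies the face criterion of \cref{lem:face-kernel}. Suppose there existed $l\in\Dep(\Aa)$ with $l_j\ge 0$ for all $j\notin S$ and $\{l_j\}_{j\notin S}\neq 0$. Since $\Aa$ has integer entries, $\Dep(\Aa)$ admits a rational basis, and after clearing denominators we may assume $l\in\mathbb Z^N\cap\Dep(\Aa)$. Decompose $l = u - v$ with $u,v\in\mathbb N^N$ of disjoint support. The positivity condition $l_j\ge 0$ on $S^c$ forces $\mathrm{supp}(v)\subseteq S$, while non-triviality forces $\mathrm{supp}(u)\cap S^c \neq\emptyset$. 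The toric relation then gives $\hat\xi^u = \hat\xi^v$, but the right-hand side is a nonzero product of entries in $S$, whereas the left-hand side vanishes because at least one factor $\hat\xi_{j_0}$ with $j_0\notin S$ appears. This contradiction shows the non-existence of such a positive dependence.

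By \cref{lem:face-kernel} this is exactly the condition for $S$ to be the index set of a face $\tau\subseteq\Conv(\Aa)$ (taking $\tau=\emptyset$ when $S=\emptyset$ and $\tau=\Conv(\Aa)$ when $S=\{1,\ldots,N\}$), and then $j\in\tau$ iff $\hat\xi_j\neq 0$ by construction. The main obstacle I anticipate is the very first step: carefully justifying that the initial forms of the toric operators are genuine binomials and not monomials; once the homogenization hypothesis is used to secure $|l_+|=|l_-|$, the rest reduces to the combinatorial face lemma already in hand.
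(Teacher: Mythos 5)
Your proof is correct, but it takes a genuinely different route from the paper's. The paper defines $\tau$ as the \emph{carrier} of $J=\{j : \hat\xi_j\neq 0\}$ and then proves $J=\tau$ in two geometric steps: first that $\Conv(J)$ and $\tau$ have the same dimension (by intersecting a line segment between two points of $\tau\setminus J$ with $\Conv(J)$ and manufacturing a toric operator whose principal symbol could not vanish), and second that every $k\in\tau$ satisfies $\hat\xi_k\neq 0$ (by writing $a^{(k)}$ as a rational affine combination over $J$ and producing another explicit $\square_l$). You instead verify directly that the support $S$ itself meets the Farkas-type face criterion of \cref{lem:face-kernel}: a single binomial relation $\hat\xi^{l_+}=\hat\xi^{l_-}$ rules out any positive dependence relative to $S$. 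This is shorter, avoids the two-step ``carrier then saturate'' structure entirely, and makes explicit a logical link to \cref{lem:face-kernel} that the paper only exploits later (in the proof of \cref{lem:characteristic-disc}); the paper's version, in exchange, is self-contained and exhibits the obstructing operators concretely. Both arguments hinge on the same two facts you correctly isolate: the homogenization hypothesis forces $|l_+|=|l_-|$ so that $\initial_{(\mathbf 0,\mathbf 1)}(\square_l)$ is a genuine binomial, and the $E_i(\beta)$ are irrelevant. One small point worth a sentence in a final write-up: passing from a real positive dependence to an integer one is not just ``$\Dep(\Aa)$ has a rational basis'' --- you should note that $\{l\in\Dep(\Aa) : l_j\geq 0 \text{ for } j\notin S\}$ is a rational polyhedral cone, so if some element has $l_{j_0}>0$ for $j_0\notin S$ then some rational generator does as well; after clearing denominators this yields the required element of $\mathbb L$. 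This is standard (the paper faces the same issue and resolves it by taking least common multiples of denominators), so it is a presentational gap rather than a mathematical one.
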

\begin{proof}
    The case $\hat\xi = (0,\ldots,0)$ is trivially satisfied by $\tau=\emptyset$, and we will exclude this case in the following. Denote by $\emptyset\neq J\subseteq\{1,\ldots,N\}$ the index set for which $\hat\xi_j \neq 0$ for all $j\in J$ and let $\tau$ be the carrier of $J$, i.e.\ the smallest face of $\Conv(\Aa)$ containing all points with labels in $J$. Thus, we want to show $J=\tau$. Let us first show that $J$ spans affinely the supporting hyperplane of $\tau$, i.e.\ that $\Conv(J)$ and $\tau$ have the same dimension\footnote{One can see easily, that $\tau\subseteq\Aff(J)$ follows from $\dim(\Conv(J))=\dim(\tau)$. Note first, that $J\subseteq\tau$ implies also $\Aff(J)\subseteq\Aff(\tau)$. When $\dim(\Conv(J))=k$, there are $k+1$ affinely independent points in $\Conv(J)$ which span a basis of $\Aff(J)$. If $\tau\not\subseteq\Aff(J)$ there would be more than $k+1$ affinely independent points in $\tau$, which gives a contradiction.}. 
    Suppose that $\dim(\tau) > \dim (\Conv(J))$. Then we can find two points $\alpha,\beta\in\tau\setminus J$ with $\hat\xi_\alpha = \hat\xi_\beta=0$, such that the line segment from $\alpha$ to $\beta$ has an intersection point with $\Conv(J)$. Thus, there exist a rational number $0<\gamma<1$ and rational numbers $\lambda_j\geq 0$ describing this intersection point
    \begin{align}
        \gamma a^{(\alpha)} + (1-\gamma) a^{(\beta)} = \sum_{j\in J} \lambda_j a^{(j)} \qquad \text{with} \quad \sum_{j\in J} \lambda_j = 1 \point 
    \end{align}
    Denote by $m\in\mathbb Z_{>0}$ the least common multiple of all denominators of $\gamma$ and $\lambda_j$ for all $j\in J$. Then we can generate an element in $\mathbb L$ or in $H_\Aa(\beta)$, respectively
    \begin{align}
        \square = \partial_\alpha^{m\gamma} \partial_\beta^{m(1-\gamma)} - \prod_{j\in J} \partial_j^{m\lambda_j} \in H_\Aa(\beta) \point
    \end{align}
    Since its principal symbol
    \begin{align}
        \initial_{(\mathbf 0,\mathbf 1)} (\square) = \hat\xi_\alpha^{m\gamma} \hat\xi_\beta^{m(1-\gamma)} - \prod_{j\in J} \hat\xi_j^{m\lambda_j}
    \end{align}
    has to vanish for all values $(\hat z,\hat \xi)\in\ch(H_\Aa(\beta))$ we get a contradiction, since $\hat\xi_\alpha=\hat\xi_\beta=0$ and $\hat\xi_j\neq 0$ for all $j\in J$.
    
    Hence, $\Conv(J)$ and $\tau$ have the same dimension, and thus they share also the same supporting hyperplane. The second step will be to show equality $J=\tau$. Let $k\in\tau$ be an arbitrary point of the face $\tau$. We then have to prove that $\hat\xi_k\neq 0$. Since $\tau$ lies in the affine span of $J$, we will find some rational numbers $\lambda_j$ such that
    \begin{align}
        a^{(k)} = \sum_{j\in J} \lambda_j a^{(j)} = \sum_{\substack{j\in J \\ \lambda_j<0}} \lambda_j a^{(j)} + \sum_{\substack{j\in J \\ \lambda_j > 0}} \lambda_j a^{(j)} \qquad \text{with} \quad \sum_{j\in J} \lambda_j = 1 \point
    \end{align}      
    Again, let $m\in\mathbb Z_{>0}$ be the least common multiple of denominators of all $\lambda_j$ with $j\in J$, which will generate an element in $\mathbb L$, and we obtain
    \begin{align}
        \square = \partial_{k}^m \prod_{\substack{j\in J \\ \lambda_j<0}} \partial_j^{-m \lambda_j} - \prod_{\substack{j\in J \\ \lambda_j > 0}} \partial_j^{m\lambda_j} \in H_\Aa(\beta) \point
    \end{align}
    Both terms have the same order, since $1-\sum_{\lambda_j<0} \lambda_j = \sum_{\lambda_j > 0} \lambda_j$, which results in a principal symbol
    \begin{align}
        \initial_{(\mathbf 0,\mathbf 1)}(\square) = \xi_{k}^m \prod_{\substack{j\in J \\ \lambda_j<0}} \xi_j^{-m \lambda_j} - \prod_{\substack{j\in J \\ \lambda_j > 0}} \xi_j^{m\lambda_j} \point
    \end{align}
    Thus, it is $\hat\xi_{k}^m \prod_{\lambda_j<0} \hat\xi_j^{-m \lambda_j} = \prod_{\lambda_j> 0} \hat\xi_j^{m\lambda_j}$. By assumption, it is $\hat\xi_{j}\neq 0$ for all $j\in J$, and therefore it has to be also $\hat\xi_{k}\neq 0$. 
\end{proof}

In order to give a relation between $A$-discriminants and the characteristic varieties, we will associate to every finite subset $\Aa\subset\mathbb Z^{n+1}$ a multivariate polynomial
\begin{align}
    f_z(x) = \sum_{a^{(j)}\in\Aa} z_j x^{a^{(j)}} \in \mathbb C[x_0^{\pm 1},\ldots,x_n^{\pm 1}] \point
\end{align}
Recall, that for every face $\tau\subseteq\Conv(\Aa)$ we understand by\glsadd{truncpoly}
\begin{align}
    f_{\tau,z}(x) = \sum_{j\in\tau} z_j x^{a^{(j)}} \in \mathbb C[x_0^{\pm 1},\ldots,x_n^{\pm 1}]
\end{align}
the truncated polynomial with respect to the face $\tau$.

\begin{lemma} \label{lem:characteristic-disc}
    Let $\Aa\subset\mathbb Z^{n+1}$ be a vector configuration describing points on a hyperplane off the origin and let $\emptyset\neq\tau\subseteq\Conv(\Aa)$ be an arbitrary face. Then the following two statements are equivalent:
    \begin{enumerate}[i)]
        \item the point $(\hat z,\hat \xi)\in\ch\!\left(H_\Aa(\beta)\right)$ is a point of the characteristic variety and $\tau$ is the face of $\Conv(\Aa)$ corresponding to this point according to \cref{lem:face-characteristic}, i.e.\ $\hat \xi_j \neq 0$ if and only if $j\in\tau$
        \item the polynomials $\frac{\partial f_{\tau,\hat z}}{\partial x_0},\ldots,\frac{\partial f_{\tau,\hat z}}{\partial x_n}$ have a common zero in $x\in (\mathbb C^\star)^{n+1}$.
    \end{enumerate}        
\end{lemma}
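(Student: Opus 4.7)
The plan is to translate the defining conditions of the characteristic variety at $(\hat z,\hat\xi)$, expressed via the initial forms of the generators of $H_\Aa(\beta)$, into the geometric statement about a common torus zero of $\partial f_{\tau,\hat z}/\partial x_0,\ldots,\partial f_{\tau,\hat z}/\partial x_n$. As a preliminary observation, for weight $(\mathbf 0,\mathbf 1)$ the initial form of $\square_{u-v}$ is $\xi^u-\xi^v$ (the two total degrees coincide by the hyperplane assumption on $\Aa$), and the principal symbol of $E_i(\beta)$ is the linear form $\sum_j a_i^{(j)}z_j\xi_j$. Hence $(\hat z,\hat\xi)\in\ch(H_\Aa(\beta))$ is equivalent to the simultaneous validity of all toric relations $\hat\xi^u=\hat\xi^v$ for $\Aa u=\Aa v$ and all homogeneous relations $\sum_j a_i^{(j)}\hat z_j\hat\xi_j=0$; this description of the characteristic variety is the starting point used also in \cite{AdolphsonHypergeometricFunctionsRings1994}.

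For the direction i)$\Rightarrow$ii), I would build an explicit $x\in(\mathbb C^*)^{n+1}$ with $x^{a^{(j)}}=\hat\xi_j$ for every $j\in\tau$. Such a preimage exists because the image of the monomial homomorphism
\begin{align*}
\phi:(\mathbb C^*)^{n+1}\longrightarrow(\mathbb C^*)^\tau,\qquad x\longmapsto\bigl(x^{a^{(j)}}\bigr)_{j\in\tau},
\end{align*}
is precisely the subtorus cut out by $\prod_{j\in\tau}y_j^{l_j}=1$ for $l\in\mathbb L\cap\mathbb Z^\tau$ (a standard fact, using saturatedness of $\ker(\Aa|_\tau)$ and divisibility of $\mathbb C^*$), and the toric relations at $(\hat z,\hat\xi)$ restricted to supports inside $\tau$ (where $\hat\xi_j\neq 0$) say exactly that $(\hat\xi_j)_{j\in\tau}$ lies in that subtorus. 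Once $x$ is fixed, the homogeneous relations give
\begin{align*}
x_i\,\frac{\partial f_{\tau,\hat z}}{\partial x_i}(x)=\sum_{j\in\tau}a_i^{(j)}\hat z_j\,x^{a^{(j)}}=\sum_{j=1}^N a_i^{(j)}\hat z_j\hat\xi_j=0,
\end{align*}
and dividing by $x_i\neq 0$ yields ii).

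For ii)$\Rightarrow$i), I would define $\hat\xi_j:=x^{a^{(j)}}$ for $j\in\tau$ and $\hat\xi_j:=0$ otherwise. The support property of $\hat\xi$ is automatic, and the homogeneous relations are immediate, since $\sum_j a_i^{(j)}\hat z_j\hat\xi_j$ is precisely $x_i\,\partial f_{\tau,\hat z}/\partial x_i(x)=0$. The toric relations $\hat\xi^u=\hat\xi^v$ with $\Aa u=\Aa v$ split into three cases: both $u,v$ supported in $\tau$ (automatic from $\hat\xi^u=x^{\Aa u}=x^{\Aa v}=\hat\xi^v$); both with some coordinate outside $\tau$ (both sides vanish); and the mixed case, which is ruled out by a face argument. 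Concretely, if $\varphi$ is a linear functional attaining its maximum $c$ on $\Conv(\Aa)$ exactly along $\tau$, the hyperplane condition $\sum_j u_j=\sum_j v_j$ yields $\varphi(\Aa u)<c\sum_j u_j=c\sum_j v_j=\varphi(\Aa v)$, contradicting $\Aa u=\Aa v$.

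The main obstacle is the preimage construction in i)$\Rightarrow$ii). Its content is that the toric equations coming from the characteristic variety, restricted to the support $\tau$, are exactly the lattice relations $\mathbb L\cap\mathbb Z^\tau=\ker(\Aa|_\tau)$ which characterize the image of the monomial map $\phi$ as a subtorus of $(\mathbb C^*)^\tau$; this closes the loop between the algebraic (hypergeometric) and geometric (critical point of $f_{\tau,\hat z}$) descriptions. The face argument in the other direction is short once the cases are organized correctly, and the remainder is bookkeeping.
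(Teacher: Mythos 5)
Your proposal is correct and follows essentially the same route as the paper's proof: in both directions the dictionary is $\hat\xi_j=x^{a^{(j)}}$ for $j\in\tau$, the principal symbols of the $E_i(\beta)$ encode the critical-point equations $x_i\,\partial f_{\tau,\hat z}/\partial x_i=0$, and the toric symbols supported in $\tau$ supply exactly the lattice relations needed to invert the monomial map (your subtorus formulation is just a repackaging of the paper's "no contradictory equation can be combined" argument). Your supporting-functional argument ruling out the mixed case is a direct re-proof of the special case of \cref{lem:face-kernel} that the paper invokes at the corresponding step, so the two proofs coincide in substance.
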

\begin{proof}
    ``$ii)\Rightarrow i)$'': Let $\hat x\in (\mathbb C^\star)^{n+1}$ be a common solution of $\frac{\partial f_{\tau,\hat z}}{\partial x_0}=\ldots=\frac{\partial f_{\tau,\hat z}}{\partial x_n}=0$ which implies
    \begin{align}
        \hat x_i \frac{f_{\tau,\hat z}(\hat x)}{\partial x_i} = \sum_{j\in \tau} a_i^{(j)} \hat z_j \hat x^{a^{(j)}} = 0 \point
    \end{align}
    Consider the principal symbol of the homogeneous operators $E_i(\beta)\in H_\Aa(\beta)$ from \cref{eq:homogeneousOperators}. By setting all $\hat\xi_j=0$ for $j\notin\tau$ and $\hat\xi_j = x^{a^{(j)}}$ for all $j\in\tau$ we obtain
    \begin{align}
        \initial_{(\mathbf 0,\mathbf 1)}\!\left(E_i(\beta)\right)(\hat z,\hat\xi) = \sum_{a^{(j)}\in\Aa} a_i^{(j)} \hat z_j \hat\xi_j = 0\point
    \end{align}
    It remains to prove that $\initial_{(\mathbf 0,\mathbf 1)}(\square_l) (\hat z,\hat \xi) = 0$ for all $l\in\mathbb L$, where $\square_l$ was defined in equation \cref{eq:toricOperators}. Since all points described by $\Aa$ lying on a hyperplane off the origin, all monomials in $\square_l$ having the same order. Therefore, we have to show
    \begin{align}
        \prod_{l_j>0} \hat\xi_j^{l_j} = \prod_{l_j<0} \hat\xi_j^{-l_j} \quad\text{for all}\quad l\in\mathbb L \point \label{eq:xi=xi}
    \end{align}
    According to \cref{lem:face-kernel}, if $\tau\subseteq\Conv(\Aa)$ is a face\footnote{Since all points of $\Aa$ lying on a common hyperplane, the faces of $\Conv(A)$ and $\Conv(\Aa)$ are in one to one correspondence, where $A$ is a dehomogenization of $\Aa$.}, then $l_{\bar\tau} := (l_j)_{j\neq\tau}$ is either zero $l_{\bar\tau} = 0$ or it contains positive and negative components (or empty if $\tau=\Conv(\Aa)$). Thus, for the first case we insert $\hat\xi_j=\hat x^{a^{(j)}}$ for all $j\in\tau$
    \begin{align}
        \hat x^{\sum_{l_j>0} l_j a^{(j)}} = \hat x^{-\sum_{l_j<0} l_j a^{(j)}}
    \end{align}
    which is true, since all $l\in\mathbb L$ satisfy $\sum_j l_j a^{(j)} = \sum_{l_j>0} l_j a^{(j)} + \sum_{l_j<0} l_j a^{(j)} = 0$. In the second case, there are elements with $l_j<0$ as well as with $l_j>0$ corresponding to points outside of $\tau$ and equation \cref{eq:xi=xi} is trivially satisfied by $0=0$.\bigskip
    
    ``$i)\Rightarrow ii)$'': If $(\hat z,\hat\xi)\in\ch\!\left(H_\Aa(\beta)\right)$ that implies
    \begin{align}
        \initial_{(\mathbf 0,\mathbf 1)}\!\left(E_i(\beta)\right) = \sum_{a^{(j)}\in\Aa} a^{(j)} \hat z_j \hat\xi_j = \sum_{j\in\tau} a^{(j)} \hat z_j \hat\xi_j = 0 \point
    \end{align}
    Thus, $\frac{\partial f_{\tau,\hat z}}{\partial x_0},\ldots,\frac{\partial f_{\tau,\hat z}}{\partial x_n}$ have a common zero in $\hat x\in(\mathbb C^*)^{n+1}$ if the system of equations
    \begin{align}
        \hat x^{a^{(j)}} = \hat \xi_j \qquad\text{for all}\quad j\in\tau \label{eq:xxi}
    \end{align}
    has a solution in $\hat x\in(\mathbb C^*)^{n+1}$. Hence, we have to show that it is impossible to construct a contradicting equation by combining the equations of \cref{eq:xxi}. In other words for all integers $l_j\in \mathbb Z$ satisfying 
    \begin{align}
        \sum_{j\in\tau} l_j a^{(j)} = 0 \label{eq:lcomb}
    \end{align}
    we have to show that $\prod_{j\in\tau} (\hat\xi_j)^{l_j} = 1$. Note, that \cref{eq:lcomb} directly gives rise to an element in $\mathbb L$, by setting the remaining $l_j=0$ for all $j\notin\tau$. Therefore, we can construct
    \begin{align}
        \square = \prod_{l_j>0} \partial_j^{l_j} - \prod_{l_j<0} \partial_j^{-l_j} \in H_\Aa(\beta) \point \label{eq:squarel}
    \end{align}
    Again, by the fact that all points described by $\Aa$ lie on a common hyperplane off the origin, both terms in \cref{eq:squarel} have the same order. Thus,
    \begin{align}
        \initial_{(\mathbf 0,\mathbf 1)}(\square) = \prod_{l_j>0} \xi_j^{l_j} - \prod_{l_j<0} \xi_j^{-l_j} \comma
    \end{align}
    which completes the proof since $\initial_{(\mathbf 0,\mathbf 1)}(\square)(\hat z,\hat \xi) = 0$.
\end{proof}

Combining \cref{lem:characteristic-disc} with the results from \cref{sec:ADiscriminantsReultantsPrincipalADets} we can conclude directly the following theorem.

\begin{theorem}[Singular locus of $\Aa$-hypergeometric systems \cite{GelfandAdiscriminantsCayleyKoszulComplexes1990, AdolphsonHypergeometricFunctionsRings1994}] \label{thm:SingularLocusPrincipalAdet}
    Let $\Aa\subset\mathbb Z^{n+1}$ be a finite subset, which spans $\mathbb R^{n+1}$ as a vector space and describes points on a common hyperplane off the origin in $\mathbb R^{n+1}$. Furthermore, let $f=\sum_{a^{(j)}\in \Aa} z_j x^{a^{(j)}}\in\mathbb C[x_0^{\pm 1},\ldots,x_n^{\pm 1}]$ be the corresponding polynomial to $\Aa$. Then we have the equality
    \begin{align}
        \operatorname{Sing}\!\left(H_\Aa(\beta)\right) = \Var\!\left(E_\Aa(f)\right) \point
    \end{align}
\end{theorem}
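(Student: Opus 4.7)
The plan is to combine the two preparatory lemmas with the prime factorization of the principal $A$-determinant (\cref{thm:pAdet-factorization}) to get a face-by-face decomposition of both sides of the desired equality. Unpacking the definition \cref{eq:SingLocusDef}, we have $\Sing(H_\Aa(\beta))=\overline{\pi(\ch(H_\Aa(\beta))\setminus\{\xi=0\})}$, so the whole task is to describe this projection and identify it with $\Var(E_\Aa(f))$.

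First, I would use \cref{lem:face-characteristic} to stratify the characteristic variety: for every point $(\hat z,\hat\xi)$ in $\ch(H_\Aa(\beta))\setminus\{\xi=0\}$ there is a unique nonempty face $\tau\subseteq\Conv(\Aa)$ with $\hat\xi_j\neq 0\Leftrightarrow j\in\tau$. Writing $V_\tau$ for the corresponding stratum gives the disjoint decomposition
\begin{align*}
\ch(H_\Aa(\beta))\setminus\{\xi=0\}=\bigsqcup_{\emptyset\neq\tau\subseteq\Conv(\Aa)}V_\tau,
\end{align*}
hence $\Sing(H_\Aa(\beta))=\bigcup_\tau\overline{\pi(V_\tau)}$.

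Next I would translate each stratum into an $A$-discriminantal condition using \cref{lem:characteristic-disc}: $(\hat z,\hat\xi)\in V_\tau$ if and only if the partial derivatives $\partial f_{\tau,\hat z}/\partial x_0,\ldots,\partial f_{\tau,\hat z}/\partial x_n$ have a common zero $\hat x\in(\mathbb C^*)^{n+1}$. Because the points of $\Aa$ lie on a hyperplane off the origin, $f_{\tau,\hat z}$ is (quasi-)homogeneous, and Euler's identity then forces $f_{\tau,\hat z}(\hat x)=0$ automatically. Comparing with \cref{eq:Adisc}, this is exactly the condition for $\hat z$ to belong to the set whose Zariski closure defines the $A$-discriminant $\Delta_{\Aa\cap\tau}(f_\tau)$. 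Since the recovery $\hat\xi_j=\hat x^{a^{(j)}}$ shows that $\pi\colon V_\tau\to\mathbb C^\Aa$ is surjective onto that set, I obtain
\begin{align*}
\overline{\pi(V_\tau)}=\Var(\Delta_{\Aa\cap\tau}(f_\tau)),
\end{align*}
with both sides empty in the defective case (where $\Delta_{\Aa\cap\tau}(f_\tau)=1$ and the projection fails to be of codimension $1$).

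Finally, I would invoke the prime factorization \cref{thm:pAdet-factorization}, which yields $\Var(E_\Aa(f))=\bigcup_{\emptyset\neq\tau\subseteq\Conv(\Aa)}\Var(\Delta_{\Aa\cap\tau}(f_\tau))$ (the positive multiplicities $\mu(\Aa,\tau)$ are irrelevant for the underlying set). Together with the identification just established, this gives the claim. I expect the main obstacle to be the clean handling of closures and the defective strata: one must check that $\overline{\pi(V_\tau)}$ really hits every irreducible component of $\Var(\Delta_{\Aa\cap\tau}(f_\tau))$ rather than a proper subvariety, and that strata with $\codim\overline{\nabla_0}\geq 2$ (where $\Delta_{\Aa\cap\tau}(f_\tau)=1$) contribute nothing of codimension $1$ to $\Sing(H_\Aa(\beta))$, which is consistent because higher-codimensional pieces are removable by Riemann's second removable theorem as noted at the beginning of the section.
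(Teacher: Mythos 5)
Your proposal is correct and follows essentially the same route as the paper's proof: stratify the characteristic variety by faces via \cref{lem:face-characteristic}, translate each stratum into the discriminantal condition via \cref{lem:characteristic-disc}, and assemble the pieces with the prime factorization \cref{thm:pAdet-factorization}. The only cosmetic difference is that the paper normalizes $\Aa$ to a homogenized point configuration and dehomogenizes to match the definition \cref{eq:Adisc} directly, whereas you invoke Euler's identity on the quasi-homogeneous $f_{\tau,\hat z}$ to recover the condition $f_{\tau,\hat z}(\hat x)=0$; both are valid and equivalent.
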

\begin{proof}
    Let us assume for the moment that $\Aa$ has the form of a homogenizated point configuration, i.e.\ $\Aa=\{(1,a^{(1)}),\ldots,(1,a^{(N)})\}$ where $A=\{a^{(1)},\ldots,a^{(N)}\}\subset\mathbb Z^n$. Then the statement ii) in \cref{lem:characteristic-disc} is equal to a common zero of $f_{\tau,\hat z}, \pd{f_{\tau,\hat z}}{x_1},\ldots,\pd{f_{\tau,\hat z}}{x_n}$ in $x\in \Csn$. Thus, according to its definition \cref{eq:SingLocusDef}, the singular locus $\Sing\!\left(H_\Aa(\beta)\right)$ is then given by the Zariski closure of
    \begin{align}
        \bigcup_{\emptyset\neq \tau\subseteq\Conv(A)} \left\{ \hat z \in\mathbb C^N \,\rvert\, \Var\!\left(f_{\tau,\hat z},\frac{\partial f_{\tau,\hat z}}{\partial x_1},\ldots,\frac{\partial f_{\tau,\hat z}}{\partial x_n}\right) \neq\emptyset \text{ in } (\mathbb C^*)^n \right\} \point
    \end{align}
    But this is nothing else than the union of $A$-discriminants
    \begin{align}
        \Sing(H_\Aa(\beta)) = \bigcup_{\emptyset\neq \tau\subseteq\Conv(A)} \Var\!\left(\Delta_{A\cap\tau} (f_\tau)\right) \point
    \end{align}
    The application of \cref{thm:pAdet-factorization} concludes the proof for $\Aa=\{(1,a^{(1)}),\ldots,(1,a^{(N)})\}$. If $\Aa$ is not of that form we will always find a non-singular matrix $D$, such that $D\Aa$ will have this form. However, the $\Aa$-hypergeometric system as well as the $A$-discriminants are independent of such a transformation and thus the theorem applies also to all the other configurations $\Aa$.
\end{proof}

Thus, we have characterized the singular locus of $\Aa$-hypergeometric systems, which describes the possible singularities of the $\Aa$-hypergeometric functions, by the principal $A$-determinant. In general, it is a hard problem to calculate these principal $A$-determinants. However, by the \HKP we have a way to describe these possible singularities very efficiently in an indirect manner.


\chapter{Feynman integrals}
\label{ch:FeynmanIntegrals}
\label{sec:FeynmanIntegralsIntro}
%

After introducing the mathematical basis in the previous chapter, we now come to the actual main object of this investigation: the Feynman integral. To contextualize Feynman integrals, we will briefly sketch its origins in quantum field theory (QFT). For a detailed introduction to the topic of Feynman integrals we refer to \cite{WeinzierlFeynmanIntegrals2022, WeinzierlArtComputingLoop2006, SmirnovFeynmanIntegralCalculus2006, SmirnovAnalyticToolsFeynman2012, BognerMathematicalAspectsFeynman2009}.

The prototypical situation to study quantum field theories experimentally are scattering experiments, where particles are brought to collision under very high energies. Therefore, the scattering of particles is also immanent in the theoretical description of QFTs and the probability for certain events in such a scattering process is described by an operator called the \textit{$S$-matrix}. By means of the LSZ-reduction formula, the calculation of $S$-matrix elements reduces to the \textit{Green's function} or the \textit{correlator function} (see e.g.\ \cite{SchwartzQuantumFieldTheory2014}). Except for a very few toy models in low spacetime dimensions, the calculation of those Green's functions is only feasible by a perturbative approach. Thus, we will assume the Green's function as a formal power series in a coupling constant $g$, which describes the strength of interaction.

It was already observed in the early days of QFT \cite{FeynmanSpaceTimeApproachQuantum1949}, that the many terms arising in such a perturbation series can be represented and ordered by certain graphs\footnote{Richard Feynman introduced his graphs for the first time in 1948 during the Pocono conference about QED \cite{KaiserPhysicsFeynmanDiagrams2005}. Although, Feynman graphs have structural similarities to diagrams of Wentzel \cite{WentzelSchwereElektronenUnd1938} and Stueckelberg \cite{StueckelbergRemarqueProposCreation1941} their genuine connection to perturbation theory was fundamentally new.}. Depending on the given Lagrangian density $\mathcal L$ defining the QFT, those \textit{Feynman graphs} can consist in different types of edges and vertices. The map turning those graphs into algebraic expressions in the perturbation series is known as \textit{Feynman rules}. As the Feynman rules for vertices usually contain the coupling constant $g$, more complex Feynman graphs will stand for higher order terms in the perturbation series. Thus, for the prediction of an outcome of a scattering process we will calculate all possible Feynman graphs from a given theory up to a certain order. Thereby, Feynman graphs can be understood as a depiction of possibilities how certain particles can interact with each other. 

It is known for a long time \cite{DysonDivergencePerturbationTheory1952} that the perturbation series will diverge for the most QFTs. However, a finite number of terms of this series may still be a good approximation (in the sense of an asymptotic expansion). And indeed, the obtained results from this procedure are in surprisingly good agreement with experimental data. This overwhelming success of the predictions of QFT is also the reason why -- despite the mathematical and fundamental difficulties (e.g.\ Haag's theorem) -- it is worthwhile to further explore and develop QFT. \bigskip

Before introducing the language of graphs in more detail in the following \cref{sec:FeynmanGraphs}, we will summarize the most important characteristics of Feynman graphs. These graphs should be understood as networks, where we will assign a momentum flowing through every edge. By a momentum we mean a $d$-dimensional vector\glsadd{mink}\glsunset{mink} $\mink p = (p^0,\ldots, p^{d-1})$ in Minkowski space. Thus, a scalar product of those momentas $\mink p$, $\mink q$ is given by
\begin{align}
	\mink p \cdot \mink q = p^0 q^0 - p^1 q^1 - \ldots - p^{d-1} q^{d-1} \point
\end{align}
From these momenta every edge in the graph obtains an orientation. However, those orientations are not fixed and can be flipped by a change of the sign of the corresponding momentum. Furthermore, we will distinguish between internal and external edges. The \textit{external edges} are those, which are incident to a pendant vertex (i.e.\ a vertex of degree $1$) and are also known as \textit{legs} in the terminology of Feynman graphs. The momenta assigned to the legs are called \textit{external momenta}, and they are assumed to be the variables given by the experimental setting. For convenience, we will usually choose the external momenta to be incoming.

The internal edges are the essential part for a Feynman integral. Following the notion, that Feynman graphs represent possible ways of interaction of elementary particles, those internal edges are said to represent ``virtual particles'', as they carry a momentum which is off-shell, i.e.\ the assigned momentum does not satisfy the energy-momentum relation. 

Depending on the considered theory, edges, and vertices may also get additional weights or colorings. These decorations of graphs will constitute a further combinatorial difficulty but no general obstacles. However, we will not discuss those cases. \bigskip

Exemplary, we will rephrase the Feynman rules for the $\phi^4$-theory, which is one of the simplest possible theories consisting in one type of scalar particles with mass $m$ having the Lagrangian density $\mathcal L = \frac{1}{2} (\partial_\mu \phi)^2 - \frac{m^2}{2} \phi^2 - \frac{g}{4!} \phi^4$. For a summary of the Feynman rules appearing in the standard model, we refer to \cite{RomaoResourceSignsFeynman2012} which also collects the different choices of signs and prefactors.

\begin{example}[Feynman rules of $\phi^4$-theory in momentum space] \label{ex:phi4theory}
\hspace{0cm}
    \begin{enumerate}[I)]
        \item Obey momentum conservation at every vertex (except for the pendant vertices). This implies also an overall momentum conservation of the external momenta.
        \item for every internal edge: 
            \begin{tikzpicture}
                \coordinate[draw, shape=circle, fill=black, scale=.5] (A) at (0,0); 
                \coordinate[draw, shape=circle, fill=black, scale=.5] (B) at (3,0);
                \coordinate (a) at (1,0.3); \coordinate (b) at (2,0.3);
                \draw[thick] (A) -- (B);
                \draw[thick,->] (a) -- node [above] {$\mink q$} (b); 
            \end{tikzpicture}
            ${\displaystyle \qquad \longmapsto\quad\frac{-i}{- \mink q^2+m^2}}$
        \item for every vertex: \hspace{1cm} \vcenteredhbox{\begin{tikzpicture}[scale=0.5]
                \coordinate (A) at (-1,-1);
                \coordinate (B) at (1,-1);
                \coordinate (C) at (1,1);
                \coordinate (D) at (-1,1);
                \coordinate[draw, shape=circle, fill=black, scale=.5] (O) at (0,0);
                \draw[thick] (A) -- (C); \draw[thick] (B) -- (D);
            \end{tikzpicture}}
            ${\displaystyle \qquad \longmapsto\quad i g}$ 
        \item integrate over every indeterminated (internal) momentum $\mink k$: ${\displaystyle \int_{\mathbb R^d} \frac{\ddif \mink k}{(2\pi)^d}}$
    \end{enumerate}	
\end{example}

Hence, for the Green's function we will sum all possible Feynman graphs, which can be built from these rules weighted with an additional symmetry factor. However, we can make three major simplifications in this sum. First of all, we can omit all vacuum graphs, i.e.\ we neglect all those graphs, which have no legs, since their contribution can be comprised in a normalization factor. Second, we will exclude all disconnected graphs in our considerations. In general a disconnected graph would evaluate to the product of its components. However, the contribution of disconnected graphs to the perturbation series is usually disregarded in most QFTs by the cluster decomposition principle\footnote{The cluster decomposition principle is based on locality considerations, that assume for disconnected graphs to represent separated processes which do not influence each other. This is equivalent to claim that the $S$-matrix contains no singularities worse than poles and branch points besides one single overall momentum conservation $\delta$-distribution \cite{WeinbergQuantumTheoryFields1995}. However, it is by no means clear whether the cluster decomposition principle is always fulfilled (see also the discussion in \cite{SchwartzQuantumFieldTheory2014, WeinbergWhatQuantumField1997}). For example, the cluster decomposition principle could be violated by colour confinement in QCD \cite{LowdonConditionsViolationCluster2016}. Note that in the case of vacuum components, these can also be neglected by the aforementioned normalization factor.} \cite{WeinbergQuantumTheoryFields1995}. Third, we will restrict ourselves to the so-called \textit{amputated graphs} or \textit{truncated graphs}. Thus, we will not consider any graph, where a leg joins a subgraph of self-energy type (i.e.\ a subgraph having only two legs). Those cases are also ruled out by LSZ-reduction formula \cite{SchwartzQuantumFieldTheory2014, ItzyksonQuantumFieldTheory1980}. 

The amputated graphs are closely related to the so-called \textit{$1$-particle irreducible} graphs ($1$PI) or \textit{bridgefree} graphs, i.e.\ graphs which are still connected, when cutting an arbitrary edge. In general, a Feynman integral of a $1$-particle reducible graph evaluates to a product of all its components when cutting its bridges. Even though $1$-particle reducible graphs contribute to the Green's function, we want to exclude them in this consideration. Due to the factorization property, this is not a restriction to generality. 

Also, the contributions of graphs having a cut vertex can be factorized into their components. Thereby, a \textit{cut vertex} is a vertex which increases the number of components when it will be removed. According to \cite{SmirnovAnalyticToolsFeynman2012} we will call a graph without any cut vertex, a \textit{$1$-vertex irreducible graph} ($1$VI). If additionally at least one of those components is a vacuum graph, we will call it a \textit{tadpole-like} graph. If those vacuum components not only have no legs but also no masses, they will be independent of any variable. Therefore, we will call those types of components \textit{scaleless}. Scaleless components can be renormalized to zero. Furthermore, tadpole-like graphs can also be omitted due to the renormalization procedure\footnote{In \cite{SchwartzQuantumFieldTheory2014} and \cite[cor. 3.49]{PrinzAlgebraicStructuresCoupling2021} this was shown for a subclass of the tadpole-like graphs. However, one can directly extend their results to the entire class of tadpole-like graphs using \cite[prop. 3.48]{PrinzAlgebraicStructuresCoupling2021}.}. Thus, we will lastly also omit all tadpole-like graphs in the Green's function. \bigskip

\begin{figure}
	\centering
	\begin{subfigure}{.3\textwidth}
		\centering
		\begin{tikzpicture}[thick, dot/.style = {draw, shape=circle, fill=black, scale=.5}, scale=0.7]
            \coordinate[dot] (A) at (0,0);
            \coordinate[dot] (B) at (2,0);
            \draw (A) -- (B);  
            \draw (-.7,.7) -- (A); \draw (-.7,0) -- (A); \draw (-.7,-.7) -- (A); 
            \draw (2.7,.7) -- (B); \draw (2.7,0) -- (B); \draw (2.7,-.7) -- (B);
        \end{tikzpicture}
        \caption{a tree graph}
	\end{subfigure}
	\begin{subfigure}{.3\textwidth}
        \centering
        \begin{tikzpicture}[thick, dot/.style = {draw, shape=circle, fill=black, scale=.5}, scale=0.7]
            \coordinate[dot] (A) at (0,0);
            \coordinate[dot] (B) at (2,0);
            \draw (1,0) circle (1);
            \draw (-.7,.7) -- (A); \draw (-.7,-.7) -- (A); 
            \draw (2.7,.7) -- (B); \draw (2.7,-.7) -- (B);
        \end{tikzpicture}
        \caption{a $1$-loop graph}
	\end{subfigure}
	\begin{subfigure}{.3\textwidth}
		\centering
		\begin{tikzpicture}[thick, dot/.style = {draw, shape=circle, fill=black, scale=.5}, scale=0.7]
            \coordinate[dot] (A) at (0,0);
            \coordinate[dot] (B) at (2,0);
            \draw (1,0) circle (1);
            \draw (-.7,.7) -- (A); \draw (-.7,-.7) -- (A); 
            \draw (2.7,.7) -- (B); \draw (2.7,-.7) -- (B);
            \coordinate[dot] (C) at (4,0);
            \draw (3.3,.7) -- (C); \draw (4.7,.7) -- (C);
            \draw (3.3,-.7) -- (C); \draw (4.7,-.7) -- (C);
        \end{tikzpicture}
        \caption{a disconnected graph}
	\end{subfigure}
	
	\vspace{.7cm}
	
	\begin{subfigure}{.3\textwidth}
		\centering
        \begin{tikzpicture}[thick, dot/.style = {draw, shape=circle, fill=black, scale=.5}, scale=0.7]
            \coordinate[dot] (A) at (0,0);
            \draw (-1,0) circle (1);
            \draw (1,0) circle (1);
        \end{tikzpicture}
        \caption{a vacuum graph}
	\end{subfigure}
	\begin{subfigure}{.3\textwidth}
		\centering
        \begin{tikzpicture}[thick, dot/.style = {draw, shape=circle, fill=black, scale=.5}, scale=0.7]
            \coordinate[dot] (A) at (0,0);
            \coordinate[dot] (B) at (2,0);
            \draw (1,0) circle (1);
            \draw (A) -- (B);
            \draw (-.7,0) -- (A); 
            \draw (2.7,0) -- (B);
        \end{tikzpicture}
        \caption{a $2$-loop graph}
	\end{subfigure}
	\begin{subfigure}{.3\textwidth}
		\centering
        \begin{tikzpicture}[thick, dot/.style = {draw, shape=circle, fill=black, scale=.5}, scale=0.7]
            \coordinate[dot] (A) at (0,0);
            \coordinate[dot] (B) at (2,0);
            \coordinate[dot] (C) at (3,0);
            \coordinate[dot] (D) at (5,0);
            \draw (1,0) circle (1); \draw (4,0) circle (1); 
            \draw (A) -- (B) -- (C) -- (D);
            \draw (-.7,0) -- (A); 
            \draw (5.7,0) -- (D);
        \end{tikzpicture}
        \caption{a $1$-particle reducible graph (also a non-amputated graph)}
	\end{subfigure}

	\vspace{.7cm}
	
	\begin{subfigure}{.3\textwidth}
		\centering
        \begin{tikzpicture}[thick, dot/.style = {draw, shape=circle, fill=black, scale=.5}, scale=0.7]
            \coordinate[dot] (A) at (0,0);
            \coordinate[dot] (B) at (2,0);
            \coordinate[dot] (C) at (4,0);
            \draw (1,0) circle (1); \draw (3,0) circle (1); 
            \draw (-.7,.7) -- (A); \draw (-.7,-.7) -- (A); 
            \draw (4.7,.7) -- (C); \draw (4.7,-.7) -- (C);
        \end{tikzpicture}
        \caption{a $1$-vertex reducible graph}
	\end{subfigure}		
	\begin{subfigure}{.3\textwidth}
		\centering
        \begin{tikzpicture}[thick, dot/.style = {draw, shape=circle, fill=black, scale=.5}, scale=0.7]
            \coordinate[dot] (A) at (0,0);
            \draw (0,1) circle (1);
            \draw (-1.7,0) -- (A); 
            \draw (1.7,0) -- (A);
        \end{tikzpicture}
        \caption{a tadpole graph} \label{fig:tadpoleGraph}
	\end{subfigure}	
	\begin{subfigure}{.3\textwidth}
		\centering
        \begin{tikzpicture}[thick, dot/.style = {draw, shape=circle, fill=black, scale=.5}, scale=0.7]
            \coordinate[dot] (A) at (0,0);
            \coordinate[dot] (B) at (2,0);
            \coordinate[dot] (C) at (1,-1);
            \coordinate[dot] (D) at (3,1);
            \coordinate[dot] (E) at (3,-1);
            \draw (1,0) circle (1); \draw (3,0) circle (1); 
            \draw (C) arc[start angle=0, end angle=90, radius=1];
            \draw (D) arc[start angle=130, end angle=230, radius=1.305];
            \draw (E) arc[start angle=-50, end angle=50, radius=1.305];
            \draw (-.7,0) -- (A); 
            \draw (1,-1.7) -- (C);
        \end{tikzpicture}
        \caption{a tadpole-like graph}
	\end{subfigure}
	
	\caption[Examples of Feynman graphs in $\phi^4$-theory]{Examples of certain Feynman graphs in $\phi^4$-theory. Pendant vertices are not drawn explicitly.}
\end{figure}
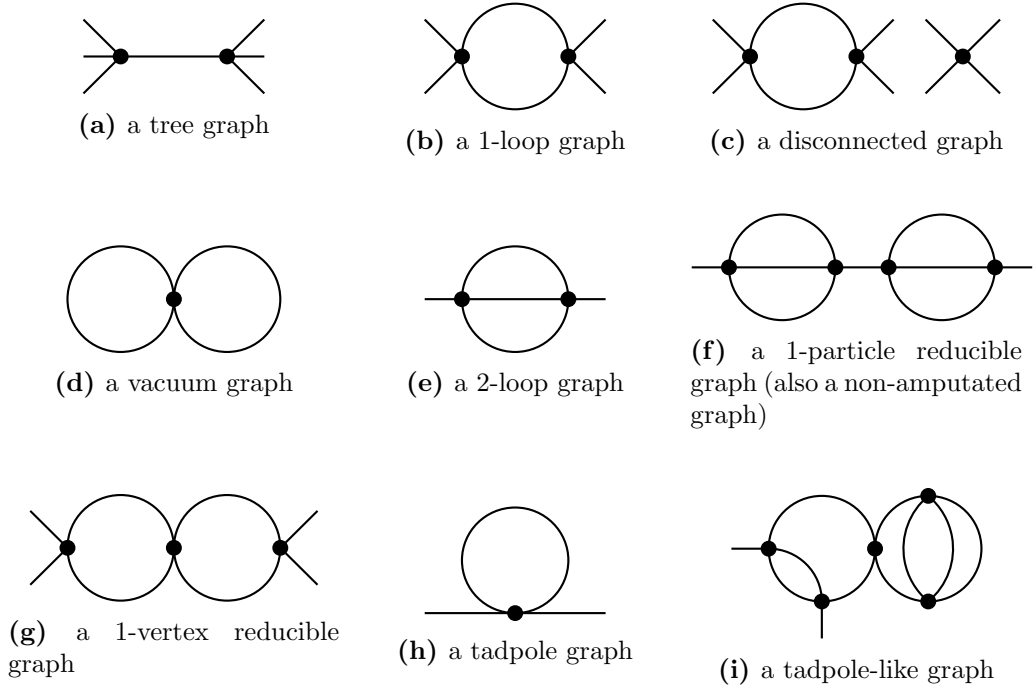

Our special interest here is the last Feynman rule from \cref{ex:phi4theory}, which also appears analogously for any other QFT. Hence, we will obtain certain integrals in the evaluation of Feynman graphs, whenever the graph contains a loop. A graph with $n$ internal edges and $L$ indeterminated internal momenta $\mink k_1,\ldots, \mink k_L$ results in a \textit{Feynman integral} of the form
\begin{align}
    \int_{\mathbb R^{d\times L}} \left(\prod_{j=1}^L \frac{\ddif \mink k_j}{i\pi^{d/2}} \right) \prod_{i=1}^n \frac{1}{- \mink q_i^2 + m_i^2} \comma \label{eq:FeynmanMomSpMinkowski}
\end{align}
where $\mink q_i$ is the (Minkowski-)momentum flowing through the edge $e_i$ and $m_i$ is the corresponding mass. For convenience reasons we adjusted the prefactor in \cref{eq:FeynmanMomSpMinkowski} slightly, see also \cite{WeinzierlFeynmanIntegrals2022}. As momenta and masses come with a specific unit, we often introduce an additional parameter in \cref{eq:FeynmanMomSpMinkowski} to make the Feynman integral dimensionless, which becomes important in the renormalization procedure. For the sake of notational simplicity we will omit this parameter. 

Since the denominator in the integrand vanishes on the integration contour, the integral in \cref{eq:FeynmanMomSpMinkowski} is ill-defined. This issue is typically solved by introducing a small imaginary part $-i\epsilon$ with $\epsilon>0$ in the denominator of \cref{eq:FeynmanMomSpMinkowski}. As we will assume $\mink q_i$ and $m_i$ to be real valued in the physical application, the poles of the integrand can be omitted for generic external momenta in that way. Equivalently, we can slightly change the integration contour. This idea will be made more explicitly in \cref{sec:Coamoebas} by means of coamoebas. However, there are certain cases where this procedure fails and we have to expect singularities depending on the external momenta and the masses. The precise description of those singularities is very subtle and the whole \cref{ch:singularities} is devoted to examine the analytic behaviour caused by those singularities. \bigskip

To exclude the problems arising from these singularities for now, we will consider a slightly different version of \cref{eq:FeynmanMomSpMinkowski} where we replace the Minkowskian kinematics by Euclidean kinematics. Therefore, we will 
define the \textit{Feynman integral} of a Feynman graph $\Gamma$ to be
\begin{align}
    \gls{FI} = \int_{\mathbb R^{d\times L}} \left(\prod_{j=1}^L \frac{\ddif k_j}{\pi^{d/2}}\right) \prod_{i=1}^n \frac{1}{(q_i^2+m_i^2)^{\nu_i}} \label{eq:FeynmanMomSp} \comma
\end{align}
where the momentum $q_i$\glsadd{internalmomenta}\glsunset{internalmomenta} attached to an edge $e_i$ is now a $d$-dimensional Euclidean vector, i.e.\ $q_i^2 := (q_i^0)^2 + \ldots (q_i^{d-1})^2$. Additionally, we introduced \textit{indices} $\nu = (\gls{nu})$ to the propagators, which will turn out to be convenient, e.g.\ for the application in the so-called integration-by-parts (IBP) methods \cite{ChetyrkinIntegrationPartsAlgorithm1981, TkachovTheoremAnalyticalCalculability1981} and general linear relations of Feynman integrals \cite{BitounFeynmanIntegralRelations2019} as well as in the analytical regularization \cite{SpeerGeneralizedFeynmanAmplitudes1969}. We will see in \cref{sec:DimAnaReg} and \cref{sec:FeynmanIntegralsAsAHyp} the dependence of the Feynman integral from those indices. \bigskip

For the Feynman integral \cref{eq:FeynmanMomSp} we will treat indices $\nu$ and the spacetime dimension $d$ as parameters. Originally restricted to positive integer values, we will meromorphically continue the Feynman integral $\mathcal I_\Gamma$ to complex values $\nu\in\mathbb C^n$ and $d\in\mathbb C$ in \cref{sec:DimAnaReg}.

Apart from the singularities arising from poles of the integrand, which we omitted for the massive case by the Euclidean kinematics, we may also have to worry about the behaviour of the integrand for large momenta $|k_j|\rightarrow \infty$. A divergence stemming from large momenta is called an \textit{UV-divergence}. In the massless case $m_i=0$, we may also have divergences for small momenta, which are called \textit{IR-divergences}. It will turn out, that the convergence behaviour can be controlled by the parameters $\nu$ and $d$. For now, we will treat the Feynman integral \cref{eq:FeynmanMomSp} as a formal integral, and we will answer the question of convergence in \cref{sec:DimAnaReg}.

The external momenta $p = (p_1,\ldots,p_m)^\top$ and the internal masses $m = (m_1,\ldots,m_n)^\top$ are considered as variables of \cref{eq:FeynmanMomSp}. Analogue to the parameters we will analytically continue those variables to complex values which will be accomplished in \cref{ch:singularities}. However, for this section we will assume these variables to be real. Instead of the $d$-dimensional momenta $p_1,\ldots,p_m$ it will be often more convenient to consider the Feynman integral to be depending on scalar products $s_{ij} = p_i p_j$ of those external momenta. This will be more apparent by the parametric representations in \cref{sec:ParametricFeynmanIntegrals}.\bigskip

When the momenta (or rather their scalar products) continued to complex values we can also relate \cref{eq:FeynmanMomSpMinkowski} with \cref{eq:FeynmanMomSp}. By inserting imaginary values for the zeroth component of every momentum, we will change the Euclidean vectors to Minkowskian vectors. This technical ``trick'' is known as \textit{Wick rotation}. Therefore, by the analytic continuation of the momenta to complex values, the Euclidean Feynman integral \cref{eq:FeynmanMomSp} will also include the Minkowskian case \cref{eq:FeynmanMomSpMinkowski}. \bigskip

The Feynman integrals in the form of \cref{eq:FeynmanMomSpMinkowski} and \cref{eq:FeynmanMomSp} are known as \textit{scalar Feynman integrals}. However, when considering more complex theories Feynman integrals will may obtain a tensorial structure, i.e.\ the Feynman integrals contain additional momenta in the numerator. It is known for a long time, that those tensorial Feynman integrals can be reduced to a linear combination of scalar integrals. Therefore, we can restrict our considerations fully to the case of \cref{eq:FeynmanMomSp}. We will present a method for such a reduction in the end of \cref{sec:ParametricFeynmanIntegrals} by \cref{thm:TensorReduction}.


\section{Feynman graphs}
\label{sec:FeynmanGraphs}
    

As pointed out in the previous section, Feynman integrals are based on graphs. Therefore, we will shortly rephrase the most important notions in graph theory related to Feynman graphs. Further information about classical graph theory can be found in e.g.\ \cite{HarayGraphTheory1969, TutteGraphTheory1984, ThulasiramanHandbookGraphTheory2016} and we will refer \cite{NakanishiGraphTheoryFeynman1971} for a comprehensive introduction to Feynman graphs. \cite{BapatGraphsMatrices2014} has its main focus on graph matrices and for a treatment of Feynman graph polynomials we suggest \cite{BognerFeynmanGraphPolynomials2010}.

Oriented graphs as we will introduce them in this section are very closely related to convex polytopes from \cref{sec:ConvexPolytopes}. For example there is also a version of Farkas' \cref{lem:Farkas} applying to oriented graphs \cite{BachemLinearProgrammingDuality1992}. The underlying reason is, that both objects can be described via oriented matroids \cite{BachemLinearProgrammingDuality1992, BjornerOrientedMatroids1999}. However, we will introduce graphs without relying on oriented matroids, to keep this summary short. \bigskip

As aforementioned a Feynman graph consists in internal and external edges. Since the external edges do not contribute to the following combinatorial considerations, we will omit them in our description to get a simpler notation. Thus, for our purpose a graph $\Gamma=(E,V,\varphi)$ consists in a set of $n$ edges $E=\{e_1,\ldots,e_n\}$, a  set of $m$ vertices $V=\{v_1,\ldots,v_m\}$ and an incidence map $\varphi : E \rightarrow V \times V$ relating edges $e$ and vertices $u,v$ by $\varphi(e) = (u,v)$. We will orient the edges, and consequentially we will distinguish between the \textit{start point} $u$ and the \textit{end point} $v$ of an edge $e$. Except for graphs where the start point and the end point of an arbitrary edge coincide, it is convenient to formulate the incidence relations by a $m\times n$ matrix
\begin{align}
	I_{ij} = \begin{cases}
	         	+ 1 & v_i \text{ start point of } e_j  \\
	         	-1 & v_i \text{ end point of } e_j \\
	         	0 & e_j \text{ not incident with } v_i
	         \end{cases}
\end{align}
called the \textit{incidence matrix} \gls{incidence}. It is not hard to show, that those incidence matrices are totally unimodular (i.e.\ the determinant of every non-singular square submatrix of $I$ is equal $\pm 1$) \cite[lem. 2.6]{BapatGraphsMatrices2014}, \cite[thm. 8.13]{ThulasiramanHandbookGraphTheory2016} and that their rank is given by 
\begin{align}
	\rank (I) = m - b_0
\end{align}
with the number of vertices $m$ and the number of connected components \gls{b0} of the graph \cite[thm. 2.3]{BapatGraphsMatrices2014}, \cite[cor. 8.1]{ThulasiramanHandbookGraphTheory2016}. \bigskip

A graph $\Gamma^\prime=(E^\prime,V^\prime,\varphi|_{E^\prime})$ satisfying $E^\prime\subseteq E$ and $V^\prime\subseteq V$ with an incidence relation $\varphi|_{E^\prime}$ restricted to $E^\prime$ is called a \textit{subgraph} of $\Gamma=(E,V,\varphi)$, symbolically $\Gamma^\prime \subseteq\Gamma$. In case of $V^\prime = V$ the subgraph is said to be \textit{spanning}. By a \textit{loop}\footnote{We use the nomenclature which is ubiquitous in the theory of Feynman graphs. However, in graph theory this object is usually called a \textit{circuit}, whereas a ``loop'' in graph theory is what physicists often call a ``tadpole'' or a ``self-loop''.} of $\Gamma$ we understand a subgraph where every vertex is incident with exactly two (not necessarily distinct) edges. A \textit{tadpole} refers to a loop consisting in only one edge. A graph $\Gamma$ which does not contain any loop is called a \textit{forest}. Consequentially, a \textit{tree} is a forest consisting in only one component. We denote the set of all spanning forests with $k$ components of a given graph by \gls{spanT}.

A tree $T$ will have one vertex more than edges $|E_T|+1=|V_T|$. Therefore, for a connected graph $\Gamma$ we have to delete at least $L=|E|-|V|+1$ edges to turn $\Gamma$ into a spanning tree. For a graph consisting in $b_0$ components we obtain similarly at least $L=|E|-|V|+b_0$ edges which have to deleted to turn $\Gamma$ into a spanning forest.\bigskip

Let \gls{circuits} be the set of all loops of a graph $\Gamma$. In addition to the orientation of edges, we will also introduce an (arbitrary) orientation of every loop and define
\begin{align}
	C_{ij}  = \begin{cases}
	         	+ 1 & e_j \in \mathscr C_i \text{ and }  \mathscr C_i, e_j \text{ have the same orientation} \\
	         	- 1 & e_j \in \mathscr C_i \text{ and }  \mathscr C_i, e_j \text{ have the opposite orientation} \\
	         	0 & e_j \notin \mathscr C_i
	         \end{cases} \label{eq:loopmatrix}
\end{align}
the $r\times n$ \textit{loop matrix} or \textit{circuit matrix}. \bigskip

Similar to the linear evaluations and linear dependences of \cref{ssec:GaleDuality} we obtain the following fact.
\begin{lemma} \label{lem:IncidenceLoopOrthogonality}
	For a graph without tadpoles, the incidence matrix and the loop matrix are orthogonal
	\begin{align}
		I \cdot C^\top = 0 \point
	\end{align}
\end{lemma}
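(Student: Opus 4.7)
The plan is to verify the identity entrywise. For a vertex $v_i$ and a loop $\mathscr C_k$, write
\[
(I\cdot C^\top)_{ik} = \sum_{j=1}^n I_{ij}\, C_{kj},
\]
and observe that only edges $e_j$ which are simultaneously incident to $v_i$ and belong to the loop $\mathscr C_k$ contribute. If $v_i$ does not lie on $\mathscr C_k$, every summand vanishes and we are done.

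The substantive case is when $v_i$ lies on $\mathscr C_k$. Here I would invoke the tadpole-free hypothesis: each vertex of a loop is incident to exactly two distinct loop edges, say $e_a$ (the one through which the chosen loop orientation enters $v_i$) and $e_b$ (the one by which it leaves $v_i$). The proof then reduces to checking, in each of the four combinations of the intrinsic orientations of $e_a,e_b$ relative to the loop orientation, that $I_{ia}C_{ka}+I_{ib}C_{kb}=0$. Concretely: if $e_a$'s intrinsic orientation agrees with the loop, then $v_i$ is the end of $e_a$, giving $I_{ia}C_{ka}=(-1)(+1)=-1$; if it disagrees, then $v_i$ is the start of $e_a$, giving $(+1)(-1)=-1$. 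Symmetrically the two cases for $e_b$ both yield $+1$. The two contributions cancel, proving the vanishing.

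I expect no real obstacle beyond careful sign bookkeeping in the four sub-cases, and highlighting where the tadpole-free assumption enters. Its role is precisely to guarantee that the two loop edges at $v_i$ are distinct, so that the contributions at $v_i$ really split as a sum of two independent $\pm 1$ terms instead of collapsing into one undefined entry of $I$ (recall that the incidence matrix, as defined in the excerpt, is only meaningful when start and end of every edge differ). Thus the case analysis exhausts all possibilities, every entry of $I\cdot C^\top$ vanishes, and the orthogonality follows.
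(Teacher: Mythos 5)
Your proof is correct and follows essentially the same route as the paper's: an entrywise computation showing that a vertex on a loop meets exactly two loop edges whose contributions $I_{ia}C_{ka}$ and $I_{ib}C_{kb}$ cancel in sign. You merely make explicit the four-case sign check and the role of the tadpole-free hypothesis, both of which the paper leaves implicit.
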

\begin{proof}
	By definition, the term $I_{ij}C_{kj}$ is only non-zero if the vertex $v_i$ is incident with an edge contained in the loop $\mathscr C_k$. In other words, the vertex $v_i\in\mathscr C_k$ is contained in the loop. However, if this is the case, there are exactly two edges incident with $v_i$ which are both in $\mathscr C_k$. By the sign conventions the two contributions will always have different sign and sum up to zero.
\end{proof}

Let $T\in\mathscr T_1$ be a spanning tree of the graph $\Gamma$. By slight abuse of notation we will denote by $T$ and $\Gamma$ also the edge set of the tree and the graph, respectively. Furthermore, let $\gls{chord} = \{e_1,\ldots,e_L\} = \Gamma\setminus T$ be its complement, which is also known as \textit{cotree} or \textit{chord set}. For every two vertices in the tree $T$, there will be a unique path in $T$ connecting these vertices. Therefore, for every edge $e_j\in T^*$ we can uniquely construct a loop $\mathscr C_j := P_j \cup e_j$, where $P_j$ is the path in $T$ connecting the start and the end point of $e_j$. For convenience we will choose the orientation of $\mathscr C_j$ in the same direction as $e_j$. 

For a chord set $T^*$, we call the set $\mathscr C_1,\ldots,\mathscr C_L$ of loops constructed in that way a \textit{fundamental set of loops}. By definition, every loop in this set will have at least one edge which is not contained in any other loop $\mathscr C_j \nsubseteq \cup_{k\neq j} \mathscr C_k$. Furthermore, one can show that for a maximal set of loops this property is also sufficient to be a fundamental set of loops \cite{NakanishiGraphTheoryFeynman1971}. We will denote by \gls{fundloop} the loop matrix \cref{eq:loopmatrix} restricted to rows corresponding to the loops of a fundamental set generated by the chord set $T^*$. Owing to its construction this matrix will have the form
\begin{align}
	C_{T^*} = \begin{pmatrix}
	          	\, (C_{T^*})_T, & \!\!\!\!\mathbbm 1 \,\,
	          \end{pmatrix} \label{eq:LoopMatrixUnit}
\end{align}
where the columns of the unit matrix correspond to edges in $T^*$, whereas $(C_{T^*})_T$ denotes the restriction of columns corresponding to edges in $T$. Similar to the incidence matrix, it can be shown that any fundamental loop matrix $C_{T^*}$ is totally unimodular \cite[lem. 5.12]{BapatGraphsMatrices2014} \cite[thm. 8.15]{ThulasiramanHandbookGraphTheory2016}.

Moreover, it is not hard to deduce a representation of $(C_{T^*})_T$ in terms of the incidence matrix by means of \cref{lem:IncidenceLoopOrthogonality}, whenever the graph does not contain tadpoles. Let $I'$ be the incidence matrix, where we removed $b_0$ rows such that $I'$ has full rank. With the same convention as in \cref{eq:LoopMatrixUnit} we will split $I'= (I'_{T}, I'_{T^*})$ into columns corresponding to $T$ and to $T^*$. We will then obtain \cite[thm. 5.6]{BapatGraphsMatrices2014}
\begin{align}
	C_{T^*} = \begin{pmatrix}
	          	\, - I'^\top_{T^*}\!\left(I'^\top_{T}\right)^{-1}, & \!\!\!\!\mathbbm 1\,\,
	          \end{pmatrix} \point \label{eq:LoopMatrixUnitIncidence} 
\end{align}

Furthermore, \cref{eq:LoopMatrixUnit} shows that the fundamental loop matrix $C_{T^*}$ has always full rank $L$. Hence, $L$ is also a lower bound for the rank of the full loop matrix $\rank (C) \geq L$. On the other hand, \cref{lem:IncidenceLoopOrthogonality} and the rank-nullity theorem lead to an upper bound $\rank (C) \leq n - \rank (I) = L$. Thus, we have $\rank (C) = L$ and every fundamental set of loops $C_{T^*}$ will provide a basis for the space of all loops. We call the dimension of the space of loops the \textit{loop number} \gls{loop}. From a topological point of view, $L$ is the first Betti number of $\Gamma$.

\begin{lemma}[Minors of the loop matrix {\cite[lem 5.9]{BapatGraphsMatrices2014}} ] \label{lem:LoopMatrixMinors}
    Let $\Gamma=(E,V,\varphi)$ be a connected graph with loop number $L$ and $S\subseteq E$ be a subset of $L$ edges. The restriction $(C_{T^*})_S$ of a fundamental loop matrix $C_{T^*}$ to columns of $S$ is non-singular if and only if $\Gamma\setminus S$ forms a spanning tree.
\end{lemma}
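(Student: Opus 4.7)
The plan is to translate the non-singularity of $(C_{T^*})_S$ into a statement about the coordinate projection $\pi_S\colon \mathbb{R}^n \to \mathbb{R}^S$ restricted to $\ker I'$, and then invoke the classical bijection between spanning trees and non-singular maximal minors of the reduced incidence matrix. Since $\Gamma$ is connected, deleting one (redundant) row of $I$ yields a full-rank reduced incidence matrix $I' \in \mathbb{R}^{(|V|-1)\times n}$ with $\ker I' = \ker I$, a subspace of dimension $n - (|V|-1) = L$. By \cref{lem:IncidenceLoopOrthogonality} the rows of $C_{T^*}$ are contained in $\ker I'$, and they are linearly independent by the unit-block structure \cref{eq:LoopMatrixUnit}; being $L$ in number, they form a basis of $\ker I'$.

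With this identification, the matrix $(C_{T^*})_S$ is precisely the representation of $\pi_S|_{\ker I'}$ in the basis of $\ker I'$ given by the rows of $C_{T^*}$ on the source side and the standard basis on the target side. Hence $(C_{T^*})_S$ is non-singular if and only if $\pi_S|_{\ker I'}$ is an isomorphism of $L$-dimensional spaces, which in turn holds if and only if no non-zero element of $\ker I'$ is supported on $\bar S := E \setminus S$. Writing this condition out, it is equivalent to the $(|V|-1) \times (|V|-1)$ square matrix $I'_{\bar S}$ having trivial kernel.

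It then remains to establish the classical fact that $I'_{\bar S}$ is non-singular if and only if $\bar S$ is the edge set of a spanning tree. If $\bar S$ contains a loop, the signed characteristic vector of that loop lies in $\ker I'$ (by the same vertex-balance argument underlying \cref{lem:IncidenceLoopOrthogonality}) and is supported on $\bar S$, so $I'_{\bar S}$ is singular. Conversely, if $x \in \ker I' \setminus \{0\}$ has support in $\bar S$, then the subgraph on $\mathrm{supp}(x)$ must contain a loop: otherwise it would be a non-empty forest with a leaf vertex $v$ at which the flow balance $\sum_{e \ni v} I_{v,e} x_e = 0$ would force the unique nonzero coefficient $x_e$ incident to the leaf to vanish (the balance at the vertex whose row was deleted is automatic, since $\sum_v I_{v,\cdot} = 0$). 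Thus $I'_{\bar S}$ is non-singular iff the subgraph on $\bar S$ is acyclic; combined with $|\bar S| = |V|-1$ and connectedness of $\Gamma$, acyclicity is equivalent to being a spanning tree. The main technical hurdle is this last leaf-pruning characterization — one has to handle the deleted vertex carefully — but once $\ker I'$ is identified with the loop space, everything else reduces to a direct dimension count.
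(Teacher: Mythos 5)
Your proof is correct, but it takes a genuinely different route from the paper's. The paper argues entirely inside the space of loop matrices: for the ``if'' direction it observes that $S$ itself is a chord set, so $C_{T^*} = R\,C_S$ for a non-singular change-of-basis matrix $R$, and the unit block of $C_S$ in the $S$-columns gives $(C_{T^*})_S = R$; for the ``only if'' direction it picks a fundamental system containing a loop lying entirely in $\Gamma\setminus S$, whose row restricted to the $S$-columns is zero. You instead dualize: you identify the row space of $C_{T^*}$ with the cycle space $\ker I'$, recognize $(C_{T^*})_S$ as the matrix of the coordinate projection $\pi_S$ restricted to that $L$-dimensional space, and reduce the claim to the classical statement that the maximal square submatrix $I'_{\bar S}$ of the reduced incidence matrix is non-singular iff $\bar S$ spans a tree, which you then prove by leaf-pruning (correctly handling the deleted row via $\sum_v I_{v,\cdot}=0$). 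Both arguments are complete; yours buys a self-contained proof of the incidence-matrix criterion (essentially \cref{eq:LoopMatrixUnitIncidence} in disguise) and makes the $S$ versus $\bar S$ duality explicit, while the paper's is shorter because it never leaves the loop space and needs no bookkeeping at the deleted vertex. One caveat: your route passes through the incidence matrix and \cref{lem:IncidenceLoopOrthogonality}, which the paper only defines and states for graphs without tadpoles, so strictly speaking your proof covers a slightly smaller class of graphs than the basis-change argument --- irrelevant for the Feynman graphs ultimately considered, but worth flagging.
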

\begin{proof}
	``$\Leftarrow$'': As considered $S$ is a chord set. Therefore, we can also construct a fundamental set of loops from $S$ with a fundamental loop matrix $C_S$. By changing the basis we will always find a non-singular matrix $R$ such that $C_{T^*} = R\, C_S$. Using the representation \cref{eq:LoopMatrixUnit} for $C_S$, we will find $\left(C_{T^*}\right)_S = R$.
	
	``$\Rightarrow$'': Assume that $\Gamma\setminus S$ is not a spanning tree. Then $\Gamma\setminus S$ has to contain a loop\footnote{The case of a forest or a non-spanning tree can be excluded by analysing the relation $L=|E|-|V|+b_0$ for $\Gamma$ and $\Gamma\setminus S$.}. Choose a fundamental set of loops of $\Gamma$ containing this loop and let $C_{T_0^*}$ the corresponding fundamental loop matrix. Again, we will find a basis transformation $C_{T^*} = R\, C_{T_0^*}$ with a non-singular matrix $R$. As considered, there is a loop in the fundamental set which has no edges from $S$. Therefore, $\left(C_{T_0^*}\right)_S$ has a row of zeros. Hence, $\left(C_{T_0^*}\right)_S$ is singular and thus also $\left(C_{T^*}\right)_S$ is singular.
\end{proof}

By means of this lemma about the minors of $C_{T^*}$, we are able to show the main result of this section, which will establish a connection between graphs and polynomials. For this purpose we will introduce a variable $x_i$ for every edge $e_i$ of the graph.

\begin{theorem}[Matrix-tree theorem] \label{thm:MatrixTree}
    Let $X = \diag (x_1,\ldots,x_n)$ be the diagonal matrix of the edge variables. Then
    \begin{align}
    	\det\!\left(C_{T^*} X C_{T^*}^\top \right) = \sum_{T \in \mathscr T_1} \prod_{x_e\notin T} x_e \label{eq:MatrixTree1}
    \end{align}
    is independent of the chosen chord set $T^*$.
\end{theorem}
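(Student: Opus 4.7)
The plan is to apply the Cauchy-Binet formula to the product $C_{T^*} \cdot X \cdot C_{T^*}^\top$ of an $L\times n$, an $n\times n$ (diagonal) and an $n\times L$ matrix. Since $X$ is diagonal with entries $x_1,\ldots,x_n$, Cauchy-Binet directly yields
\begin{align*}
    \det\!\left(C_{T^*} X C_{T^*}^\top\right) = \sum_{\substack{S\subseteq\{1,\ldots,n\}\\ |S|=L}} \det\!\left((C_{T^*})_S\right)^2 \prod_{e\in S} x_e \comma
\end{align*}
where $(C_{T^*})_S$ denotes the $L\times L$ submatrix of $C_{T^*}$ whose columns are indexed by $S$. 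The task therefore reduces to determining which $L$-element column subsets contribute a non-zero minor, and to compute the square of that minor.

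Both questions are answered by results already available in this section. By \cref{lem:LoopMatrixMinors}, $(C_{T^*})_S$ is non-singular if and only if the edge complement $T:=\Gamma\setminus S$ is a spanning tree of $\Gamma$; thus the contributing index sets are precisely the complements of spanning trees $T\in\mathscr T_1$. Moreover, since the fundamental loop matrix $C_{T^*}$ is totally unimodular (as recalled below \cref{eq:LoopMatrixUnitIncidence}), every non-vanishing minor equals $\pm 1$ and hence squares to $1$. Combining these two facts, the Cauchy-Binet sum collapses to
\begin{align*}
    \det\!\left(C_{T^*} X C_{T^*}^\top\right) = \sum_{T\in\mathscr T_1} \prod_{e\notin T} x_e \comma
\end{align*}
which is the desired identity. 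Since the right-hand side makes no reference to $T^*$, independence from the chosen chord set is automatic.

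I do not expect any genuine obstacle: the proof is a plug-and-play application of Cauchy-Binet combined with the structural characterisation of non-zero minors of $C_{T^*}$ furnished by \cref{lem:LoopMatrixMinors}. The only care required is bookkeeping — matching the size $L$ of the extracted submatrices with the loop number, invoking total unimodularity to dispose of signs, and implicitly assuming the connectedness of $\Gamma$ so that $\mathscr T_1$ and the lemma refer to consistent objects. For disconnected graphs the same argument applies componentwise, with both sides vanishing as soon as any component with positive loop number is present.
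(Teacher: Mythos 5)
Your argument is correct and coincides with the paper's own proof: both apply Cauchy--Binet to $C_{T^*} X C_{T^*}^\top$, identify the non-vanishing $L\times L$ minors via \cref{lem:LoopMatrixMinors} as exactly the complements of spanning trees, and use total unimodularity of $C_{T^*}$ to reduce the squared minors to $1$. No gaps.
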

\begin{proof}
	By the Cauchy-Binet formula we can split the determinant into
	\begin{align}
		\det\!\left(C_{T^*} X C_{T^*}^\top \right) = \sum_S \det\!\left(C_{T^*} X\right)_S  \det\!\left(C_{T^*}\right)_S = \sum_S \left(\det\!\left(C_{T^*}\right)_S \right)^2 \prod_{e\in S} x_e
	\end{align}
	where the sum goes over all edge subsets $S\subseteq E$ of length $L$. Due to \cref{lem:LoopMatrixMinors} the sum will reduce to terms where $\Gamma\setminus S$ is a spanning tree. Total unimodularity of $C_{T^*}$ concludes the proof.
\end{proof}

The polynomial on the right hand side of \cref{eq:MatrixTree1} will be introduced in the next section as the first Symanzik polynomial. It is also known as dual Kirchhoff polynomial, due to a polynomial appeared in Kirchhoff's analysis of electrical networks \cite{KirchhoffUeberAuflosungGleichungen1847}. This coincidence comes with no surprise, as electrical networks and Feynman graphs are structurally very similar. We refer to \cite{SeshuLinearGraphsElectrical1961} for a summary of graphs in electrical networks, which contains also the most of the results that are of significance for Feynman graphs. For the various ways to represent Symanzik polynomials we suggest \cite{BognerFeynmanGraphPolynomials2010}.

Often the matrix-tree \cref{thm:MatrixTree} is known in a slightly different variant. 
\begin{cor}
	Let $I'$ be the incidence matrix of a tadpole free graph, where we removed $b_0$ rows to obtain a full ranked matrix. Further, let $\hat{\mathscr L} = I' X^{-1} I'^\top$ be the so-called reduced, dual Laplacian. Then we have
	\begin{align}
		 \sum_{T \in \mathscr T_1} \prod_{x_e\notin T} x_e = \det (X) \det (\hat{\mathscr L}) \point
	\end{align}
\end{cor}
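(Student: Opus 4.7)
The plan is to mirror the proof of Theorem~\ref{thm:MatrixTree}, but applied directly to $\hat{\mathscr L} = I' X^{-1} I'^\top$ via the Cauchy--Binet formula, rather than to the fundamental loop matrix. Writing $I'$ as an $(m-b_0) \times n$ matrix and splitting the product as $(I' X^{-1}) \cdot I'^\top$, Cauchy--Binet gives
\begin{align}
\det(\hat{\mathscr L}) = \sum_{S} \det(I'_S)^2 \, \prod_{e \in S} x_e^{-1}
\end{align}
where the sum runs over all $(m-b_0)$-element edge subsets $S$, and $I'_S$ denotes the submatrix formed from the columns indexed by $S$.

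The combinatorial heart of the argument is then the classical characterization of the square minors of the incidence matrix: by the total unimodularity of $I$ mentioned in Section~\ref{sec:FeynmanGraphs}, one has $\det(I'_S) \in \{0, \pm 1\}$, and in fact $\det(I'_S) = \pm 1$ if and only if the subgraph $(V,S,\varphi|_S)$ is a spanning tree (here we implicitly specialize to the connected case $b_0 = 1$, which is consistent with the appearance of $\mathscr T_1$ on the right-hand side). First I would establish one direction: if $S$ contains a loop, Lemma~\ref{lem:IncidenceLoopOrthogonality} provides a nontrivial vector in the kernel of $I'_S$ (supported on that loop), so $\det(I'_S) = 0$. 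Conversely, if $\Gamma|_S$ is loopfree, then since $|S| = m - 1$ forces $\Gamma|_S$ to be a spanning tree, and one can check inductively (peeling off leaves) that the columns are linearly independent.

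With this in hand the Cauchy--Binet sum collapses to a sum over spanning trees,
\begin{align}
\det(\hat{\mathscr L}) = \sum_{T \in \mathscr T_1} \prod_{e \in T} x_e^{-1} \point
\end{align}
Multiplying by $\det(X) = \prod_e x_e$ converts each $\prod_{e\in T} x_e^{-1}$ into $\prod_{e \notin T} x_e$ and yields the desired identity, which equals $\det(C_{T^*} X C_{T^*}^\top)$ by Theorem~\ref{thm:MatrixTree}. The only nonroutine step is the minor characterization of $I'_S$; the rest is bookkeeping. As an alternative one could substitute the explicit formula $C_{T^*} = (-I'^\top_{T^*}(I'^\top_T)^{-1}, \mathbbm{1})$ from \eqref{eq:LoopMatrixUnitIncidence} into $C_{T^*} X C_{T^*}^\top$ and massage the resulting block expression via a Schur-complement identity, but the Cauchy--Binet route seems cleaner and makes the tadpole-free hypothesis transparent (a tadpole would produce a zero column in $I$ and spoil the minor identification).
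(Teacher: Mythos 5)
Your proof is correct, but it takes a genuinely different route from the paper. The paper deduces the corollary from \cref{thm:MatrixTree} by pure linear algebra: it substitutes the block form $C_{T^*} = \bigl(-I'^\top_{T^*}(I'^\top_T)^{-1}, \mathbbm 1\bigr)$ of \cref{eq:LoopMatrixUnitIncidence} into $\det(C_{T^*} X C_{T^*}^\top)$ and rearranges using Sylvester's identity $\det(\mathbbm 1 + AB) = \det(\mathbbm 1 + BA)$ until $\det(X)\det(I'X^{-1}I'^\top)$ appears; no new combinatorial input is needed. You instead prove the weighted matrix--tree identity for the reduced Laplacian from scratch: Cauchy--Binet applied to $I'X^{-1}I'^\top$, followed by the characterization that $\det(I'_S) = \pm 1$ precisely when $S$ is a spanning tree (and $0$ otherwise), which is the incidence-matrix analogue of \cref{lem:LoopMatrixMinors}. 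Both directions of that characterization are handled correctly: a loop in $S$ yields a kernel vector of $I'_S$ via \cref{lem:IncidenceLoopOrthogonality}, and a loopfree $S$ with $|S|=m-1$ is automatically a spanning tree with independent columns. The paper's route is shorter given what is already established and never leaves linear algebra; yours requires the extra minor lemma but is self-contained, does not rely on \cref{eq:LoopMatrixUnitIncidence}, makes the role of the tadpole-free hypothesis explicit, and in effect furnishes an independent second proof of the matrix-tree theorem rather than a corollary of it. Your restriction to the connected case $b_0=1$ is harmless since the statement is phrased in terms of $\mathscr T_1$.
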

\begin{proof}
	Using the representation \cref{eq:LoopMatrixUnitIncidence} we can write 
	\begin{align}
		\det\!\left(C_{T^*} X C_{T^*}^\top\right) = \det\!\left(X_{T^*}\right) \det\!\left[ X_{T^*}^{-1} I'^\top_{T^*} \left(I'^\top_{T}\right)^{-1} X_T \left(I'_{T}\right)^{-1} I'_{T^*} + \mathbbm 1 \right]
	\end{align}
	where we split the diagonal matrix $X$ in the same way, i.e.\ $X_{T} = \diag(x_i)_{e_i\in T}$ and $X_{T^*} = \diag(x_i)_{e_i\in T^*}$. Using Sylvester's determinant identity $\det(\mathbbm 1 + AB) = \det(\mathbbm 1 + BA)$ we can rewrite
	\begin{align}
		&\det\!\left(X_{T^*}\right) \det\!\left[ X_T \left(I'_{T}\right)^{-1} I'_{T^*} X_{T^*}^{-1} I'^\top_{T^*} \left(I'^\top_{T}\right)^{-1}  + \mathbbm 1 \right] \nonumber \\
		&= \det\!\left(X_{T^*}\right) \det\!\left(X_{T}\right)  \det\!\left[ \left(I'_{T}\right)^{-1} \left( I'_{T} X_{T}^{-1} I'^\top_T + I'_{T^*} X_{T^*}^{-1} I'^\top_{T^*} \right) \left(I_T^\top\right)^{-1} \right] \nonumber \\
		&= \det(X) \det\!\left[ I'_{T} X_{T}^{-1} I'^\top_T + I'_{T^*} X_{T^*}^{-1} I'^\top_{T^*} \right] = \det (X) \det\!\left(I' X^{-1} I'^\top\right) \point
	\end{align}
\end{proof}

Similar to the deletion and contraction of vector configurations (see \cref{ssec:GaleDuality}), we can introduce the operations deletion and contraction on graphs. Let $e_i\in E$ be an edge of the graph $\Gamma=(E,V,\varphi)$. Then $\contraction{\Gamma}{e_i}$ denotes the \textit{contraction}\glsadd{contraction} of $e_i$, which means that we shrink the edge $e_i$ such that the endpoints of $e_i$ become a single vertex. For the fundamental loop matrix $C_{T^*}$ of $\Gamma$ this implies that we remove the column corresponding to $e_i$ to arrive at the fundamental loop matrix of $\contraction{\Gamma}{e_i}$. Denote by $\Uu(\Gamma)=\det\!\left(C_{T^*}XC_{T^*}^\top\right)$ the polynomial of equation \cref{eq:MatrixTree1} for a graph $\Gamma$. From the previous considerations we immediately obtain
\begin{align}
	\Uu (\contraction{\Gamma}{e_i}) = \Uu(\Gamma)\big|_{x_i=0} \point \label{eq:UContracted}
\end{align}
By $\Gamma\setminus e_i$ we mean the \textit{deletion}\glsadd{deletion} of $e_i$, i.e.\ $\Gamma\setminus e_i = \Gamma(E\setminus e_i,V,\varphi |_{E\setminus e_i})$. Assuming that $\Gamma$ is bridgefree, we can choose a chord set $T^*$ such that $e_i\in T^*$. Then the fundamental loop matrix $C_{T^*}$ of $\Gamma \setminus e_i$ can be obtained from that of $\Gamma$ by removing the column and the row corresponding to $e_i$.

For a bridgefree graph, the polynomial $\Uu(\Gamma)$ will be linear in each variable $x_i$. Therefore, we obtain for its derivative
\begin{align}
	\pd{\Uu}{x_i} = \sum_{\substack{T\in \mathscr T_1 \\ e_i\notin T}} \prod_{\substack{e\notin T \\ e \neq e_i}} x_e = \Uu (\Gamma\setminus e_i) \label{eq:UDeleted}
\end{align}
where the spanning trees of $\Gamma$ not containing $e_i$ are precisely those spanning trees of $\Gamma\setminus e_i$. When $\Gamma$ is a bridgefree graph, we can combine \cref{eq:UContracted} and \cref{eq:UDeleted} to 
\begin{align}
	\Uu (\Gamma) = \Uu (\contraction{\Gamma}{e_i} ) + x_i \, \Uu (\Gamma \setminus e_i) \label{eq:UContractedDeleted}
\end{align}
which is known as ``deletion-and-contraction'' relation.


\section{Parametric Feynman integrals} \label{sec:ParametricFeynmanIntegrals}



In this section we will present various integral representations of the Feynman integral. For this reason we will attach so-called \textit{Schwinger parameters} or \textit{Feynman parameters} $x_i$ to every edge $e_i$ of a Feynman graph $\Gamma$ and express the Feynman integral \cref{eq:FeynmanMomSp} as an integral over those parameters $\gls{Schwingers}$. Those parameter integrals will contain certain graph polynomials, which will allow us to use the many tools of algebraic geometry. Therefore, parametric representations are much more convenient than the momentum space representation \cref{eq:FeynmanMomSp} of the Feynman integral for the purpose of this thesis. Further, the parametric representations will enable us to continue the spacetime dimension $d$ to complex numbers, which will be used in the dimensional regularization (see \cref{sec:DimAnaReg}). In addition, parametric representations are also helpful when reducing tensorial Feynman integrals to scalar Feynman integrals, which we will demonstrate in the end of this section. The parametric Feynman integral will also be the starting point for the application of hypergeometric theory. 

For general Feynman graphs the parametric Feynman integrals were first considered by Chisholm \cite{ChisholmCalculationMatrixElements1952}. In this procedure there appear certain polynomials which can be considered as graph polynomials, which was found by Symanzik \cite{SymanzikDispersionRelationsVertex1958} and was later proven by Nakanishi \cite{NakanishiIntegralRepresentationsScattering1961}. We will refer to \cite{NakanishiGraphTheoryFeynman1971} for a comprehensive presentation of all these results and to \cite{BognerFeynmanGraphPolynomials2010} for a modern summary. \bigskip

There are several ways to deduce parametric representations from \cref{eq:FeynmanMomSp}. In any case we will use certain integral representations of the integrand of \cref{eq:FeynmanMomSp} to change from momentum to parameter space. The simplest way of this rewriting can be made with ``Schwinger's trick''. Alternatively, one can also use ``Feynman's trick'', which we included in the appendix (\cref{lem:FeynmanTrick}) for the sake of completeness.

\begin{lemma}[Schwinger's trick] \label{lem:SchwingersTrick}
	Let $D_1,\ldots,D_n$ be positive, real numbers $D_i>0$ and $\nu\in\mathbb C^n$ with $\Re(\nu_i)>0$. Then we have the following identity
	\begin{align}
		\frac{1}{\prod_{i=1}^n D_i^{\nu_i}} = \frac{1}{\Gamma(\nu)} \int_{\mathbb R^n_+} \dif x\, x^{\nu-1} e^{-\sum_{i=1}^n x_i D_i} \point \label{eq:SchwingersTrick}
	\end{align}
	As before, we use a multi-index notation to keep formulas short. This implies $\Gamma(\nu):= \prod_{i=1}^n \Gamma(\nu_i)$ and $\dif x\,x^{\nu-1} := \prod_{i=1}^n \dif x_i\,x_i^{\nu_i-1}$.
\end{lemma}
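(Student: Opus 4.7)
The plan is to reduce the multivariate integral to a product of one-dimensional integrals that each evaluate to a Gamma function. Since the integrand factorizes as
\begin{align*}
x^{\nu-1} e^{-\sum_{i=1}^n x_i D_i} = \prod_{i=1}^n x_i^{\nu_i - 1} e^{-x_i D_i}
\end{align*}
and the integration region $\mathbb{R}^n_+$ is a product domain, Fubini's theorem (whose applicability is guaranteed by the assumptions $D_i > 0$ and $\Re(\nu_i) > 0$, which make the integrand absolutely integrable) lets me write the right-hand side as
\begin{align*}
\frac{1}{\Gamma(\nu)} \prod_{i=1}^n \int_0^\infty x_i^{\nu_i-1} e^{-x_i D_i} \dif x_i.
\end{align*}

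Next I would rescale in each factor by substituting $u_i = x_i D_i$, which is legitimate because $D_i > 0$. This transforms the $i$-th factor into $D_i^{-\nu_i} \int_0^\infty u_i^{\nu_i - 1} e^{-u_i} \dif u_i = D_i^{-\nu_i} \Gamma(\nu_i)$, invoking the standard integral definition of the Gamma function (valid for $\Re(\nu_i) > 0$). Taking the product over $i$ and cancelling $\prod_i \Gamma(\nu_i) = \Gamma(\nu)$ against the prefactor yields exactly $\prod_i D_i^{-\nu_i}$, as desired.

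There is no real obstacle here; the result is essentially a repackaging of the defining integral of $\Gamma$. The only points requiring a short justification are the convergence of the integral (handled by the two positivity conditions in the hypothesis) and the legitimacy of the factorization via Fubini, both of which are routine.
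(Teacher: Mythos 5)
Your proposal is correct and follows essentially the same route as the paper, which likewise derives the identity from the integral representation $\Gamma(\nu_i) = \int_0^\infty \dif x_i\, x_i^{\nu_i-1} e^{-x_i}$ via the substitution $x_i \mapsto D_i x_i$. The only difference is that you spell out the factorization and Fubini step explicitly, which the paper leaves implicit.
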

\begin{proof}
	The lemma follows directly from the integral representation of the $\Gamma$-function $\Gamma(\nu_i) = \int_0^\infty \dif x_i \, x_i^{\nu_i-1} e^{-x_i}$ by substituting $x_i \mapsto D_i x_i$.
\end{proof}

After applying Schwinger's trick to the momentum space representation \cref{eq:FeynmanMomSp}, there will be certain Gaussian integrals left.
\begin{lemma}[Gaussian integrals] \label{lem:GaussianIntegrals}
	Let $M$ be a real, positive-definite  $(L\times L)$-matrix. Then
	\begin{align}
		\int_{\mathbb R^{d\times L}} \left(\prod_{i=1}^L \frac{\ddif k_j}{\pi^{d/2}} \right) e^{-k^\top\! M k - 2 Q^\top k} = e^{Q^\top\! M^{-1} Q} \det(M)^{-\dhalf}
	\end{align}
	holds for any $Q\in\mathbb R^L$.
\end{lemma}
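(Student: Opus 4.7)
The plan is to complete the square in the exponent and then reduce to the standard one-dimensional Gaussian integral. First I would rewrite
\begin{align*}
k^\top\! M k + 2 Q^\top k = (k + M^{-1} Q)^\top\! M (k + M^{-1} Q) - Q^\top\! M^{-1} Q
\end{align*}
which uses the symmetry of $M$ (and hence of $M^{-1}$). Since $M$ is positive-definite it is invertible, so this identity is valid and pulls the $e^{Q^\top M^{-1} Q}$ out of the integral as a constant factor. A translation of the integration variable $k_j \mapsto k_j - (M^{-1}Q)_j$ component-wise in $\mathbb R^d$ (which preserves the Lebesgue measure $\ddif k_j$) then eliminates the linear term, leaving the purely quadratic integral $\int e^{-k^\top M k}$.

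Next I would diagonalize $M$. Since $M$ is real and symmetric, there is an orthogonal matrix $O$ with $O^\top M O = \Lambda = \diag(\lambda_1, \ldots, \lambda_L)$ and all $\lambda_i > 0$ by positive definiteness. Making the linear change of variables $k = O k'$ in the $L$-fold $d$-dimensional integral again preserves the product measure (its Jacobian is $|\det O|^d = 1$), and turns the quadratic form into $\sum_{i=1}^L \lambda_i (k'_i)^2$, so the integral factorizes as
\begin{align*}
\prod_{i=1}^L \int_{\mathbb R^d} \frac{\ddif k'_i}{\pi^{d/2}}\, e^{-\lambda_i (k'_i)^2}.
\end{align*}
Each factor is a $d$-dimensional Gaussian which further factorizes into $d$ copies of the standard one-dimensional Gaussian integral $\int_{\mathbb R} \frac{dt}{\sqrt \pi}\, e^{-\lambda_i t^2} = \lambda_i^{-1/2}$, so each $L$-index factor evaluates to $\lambda_i^{-d/2}$.

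Multiplying these together yields $\prod_i \lambda_i^{-d/2} = \det(M)^{-d/2}$, and combining with the prefactor $e^{Q^\top M^{-1} Q}$ gives the claim. No step is a genuine obstacle; the only minor care needed is in checking that the translation of $k$ by a real vector $-M^{-1}Q \in \mathbb R^L$ is legitimate (it is, since $Q$ is real and $M^{-1}$ preserves $\mathbb R^L$), and in tracking the normalization $\pi^{-d/2}$ per loop momentum so that it exactly cancels the $\sqrt \pi$'s arising from the Gaussians. The result is the standard formula that underlies the transition from momentum space \cref{eq:FeynmanMomSp} to the parametric representation.
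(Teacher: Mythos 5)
Your proposal is correct and follows essentially the same route as the paper's proof: complete the square and translate $k \mapsto k - M^{-1}Q$ to remove the linear term, then orthogonally diagonalize $M$ and factorize into $L$ Gaussian integrals, each evaluating to $\lambda_i^{-d/2}$. No gaps.
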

\begin{proof}
	As considered $M$ is a non-singular matrix, and we can substitute $k\mapsto k - M^{-1} Q$ to remove the linear part, which results in
	\begin{align}
		e^{Q^\top\! M^{-1} Q} \int_{\mathbb R^{d\times L}} \left(\prod_{i=1}^L \frac{\ddif k_j}{\pi^{d/2}} \right) e^{-k^\top\! M k} \point
	\end{align}
	Furthermore, we can diagonalize the matrix $M=S R S^\top$, where $R=\diag(r_1,\ldots,r_L)$ consisting in the eigenvalues of $M$ and $S^{-1}=S^\top$ is orthogonal. Substituting $l = S^\top\! k$ we obtain
	\begin{align}
		e^{Q^\top\! M^{-1} Q} \int_{\mathbb R^{d\times L}} \left(\prod_{i=1}^L \frac{\ddif l_j}{\pi^{d/2}} \right) e^{- l^\top\! R l } = e^{Q^\top\! M^{-1} Q} \prod_{i=1}^L r_i^{-\dhalf} = e^{Q^\top\! M^{-1} Q} \det(M)^{-\dhalf}
	\end{align}
	where the integration with respect to $l$ factorizes in $L$ Gaussian integrals $\int_{\mathbb R^d} \frac{\ddif l_i}{\pi^{d/2}} e^{-r_i l_i^2} = r_i^{-d/2}$.
\end{proof}

Combining \cref{lem:SchwingersTrick} and \cref{lem:GaussianIntegrals} we can give the first parametric integral representation of Feynman integrals of this section. As aforementioned, we attach a momentum $q_i$\glsadd{internalmomenta}\glsunset{internalmomenta} to every edge $e_i$. Since we improve momentum conservation at every vertex, $q_i$ will consist of a linear combination of external momenta $p_1,\ldots,p_m$ and a linear combination of indeterminated loop momenta $k_1,\ldots,k_L$. A possible choice can be made by a fundamental loop matrix $C_{T^*}$ for any chord set $T^*$ as
\begin{align}
	q_i = \sum_{j=1}^L k_j \!\left(C_{T^*}\right)_{ji} + \widehat q_i \label{eq:qjhatqj}
\end{align}
where we denote by $\widehat q_i$\glsadd{internalmomentaHat}\glsunset{internalmomentaHat} the part coming from external momenta.

\begin{theorem}[Schwinger representation] \label{thm:SchwingerRepresentation}
	Let $\Gamma$ be a (1PI) Feynman graph and $C_{T^*}$ its fundamental loop matrix with respect to any chord set $T^*$. Further, $X=\diag(x_1,\ldots,x_n)$ collects the Schwinger parameters and $\widehat q=(\widehat q_1,\ldots,\widehat q_n)$ represent the amount of external momenta on every edge $e_1,\ldots,e_n$ according to \cref{eq:qjhatqj}. When the Feynman integral $\mathcal I_\Gamma (d,\nu, p, m)$ from \cref{eq:FeynmanMomSp} converges absolutely, it can be rewritten as
	\begin{align}
		\mathcal I_\Gamma(d,\nu,p,m) = \frac{1}{\Gamma(\nu)} \int_{\mathbb R^n_+} \dif x\, x^{\nu-1} \Uu^{-\dhalf} e^{-\frac{\Ff}{\Uu}} \comma \label{eq:SchwingerRepresentation}
	\end{align}
	where we will assume $\Re(\nu_i) >0$. Thereby, the polynomials $\Uu\in\mathbb R [x_1,\ldots,x_n]$ and $\Ff\in\mathbb R [x_1,\ldots,x_n]$ are given by
	\begin{align}
		\gls{Uu} = \det (M) \qquad \text{and} \qquad \gls{Ff} = \det (M) (J-Q^\top M^{-1} Q) \label{eq:SymanzikPolynomialsMatrices}
	\end{align}
	with $\gls{M}=C_{T^*} X C_{T^*}^\top$, $\gls{Q}=C_{T^*}X\widehat q$ and $\gls{J}=\sum_{i=1}^n x_i \!\left(\widehat q_i^2 + m_i^2\right)$.
\end{theorem}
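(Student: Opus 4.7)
The plan is to combine Schwinger's trick (Lemma \ref{lem:SchwingersTrick}) with the Gaussian integral formula (Lemma \ref{lem:GaussianIntegrals}), using the fundamental loop matrix $C_{T^*}$ to decouple the internal momenta from the external data. First, I would apply Schwinger's trick to each of the $n$ propagators individually, which yields
\begin{align*}
    \mathcal I_\Gamma(d,\nu,p,m) = \frac{1}{\Gamma(\nu)} \int_{\mathbb R^{d\times L}} \!\!\left(\prod_{j=1}^L \frac{\ddif k_j}{\pi^{d/2}}\right) \!\int_{\mathbb R^n_+} \!\dif x\, x^{\nu-1} \exp\!\left(-\sum_{i=1}^n x_i \!\left(q_i^2 + m_i^2\right)\right) .
\end{align*}
The absolute convergence assumption on $\mathcal I_\Gamma$, together with the positivity of the integrand after Schwinger's trick, justifies swapping the $k$- and $x$-integrations by Fubini's theorem.

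The next step is to reorganize the exponent as a quadratic form in the loop momenta. Substituting $q_i = \sum_{j=1}^L k_j (C_{T^*})_{ji} + \widehat q_i$ from \cref{eq:qjhatqj} and expanding gives
\begin{align*}
    \sum_{i=1}^n x_i \!\left(q_i^2 + m_i^2\right) = k^\top M k + 2 Q^\top k + J ,
\end{align*}
with $M = C_{T^*} X C_{T^*}^\top$, $Q = C_{T^*} X \widehat q$ and $J = \sum_i x_i(\widehat q_i^2 + m_i^2)$ exactly as in \cref{eq:SymanzikPolynomialsMatrices}. Since $C_{T^*}$ has full rank $L$ (as shown in \cref{sec:FeynmanGraphs}) and $X$ is a positive diagonal matrix for $x\in\mathbb R^n_+$, the matrix $M$ is positive-definite, so Lemma \ref{lem:GaussianIntegrals} applies to each fixed value of $x$. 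Integrating out the loop momenta yields the factor $\det(M)^{-d/2} \exp(Q^\top M^{-1} Q - J) = \Uu^{-d/2} e^{-\Ff/\Uu}$, which when inserted into the remaining $x$-integral produces precisely \cref{eq:SchwingerRepresentation}.

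The main technical obstacle is the interchange of the loop-momentum and Schwinger-parameter integrations, which is needed before one can complete the square in $k$. For integer indices $\nu_i$ and in the Euclidean regime this is handled cleanly by Fubini once absolute convergence of $\mathcal I_\Gamma$ has been assumed; however, one should note that absolute convergence of the original momentum-space integral does not automatically imply absolute convergence in the mixed $(k,x)$-domain, so a short monotone-convergence or dominated-convergence argument (exploiting that both the Schwinger integrand and the Gaussian integrand are nonnegative after taking moduli) is required to make the swap rigorous. A secondary, purely algebraic point that deserves a brief verification is that the quantity $\Ff/\Uu = J - Q^\top M^{-1} Q$ indeed depends polynomially and not just rationally on the $x_i$ (this follows from Cramer's rule and the fact that every $x_i$ appears at most linearly in $M$), so that the expression $\Uu^{-d/2} e^{-\Ff/\Uu}$ in \cref{eq:SchwingerRepresentation} is well-defined on $\mathbb R^n_+$ as a polynomial-exponential times a rational power.
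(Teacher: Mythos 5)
Your proposal is correct and follows essentially the same route as the paper's proof: Schwinger's trick applied to each propagator, rearranging the exponent into a quadratic form in the loop momenta via \cref{eq:qjhatqj}, and then integrating out $k$ with the Gaussian integral lemma. The only minor differences are that you justify positive-definiteness of $M$ directly from the full rank of $C_{T^*}$ together with the positivity of $X$ (the paper argues via leading principal minors of $M$ being derivatives of $\Uu$ — both are valid), and your explicit attention to the Tonelli/Fubini interchange is a refinement the paper leaves implicit.
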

\begin{proof}
	Starting with \cref{eq:FeynmanParSpFeynman} and using Schwinger's trick (\cref{lem:SchwingersTrick}) we arrive at
	\begin{align}
		\mathcal I_\Gamma(d,\nu,p,m) = \frac{1}{\Gamma(\nu)} \int_{\mathbb R^n_+} \dif x\, x^{\nu-1} \int_{\mathbb R^{d\times L}} \left(\prod_{j=1}^L \frac{\ddif k_j}{\pi^{d/2}}\right) e^{-\sum_{i=1}^n x_i D_i} \label{eq:SchwingerRepresentationProof1}
	\end{align}
	with the inverse propagators\glsadd{propagators}\glsunset{propagators} $D_i = q_i^2 + m_i^2$. Sorting the exponent in \cref{eq:SchwingerRepresentationProof1} in terms being quadratic, linear, and constant in $k$ we obtain by means of \cref{eq:qjhatqj}
	\begin{align}
        \gls{Lambda} :=\sum_{i=1}^n x_i D_i &= k^\top\! C_{T^*} X C_{T^*}^\top\! k + 2 k^\top\! C_{T^*} X \widehat q + \sum_{i=1}^n x_i \!\left(\widehat q_i^2 + m_i^2\right) \nonumber \\
        & = k^\top\! M k + 2 Q^\top\! k + J \point \label{eq:LambdaxDrelation}
	\end{align}
	By construction $M$ is symmetric and its $i$-th leading principal minor is given by $\Uu(\Gamma\setminus e_i) = \pd{\Uu}{x_i}$ due to the considerations in \cref{eq:UDeleted}. Thus, $M$ is positive-definite inside the integration region $x\in (0,\infty)^n$, and we can apply \cref{lem:GaussianIntegrals} to obtain \cref{eq:SchwingerRepresentation}.	
\end{proof}

\begin{theorem}[Symanzik polynomials] \label{thm:SymanzikPolynomialsTreeForest}
	The polynomials appearing in \cref{thm:SchwingerRepresentation} can be rewritten as
	\begin{align}
		\gls{Uu} &= \sum_{T\in\mathscr T_1} \prod_{e\notin T} x_e \label{eq:FirstSymanzik} \\
		\gls{Ff} &= \sum_{F\in\mathscr T_2} s_F \prod_{e\notin F} x_e + \Uu (x) \sum_{i=1}^n x_i m_i^2 \label{eq:SecondSymanzik}
	\end{align}
	where $s_F = \left(\sum_{e_i\notin F} \pm q_i\right)^2 = \left(\sum_{e_i\notin F} \pm \widehat q_i\right)^2$ is the squared sum of signed momenta of the cutted edges. Denote by $F_1$ and $F_2$ the two components of the spanning $2$-forest $F$. Then we choose the signs in such a way, that the momenta on $e$ with $e\notin F$ flow from $F_1$ to $F_2$. Thus, $s_F$ is the squared total momentum flowing from $F_1$ to $F_2$. By momentum conservation this is equivalent to $s_F = \left(\sum_{a\in V_{F_1}} p_a\right)^2$ the sum of external momenta flowing into $F_1$ (or equivalently $F_2$). The polynomials $\Uu (x)$ and $\Ff (x)$ are known as the \emph{first} and the \emph{second Symanzik polynomial}.
\end{theorem}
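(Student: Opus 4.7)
My plan is to treat $\Uu$ and $\Ff$ separately; the first falls out of the setup, and the second requires more work. Since $\Uu = \det(C_{T^*}XC_{T^*}^\top)$ by definition \cref{eq:SymanzikPolynomialsMatrices}, the matrix-tree theorem (\cref{thm:MatrixTree}) applied to the fundamental loop matrix gives \cref{eq:FirstSymanzik} at once. For $\Ff$, I would first peel off the mass contribution: $M$ and $Q$ are mass-independent and the only mass-dependent piece of $J$ is $\sum_i x_i m_i^2$, so
\begin{align*}
    \Ff = \Ff_0 + \det(M)\sum_i x_i m_i^2 = \Ff_0 + \Uu \sum_i x_i m_i^2, \qquad \Ff_0 := \det(M)(J_0 - Q^\top M^{-1} Q),
\end{align*}
with $J_0 := \sum_i x_i \widehat q_i^2$. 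The mass part matches the second summand in \cref{eq:SecondSymanzik}, so it remains to identify $\Ff_0$ with $\sum_{F\in\mathscr T_2} s_F \prod_{e\notin F} x_e$.

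To cast $\Ff_0$ into matrix-tree shape, I would first decouple the spacetime index: since $\widehat q_i^2 = \sum_\mu (\widehat q_i^\mu)^2$ and $Q$ depends on $\widehat q$ only linearly, $\Ff_0 = \sum_\mu \Ff_0^\mu$ where $\Ff_0^\mu$ arises by replacing each $\widehat q_i$ by its scalar component $\widehat q_i^\mu$. Schur's determinant identity then rewrites each summand as
\begin{align*}
    \Ff_0^\mu = \det\!\begin{pmatrix} C_{T^*}XC_{T^*}^\top & C_{T^*}X\widehat q^\mu \\ (\widehat q^\mu)^\top X C_{T^*}^\top & \sum_i x_i (\widehat q_i^\mu)^2 \end{pmatrix} = \det\!\left(\widetilde C^\mu X(\widetilde C^\mu)^\top\right),
\end{align*}
where $\widetilde C^\mu$ is the $(L{+}1)\times n$ matrix obtained from $C_{T^*}$ by appending the row $(\widehat q^\mu)^\top$. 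The Cauchy--Binet formula with weight matrix $X$ then yields $\Ff_0^\mu = \sum_{|S|=L+1} \det(\widetilde C^\mu_S)^2 \prod_{e\in S} x_e$.

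Next I would identify which subsets $S$ contribute. A rank argument forces $\operatorname{rank}(C_{T^*})_S = L$ whenever $\det(\widetilde C^\mu_S)\neq 0$, and a mild extension of \cref{lem:LoopMatrixMinors} shows this occurs precisely when $\Gamma\setminus S$ is a spanning $2$-forest $F$ (a subgraph on all $|V|$ vertices with $|V|-2$ edges and no loop must have two components). For such $S = \Gamma\setminus F$, Laplace expansion of $\det(\widetilde C^\mu_S)$ along the appended row gives
\begin{align*}
    \det(\widetilde C^\mu_S) = \sum_{e\in S}(-1)^{L+1+k_e}\,\widehat q_e^\mu \det\!\left((C_{T^*})_{S\setminus\{e\}}\right),
\end{align*}
and by total unimodularity of $C_{T^*}$ combined with \cref{lem:LoopMatrixMinors} each cofactor is $\pm 1$ when $F\cup\{e\}$ is a spanning tree (equivalently, $e$ crosses the cut between the two components $F_1,F_2$) and $0$ otherwise. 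Hence $\det(\widetilde C^\mu_S) = \sum_{e\in S}\varepsilon_e \widehat q_e^\mu$ for signs $\varepsilon_e\in\{0,\pm 1\}$ supported on the edges of the cut.

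The hard part will be the sign bookkeeping: I must verify that the $\varepsilon_e$ produced by the cofactor expansion are precisely the signs that make $\sum_{e\in S}\varepsilon_e\widehat q_e$ the net momentum flowing from $F_1$ to $F_2$ across the cut $S$. Once this sign calibration is pinned down, squaring and summing over $\mu$ yields
\begin{align*}
    \sum_\mu \det(\widetilde C^\mu_S)^2 = \left(\sum_{e\in S}\varepsilon_e\widehat q_e\right)^{\!2} = \left(\sum_{v\in V_{F_1}} p_v\right)^{\!2} = s_F,
\end{align*}
the middle equality following from momentum conservation at every vertex of $F_1$ (which eliminates the loop-momentum contributions and replaces $\widehat q_e$ by $q_e$ freely, as noted in the theorem statement). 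Resumming over $F\in\mathscr T_2$ reproduces \cref{eq:SecondSymanzik}, and independence of the chord set $T^*$ is automatic since $\Ff_0$ was defined intrinsically through the Schwinger integral in \cref{thm:SchwingerRepresentation}.
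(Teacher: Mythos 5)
Your argument is correct in outline but takes a genuinely different route from the paper's. The paper exploits that the massless part $\Ff_0$ is a quadratic form in the external momenta: it polarizes down to the configuration $p_a=-p_b$ with all other external momenta zero, adjoins an auxiliary edge $e_0$ between $v_a$ and $v_b$, and observes that $\Ff_0=p_b^2\det(M^0)\big|_{x_0=0}$ for the augmented graph $\Gamma_0$, so that a second application of \cref{thm:MatrixTree} immediately produces the spanning trees of $\Gamma_0$ avoiding $e_0$, i.e.\ the spanning $2$-forests of $\Gamma$. That trick buys a very short proof at the cost of the reduction to two external momenta; your direct Cauchy--Binet expansion of $\det\!\left(\widetilde C^\mu X(\widetilde C^\mu)^\top\right)$ is longer but treats an arbitrary momentum configuration at once and produces the signs in $s_F$ explicitly rather than reading them off from momentum conservation afterwards --- it is in effect an ``all-minors'' matrix-tree argument. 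All the steps you actually carry out (the Schur/block-determinant identity, the decoupling over $\mu$, the rank criterion identifying the contributing $S$ as complements of spanning $2$-forests, and the cofactor expansion) are correct.

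The one step you flag but leave open, the sign calibration, does close, and here is the cleanest way to see it. The cofactor vector $\varepsilon=(\varepsilon_e)_{e\in S}$ of the appended row is orthogonal to every row of $(C_{T^*})_S$ (replace the last row of $\widetilde C^\mu_S$ by a row of $(C_{T^*})_S$ and expand: the determinant vanishes because a row is repeated). Since $\rank (C_{T^*})_S=L$ and $|S|=L+1$, the orthogonal complement of the row space of $(C_{T^*})_S$ inside $\mathbb R^S$ is one-dimensional. On the other hand, the cut vector $u=\chi^\top I$ of the $2$-forest $F$, where $\chi$ is the indicator vector of $V_{F_1}$, has $u_e=+1$ if $e$ is oriented from $F_1$ to $F_2$, $u_e=-1$ if oriented the other way, and $u_e=0$ otherwise; it is supported on the cut edges, which all lie in $S$, and it is orthogonal to the rows of $C_{T^*}$ by \cref{lem:IncidenceLoopOrthogonality}. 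Hence $\varepsilon$ and $u_S$ are proportional nonzero vectors with entries in $\{0,\pm 1\}$ and identical support, so $\varepsilon=\pm u_S$, and the global sign disappears upon squaring. Finally, summing the relations $(I\widehat q)_v=p_v$ over $v\in V_{F_1}$ gives $\sum_{e\in S}u_e\widehat q_e=\sum_{v\in V_{F_1}}p_v$, which is exactly the missing middle equality and completes your proof.
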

\begin{proof}
	The representation \cref{eq:FirstSymanzik} was already shown in \cref{thm:MatrixTree}. For the representation of \cref{eq:SecondSymanzik} the proof is oriented towards \cite{NakanishiGraphTheoryFeynman1971}. Note, that the massless part of \cref{eq:SecondSymanzik} as well as the massless part of the representation in \cref{eq:SymanzikPolynomialsMatrices} is quadratic in the external momenta. Thus, we will write $\sum_{a,b} W_{ab} p_a p_b$ for the massless part of \cref{eq:SymanzikPolynomialsMatrices} and $\sum_{a,b} \widetilde W_{ab} p_a p_b$ for the massless part of \cref{eq:SecondSymanzik}. Hence, we want to compare the coefficients $W_{ab}$ and $\widetilde W_{ab}$. For this reason we can choose arbitrary external momentum configurations, e.g.\ we can set all but two external momenta to zero. Therefore, it is sufficient to show equality in the case $p_a = - p_b$.
	
	Let us add a new edge $e_0$ linking the external vertices\footnote{Without loss of generality we can assume that $v_a\neq v_b$, since the case $v_a=v_b$ corresponds to a ``tadpole-like'' graph, where the massless part of \cref{eq:SymanzikPolynomialsMatrices} and \cref{eq:SecondSymanzik} vanishes identically.} $v_a$ and $v_b$ (in this direction). We will call the resulting graph $\Gamma_0 = \Gamma \cup \{e_0\}$. To construct a fundamental loop matrix of $\Gamma_0$ we will select a chord set $T_0^*=T^*\cup\{e_0\}$, where $T^*$ is a chord set of $\Gamma$. The loop $\mathscr C_0$ will then consist of $e_0$ and a path in $\Gamma$ connecting the vertices $v_a$ and $v_b$. Therefore, we will have
	\begin{align}
		\widehat q_i = \left(C_{T_0^*}\right)_{0,i} p_b \point \label{eq:hatqi}
	\end{align}
	That implies $Q_i = \sum_{j=1}^n \!\left(C_{T^*}\right)_{ij} x_j \widehat q_j = p_b M^0_{0,i}$ where $M^0 = C_{T_0^*} X C_{T_0^*}^\top$ is the corresponding matrix for the graph $\Gamma_0$.
	
	On the other hand, we can write the massless part of \cref{eq:SymanzikPolynomialsMatrices} as\glsadd{Ff0}\glsunset{Ff0} $\Ff_0 = - Q^\top\! \Adj(M) Q + \det(M) \sum_{i=1}^n x_i \widehat q_i^2 = \det (V)$, where
	\begin{align}
		V = \begin{pmatrix}
		    	\sum_{i=1}^n x_i \widehat q_i^2 & Q \\
		    	Q^\top & M
		    \end{pmatrix}
	\end{align}
	by using Laplace expansion. Inserting \cref{eq:hatqi} this is nothing else than $V = \left. p_b^2 M^{0}\right|_{x_0 = 0}$. Using \cref{thm:MatrixTree} again we obtain
	\begin{align}
		\det (V) = \left. p_b^2 \det(M^{0})\right|_{x_0 = 0} = p_b^2 \sum_{\substack{T\in\mathscr T_1(\Gamma_0) \\ e_0 \notin T}} \prod_{e\notin T} x_e = p_b^2 \sum_{F\in \mathscr T_2(\Gamma)} \prod_{e\notin F} x_e
	\end{align}
	where every spanning tree of $\Gamma_0$ which does not contain $e_0$, will split the graph $\Gamma$ into two components. This is nothing else than a spanning $2$-forest of $\Gamma$.
\end{proof}

As obviously by \cref{eq:FirstSymanzik} and \cref{eq:SecondSymanzik} the Symanzik polynomials are homogeneous of degree $L$ and $L+1$, respectively. In addition to it the first Symanzik polynomial $\Uu$ and the massless part of the second Symanzik polynomial $\Ff_0$ are linear in each Schwinger parameter. The complete second Symanzik polynomial $\Ff$ is at most quadratic in every Schwinger parameter. Beside the representations in \cref{eq:FirstSymanzik}, \cref{eq:SecondSymanzik} and \cref{eq:SymanzikPolynomialsMatrices}, there are many alternatives known to write Symanzik polynomials. We refer to \cite{BognerFeynmanGraphPolynomials2010, NakanishiGraphTheoryFeynman1971} for an overview.

Alternatively, we can consider the second Symanzik polynomial also as the discriminant of $\Lambda(k,p,x)\, \Uu(x)$ with respect to $k$, where $\Lambda(k,p,x)$ was defined in \cref{eq:LambdaxDrelation}. Thus, we will obtain the second Symanzik polynomial by eliminating $k$ in $\Lambda(k,p,x)\, \Uu(x)$ by means of the equation $\frac{\partial \Lambda}{\partial k_j} = 0$, which was noted in \cite{EdenAnalyticSmatrix1966}.

As we can reduce $\Ff_0 := \sum_{F\in\mathscr T_2} s_F \prod_{e\notin F} x_e$ to a determinant of a matrix $C_{T_0^*} X C_{T_0^*}^\top$ the formulas \cref{eq:UContracted}, \cref{eq:UDeleted} and \cref{eq:UContractedDeleted} do also apply to $\Ff_0$.\bigskip

We already restricted us to ($1$PI) graphs (i.e.\ bridgefree graphs) in \cref{thm:SchwingerRepresentation}. Another specific case happens when the graph $\Gamma$ contains a cut vertex, i.e.\ if $\Gamma$ is a $1$-vertex reducible graph. Whenever a Feynman graph contains those cut vertices, the Feynman integral will factorize into the Feynman integrals of the components which are held together by the cut vertices. This can be seen easily by the momentum space representation \cref{eq:FeynmanMomSp} as we can choose the loop momenta separately for all those components. If additionally one of these components is a vacuum graph (i.e.\ it contains no legs), the value of this component is independent of the external momenta, and we call those graphs tadpole-like. We already excluded this kind of graphs to appear in perturbation series in the beginning of \cref{sec:FeynmanIntegralsIntro}. If such a component not only has no legs but also no masses, it will be independent of any variable and is scaleless. Consequentially, the second Symanzik polynomial of every scaleless graph or subgraph will vanish $\Ff\equiv 0$. Hence, Feynman integrals containing a scaleless subgraph connected only by a cut vertex will diverge (in any representation and for any choice of $d$ and $\nu$). In \cref{sec:DimAnaReg} we will derive a similar result. Furthermore, we will show, that this is the only case, where the Feynman integral diverges for any choice of $d$ and $\nu$. The problem of divergences of scaleless graphs can be ``cured'' also in renormalization procedures. By a convenient choice of the counter term, the renormalization procedure will map any tadpole-like graph to zero \cite{SchwartzQuantumFieldTheory2014}.\bigskip

From \cref{thm:SchwingerRepresentation} we can deduce another representation of the Feynman integral in parameter space.

\begin{lemma}[Feynman representation] \label{lem:FeynmanRepresentation}
	If the Feynman integral \cref{eq:SchwingerRepresentation} converges absolutely, we can rewrite it as the following integral 
	\begin{align}
		\mathcal I_\Gamma(d,\nu,p,m) = \frac{\Gamma(\omega)}{\Gamma(\nu)} \int_{\mathbb R^n_+} \dif x\, x^{\nu-1} \delta\big(1-H(x)\big) \frac{\Uu^{\omega-\dhalf}}{\Ff^\omega} \label{eq:FeynmanParSpFeynman}
	\end{align}
	where $\gls{sdd} = \sum_{i=1}^n \nu_i - L\dhalf$ is called the \textit{superficial degree of divergence} and $H(x) = \sum_{i=1}^n h_i x_i$ is an arbitrary hyperplane with $h_i\geq 0$ not all zero. $\delta(x)$ denotes Dirac's $\delta$-distribution, and we will assume $\Re (\omega)>0$ and $\Re(\nu_i)>0$.
\end{lemma}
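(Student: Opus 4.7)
The plan is to derive the Feynman representation directly from the Schwinger representation (Theorem~2.2.1) by the standard trick of inserting a redundant Dirac distribution together with a scaling argument that exploits the homogeneity of the Symanzik polynomials established in Theorem~2.2.2 (namely $\deg\Uu=L$ and $\deg\Ff=L+1$ in the Schwinger parameters). Concretely, I would insert
\begin{align*}
1 = \int_0^\infty \dif t\, \delta(t - H(x))
\end{align*}
into the Schwinger integral \cref{eq:SchwingerRepresentation}. This is well-defined for $x\in\mathbb R^n_+$ since $H(x)=\sum_i h_i x_i$ is strictly positive on the open orthant whenever not all $h_i$ vanish.

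Next, I would substitute $x_j \mapsto t x_j$ for $j=1,\ldots,n$. The Jacobian contributes $t^n$, the monomial transforms as $x^{\nu-1}\mapsto t^{\sum_i\nu_i-n}x^{\nu-1}$, and the homogeneity of the Symanzik polynomials gives
\begin{align*}
\Uu(tx) = t^L \Uu(x), \qquad \Ff(tx) = t^{L+1}\Ff(x), \qquad H(tx) = tH(x).
\end{align*}
Consequently $\Uu^{-d/2} \mapsto t^{-Ld/2}\Uu^{-d/2}$ and $\Ff/\Uu\mapsto t\,\Ff/\Uu$, while the delta distribution becomes $\delta(t(1-H(x))) = t^{-1}\delta(1-H(x))$ using $t>0$. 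Collecting all powers of $t$ yields $t^{\omega-1}$ with $\omega=\sum_i\nu_i-L\tfrac{d}{2}$.

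By Fubini — justified by the assumed absolute convergence of the Schwinger representation, which is inherited by the integrand after rescaling — I can exchange the order of integration and carry out the $t$-integral as
\begin{align*}
\int_0^\infty \dif t\, t^{\omega-1}\, e^{-t\,\Ff/\Uu} \,=\, \Gamma(\omega)\left(\frac{\Ff}{\Uu}\right)^{-\omega},
\end{align*}
which converges precisely because $\Re(\omega)>0$ and $\Ff/\Uu>0$ on $\mathbb R^n_+$ for Euclidean kinematics. Combining this with the residual factor $\Uu^{-d/2}$ produces $\Uu^{\omega-d/2}\Ff^{-\omega}$, which is exactly the integrand of \cref{eq:FeynmanParSpFeynman}.

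The main subtlety lies not in the algebra of substitutions but in two points: first, justifying the interchange of integrations and the manipulations with the delta distribution (one could alternatively perform the computation in a regularized sense, e.g.\ replacing $\delta$ by an approximating sequence, or use Cheng–Wu's theorem); and second, verifying that the result is genuinely independent of the choice of $H$. The latter is automatic from the derivation — since the calculation produces the same expression for every admissible $H$ — but it also shows that the $\delta$-factor can be placed on any non-empty subset of Schwinger parameters, which is the content of the Cheng–Wu theorem and will be useful in subsequent sections.
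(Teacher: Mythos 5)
Your proposal is correct and follows essentially the same route as the paper's proof: insert $1=\int_0^\infty \dif t\,\delta(t-H(x))$ into the Schwinger representation, rescale $x\mapsto tx$ using the homogeneity of $\Uu$ and $\Ff$ to collect the factor $t^{\omega-1}$, and integrate out $t$ via the $\Gamma$-function integral (Schwinger's trick). The bookkeeping of powers of $t$ and the resulting integrand $\Uu^{\omega-d/2}\Ff^{-\omega}$ match the paper exactly.
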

\begin{proof}
	We will insert $1 = \int_0^\infty \dif t \, \delta(t-H(x))$ in the integral \cref{eq:SchwingerRepresentation}, which results in
	\begin{align}
		\mathcal I_\Gamma(d,\nu,p,m) = \frac{1}{\Gamma(\nu)} \int_0^\infty \dif t \int_{\mathbb R^n_+} \dif x\, x^{\nu-1} \frac{1}{t} \delta\left(1-\sum_{i=1}^n h_i \frac{x_i}{t} \right) \Uu^{-\dhalf} e^{-\frac{\Ff}{\Uu}} \point
	\end{align}
	Due to the homogenity of $\Uu$ and $\Ff$, a substitution $x_i \mapsto t x_i$ leads to
	\begin{align}
		\mathcal I_\Gamma(d,\nu,p,m) = \frac{1}{\Gamma(\nu)} \int_0^\infty \dif t \, t^{\omega-1} \int_{\mathbb R^n_+} \dif x\, x^{\nu-1} \delta\big(1-H(x)\big) \Uu^{-\dhalf} e^{- t \frac{\Ff}{\Uu}} \point
	\end{align}
	The integration with respect to $t$ can be performed by \cref{eq:SchwingersTrick}.
\end{proof}

The freedom of the choice of a hyperplane $H(x)$ in \cref{eq:FeynmanParSpFeynman} is sometimes referred as Cheng-Wu theorem and expresses the projective nature of the integral due to the homogenity of the Symanzik polynomials. We can also use this freedom to exclude one of the Symanzik polynomials in the integrand, as done for the first Symanzik polynomial in the following corollary. However, note that the evaluation of the $\delta$-distribution will produce terms of $\Uu(\Gamma \setminus e_i)$ and $\Uu(\contraction{\Gamma}{e_i})$.

\begin{cor}
	In case of absolute convergence, the Feynman integral can also be expressed as
	\begin{align}
		\mathcal I_\Gamma(d,\nu,p,m) = L \frac{\Gamma(\omega)}{\Gamma(\nu)} \int_{\mathbb R^n_+}  \dif x \, x^{\nu-1} \delta\big(1-\Uu(x)\big) \Ff^{-\omega} \label{eq:FeynmanRepresentationCorollary} \point
	\end{align}
\end{cor}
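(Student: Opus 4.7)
The plan is to start from the Schwinger representation of Theorem~\ref{thm:SchwingerRepresentation} and localize it onto the hypersurface $\{\Uu=1\}$ by exploiting the $L$-homogeneity of $\Uu$. Concretely, I would insert the trivial identity
\begin{align}
    1 = \int_0^\infty \dif t \, \delta(t - \Uu(x))
\end{align}
into \cref{eq:SchwingerRepresentation}, so that
\begin{align}
    \mathcal I_\Gamma(d,\nu,p,m) = \frac{1}{\Gamma(\nu)} \int_0^\infty\! \dif t \int_{\mathbb R^n_+} \dif x\, x^{\nu-1} \delta(t-\Uu(x))\, \Uu^{-d/2} e^{-\Ff/\Uu} \point
\end{align}

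The substitution $x_i = t^{1/L} y_i$ then uses the homogeneity properties that $\Uu$ has degree $L$ and $\Ff$ has degree $L+1$ (as recorded after \cref{thm:SymanzikPolynomialsTreeForest}). Under this rescaling one obtains $\Uu(x) = t\,\Uu(y)$, $\Ff(x)/\Uu(x) = t^{1/L}\Ff(y)/\Uu(y)$, the Jacobian gives $t^{n/L}$, the monomial measure contributes $t^{(\sum_i\nu_i - n)/L}$, and the delta transforms as $\delta(t-\Uu(x)) = t^{-1}\delta(1-\Uu(y))$. Collecting the powers of $t$ yields
\begin{align}
    \mathcal I_\Gamma = \frac{1}{\Gamma(\nu)} \int_0^\infty \!\dif t\, t^{\sum_i\nu_i/L - d/2 - 1} \int_{\mathbb R^n_+} \dif y\, y^{\nu-1} \delta(1-\Uu(y))\, \Uu(y)^{-d/2} e^{-t^{1/L} \Ff(y)/\Uu(y)} \point
\end{align}

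Next I would change variables $t = s^L$, giving an extra factor $L\, s^{L-1}$ and turning the $t$-integral into $L\int_0^\infty \dif s\, s^{\omega-1} e^{-s\Ff/\Uu}$, where $\omega = \sum_i \nu_i - L d/2$ is precisely the superficial degree of divergence. This integral is a standard Gamma integral and evaluates to $L\,\Gamma(\omega)\,(\Ff/\Uu)^{-\omega}$ (under the assumed $\Re(\omega)>0$), producing
\begin{align}
    \mathcal I_\Gamma = L\frac{\Gamma(\omega)}{\Gamma(\nu)} \int_{\mathbb R^n_+} \dif y\, y^{\nu-1} \delta(1-\Uu(y))\, \Uu(y)^{\omega - d/2} \Ff(y)^{-\omega} \point
\end{align}
The final step is to observe that the remaining factor $\Uu(y)^{\omega-d/2}$ is evaluated on the support of $\delta(1-\Uu(y))$ and therefore equals $1$, giving the claimed representation \cref{eq:FeynmanRepresentationCorollary}.

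The proof is essentially a bookkeeping exercise; the only place where care is really needed is making sure that the combination of the Jacobian, the monomial weights, and the delta-function rescaling assemble to exactly $t^{\omega/L - 1}$ after the substitution $t = s^L$, so that the $L$-prefactor appears with the correct power and the resulting $s$-integral has the right exponent $\omega - 1$ for the Gamma integral. The exchange of $t$- and $y$-integrations is justified by the assumed absolute convergence of the original Feynman integral, which is inherited from the Schwinger representation.
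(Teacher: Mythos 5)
Your proposal is correct and follows essentially the same route as the paper: insert $1=\int_0^\infty\dif t\,\delta(t-\Uu(x))$ into the Schwinger representation, rescale $x_i\mapsto t^{1/L}x_i$ using the degree-$L$ and degree-$(L+1)$ homogeneity of $\Uu$ and $\Ff$, and perform the remaining $t$-integral as a Gamma integral (the paper phrases these steps as "analogous to the proof of Lemma~\ref{lem:FeynmanRepresentation}"). Your power-of-$t$ bookkeeping and the final observation that $\Uu^{\omega-d/2}=1$ on the support of the delta are both accurate.
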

\begin{proof}
	Similar to the proof of \cref{lem:FeynmanRepresentation} we will insert $1 = \int_0^\infty \dif t \, \delta\big(t - \Uu(x)\big)$ for $x\in (0,\infty)^n$ in the representation \cref{eq:SchwingerRepresentation}. The remaining steps are completely analogue to the proof of \cref{lem:FeynmanRepresentation} where we will substitute $x_i\mapsto t^{1/L} x_i$. The relation \cref{eq:UContractedDeleted} can be used to evaluate the integration of the $\delta$-distribution.
\end{proof}

Clearly, those integrals converge not for every value of $\nu\in\mathbb C^n$ and $d\in\mathbb C$. Moreover,  \cref{eq:SchwingerRepresentation} and \cref{eq:FeynmanParSpFeynman} will have in general different convergence regions. However, it is only important that their convergence regions intersect with each other (which was shown implicitly by the proof of \cref{lem:FeynmanRepresentation}). Afterwards we will meromorphically continue \cref{eq:SchwingerRepresentation} and \cref{eq:FeynmanParSpFeynman} to the whole complex plane. The convergence as well as the meromorphic continuation will be discussed in more detail in the next \cref{sec:DimAnaReg}.\bigskip

The previous parametric representations contained two polynomials. With the following parametric Feynman integral we will give a representation, which contains only one polynomial. This representation was first noted in \cite{LeeCriticalPointsNumber2013} and will be very convenient when applied to hypergeometric theory. The following representation was also found independently in \cite{NasrollahpoursamamiPeriodsFeynmanDiagrams2016}.

\begin{lemma}[Lee-Pomeransky representation \cite{LeeCriticalPointsNumber2013,BitounFeynmanIntegralRelations2019}] \label{lem:LeePomeranskyRepresentation}
	In case of absolute convergence we can rewrite the Feynman integral as
	\begin{align}
		\mathcal I_\Gamma(d,\nu,p,m) = \frac{\Gamma\left(\dhalf\right)}{\Gamma\left(\dhalf-\omega\right)\Gamma(\nu)} \int_{\mathbb R^n_+} \dif x\, x^{\nu-1} \Gg^{-\dhalf} \label{eq:LeePomeranskyRepresentation}
	\end{align}
	where $\Gg = \Uu + \Ff$\glsadd{Gg}\glsunset{Gg} denotes the sum of the first and the second Symanzik polynomial. We will assume\footnote{Those restrictions can later be relaxed by meromorphic continuation, see \cref{sec:DimAnaReg}.} that $\Re\!\left(\dhalf\right)>0$, $\Re\!(\nu_i)>0$ and $\Re\left(\dhalf-\omega\right)>0$.
\end{lemma}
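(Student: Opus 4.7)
The plan is to reduce the claim to the Feynman representation \cref{eq:FeynmanParSpFeynman}, which has already been established. Concretely, the strategy is to start from the integral on the right-hand side of \cref{eq:LeePomeranskyRepresentation} and, via a single rescaling together with an elementary Beta-integral, cast it into the $\delta$-form of the Feynman representation; the Gamma factors will then come out automatically.

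So I would write $J := \int_{\mathbb R^n_+} \dif x\, x^{\nu-1}\, \Gg(x)^{-\dhalf}$, insert the trivial identity $1 = \int_0^\infty \dif t\, \delta(t - H(x))$ for an arbitrary linear form $H(x) = \sum_i h_i x_i$ with $h_i\geq 0$ not all zero (exactly as in the proof of \cref{lem:FeynmanRepresentation}), and perform the substitution $x \mapsto t x$. Since $\Uu$ is homogeneous of degree $L$ and $\Ff$ of degree $L+1$, one has $\Gg(tx) = t^L\bigl(\Uu(x) + t\,\Ff(x)\bigr)$, and $\delta(t - tH(x)) = t^{-1}\delta(1 - H(x))$ for $t>0$. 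The Jacobian contributes a factor $t^{|\nu|}$ with $|\nu| = \sum_i \nu_i = \omega + L d/2$, so after combining all powers of $t$ and exchanging the order of integration one arrives at
\begin{align*}
    J = \int_{\mathbb R^n_+} \dif x\, x^{\nu-1}\, \delta(1-H(x)) \int_0^\infty \dif t\, t^{\omega - 1}\, \bigl(\Uu(x) + t\,\Ff(x)\bigr)^{-\dhalf} \point
\end{align*}

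The inner $t$-integral is elementary: with the substitution $u = t\, \Ff(x)/\Uu(x)$ it evaluates to
\begin{align*}
    \int_0^\infty \dif t\, t^{\omega - 1}\, (\Uu + t\,\Ff)^{-\dhalf} = \frac{\Uu^{\omega - \dhalf}}{\Ff^{\omega}}\, B\!\left(\omega,\, \dhalf - \omega\right) = \frac{\Uu^{\omega - \dhalf}}{\Ff^{\omega}}\cdot \frac{\Gamma(\omega)\,\Gamma\!\left(\dhalf - \omega\right)}{\Gamma\!\left(\dhalf\right)} \point
\end{align*}
Substituting this back and comparing with the Feynman representation \cref{eq:FeynmanParSpFeynman} yields $J = \Gamma\!\left(\dhalf - \omega\right)\Gamma(\nu)\,\mathcal I_\Gamma(d,\nu,p,m) \big/ \Gamma\!\left(\dhalf\right)$, which is precisely \cref{eq:LeePomeranskyRepresentation} after a trivial rearrangement. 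Note that the freedom in the choice of $H$ drops out at the very last step, as expected from Cheng--Wu.

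The main subtlety is the justification of Fubini for the interchange of the $t$- and $x$-integrations. Near $t = 0$ the integrand of the $t$-integral behaves like $t^{\omega-1}\,\Uu(x)^{-\dhalf}$, which is integrable precisely when $\Re(\omega) > 0$; near $t = \infty$ it behaves like $t^{\omega - 1 - \dhalf}\,\Ff(x)^{-\dhalf}$, integrable precisely when $\Re\!\left(\dhalf - \omega\right) > 0$. Together with $\Re(\nu_i) > 0$, these are exactly the standing hypotheses of the lemma, and in the assumed regime of absolute convergence of $\mathcal I_\Gamma$ the full double integral is absolutely convergent on $(0,\infty)^n \times (0,\infty)$, so the exchange of integration orders is licit and the argument goes through.
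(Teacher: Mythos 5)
Your proof is correct and is essentially the paper's own argument run in reverse: the paper starts from the Feynman representation \cref{eq:FeynmanParSpFeynman}, introduces the $t$-integral via the Euler beta identity $D^{-\omega}=\tfrac{\Gamma(\alpha)}{\Gamma(\alpha-\omega)\Gamma(\omega)}\int_0^\infty \dif t\, t^{\omega-1}(1+Dt)^{-\alpha}$ with $D=\Ff/\Uu$, $\alpha=\dhalf$, and then rescales $x_i=y_i/t$ to reach \cref{eq:LeePomeranskyRepresentation}, whereas you insert the $\delta$-distribution into the Lee-Pomeransky integral, rescale, and then evaluate the resulting $t$-integral as the same beta function. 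The manipulations (homogeneity of $\Uu$ and $\Ff$, Cheng--Wu freedom in $H$, beta integral) are identical, so this is the same proof.
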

\begin{proof}
	Note that for every real, positive $D>0$ we have 
	\begin{align}
		D^{-\omega} = \frac{\Gamma(\alpha)}{\Gamma(\alpha-\omega)\Gamma(\omega)} \int_0^\infty \dif t \, t^{\omega-1} (1 + D t)^{-\alpha} \label{eq:LeePomTrick}
	\end{align}
	from the Euler beta function, where we also assume $\Re(\alpha)>0$, $\Re(\omega)>0$ and $\Re(\alpha-\omega)>0$. Applying \cref{eq:LeePomTrick} with $D=\frac{\Ff}{\Uu}$ and $\alpha=\dhalf$ to the representation \cref{eq:FeynmanParSpFeynman} we obtain
	\begin{align}
		\mathcal I_\Gamma(d,\nu,p,m) = \frac{\Gamma\left(\dhalf\right)}{\Gamma\left(\dhalf-\omega\right)\Gamma(\nu)} \int_0^\infty \dif t\,t^{\omega-1} \int_{\mathbb R^n_+} \dif x \, x^{\nu-1} \delta\big(1-H(x)\big) (\Uu + t \Ff)^{-\dhalf} \point
	\end{align}
	Again, by a substitution $x_i = \frac{y_i}{t}$ we will arrive at the assertion.
\end{proof}

Hence, the Feynman integral can be expressed as the Mellin transform of a polynomial up to a certain power. For this reason, integrals of the form \cref{eq:LeePomeranskyRepresentation} are often called \textit{Euler-Mellin integrals}, which were systematically investigated in \cite{NilssonMellinTransformsMultivariate2010, BerkeschEulerMellinIntegrals2013}. 

On a more abstract level, \cref{lem:LeePomeranskyRepresentation} can be considered as an application of the ``Cayley trick'' (\cref{lem:CayleysTrick}) for integrals. Indeed, we will see in \cref{sec:FeynmanIntegralsAsAHyp} that \cref{eq:LeePomeranskyRepresentation} arise from the Cayley embedding of \cref{eq:FeynmanParSpFeynman}.\bigskip

Instead of the external momenta $p$ and the masses $m$, the parametric representations indicate another choice of what we want to use as the variables of the Feynman integral. Thus, it is much more convenient to use the coefficients of the Symanzik polynomials as the variables of Feynman integrals. Hence, we will write\glsadd{Gg}
\begin{align}
    \Gg = \sum_{a\in A} z_a x^a = \sum_{j=1}^N z_j \prod_{i=1}^n x_i^{a^{(j)}_i} \in\mathbb C[x_1,\ldots,x_n]\label{eq:Gsupport}
\end{align}
where $A=\left\{a^{(1)},\ldots,a^{(N)}\right\}\subset\mathbb Z_{\geq 0}^n$ is the set of exponents and $z\in\mathbb C^N$ are the new variables of the Feynman integral. To avoid redundancy we will always assume that $z_j\not\equiv 0$ and that all elements of $A$ are pairwise disjoint. Therefore, we will change our notation of the Feynman integral to $\mathcal I_\Gamma(d,\nu,z)$.

In \cref{eq:Gsupport} we have introduced in fact a generalization of Feynman integrals by way of the back door. Equation \cref{eq:Gsupport} gives also coefficients to the first Symanzik polynomial, and it is also implicitly assumed that the coefficients in the second Symanzik polynomial are independent of each other. We will call such an extension of the Feynman integral a \textit{generalized Feynman integral}. By specifying the variables $z$ to the physically relevant cases, we can always lead back to the original case. However, it should not go unmentioned, that such a limit from generalized Feynman integrals to physical Feynman integrals will not always be unproblematic. We will discuss this problem more in detail in the following sections. Especially, it is not guaranteed that specific representations of generalized Feynman integrals e.g.\ in terms of power series or integrals converge also for the physically relevant specification of the variables $z$, albeit the analytic continuation of those power series or integral representations will still lead to finite values.\bigskip

As before, we will write $\Aa = \left\{\left(1,a^{(1)}\right),\ldots,\left(1,a^{(N)}\right)\right\}$ for the homogenized point configuration of $A$ (see \cref{sec:affineSpace} and \cref{ssec:vectorConfigurations}). In addition, we will also transfer the parameters $d$ and $\nu$ to the vector space $\mathbb C^{n+1}$. Therefore, we will write 
\begin{align}
	\gls{nuu} := (\nu_0, \nu)\in \mathbb C^{n+1} \qquad\text{with}\quad \nu_0 := \dhalf \point \label{eq:nuuDefinition}
\end{align}
As we will see later, the spacetime dimension $d$ and the indices $\nu$ will take a similar role in parametric Feynman integrals. Therefore, it is meaningful from a mathematical perspective to combine those parameters into a single parameter $\nuu$. Their similar mathematical role is also the reason why instead of using dimensional regularization one can regularize the integral by considering the indices as done in analytic regularization \cite{SpeerGeneralizedFeynmanAmplitudes1969}. We will develop this consideration more in detail in the following two \cref{sec:DimAnaReg,sec:FeynmanIntegralsAsAHyp}. 

Hence, the (generalized) Feynman integral becomes finally to $\gls{gFI}$, where the essential part of the graph structure, which is necessary to evaluate the Feynman integral is contained in the vector configuration $\Aa\in\mathbb Z^{(n+1)\times N}$. The $\nuu\in\mathbb C^n$ are the parameters and $z\in\mathbb R^N$ are the variables, encoding the dependencies from external momenta and masses according to \cref{eq:Gsupport}. \bigskip

It is a genuine aspect of hypergeometric functions to be also representable by Mellin-Barnes integrals. By use of a multivariate version of Mellin's inversion theorem, we will derive such a representation of the Feynman integral.

\begin{theorem}[Mellin-Barnes representation] \label{thm:MellinBarnesRepresentation}
    Let $\sigma\subset \{1,\ldots,N\}$ be an index subset with cardinality $n+1$, such that the matrix $\Aa$ restricted to columns of $\sigma$ is invertible, $\det(\Aas) \neq 0$. Then the Feynman integral can be written as a multi-dimensional Mellin-Barnes integral
    \begin{align}
        \mathcal I_\Aa (\nuu, z) =  \frac{1}{\Gamma(\nu_0-\omega)\Gamma(\nu)} \frac{ z_\sigma^{-\Aas^{-1}\nuu}}{\left|\det(\Aas)\right|} \int_\gamma \frac{\dif t}{(2\pi i)^r} \Gamma( t) \Gamma\!\left(\Aas^{-1}\nuu-\Aas^{-1}\Aabs t\right)  z_{\bar\sigma}^{- t}  z_\sigma^{\Aas^{-1}\Aabs  t} \label{eq:MellinBarnesRepresentation}
    \end{align}
    wherever this integral converges. The set $\bar\sigma:=\{1,\ldots N\} \setminus \sigma$ denotes the complement of $\sigma$, containing $r:=N-n-1$ elements. Restrictions of vectors and matrices to those index sets are similarly defined as $ z_\sigma := (z_i)_{i\in\sigma}$, $ z_{\bar\sigma} := (z_i)_{i\in\bar\sigma}$, $\Aabs := (a_i)_{i\in\bar\sigma}$. Every component of the integration contour $\gamma\in\mathbb C^r$ goes from $-i\infty$ to $i\infty$ such that the poles of the integrand are separated. 
\end{theorem}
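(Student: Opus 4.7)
The plan is to derive the Mellin-Barnes representation directly from the Lee-Pomeransky representation (Lemma 3.2 in the excerpt) by applying a multivariate Mellin-Barnes splitting to $\Gg^{-\nu_0}$ and then performing the $x$-integrations.

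First, I would use the standard Mellin-Barnes formula for a sum raised to a power in its fully symmetric form. Peeling off one term at a time via $(A+B)^{-c}=\tfrac{1}{\Gamma(c)}\tfrac{1}{2\pi i}\int \Gamma(t)\Gamma(c-t)A^{-t}B^{-c+t}\,dt$ and iterating $N-1$ times, one obtains the compact representation
\begin{align*}
  \Gg(x)^{-\nu_0} = \frac{1}{\Gamma(\nu_0)} \int \prod_{j=1}^N \frac{dt_j}{2\pi i}\,(2\pi i)\,\delta\!\left(\nu_0-\sum_{j=1}^N t_j\right) \prod_{j=1}^N \Gamma(t_j)\, (z_j x^{a^{(j)}})^{-t_j}\comma
\end{align*}
where the $\delta$-constraint encodes the elimination of one distinguished contour. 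I would then substitute this into the Lee-Pomeransky representation and interchange the $x$- and $t$-integrations by Fubini, which is justified on a suitably chosen neighbourhood of the contour $\gamma$ where the relevant integrals are absolutely convergent.

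Next, I would perform the $x$-integrations. The integrand in $x$ becomes a pure monomial $x^{\nu-1-\sum_j a^{(j)} t_j}$, so the $n$-fold integral produces $n$ Dirac delta distributions on the Mellin contours via the Mellin inversion identity $\int_0^\infty x^{s-1}\,dx = 2\pi i\,\delta(s)$ (interpreted on the imaginary axis). Combining these $n$ constraints with the one from $\Gg^{-\nu_0}$, the Mellin variables are collectively constrained by the linear system $\Aa t = \nuu$.

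The final step is to resolve this constraint using the pivot $\sigma$. Since $\Aas$ is invertible, one can split $t=(t_\sigma,t_{\bar\sigma})$ and solve $t_\sigma=\Aas^{-1}(\nuu-\Aabs t_{\bar\sigma})$, producing the Jacobian factor $|\det\Aas|^{-1}$. Substituting back yields exactly the Gamma functions $\Gamma(\Aas^{-1}\nuu-\Aas^{-1}\Aabs t)$ and the factor $z_\sigma^{-\Aas^{-1}\nuu} z_\sigma^{\Aas^{-1}\Aabs t} z_{\bar\sigma}^{-t}$ in the stated representation \cref{eq:MellinBarnesRepresentation}. The $\Gamma(\nu_0)$ in the Lee-Pomeransky prefactor is cancelled by the $\frac{1}{\Gamma(\nu_0)}$ from the Mellin-Barnes step, leaving the claimed prefactor $\tfrac{1}{\Gamma(\nu_0-\omega)\Gamma(\nu)}$.

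The main technical obstacle is the rigorous treatment of the $x$-integrations: pointwise the monomial integrals $\int_0^\infty x^{s-1}dx$ diverge, and they only acquire meaning as distributions on the Mellin contour. To avoid manipulating distributions one could alternatively first peel off only the $\bar\sigma$-terms, leaving $G_\sigma(x)^{-\nu_0+\sum t}$ in the integrand, and then change variables $y_i=z_{\sigma_i}x_0\,x^{a^{(\sigma_i)}}$ (after inserting a Cheng-Wu hyperplane $\delta(x_0-1)$ to match the dimensions, cf.\ \cref{lem:FeynmanRepresentation}) with Jacobian $|\det\Aas|^{-1}$. The remaining integral is then an Euler-type integral that evaluates to the Gamma functions in the claim. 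Either route works; I would carry out the distributional version since it most transparently shows how the pattern $\Aa t=\nuu$ emerges and how the $\sigma/\bar\sigma$ split manifests the Gale-dual combinatorics underlying the $\Aa$-hypergeometric system.
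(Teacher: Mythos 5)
Your proposal is correct and follows essentially the same route as the paper: introduce one Mellin variable per monomial of $\Gg$, integrate out the Schwinger parameters to impose the linear constraint $\Aa t = \nuu$, and resolve it by pivoting on $\sigma$, which produces the Jacobian $|\det\Aas|^{-1}$ and the stated $\Gamma$-factors. The only difference is technical packaging: where you iterate the two-term Mellin--Barnes formula and then invoke the formal identity $\int_0^\infty x^{s-1}\dif x = 2\pi i\,\delta(s)$, the paper first applies Schwinger's trick (\cref{lem:SchwingersTrick}) to trade $\Gg^{-\nu_0}$ for an exponential with an auxiliary variable $x_0$, factorizes it via the Cahen--Mellin integral, and then pairs the $(x_0,x)$-integrations with the $\sigma$-block of Mellin variables through the multivariate Mellin inversion theorem --- which is exactly the rigorous version of your distributional step and coincides with the alternative you sketch in your last paragraph.
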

\begin{cor} \label{cor:MellinBarnesRepresentation}
    Let $N=n+1$ or in other words let $\Aa$ be quadratic or equivalently $\Newt(\Gg)=\Conv(A)$ forming a simplex. If the Feynman integral $\mathcal I_\Aa(\nuu, z)$ from \cref{eq:LeePomeranskyRepresentation} converges absolutely, it can be expressed by a combination of $\Gamma$-functions
    \begin{align}
        \mathcal I_\Aa (\nuu, z) = \frac{1}{\Gamma(\nu_0-\omega)\Gamma(\nu)}  \frac{\Gamma(\Aa^{-1}\nuu)}{ \left|\det (\Aa)\right|}  z^{-\Aa^{-1}\nuu} \point \label{eq:MellinBarnesCorollary}
    \end{align}
\end{cor}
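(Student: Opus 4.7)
The cleanest derivation is to read \cref{eq:MellinBarnesCorollary} off directly from the preceding \cref{thm:MellinBarnesRepresentation} by choosing $\sigma = \{1,\ldots,N\} = \{1,\ldots,n+1\}$, so that the complement $\bar\sigma$ is empty and $r = 0$. In this degenerate case the contour integral is zero-dimensional: the measure $\dif t/(2\pi i)^r$ becomes the empty product, the factors $\Gamma(t)$, $z_{\bar\sigma}^{-t}$ and $z_\sigma^{\Aas^{-1}\Aabs t}$ all collapse to $1$, and $\Gamma(\Aas^{-1}\nuu-\Aas^{-1}\Aabs t)$ reduces to $\Gamma(\Aa^{-1}\nuu)$. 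What remains is precisely the claimed right-hand side.

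For a self-contained argument starting from \cref{eq:LeePomeranskyRepresentation}, my plan is to single out a reference vertex of the simplex, say $a^{(n+1)}$, factor the monomial $z_{n+1}\,x^{a^{(n+1)}}$ out of $\Gg$, and substitute $y_j = (z_j/z_{n+1})\,x^{a^{(j)}-a^{(n+1)}}$ for $j=1,\ldots,n$. This brings $\Gg$ into the affine form $z_{n+1}\,x^{a^{(n+1)}}\bigl(1 + y_1 + \ldots + y_n\bigr)$. In logarithmic coordinates the transformation is linear, $\ln y = B^\top \ln x + \ln(z_{\bar{n+1}}/z_{n+1})$, where $B$ is the $n\times n$ matrix with columns $a^{(j)}-a^{(n+1)}$, so the Jacobian is $\dif \ln x = |\det B|^{-1}\dif \ln y$. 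Subtracting the last column of $\Aa$ from the first $n$ columns and expanding along the constant top row shows $|\det B| = |\det \Aa|$, giving the factor $1/|\det \Aa|$.

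The remaining integral is the classical Dirichlet-type evaluation $\int_{\mathbb R_+^n} \dif y\, y^{\tilde c - 1}(1 + \sum_j y_j)^{-\nu_0} = \Gamma(\tilde c)\,\Gamma(\nu_0 - \sum_j \tilde c_j)/\Gamma(\nu_0)$, valid precisely under the positivity assumptions that also guarantee absolute convergence of \cref{eq:LeePomeranskyRepresentation}. The main bookkeeping step, which I expect to be the likeliest source of sign or index confusion, is to verify that the exponent vector $\tilde{\mathbf c} := (\tilde c_1,\ldots,\tilde c_n,\,\nu_0 - \sum_j \tilde c_j) \in \mathbb C^{n+1}$ produced naturally by the substitution coincides with $\Aa^{-1}\nuu$. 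This is cleanest to check by computing $\Aa\tilde{\mathbf c}$: the top row gives $\sum_{j=1}^n \tilde c_j + (\nu_0 - \sum_{j=1}^n \tilde c_j) = \nu_0$ by construction, while the lower block collapses to $B\tilde c + \nu_0 a^{(n+1)} = \nu$ using the defining relation $\tilde c = B^{-1}(\nu - \nu_0\, a^{(n+1)})$. Once $\tilde{\mathbf c} = \Aa^{-1}\nuu$ is established, the combined prefactor $z_{n+1}^{-\nu_0}(z_{\bar{n+1}}/z_{n+1})^{-\tilde c}$ rearranges to $z^{-\Aa^{-1}\nuu}$, and the $\Gamma(\nu_0)$ from \cref{eq:LeePomeranskyRepresentation} cancels the $\Gamma(\nu_0)$ in the denominator of the Dirichlet identity, yielding the stated formula.
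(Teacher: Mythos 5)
Your first paragraph is exactly the paper's argument: the proof given there states that the corollary is the special case of \cref{thm:MellinBarnesRepresentation} in which no $t$-integrations need to be introduced, i.e.\ $\sigma=\{1,\ldots,N\}$, $\bar\sigma=\emptyset$, $r=0$, and reading the theorem's formula in that degenerate case immediately yields \cref{eq:MellinBarnesCorollary}. Your self-contained alternative is also correct and is a genuinely different route: instead of going through Schwinger's trick, the Cahen--Mellin representation and Mellin inversion, you factor out a vertex monomial, perform a linear change of variables in logarithmic coordinates (with $|\det B|=|\det\Aa|$ verified correctly by column reduction of the homogenized matrix), and reduce everything to the classical Dirichlet integral; the bookkeeping identity $\Aa\tilde{\mathbf c}=\nuu$ checks out, and the $\Gamma(\nu_0)$ cancellation is handled properly. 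What this buys is an elementary derivation that avoids the multivariate Mellin-inversion machinery entirely and makes the convergence condition (positivity of the components of $\Aa^{-1}\nuu$, i.e.\ \cref{thm:FIconvergence}) visible directly as the domain of validity of the Dirichlet formula; the paper itself gestures at a similar alternative via Ramanujan's master theorem, but your Dirichlet-integral version is arguably the most classical and transparent of the three.
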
  
\begin{proof}
    For simplicity, we will write $\mathcal I_\Aa(\nuu,z) = \frac{1}{\Gamma(\nu_0-\omega)\Gamma(\nu)} \gls{gFJ}$. Then by Schwinger's trick (\cref{lem:SchwingersTrick}) we can reformulate \cref{eq:LeePomeranskyRepresentation} as
    \begin{align}
        \mathcal J_\Aa(\nuu, z) = \int_{\mathbb{R}^{n+1}_+} \dif x_0 \, x_0^{\nu_0-1} \dif x \,  x^{\nu-1} e^{-x_0 \Gg} \comma
    \end{align}
    where $\nu_0=\dhalf$ was defined in \cref{eq:nuuDefinition}. Writing $\underline{x} = (x_0, x)$, $\nuu = (\nu_0,\nu)$ and using the Cahen-Mellin integral representation of exponential function one obtains
    \begin{align}
        \mathcal J_\Aa(\nuu, z) = \int_{\mathbb{R}^{n+1}_+} \dif \underline{x} \, \underline{x}^{\nuu-1} \int_{\delta+i\mathbb R^{n+1}} \frac{\dif u}{(2\pi i)^{n+1}} \Gamma(u) z_\sigma^{-u} \underline{x}^{-\Aas u} \int_{\eta+i\mathbb R^r} \frac{\dif t}{(2\pi i)^r} \Gamma(t) z_{\bar\sigma}^{-t} \underline{x}^{-\Aabs t} 
    \end{align}
    with $u\in\mathbb{C}^{n+1}$, $t\in\mathbb{C}^r$, some arbitrary positive numbers $\delta_i >0$, $\eta_i>0$ and where we split the polynomial $\Gg$ into a $\sigma$ and a $\bar\sigma$ part. By a substitution $u\mapsto \Aas^{-1} u^\prime$ it is
    \begin{align}
        \mathcal J_\Aa(\nuu,z) = \left|\det\!\left(\Aas^{-1}\right)\right| &\int_{\eta+i\mathbb R^{r}} \frac{\dif t}{(2\pi i)^r} \Gamma(t) z_{\bar\sigma}^{-t} \nonumber \\ 
        & \: \int_{\mathbb{R}^{n+1}_+}\! \dif \underline{x} \, \int_{\Aas \delta+i \Aas \mathbb R^{n+1}} \frac{\dif u^\prime}{(2\pi i)^{n+1}}  \Gamma\!\left(\Aas^{-1}u^\prime\right) z_\sigma^{-\Aas^{-1}u^\prime} \underline{x}^{\nuu-u^\prime -\Aabs t- 1} \:\text{.}
    \end{align}
    Since the matrix $\Aas$ contains only positive values, the integration region remains the same $\Aas \delta + i \Aas \mathbb R^{n+1} \simeq \delta^\prime+i\mathbb R^{n+1}$ with some other positive numbers $\delta^\prime\in\mathbb R^{n+1}_{>0} $, which additionally have to satisfy $\Aas^{-1}\delta^\prime >0$. By Mellin's inversion theorem \cite{AntipovaInversionMultidimensionalMellin2007} only the $t$-integrations remain and one obtains equation \cref{eq:MellinBarnesRepresentation}.

    Thereby, the integration contour has to be chosen, such that the poles are separated from each other in order to satisfy $\Aas^{-1}\delta^\prime>0$. More specific this means that the contour $\gamma$ has the form $c+i\mathbb R^{n+1}$ where $c\in\mathbb R^{n+1}_{>0}$ satisfies $\Aas^{-1}\nuu - \Aas^{-1}\Aabs c>0$. Clearly, in order for those $c$ to exist, the possible values of parameters $\nuu$ are restricted.

    The proof of the corollary is a special case, where one does not have to introduce the integrals over $t$. The existence of the inverse $\Aa^{-1}$ will be ensured by \cref{thm:FIconvergence}.
\end{proof}

A more general version of this theorem can be found in \cite[thm. 5.6]{BerkeschEulerMellinIntegrals2013} with an independent proof. In \cite{SymanzikCalculationsConformalInvariant1972} a similar technique was used to obtain Mellin-Barnes representations from Feynman integrals. It should not go unmentioned, that the precise contours of Mellin-Barnes integrals in higher dimensions contain many subtleties. We refer to \cite{ParisAsymptoticsMellinBarnesIntegrals2001} for general aspects in the study of Mellin-Barnes integrals. Furthermore, Mellin-Barnes integral are well suited to investigate the monodromy of $\Aa$-hypergeometric functions \cite{BeukersMonodromyAhypergeometricFunctions2013}.

According to \cite{HaiConvergenceProblemCertain1995}, integrals of the form \cref{eq:MellinBarnesRepresentation} are also known as multivariate Fox's $H$-functions, where also convergence criteria of those functions can be found. The connection between Feynman integrals and Fox's $H$-function was studied before \cite{Inayat-HussainNewPropertiesHypergeometric1987, Inayat-HussainNewPropertiesHypergeometric1987a, BuschmanFunctionAssociatedCertain1990}.

We have to remark, that the representation \cref{eq:MellinBarnesRepresentation} is not necessarily made for an efficient computation of Feynman integrals. There are much more involved methods to derive more specific Mellin-Barnes representations for certain types of Feynman integrals \cite{UsyukinaRepresentationThreepointFunction1975, BergereAsymptoticExpansionFeynman1974, SmirnovFeynmanIntegralCalculus2006}. The advantage of \cref{eq:MellinBarnesRepresentation} is rather the simplicity of its representation, especially the simplicity in the situation of the \cref{cor:MellinBarnesRepresentation}, which will be of great importance, when constructing series representations later.

Feynman integrals, which satisfy the conditions of \cref{cor:MellinBarnesRepresentation} are the so-called massless ``banana graphs'', i.e.\ graphs consisting in $L$ loops and having the minimal number $L+1$ of edges. However, we can apply \cref{cor:MellinBarnesRepresentation} also to any Euler-Mellin integral where there is exactly one monomial more than variables, i.e.\ where $\Newt(\Gg)$ forms a simplex. 

\Cref{cor:MellinBarnesRepresentation} can alternatively be shown using Ramanujan's master theorem. For this purpose, one splits $\Gg^{-\nu_0}$ by means of the multinomial theorem $(m_1 + \ldots + m_N)^s = \sum_{k_1=0}^\infty \cdots \sum_{k_{N-1}=0}^\infty \frac{(-1)^{|k|} (-s)_{|k|}}{k!} m_1^{k_1} \cdots m_{N-1}^{k_{N-1}} m_N^{s-|k|}$ and then applies a multivariate version of Ramanujan's master theorem \cite{GonzalezGeneralizedRamanujanMaster2011}.\bigskip

Let us illustrate the application of \cref{thm:MellinBarnesRepresentation} about Mellin-Barnes representations with a small example.    

\begin{example} \label{ex:1loopbubbleA}
    Consider the self-energy $1$-loop $2$-point function with one mass (see \cref{fig:bubble1}) having the Symanzik polynomials $\Uu=x_1 + x_2$ and $\Ff=(m_1^2+p^2) x_1 x_2 + m_1^2 x_1^2$ in Euclidean kinematics. Thus, the matrix $\Aa$ and the vector $z$ are given by
    \begin{align}
        \Aa = \begin{pmatrix} 
            1 & 1 & 1 & 1 \\
            1 & 0 & 1 & 2 \\
            0 & 1 & 1 & 0
        \end{pmatrix} \qquad z = (1,1,m_1^2+p^2,m_1^2) \point
    \end{align}
    
    \begin{figure}[bt] 
        \begin{center}
            \includegraphics[width=.38\textwidth, trim = 0 1.5cm 0 0.5cm,clip]{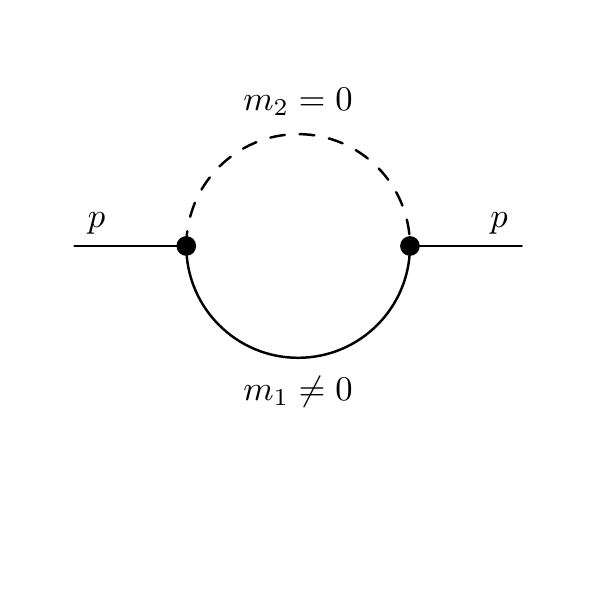}
        \end{center}
        \caption[The $1$-loop self-energy Feynman graph with one mass]{The $1$-loop self-energy Feynman graph with one mass (self-energy/bubble graph).} \label{fig:bubble1}
    \end{figure}
    
    Choosing an index set $\sigma = \{1,2,3\}$, the corresponding Feynman integral in the Mellin-Barnes representation of \cref{thm:MellinBarnesRepresentation} is given by
    \begin{align}
        \mathcal J_\Aa (\nuu,z) &= z_1^{-\nu_0+\nu_2} z_2^{-\nu_0+\nu_1} z_3^{\nu_0-\nu_1-\nu_2} \int_{\delta-i\infty}^{\delta+i\infty} \frac{\dif t}{2\pi i} \Gamma(t) \Gamma(\nu_0-\nu_1+t)\nonumber\\
        &\qquad \qquad \Gamma(\nu_0-\nu_2-t) \Gamma(-\nu_0+\nu_1+\nu_2-t) \left(\frac{z_2 z_4}{z_1 z_3}\right)^{-t} \comma
    \end{align}
    where we write again $\mathcal J_\Aa(\nuu,z) =  \Gamma(\nu_0-\omega)\Gamma(\nu) \mathcal I_\Aa(\nuu,z)$ to omit the prefactor. For the correct contour prescription the poles have to be separated such that there exist values $\delta$ satisfying $\max \{0,-\nu_0+\nu_1\} < \delta < \min \{\nu_0-\nu_2,-\nu_0+\nu_1+\nu_2\}$. Therefore, we can extract $4$ conditions on the values of $\nuu$. As we see in this example, these conditions are equivalent to demand $\delta\in\relint\!\left(\Re(\nu_0)\Newt(\Gg)\right)$, where $\Re(\nu_0)\Newt(\Gg)$ is the Newton polytope of $\Gg$ dilated by $\Re(\nu_0)$ (see \cref{sec:ConvexPolytopes}). Hence, $\Newt(\Gg)$ should be full dimensional in order to allow values for $\delta$. As we will see in \cref{sec:DimAnaReg} the full dimensionalness of $\Newt(\Gg)$ relates directly to the convergence of $\mathcal I_\Aa(\nuu,z)$. In this case, by Cauchy's theorem the integral evaluates simply to a Gaussian hypergeometric function
    \begin{align}
        \mathcal J_\Aa (\nuu,z) &=  \frac{\Gamma(2\nu_0-\nu_1-\nu_2)\Gamma(\nu_2)\Gamma(\nu_0-\nu_2)\Gamma(-\nu_0+\nu_1+\nu_2)}{\Gamma(\nu_0)} \nonumber\\
        & z_1^{-\nu_0+\nu_2} z_2^{-\nu_0+\nu_1} z_3^{\nu_0-\nu_1-\nu_2} \HypF{\nu_0-\nu_2,-\nu_0+\nu_1+\nu_2}{\nu_0}{1-\frac{z_2 z_4}{z_1z_3}} \point
    \end{align}
    After restoring the original prefactors and coefficients $z_1=z_2=1$, $z_3=p^2-m_1^2$, $z_4=m_1^2$ and $\nu_0=\frac{d}{2}$ it agrees with the expected result
    \begin{align}
        \mathcal I_\Gamma (\nu_1,\nu_2,d,m_1^2,p^2) &= \frac{\Gamma\left(\frac{d}{2}-\nu_2\right)\Gamma\left(-\frac{d}{2}+\nu_1+\nu_2\right)}{\Gamma(\nu_1)\Gamma\left(\frac{d}{2}\right)} \nonumber\\
        & (p^2-m_1^2)^{\frac{d}{2}-\nu_1-\nu_2} \HypF{\frac{d}{2}-\nu_2,-\frac{d}{2}+\nu_1+\nu_2}{\frac{d}{2}}{1-\frac{m_1^2}{p^2-m_1^2}} \point
    \end{align}
\end{example}
\bigskip

As was mentioned in the beginning of this chapter, scalar Feynman integrals arise only in the simplest possible QFTs and in general Feynman integrals will carry an additional Lorentz-tensorial structure. More precisely, in a more general QFT we also have to expect terms in the numerator of \cref{eq:FeynmanMomSp}, i.e.\ Feynman integrals appear in the form
\begin{align}
	\mathcal I_\Gamma (s,d,\nu,p,m) = \int_{\mathbb R^{d\times L}} \left(\prod_{j=1}^L \frac{\ddif k_j}{\pi^{d/2}} \right) \frac{q_1^{s_1} \cdots q_n^{s_n}}{D_1^{\nu_1} \cdots D_n^{\nu_n}} \label{eq:FeynmanMomSpTensor}
\end{align}
where $s_i\in\mathbb N_0^d$, $q_i$ is the momentum of the edge $e_i$, $D_i=q_i^2+m_i^2$ are the inverse propagators and we will also use a multi-index notation $q_i^{s_i} = \prod_{\mu=1}^d \!\left(\!\left(q_i\right)^\mu\right)^{(s_i)^\mu}$ for Lorentzian vectors. We deviate here from the common representation of Lorentz tensors by small Greek letters as indices, in order that the symbols do not become overloaded by the many indices. Thus, for a fixed value of $s\in\mathbb N_0^{d\times n}$, equation \cref{eq:FeynmanMomSpTensor} represents an element of a Lorentz tensor. Moreover, equation \cref{eq:FeynmanMomSpTensor} often shows up restricted to the loop momenta $k_1,\ldots,k_L$ in the numerator. Note that \cref{eq:FeynmanMomSpTensor} includes also this case. By a choice of a chord set $T^*$, we can assign the momenta $q_i=k_i$ for edges $e_i\in T^*$. The remaining momenta can be excluded by setting $s_i=0$ whenever $e_i\notin T^*$.

However, one can always reduce those integrals to a linear combination of scalar Feynman integrals. For $1$-loop integrals such a reduction is known for a long time \cite{PassarinoOneloopCorrectionsAnnihilation1979}, which wrotes the tensorial Feynman integral as a linear combination of the physically relevant tensors and find the coefficients by considering certain special cases. Another idea to reduce tensorial Feynman integrals by certain differential operators was developed in \cite{DavydychevSimpleFormulaReducing1991} for $1$-loop integrals and extended to higher loops in \cite{TarasovConnectionFeynmanIntegrals1996, TarasovGeneralizedRecurrenceRelations1997}. Thus, we can consider \cref{eq:FeynmanMomSpTensor} as a derivative
\begin{align}
	& \mathcal I_\Gamma (s,d,\nu,p,m) = \left. \prod_{i=1}^n \left(\pd{}{c_i}\right)^{s_i}  \int_{\mathbb R^{d\times L}} \left(\prod_{j=1}^L \frac{\ddif k_j}{\pi^{d/2}} \right) \frac{e^{c_1 q_1 + \ldots + c_n q_n}}{D_1^{\nu_1} \cdots D_n^{\nu_n}}\right|_{c=0} \label{eq:FeynmanTensorDerivative}
\end{align}
Repeating the rewriting from momentum space to parametric representation (\cref{thm:SchwingerRepresentation}) with the additional exponential function, one can see that all those tensorial Feynman integrals can be expressed by means of scalar Feynman integrals with shifted parameters.

\begin{theorem}[Tensor reduction, similar to \cite{TarasovConnectionFeynmanIntegrals1996, TarasovGeneralizedRecurrenceRelations1997}] \label{thm:TensorReduction}
	Let $\mathcal I_\Gamma(s,d,\nu,p,m)$ be the tensorial Feynman integral defined in \cref{eq:FeynmanMomSpTensor} for a fixed value of $s\in\mathbb N_0^{d\times n}$. Then there exist polynomials $\alpha_1,\ldots,\alpha_t\in\mathbb Q[p,m]$ and vectors $\rho_1,\ldots,\rho_t\in\mathbb Z_{\geq0}$ such that
	\begin{align}
		\mathcal I_\Gamma(s,d,\nu,p,m) = (-1)^{|s|} \sum_{j=1}^t \alpha_j(p,m)\ (\nu)_{\rho_j}\ \mathcal I_\Gamma (d+ 2 |s|, \nu+\rho_j,p,m) \label{eq:tensorReduction}
	\end{align}
	the tensorial Feynman integral is a linear combination of scalar Feynman integrals with shifted parameters. $(\nu)_{\rho_j} := \frac{\Gamma(\nu+\rho_j)}{\Gamma(\nu)}$ denotes the Pochhammer symbol and $|s|:= \sum_{i=1}^n\sum_{\mu=1}^d (s_i)^\mu \in \mathbb N$ is the number of derivatives in \cref{eq:FeynmanTensorDerivative}.
\end{theorem}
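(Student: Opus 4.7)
The plan is to implement Tarasov's exponential trick rigorously using the Schwinger representation of Theorem~\ref{thm:SchwingerRepresentation}. Following the hint of equation \cref{eq:FeynmanTensorDerivative}, I rewrite $q_1^{s_1}\cdots q_n^{s_n} = \prod_{i,\mu}(\partial/\partial c_i^\mu)^{(s_i)^\mu}\, e^{c\cdot q}\big|_{c=0}$, where the Lorentz contraction is understood. Inserting this into \cref{eq:FeynmanMomSpTensor}, applying Schwinger's trick (\cref{lem:SchwingersTrick}) to the denominator and exchanging the order of integration, the loop integration becomes a Gaussian with an extra linear source term. Using \cref{eq:LambdaxDrelation} and \cref{eq:qjhatqj}, the exponent in the $k$-integral reads
\begin{align*}
-k^\top M k - 2\bigl(Q - \tfrac{1}{2}C_{T^*}c\bigr)^{\!\top}\! k + c^\top \hat q - J.
\end{align*}
Completing the square and invoking \cref{lem:GaussianIntegrals}, the loop integration yields
\begin{align*}
\mathcal U^{-d/2}\,\exp\!\left(-\frac{\Ff}{\Uu} + c^\top R + c^\top P\, c\right),\quad R := \hat q - C_{T^*}^{\!\top} M^{-1}Q,\quad P := \tfrac{1}{4}C_{T^*}^{\!\top} M^{-1} C_{T^*}.
\end{align*}

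Next I take the $|s|$ derivatives and set $c=0$. The result is a finite sum of monomials in the entries of $R$ and $P$, multiplied by the integrand $\mathcal U^{-d/2}\,e^{-\Ff/\Uu}$. The crucial observation is that $M^{-1} = \Adj(M)/\det(M) = \Adj(M)/\mathcal U$, so every factor of $P$ (and the $M^{-1}$ piece in $R$) contributes one additional power of $1/\mathcal U$ and a polynomial in $x$ whose coefficients depend only on the graph topology and on $\hat q$, hence on $p$. After expanding, a generic term takes the form
\begin{align*}
\alpha_j(p,m)\; x^{\rho_j}\; \mathcal U^{-d/2-k_j}\, e^{-\Ff/\Uu},
\end{align*}
with $\rho_j\in\mathbb Z_{\geq 0}^n$ and some $k_j\in\{0,1,\dots,|s|\}$. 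A counting argument (each $P$ carries two derivatives, each $R$ carries one; together they account for all $|s|$ derivatives) shows that $k_j = |s|$ after homogenising the polynomial coefficients of $R$ and $P$ in $x$ so that $\deg_x(x^{\rho_j}\,\mathcal U^{-|s|}) = -|s|$ relative to the untwisted integrand; equivalently, one uses the homogeneity of $\mathcal U$, $\Ff$ to redistribute powers of $\mathcal U$, producing the universal shift $d \mapsto d + 2|s|$.

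Finally, the factor $x^{\rho_j}$ converts into a shift of the Schwinger indices: combining $x^{\nu-1}\cdot x^{\rho_j} = x^{\nu+\rho_j-1}$ with the prefactor $1/\Gamma(\nu)$ of \cref{eq:SchwingerRepresentation} gives
\begin{align*}
\frac{1}{\Gamma(\nu)}\int_{\mathbb R_+^n} dx\; x^{\nu+\rho_j-1}\,\mathcal U^{-(d+2|s|)/2}\,e^{-\Ff/\Uu} = (\nu)_{\rho_j}\,\mathcal I_\Gamma(d+2|s|,\nu+\rho_j,p,m),
\end{align*}
where $(\nu)_{\rho_j} = \Gamma(\nu+\rho_j)/\Gamma(\nu)$ comes from the ratio of prefactors. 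The overall sign $(-1)^{|s|}$ is produced by the sign of the linear source $c$ that is introduced relative to the natural Gaussian in $k$, together with the Minkowskian/Euclidean convention for the numerator. Summing the finitely many terms produced by the derivative expansion yields the claimed identity \cref{eq:tensorReduction}.

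\textbf{Main obstacle.} The technical heart of the argument is the bookkeeping step: showing that after expanding the derivatives, every surviving term carries exactly the power $\mathcal U^{-d/2-|s|}$, so that the effective dimensional shift is uniformly $d\mapsto d+2|s|$ (and not a non-uniform mixture), and that the remaining $x$-dependence factors cleanly as $x^{\rho_j}$ with $\rho_j\in\mathbb Z_{\geq 0}^n$. This requires carefully exploiting the homogeneity $\mathcal U \in x^L$, $\Ff\in x^{L+1}$ of the Symanzik polynomials together with the matrix identity $C_{T^*}^{\!\top}\Adj(M)\, C_{T^*}$ being a polynomial matrix whose entries are homogeneous of degree $L-1$ in $x$. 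Once this combinatorial identification is made, recollecting the terms with a common shift $\rho_j$ defines the polynomials $\alpha_j(p,m)$.
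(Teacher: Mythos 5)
Your argument is correct and follows essentially the same route as the paper's proof: both insert the source $e^{c^\top q}$, pass to the Schwinger representation, carry out the Gaussian loop integration with the extra linear term, and convert the resulting (polynomial in $x$) $\times$ (inverse powers of $\Uu$) into Pochhammer index shifts and the uniform dimension shift $d\mapsto d+2|s|$. The only difference is packaging — the paper absorbs the $c$-dependence into a shifted second Symanzik polynomial $\overline{\Ff}(p,m,c)=\Ff(\overline p,\overline m)$ rather than keeping explicit $c^\top R + c^\top P c$ terms in the exponent — and your homogeneity argument for uniformizing the power of $\Uu$ across all chain-rule terms makes explicit a bookkeeping step that the paper's proof leaves implicit.
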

\begin{proof}
	As a first step, we will show that
	\begin{align}
        \overline {\mathcal I}_\Gamma (d,\nu,p,m,c) := \int_{\mathbb R^{d\times L}} \left(\prod_{j=1}^L \frac{\ddif k_j}{\pi^{d/2}} \right) \frac{e^{c^\top\! q}}{D_1^{\nu_1} \cdots D_n^{\nu_n}} = \frac{1}{\Gamma(\nu)} \int_{\mathbb R_+^n} \dif x\, x^{\nu-1} \Uu^{-\dhalf} e^{-\frac{\overline \Ff}{\Uu}} \label{eq:tensorReductionStep1}
	\end{align}
    we can reduce $\overline{\mathcal I}_\Gamma (d,\nu, p, m, c)$ to a scalar Feynman integral with slightly different external momenta and masses where $\overline \Ff(p,m,c) = \Ff(\overline p,\overline m)$ with $\overline p = p + \frac{1}{2} I X^{-1} c$ and $\overline m_i^2 = m_i^2 - \frac{c_i^2}{4x_i^2}$. This is analogue to the deduction of \cref{thm:SchwingerRepresentation}. Applying Schwinger's trick (\cref{lem:SchwingersTrick}) to the left hand side of \cref{eq:tensorReductionStep1} we obtain
	\begin{align}
		\overline {\mathcal I}_\Gamma (d,\nu,p,m,c) = \frac{1}{\Gamma(\nu)} \int_{\mathbb R^n_+} \dif x\,x^{\nu-1} \int_{\mathbb R^{d\times L}} \left(\prod_{j=1}^L \frac{\ddif k_j}{\pi^{d/2}} \right) e^{-\overline\Lambda} \comma
	\end{align}
	where $\overline\Lambda = \Lambda + c^\top\! q = k^\top\! M k + 2 \overline Q^\top\! k + \overline J$ with $\overline Q:= Q + \frac{1}{2} \left(C_{T^*}\right) c$ and $\overline J := J + c^\top\! \hat q$. Further, we used the relation \cref{eq:qjhatqj} and $\Lambda, Q, M$ and $J$ were defined in \cref{thm:SchwingerRepresentation}. Applying \cref{lem:GaussianIntegrals}, we will get
	\begin{align}
		\overline{\mathcal I}_\Gamma (d,\nu, p, m, c) = \frac{1}{\Gamma(\nu)} \int_{\mathbb R^n_+} \dif x\, x^{\nu-1} \Uu^{-\dhalf} e^{-\frac{\overline \Ff}{\Uu}}
	\end{align}
	where $\overline \Ff(p,m,c) := \det (M) \left(-\overline Q^\top M^{-1} \overline Q + \overline J\right)$. Hence, it remains to show $\overline \Ff(p,m,c) = \Ff(\overline p,\overline m)$, which means in particular to show $\overline Q (p) = Q(\overline p)$ and $\overline J(p,m) = J (\overline p,\overline m)$. Due to \cref{lem:IncidenceLoopOrthogonality} it is $I q = I \widehat q = p$. Hence, we will get $\widehat{\overline q } = \widehat q + \frac{1}{2} X^{-1} c$ from $I\left(\widehat{\overline q} - \widehat q\right) = \overline p - p$. Inserting $\widehat{\overline q}$ and $\overline m$ in the definitions of $Q$ and $J$ concludes the statement of \cref{eq:tensorReductionStep1}.
	
	Therefore, the tensorial Feynman integral can be expressed as a derivative with respect to $c$ of $\overline{\mathcal I}_\Gamma (d,\nu, p, m, c)$
	\begin{align}
		\mathcal I_\Gamma(s,d,\nu,p,m) = \frac{1}{\Gamma(\nu)} \int_{\mathbb R_+^n} \dif x\, x^{\nu-1} \Uu^{-\dhalf} \prod_{i=1}^n \left.\left(\pd{}{c_i}\right)^{s_i} e^{-\frac{\overline\Ff}{\Uu}}\right|_{c=0} \point \label{eq:tensorReductionStep2}
	\end{align}	
	Note, that $\overline \Ff(p,m,c)$ is a polynomial in $x$ and $c$, because $\det(M) M^{-1} = \Adj(M), \overline Q$ and $\overline J$ contain only polynomials in $x$ and $c$. Hence, derivatives of $\overline\Ff(p,m,c)$ with respect to $c$ are polynomials in $x$ and $c$ as well. Therefore, we will write
	\begin{align}
		h(x) := \left.\left(\pd{}{c_i}\right)^{s_i} \overline\Ff(p,m,c)\right|_{c=0} = \sum_{j=1}^t \alpha_j(p,m)\ x^{\rho_j} \point \label{eq:tensorReductionStep3}
	\end{align}
	The insertion of \cref{eq:tensorReductionStep3} in \cref{eq:tensorReductionStep2}, together with $\overline\Ff(p,m,c)|_{c=0} = \Ff(p,m)$ will show the assertion. The polynomial $h(x)$ from \cref{eq:tensorReductionStep3} will specify the polynomials $\alpha_1,\ldots,\alpha_t$ and the vectors $\rho_1,\ldots,\rho_t$ in the linear combination \cref{eq:tensorReduction}.
\end{proof}

An advanced version of such a tensor reduction was proposed in \cite{KreimerQuantizationGaugeFields2013, KreimerPropertiesCorollaPolynomial2012}, where a specific differential operator turns scalar Feynman integrals in the corresponding Feynman integral of a gauge theory. This differential operator can be constructed from a graph polynomial, known as the \textit{corolla polynomial}.\bigskip

Therefore, when focusing to scalar Feynman integrals (with general dimension), one can still describe the full range of Feynman integrals. Hence, we will restrict ourselves in the following discussion to the scalar Feynman integrals only.


\section{Dimensional and analytic regularization}
\label{sec:DimAnaReg}
%

In the definition of Feynman integrals in momentum space as well in their reformulation in parametric space we have omitted to discuss the convergence of those integrals. Sufficient criteria of the absolute convergence of the Feynman integral in momentum space \cref{eq:FeynmanMomSp} can be derived by power counting. Hence, the (1PI) Feynman integral in momentum space with Euclidean kinematics \cref{eq:FeynmanMomSp} converges absolutely if the superficial degree of divergence $\omega(\gamma) = \sum_{i=1}^n \Re(\nu_i) - L \Re\!\left(\dhalf\right) >0$ is positive for every subgraph $\gamma\subseteq\Gamma$ \cite{WeinbergHighEnergyBehaviorQuantum1960} and where we have to assume massive edges to exclude IR-divergences. An extension of this result which holds also for non-massive edges was given in \cite{LowensteinPowerCountingTheorem1975}. The convergence region for the Feynman parametric representation \cref{eq:FeynmanParSpFeynman} was worked out in \cite{SpeerUltravioletInfraredSingularity1975}. A short summary of those results can be found in \cite{PanzerFeynmanIntegralsHyperlogarithms2015}. Therefore, we will discuss the convergence of Feynman integrals in the Lee-Pomeransky representation \cref{eq:LeePomeranskyRepresentation}, where we can give a necessary and sufficient condition for absolute convergence. These convergence criteria will also nicely rely on polytopes.\bigskip

The following theorem is mostly a direct implication of the work of \cite{NilssonMellinTransformsMultivariate2010, BerkeschEulerMellinIntegrals2013, SchultkaToricGeometryRegularization2018} and proofs can be found there.
\begin{theorem}[following from {\cite[thm. 2.2]{BerkeschEulerMellinIntegrals2013}}, the second statement was proven in {\cite[thm. 3.1]{SchultkaToricGeometryRegularization2018}}] \label{thm:FIconvergence}
    Consider the Feynman integral in the Lee-Pomeransky representation \cref{eq:LeePomeranskyRepresentation} with the conventions from \cref{eq:Gsupport} and \cref{eq:nuuDefinition} in the Euclidean region $\Re (z_j)>0$ and with positive dimensions $\Re (\nu_0) >0$. Then the Feynman integral converges absolutely, if the real parts of $\nu$ scaled componentwise by the real part of $\nu_0=\frac{d}{2}$ lie inside the relative interior of the Newton polytope of $\Gg$
    \begin{align}
        \Re(\nu)/\Re(\nu_0) \in \relint (\Newt(\Gg)) \point \label{eq:ConvergenceCriteriaFeynmanNewton}
    \end{align}
    Furthermore, if the Newton polytope $\Newt(\Gg)$ is not full dimensional, the Feynman integral does not converge absolutely for any choice of $\nu_0\in\mathbb C$ and $\nu\in\mathbb C^n$.
\end{theorem}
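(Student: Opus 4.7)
The plan is to convert the Lee--Pomeransky integral, via the logarithmic substitution $x_i = e^{t_i}$, into an integral over $\mathbb R^n$ and then analyse its decay by decomposing $\mathbb R^n$ along the outer normal fan of $\Newt(\Gg)$. First I would reduce to the case where the coefficients $z_a$ are positive reals: since $\Re(z_a) > 0$, we have $|\Gg(x)| \asymp \Gg_{\Re}(x) := \sum_a \Re(z_a) x^a$ for $x \in \mathbb R^n_+$, so absolute convergence of the integral is equivalent to finiteness of
\begin{equation*}
\int_{\mathbb R^n_+} x^{\Re(\nu)-1} \Gg_{\Re}(x)^{-\Re(\nu_0)} \dif x = \int_{\mathbb R^n} e^{\Re(\nu)\cdot t}\, \Gg_{\Re}(e^t)^{-\Re(\nu_0)} \dif t.
\end{equation*}

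For the sufficient direction, I would fix a regular triangulation $\Tt$ of $A$ and use the associated outer normal fan to partition $\mathbb R^n$ into the normal cones $N_{\Newt(\Gg)}(a^{(j)})$ at the vertices $a^{(j)} \in \Vertices(\Newt(\Gg))$. On each such cone $u \in N(a^{(j)})$, the support function satisfies $\max_{a \in A} a\cdot u = a^{(j)} \cdot u$, and an elementary estimate gives
\begin{equation*}
\Re(z_{a^{(j)}}) e^{a^{(j)} \cdot t} \leq \Gg_{\Re}(e^t) \leq |A| \cdot \max_{a \in A} \Re(z_a) \cdot e^{a^{(j)}\cdot t}
\end{equation*}
whenever $t$ lies in a sufficiently far translate of $N(a^{(j)})$. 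Therefore the integral over each cone behaves like $\int_{N(a^{(j)})} e^{(\Re(\nu) - \Re(\nu_0) a^{(j)}) \cdot u} \dif u$, which converges precisely when $\Re(\nu) - \Re(\nu_0) a^{(j)}$ lies in the dual of $-N(a^{(j)})$ --- that is, in the (open) tangent cone of $\Newt(\Gg)$ at $a^{(j)}$. Intersecting these conditions over all vertices yields exactly $\Re(\nu)/\Re(\nu_0) \in \relint(\Newt(\Gg))$. A bounded neighborhood of the origin in $t$-space contributes a finite piece and may be absorbed.

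For the second statement, suppose $\Newt(\Gg)$ is not full dimensional. Then there exists a nonzero $u \in \mathbb R^n$ and a constant $c\in\mathbb R$ with $a \cdot u = c$ for all $a \in A$, so $\Gg(e^{t+ru}) = e^{rc}\,\Gg(e^t)$ for all $r \in \mathbb R$. Writing $\mathbb R^n = \mathbb R u \oplus u^\perp$ and splitting $t = s + r u$, the integrand factors as
\begin{equation*}
e^{\Re(\nu)\cdot t}\,\Gg_{\Re}(e^t)^{-\Re(\nu_0)} = e^{r(\Re(\nu)\cdot u - c\,\Re(\nu_0))} \cdot e^{\Re(\nu)\cdot s}\, \Gg_{\Re}(e^s)^{-\Re(\nu_0)},
\end{equation*}
which is positive, and the $r$-integral $\int_{-\infty}^{\infty} e^{r\alpha}\, \dif r$ diverges for every $\alpha \in \mathbb R$. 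Since the $s$-integrand is strictly positive on a set of positive measure in $u^\perp$, the full integral diverges for every choice of $\nuu$.

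The main obstacle is making the dominance estimate on each normal cone fully rigorous: away from the strict interior of $N(a^{(j)})$ several monomials of $\Gg$ are comparable, and one must bound the boundary strips between adjacent cones (where the dominant vertex changes). This is handled cleanly by passing to a smooth toric compactification associated with a regular triangulation of $\Newt(\Gg)$, as in \cite{BerkeschEulerMellinIntegrals2013, SchultkaToricGeometryRegularization2018}, or more elementarily by writing $\Gg_{\Re}(e^t) = e^{\max_a a\cdot t}\bigl(\sum_a \Re(z_a) e^{(a - a_{\max}(t))\cdot t}\bigr)$ and noting that the bracketed factor is bounded above and below by positive constants uniformly in $t$.
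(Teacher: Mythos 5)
Your proof is correct, and it is essentially the standard argument from the references the paper itself defers to (the paper gives no proof of this theorem, only the pointer to \cite{BerkeschEulerMellinIntegrals2013, SchultkaToricGeometryRegularization2018}): reduce to the positive polynomial $\Gg_{\Re}$, substitute $x=e^t$, compare the integrand to $e^{\Re(\nu)\cdot t-\Re(\nu_0)\max_{a}a\cdot t}$, and decompose $\mathbb R^n$ into the vertex normal cones, where $\int_{N(a^{(j)})}e^{w\cdot t}\dif t<\infty$ iff $w$ lies in the interior of the polar cone, which intersected over all vertices gives $\relint(\Newt(\Gg))$. You correctly identified the only delicate point, and your closing observation that $\Gg_{\Re}(e^t)\,e^{-\max_a a\cdot t}$ is bounded above and below by positive constants uniformly in $t$ makes the cone-by-cone estimate valid on the closed cones, so no separate treatment of boundary strips is needed; the Tonelli argument for degenerate $\Newt(\Gg)$ is likewise correct.
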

    
Hence, parametric Feynman integrals are well-defined, with the exception of the case where the Newton polytope $\Newt(\Gg)$ is not full dimensional. A degenerated Newton polytope $\Newt(\Gg)$ means, that $\Gg$ is quasi-homogeneous or equivalently that the homogenized point configuration $\Aa$ has not the full rank. For Feynman integrals this case happens if $\Gamma$ is a scaleless graph or contains a scaleless subgraph, i.e.\ a massless subgraph without legs which is connected to the remaining part of $\Gamma$ only by a cut vertex. As aforementioned for those Feynman graphs also the momentum representation diverges for any choice of $d$ and $\nu$. As those graphs will be excluded by the renormalization procedure \cite{StermanIntroductionQuantumField1993}, they are not of relevance in a physical treatment. Therefore, the parametric space representation \cref{eq:LeePomeranskyRepresentation} and the momentum space representation \cref{eq:FeynmanMomSp} are regularizable by $d$ or $\nu$ for the same graphs, albeit the convergence regions may differ.\bigskip

Note, that we can also reformulate \cref{eq:ConvergenceCriteriaFeynmanNewton} by the polytope representation \cref{eq:HPolytope} as  $b_j \Re (\nu_0) - m_j^\top \cdot \Re (\nu) > 0$ for all $j=1,\ldots,k$ where $m_j^\top$ denotes the $j$-th row of $M$. However, these regions of convergence will only cover a small part of the domain where we can analytically continue the Feynman integral representations, even though the Feynman integral is not an entire function. We can distinguish two different kinds of singularities which will appear in the analytic continuation: singularities in the parameters $\nuu$ and singularities in the variables $z$. For the parameters $\nuu$ these singularities are known as UV- and IR-divergences and the Feynman integral has only poles in these parameters $\nuu$. The singularities with respect to the variables $z$ are known as Landau singularities and will only appear if we leave the Euclidean region $\Re(z_j)>0$. Their nature is much more intricate than that of the parametric singularities, and we will discuss them extensively in \cref{ch:singularities}. For now, we will avoid them by restricting ourselves to the Euclidean region.

The possible poles in the parameters $\nuu$ can simply be described with the facets of the Newton polytope $\Newt(\Uu+\Ff)$ by a clever use of integration by parts based on the representation \cref{eq:LeePomeranskyRepresentation}. The following theorem is an application of theorem \cite[thm. 2.4, rem. 2.6]{BerkeschEulerMellinIntegrals2013}.

\begin{theorem}[Meromorphic continuation in parameters $\nuu$ \cite{KlausenHypergeometricSeriesRepresentations2019, BerkeschEulerMellinIntegrals2013}] \label{thm:meromorphicContinuation}
    Describe the non-degenerated Newton polytope $\Newt(\Gg)$ as a minimal number of intersections of half-spaces according to equation \cref{eq:HPolytope} and assume that the components of $m_j^\top\in\mathbb Z^n$ and $b_j\in\mathbb Z$ are relatively prime. Then any Feynman integral $\mathcal I_\Gamma(\nuu,z)$ in the Euclidean region $\Re(z_j)>0$ can be written as
    \begin{align}
        \mathcal I_\Aa(\nuu,z) = \gls{PhiMero} \frac{\prod_{j=1}^k \Gamma( b_j \Re \nu_0 - m_j^\top \cdot \Re \nu)}{\Gamma(\nu_0-\omega)\Gamma(\nu)} \label{eq:meromorphicContinuation}
    \end{align}
    where $\Phi_\Aa(\nuu,z)$ is an entire function with respect to $\nuu\in\mathbb C^{n+1}$. As before we use $\nu_0 = \frac{d}{2}$ and $\nuu = (\nu_0,\nu)$.
\end{theorem}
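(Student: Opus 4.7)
The plan is to follow the integration-by-parts strategy of Berkesch--Matusevich--Walther applied to the Lee-Pomeransky form of the integral: by \cref{lem:LeePomeranskyRepresentation}, up to the explicit Gamma prefactors already visible in \cref{eq:LeePomeranskyRepresentation}, $\mathcal I_\Aa(\nuu,z)$ is the Mellin transform
\begin{equation*}
\mathcal J_\Aa(\nuu,z) := \int_{\mathbb R^n_+} x^{\nu-1} \Gg(x,z)^{-\nu_0} \dif x,
\end{equation*}
so the task reduces to showing that $\mathcal J_\Aa$ admits a meromorphic continuation to all $\nuu\in\mathbb C^{n+1}$ whose polar locus is exactly the union of hyperplanes $b_j\nu_0 - m_j^\top\nu \in -\mathbb Z_{\geq 0}$, with simple poles there. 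Once this is established, division by $\prod_j \Gamma(b_j\nu_0 - m_j^\top\nu)$ removes every pole and yields the entire function $\Phi_\Aa$, after which multiplying back in the prefactor from \cref{eq:LeePomeranskyRepresentation} produces precisely \cref{eq:meromorphicContinuation}.

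By \cref{thm:FIconvergence}, $\mathcal J_\Aa$ is holomorphic on the open cone where $\Re(\nu)/\Re(\nu_0)\in\relint(\Newt(\Gg))$. To cross the facet hyperplane $H_j = \{b_j\nu_0 = m_j^\top\nu\}$ I would introduce, for each facet of $\Newt(\Gg)$ with inward normal $(b_j,-m_j)$, the Euler-type operator $\theta_{m_j} = \sum_i (m_j)_i x_i \partial_{x_i}$. Its action on $\Gg^{-\nu_0}$ gives $-\nu_0 \Gg^{-\nu_0-1}\theta_{m_j}\Gg$, while on the facet-truncation $\Gg_{F_j}$ it acts as multiplication by the constant $b_j$. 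Rewriting $\theta_{m_j}\Gg = b_j\Gg + (\theta_{m_j}\Gg - b_j\Gg)$, splitting, and integrating by parts with respect to $x^{\nu-1}$ produces the identity
\begin{equation*}
(b_j\nu_0 - m_j^\top\nu)\,\mathcal J_\Aa(\nuu,z) = \int_{\mathbb R^n_+} x^{\nu-1} R_j(x,z)\,\Gg(x,z)^{-\nu_0-1}\dif x
\end{equation*}
for an explicit polynomial $R_j$ supported strictly in the interior of $\Newt(\Gg)$ relative to the facet $F_j$. The right-hand side is absolutely convergent on a cone that reaches strictly across $H_j$, providing the meromorphic extension with at worst a simple pole along $H_j$. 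Iterating the procedure (shifting $\nu_0$ to $\nu_0+N$) generates the full tower of poles on $b_j\nu_0 - m_j^\top\nu \in -\mathbb N_0$, exactly matching those of $\Gamma(b_j\nu_0-m_j^\top\nu)$; the coprimality assumption on $(b_j,m_j^\top)$ ensures each such hyperplane is cut out by a single linear form so the pole--zero cancellation with the $\Gamma$-prefactor is one-to-one.

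The main obstacle will be twofold. First, the boundary terms in each integration by parts must be shown to vanish: this uses strict positivity of $\Re\Gg(x,z)$ in the Euclidean region $\Re(z_j)>0$ together with a uniform lower bound of the form $|\Gg(x,z)| \geq c \sum_{a\in\Vertices(\Newt(\Gg))} |x^a|$, which guarantees polynomial decay of the integrand on $\partial\mathbb R^n_+$ and at infinity (including on the toric faces at $x_i=0$ and $x_i=\infty$). Second, and more delicate, one must verify that the recursive procedure over all $k$ facets does not introduce spurious singularities coming from higher-codimension faces; this amounts to observing that each facet-truncation $\Gg_{F_j}$ itself has positive real part in the Euclidean region (since positivity is preserved under monomial restriction of the coefficients $z_a$), so the argument can be applied face-by-face without interference, and that the combinatorial data of the iterated shifts is compatible with the facet structure of $\Newt(\Gg)$.
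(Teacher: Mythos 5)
Your proposal is correct and follows essentially the same route the paper takes: the theorem is quoted as an application of Berkesch--Forsg\aa rd--Passare, whose proof is exactly the facet-by-facet integration-by-parts continuation of the Lee--Pomeransky/Euler--Mellin integral \cref{eq:LeePomeranskyRepresentation} that you reconstruct, with the Euler operator $\theta_{m_j}$ producing the factor $(b_j\nu_0-m_j^\top\nu)$ and a remainder integral convergent one unit past the facet hyperplane. The only (harmless) overstatement is that the polar locus is \emph{exactly} the union of those hyperplanes with poles matching the $\Gamma$-factors; the theorem only requires containment, which is what the iteration actually delivers.
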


Hence, we can continue the Feynman integral meromorphically with respect to its parameters $d,\nu$, and we can easily give a necessary condition for its poles. Similar results were also found in \cite{SpeerUltravioletInfraredSingularity1975} based on the Feynman representation \cref{eq:FeynmanParSpFeynman}. Note, that we can apply the results of \cref{thm:FIconvergence,thm:meromorphicContinuation} also to the Feynman representation \cref{eq:FeynmanParSpFeynman} e.g.\ by choosing the hyperplane $H(x) = x_n$.

By means of \cref{thm:meromorphicContinuation} we see that we can avoid the poles of the Feynman integral by considering non-rational values of $\nuu$. More specific it is sufficient to consider either $d$ to be non-rational or $\nu$ to be non-rational. These two options are also known as \textit{dimensional regularization} \cite{EtingofNoteDimensionalRegularization1999, BolliniDimensionalRenorinalizationNumber1972, THooftRegularizationRenormalizationGauge1972} and \textit{analytical regularization} \cite{SpeerAnalyticRenormalization1968, SpeerGeneralizedFeynmanAmplitudes1969}.

\begin{example}
    Consider the \cref{ex:1loopbubbleA} from above which corresponds to \cref{fig:bubble1}. For the relative interior of the Newton polytope one obtains from the facet representation, the region of convergence (with $\Re(\nu_0)>0$)
    \begin{align*}
        & \qquad \Re(\nu_0-\nu_2) > 0 & \Re(-\nu_0+\nu_1+\nu_2)>0 \\
        & \qquad \Re(\nu_2) > 0 & \Re(2\nu_0-\nu_1-\nu_2) > 0
    \end{align*}
    which enables us to separate the poles of the Feynman integral in the $\Gamma$ functions
    \begin{align*}
        \mathcal I_\Aa (\nuu,z) = \Phi_\Aa (\nuu,z) \frac{\Gamma(-\nu_0+\nu_1+\nu_2) \Gamma(\nu_0-\nu_2)}{\Gamma(\nu_1)}
    \end{align*}
    with an entire function $\Phi_\Aa(\nuu,z)$. \Cref{fig:exampleMeromorphicContinuation} shows the original convergence region, as well as the meromorphic continuation. In the case $\nu_1=\nu_2=1$ and $\nu_0=\dhalf=2-\epsilon$ we will obtain
    \begin{align}
    	\mathcal I_\Aa (\nuu,z) = \frac{\Phi^{(0)}_\Aa(\nuu,z)}{\epsilon} + \Phi^{(1)}_\Aa(\nuu,z) + \left( \Phi^{(2)}_\Aa(\nuu,z) + \zeta(2) \Phi^{(0)}_\Aa(\nuu,z) \right) \epsilon + \mathcal O(\epsilon^2)
    \end{align}
    with $\Phi^{(i)}_\Aa(\nuu,z) = \left.\pd{\Phi_\Aa(\nuu,z)}{\epsilon}\right|_{\epsilon=0}$.
    
    \begin{figure}
    	\centering
        \begin{tikzpicture}
            \draw[thick,->] (0,0) -- (4.8,0) node[anchor=north west] {$\frac{\Re(\nu_1)}{\Re(\nu_0)}$};
            \draw[thick,->] (0,0) -- (0,3.8) node[anchor=south east] {$\frac{\Re(\nu_2)}{\Re(\nu_0)}$};
            \coordinate[circle,inner sep=1pt,fill] (A) at (1,0);
            \coordinate[circle,inner sep=1pt,fill] (B) at (2,0);
            \coordinate[circle,inner sep=1pt,fill] (C) at (1,1);
            \coordinate[circle,inner sep=1pt,fill] (D) at (0,1);
            \filldraw [thick, fill=gray, fill opacity = 0.5] (A)  coordinate (GeneralStart) -- ++(1,0) -- ++(-1,1) -- ++(-1,0) -- ++(1,-1) -- cycle;
            \draw (0.1,-2.1) -- (-2.1,0.1);
            \draw (-0.9,-2.1) -- (-2.1,-0.9);
            \draw ($(A) + (-2,0)$) -- ++(2.1,-2.1) -- ($(D) + (-2,0)$) -- ++(-0.1,0.1);
            \draw ($(A) + (-1,0)$) -- ++(2.1,-2.1) -- ($(D) + (-1,0)$) -- ++(-1.1,1.1);
            \draw ($(A) + (0,0)$) -- ++(2.1,-2.1) -- ($(D) + (0,0)$) -- ++(-2.1,2.1);
            \draw ($(A) + (1,0)$) -- ++(2.1,-2.1) -- ($(D) + (1,0)$) -- ++(-2.1,2.1);
            \draw ($(A) + (2,0)$) -- ++(1.1,-1.1) -- ($(D) + (2,0)$) -- ++(-2.1,2.1);
            \draw ($(A) + (3,0)$) -- ++(0.1,-0.1) -- ($(D) + (3,0)$) -- ++(-2.1,2.1);
            \draw (1.9,3.1) -- (4.1,0.9); 
            \draw (2.9,3.1) -- (4.1,1.9);
            
            \draw ($(A) + (0,-2)$) -- ++(-3.1,0) -- ($(B) + (0,-2)$) -- ++(2.1,0);
            \draw ($(A) + (0,-1)$) -- ++(-3.1,0) -- ($(B) + (0,-1)$) -- ++(2.1,0);
            \draw ($(A) + (0,0)$) -- ++(-3.1,0) -- ($(B) + (0,0)$) -- ++(2.1,0);
            \draw ($(A) + (0,1)$) -- ++(-3.1,0) -- ($(B) + (0,1)$) -- ++(2.1,0);
            \draw ($(A) + (0,2)$) -- ++(-3.1,0) -- ($(B) + (0,2)$) -- ++(2.1,0);
            \draw ($(A) + (0,3)$) -- ++(-3.1,0) -- ($(B) + (0,3)$) -- ++(2.1,0);
        \end{tikzpicture}
        \caption[Meromorphic continuation of Feynman integrals w.r.t. parameters $\nuu$]{The original convergence region of the Feynman integral is the gray shaded tetragon. By meromorphic continuation one can extend the Feynman integral to the whole plane. The lines characterize the poles in the parameters. We omitted the cancellations of $\Gamma$-functions with the denominator $\Gamma(\nu_0-\omega)\Gamma(\nu)$ in \cref{eq:meromorphicContinuation}. Thus, this figure shows rather the poles of $\mathcal J_\Aa(\nuu,z)$.} \label{fig:exampleMeromorphicContinuation}
    \end{figure}
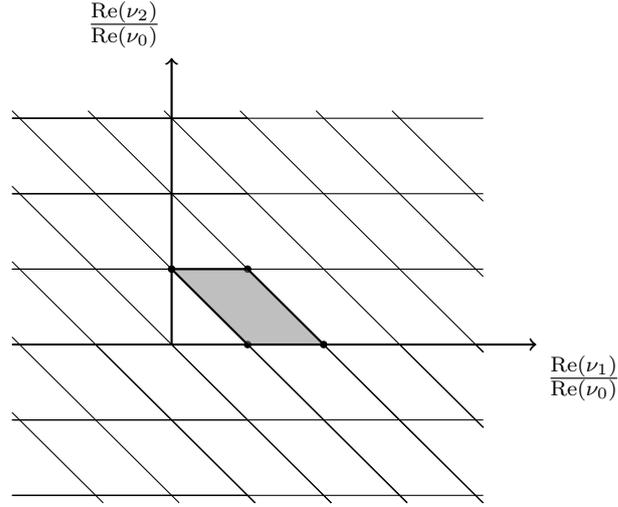
\end{example}

Hence, \cref{thm:meromorphicContinuation} not only vindicates dimensional and analytical regularization, it will also allow us in the $\epsilon$-expansion around integer values of $\nu_0=\dhalf$ to focus on the Taylor expansion of $\Phi_\Aa$ instead of a Laurent expansion of $\mathcal I_\Aa$. Thus, one can determine the coefficients in the $\epsilon$-expansion by differentiating, which makes the procedure much easier.\bigskip

The regularization of Feynman integrals is an intermediate step in the renormalization procedure, which makes the divergences visible. However, we would like to remark that there are also renormalization procedures in which this intermediate step does not have to be carried out explicitly. Renormalization is essential for perturbative QFTs to be formulated in a meaningful way. This is because the original Lagrangian density $\mathcal L$ contains certain ambiguities that are resolved with renormalization. As noted in \cite{DelamotteHintRenormalization2004}, these ambiguities, rather than the divergences of Feynman integrals, are the main reason that renormalization is required in pQFTs. Nevertheless, we do not intend to give an introduction to the very extensive field of renormalization here. Instead, we will refer exemplarily to \cite{DelamotteHintRenormalization2004} for an elementary but illustrative introduction, to \cite{CollinsRenormalizationIntroductionRenormalization1984} for a classical overview and to \cite{ConnesRenormalizationQuantumField1999, ConnesRenormalizationQuantumField2001} for a mathematically rigorous treatment.


\section{Feynman integrals as \texorpdfstring{$\Aa$}{A}-hypergeometric functions} \label{sec:FeynmanIntegralsAsAHyp}
%

It is one of the first observations in the calculation of simple Feynman amplitudes, that Feynman integrals evaluate to classical hypergeometric functions in many cases. This observation was leading Tullio Regge to the conjecture that Feynman integrals are always hypergeometric functions in a general sense \cite{ReggeAlgebraicTopologyMethods1968}. For such a generalization of hypergeometric functions, he suggested to take the analytic behaviour of Feynman integrals as a starting point. In the language of $D$-modules, he proposed that Feynman integrals are holonomic functions whose singular locus is given by the Landau variety. As Regge also noted, this criterion can be transferred to certain partial differential equations, which can be understood as generalized Picard-Fuchs equations.

This idea was later refined by Kashiwara and Kawai \cite{KashiwaraHolonomicSystemsLinear1976}, who showed that Feynman integrals are indeed holonomic functions, i.e.\ they always satisfy holonomic differential equations (see \cref{sec:holonomicDmodules} for the basic notions of $D$-modules). For the $1$-loop case Regge's idea was worked out partly by Kershaw \cite{KershawFeynmanAmplitudesPower1973} and Wu \cite{WuGeneralizedEulerPochhammerIntegral1974}. However, this development happened at a time when there was no consistent theory of general hypergeometric functions. Such a theory was only started in the late '80s by Gelfand, Kapranov and Zelevinsky (GKZ) and their collaborators (we summarized their approach in \cref{sec:AHypSystems}). As already remarked in \cite{GolubevaReggeGelfandProblem2014}, the GKZ theory is the consequential extension of Regge's ideas. Thus, we can give a revisited view on the idea of Regge by means of GKZ theory and within this framework, Regge's conjecture can also be confirmed. \bigskip

Beside the general question of a sufficient functional class of Feynman integrals, specific hypergeometric functions play also an important role in many approaches of the calculation of Feynman integrals. Typically, those hypergeometric functions appear in the often used Mellin-Barnes approach \cite{UsyukinaRepresentationThreepointFunction1975, BergereAsymptoticExpansionFeynman1974, SmirnovFeynmanIntegralCalculus2006}. This appearance is a consequence of Mellin-Barnes representations with integrands consisting in a product of $\Gamma$-functions, which can be identified by the hypergeometric Fox's $H$-functions \cite{Inayat-HussainNewPropertiesHypergeometric1987, Inayat-HussainNewPropertiesHypergeometric1987a, BuschmanFunctionAssociatedCertain1990}. Further, there are different techniques for the representation of $1$-loop integrals in terms of simple hypergeometric functions \cite{BoosMethodCalculatingMassive1991, FleischerNewHypergeometricRepresentation2003, PhanScalar1loopFeynman2019}, which rely on those Mellin-Barnes representations by means of residue theorem. Beside the $1$-loop case, there are not many results known, which is also due to the fact that multivariate Mellin-Barnes integrals can be highly non-trivial \cite{ParisAsymptoticsMellinBarnesIntegrals2001}. We also refer to \cite{KalmykovHypergeometricFunctionsTheir2008} for a review of hypergeometric functions appearing in Feynman integral calculus. \bigskip

As aforementioned the Gelfand-Kapranov-Zelevinsky approach is a convenient opportunity to examine the correspondence between hypergeometric functions and Feynman integrals. It was already stated by Gelfand and his collaborators themselves, that ``practically all integrals which arise in quantum field theory'' \cite{GelfandHypergeometricFunctionsToral1989} can be treated with this approach. However, this insight seems to have been forgotten for a long time and was not pursued further. In 2014, the connection between Feynman integrals and GKZ-hypergeometric theory was mentioned and discussed again by Golubeva \cite{GolubevaReggeGelfandProblem2014}. In \cite{NasrollahpoursamamiPeriodsFeynmanDiagrams2016} it was proven, that the Feynman integrals satisfy a system of differential equations which is isomorphic to an $\Aa$-hypergeometric system. Recently, the fact that scalar Feynman integrals are $\Aa$-hypergeometric functions was independently shown in 2019 by \cite{DeLaCruzFeynmanIntegralsAhypergeometric2019} and \cite{KlausenHypergeometricSeriesRepresentations2019} based on the Lee-Pomeransky representation \cref{eq:LeePomeranskyRepresentation}.

\begin{theorem}[Feynman integrals as $\Aa$-hypergeometric functions] \label{thm:FeynmanAHypergeometric}
    Consider a generalized scalar Feynman integral $\mathcal I_\Aa(\nuu,z)$ in the representation \cref{eq:LeePomeranskyRepresentation}. $\nuu\in\mathbb C^{n+1}$ was defined in \cref{eq:nuuDefinition} and $\Aa\in\mathbb Z^{(n+1)\times N}$ is the homogenization of $A$, which was defined by \cref{eq:Gsupport}, i.e.\ we interpret $A$ as a set of column vectors building an $n\times N$ integer matrix and adding the row $(1,\ldots,1)$. Then $\mathcal I_\Aa(\nuu,z)$ is $\Aa$-hypergeometric, i.e.\ it satisfies the $\Aa$-hypergeometric system \cref{eq:AhypIdeal}
    \begin{align}
        H_\Aa (\nuu) \bullet \mathcal I_\Aa(\nuu,z)=0 \point      
    \end{align}
    Thus, the generalized Feynman integral is an $\Aa$-hypergeometric function.
\end{theorem}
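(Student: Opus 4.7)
The plan is a direct verification that the Lee-Pomeransky integrand $\Gg^{-\nu_0}$, viewed as a function of $z\in\mathbb C^N$, is already annihilated by the toric operators and satisfies the homogeneous operators up to an integration-by-parts correction that produces exactly the $\nu_i$ shift. I would first fix $\nuu$ inside the convergence region guaranteed by \cref{thm:FIconvergence}, so that every manipulation under the integral sign is justified, and then invoke meromorphic continuation (\cref{thm:meromorphicContinuation}) to extend the identity $H_\Aa(\nuu)\bullet\mathcal I_\Aa(\nuu,z)=0$ to all $\nuu\in\mathbb C^{n+1}$ where the Feynman integral is defined as a meromorphic function, since both sides are meromorphic in $\nuu$ and agree on an open set.

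For the toric operators $\square_l=\prod_{l_j>0}\partial_j^{l_j}-\prod_{l_j<0}\partial_j^{-l_j}$ with $l\in\mathbb L=\ker_{\mathbb Z}(\Aa)$, I would compute by induction the elementary identity
\begin{align}
\partial^u\Gg^{-\nu_0}=(-1)^{|u|}(\nu_0)_{|u|}\,x^{Au}\,\Gg^{-\nu_0-|u|},\qquad u\in\mathbb N_0^N,
\end{align}
where $A$ is the dehomogenization of $\Aa$ and $|u|=\sum_j u_j$. Writing any $l\in\mathbb L$ as $l=u-v$ with $u,v\in\mathbb N_0^N$, the relation $\Aa u=\Aa v$ forces both $Au=Av$ and $|u|=|v|$, so that $\partial^u\Gg^{-\nu_0}=\partial^v\Gg^{-\nu_0}$ pointwise. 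Consequently $\square_l\Gg^{-\nu_0}=0$, and differentiation under the integral sign (which is legal on the convergence domain because $\Gg^{-\nu_0-|u|}x^{Au}$ inherits the same Newton polytope hypotheses) yields $\square_l\bullet\mathcal I_\Aa(\nuu,z)=0$.

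For the homogeneous operators $E_i(\nuu)=\sum_{j=1}^N a_i^{(j)}z_j\partial_j+\nu_i$, the key computation is the algebraic identity
\begin{align}
\sum_{j=1}^N a_i^{(j)}z_j\,\partial_j\Gg^{-\nu_0}=-\nu_0\,\Gg^{-\nu_0-1}\sum_{j=1}^N a_i^{(j)}z_j\,x^{a^{(j)}}=x_i\,\frac{\partial\Gg^{-\nu_0}}{\partial x_i}\qquad(i=1,\ldots,n),
\end{align}
together with the homogeneity identity $\sum_{j=1}^N z_j\partial_j\Gg^{-\nu_0}=-\nu_0\Gg^{-\nu_0}$ for $i=0$ (using $a_0^{(j)}=1$). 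The $i=0$ case then gives $E_0(\nuu)\Gg^{-\nu_0}=0$ pointwise, hence $E_0(\nuu)\bullet\mathcal I_\Aa=0$. For $i\geq 1$, I would integrate by parts in $x_i$, producing
\begin{align}
\int_0^\infty\dif x_i\,x_i^{\nu_i}\,\frac{\partial\Gg^{-\nu_0}}{\partial x_i}=\Bigl[x_i^{\nu_i}\Gg^{-\nu_0}\Bigr]_0^\infty-\nu_i\int_0^\infty\dif x_i\,x_i^{\nu_i-1}\Gg^{-\nu_0}.
\end{align}
Both boundary terms vanish inside the convergence region: at $x_i=0$ because $\Re(\nu_i)>0$, and at $x_i=\infty$ because $\Gg$ contains at least one monomial whose exponent in $x_i$ is positive (this is where full-dimensionality of $\Newt(\Gg)$, equivalently the non-degeneracy assumption on $\Aa$, is used). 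Combining with the identity above, the remaining integrals assemble to $E_i(\nuu)\bullet\mathcal I_\Aa(\nuu,z)=0$.

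The hard part, or rather the point requiring most care, is the boundary control in the integration by parts: one must ensure that $\Gg^{-\nu_0}$ decays fast enough as any $x_i\to\infty$ and is integrably bounded as $x_i\to 0$, and this is exactly the content of the full-dimensionality of $\Newt(\Gg)$ combined with \cref{thm:FIconvergence}. Once established for $\nuu$ in the open convergence region, the identity propagates to all meromorphic $\nuu$ by \cref{thm:meromorphicContinuation}, completing the proof that $\mathcal I_\Aa(\nuu,z)\in\Sol(H_\Aa(\nuu))$.
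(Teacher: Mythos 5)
Your argument is correct. The toric half is essentially the paper's own computation: both rest on the identity $\partial^u\Gg^{-\nu_0}=(-1)^{|u|}(\nu_0)_{|u|}\,x^{Au}\,\Gg^{-\nu_0-|u|}$ together with the observation that $\Aa u=\Aa v$ forces both $Au=Av$ and $|u|=|v|$ because of the homogenizing row of ones. For the homogeneous operators you take a genuinely different route: you identify $\sum_j a_i^{(j)}z_j\partial_{z_j}\Gg^{-\nu_0}=x_i\,\partial_{x_i}\Gg^{-\nu_0}$ and integrate by parts in $x_i$, whereas the paper rescales $z_j\mapsto s^{a_b^{(j)}}z_j$, absorbs the rescaling by the change of variables $x_b\mapsto x_b/s$, obtains the exact functional equation $\mathcal J_\Aa\bigl(\nuu,s^{a_b^{(1)}}z_1,\ldots,s^{a_b^{(N)}}z_N\bigr)=s^{-\nu_b}\mathcal J_\Aa(\nuu,z)$, and differentiates at $s=1$. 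The two are infinitesimally the same statement, but the paper's version buys freedom from boundary terms: the substitution is an exact change of variables on the whole absolute-convergence domain, so no decay estimates are needed. Your integration-by-parts version does require the boundary analysis you flag, and there your justification at $x_i=0$ (``because $\Re(\nu_i)>0$'') is slightly too quick in the generalized setting: if every monomial of $\Gg$ carries a positive power of $x_i$, then $\Gg^{-\nu_0}$ blows up as $x_i\to0$, and the correct condition is the facet inequality $\Re(\nu_i)>m_i\,\Re(\nu_0)$ with $m_i=\min_{a\in A}a_i$, which is exactly part of the convergence criterion of \cref{thm:FIconvergence}; for (1PI) Feynman graphs one has $m_i=0$ and your statement is fine. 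The final extension to all $\nuu$ by meromorphic continuation is a welcome addition that the paper leaves implicit.
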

\begin{proof}
    Instead of $\mathcal I_\Aa(\nuu,z)$ we will consider $\mathcal J_\Aa(\nuu,z) := \frac{1}{\Gamma(\nu_0-\omega)\Gamma(\nu)} \mathcal I_\Aa(\nuu,z)$ to avoid unnecessary prefactors. Firstly, we want to show that Feynman integrals satisfy the toric operators $\square_l$ for all $l\in\mathbb L$ or equivalently $ \left\{\partial^u - \partial^v \,\rvert\, \Aa u = \Aa v,\: u,v\in\mathbb N^N\right\}$. Derivatives of the Feynman integral with respect to $z$ result in
    \begin{align}
        \partial^u \bullet \Gamma(\nu_0) \int_{\mathbb R^{n}_+} \dif x \, x^{\nu-1}\Gg^{-\nu_0} = \Gamma\!\left(\nu_0-|u|\right) \int_{\mathbb{R}^n_+} \dif x \, x^{\nu-1} x^{\Aa u} \Gg^{-\nu_0- |u|}
    \end{align}
    where $|u|:= \sum_i u_i$. Since $\Aa$ contains the row $(1,1,\ldots,1)$ it follows immediately that $|u|=|v|$. Therefore, one obtains the same equation for $v$.
    
    Secondly, we want to show that Feynman integrals satisfy the homogeneous operators $E_i(\nuu)$ for $i=0,\ldots,n$. For this purpose note that $\mathcal J_\Aa\!\left(\nuu,s^{a_{b}^{(1)}} \! z_1,\ldots,s^{a_{b}^{(N)}} \! z_N\right) = \Gamma(\nu_0) \int_{\mathbb R^n_+} \dif x\, x^{\nu-1} \Gg (x_1,\ldots,s x_b,\ldots,x_n)^{-\nu_0}$. After a substitution $x_b \mapsto \frac{1}{s} x_b$ for $s>0$ it is
    \begin{align}
        \mathcal J_\Aa\!\left(\nuu,s^{a_{b}^{(1)}} \! z_1,\ldots,s^{a_{b}^{(N)}} \! z_N\right) = s^{-\nu_b} \mathcal J_\Aa (\nuu,z) \point \label{eq:FeynAhypProof}
    \end{align}
    A derivative of \cref{eq:FeynAhypProof} with respect to $s$ concludes the proof for $s=1$.
\end{proof}

Hence, as conjectured by Regge and suggested already by Gelfand and his collaborators every scalar, generalized Feynman integral in Euclidean kinematics satisfies the $\Aa$-hypergeometric system and can be treated within the framework of GKZ. 

This fact is not quite surprising as the parametric representations \cref{eq:FeynmanParSpFeynman} and \cref{eq:LeePomeranskyRepresentation} belong both to the class of Euler-Mellin integrals \cite{BerkeschEulerMellinIntegrals2013}, which are defined as Mellin transforms of a product of polynomials up to certain powers. As every Euler-Mellin integral, also the Feynman integral is an $\Aa$-hypergeometric function. 

Therefore, we can also write an $\Aa$-hypergeometric systems for the Feynman representation \cref{eq:FeynmanParSpFeynman} in the following way. Without loss of generality we will set $x_n=1$ by evaluating the $\delta$-distribution in \cref{eq:FeynmanParSpFeynman}. Denote by $A_\Uu$ and $A_\Ff$ the support of the first and the second Symanzik polynomial after setting $x_n=1$. In doing so, we can construct the following matrix      
\begin{align}
    \Aa^\prime = \begin{pmatrix}
        1 & \cdots & 1 & 0 & \cdots & 0\\
        0 & \cdots & 0 & 1 & \cdots & 1\\
        \\
        & A_\Uu      &   &   &  A_\Ff &\\
        \\            
    \end{pmatrix} \label{eq:AprimeUF}
\end{align}
which defines together with $\beta = \left(\frac{d}{2}-\omega,\omega,\nu_1,\ldots,\nu_{n-1}\right)$ the $\Aa$-hypergeometric system $H_{\Aa^\prime}(\beta)$ of \cref{eq:FeynmanParSpFeynman}. A matrix of the form \cref{eq:AprimeUF} is also known as \textit{Cayley embedding} of $A_\Uu$ and $A_\Ff$ \cite[def. 9.2.11]{DeLoeraTriangulations2010}. As expected the $\Aa$-hypergeometric systems for \cref{eq:FeynmanParSpFeynman} and \cref{eq:LeePomeranskyRepresentation} are equivalent, which can be verified by the matrix
\begin{align}
    T = \left(\begin{array}{cccc}
        L+1 & -1 & \cdots & -1 \\
        -L  & 1  & \cdots & 1  \\
        0   &    &        & 0 \\  
        \vdots &   \multicolumn{2}{c}{\smash{ \scalebox{1.5}{$\mathbbm{1}$}}_{n-1} }       & \vdots \\
        0      & &        & 0
        \end{array}\right)\quad\text{,}\qquad 
    T^{-1} =  \left(\begin{array}{ccccc}
        1 & 1 & 0 & \cdots & 0 \\
        0 & 0 & & & \\
        \vdots & \vdots &   \multicolumn{3}{c}{\smash{\scalebox{1.5}{$\mathbbm{1}$}}_{n-1} } \\
        0 & 0 & & \\
        L & L+1 & -1 & \cdots & -1
    \end{array}\right)
\end{align}
which transforms $\Aa^\prime = T \Aa$ and $\beta = T \nuu$. Moreover, by Laplace expansion we can see that $T$ is an unimodular matrix, whence $\Conv(\Aa)$ and $\Conv(\Aa^\prime)$ are equivalent polytopes. According to \cite{BerkeschEulerMellinIntegrals2013}, when working with the representation \cref{eq:FeynmanParSpFeynman} we will consider the Newton polytope $\Newt\!\left(\left(\Uu\cdot\Ff\right)\!|_{x_n=1}\right) = \Newt\!\left(\Uu|_{x_n=1}\right) + \Newt\!\left(\Ff|_{x_n=1}\right)$ instead of $\Newt(\Gg)$, where the sum denotes the Minkowski addition. For the general relation between Cayley embedding and Minkowski sums we refer to \cite[lem. 3.2]{HuberCayleyTrickLifting2000}. In the following we will mostly prefer the Lee-Pomeransky representation \cref{eq:LeePomeranskyRepresentation} due to its plainer structure. \bigskip

Feynman integrals will only be a subclass of $\Aa$-hypergeometric functions. We want to list certain characteristics that distinguish Feynman integrals within this class of arbitrary $\Aa$-hypergeometric functions. This characterization does not claim to be exhaustive. When considering $\Aa$-hypergeometric functions, the behaviour will be determined by the vector configuration $\Aa\subset\mathbb Z^{n+1}$ or equivalently by the Newton polytope $\Newt(\Gg)$. Therefore, we will examine the special properties of these objects in case where $\Aa$ comes from a scalar Feynman integral. Obviously, from the definitions of $\Uu$ and $\Ff$ \cref{eq:FirstSymanzik}, \cref{eq:SecondSymanzik}, the entries of the matrix $\Aa$ are restricted to $\Aa\in \{0;1;2\}^{(n+1)\times N}$. In addition, every column of $\Aa$ contains at most one entry equals $2$. In case of massless Feynman integrals, $\Newt(\Gg)$ will even be a $0/1$-polytope, i.e.\ $\Aa\in\{0;1\}^{(n+1)\times N}$. Furthermore, due to the homogeneous degrees of $\Uu$ and $\Ff$, all points in $\Newt(\Gg)$ are arranged on two parallel hyperplanes of $\mathbb R^{n}$. These hyperplanes will be of the form $\left\{\mu\in\mathbb R^n \,\rvert\, \sum_{i=1}^n \mu_i = L \right\}$ and  $\left\{\mu\in\mathbb R^n \,\rvert\, \sum_{i=1}^n \mu_i = L+1 \right\}$, respectively. Thus, $\Newt(\Gg)$ is compressed in one direction. As a further consequence, $\Newt(\Gg)$ will have no interior points. \bigskip

For fully massive Feynman integrals it was noticed in \cite{TellanderCohenMacaulayPropertyFeynman2021}, that every monomial of $\Ff_0$ will be contained in the massive part of the second Symanzik polynomial $\Uu\sum_i x_i m_i^2$. Hence, we will have
\begin{align}
    \Newt(\Ff) = \Newt\!\left(\Uu \sum_{i=1}^n x_i m_i^2\right) = \Newt(\Uu) + \Delta_n	
\end{align}
where $\Delta_n = \Conv(e_1,\ldots,e_n)$ is the $(n-1)$-simplex and where the sum denotes the Minkowski addition. In consequence, it is
\begin{align}
	\Newt(\Gg) = \Newt\!\left(\Uu \left(1 + \sum_{i=1}^n x_i m_i^2\right)\right) = \Newt(\Uu) + \tilde\Delta_n
\end{align}
with the $n$-simplex $\tilde\Delta_n = \Conv(0,e_1,\ldots,e_n)$. Thus, it is remarkable that for fully massive graphs, the precise form of the second Symanzik polynomial does not play a role for the most properties of the Feynman integral. Only for the specification to the variables $z$ to the physical values we have to consider the second Symanzik polynomial. \bigskip

It was conjectured\footnote{In fact, it was even conjectured that Feynman configurations always admit unimodular triangulations, which is a slightly stronger assumption.} in \cite{KlausenHypergeometricSeriesRepresentations2019}, that Feynman configurations are always \textit{normal}, i.e.\ that they satisfy
\begin{align}
    k \Newt(\Gg) \cap \mathbb Z^n = (k-1) \Newt(\Gg) \cap \mathbb Z^n + \Newt(\Gg)\cap\mathbb Z^n
\end{align}
for any $k\in\mathbb N$, which is also known as the \textit{integer decomposition property}. This property is particularly interesting because it implies that the toric ideal $I_\Aa$ defined by $\Newt(\Gg)=\Conv(A)$ is Cohen-Macaulay. 

This assumption was shown for two specific classes of Feynman graphs in \cite{TellanderCohenMacaulayPropertyFeynman2021} and was largely extended in \cite{WaltherFeynmanGraphsMatroids2022}.
\begin{theorem}[Cohen-Macaulayness for Feynman integrals {\cite[thm. 3.1 and thm. 3.4]{TellanderCohenMacaulayPropertyFeynman2021},\cite[thm. 4.3, thm. 4.5 and thm. 4.9]{WaltherFeynmanGraphsMatroids2022}}] \label{thm:CohenMacaulayFeynman}
	Assuming that $\Gamma$ is a (1PI), (1VI) Feynman graph with sufficiently generic\footnote{``Sufficiently generic'' means here that the external momenta should not lead to a cancellation of monomials in $\Gg$, which could happen outside the Euclidean region for specific momenta.} external momenta. Then in the following cases the Feynman configuration is normal
	\begin{enumerate}[a)]
		\item all edges of $\Gamma$ are massive
		\item all edges of $\Gamma$ are massless
		\item every $2$-forest of $\Gamma$ induces a non-zero term in $\Gg$.
	\end{enumerate}
	The last case c) can be rephrased as follows: For every internal vertex of $\Gamma$ there is a path to an external vertex consisting in massive edges only.
\end{theorem}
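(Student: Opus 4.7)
My plan is to reduce normality of the Feynman configuration, in each of the three cases, to the normality of a suitable matroid basis polytope, and then invoke the classical fact that basis polytopes of matroids (and more generally generalized permutohedra) are normal because they admit unimodular regular triangulations. Recall that a point configuration is normal precisely when it admits a (regular) unimodular triangulation, so showing the existence of such a triangulation of $\Newt(\Gg)$ is sufficient, and moreover the existence of such a triangulation gives the stronger conjecture stated in \cite{KlausenHypergeometricSeriesRepresentations2019}.

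For case a) I would exploit the identity $\Newt(\Gg)=\Newt(\Uu)+\widetilde\Delta_n$ already observed before the theorem. The lattice points of $\Newt(\Uu)=\Conv\{\mathbf 1_E-\mathbf 1_T : T\in\mathscr T_1\}$ are precisely the indicator vectors of the bases of the cographic matroid of $\Gamma$, so $\Newt(\Uu)$ is a matroid base polytope. By White's theorem (or, for graphic/cographic matroids, by Edmonds' matroid partition theorem), every matroid base polytope is normal and admits a unimodular triangulation. Since $\widetilde\Delta_n=\Conv(0,e_1,\ldots,e_n)$ is itself a unimodular simplex lying on a parallel affine slice of height one higher, the Cayley-type Minkowski sum $\Newt(\Uu)+\widetilde\Delta_n$ inherits a unimodular triangulation: one lifts a unimodular triangulation of $\Newt(\Uu)$ and cones from each of the $n+1$ vertices of $\widetilde\Delta_n$, producing simplices of volume one. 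Normality then follows automatically.

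For case b) the second Symanzik polynomial reduces to $\Ff_0$, and I would identify $\Newt(\Uu\cup\Ff_0)$ with the base polytope of a closely related matroid. Concretely, adjoin to $\Gamma$ an auxiliary edge joining the two external vertices selected by the nonzero external momentum; then spanning trees of the enlarged graph $\Gamma_0$ are in bijection either with spanning trees of $\Gamma$ (if they contain the extra edge) or with spanning $2$-forests of $\Gamma$ separating the external legs (if they do not). Thus $\Newt(\Gg)$ is (up to the degree-shift accounting for the auxiliary edge) the base polytope of the cographic matroid of $\Gamma_0$, and the same normality result for matroid base polytopes applies. The generic external momentum condition is used to ensure that no cancellations among $2$-forest contributions shrink the support.

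Case c) would be handled by combining a) and b): hypothesis c) is exactly the combinatorial condition guaranteeing that $\Newt(\Gg)$ is still realized as the base polytope of a cographic matroid of an auxiliary graph obtained by joining the massless internal vertices to the external momentum vertices by massive auxiliary edges that certainly produce nonzero coefficients in $\Gg$. The hypothesis on $2$-forests is precisely what is needed so that every spanning tree of the auxiliary graph corresponds to an actual monomial of $\Gg$, rather than one whose coefficient vanishes. The main technical obstacle, which I would expect to occupy most of the work, is carefully translating hypothesis c) into this matroidal picture and verifying that no monomial of $\Gg$ is lost under the restriction to physical kinematics; once this bookkeeping is done, normality follows from the same unimodular-triangulation result for matroid base polytopes. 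A secondary subtlety is that some $2$-forests contribute to $\Ff_0$ with signed coefficients $s_F$ that might in principle cancel, but this is precluded by the genericity assumption on the external momenta built into the statement.
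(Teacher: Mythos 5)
The paper does not prove this theorem itself; it quotes it from Tellander--Helmer and Walther, so the comparison is against the arguments in those references, which establish normality directly via the integer decomposition property of matroid polytopes.

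Your reduction of $\Newt(\Gg)$ to (Minkowski sums of) matroid base polytopes is exactly the right structural move and matches the cited proofs. However, the load-bearing lemma you invoke --- that every matroid base polytope admits a unimodular triangulation, and that this passes to the Minkowski sum $\Newt(\Uu)+\tilde\Delta_n$ --- is not a theorem. White's theorem (proved via matroid union, i.e.\ Edmonds' partition theorem) gives the \emph{integer decomposition property} of matroid base polytopes, not a unimodular triangulation; whether matroid polytopes admit unimodular triangulations is an open problem. The paper itself flags this distinction: the existence of unimodular triangulations of Feynman configurations is stated as a strictly stronger, unresolved conjecture, whereas normality is what is actually proved in the references. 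Relatedly, your opening claim that normality is \emph{equivalent} to admitting a unimodular triangulation is false (only the implication ``unimodular triangulation $\Rightarrow$ normal'' holds), and your coning construction for a unimodular triangulation of the Minkowski sum does not work as described --- the Cayley trick relates triangulations of the Cayley polytope to \emph{mixed subdivisions} of the sum, and unimodularity would not follow in any case. The repair is to argue at the level of IDP throughout: $\Newt(\Uu)$ is a matroid polytope, hence IDP by White, and Minkowski sums of integral polymatroids (M-convex sets) are again IDP, which covers $\Newt(\Uu)+\tilde\Delta_n$ in case a) and the corresponding sums in b) and c).

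A secondary gap: in case b) your single auxiliary edge $e_0$ only captures the $2$-forests separating \emph{one} pair of external vertices, so the construction as written handles only $2$-point functions. For $m>2$ legs, $\Ff_0$ sums over all $2$-forests separating any pair of external vertices, and one needs either several auxiliary edges (a complete graph on the external vertices) or a direct matroid-theoretic description of the full support; the genericity hypothesis then guarantees no $s_F$ vanishes, as you note. The same caveat applies to your sketch of case c).
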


As seen in \cref{thm:HolRankAHyp}, the Cohen-Macaulayness of the toric ideal $I_\Aa$ ensures that the holonomic rank is given by $\vol(\Newt(\Gg))$ for all values of $\nuu$ and not only for generic ones. Therefore, the $\Aa$-hypergeometric system is well-behaved also for the points $\nuu\in\mathbb Z^{n+1}$ and has no rank jumpings there. Thus, even though the Feynman integral may diverge for $d\rightarrow 4$, the structure of the Feynman integral remains the same in this limit.\bigskip

As a further characterization of Feynman configurations, we would like to draw attention to the insight found in \cite{SchultkaToricGeometryRegularization2018}. According to it, the polytope $\Newt(\Uu\Ff) = \Newt(\Uu) + \Newt(\Ff)$ is a generalized permutahedron, which means that all edges of $\Newt(\Uu\Ff)$ are parallel to an edge of the form $e_i-e_j$ for all $i,j\in \{1,\ldots,n\}$. In addition, Symanzik polynomials satisfy certain useful relations going beyond \cref{eq:UContractedDeleted} as stated in \cite[prop. 4.11]{SchultkaToricGeometryRegularization2018}, \cite{BrownFeynmanAmplitudesCosmic2017}.\bigskip

To conclude this chapter we want to highlight the most important points for the following. For a ($1$PI) and ($1$VI) Feynman graph, the scalar Feynman integral in the Euclidean region $\Re(z_j)>0$ is a meromorphic function in the parameters $\nuu=\left(\dhalf,\nu_1,\ldots,\nu_n\right)\in\mathbb C^{n+1}$. Moreover, the generalized Feynman integral is an $\Aa$-hyper\-ge\-o\-me\-tric function and the (physical) Feynman integral is a certain restriction of those $\Aa$-hypergeometric functions. As we will see in \cref{ch:singularities} we can relax the restriction to the Euclidean region after a rigorous treatment of kinematic singularities. Furthermore, in \cref{cor:MellinBarnesRepresentation} we found a class of Feynman-like integrals which provides a simple solution in terms of $\Gamma$-functions. These integrals will be helpful to fix the boundary values for the $\Aa$-hypergeometric systems.


\chapter{Series representations} \label{ch:seriesRepresentations} 
%

In this chapter we will concern with power series representations of Feynman integrals. It will be found\footnote{That Feynman integrals can always be expressed by Horn hypergeometric functions has been assumed for a long time, with good reasons, see e.g.\ \cite{KalmykovFeynmanDiagramsDifferential2009}. However, a rigorous proof has been lacking so far.} that any given generalized Feynman integral $\mathcal I_\Aa(\nuu,z)$ can be expressed in terms of Horn hypergeometric functions \cref{eq:DefHornHypergeometric} for every regular triangulation of the Newton polytope $\Newt(\Gg)$. Hence, we will generate a set of different series representations for each Feynman integral. This will be done by considering the $\Gamma$-series solutions of $\Aa$-hypergeometric systems from \cref{ssec:GammaSeries}. To fix the Feynman integral as a specific element in the solution space $\Sol(H_\Aa(\nuu))$ we will make use of \cref{cor:MellinBarnesRepresentation}, which provides the boundary values for the $\Aa$-hypergeometric system. In this way, we are able to give a closed formula for series representations of generalized Feynman integrals in \cref{thm:FeynSeries}.\bigskip

After discussing the series representations for generalized Feynman integrals, we will answer the question of how one can transform statements for the generalized Feynman integral into statements of the Feynman integral restricted to physical values (\cref{sec:AnalyticContinuation}). Furthermore, we will present techniques for the Laurent expansion of those series in their parameters $\nuu$ in \cref{sec:epsilonExpansion}, which is necessary in the dimensional and analytic regularization.

Since the handling of multivariate series can be very elaborate, we will provide a small amount of tools for the treatment of those series in \cref{sec:ManipulationSeries}. Those techniques can in principle also be used for a symbolic reduction of Horn hypergeometric series to multiple polylogarithms and related functions. However, series representations are notably efficient in a numerical evaluation (\cref{sec:numerics}). For convenient kinematical regions those series will converge very fast and one can give a sufficient approximation after a few summands.\bigskip

In principle there are many known ways to span the solution space of $\Aa$-hyper\-ge\-o\-me\-tric systems \cite{Matsubara-HeoLaplaceResidueEuler2018}. Thus, one can use the $\Aa$-hypergeometric theory also to write the Feynman integral in terms of Euler integrals or Laplacian integrals to name a few. We will address those alternative approaches in \cref{sec:EulerIntegrals}.

The procedure presented in this chapter is not only restricted to Feynman integrals. One can derive series representations for any Euler-Mellin integral. We will show some of these applications appearing in Feynman calculus in \cref{sec:periodMarginal}. In order to illustrate the method of Horn hypergeometric series, we will conclude this chapter by deriving a series representation of the fully massive sunset graph in \cref{sec:ExampleSunset}.


\section[Series representations for generalized Feynman integrals]{Series representations for generalized Feynman \\ integrals} 
\sectionmark{... for generalized Feynman integrals}
\label{sec:seriesRepresentationsSEC}


%

As it was observed in \cref{thm:FeynmanAHypergeometric}, every generalized Feynman integral is an $\Aa$-hypergeometric function. Thus, we can directly apply the results of $\Aa$-hypergeometric systems from \cref{ch:AHypergeometricWorld}. In addition to the genericity of the variables $z$, we will also consider preliminary generic\footnote{As before, we will call $\nuu\in \mathbb C^{n+1}$ generic, if $\nuu \in D \subseteq \mathbb C^{n+1}$ attains only values in a non-empty, Zariski open set $D$. This means in particular that $D$ is dense in $\mathbb C^{n+1}$. When working with $\Gamma$-series \cref{eq:GammaSeriesDefCh4}, we will often consider a slightly stronger restriction, where we will assume $\nuu$ to be in a countable intersection of non-empty, Zariski open sets. In other words, we want $\nuu$ to attend values in the complement of a countable union of hyperplanes of $\mathbb C^{n+1}$. This slightly stronger version is sometimes referred as ``very generic''. Hence, for very generic $\nuu$, we can assume that $\Aas^{-1}\nuu+\Aas^{-1}\Aabs \lambda$ will not contain any integer coordinate for all $\lambda\in\Lambda_k$, see also \cref{thm:meromorphicContinuation}. Note that for the assumptions in \cref{thm:HolRankAHyp} generic values of $\nuu$ are sufficient.} values for $\nuu\in \mathbb C^{n+1}$ in order to avoid singularities in the $\Gamma$-functions. The latter assumption will also exclude the possibilities of rank jumpings, i.e.\ the holonomic rank of $H_\Aa(\nuu)$ is precisely given by the volume of $\Newt(\Gg)$ according to \cref{thm:HolRankAHyp}. However, as stated in \cref{thm:CohenMacaulayFeynman}, the toric ideal generated by Feynman integrals is Cohen-Macaulay in many cases. Therefore, there will be no rank jumpings even for non-generic $\nuu$ for the most Feynman integrals. Hence, this will be not a serious restriction for Feynman integrals and is only done for simplicity in the handling of $\Gamma$-functions.\bigskip

In \cref{ssec:GammaSeries} we already constructed series solutions for any $\Aa$-hypergeometric system. In particular, a basis of the solution space $\Sol(H_\Aa(\nuu))$ was provided by $\Gamma$-series. Thus, let $\Tt$ be a regular triangulation of the Newton polytope $\Newt(\Gg)=\Newt(\Uu+\Ff)$ and $\hatT$ the set of maximal cells of $\Tt$. Then according to \cref{thm:SolutionSpaceGammaSeries} we can write every element of $\Sol(H_\Aa(\nuu))$ as a linear combination of $\Gamma$-series and especially the Feynman integral. For the sake of simplicity, we will drop the prefactors of the Feynman integral \cref{eq:LeePomeranskyRepresentation} and consider instead
\begin{align}
    \gls{gFJ} := \Gamma (\nu_0-\omega)\Gamma(\nu) \mathcal I_\Aa(\nuu,z) = \Gamma(\nu_0)\int_{\mathbb R^n_+} \! \dif x\, x^{\nu-1} \Gg^{-\nu_0} \label{eq:DefinitionFeynmanJ}
\end{align}
where the corresponding definitions can be found in \cref{eq:Gsupport} and \cref{eq:nuuDefinition}. Hence, the Feynman integral can be written as a linear combination of $\Gamma$-series
\begin{align}
	\mathcal J_\Aa(\nuu,z) = \sum_{\sigma\in\hatT} \sum_{k\in K_\sigma} \gls{Cfactors} \varphi_{\sigma,k}(\nuu,z) \label{eq:LinearCombinationGammaSeries}
\end{align}
where $K_\sigma=\left\{k^{(1)},\ldots,k^{(s)}\right\}$ is a set of representatives of $\bigslant{\mathbb Z^{n+1}}{\mathbb Z\Aas} = \big\{\!\left[\Aabs k^{(j)}\right] \, \big\rvert \, j = 1,\ldots, s = \vol(\Conv(\Aas))\big\}$ for every simplex $\sigma\in\hatT$ according to \cref{eq:representativesK}. Thereby $\varphi_{\sigma,k}$ denote the $\Gamma$-series which were defined in \cref{ssec:GammaSeries}, i.e.\
\begin{align}
	\varphi_{\sigma,k} (\nuu,z) = \zsigma^{-\Aas^{-1}\nuu} \sum_{\lambda\in\Lambda_k} \frac{\zsigma^{-\Aas^{-1}\Aabs \lambda} \zbarsigma^\lambda}{\lambda! \ \Gamma\!\left(1-\Aas^{-1}\nuu - \Aas^{-1}\Aabs \lambda\right)} \label{eq:GammaSeriesDefCh4}
\end{align}
with $\Lambda_k=\left\{ k + m \in \mathbb{N}_0^r \, \rvert\, \Aas^{-1}\Aabs m \in \mathbb Z^{n+1} \right\}\subseteq \mathbb{N}_0^r$. By $\bar\sigma = \{1,\ldots,N\}\setminus\sigma$ we were denoting the complement of $\sigma$ and $\Aas$, $\Aabs$, $\zsigma$, $\zbarsigma$ and all related objects indicate the restriction to columns corresponding to $\sigma$ and $\bar\sigma$, respectively. Further, we will assume Euclidean kinematics $\Re(z_j)>0$ in a convenient region, such that all $\Gamma$-series $\varphi_{\sigma,k}$ are convergent. As pointed out in \cref{ssec:GammaSeries}, those regions always exist. 

Therefore, in order to get a series representation of Feynman integrals, we have to determine the meromorphic functions $C_{\sigma,k} (\nuu)$ of \cref{eq:LinearCombinationGammaSeries}. This can be done by considering specific values for $z$ in \cref{eq:LinearCombinationGammaSeries}. In order to get sufficient boundary values also for non-unimodular triangulations $\Tt$, we have to include derivatives of $\mathcal J_\Aa(\nuu,z)$ with respect to $z$. As aforementioned, derivatives with respect to $z$ will correspond to a shift in the parameters $\nuu$, and we will have
\begin{align}
	\partial^u \varphi_{\sigma,k} = \varphi_{\sigma,k-u_{\bar\sigma}} (\nuu + \Aa u,z) 
\end{align}
with $\partial^u = \prod_{i=1}^N \left(\pd{}{z_i}\right)^{u_i}$ and $u\in\mathbb N_{\geq 0}^N$. Up to a sign, also the derivatives of the Feynman integral relate simply in a shift of the parameters $\nuu$
\begin{align}
	\partial^u \mathcal J_\Aa(\nuu,z) = (-1)^{|u|} \mathcal J_\Aa (\nuu+\Aa u,z) \label{eq:FeynmanJDerivative}
\end{align}
where $|u| := \sum_{i=1}^N u_i$, which follows directly from the definition \cref{eq:DefinitionFeynmanJ}. \bigskip

For the purpose of considering boundary values of \cref{eq:LinearCombinationGammaSeries} where certain variables $z_j$ are set to zero, we will examine the behaviour of $\Gamma$-series when they are restricted to subconfigurations in the following slightly technical lemma. As we will see below, Feynman integrals transmit their meromorphic functions $C_{\sigma,k}(\nuu)$ to simpler Feynman integrals. This enables us to reduce every Feynman integral to the case described in \cref{cor:MellinBarnesRepresentation} and derive an analytic expression of the functions $C_{\sigma,k}(\nuu)$.

\begin{lemma}[linear combinations for subtriangulations] \label{lem:linCombSubtriangs}
	Let $\Aa\in\mathbb Z^{(n+1)\times N}$ and $\Aa^\prime := \Aa\setminus i \in\mathbb Z^{(n+1)\times (N-1)}$ be two acyclic, full dimensional vector configurations with $r=N-n-1>1$. Further, let $\Tt$ be a regular triangulation of $\Aa$ and $\Tt^\prime$ a regular triangulation of $\Aa^\prime$ being subtriangulations of each other $\Tt^\prime \subseteq \Tt$. Moreover, we will consider that the representatives $K_\sigma$ and $K^\prime_\sigma$ are chosen compatible with each other for all $\sigma\not\ni i$, i.e.\ for every ${k^\prime}^{(j)}\in K_\sigma^\prime$ we will construct $k^{(j)}\in K_\sigma$ by adding $k^{(j)}_i = 0$ as the $i$-th component. Under the assumption $\lim_{z_i\rightarrow 0} \mathcal J_\Aa(\nuu,z) = \mathcal J_{\Aa^\prime}(\nuu,z^\prime)$ we will have an equality of the meromorphic functions from \cref{eq:LinearCombinationGammaSeries}
	\begin{align}
		C_{\sigma,k}(\nuu) = C^\prime_{\sigma,k^\prime} (\nuu) \qquad\text{for all } \sigma \in \hatT^\prime
	\end{align}
	where the primed objects are all related to the system $H_{\Aa^\prime}(\nuu)$.
\end{lemma}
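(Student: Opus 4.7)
The plan is to take the limit $z_i\to 0$ on both sides of \cref{eq:LinearCombinationGammaSeries} and compare with the canonical $\Gamma$-series decomposition of $\mathcal J_{\Aa^\prime}(\nuu,z^\prime)$ provided by \cref{thm:SolutionSpaceGammaSeries}. By \cref{lem:subtriangulations}, the maximal cells of $\Tt^\prime$ are precisely those simplices of $\hatT$ that do not contain $a^{(i)}$, so the sum in \cref{eq:LinearCombinationGammaSeries} naturally splits into the part over $\sigma\in\hatT^\prime$ (where $i\in\bar\sigma$) and the part over $\sigma\in\hatT\setminus\hatT^\prime$ (where $i\in\sigma$).

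First I would analyse the simplices $\sigma\in\hatT^\prime$. For such $\sigma$, the prefactor $z_\sigma^{-\Aas^{-1}\nuu}$ does not involve $z_i$, and in the summand $z_{\bar\sigma}^\lambda$ the variable $z_i$ enters only with the non-negative integer exponent equal to the component of $\lambda$ that corresponds to $i$ (denote it $\lambda_i$). Setting $z_i=0$ therefore selects the subseries with $\lambda_i=0$. Together with the compatibility assumption $k_i=0$ and the observation that the torsion condition $\Aas^{-1}\Aabs m\in\mathbb Z^{n+1}$, restricted to $m_i=0$, reduces to the analogous condition for $\Aa^\prime$, this shows that the restricted summation domain is exactly $\Lambda^\prime_{k^\prime}$. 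Consequently $\lim_{z_i\to 0}\varphi_{\sigma,k}(\nuu,z)=\varphi^\prime_{\sigma,k^\prime}(\nuu,z^\prime)$ for every $\sigma\in\hatT^\prime$.

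Next I would treat the simplices $\sigma\in\hatT\setminus\hatT^\prime$. Here the $\lambda$-th summand of $\varphi_{\sigma,k}$ carries a power of $z_i$ of the form $-(\Aas^{-1}\nuu)_{i}-(\Aas^{-1}\Aabs\lambda)_{i}$, whose lattice part $(\Aas^{-1}\Aabs\lambda)_{i}\in\mathbb Z$ since $\lambda\in\Lambda_k$. Thus every branch of $\varphi_{\sigma,k}$, $\sigma\ni i$, factorises as $z_i^{-(\Aas^{-1}\nuu)_{i}}$ times the analytic continuation of a multivalued function of the remaining variables. Restricting $\nuu$ to the non-empty open set on which $\Re(-(\Aas^{-1}\nuu)_{i})>0$ simultaneously for every $\sigma\ni i$ -- a finite intersection of half-spaces in $\mathbb C^{n+1}$ -- forces the whole contribution to tend to zero as $z_i\to 0^+$. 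Controlling this vanishing rigorously is the main obstacle, because $z_i=0$ lies on the boundary of the natural domain of convergence of these $\Gamma$-series, so one must work with the analytic continuation of $\varphi_{\sigma,k}$ in the half-plane $\Re(z_i)>0$ rather than with the original power series.

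Combining the two observations with the hypothesis $\lim_{z_i\to 0}\mathcal J_\Aa(\nuu,z)=\mathcal J_{\Aa^\prime}(\nuu,z^\prime)$ gives, on the chosen open parameter subset,
\begin{align*}
\mathcal J_{\Aa^\prime}(\nuu,z^\prime)=\sum_{\sigma\in\hatT^\prime}\sum_{k\in K_\sigma}C_{\sigma,k}(\nuu)\,\varphi^\prime_{\sigma,k^\prime}(\nuu,z^\prime).
\end{align*}
Expanding $\mathcal J_{\Aa^\prime}$ independently in the $\Gamma$-series basis of $\Sol(H_{\Aa^\prime}(\nuu))$ produces the same linear combination with coefficients $C^\prime_{\sigma,k^\prime}(\nuu)$. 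Linear independence of the basis then forces $C_{\sigma,k}(\nuu)=C^\prime_{\sigma,k^\prime}(\nuu)$ on this open subset, and meromorphy of both sides in $\nuu$ extends the equality to the whole generic locus of parameters.
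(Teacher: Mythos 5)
You follow the same route as the paper's proof: split $\hatT$ into the cells of $\hatT^{\prime}$ and the cells containing $i$, show that for $\sigma\in\hatT^{\prime}$ the limit $z_i\to 0$ selects the $\lambda_i=0$ subseries and turns $\varphi_{\sigma,k}$ into $\varphi^{\prime}_{\sigma,k^{\prime}}$, argue that the cells with $i\in\sigma$ contribute nothing in the limit, and conclude by uniqueness of the expansion in the $\Gamma$-series basis of $\Sol\!\left(H_{\Aa^{\prime}}(\nuu)\right)$ from \cref{thm:SolutionSpaceGammaSeries}. (The paper carries derivatives $\partial^u$ through the whole argument; that is preparation for the reduction used later in \cref{thm:FeynSeries} and is not needed for the lemma itself once the representatives are chosen compatibly, so its absence from your write-up is immaterial.)

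The one step you do not actually close is the one you yourself flag. For $\sigma\ni i$ the quantity you factor off as ``a function of the remaining variables'' is not independent of $z_i$: each summand still carries $z_i^{-(\Aas^{-1}\Aabs\lambda)_i}$ with $(\Aas^{-1}\Aabs\lambda)_i\in\mathbb Z$ of either sign, so the series factor can a priori blow up as $z_i\to 0$, and the vanishing of the prefactor $z_i^{-(\Aas^{-1}\nuu)_i}$ on your chosen parameter region does not by itself kill the contribution. The paper resolves this with a dichotomy rather than a direct estimate: since the $\varphi^{\prime}_{\sigma,k^{\prime}}$ with $\sigma\not\ni i$ already span $\Sol\!\left(H_{\Aa^{\prime}}(\nuu)\right)$, the limit of a $\Gamma$-series with $i\in\sigma$ is either identically zero or a finite nonzero element of that span; in the latter case finiteness of the series factor together with the vanishing prefactor on the full-dimensional region $D=\left\{\nuu\,\rvert\,\Re\!\left((\Aas^{-1}\nuu)_i\right)<0\right\}$ forces the limit to vanish on $D$, and analytic continuation in $\nuu$ propagates this to all generic $\nuu$, contradicting the assumed nonvanishing. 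Supplying this dichotomy (or an equivalent boundedness argument for the analytically continued series factor) is what is needed to make your sketch complete; everything else matches the paper.
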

\begin{proof}
    Due to the relation of derivatives and shifts in the parameters \cref{eq:FeynmanJDerivative} we can extend $\lim_{z_i\rightarrow 0} \mathcal J_\Aa(\nuu,z) = \mathcal J_{\Aa^\prime}(\nuu,z^\prime)$ to
    \begin{align}
        \lim_{z_i\rightarrow 0} \partial^u \mathcal J_\Aa(\nuu,z) = (-1)^{|u|} \mathcal J_{\Aa^\prime} (\nuu + \Aa u,z^\prime) \stackrel{u_i=0}{=} \partial^{u^\prime} \mathcal J_{\Aa^\prime} (\nuu , z^\prime)
    \end{align}
    where we use $z^\prime := (z_j)_{j\neq i}$ as before for the variables of the system $H_{\Aa^\prime}(\nuu)$ and equivalently $u^\prime := (u_j)_{j\neq i}$. So the task will be to compare the limits of $\partial^u \mathcal J_\Aa(\nuu,z)$ with the limits of $\Gamma$-series $\partial^u \varphi_{\sigma,k}(\nuu,z)$. For the latter we have to distinguish between the two cases $\sigma\ni i$ and $\sigma\not\ni i$. Starting with the second case we have
    \begin{align}
        \lim_{z_i\rightarrow 0} \partial^u \varphi_{\sigma,k}(\nuu,z) \stackrel{i\notin\sigma}{=} \zsigma^{- \Aas^{-1}(\nuu+\Aa u)} \displaystyle\sum_{\substack{\lambda\in\Lambda_{k-u_{\bar\sigma}} \\ \text{with } \lambda_i = 0}} \frac{\zsigma^{-\Aas^{-1}\Aabs \lambda} \zbarsigma^\lambda}{\lambda! \ \Gamma\!\left(1-\Aas^{-1} (\nuu + \Aa u + \Aabs \lambda)\right)} \label{eq:derivativeGammaSeriesLimit} \point
    \end{align}
    Note, that there does not necessarily exist a $\lambda\in\Lambda_{k-u_{\bar\sigma}}$ satisfying $\lambda_i=0$. In this case the sum in \cref{eq:derivativeGammaSeriesLimit} will be empty and $\lim_{z_i\rightarrow 0} \partial^u \varphi_{\sigma,k}(\nuu,z) = 0$. To avoid this case, let us assume that we have chosen $u$ in such a way that $(u_{\bar\sigma})_i = k_i$. This will agree with the previous formulated assumptions $(u_{\bar\sigma})_i = 0$ and $k_i=0$. Hence, we can reformulate the summation region
    \begin{align}
        \left\{\lambda\in\Lambda_{k-u_{\bar\sigma}} \,\rvert\, \lambda_i = 0 \right\} &= \left\{ k^\prime - u_{\bar\sigma}^\prime + m^\prime \in \mathbb{N}_0^{r-1} \, \rvert\, \Aas^{-1}\Aabs^\prime m^\prime \in \mathbb Z^{n+1} \right\} \times \left\{\lambda_i = 0\right\} \nonumber \\
        & = \Lambda^\prime_{k^\prime - u_{\bar\sigma}^\prime}  \times \left\{\lambda_i = 0 \right\}
    \end{align}
    where the primed objects belong to the vector configuration $\Aa^\prime$. Thus, we will obtain for $i\notin\sigma$ with $(u_{\bar\sigma})_i = k_i=0$
    \begin{align}
        \lim_{z_i\rightarrow 0} \partial^u \varphi_{\sigma,k} (\nuu,z) = \varphi^\prime_{\sigma,k^\prime - u_{\bar\sigma}^\prime} (\nuu+\Aa u, z^\prime) = \partial^{u^\prime} \varphi^\prime_{\sigma,k^\prime} (\nuu,z^\prime) \comma
    \end{align}
    which is nothing else than the $\Gamma$-series we would expect for a system $\Aa^\prime$. Therefore, in the linear combination \cref{eq:LinearCombinationGammaSeries} we will have
    \begin{align}
        \lim_{z_i\rightarrow 0} & \partial^u \mathcal J_\Aa(\nuu,z) = \partial^{u^\prime} \mathcal J_{\Aa^{\prime}} (\nuu, z^\prime) \\&= \sum_{\sigma\in\hatT^\prime} \sum_{k^\prime\in K^\prime_\sigma} C_{\sigma,k^\prime}(\nuu) \partial^{u^\prime} \varphi^\prime_{\sigma,k^\prime}(\nuu,z^\prime) + \sum_{\substack{ \sigma\in\hatT \\ \text{s.t. } i\in\sigma}} \sum_{k\in K_\sigma} C_{\sigma,k} (\nuu) \lim_{z_i\rightarrow 0}\partial^{u^\prime} \varphi_{\sigma,k}(\nuu,z) \point \nonumber
    \end{align}
    Since the $\Gamma$-series $\varphi^\prime_{\sigma,k^\prime}$ with $\sigma\not\ni i$ already describe the full $\Aa^\prime$-hypergeometric system $H_{\Aa^\prime}(\nuu)$ the $\Gamma$-series $\lim_{z_i\rightarrow 0} \varphi_{\sigma,k}$ with $i\in\sigma$ are either linear dependent of the latter or they are zero. Suppose that they are linear dependent. This implies in particular that the limit of the series 
    \begin{align}
        \lim_{z_i\rightarrow 0} \sum_{\lambda\in\Lambda_{k-u_{\bar\sigma}}} \frac{\zsigma^{-\Aas^{-1}\Aabs\lambda} \zbarsigma^\lambda}{\lambda! \ \Gamma\!\left(1-\Aas^{-1} (\nuu + \Aa u) - \Aas^{-1}\Aabs\lambda\right)} \quad\text{with } i\in\sigma
    \end{align}
    is finite. On the other hand there exists a (full dimensional) region $D\subset\mathbb C^{n+1}$ such that $\left(\Aas^{-1}\nuu\right)_i < 0$ for all $\nuu\in D$. Therefore, we will have $\lim_{z_i\rightarrow 0} \partial^u \varphi_{\sigma,k}(\nuu,z)=0$ for all $\nuu\in D$. Standard arguments of complex analysis lead to the result that $\lim_{z_i\rightarrow 0} \partial^u \varphi_{\sigma,k}(\nuu,z)$ vanishes for all generic values of $\nuu$ \cite[thm. 4.8, ch. 2]{SteinComplexAnalysis2003}. Therefore, the assumption of linear dependency was false and all the $\Gamma$-series $\partial^u \varphi_{\sigma,k}$ with $\sigma\ni i$ will identically vanish in the limit $z_i\rightarrow 0$. Note, that these series may not be defined in that limit, and it is rather the analytic continuation of the $\Gamma$-series which will vanish then.
    
    Hence, the $\Gamma$-series of $\Aa$ with $\sigma \not\ni i$ will transform to the $\Gamma$-series of $\Aa^\prime$ in the limit $z_i\rightarrow 0$, whereas the $\Gamma$-series with $\sigma\ni i$ will vanish. Thus, the meromorphic functions $C_{\sigma,k}(\nuu)$ will stay the same in such a limit. 
\end{proof}

Therefore, the set of $\Gamma$-series for the vector configuration $\Aa$ becomes the set of $\Gamma$-series for the vector configuration $\Aa^\prime$ in the limit $z_i\rightarrow 0$ if the considered triangulations are compatible. Such a behaviour was also mentioned in \cite{GelfandGeneralHypergeometricSystems1992}. Thus, the meromorphic functions $C_{\sigma,k}(\nuu)$ for a Feynman integral with vector configuration $\Aa$ will be transmitted to the one for a vector configuration $\Aa^\prime$, and we can reduce the determination of the functions $C_{\sigma,k}(\nuu)$ to simpler situations.

Note, that those pairs of regular triangulations as considered in \cref{lem:linCombSubtriangs} always exist, as argued in the end of \cref{ssec:TriangulationsPolyhedra}. E.g.\ we can construct $\Tt$ from $\Tt^\prime$ as a placing triangulation with respect to the point labelled by $i$. Also the assumption of the choice of representatives $K_\sigma$ and $K^\prime_\sigma$ in \cref{lem:linCombSubtriangs} is always possible, as the quotient rings $\normalslant{\mathbb Z^{n+1}}{\mathbb Z \Aas}$ and $\normalslant{\mathbb Z^{n+1}}{\mathbb Z \Aas^\prime}$ are obviously the same for all $\sigma\not\ni i$. \bigskip


Thus, we will determine the meromorphic functions $C_{\sigma,k} (\nuu)$ by considering Euler-Mellin integrals corresponding to subtriangulations. In this way one can define ancestors and descendants of Feynman integrals by deleting or adding monomials to the Lee-Pomeransky polynomial $\Gg$. E.g.\ the massless one-loop bubble graph is a descendant of the one-loop bubble graph with one mass, which itself is a descendant of the fully massive one-loop bubble. Those ancestors and descendants do not necessarily correspond to Feynman integrals in the original sense, since one can also consider polynomials $\Gg$ which are not connected to graph polynomials any more. For an arbitrary acyclic vector configuration $\Aa$, we can choose any full dimensional simplex of a triangulation of $\Aa$ as a possible descendant. Hence, the trivial triangulation of that simplex and the triangulation of $\Aa$ are compatible in the sense of \cref{lem:linCombSubtriangs}. In doing so, one can relate the prefactors $C_{\sigma,k} (\nuu)$ to the problem where only one simplex is involved. This consideration results in the following theorem.

\vspace{2em}
\begin{theorem}[Series representation of Feynman integrals] \label{thm:FeynSeries}
    Let $\Tt$ be a regular triangulation of the Newton polytope $\Newt(\Gg)$ corresponding to a generalized Feynman integral $\mathcal I_\Aa(\nuu,z)$ and $\hatT$ its maximal cells. Then the generalized Feynman integral can be written as
    \begin{align}
        \mathcal I_\Aa(\nuu,z) = \frac{1}{\Gamma(\nu_0-\omega)\Gamma(\nu)} \sum_{\sigma\in \hatT} \frac{z_\sigma^{-\Aas^{-1}\nuu}}{\left|\det (\Aas)\right|} \sum_{\lambda\in\mathbb N_0^r} \frac{(-1)^{|\lambda|}}{\lambda!} \Gamma\!\left(\Aas^{-1}\nuu+\Aas^{-1}\Aabs\lambda\right) z_\sigma^{-\Aas^{-1}\Aabs\lambda} z_{\bar\sigma}^\lambda \label{eq:SeriesRepresentationFeynmanIntegral}
    \end{align}
    where $r=N-n-1$. In order to avoid poles of the Feynman integral, we will assume $\nuu\in\mathbb C^{n+1}$ to be very generic. All series in \cref{eq:SeriesRepresentationFeynmanIntegral} have a common region of absolute convergence. 
\end{theorem}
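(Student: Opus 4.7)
The plan is to leverage the $\Aa$-hypergeometric nature of
$\mathcal J_\Aa(\nuu,z) := \Gamma(\nu_0-\omega)\Gamma(\nu)\,\mathcal I_\Aa(\nuu,z)$ together with the boundary data supplied by \cref{cor:MellinBarnesRepresentation}. By \cref{thm:FeynmanAHypergeometric}, $\mathcal J_\Aa\in\Sol(H_\Aa(\nuu))$, and by \cref{thm:SolutionSpaceGammaSeries} the family $\{\varphi_{\sigma,k}\}_{\sigma\in\hatT,\,k\in K_\sigma}$ forms a basis of $\Sol(H_\Aa(\nuu))$ for very generic $\nuu$. One may therefore expand
$$\mathcal J_\Aa(\nuu,z) \;=\; \sum_{\sigma\in\hatT}\sum_{k\in K_\sigma} C_{\sigma,k}(\nuu)\,\varphi_{\sigma,k}(\nuu,z)$$
with meromorphic $C_{\sigma,k}$, so the whole task reduces to identifying these coefficients.

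To pin down $C_{\sigma,0}(\nuu)$ for a fixed simplex $\sigma\in\hatT$, I would strip $\Aa$ down to $\Aas$ by deleting the points of $\bar\sigma$ one at a time. A placing triangulation having $\sigma$ placed first supplies at each step a regular sub-triangulation compatible with the previous one in the sense of \cref{lem:linCombSubtriangs}, so that $C_{\sigma,0}$ remains invariant along the chain. When $\bar\sigma$ is exhausted, only the simplex $\Aas$ survives: here $r=0$, $K_\sigma=\{0\}$, and $\varphi_{\sigma,0}=z_\sigma^{-\Aas^{-1}\nuu}/\Gamma(1-\Aas^{-1}\nuu)$, while \cref{cor:MellinBarnesRepresentation} evaluates the reduced integral to $\Gamma(\Aas^{-1}\nuu)\,z_\sigma^{-\Aas^{-1}\nuu}/|\det(\Aas)|$. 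Matching yields
$$C_{\sigma,0}(\nuu) \;=\; \frac{\Gamma(\Aas^{-1}\nuu)\,\Gamma(1-\Aas^{-1}\nuu)}{|\det(\Aas)|}.$$
The coefficients $C_{\sigma,k}$ with $k\neq 0$ then follow without further stripping: comparing the two expansions of $\partial^u\mathcal J_\Aa$ obtained from $\partial^u\mathcal J_\Aa(\nuu,z)=(-1)^{|u|}\mathcal J_\Aa(\nuu+\Aa u,z)$ and from $\partial^u\varphi_{\sigma,k}=\varphi_{\sigma,k-u_{\bar\sigma}}(\nuu+\Aa u,z)$ gives the recursion $C_{\sigma,k+u_{\bar\sigma}}(\nuu)=(-1)^{|u|}C_{\sigma,k}(\nuu+\Aa u)$; specialising to $u_\sigma=0$, $u_{\bar\sigma}=k$ produces $C_{\sigma,k}(\nuu)=(-1)^{|k|}C_{\sigma,0}(\nuu+\Aabs k)$.

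With all $C_{\sigma,k}$ in hand, the rest is algebraic. Applying the reflection formula $1/\Gamma(1-x)=\sin(\pi x)\,\Gamma(x)/\pi$ component-wise inside each $\varphi_{\sigma,k}$ and using $\sum_i(\Aas^{-1}\Aabs)_{ij}=1$ from \cref{lem:HomogenityRelationsForAa} (which yields $\sum_i(\Aas^{-1}\Aabs(\lambda-k))_i=|\lambda|-|k|\in\mathbb Z$ for $\lambda\in\Lambda_k$), the $\lambda$-dependent sines split into a $k$-dependent product that cancels against $C_{\sigma,k}$ and a sign $(-1)^{|\lambda|-|k|}$. Combined with the $(-1)^{|k|}$ supplied by $C_{\sigma,k}$, this gives the clean factor $(-1)^{|\lambda|}$ of the claim. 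Finally, since $\{\Lambda_k\}_{k\in K_\sigma}$ partitions $\mathbb N_0^r$, the double sum $\sum_{k\in K_\sigma}\sum_{\lambda\in\Lambda_k}$ collapses to the single unrestricted sum $\sum_{\lambda\in\mathbb N_0^r}$ in the statement, and the common region of absolute convergence is the one furnished by \cref{thm:SolutionSpaceGammaSeries}.

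The step I expect to be most delicate is the reduction preserving $C_{\sigma,0}$ invariantly through the sequence of deletions: it requires that at every stage the representative system of $\mathbb Z^{n+1}/\mathbb Z\Aas$ can be chosen to satisfy the compatibility hypothesis of \cref{lem:linCombSubtriangs} (in particular, to contain the origin), and that the regular sub-triangulations can simultaneously be taken to refine the trivial triangulation of the residual simplex $\Aas$. Both can be engineered via a placing triangulation, but verifying this carefully, and checking that no rank-jumping or multivalued ambiguity from the reflection formula spoils the recombination across cosets, constitutes the technical heart of the argument.
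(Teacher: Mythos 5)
Your proposal is correct and follows essentially the same route as the paper's proof: both reduce to the simplex case via \cref{lem:linCombSubtriangs}, anchor the coefficients with the boundary values from \cref{cor:MellinBarnesRepresentation} and the derivative--shift relation \cref{eq:FeynmanJDerivative}, and then recombine using Euler's reflection formula together with the partition $\{\Lambda_k\}_{k\in K_\sigma}$ of $\mathbb N_0^r$. The only (cosmetic) difference is that you determine $C_{\sigma,0}$ first by iterated deletion and propagate to $C_{\sigma,k}$ through the shift recursion, whereas the paper compares $\partial^{(0,u_{\bar\sigma})}$ of both sides of the linear combination directly in the limit $z_{\bar\sigma}\rightarrow 0$ for each representative $u_{\bar\sigma}\in K_\sigma$, which produces all the $C_{\sigma,k}$ at once.
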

\begin{proof}
    As pointed out above, we only have to determine the meromorphic functions $C_{\sigma,k}(\nuu)$ which were defined by \cref{eq:LinearCombinationGammaSeries}. Due to \cref{lem:linCombSubtriangs} this can be accomplished by considering simpler cases. Especially, we want to reduce it to the case of simplices\footnote{The reduction to a simplex (i.e.\ $r=1$) does not satisfy the original assumptions which were considered in \cref{lem:linCombSubtriangs}. However, by duplicating an arbitrary vertex in $\Aa^\prime$ and adjusting the corresponding $z$-variable we can treat also simplices by \cref{lem:linCombSubtriangs}. Hence, we will split a monomial of $\Gg$ into two equal monomials with halved coefficients.}, i.e.\ we examine the limit $\zbarsigma \rightarrow 0$. Note that $\lim_{\zbarsigma\rightarrow 0} \partial^u \varphi_{\sigma,k}(\nuu,z)$ vanishes if and only if $0\in\Lambda_{k-u_{\bar\sigma}}$. This is satisfied if $k=u_{\bar\sigma}$, which follows directly from the definition of $\Lambda_{k-u_{\bar\sigma}}$. On the other hand, if $0\notin\Lambda_{k-u_{\bar\sigma}}$, $k$ and $u_{\bar\sigma}$ have to correspond to different equivalence classes described by $K_\sigma$. Therefore, we obtain by a choice of $u_\sigma=0$
    \begin{align}
	   \lim_{\zbarsigma \rightarrow 0} \partial^{(0,u_{\bar\sigma})} \varphi_{\sigma,k}(\nuu,z) = \delta_{k,u_{\bar\sigma}} \frac{\zsigma^{-\Aas^{-1}(\nuu+\Aabs k)}}{\Gamma\!\left(1- \Aas^{-1} ( \nuu + \Aabs k)\right)} \label{eq:proofSeriesA}
    \end{align}
    for all $u_{\bar\sigma},k\in K_\sigma$ where $\delta_{k,u_{\bar\sigma}}$ denotes the Kronecker delta.
    
    If $\Conv(\Aas)$ describes only a simplex, we can give a simple analytic expression for the Feynman integral $\mathcal I_{\Aas}$ by means of \cref{cor:MellinBarnesRepresentation}. For the derivative we obtain by \cref{eq:FeynmanJDerivative} and \cref{eq:MellinBarnesCorollary}
    \begin{align}
    	\lim_{\zbarsigma \rightarrow 0} \partial^{(0,u_{\bar\sigma})} \mathcal J_\Aa(\nuu,z) = (-1)^{|u_{\bar\sigma}|} \frac{\Gamma(\Aas^{-1}\nuu + \Aas^{-1}\Aabs u_{\bar\sigma})}{\left|\det (\Aas)\right|} \zsigma^{-\Aas^{-1}\nuu - \Aas^{-1}\Aabs u_{\bar\sigma}} \point \label{eq:proofSeriesB}
    \end{align}
    Comparing the coefficients of \cref{eq:proofSeriesA} and \cref{eq:proofSeriesB} we will get
    \begin{align}
    	\frac{(-1)^{|k|}}{\left|\det(\Aas)\right|} \Gamma\!\left(\Aas^{-1}\nuu + \Aas^{-1}\Aabs k\right) = \frac{C_{\sigma,k}(\nuu)}{\Gamma\!\left(1-\Aas^{-1}\nuu-\Aas^{-1}\Aabs k\right)}\point
    \end{align}
    By means of Euler's reflection formula, $\Aas^{-1}\Aabs(\lambda-k)\in\mathbb Z^{n+1}$ for all $\lambda\in\Lambda_k$ and \cref{lem:HomogenityRelationsForAa} we have the equality 
    \begin{align}
        \frac{\Gamma\!\left(\Aas^{-1} \nuu +\Aas^{-1}\Aabs k\right) \Gamma\!\left(1-\Aas^{-1}\nuu - \Aas^{-1}\Aabs k\right)}{\Gamma\!\left(\Aas^{-1}\nuu + \Aas^{-1}\Aabs\lambda\right)\Gamma\!\left(1-\Aas^{-1}\nuu-\Aas^{-1}\Aabs\lambda\right)} = (-1)^{\Aas^{-1}\Aabs (\lambda-k)} = (-1)^{|\lambda|-|k|} \point
    \end{align}   
    This results in
    \begin{align}
        \mathcal J_\Aa(\nuu,z) = \sum_{\sigma\in \hatT} \sum_{k\in K_\sigma} \frac{z_\sigma^{-\Aas^{-1}\nuu}}{\left|\det (\Aas)\right|} \sum_{\lambda\in\Lambda_k} \frac{(-1)^{|\lambda|}}{\lambda!} \Gamma\!\left(\Aas^{-1}\nuu+\Aas^{-1}\Aabs\lambda\right) z_\sigma^{-\Aas^{-1}\Aabs\lambda} z_{\bar\sigma}^\lambda \point
    \end{align}
    Since the summands have no $k$ dependence any more, we can combine the summation over $k$. As $K_\sigma$ describes a partition of $\mathbb N_0^r$ according to \cref{eq:LambdaPartition} we obtain \cref{eq:SeriesRepresentationFeynmanIntegral}. The common convergence region was shown in \cref{thm:GammaConverge}.
\end{proof}

\Cref{thm:FeynSeries} works for regular unimodular, as well as for regular non-unimodular triangulations and extends the results of \cite{KlausenHypergeometricSeriesRepresentations2019}. Although, there are less series in the non-unimodular case (we obtain a Horn hypergeometric series for every simplex in the triangulation) we will mostly prefer the unimodular case. For those triangulations we will always have $\Aas^{-1}\Aabs\in\mathbb Z^{(n+1)\times r}$, such that the summation indices will appear only as integer combinations in the $\Gamma$-functions. This will simplify the following treatments, e.g.\ the Laurent expansion in dimensional regularization. However, the rational combinations of summation indices appearing in the non-unimodular case are no general obstacle and can be cured by splitting the series into $|\det(\Aas)|$ many series with summation region $\Lambda_k$ for $k\in K_\sigma$.\bigskip

We want to emphasize that many Feynman integrals allow unimodular triangulations. Therefore, unimodular triangulations will be the relevant case in most applications. As we have not found any Feynman integral yet which does not allow any unimodular triangulation, we formulated the conjecture that all Feynman integrals enable unimodular triangulations \cite{KlausenHypergeometricSeriesRepresentations2019}. The only exception are certain momentum constraints, where monomials in the second Symanzik polynomial drop out. E.g.\ the fully massive $1$-loop bubble with $\Uu=x_1+x_2$ and $\Ff = (p^2+m_1^2+m_2^2) x_1 x_2 + m_1^2 x_1^2 + m_2^2 x_2^2$ does not allow an unimodular triangulation for the partial case $p^2 + m_1^2 + m_2^2 = 0$ and $m_1^2 \neq 0$, $m_2^2\neq 0$. As usual (see e.g.\ \cite{SchultkaToricGeometryRegularization2018}), we want to exclude those situations, where external momenta have specific, fixed values. Hence, we treat the external momenta as variables that are set to specific values only after renormalization.

The conjecture that any Feynman integral (without the aforementioned specific momentum constraints) allows at least one unimodular triangulation is neither proven nor disregarded. A first step to answer this question was done in \cite{TellanderCohenMacaulayPropertyFeynman2021} and \cite{WaltherFeynmanGraphsMatroids2022}. In these two articles the slightly weakened conjecture that all Feynman integrals generate Cohen-Macaulay ideals was shown for a wide class of Feynman graphs (see also \cref{thm:CohenMacaulayFeynman}). However, we want to emphasize, that the existence of unimodular triangulations is not necessary for series representations as seen in \cref{thm:FeynSeries}.\bigskip

In order to illustrate those series representations we will present two small examples. In \cref{sec:ExampleSunset} we will show a more extensive example to sketch the scope of this approach.
\begin{example} \label{ex:1loopbubbleB}
    To exemplify the series representation in the unimodular case, we want to continue the \cref{ex:1loopbubbleA} of the $1$-loop bubble graph with one mass corresponding to \cref{fig:bubble1}. The vector configuration for this Feynman graph was given by
    \begin{align}
        \Aa = \begin{pmatrix} 
            1 & 1 & 1 & 1 \\
            1 & 0 & 1 & 2 \\
            0 & 1 & 1 & 0
        \end{pmatrix} \qquad z = (1,1,p^2+m_1^2,m_1^2) \point
    \end{align}
    For the triangulation $\widehat\Tt_1 = \big\{\{1,2,4\},\{2,3,4\}\big\}$ one obtains the series representation
    \begin{align}
        \mathcal J_\Aa(\nuu,z) &= z_1^{-2 \nu_0 + \nu_1 + 2 \nu_2} z_2^{-\nu_2} z_4^{\nu_0 - \nu_1 - \nu_2} \sum_{\lambda\in\mathbb N_0} \frac{1}{\lambda !} \left(-\frac{z_1 z_3}{z_2 z_4}\right)^\lambda \Gamma (\nu_2+\lambda ) \nonumber\\
        &\qquad\qquad \Gamma (2 \nu_0-\nu_1-2 \nu_2-\lambda ) \Gamma (-\nu_0+\nu_1+\nu_2+\lambda ) \nonumber  \\
        &\qquad + z_4^{\nu_2-\nu_0} z_2^{-2 \nu_0+\nu_1+\nu_2} z_3^{2 \nu_0-\nu_1-2 \nu_2} \sum_{\lambda\in\mathbb N_0} \frac{1}{\lambda !} \left(-\frac{z_1 z_3}{z_2 z_4}\right)^\lambda  \Gamma (\nu_0-\nu_2+\lambda ) \nonumber \\
        &\qquad\qquad \Gamma (-2 \nu_0+\nu_1+2 \nu_2-\lambda ) \Gamma (2 \nu_0-\nu_1-\nu_2+\lambda ) \point \label{eq:exBubble1MassSeriesRepr}
    \end{align}
    In the physically relevant case $z = (1,1,p^2+m_1^2,m_1^2)$ and $\nuu = (2-\epsilon,1,1)$ one can easily evaluate the series
    \begin{align}
        &\mathcal I_\Aa\!\left(2-\epsilon,1,1,1,1,m_1^2+p^2,m_1^2\right) = (m_1^2)^{-\epsilon} \frac{\Gamma(1-2\epsilon)\Gamma(\epsilon)}{\Gamma(2-2\epsilon)}\ \HypF{1,\epsilon}{2\epsilon}{\frac{ m_1^2+p^2}{m_1^2}} \nonumber \\
        &\qquad +  \left(m_1^2+p^2\right)^{1-2\epsilon} \left(-p^2\right)^{-1+\epsilon} \Gamma(1-\epsilon) \Gamma(-1+2\epsilon) \label{eq:Example1loopBubble2F1}
    \end{align}
    which agrees with the expected result. The power series representation which can be obtained by the triangulation $\widehat\Tt_2=\big\{\{1,2,3\},\{1,3,4\}\big\}$ will contain $-\frac{z_2 z_4}{z_1z_3}$ as its argument. Hence, depending on the kinematic region we like to consider, it is possibly worthwhile for numerical issues to take a different triangulation. For a region where $0 \ll |p^2|$ we should prefer $\Tt_2$, whereas $\Tt_1$ is convenient when $|p^2| \approx m_1^2$. The series representation for $\Tt_2$, as well as the former result in \cref{ex:1loopbubbleA}, are equivalent to \cref{eq:exBubble1MassSeriesRepr} by transformation rules of the ${_2}F_1$ function \cref{eq:2F1trafoA}.
\end{example}

\begin{example} \label{ex:SeriesRepresentationNonUnimod}
	To illustrate how \cref{thm:FeynSeries} works in the non-unimodular case we want to consider the $1$-loop bubble graph with two masses. Hence, we will have the Lee-Pomeransky polynomial $\Gg=\Uu+\Ff$
	\begin{align}
		\Gg = x_1 + x_2 + \left(p^2 + m_1^2 + m_2^2\right) x_1 x_2 + m_1^2 x_1^2 + m_2^2 x_2^2 \comma
	\end{align}
	where the corresponding data for $\Aa$ and $z$ are 
	 \begin{align}
        \Aa = \begin{pmatrix} 
            1 & 1 & 1 & 1 & 1\\
            1 & 0 & 1 & 2 & 0 \\
            0 & 1 & 1 & 0 & 2
        \end{pmatrix} \qquad z = (1,1,p^2+m_1^2+m_2^2,m_1^2,m_2^2) \point
    \end{align}
    The vector configuration $\Aa$ has five (regular) triangulations, where two of them are non-unimodular. We will choose the triangulation $\hatT = \big\{\{1,2,5\},\{1,4,5\}\big\}$, where the simplex $\sigma_2 = \{1,4,5\}$ has volume $2$. By means of those data we can put together the series representation according to \cref{eq:SeriesRepresentationFeynmanIntegral}
    \begin{align}
    	\mathcal J_\Aa (\nuu,z) &= z_1^{-\nu_1} z_2^{-2\nu_0 + 2\nu_1 + \nu_2} z_5^{\nu_0 - \nu_1 - \nu_2} \sum_{\lambda\in\mathbb N_0^2} \frac{1}{\lambda!} \left(-\frac{z_2 z_3}{z_1 z_5}\right)^{\lambda_1} \!\left(-\frac{z_2^2 z_4}{z_1^2 z_5}\right)^{-\lambda_2} \!\! \Gamma (\nu_1 + \lambda_1 + 2 \lambda_2) \nonumber \\
    	&\quad \Gamma(2 \nu_0 - 2 \nu_1 - \nu_2 -\lambda_1 - 2 \lambda_2) \Gamma(- \nu_0 + \nu_1 + \nu_2 + \lambda_1 + \lambda_2) \nonumber \\
    	& + \frac{1}{2} z_1^{-2\nu_0 + \nu_1 + \nu_2} z_4^{\nu_0 - \nu_1 - \frac{\nu_2}{2}} z_5^{-\frac{\nu_2}{2}} \sum_{\lambda\in\mathbb N_0^2} \frac{1}{\lambda!} \left(-\frac{z_2 \sqrt{z_4}}{z_1 \sqrt{z_5}}\right)^{\lambda_1} \!\left(-\frac{z_3}{\sqrt{z_4 z_5}}\right)^{\lambda_2}  \nonumber \\
    	&\quad \Gamma\!\left(2\nu_0 - \nu_1 - \nu_2 + \lambda_1\right) \Gamma\!\left(\frac{\nu_2}{2} + \frac{\lambda_1}{2} + \frac{\lambda_2}{2} \right) \Gamma\!\left(- \nu_0 + \nu_1 + \frac{\nu_2}{2} - \frac{\lambda_1}{2} + \frac{\lambda_2}{2} \right)   \point
    \end{align}
    For the physical case $z=(1,1,p^2+m_1^2+m_2^2,m_1^2,m_2^2)$, $\nuu=(2-\epsilon,1,1)$ we can rewrite
      \begin{align}
    	\mathcal I_\Aa &= \frac{1}{\Gamma(2-2\epsilon)} \left\{ 
    	 z_5^{-\epsilon} \sum_{\lambda\in\mathbb N_0^2} \frac{(\lambda_1 + 2 \lambda_2)!}{\lambda!}  \Gamma(1-2\epsilon -\lambda_1 - 2 \lambda_2) \Gamma(\epsilon + \lambda_1 + \lambda_2) \frac{\left(-y_1\right)^{\lambda_1}}{\left(-y_2\right)^{\lambda_2}} \right. \nonumber \\
    	 & + \left. \frac{z_4^{\frac{1}{2}-\epsilon}  }{2 \sqrt{z_5}} \sum_{\lambda\in\mathbb N_0^2} \frac{(-1)^\lambda}{\lambda!} \Gamma\!\left(2-2\epsilon + \lambda_1\right) \Gamma\!\left(\frac{1+\lambda_1+\lambda_2}{2}\right) \Gamma\!\left(-\frac{1}{2} + \epsilon + \frac{\lambda_2 - \lambda_1}{2} \right)  y_2^{\frac{\lambda_1}{2}} y_3^{\frac{\lambda_2}{2}} \right\}
    \end{align}
    with $y_1 = \frac{z_3}{z_5} = \frac{p^2+m_1^2+m_2^2}{m_2^2}$, $y_2 = \frac{z_4}{z_5} = \frac{m_1^2}{m_2^2}$ and $y_3 = \frac{z_3^2}{z_4 z_5} = \frac{(p^2+m_1^2+m_2^2)^2}{m_1^2m_2^2}$. In order to avoid half-integer summation indices in the second series one can split this series into two parts as indicated by \cref{eq:representativesK}. Thus, we will have $\mathbb Z\Aa_{\sigma_2} = (\mathbb Z,\mathbb Z,2\mathbb Z)^\top$ for the ideal spanned by $\Aa_{\sigma_2}$ and therefore $\normalslant{\mathbb Z\Aa}{\mathbb Z\Aa_{\sigma_2}} = \left\{ [0,0,0]^\top ; [1,0,1]^\top\right\}$ which results in $K_{\sigma_2} = \left\{ (0,0)^\top ; (1,0)^\top \right\}$ as a possible choice. Hence, we have to split the second series into the two summation regions $\Lambda_{k_1} = \begin{Bsmallmatrix} 2\mathbb N_0 \\ 2\mathbb N_0 \end{Bsmallmatrix} \cup \begin{Bsmallmatrix} 2\mathbb N_0 + 1\\ 2\mathbb N_0 + 1\end{Bsmallmatrix}$ and  $\Lambda_{k_2} = \begin{Bsmallmatrix} 2\mathbb N_0 + 1\\ 2\mathbb N_0 \end{Bsmallmatrix} \cup \begin{Bsmallmatrix} 2\mathbb N_0\\ 2\mathbb N_0 + 1\end{Bsmallmatrix}$. This is a partition of $\mathbb N_0^2$ which transforms the half-integer combinations of summation indices into integer combinations. 
\end{example}

Thus, we found series representations for generalized Feynman integrals for every regular triangulation of $\Newt(\Gg)$. We want to recall from \cref{ssec:TriangulationsPolyhedra} that every point/vector configuration has at least one regular triangulation. Typically, a Feynman graph admits many different possibilities to triangulate its corresponding Newton polytope. Therefore, one usually obtains a large number of those series representations. We included the number of triangulations for certain Feynman graphs in the appendix in \cref{tab:characteristics}. It is not surprising that there are a lot of different series representations, since hypergeometric functions satisfy many transformation formulas and can be converted to other hypergeometric functions. Therefore, for the sake of numerical computations one can choose a series representation, which converges fast for the given kinematics, such that we can evaluate the Feynman integral numerically by considering the first summands of every series. Thereby, one can directly construct a convenient triangulation by means of the height (see \cref{ssec:TriangulationsPolyhedra}).

\begin{lemma} \label{lem:heightEffectiveVars}
	Let $\omega=-\ln |z| = (-\ln |z_1|,\ldots,-\ln |z_N|)$ be a height vector of a regular triangulation $\Tt$ of an acyclic vector configuration $\Aa$. Then the effective variables $y=\zbarsigma \zsigma^{-\Aas^{-1}\Aabs}\in\mathbb C^r$ for every simplex $\sigma \in\hatT$ will satisfy $|y_j|<1$ for $j=1,\ldots,r$.
\end{lemma}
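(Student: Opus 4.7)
The plan is to translate the statement $|y_j| < 1$ into the language of heights and inequalities characterizing regular triangulations, and then invoke the defining property of $\sigma$ as a maximal cell of $\Ss(\Aa,\omega)$. The Gale dual $\Bb = \begin{psmallmatrix} -\Aas^{-1}\Aabs \\ \mathbbm{1}_r \end{psmallmatrix}$ associated to the simplex $\sigma$ has already appeared throughout the chapter, and the effective variables are nothing but $y_j = \prod_{i=1}^N z_i^{b_{ij}}$. Taking the complex logarithm componentwise (in absolute value) converts the multiplicative statement to an additive one: $\ln|y_j| = \sum_{i=1}^N b_{ij}\ln|z_i| = -\omega \Bb(\sigma)_j$, where I use the prescribed $\omega = -\ln|z|$.

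The core step is then the following. Since $\sigma \in \hatT$ is a maximal cell of the regular triangulation $\Tt = \Ss(\Aa,\omega)$, the characterization of regular subdivisions recalled in \cref{eq:regTriangVector1}--\cref{eq:regTriangVector2} supplies a linear functional $c \in (\mathbb R^{n+1})^\vee$ with $c\cdot a^{(i)} = \omega_i$ for $i \in \sigma$ and $c\cdot a^{(i)} < \omega_i$ for $i \notin \sigma$. The first equality determines $c = \omega_\sigma \Aas^{-1}$ (which is well-defined because $\sigma$ is a full-dimensional simplex, so $\Aas$ is invertible), and substituting into the strict inequalities for $i \in \bar\sigma$ yields the componentwise relation $\omega_\sigma \Aas^{-1}\Aabs < \omega_{\bar\sigma}$, i.e., $\omega \Bb(\sigma) > 0$ in each of the $r$ coordinates indexed by $\bar\sigma$.

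Combining the two paragraphs gives $\ln|y_j| = -(\omega_{\bar\sigma} - \omega_\sigma \Aas^{-1}\Aabs)_j < 0$ for every $j = 1,\ldots,r$, which is exactly $|y_j| < 1$ as claimed. No serious obstacle is expected: the argument is essentially a bookkeeping identification of the Gale-dual projection $\beta(\omega) = \omega \Bb$ with the slack of the affine functional $c$ witnessing that $\sigma$ is a maximal cell. The only small care points are keeping the row/column conventions for $\omega_\sigma$ and $c$ consistent, and checking that a strict inequality (rather than a weak one) is really provided by \cref{eq:regTriangVector2} --- which is precisely why we needed $\sigma$ to lie in $\hatT$ rather than in a coarser face of the subdivision.
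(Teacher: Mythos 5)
Your proof is correct and follows essentially the same route as the paper: the paper simply cites the condition $\omega\,\Bb(\sigma)>0$ from \cref{eq:subdivisionCondGale} directly, whereas you re-derive it by eliminating the witnessing functional $c$ from \cref{eq:regTriangVector1}--\cref{eq:regTriangVector2}, and then conclude from $\ln|y_j| = -(\omega\,\Bb(\sigma))_j < 0$ exactly as the paper does.
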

\begin{proof}
	According to \cref{eq:subdivisionCondGaleBeta} we have 
	\begin{align}
		\sigma \in \Ss\!\left(\Aa,-\ln |z|\right) \Leftrightarrow - \ln |z| \cdot \Bb(\sigma) > 0
	\end{align}
	with $\Bb(\sigma) = \begin{psmallmatrix} -\Aas^{-1}\Aabs \\ \mathbbm 1 \end{psmallmatrix}$ a Gale dual of $\Aa$ constructed from the simplex $\sigma$ as before. Since $y = z^{\Bb(\sigma)}$ we arrive at the assertion.
\end{proof}

Note that height vectors $\omega$ will produce in general only a regular subdivision. Only for generic heights $\omega$ the regular subdivision $\Ss(\Aa,\omega)$ will be a triangulation. Thus, for specific values $z$ we will not always find an optimal triangulation. It will not be guaranteed that for certain values of $z$, we will find a triangulation which results in power series with all $|y_j|<1$. In this case the series have to be continued analytically by means of hypergeometric transformation rules, which we consider in the subsequent \cref{sec:AnalyticContinuation}. Additionally, it is often worthwhile to use classical series accelerations as e.g.\ Euler's transformation to speed up the numerical calculation of those series. \Cref{sec:numerics} will take a closer look on numerical issues.

However, \cref{lem:heightEffectiveVars} will give us a simple guideline to construct appropriate triangulations, and we can perturb $\omega=-\ln|z|$ slightly in order to generate an almost optimal triangulation, whenever $-\ln|z|$ is not generic enough.


\section{Analytic continuation of series representations} \label{sec:AnalyticContinuation}
%
%

Up to this point, most of the theorems were statements about the generalized Feynman integral, i.e.\ we assumed the coefficients of Symanzik polynomials $z$ to be generic. When we want to consider non-generic values of $z$, as necessary in a physical application, we will have two possible strategies. The first option is to specify the $\Aa$-hypergeometric system, calculating then the restriction to a coordinate subspace and find solutions of this restricted system. In principle, it is known how to restrict $\Aa$-hypergeometric systems to coordinate subspaces and one can find a comprehensive description\footnote{Typically, one considers restrictions where variables $z_j=0$ vanish. This is no general obstacle as one can change the $D$-ideal by a convenient coordinate transformation $z_j\mapsto z_j-1$.} in \cite[sec. 5.2]{SaitoGrobnerDeformationsHypergeometric2000}. However, it can be algorithmically very hard to calculate those restriction ideals and for practical applications these algorithms seem hopelessly slow. Another drawback of that strategy is that we will lose the information about the explicit form of the meromorphic functions $C_{\sigma,k}(\nuu)$ from \cref{thm:FeynSeries}. We refer to \cite{WaltherAlgorithmicComputationRham1998,OakuLocalizationAlgorithmModules1999,OakuAlgorithmsDmodulesRestriction1998} for algorithms as well as the included algorithms in \softwareName{Singular} \cite{GreuelSINGULARComputerAlgebra2009}.

A similar strategy was also used in \cite{KlemmLoopBananaAmplitude2020} for the computation of marginal banana graphs. Instead of computing the restriction ideal by the above-mentioned algorithms, the authors made ansatzes for the restriction ideal based on the first terms of the series solutions of the generic case.\bigskip

The second option is to calculate solutions of the generic $\Aa$-hypergeometric system and specify the $\Aa$-hypergeometric functions afterwards by analytic continuation. This way is substantially faster, as one can use well known relations of hypergeometric functions for the implementation of the analytic continuation. Hence, we will consider the series representation due to \cref{thm:FeynSeries} and calculate the limit from generic $z$ to the physically relevant values as specified by the Symanzik polynomials. In particular, we will set the variables corresponding to the first Symanzik polynomial $\Uu$ equal to $1$ and certain variables related to the second Symanzik polynomial $\Ff$ may get the same value. In this limit the convergence behaviour of the $\Gamma$-series can be changed. Consider a region $D\subseteq \mathbb C^{n+1}$ where the Feynman integral \cref{eq:LeePomeranskyRepresentation} has no poles for $\nuu\in D$ and assume that the masses and momenta of the Feynman integral do not correspond to a Landau singularity (see \cref{ch:singularities}). Due to \cref{thm:meromorphicContinuation} the Feynman integral then has an analytic continuation and hence also the linear combination of $\Gamma$-series \cref{eq:LinearCombinationGammaSeries} has a (finite) analytic continuation. Thus, in the limit from generic to physical values there can arise only two issues: a) every series converges separately, but they do not have a common convergence region anymore or b) certain $\Gamma$-series diverge, but the linear combination is still finite. In the first case a) the convergence criteria for the effective variables of the $\Gamma$-series $y = z_{\bar\sigma} z_\sigma^{-\Aas^{-1} \Aabs}$ contradict each other for the different simplices $\sigma\in \hatT$ due to the specification of the variables $z$. In the second case b) certain variables $y_j$ become constant (usually equals $1$) after the limit to physical values of $z$, which can be outside of the convergence region. In practice, problem a) can be usually avoided by the choice of an appropriate triangulation (see \cref{lem:heightEffectiveVars}). However, in the following we will discuss this case equally since it is to be solved analogously to b).

Note that the analytic continuation with respect to the variables $z$ of the Feynman integral with $\nuu\in D$ is unique as long as the variables $z$ are in the Euclidean region $\Re(z_j)>0$. Therefore, we will assume Euclidean kinematics in this chapter in order that also the $\Gamma$-series have a unique analytic continuation. Outside of the Euclidean region, the Feynman integral will be potentially multi-valued, which will be discussed in \cref{ch:singularities}. 

Fortunately, analytic continuations of $\Gamma$-series can be calculated explicitly by means of well-known transformation formulas of standard hypergeometric functions. We can always arrange the $\Gamma$-series in an arbitrary way from inner to outer power series. Therefore, we can sort these Horn hypergeometric series such that all series with certain convergence issues appear as the innermost series and focus on them separately. Note that every of these series parts is itself a Horn hypergeometric series. Let us start by discussing the case in which this inner series is a one-dimensional power series in one variable. We can distinguish then between the situation where we have to transform $y_j \mapsto \frac{1}{y_j}$ to solve issues coming from case a) or we have to transform $y_j \mapsto y_j-1$ to solve the problem b). Hence, we have to look at transformation formulas of hypergeometric functions of those types. Relationships between hypergeometric functions have always played an important role in the Feynman calculus, and we refer exemplarily to \cite{KniehlFindingNewRelationships2012}.\bigskip

We will start with the simplest, but also most often appearing case. Hence, consider that the innermost series where convergence issues appear is of the type of a ${_2}F_1$ hypergeometric function. Those transformation formulas are well known \cite[sec. 15.8]{OlverNISTHandbookMathematical2010}
\begin{align}
    \HypF{a,b}{c}{t} &= \frac{\Gamma(c)\Gamma(b-a)}{\Gamma(b)\Gamma(c-a)} (-t)^{-a} \HypF{a,a-c+1}{a-b+1}{\frac{1}{t}} \nonumber \\
    & + \frac{\Gamma(c)\Gamma(a-b)}{\Gamma(a)\Gamma(c-b)} (-t)^{-b} \HypF{b,b-c+1}{b-a+1}{\frac{1}{t}} \label{eq:2F1trafoA} \\
    \HypF{a,b}{c}{t} &= \frac{\Gamma(c)\Gamma(c-a-b)}{\Gamma(c-a)\Gamma(c-b)} \HypF{a,b}{a+b-c+1}{1-t} \nonumber \\
    & + (1-t)^{c-a-b} \frac{\Gamma(c)\Gamma(a+b-c)}{\Gamma(a)\Gamma(b)} \HypF{c-a,c-b}{c-a-b+1}{1-t} \point\label{eq:2F1trafoB}
\end{align}
Note that the analytic continuations by means of those formulas are again of Horn hypergeometric type, as the prefactors of \cref{eq:2F1trafoA} and \cref{eq:2F1trafoB} consist in $\Gamma$-functions.

We will illustrate the application of those transformation formulas with an example. 
\begin{example}
    For the $2$-loop sunset graph with two different masses in dimensional regularization, inter alia there appears the $\Gamma$-series
    \begin{align}
        \phi_2 =& \sum_{k\in\mathbb N_0^4} (1-\epsilon)_{k_3+k_4} (\epsilon)_{k_1+2 k_2+k_3} (\epsilon -1)_{-k_1-k_2+k_4} (2-2 \epsilon)_{k_1-k_3-k_4} \nonumber \\
        &\qquad \frac{1}{k_1!\, k_2!\, k_3!\, k_4!}  \left(-\frac{z_1 z_6}{z_5 z_2}\right)^{k_1} \!\left(-\frac{z_4 z_6}{z_5^2}\right)^{k_2} \! \left(-\frac{z_2 z_7}{z_3 z_5}\right)^{k_3} \! \left(-\frac{z_2 z_8}{z_3 z_6}\right)^{k_4} \label{eq:example2MassesSunsetHypTrafoPhi2}
    \end{align}
    where one has to consider the limit $(z_1,z_2,z_3,z_4,z_5,z_6,z_7,z_8)\rightarrow (1,1,1,m_2^2,m_1^2+m_2^2+p_1^2,m_1^2,m_2^2,m_1^2)$. For simplicity, we dropped prefactors in \cref{eq:example2MassesSunsetHypTrafoPhi2} and $(a)_n = \Gamma(a+n)/\Gamma(a)$ denotes the Pochhammer symbol as before. In this limit we will have  $\left(-\frac{z_2 z_8}{z_3 z_6}\right)^{k_4} \rightarrow \left(-1\right)^{k_4}$, which is not in the convergence region for small values of $\epsilon>0$ any more. Therefore, we evaluate the $k_4$ series carefully and write
    \begin{align}
        \phi_2 &= \lim_{t\rightarrow 1} \sum_{(k_1,k_2,k_3)\in\mathbb N_0^3} (1-\epsilon)_{k_3}  (\epsilon-1)_{-k_1-k_2} (2-2 \epsilon)_{k_1-k_3} (\epsilon)_{k_1+2 k_2+k_3} \frac{1}{k_1!\, k_2!\, k_3!} \nonumber \\
        & \qquad\left(-y_1\right)^{k_1} \left(-y_1 y_2\right)^{k_2} \left(-y_2\right)^{k_3}  \HypF{-\epsilon+k_3+1,\epsilon-k_1-k_2-1}{2 \epsilon-k_1+k_3-1}{t}
    \end{align}
    where $y_i=\frac{ m_i^2}{m_1^2+m_2^2+p_1^2}$. With the transformation formula \cref{eq:2F1trafoB} for the ${_2}F_1$ hypergeometric function, one can split the series in a convergent and a divergent part
    \begin{align}
        &\phi_2 = \!\sum_{(k_1,k_2,k_3)\in\mathbb N_0^3} \! \frac{\Gamma (k_2+2 \epsilon -1) \Gamma (-k_1+k_3+2 \epsilon -1) }{ \Gamma (-k_1+3 \epsilon -2) \Gamma (k_2+k_3+\epsilon )} (1-\epsilon )_{k_3} (\epsilon -1)_{-k_1-k_2} (2-2 \epsilon )_{k_1-k_3} \nonumber \\
        & \quad (\epsilon )_{k_1+2 k_2+k_3} \frac{(-y_1)^{k_1} (-y_2)^{k_3} (-y_1 y_2)^{k_2}}{k_1!\, k_2!\, k_3!}  + \lim_{t\rightarrow 1} \sum_{(k_1,k_2,k_3,k_4)\in\mathbb N_0^4} \frac{(1-t)^{k_2+k_4+2 \epsilon -1} }{k_1!\, k_2!\, k_3!\, k_4! } \nonumber \\
        & \quad \frac{\Gamma (1 - 2\epsilon - k_2) \Gamma (2\epsilon - 1 - k_1 + k_3) \Gamma (\epsilon + k_2 + k_3 + k_4) \Gamma (2\epsilon + k_2) \Gamma (3\epsilon - 2 - k_1 + k_4)}{ \Gamma (1 - \epsilon + k_3) \Gamma (-1 + \epsilon - k_1 - k_2) \Gamma (\epsilon + k_2 + k_3) \Gamma(2\epsilon + k_2 + k_4)\Gamma (- 2 + 3\epsilon - k_1)} \nonumber \\
        & \quad (\epsilon )_{k_1+2 k_2+k_3}  (1-\epsilon )_{k_3} (\epsilon -1)_{-k_1-k_2} (2-2 \epsilon )_{k_1-k_3} (-y_1)^{k_1} (-y_2)^{k_3} (-y_1 y_2)^{k_2} \nonumber  \allowdisplaybreaks \\
        & =\sum_{(k_1,k_2,k_3)\in\mathbb N_0^3}  \frac{\Gamma (k_2+2 \epsilon -1) \Gamma (-k_1+k_3+2 \epsilon -1) }{ \Gamma (-k_1+3 \epsilon -2) \Gamma (k_2+k_3+\epsilon )} (1-\epsilon )_{k_3} (\epsilon -1)_{-k_1-k_2} (2-2 \epsilon )_{k_1-k_3} \nonumber \\
        & \quad (\epsilon )_{k_1+2 k_2+k_3}  \frac{(-y_1)^{k_1} (-y_2)^{k_3} (-y_1 y_2)^{k_2}}{k_1!\, k_2!\, k_3!}  + \lim_{t\rightarrow 1} (1-t)^{2 \epsilon -1} \!\sum_{(k_1,k_3)\in\mathbb N_0^2}\! \frac{  (-y_1)^{k_1} (-y_2)^{k_3}}{k_1!\, k_3!} \nonumber \\
        & \quad \frac{\Gamma (-2 \epsilon +1)  \Gamma (-k_1+k_3+2 \epsilon -1)}{ \Gamma (1-\epsilon) \Gamma (\epsilon -1)} (\epsilon )_{k_1+k_3}   (2-2 \epsilon )_{k_1-k_3} \point
    \end{align}
    Note that all Horn hypergeometric functions at zero are trivial to evaluate, and we have $\HypF{a,b}{c}{0}=1$. Comparing the divergent part with the other $\Gamma$-series, which occur in the calculation of the sunset graph with two masses, one can find another divergent series which exactly cancels this divergence. This cancellation always has to happen, since the linear combination has to be finite.
\end{example}

Although, those identities of the ${_2}F_1$ functions will fix the most issues appearing in practice, there potentially appear more complicated situations. The next more general type of innermost series are the ${}_{p+1}F_p$ hypergeometric functions. Those transformation formulas were considered in \cite{BuhringGeneralizedHypergeometricFunctions1992}. We have the recursion
\begin{align}
	\HypFpq{p+1}{p}{a_1,\ldots,a_{p+1}}{b_1,\ldots,b_p}{t} = \sum_{k=0}^\infty \tilde A_k^{(p)}(a,b)\ \HypF{a_1,a_2}{|b|-|a|+a_1+a_2+k}{t} \label{eq:pFqReduction}
\end{align}
where $\tilde A_k^{(p)}(a,b)$ are rational functions in $a$ and $b$. Expressions of those functions $\tilde A_k^{(p)}(a,b)$ can be found in \cite[sec. 2]{BuhringGeneralizedHypergeometricFunctions1992}. Hence, the case where we need transformation formulas of ${}_{p+1}F_p$ functions can be reduced to the ${_2}F_1$ case.\bigskip

As a next step we consider any Horn hypergeometric function depending on one variable, where the summation index appears only as an integer multiple in the $\Gamma$-functions. By the identities of Pochhammer symbols
\begin{align}
	(a)_{-n} &= (-1)^n \frac{1}{(1-a)_n} \qquad \text{for } n\in\mathbb Z \label{eq:PochhammerIdentityA} \\ 
	(a)_{m+n} &= (a)_m (a+m)_n \label{eq:PochhammerIdentityB} \\
	(a)_{mn} &= m^{mn} \prod_{j=0}^{m-1} \left(\frac{a+j}{m}\right)_n \qquad \text{for } m\in\mathbb Z_{>0} \label{eq:PochhammerIdentityC}
\end{align}
one can always bring that Horn hypergeometric series in the form of an ${}_pF_q$ function. Due to \cref{lem:HomogenityRelationsForAa}, in particular equation \cref{eq:AA1a}, there will only appear functions with $p=q+1$, which relates back to the previous discussed situation.

As the most general case, we will expect Horn hypergeometric functions in many variables. Again, we will assume that summation indices only appear as linear integer combinations, which is always the case for unimodular triangulations and which can be achieved for non-unimodular triangulations as well by a convenient splitting of the series. The analytic continuation of those functions can be accomplished iteratively, as the analytically continuated functions will be again from hypergeometric type. This is due to the rationality of the coefficients in \cref{eq:2F1trafoA}, \cref{eq:2F1trafoB} and \cref{eq:pFqReduction}.\bigskip

Therefore, one can derive convergent series representations also for physical Feynman integrals by considering convenient transformation formulas of hypergeometric functions. Since those transformation formulas consist in a simple replacement of $\Gamma$-functions, the analytic continuation can be done algorithmically efficient.\bigskip

We want to remark that the limit from generic to specific values of $z$ may reduce the dimension of $\Sol(H_\Aa(\nuu))$. Hence, it may appear that $\Gamma$-series become linearly dependent after such a limit (as e.g.\ \cref{sec:ExampleSunset}). However, this will be no general obstacle and shows only that the series representations are not necessarily optimal in the sense that they are not always the simplest possible series representation of the given Feynman integral. Or in other words, the holonomic rank of the restriction ideal may differ from the one of the original ideal. We refer to \cite{BitounFeynmanIntegralRelations2019}, where a similar decreasing of the dimension of a solution space was described, which also occurred in the specification from generic to physical values of $z$.


\section{Laurent expansion of hypergeometric series} \label{sec:epsilonExpansion}


As outlined in \cref{sec:DimAnaReg}, one has to renormalize Feynman integrals to fix certain ambiguities. In addition, renormalization removes also the divergences in Feynman integrals. Hence, in this process it is necessary for the most renormalization schemes to handle the singularities of the Feynman integrals by regularization. Therefore, in the widely spread dimensional and analytic regularization (see \cref{sec:DimAnaReg}) one is rather interested in the Laurent expansion of a Feynman integral around certain integer values of $\nuu$ instead in the Feynman integral itself. Namely, in dimensional regularization we assume all indices to be integer values and consider the spacetime dimension to be close to an integer, i.e.\ we set $\nu_i\in\mathbb Z_{>0}$ for $i=1,\ldots,n$, $\nu_0=\dhalf = \frac{D}{2} - \epsilon$ with $\frac{D}{2}\in\mathbb Z_{>0}$ (usually $D=4$) and expand around $\epsilon =0$. In analytic regularization one would instead fix $\nu_0 = \frac{D}{2}\in\mathbb Z_{>0}$ and introduce a single parameter $\epsilon$ controlling the distance of $\nu\in(\mathbb C\setminus\mathbb Z)^n$ to integer values.\bigskip

Due to the \cref{thm:meromorphicContinuation} one can relate this task to the Taylor expansion of the hypergeometric series representation. Thus, one has simply to differentiate the Horn hypergeometric series with respect to their parameters $\nuu$. As pointed out in \cite{BytevDerivativesHorntypeHypergeometric2017}, those derivatives of Horn hypergeometric series are again Horn hypergeometric series of higher degree. By the identities of Pochhammer symbols \cref{eq:PochhammerIdentityA}, \cref{eq:PochhammerIdentityB} and \cref{eq:PochhammerIdentityC} one can reduce all\footnote{We will assume for simplicity that all summation indices in \cref{eq:SeriesRepresentationFeynmanIntegral} appear only as integer combinations. This is true for all unimodular triangulations due to $\Aas^{-1}\Aabs\in\mathbb Z^{(n+1)\times r}$. For non-unimodular triangulations one can always find a partition of the summation region to arrive at this situation, similar to \cref{ex:SeriesRepresentationNonUnimod}.} derivatives to two cases \cite{BytevDerivativesHorntypeHypergeometric2017}
\begin{align}
    \frac{\partial}{\partial \alpha} \sum_{n=0}^\infty f(n) (\alpha)_n t^n &= t \sum_{k=0}^\infty\sum_{n=0}^\infty f(n+k+1) \frac{(\alpha+1)_{n+k} (\alpha)_k}{(\alpha+1)_k}\ t^{n+k} \label{eq:HornDerivativeA} \\
    \frac{\partial}{\partial \alpha} \sum_{n=0}^\infty f(n) (\alpha)_{-n} t^n &= -t \sum_{k=0}^\infty\sum_{n=0}^\infty f(n+k+1) \frac{(\alpha)_{-n-k-1} (\alpha)_{-k-1}}{(\alpha)_{-k}}\ t^{n+k} \point \label{eq:HornDerivativeB}
\end{align}
Thus, Horn hypergeometric functions do not only appear as series representations of Feynman integrals, but also in every coefficient of the Laurent expansion of those Feynman integrals. Therefore, the class of Horn hypergeometric functions is sufficient to describe all Feynman integrals as well as their Laurent expansions. \bigskip

However, it seems much more efficient to expand the $\Gamma$-functions in the series representation around $\epsilon = 0$ instead of a determination of the derivatives by \cref{eq:HornDerivativeA} and \cref{eq:HornDerivativeB}. For this purpose we want to introduce the (unsigned) \textit{Stirling numbers of the first kind}. Those numbers $\StirlingFirstSmall{n}{k}$ \glsadd{StirlingFirst}\glsunset{StirlingFirst} count the permutations of $\{1,\ldots,n\}$ with precisely $k$ cycles. We can define them recursively by
\begin{align}
	\StirlingFirst{n+1}{k} = n \StirlingFirst{n}{k} + \StirlingFirst{n}{k-1} \quad\text{with}\quad \StirlingFirst{0}{0} = 1, \quad \StirlingFirst{n}{0} = \StirlingFirst{0}{k} = 0 \quad \text{for } n,k\in\mathbb N_{>0} \point \label{eq:StirlingFirstRecursion}
\end{align}
Stirling numbers are related to many other functions. E.g.\ they can be expressed by the $Z$-sums invented in \cite{MochNestedSumsExpansion2002}, which are in this case also known as Euler-Zagier sums appearing in the study of multiple zeta values \cite{ZagierValuesZetaFunctions1994}. We will consider this relation more in detail in \cref{sec:ManipulationSeries}. Furthermore, the generating function of Stirling numbers are the Pochhammer symbols
\begin{align}
	(\alpha)_n = \prod_{j=0}^{n-1} (\alpha+j) = \sum_{k=0}^n \StirlingFirst{n}{k} \alpha^k \label{eq:StirlingPochhammer} \point
\end{align}
In particular, we have for $n\in\mathbb N_{>0}$
\begin{align}
	&\StirlingFirst{n}{0} = 0\, , \quad \StirlingFirst{n}{1} = (n-1)!\, , \quad \StirlingFirst{n}{2} = (n-1)!\, H_{n-1}\, , \quad \StirlingFirst{n}{3} = \frac{(n-1)!}{2!}  \left(H_{n-1}^2 - H_{n-1}^{(2)}\right) \label{eq:StirlingPartPos}
\end{align}
where $\gls{harmonicNumber} := \sum_{i=1}^n \frac{1}{i^k}$ denotes the $k$-th harmonic number and $H_{n} := H_n^{(1)}$. We collected further relations in \cref{sec:StirlingNumbers}. Those relations can be derived efficiently by a recursion with respect to $k$ \cite{AdamchikStirlingNumbersEuler1997}
\begin{align}
	&\StirlingFirst{n}{k} = \frac{(n-1)!}{(k-1)!}\, w(n,k-1) \nonumber \\
	&\text{with } w(n,0) = 1 \, , \quad w(n,k) = \sum_{i=0}^{k-1} (1-k)_i\, H_{n-1}^{(i+1)}\, w(n,k-1-i) \point \label{eq:StirlingNumbersRecursionPos}
\end{align}
By means of \cref{eq:StirlingPochhammer} one can extend the Stirling numbers also to negative $n$. In this case we will have \cite{BransonExtensionStirlingNumbers1996, KnuthTwoNotesNotation1992}
\begin{align}
	\StirlingFirst{-n}{k} = \frac{(-1)^n}{n!} \sum_{i=1}^n \frac{(-1)^{i+1}}{i^k} \begin{pmatrix} n \\ i \end{pmatrix} \quad\text{for } n\in\mathbb N_{\geq 0}, k\in\mathbb N_{>0} \label{eq:StirlingFirstNegativeDef}
\end{align}
and in particular
\begin{align}
	&\StirlingFirst{-n}{0} = \frac{(-1)^n}{n!}\, , \quad \StirlingFirst{-n}{1} = \frac{(-1)^n}{n!} H_n\, , \quad \StirlingFirst{-n}{2} = \frac{(-1)^n}{2!\,n!} \left(H_{n}^2 + H_{n}^{(2)}\right) \label{eq:StirlingPartNeg} \point
\end{align}
This list will be continued in \cref{sec:StirlingNumbers}. The Stirling numbers with negative and positive integers $n$ are the natural analytic continuations of each other. Thus, one can show that definition \cref{eq:StirlingFirstNegativeDef} satisfies the recursion \cref{eq:StirlingFirstRecursion}. Hence, when writing \cref{eq:StirlingPartNeg} by means of the more general functions $n! = \Gamma(n+1)$, $H_n=\psi^{(0)} (n+1) + \gamma$ and $H_n^{(r)} = \zeta(r) - \frac{(-1)^r}{(r-1)!} \psi^{(r-1)} (n+1)$ for $r>1$ we will obtain \cref{eq:StirlingPartPos} in the corresponding limits. By these replacements we can also continue the Stirling numbers to complex arguments. However, such a generalization is not necessary in the following approach.\bigskip


By those definitions we can expand the $\Gamma$-functions with respect to a small parameter $|\epsilon|\ll 1$ appearing in the (unimodular) series representation \cref{eq:SeriesRepresentationFeynmanIntegral} by
\begin{align}
	\Gamma(n+\epsilon) = \Gamma(\epsilon) \left( \StirlingFirst{n}{0} + \StirlingFirst{n}{1} \epsilon + \StirlingFirst{n}{2} \epsilon^2 + \StirlingFirst{n}{3} \epsilon^3 + \ldots \right) \qquad n\in\mathbb Z \label{eq:GammaSeriesExpansionStirling}
\end{align}
which works for positive as well as for negative integers $n$. Equation \cref{eq:GammaSeriesExpansionStirling} is a direct consequence of \cref{eq:StirlingPochhammer}. A similar expansion in terms of $S$-sums and $Z$-sums was suggested in \cite{MochNestedSumsExpansion2002}, where the cases for $n\in\mathbb Z_{>0}$ and $n\in \mathbb Z_{<0}$ were treated separately. The expansion \cref{eq:GammaSeriesExpansionStirling} is especially appropriate for a numerical evaluation, as we have not to distinguish between positive and negative cases in a symbolic preprocessing. We will consider both aspects below: a numerical evaluation of those series in \cref{sec:numerics} and a symbolical evaluation by means of $Z$-sums in \cref{sec:ManipulationSeries}. In \cref{sec:StirlingNumbers} we collected further useful facts about Stirling numbers.


\section{Manipulation of series} \label{sec:ManipulationSeries}


In this section we want to collect several aspects, how to treat multivariate series as appearing in \cref{thm:FeynSeries}. However, this overview is not intended to be exhaustive, and we will only give a small amount of useful relations and references for further information and algorithms. In particular, the methods presented below do not claim to be an efficient rewriting of the Horn type series into multiple polylogarithms and related functions in more complex examples.\bigskip

As seen in the previous \cref{sec:epsilonExpansion}, Stirling numbers $\StirlingFirstSmall{n}{k}$ behave slightly different for positive and negative $n\in\mathbb Z$. Therefore, it will be necessary for a symbolic processing of the Laurent coefficients of the Feynman integral to determine the signs in the Stirling numbers. In order to avoid fixed integer shifts in the argument of Stirling numbers it is recommendable to normalize the $\Gamma$-functions before introducing the Stirling numbers. Thus, we will shift $\Gamma$-functions by means of $\Gamma(n+c) = (n)_c \Gamma(n)$ before applying \cref{eq:GammaSeriesExpansionStirling}. Therefore, we will be left with Stirling numbers depending only on a linear combination of summation indices. In order to generate definite signs of these linear combinations, we can use series rearrangement techniques. To determine for example the signs in a Stirling number $\StirlingFirstSmall{i-j}{k}$ we can split a double series in all its cases $i<j$, $i>j$ and $i=j$, i.e.\
\begin{align}
	\sum_{j=a}^n \sum_{i=a}^n f(i,j) = \sum_{j=a}^n \sum_{i=a}^{j-1} f(i,j) + \sum_{i=a}^n \sum_{j=a}^{i-1} f(i,j) + \sum_{j=a}^n f(j,j) \label{eq:splittingNested}
\end{align}
which holds for $n\in\mathbb N \cup \{\infty\}$. The structure of relations like \cref{eq:splittingNested} is also known as \textit{quasi-shuffle product} \cite{WeinzierlFeynmanIntegrals2022}. Especially, for infinite sums we can also rearrange the summands instead of a splitting of the summation region, which results in the following identities \cite{SrivastavaTreatiseGeneratingFunctions1984}
\begin{align}
	\sum_{j=0}^\infty \sum_{i=0}^\infty f(i,j) &= \sum_{j=0}^\infty \sum_{i=0}^j f(i,j-i) \\
	\sum_{j=0}^\infty \sum_{i=0}^j f(i,j) &= \sum_{j=0}^\infty \sum_{i=0}^\infty f(i,j+i) = \sum_{i=0}^\infty \sum_{j=0}^i f(i-j,i) \point
\end{align}
More complicated situations of indetermined signs in Stirling numbers can be solved by applying \cite{SrivastavaTreatiseGeneratingFunctions1984}
\begin{align}
	\sum_{j=0}^\infty \sum_{i=0}^\infty f(i,j) &= \sum_{j=0}^\infty \sum_{i=0}^{\left\lfloor\frac{j}{m}\right\rfloor} f(i,j-mi)
\end{align}
where $\gls{floor}$ denotes the greatest integer less or equal $x$ with $m\in\mathbb N_{>0}$. All those identities can be used recursively to disentangle more involved combinations of indetermined signs in Stirling numbers. \bigskip

It is often also worthwhile to split the first summands of each series off, since they behave different from the remaining series due to the positive Stirling numbers \cref{eq:StirlingPartPos}. This can be done by an application of
\begin{align}
	\sum_{N\geq i_k \geq \ldots \geq i_1 \geq 0} \hspace{-1em} f(i_1,\ldots,i_k) \hspace{.5em} &= \hspace{-1em} \sum_{N\geq i_k \geq \ldots \geq i_1 \geq 1} \hspace{-1em} f(i_1,\ldots,i_k) \hspace{.5em} + \hspace{-1em} \sum_{N\geq i_k \geq \ldots \geq i_2 \geq 1} \hspace{-1em} f(0,i_2,\ldots,i_k) \nonumber \\
	& + \ldots + \sum_{N\geq i_k\geq 1} f(0,\ldots,0,i_k) + f(0,\ldots,0)	\point
\end{align}

After these steps we will arrive at nested sums containing three types of factors in their summands: a) powers of variables, b) Horn rational expressions in the summation indices, stemming from the factorials and the normalization of $\Gamma$-functions and c) harmonic numbers. We will call an expression $c(k)$ \textit{Horn rational} in $k$, when $\frac{c(k+e_i)}{c(k)}$ is a rational function in $k$ for all directions $i$ (see also \cref{sec:AHypSystems}). There are different techniques known to simplify those sums. For example Zeilberger's algorithm, also known as creative telescoping \cite{ZeilbergerMethodCreativeTelescoping1991, Petkovsek1996} can be used to find expressions for finite sums consisting in certain Horn rational summands. In principle, one can use this technique also for multivariate sums. However, it is not very efficient then \cite{ZeilbergerMethodCreativeTelescoping1991}. Hence, this algorithm is mostly restricted to a single summation where no harmonic numbers are involved. 

Certain cases can also be solved by introducing integral expressions of the harmonic numbers, see e.g.\ \cite{AdamchikStirlingNumbersEuler1997}. Further, we want to refer to \cite{BorweinEvaluationsKfoldEuler1996} where different techniques are used to find expressions for those sums.\bigskip

Another approach was suggested in \cite{MochNestedSumsExpansion2002}. They introduced so-called $Z$-sums as nested sums
\begin{align}
	\gls{Zsum} := \sum_{n \geq i_k > \ldots > i_1 > 0} \frac{y_1^{i_1} \cdots y_k^{i_k}}{i_1^{m_1} \cdots i_k^{m_k}}
\end{align}
where $n$ can also be infinite and $m_j\in\mathbb N_{\geq 0}$. Note that we adapted the notation from \cite{MochNestedSumsExpansion2002} slightly. On the one hand, those $Z$-sums can be understood as generalizations of multiple polylogarithms\footnote{Note, that the order of variables in multiple polylogarithms as well as in multiple zeta functions differs in various literature. We supposed the definitions in \cite{ZagierValuesZetaFunctions1994, GoncharovMultiplePolylogarithmsMixed2001}.}
\begin{align}
	\Zsum{\infty}{m_1, \ldots, m_k}(y_1, \ldots, y_k) = \Li_{m_1, \ldots, m_k} (y_1, \ldots, y_k) \point
\end{align}
On the other hand, they generalize also so-called Euler-Zagier sums, and we write $\gls{Zsum1} := \Zsum{n}{m_1, \ldots, m_k}(1, \ldots, 1)$ for short. Euler-Zagier sums \cite{ZagierValuesZetaFunctions1994} appear in the study of multiple $\zeta$-values due to $\Zsum{\infty}{m_1, \ldots, m_k} = \zeta(m_1,\ldots,m_k)$. 
 
Stirling numbers as defined in \cref{sec:epsilonExpansion} are special cases of Euler-Zagier sums, and we have
\begin{align}
	\StirlingFirst{n}{k+1} = (n-1)! \ \Zsum{n-1}{\underbrace{\scriptstyle 1,\ldots,1}_{k}} \quad\text{for } n,k\in\mathbb Z_{>0} \point
\end{align} 
Stirling numbers with negative first arguments $n$ will correspond to $S$-sums, which were also defined in \cite{MochNestedSumsExpansion2002}, see \cref{sec:StirlingNumbers} for details. Note that $Z$-sums are also closely related to harmonic numbers $H_n^{(k)} = \Zsum{n}{k}$ and their generalizations \cite{VermaserenHarmonicSumsMellin1999}.

As pointed out in \cite{MochNestedSumsExpansion2002} $Z$-sums obey a nice algebraic structure for fixed upper bound $n$. In particular, products of $Z$-sums with the same upper bound $n$ can be expressed in terms of single $Z$-sums by an iterative use of \cref{eq:splittingNested}. Moreover, certain sums over (Horn) rational functions, powers of variables and $Z$-sums can be converted into linear combinations of $Z$-sums as well \cite{MochNestedSumsExpansion2002}. Implementations of those algorithms are available in \cite{WeinzierlExpansionHalfintegerValues2004, MochXSummerTranscendentalFunctions2006}. It should be mentioned, that the known algorithms only cover a small part of the possible situations.

Hence, apart from very simple Feynman integrals it is currently not possible to convert Horn type hypergeometric series from \cref{eq:SeriesRepresentationFeynmanIntegral} efficiently into multiple polylogarithms and related functions by existing algorithms. Nevertheless, we will provide a small amount of tools for this problem in \cref{sec:StirlingNumbers}. However, there are already very efficient algorithms known \cite{PanzerFeynmanIntegralsHyperlogarithms2015, BrownMasslessHigherloopTwopoint2009, BognerSymbolicIntegrationMultiple2012, BognerFeynmanIntegralsIterated2015} for those Feynman integrals which can be evaluated by means of multiple polylogarithms. Therefore, for symbolical expressions of the Feynman integral, the hypergeometric approach should be understood as an alternative way, which could be used, when the current algorithms can not be applied. Furthermore, the hypergeometric approach is well-made for a fast numerical evaluation.

\begin{example}
	To illustrate the techniques from above in a very simple case, we want to continue \cref{ex:1loopbubbleB} and write the Laurent expansion of the appearing Gauss' hypergeometric function in terms of multiple polylogarithms. In \cref{eq:Example1loopBubble2F1} we had the following hypergeometric function
	\begin{align}
		\HypF{1,\epsilon}{2\epsilon}{t} &= \sum_{k\geq 0} (-t)^k (\epsilon)_k\, (1-2\epsilon)_{-k} = \sum_{k\geq 0} (-t)^k \frac{2\epsilon+k}{2\epsilon}  \left( \StirlingFirst{k}{0} + \StirlingFirst{k}{1} \epsilon \right. \nonumber \\
		& \qquad \left. + \StirlingFirst{k}{2} \epsilon^2 + \order{\epsilon^3} \!\right) \!\left( \StirlingFirst{-k}{0} - \StirlingFirst{-k}{1} 2 \epsilon + \StirlingFirst{-k}{2} 4 \epsilon^2 + \order{\epsilon^3} \! \right) \nonumber \\
		& = \frac{1}{\epsilon} c_{-1} + c_0 + c_1 \epsilon + c_2 \epsilon^2 + c_3 \epsilon^3 + \order{\epsilon^4} 
	\end{align}
	where we used \cref{eq:PochhammerIdentityA} and introduced Stirling numbers due to \cref{eq:GammaSeriesExpansionStirling} after a normalization of $\Gamma$-functions. We find then
	\begin{align}
		c_{-1} &= \frac{1}{2} \sum_{k\geq 0} (-t)^k k \StirlingFirst{k}{0} \StirlingFirst{-k}{0} = 0 \\
		c_0 &= \sum_{k\geq 0} (-t)^k \left\{ \StirlingFirst{k}{0} \StirlingFirst{-k}{0} - k \StirlingFirst{k}{0} \StirlingFirst{-k}{1} + \frac{1}{2} k \StirlingFirst{k}{1} \StirlingFirst{-k}{0} \right\} \nonumber \\
		&= 1 + \frac{1}{2} \sum_{k\geq 1} t^k = 1 + \frac{1}{2} \Li_0(t)
	\end{align}
	where we treated the first summand with $k=0$ separately and inserted the special representations of Stirling numbers from \cref{eq:StirlingPartPos} and \cref{eq:StirlingPartNeg}. By the same procedure we will obtain
	\begin{align}
		c_1 &= \sum_{k\geq 1} t^k \left( \frac{1}{k} - H_k + \frac{1}{2} H_{k-1} \right) = - \frac{1}{2} \sum_{k\geq 1} t^k H_{k-1} = -\frac{1}{2} \Li_{1,0}(1,t)
	\end{align}
	where we used $H_{k}^{(i)} = H_{k-1}^{(i)} + \frac{1}{k^i}$. For the next coefficient we have
	\begin{align}
		c_2 &= \sum_{k\geq 1} t^k \left( - \frac{2}{k} H_k + H_k^2 + H_k^{(2)} + \frac{1}{k} H_{k-1} - H_{k-1} H_k + \frac{1}{4} H_{k-1}^2 - \frac{1}{4} H_{k-1}^{(2)} \right) \nonumber \\
		&= \sum_{k\geq 1} t^k \left( \frac{3}{4} H_{k-1}^{(2)} + \frac{1}{4} H_{k-1}^2 \right) = \frac{3}{4} \Li_{2,0}(1,t) + \frac{1}{4} \sum_{k\geq 1} t^k \left( 2 \sum_{i=1}^{k-1} \sum_{j=1}^{i-1} \frac{1}{ij} + \sum_{i=1}^{k-1} \frac{1}{i^2} \right) \nonumber \\
		&= \Li_{2,0}(1,t) + \frac{1}{2} \Li_{1,1,0} (1,1,t)
	\end{align}
	where we additionally made use of \cref{eq:splittingNested}. By the same techniques we can continue 
	\begin{align}
		c_3 = -\frac{1}{2} \Li_{1,1,1,0}(1,1,1,t) - \Li_{1,2,0}(1,1,t) -\Li_{2,1,0} (1,1,t) - 2\Li_{3,0} (1,t) \point
	\end{align}
\end{example}

We want to conclude this section with a remark on another promising but immature method. As every coefficient in the Laurent expansion is again a Horn hypergeometric function one can apply the machinery of $\Aa$-hypergeometric systems a second time. We refer to \cite{SadykovHypergeometricFunctionsSeveral2002} for a comprehensive treatment of Horn hypergeometric functions within $D$-modules. In comparison to the original $\Aa$-hypergeometric system for the generalized Feynman integral $H_\Aa(\nuu)$, those new $\Aa$-hypergeometric systems will describe (a part of) a single coefficient in the Laurent expansion and will not contain additional variables any more. Hence, these second $\Aa$-hypergeometric systems will describe the problem in an efficient manner. Note further that a Horn hypergeometric series collapses to a Horn hypergeometric series with reduced depth when setting a variable equals zero. Hence, one can once again find boundary values by means of simpler cases. However, one source of potential difficulties in this approach lies in the singularity structure of the Horn hypergeometric functions \cite{PassareSingularitiesHypergeometricFunctions2005}, i.e.\ one has to ensure that all Horn hypergeometric functions are on their principle sheets.

    
\section{Notes on numerical evaluation} \label{sec:numerics}
%

The series representations from \cref{thm:FeynSeries} are very suitable for a numerical evaluation of Feynman integrals. Therefore, we will collect certain considerations on numerics in this section from the perspective of a practitioner. As described in \cref{lem:heightEffectiveVars} we can choose a regular triangulation which is convenient for the considered kinematical region. Furthermore, if the restriction to physical values will produce convergence issues of those series, we can solve those problems by the methods from \cref{sec:AnalyticContinuation}. Therefore, after the Laurent expansion described in \cref{sec:epsilonExpansion} we will obtain series of the form
\begin{align}
	\sum_{k\in\mathbb N^r_0} c(k) \frac{y^k}{k!} \label{eq:seriesNumerics}
\end{align}
where $|y_i|<1$. The function $c(k)$ is a product of Stirling numbers and additionally may contain certain factorials (which are Stirling numbers as well). Hence, we can calculate an approximation of those series simply by summing the first terms. Even this first naive attempt often creates surprisingly precise results after a few amount of terms. Classical techniques of series accelerations will enhance this approach and are worthwhile when $|y_i| \approx 1$. E.g.\ one can speed up the convergence rate by Euler's transformation. This is particularly appropriate since those series \cref{eq:seriesNumerics} are typically alternating. Hence, we will use the identity
\begin{align}
	&\sum_{k=0}^\infty (-1)^k f(k) = \sum_{k=0}^\infty \frac{(-1)^k}{2^{k+1}} \fd[k]{f(0)} \comma \qquad \text{where } \label{eq:EulerSummation} \\
	&\fd[k]{f(0)} = \sum_{i=0}^k (-1)^i \begin{pmatrix} k \\ i \end{pmatrix} f(k-i) = \sum_{i=0}^k (-1)^{k+i} \begin{pmatrix} k \\ i \end{pmatrix} f(i) \point \label{eq:kthFiniteDifference}
\end{align}
Thereby $\gls{fd}$ is known as the \textit{$k$-th forward difference}, i.e.\ we have $\fd{f(i)} = f(i+1)-f(i)$ and $ \fd[k]{f(i)} = \Delta^{k-1} \!\left(\fd{f(i)}\right)$. Hence, we can transform an alternating series into a series where subsequent summands are suppressed by a power of $\frac{1}{2}$. The series acceleration \cref{eq:EulerSummation} can be used iteratively to multivariate series as \cref{eq:seriesNumerics}. Moreover, there are many more known techniques for series acceleration. We refer to \cite{CohenConvergenceAccelerationAlternating2000} for an overview. \bigskip

When truncating the series \cref{eq:seriesNumerics}, the first question should be how many terms we take into account. We will denote by 
\begin{align}
	S(K) = S(K_1,\ldots,K_r) := \sum_{k_1=0}^{K_1} \cdots \sum_{k_r=0}^{K_r}  f(k)
\end{align}
the partial sum of \cref{eq:seriesNumerics} or an accelerated version of \cref{eq:seriesNumerics}, respectively. The relative increase for a summation step in direction $i$ will be denoted by
\begin{align}
	t_i(K) := \left| \frac{f(K+e_i)}{S(K)}\right| \point
\end{align}
A good criterion to truncate the series is to stop the summation in direction $i$ when $t_i(K)$ becomes less than a given tolerance threshold. To avoid possible effects from the first summands it is recommendable to fix also a minimal number of terms.

Another criterion for series truncation that is often used is to set a tolerance threshold for the difference of two summands $\fpd{i}{f(K)} = f(N+e_i) - f(N)$. Depending on the series, one can relate this difference threshold to an estimation of the error term \cite{CalabreseNoteAlternatingSeries1962, PinskyAveragingAlternatingSeries1978, BradenCalculatingSumsInfinite1992}.\bigskip

As aforementioned, the summands of \cref{eq:seriesNumerics} consist only in a product of Stirling numbers (see \cref{sec:epsilonExpansion}). As we need only a small amount of those Stirling numbers, we suggest to create a table of those numbers instead of a repetitive calculation. This will enhance the evaluation speed further. Depending on the used data types, it is possibly reasonable to replace $\StirlingFirstSmall{n}{k} = (n-1)!\ \sigma (n,k-1)$ for $n,k\in\mathbb Z_{>0}$ and $\StirlingFirstSmall{-n}{k} = \frac{(-1)^n}{(n)!} \sigma(-n,k)$ for $n,k\in\mathbb Z_{\geq 0}$ and store $\sigma(n,k)$ instead. It can be shown, that $\sigma(n,k)$ grows only moderately for $n\rightarrow\infty$ (see \cref{sec:StirlingNumbers}).\bigskip

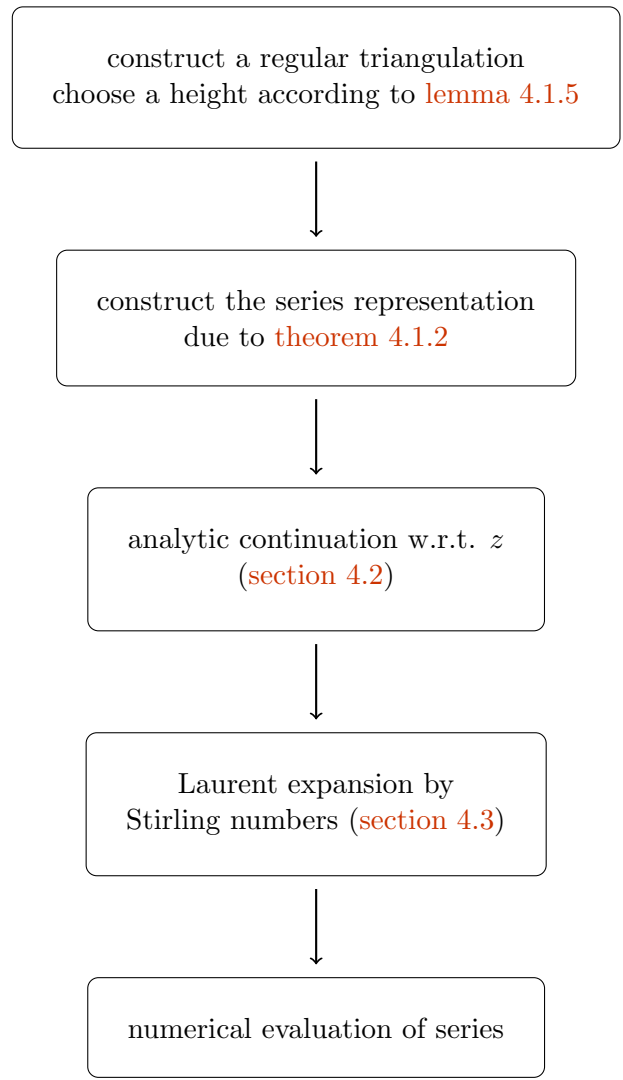
\begin{figure}[htb]
	\centering
	\begin{tikzpicture}
        \tikzstyle{every node}=[draw, rounded corners, inner sep = 15pt, outer sep = 5pt, align=center]
        \node (1) at (0,0) {construct a regular triangulation \\ choose a height according to \cref{lem:heightEffectiveVars}};
        \node (2) [below=of 1] {construct the series representation \\ due to \cref{thm:FeynSeries}};
        \node (3) [below=of 2] {analytic continuation w.r.t. $z$ \\ (\cref{sec:AnalyticContinuation})};
        \node (4) [below=of 3] {Laurent expansion by \\ Stirling numbers (\cref{sec:epsilonExpansion})};
        \node (5) [below=of 4] {numerical evaluation of series};
        \draw[->, thick] (1) -- (2); \draw[->, thick] (2) -- (3); \draw[->, thick] (3) -- (4); \draw[->, thick] (4) -- (5);
    \end{tikzpicture}
    \caption[Structure of a numerical evaluation of Feynman integrals by means of series representations]{Structure of a numerical evaluation of Feynman integrals by means of series representations.}
    \label{fig:structureEvaluation}
\end{figure}

By those methods one can construct a very efficient way for a numerical evaluation of Feynman integrals. \Cref{fig:structureEvaluation} shows the different steps in such an evaluation. Let us shortly comment each step from an algorithmic point of view. The construction of a single triangulation by means of a given height vector is algorithmically a relatively simple task and implementations can be found in various software programs (see \cref{sec:SoftwareTools} for details). We want to remark, that it is simple to construct a single triangulation. However, it is very intricate to determine all triangulations of a given configuration. For the next step, the construction of the series representation, one only has to calculate the Gale duals for each simplex, which is mainly an inversion of an $(n+1)\times (n+1)$ integer matrix. The complexity of the analytic continuation of those series, which is necessary for the most except of very simple Feynman integrals, depends highly on the problem. However, it seems that many Feynman integrals can be solved even by the ${_2}F_1$ transformation rules \cref{eq:2F1trafoA} and \cref{eq:2F1trafoB}. Hence, this step is in most cases a simple replacement of algebraic expressions. However, an efficient implementation of this step seems to us to be the comparatively greatest difficulty in an algorithmic realization. After the analytic continuation one would usually expand the Feynman integrals due to a small parameter $\epsilon$ for dimensional or analytical regularization. The algorithmic effort consists in a simple replacement of algebraic expressions and a reordering of terms in powers of $\epsilon$. Further, it is worthwhile to simplify all those terms where Stirling numbers $\StirlingFirstSmall{n}{k}$ appear with $n\geq 0$. Last but not least the computation of the series heavily depends on the kinematical region. Thus, for $|y_i|\ll 1$ the series converges very fast. However, by series accelerations one can also achieve good results when $|y_i|\approx 1$.


\section{Euler integrals and other representations} \label{sec:EulerIntegrals}


There are different ways to span the solution space $\Sol(H_\Aa(\nuu))$ of an $\Aa$-hypergeometric system. In principle, we can use all types of functions which belong to the class of $\Aa$-hypergeometric functions to generate this space. Hence, we will also have various possibilities to write the Feynman integral. All these representations will have slightly different characteristics. Therefore, the investigation of the various formulations of Feynman integrals is potentially helpful for certain aspects of those integrals. When spanning $\Sol(H_\Aa(\nuu))$ by $\Gamma$-series (see \cref{ssec:GammaSeries}) we will obtain the series representation of \cref{thm:FeynSeries}. Another type of functions which can be used to span the solution space are Laplace integrals $\int_\gamma \dif x\, x^{\nu-1} e^{-\Gg(x)}$. We refer to \cite{Matsubara-HeoLaplaceResidueEuler2018} for the construction of integration cycles $\gamma$. In this reference one can also find a direct transformation between $\Gamma$-series and Laplace integrals. Hence, we can also adopt the prefactors $C_{\sigma,k}(\nuu)$ which were derived in \cref{sec:seriesRepresentationsSEC}. 

A further option to generate $\Sol(H_\Aa(\nuu))$ are Mellin-Barnes integrals. For Feynman integrals this was derived by the results of \cref{thm:MellinBarnesRepresentation}. In general and more extensively those representations were investigated in \cite{BeukersMonodromyAhypergeometricFunctions2013, BerkeschEulerMellinIntegrals2013, Matsubara-HeoMellinBarnesIntegralRepresentations2018}. We want to remark, that Mellin-Barnes integrals are in particular useful when considering the monodromy of Feynman integrals \cite{BeukersMonodromyAhypergeometricFunctions2013}.\bigskip

We can also give a representation of Feynman integrals by so-called Euler integrals. Since the Feynman integral in parametric space \cref{eq:FeynmanParSpFeynman} is an Euler integral by itself, this type of integrals play a central role for many approaches, see exemplarily \cite{AbreuGeneralizedHypergeometricFunctions2019}. To find a relation between Horn type series and Euler integrals one can use Kummer's method \cite{AomotoTheoryHypergeometricFunctions2011}. Hence, we will use a convenient integral representation of the summand and change summation and integration. A classical result for iterated integrals going back to Kronecker and Lagrange is the following \cite[sec. 12.5]{WhittakerCourseModernAnalysis2021}
\begin{align}
	\int_{\Delta^n} \dif t \, t^{\alpha-1} f(t_1+\ldots+t_n)  = \frac{\Gamma(\alpha)}{\Gamma(|\alpha|)} \int_0^1 \dif \tau \, \tau^{|\alpha|-1} f(\tau)
\end{align}
where $t=(t_1,\ldots,t_n)\in\mathbb R^n$, $\alpha=(\alpha_1,\ldots,\alpha_n)\in\mathbb C^n$ with $\Re(\alpha_i)>0$, $|\alpha|:=\sum_{i=1}^n \alpha_i$ and $\Delta^n := \left\{ t\in\mathbb R^n \, |\, t_i > 0,\, \sum_{i=1}^n t_i < 1 \right\}$ denotes an $n$-simplex. Especially, for $f(\tau) = (1-\tau)^{\alpha_0-1}$ we will find a generalization of the beta function
\begin{align}
	\int_{\Delta^n} \dif t \, t^{\alpha-1} \left(1-\sum_{i=1}^n t_i \right)^{\alpha_0-1} = \frac{\Gamma(\underline\alpha)}{\Gamma(|\underline\alpha|)} \label{eq:DirichletIntegral}
\end{align}
where $\underline\alpha = (\alpha_0,\alpha)\in\mathbb C^{n+1}$ with $\Re(\underline\alpha_i)>0$. Once again we make use of a multi-index notation for compact expressions (see for example \cref{lem:SchwingersTrick}). We can use this integral representation of a product of $\Gamma$-functions to derive the following theorem.

\begin{theorem}[Euler integrals \cite{AomotoTheoryHypergeometricFunctions2011}] \label{thm:EulerHorn}
	We have the following relation between Horn hypergeometric series and Euler integrals for all $\gamma\notin \mathbb Z$
	\begin{align}
		&\sum_{k\in\mathbb N_0^r} \frac{y^k}{k!} \prod_{i=1}^{n} \Gamma\!\left(\alpha_i + \sum_{j=1}^r B_{ij} k_j\right) =  \nonumber\\
		&\qquad \Gamma(\gamma)\Gamma(1+|\alpha|-\gamma) \int_{\Delta^n} \dif t \, t^{\alpha-1} \left(1-\sum_{i=1}^n t_i \right)^{-\gamma} \left( 1+ \sum_{j=1}^r y_j \prod_{i=1}^n t_i^{B_{ij}} \right)^{-\gamma} \label{eq:EulerRepresentation}
	\end{align}
	whenever $\Re(\alpha_i)>0$ and $\sum_{i=1}^n B_{ij} = 1$ for $j=1,\ldots,r$. Furthermore, $\alpha$ is assumed to be generic such that the expressions converge absolutely.
\end{theorem}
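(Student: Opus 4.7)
The proof I would give follows the classical Kummer method: expand one factor of the integrand as a hypergeometric series, interchange summation and integration, and evaluate each resulting integral by Dirichlet's beta integral \cref{eq:DirichletIntegral}. First I would expand the last factor of the integrand via the generalized multinomial theorem, which (for $|y_j \prod_i t_i^{B_{ij}}|$ small enough on the integration domain) gives
\begin{align*}
    \left(1+\sum_{j=1}^r y_j \prod_{i=1}^n t_i^{B_{ij}}\right)^{-\gamma} = \sum_{k\in\mathbb N_0^r} \frac{(-1)^{|k|}(\gamma)_{|k|}}{k!}\, y^k \prod_{i=1}^n t_i^{(Bk)_i},
\end{align*}
where $|k|=\sum_j k_j$ and $(Bk)_i = \sum_j B_{ij}k_j$. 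The series converges absolutely and uniformly on compact subsets of the open simplex $\Delta^n$ for generic $y$, which justifies swapping the sum and the integral.

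After the exchange, each term becomes a Dirichlet-type integral of the form given in \cref{eq:DirichletIntegral} with parameters $\underline\alpha = (1-\gamma,\alpha_1+(Bk)_1,\ldots,\alpha_n+(Bk)_n)$. The next step is to exploit the homogeneity hypothesis $\sum_{i=1}^n B_{ij}=1$, which implies $|Bk|=\sum_i(Bk)_i = \sum_j k_j\sum_i B_{ij}=|k|$. Hence Dirichlet's formula evaluates the integral to
\begin{align*}
    \int_{\Delta^n} t^{\alpha+Bk-1}(1-|t|)^{-\gamma}\dif t = \frac{\Gamma(1-\gamma)\prod_{i=1}^n\Gamma(\alpha_i+(Bk)_i)}{\Gamma(1-\gamma+|\alpha|+|k|)}.
\end{align*}
Substituting this back into the expanded right-hand side yields a series whose $y^k/k!$-coefficient is $\prod_i\Gamma(\alpha_i+(Bk)_i)$ multiplied by a $\gamma$-dependent ratio of $\Gamma$-functions.

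The final step—and the main obstacle—is to show that this $\Gamma$-ratio collapses to $1$ uniformly in $k$ after multiplication by the prefactor $\Gamma(\gamma)\Gamma(1+|\alpha|-\gamma)$. I would carry this out by writing $(\gamma)_{|k|}=\Gamma(\gamma+|k|)/\Gamma(\gamma)$, then applying Euler's reflection formula $\Gamma(\gamma+|k|)\Gamma(1-\gamma-|k|) = (-1)^{|k|}\Gamma(\gamma)\Gamma(1-\gamma)$ to absorb the alternating sign. The resulting expression should reduce—via the functional equation $\Gamma(z+1)=z\Gamma(z)$ applied iteratively—to a $k$-independent constant absorbed by the chosen prefactor. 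This also reveals why the left-hand side, which contains no $\gamma$, can be represented by a family of integrals parametrized by $\gamma\notin\mathbb{Z}$: the auxiliary parameter cancels completely.

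The hard part will be handling the exchange of summation and integration rigorously near the boundary of $\Delta^n$, where the factor $(1-|t|)^{-\gamma}$ is singular; this requires the convergence assumption "$\alpha$ is sufficiently generic" so that $\Re(\alpha_i)>0$ and $\Re(1-\gamma+|\alpha|)>0$ provide the integrability at the boundary facets, together with a dominated convergence argument in the bulk. Once this analytic justification is in place, the identity follows, and then extends to the full claimed range of $\gamma\notin\mathbb{Z}$ by meromorphic continuation in both sides with respect to $\gamma$.
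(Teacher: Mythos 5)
Your strategy is the same Kummer-type argument the paper uses — expand one factor of the integrand multinomially, interchange sum and integral, and evaluate term by term with the Dirichlet integral \cref{eq:DirichletIntegral} — only run from the integral side towards the series side instead of the reverse. The genuine gap is your final step: the cancellation you hope for does not occur. With your pairing of exponents the Dirichlet parameters are $\underline\alpha=(1-\gamma,\alpha+Bk)$, so the $k$-th term of the expanded right-hand side carries the factor $(-1)^{|k|}(\gamma)_{|k|}\,\Gamma(1-\gamma)/\Gamma(1-\gamma+|\alpha|+|k|)$ in front of $\tfrac{y^k}{k!}\prod_i\Gamma(\alpha_i+(Bk)_i)$. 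This factor is genuinely $|k|$-dependent: the ratio of its values at $|k|+1$ and $|k|$ is $-(\gamma+|k|)/(1-\gamma+|\alpha|+|k|)\neq 1$ for generic $\gamma$, so no $k$-independent prefactor can absorb it, and the reflection formula together with $\Gamma(z+1)=z\Gamma(z)$ only rewrites it as $\Gamma(1-\gamma)^2/\bigl[\Gamma(1-\gamma-|k|)\Gamma(1-\gamma+|\alpha|+|k|)\bigr]$, which is still $|k|$-dependent. What your computation actually establishes is that the printed right-hand side equals a \emph{different}, $\gamma$-dependent Horn series; your own $k=0$ term already exposes the mismatch, since at $y=0$ the right-hand side evaluates to $\Gamma(\gamma)\Gamma(1-\gamma)\prod_i\Gamma(\alpha_i)$ rather than $\prod_i\Gamma(\alpha_i)$.

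The paper closes the argument by choosing the inserted Gamma-factor so that the Dirichlet denominator is $k$-independent: on the series side one multiplies the $k$-th term by $1=\Gamma(1-\gamma-|k|)/\Gamma(1-\gamma-|k|)$ and applies \cref{eq:DirichletIntegral} with $\underline\alpha=(1-\gamma-|k|,\alpha+Bk)$, so that $|\underline\alpha|=1+|\alpha|-\gamma$ (this is where $|Bk|=|k|$, i.e. $\sum_iB_{ij}=1$, is used), and then resums via $\sum_k x^k/(k!\,\Gamma(1-\gamma-|k|))=(1+|x|)^{-\gamma}/\Gamma(1-\gamma)$ — which is exactly your multinomial expansion in disguise, but in the variable $x_j=y_j\prod_i t_i^{B_{ij}}/(1-\sum_i t_i)$, because the boundary factor now carries the exponent $-\gamma-|k|$. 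If you insist on working from the integral side, you must therefore expand in $y_j\prod_it_i^{B_{ij}}/(1-\sum_it_i)$ rather than in $y_j\prod_it_i^{B_{ij}}$; doing this carefully also shows that the overall normalization works out to $\Gamma(1+|\alpha|-\gamma)/\Gamma(1-\gamma)$, so a fully honest version of your check would have flagged a discrepancy with the statement as printed. Be aware that the correct pairing has a price your proposal avoids: for $|k|$ large the exponent $-\gamma-|k|$ makes the individual boundary integrals divergent, so they must be read as analytic continuations of the Beta function (or regularized cycles, as the paper remarks after the theorem), and a plain dominated-convergence interchange no longer suffices.
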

\begin{proof}
	The proof follows \cite{AomotoTheoryHypergeometricFunctions2011}. Introducing $1=\Gamma(1-\gamma-|k|) \Gamma(1-\gamma-|k|)^{-1}$ we can reformulate the left hand side of \cref{eq:EulerRepresentation} by means of \cref{eq:DirichletIntegral} to
	\begin{align}
		\Gamma(1+|\alpha|-\gamma) \int_{\Delta^n} \dif t\, t^{\alpha-1} \left(1-\sum_{i=1}^n t_i \right)^{-\gamma} \sum_{k\in\mathbb N_0^r} \frac{y^k t^{Bk}}{k! \, \Gamma(1-\gamma-|k|)} \point \label{eq:EulerProof1}
	\end{align}
	For the remaining series in \cref{eq:EulerProof1} note that
	\begin{align}
		 \sum_{k\in\mathbb N_0^r} \frac{x^k}{k! \, \Gamma(1-\gamma-|k|)} =\frac{(1+|x|)^{-\gamma}}{\Gamma(1-\gamma)} \label{eq:EulerProof2}
	\end{align}
	which can be shown by an induction over $r$. For $r=1$ this follows from the binomial theorem. The induction step can be shown again by the binomial theorem. Inserting \cref{eq:EulerProof2} into \cref{eq:EulerProof1} concludes the proof.
\end{proof}

The representation \cref{eq:EulerRepresentation} will simplify whenever there is a row $i$ with $B_{ij}=1$ for all $j=1,\ldots,r$, i.e.\ there is already a factor of the form $\Gamma(\gamma+|k|)$. However, when $B$ is (a part of) a Gale dual of an acyclic vector configuration this will only happen in the case $r=1$ (see \cref{ssec:GaleDuality}). Moreover, one can extend \cref{thm:EulerHorn} also to general values for $\alpha\in\mathbb C^n$. In this case the integration region $\Delta^n$ has to be regularized by twisted cycles as described in \cite[sec. 3.2.5]{AomotoTheoryHypergeometricFunctions2011}. A similar result for general $\alpha\in\mathbb C^n$ was derived in \cite[thm. 5.3]{Matsubara-HeoLaplaceResidueEuler2018} with an alternative description of integration cycles.

Note, that the Horn hypergeometric series from \cref{thm:FeynSeries} will always satisfy the condition $\sum_i B_{ij}=1$ due to \cref{lem:HomogenityRelationsForAa}. Hence, by a rewriting of the series from \cref{thm:FeynSeries} by means of \cref{thm:EulerHorn} we can give also an alternative representation of Feynman integrals in terms of Euler integrals. Since the components of $\Aas^{-1}\nuu$ may also appear with negative values, the integration region of those Euler integrals can be very intricate.\bigskip

Besides of the various integral representations, there are also different series solutions for the $\Aa$-hypergeometric systems known. In this context we want to remark the canonical series \cite{SaitoGrobnerDeformationsHypergeometric2000}, which are power series with additional logarithmic terms. By those canonical series one can even span solution spaces for more general $D$-modules than $\Aa$-hypergeometric systems. This method is a direct generalization of Frobenius' method to the multivariate case. Hence, they are series solutions around the singular locus of the considered $D$-module. For the construction of those canonical series one has to determine the roots of the indicial ideal $\operatorname{ind}_w\!\left(H_\Aa(\nuu)\right) = R \cdot \initial_{(-w,w)} \!\left(H_\Aa(\nuu)\right) \cap \mathbb C[\theta]$, where $\theta=(z_1\partial_1,\ldots,z_N\partial_N)$ is the Euler operator and the other objects were defined in \cref{sec:holonomicDmodules}. In case of generic parameters $\nuu$ one can simplify $\operatorname{ind}_w\!\left(H_\Aa(\nuu)\right)$ to the fake indicial ideal. For details we refer to \cite{SaitoGrobnerDeformationsHypergeometric2000}. The $\Gamma$-series introduced in \cref{ssec:GammaSeries} can be seen as a special case of canonical series, where no logarithms appear.

    
\section{Periods and marginal Feynman integrals} \label{sec:periodMarginal}


The mechanism developed above to create series representations is not restricted to the Feynman integral in Lee-Pomeransky representation. It is a method which can be used for all integrals of Euler-Mellin type, i.e.\ for Mellin transforms of polynomials up to certain powers \cite{BerkeschEulerMellinIntegrals2013}. Hence, there are also more applications in the Feynman integral calculus. Since the $\Aa$-hypergeometric system for the representation \cref{eq:FeynmanParSpFeynman} is equivalent to the one of \cref{eq:LeePomeranskyRepresentation} as pointed out in \cref{sec:FeynmanIntegralsAsAHyp} the series representations will be the same. However, in case where one of the Symanzik polynomials drops out from \cref{eq:FeynmanParSpFeynman} due to certain constraints on $d$ and $\nu$, we will obtain a simpler series representation. The case, where the superficial degree of divergence vanishes $\omega=0$ and only the first Symanzik polynomial $\Uu$ remains is often called the \textit{period}\footnote{Here ``period'' is meant in the sense of Kontsevich and Zagier \cite{KontsevichPeriods2001}, i.e.\ a complex number where its real and imaginary part is expressible as an absolutely convergent integral over a rational function with rational coefficients over real domains defined by polynomial inequalities with rational coefficients. Indeed, every term in the $\epsilon$-expansion of a Feynman integral is a period when restricting the kinematic invariants and masses to rational numbers \cite{BognerMathematicalAspectsFeynman2009, BognerPeriodsFeynmanIntegrals2009, BrownPeriodsFeynmanIntegrals2010}. However, one often refers with ``period'' to the special case of Feynman integrals, where only the first Symanzik polynomial is included in representation \cref{eq:FeynmanParSpFeynman}.}, whereas the case $\omega=\dhalf$ where only the second Symanzik polynomial $\Ff$ remains is known as \textit{marginal Feynman integral} \cite{BourjailyBoundedBestiaryFeynman2019}. For instance all ``banana''-graphs are marginal for $\nu_i=1$ and $d=2$. This special circumstance was used in \cite{KlemmLoopBananaAmplitude2020}.

It is important to remark, that periods and marginal Feynman integrals not only appear in special configurations, which forces \cref{eq:FeynmanParSpFeynman} to drop a polynomial. Those integrals will also contribute to terms of the $\epsilon$-expansion of the Feynman integral in dimensional regularization (see \cref{sec:DimAnaReg}). Hence, it is meaningful to consider periods and marginal Feynman integrals also for general $d$ and $\nu$.\bigskip

\begin{example}[Periods of $1$-loop graphs] 
	For example, we can calculate the periods of all $1$-loop graphs (see \cref{fig:1loopFamily}) with the previous proposed approach, i.e.\ we consider
	\begin{align}
		\gls{period} := \int_{\mathbb R^n_+} \dif x\, x^{\nu-1} \delta (1-x_n) \, \Uu^{-\nu_0} \quad\text{with}\quad \Uu = \sum_{i=1}^n x_i \point
	\end{align}
	After evaluating the $\delta$-distribution we will find
	\begin{align}
		\Aa = \begin{pmatrix}
                1 & \mathbf 1_{n-1}^\top \\
		      	\mathbf 0_{n-1} & \mathbbm 1_{n-1}
		      \end{pmatrix} \text{ , } \quad \Aa^{-1} = \begin{pmatrix}
		      	1 & - \mathbf 1_{n-1}^\top \\
		      	\mathbf 0_{n-1} & \mathbbm 1_{n-1}
		      \end{pmatrix}
	\end{align}
	and $\det (\Aa) = 1$. As before, we denote by $\mathbf 0_{n-1}$, $\mathbf 1_{n-1}$ the constant zero column vectors and the constant one column vectors of size $n-1$, respectively. Since $\Aa$ is quadratic, those period integrals for $1$-loop graphs will satisfy the conditions of \cref{cor:MellinBarnesRepresentation}. Therefore, we will have
	\begin{align}
		\mathcal P_\Gamma (\nuu) = \frac{\Gamma\!\left(\Aa^{-1}\nuu\right)}{\Gamma(\nu_0)\,  |\det (\Aa)|} =  \frac{\Gamma\!\left(\nu_0 - \sum_{i=1}^n \nu_i\right) \Gamma(\nu)}{\Gamma(\nu_0)}
	\end{align}
	for all $1$-loop graphs $\Gamma$.
\end{example}

\begin{figure}[tb]
	\centering
	\begin{subfigure}{.45\textwidth}
		\centering
        \begin{tikzpicture}[thick, dot/.style = {draw, shape=circle, fill=black, scale=.3}, scale=1.5]
            \draw (0,0) circle (1);
            
            \coordinate[dot] (1) at (40:1);
            \coordinate[dot] (2) at (65:1);
            \coordinate[dot] (3) at (90:1);
            \coordinate[dot] (4) at (115:1);
            \coordinate[dot] (5) at (140:1);
            
            \draw (1) -- node[pos=0.7, above, xshift=-4pt] {\footnotesize $p_{n-1}$} (40:2);
            \draw (2) -- node[pos=0.7, above, xshift=-3pt] {\footnotesize $p_{n}$} (65:2);
            \draw (3) -- node[pos=0.7, left, xshift=3pt, yshift=4pt] {\footnotesize $p_{1}$} (90:2);
            \draw (4) -- node[pos=0.7, below, xshift=-5pt, yshift=5pt] {\footnotesize $p_{2}$} (115:2);
            \draw (5) -- node[pos=0.7, below, xshift=-4pt, yshift=2pt] {\footnotesize $p_{3}$} (140:2);
            
            \node at (52.5:1.15) {$n$};
            \node at (77.5:1.15) {$1$};
            \node at (102.5:1.15) {$2$};
            \node at (127.5:1.15) {$3$};
            
            \node at (160:1.5) {$\cdot$};
            \node at (168:1.5) {$\cdot$};
            \node at (176:1.5) {$\cdot$};
            \node at (184:1.5) {$\cdot$};
        \end{tikzpicture}
        \caption{Family of $1$-loop graphs with $n$ edges.} \label{fig:1loopFamily}
	\end{subfigure} %
	\begin{subfigure}{.45\textwidth}
		\centering
        \begin{tikzpicture}[thick, dot/.style = {draw, shape=circle, fill=black, scale=.3}, scale=1.5]
            \draw (0,0) circle (1);

            \coordinate[dot] (A) at (-1,0);
            \coordinate[dot] (B) at (1,0);
            
            \draw (A) -- node[above,midway] {$p$} ++(-0.7,0);
            \draw (B) -- node[above,midway] {$p$} ++(0.7,0);
            
            \draw (0,-0.4) ++(158.2:1.077) arc (158.2:21.8:1.077);
            \draw (0,+0.4) ++(338.2:1.077) arc (338.2:201.8:1.077);
            
            \node at (0,1.15) {$1$};
            \node at (0,0.83) {$2$};
            \node at (0,0.2) {$\vdots$};
            \node at (0,-0.49) {$n-1$};
            \node at (0,-0.82) {$n$};
        \end{tikzpicture}
        \caption{Family of banana graphs with $n=L+1$ edges.} \label{fig:bananaFamily}
	\end{subfigure}
	\caption[Feynman graphs for $1$-loop graphs and banana graphs]{Feynman graphs for $1$-loop graphs and banana graphs.}
\end{figure}
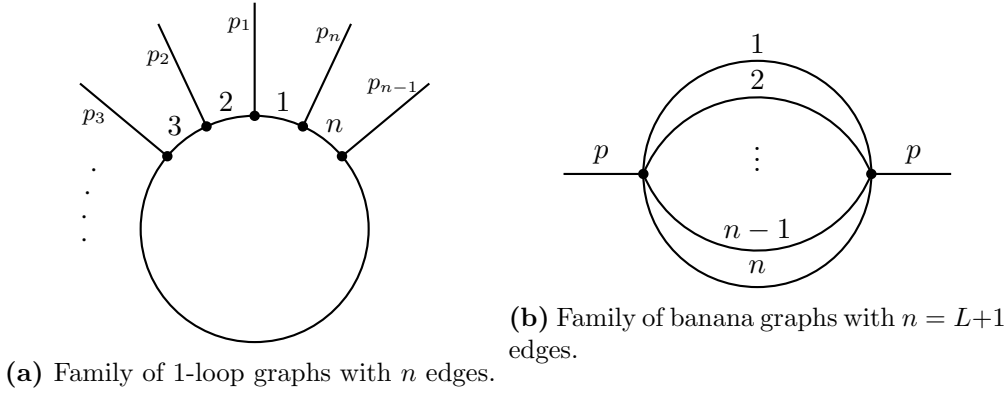

\begin{example}[Periods of $L$-loop banana graphs] \label{ex:periodsLloopBanana}
	In the same way one can give an analytic expression for the period of every $L$-loop banana graph (see \cref{fig:bananaFamily}), i.e.\ graphs which consist of $n=L+1$ edges, two legs and have the first Symanzik polynomial 
	\begin{align}
		\Uu = x_1 \cdots x_n \left(\frac{1}{x_1} + \ldots + \frac{1}{x_n} \right) \point \label{eq:BananaUSymanzik}
	\end{align}
	After evaluation of the $\delta$-distribution which sets $x_n=1$, we will obtain
	\begin{align}
		\Aa = \begin{pmatrix}
                1 & \mathbf 1_L^\top \\
		      	\mathbf 1_L & \mathbf 1_{L\times L} - \mathbbm 1_L
		      \end{pmatrix} \text{ , } \quad \Aa^{-1} = \begin{pmatrix}
		      	-L+1 & \mathbf 1_L^\top \\
		      	\mathbf 1_L & - \mathbbm 1_L
		      \end{pmatrix}
	\end{align}
	which can be verified by multiplication and $\mathbf 1_{L\times L}$ denotes the constant $L\times L$ matrix consisting only in ones. By Laplace expansion along the second column we obtain $\det\!\left(\Aa_L\right) = - \det\!\left(\Aa_{L-1}\right)$, where $\Aa_L$ and $\Aa_{L-1}$ stand for the vector configurations of a $L$-loop and $(L-1)$-loop banana graph, respectively. Therefore, we will find by induction $|\det(\Aa)|=1$. Hence, we can combine
	\begin{align}
		\mathcal P_\Gamma(\nuu) = \frac{\Gamma\!\left(\! (1-L) \nu_0 + \displaystyle\sum_{i=1}^{n-1} \nu_i \right) \displaystyle\prod_{i=1}^{n-1} \Gamma(\nu_0-\nu_i)}{\Gamma(\nu_0)} \point
	\end{align}
\end{example}

We want to remark that it seems in general not very appropriate to determine periods by hypergeometric functions, since we have to evaluate the hypergeometric functions at unity argument which is typically a very non-generic point. Moreover, there are very efficient alternatives to calculate periods, e.g.\ \cite{BrownPeriodsFeynmanIntegrals2010, PanzerFeynmanIntegralsHyperlogarithms2015, SchnetzQuantumPeriodsCensus2009, SchnetzGraphicalFunctionsSinglevalued2014}. The above examples have been only included for the purpose of showing possible alternative applications.

\begin{example}[marginal $L$-loop banana graphs with one mass]
    The counterpart of period Feynman integrals are the so-called marginal Feynman graphs, i.e.\ Feynman integrals where the first Symanzik polynomial drops from \cref{eq:FeynmanParSpFeynman}. Another example which satisfies the condition of \cref{cor:MellinBarnesRepresentation} and allows therefore an analytical expression is the class of marginal $L$-loop banana graphs with one mass, i.e.\ we consider the family of integrals
	\begin{align}
		\mathcal K_\Gamma (\nuu,z) := \int_{\mathbb R^n_+} \dif x \, x^{\nu-1} \delta(1-x_n)\, \Ff^{-\nu_0} \quad \text{with}\quad \Ff = p^2 x_1 \cdots x_n + m_n^2 x_n \Uu
	\end{align}
	where $\Uu$ was given in \cref{eq:BananaUSymanzik}. Note that $\Ff|_{x_n=1}$ has the same monomials as $\Uu|_{x_n=1}$. Therefore, we can adopt the results from \cref{ex:periodsLloopBanana} by adding the variable $z=(p^2+m_n^2,m_n^2,\ldots,m_n^2)$. Hence, we obtain
	\begin{align}
		\mathcal K_\Gamma (\nuu,z) = \frac{\Gamma\!\left(\! (1-L) \nu_0 + \displaystyle\sum_{i=1}^{n-1} \nu_i \right) \displaystyle\prod_{i=1}^{n-1} \Gamma(\nu_0-\nu_i)}{\Gamma(\nu_0)} \left(\frac{p^2+m_n^2}{m_n^2}\right)^{L\nu_0 - \sum_{i=1}^{n-1}\nu_i} (p^2+m_n^2)^{-\nu_0} \ \text{.}
	\end{align}
\end{example}
\begin{example}[marginal massive $1$-loop bubble]
	As another example of a marginal Feynman integral we will present, for the purpose of illustration once again, the $1$-loop bubble, i.e.\ the integral
	\begin{align}
		\gls{marginal} =  \int_{\mathbb R^n_+} \dif x \, x^{\nu-1} \delta(1-x_2)\, \Ff^{-\nu_0}
	\end{align}
	where $\Ff = (p^2 + m_1^2+m_2^2) x_1 x_2 + m_1^2 x_1^2 + m_2^2 x_2^2$ and which generates
	\begin{align}
		\Aa = \begin{pmatrix}
		      	1 & 1 & 1 \\
		      	0 & 1 & 2
		      \end{pmatrix}\quad\text{,} \qquad z = (m_2^2,p^2+m_1^2+m_2^2,m_1^2) \point
	\end{align}
	When choosing the triangulation $\hatT = \{\{1,2\},\{2,3\}\}$ this will result in
	\begin{align}
		\mathcal K_\Gamma (\nuu,z) &= \frac{1}{\Gamma(\nu_0)} \left[ z_1^{-\nu_0+\nu_1} z_2^{-\nu_1} \sum_{\lambda\in\mathbb N_0} \frac{\Gamma(\nu_0-\nu_1-\lambda)\Gamma(\nu_1+2\lambda)}{\lambda!} (-y)^\lambda \right. \nonumber \\
		&\qquad \left. + z_2^{-2\nu_0+\nu_1} z_3^{\nu_0-\nu_1} \sum_{\lambda\in\mathbb N_0} \frac{\Gamma(2\nu_0-\nu_1+2\lambda)\Gamma(-\nu_0+\nu_1-\lambda)}{\lambda!} (-y)^\lambda \right] \nonumber \\
		&= \frac{1}{\Gamma(\nu_0)} \left[ z_1^{-\nu_0+\nu_1} z_2^{-\nu_1} \Gamma(\nu_0-\nu_1)\Gamma(\nu_1)\ \HypF{\frac{\nu_1}{2},\frac{1+\nu_1}{2}}{1-\nu_0+\nu_1}{4y} \right. \nonumber \\
		&\qquad \left. + z_2^{-2\nu_0+\nu_1} z_3^{\nu_0-\nu_1} \Gamma(2\nu_0-\nu_1)\Gamma(-\nu_0+\nu_1)\ \HypF{\nu_0-\frac{\nu_1}{2},\frac{1}{2}+\nu_0-\frac{\nu_1}{2}}{1+\nu_0-\nu_1}{4y} \right]
    \end{align}
	where $y=\frac{z_1z_3}{z_2^2} = \frac{m_1^2m_2^2}{\left(p^2+m_1^2+m_2^2\right)^2}$.
\end{example}

Last but not least, we want to mention that the so-called \textit{stringy integrals} \cite{Arkani-HamedStringyCanonicalForms2021} also belong to the class of Euler-Mellin integrals. These stringy integrals are generalizations of (open) string amplitudes and can also be treated by the series approach presented in this chapter.



\section{Series representation for the fully massive sunset graph}

\label{sec:ExampleSunset}

We want to conclude this chapter about series representations by an extensive example to illustrate the methods stated above as well as to show the scope of this approach. For this reason we will consider the sunset Feynman integral with three different masses according to \cref{fig:sunset}. The corresponding Feynman graph consists in $n=3$ edges and the Lee-Pomeransky polynomial includes $N=10$ monomials
\begin{align}
    \Gg &= x_1 x_2 + x_1 x_3 + x_2 x_3 + \left(m_1^2+m_2^2+m_3^2+p^2\right) x_1 x_2 x_3 \nonumber \\
    &\qquad + m_1^2 x_1^2 (x_2+x_3) + m_2^2 x_2^2 (x_1+x_3) + m_3^2 x_3^2 (x_1+x_2) \point
\end{align}

\begin{figure}[ht]
    \centering
    \begin{tikzpicture}[thick, dot/.style = {draw, shape=circle, fill=black, scale=.5}, scale=1.5]
        \draw (0,0) circle (1);

        \coordinate[dot] (A) at (-1,0);
        \coordinate[dot] (B) at (1,0);
        
        \draw (A) -- node[above] {$p$} ++(-0.7,0);
        \draw (B) -- node[above] {$p$} ++(+0.7,0);
        
        \draw (A) -- (B);
        
        \node at (0,1.15) {$m_1$};
        \node at (0,0.15) {$m_2$};
        \node at (0,-0.85) {$m_3$};
    \end{tikzpicture}
    \caption[$2$-loop self-energy Feynman graph (``sunset'')]{The $2$-loop $2$-point function (sunset graph) with three different masses.}
    \label{fig:sunset}
\end{figure}
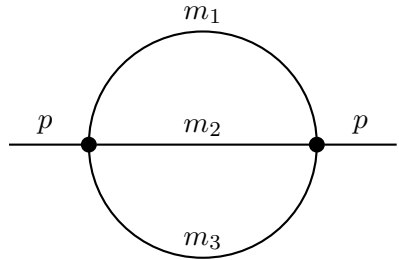

In the representation of equation \cref{eq:Gsupport} we encode this polynomial by
\begin{align}
    \Aa &= \begin{pmatrix}
        1 & 1 & 1 & 1 & 1 & 1 & 1 & 1 & 1 & 1 \\
        0 & 1 & 1 & 0 & 1 & 0 & 1 & 2 & 1 & 2 \\
        1 & 0 & 1 & 1 & 0 & 2 & 1 & 0 & 2 & 1 \\
        1 & 1 & 0 & 2 & 2 & 1 & 1 & 1 & 0 & 0
    \end{pmatrix} \\
    z &= (1,1,1,m_3^2,m_3^2,m_2^2,m_1^2+m_2^2+m_3^2+p_1^2,m_1^2,m_2^2,m_1^2) \point
\end{align}
The rank of the kernel of $\Aa$ is equal to $r=N-n-1=6$, and therefore we will expect $6$-dimensional $\Gamma$-series. Moreover, the Newton polytope $\Newt(\Gg)=\Conv (A)$ has the volume $\vol(\Newt(\Gg))=10$, which can e.g.\ be calculated with \softwareName{polymake} \cite{GawrilowPolymakeFrameworkAnalyzing2000}, see \cref{sec:SoftwareTools}. This leads to $10$ basis solutions, and there are $826$ different ways for a regular triangulation of the Newton polytope $\Newt(\Gg)$, where $466$ of those triangulations are unimodular. We choose the unimodular triangulation (calculated with \softwareName{Topcom} \cite{RambauTOPCOMTriangulationsPoint2002})
\begin{align}
    \hatT_{152} &= \big\{\{3,6,7,9\},\{3,7,9,10\},\{3,7,8,10\},\{2,5,7,8\},\{2,3,7,8\},\nonumber \\
    &\qquad \{2,4,5,7\},\{1,4,6,7\},\{1,2,4,7\},\{1,3,6,7\},\{1,2,3,7\}\big\} \label{eq:sunsetTriang}
\end{align}
in order to get series, which converge fast for highly relativistic kinematics $m_i^2 \ll m_1^2+m_2^2+m_3^2+p^2$. Further, we set $\nu_i=1$ and $d=4-2\epsilon$. 

In the limit $z\rightarrow(1,1,1,m_3^2,m_3^2,m_2^2,m_1^2+m_2^2+m_3^2+p_1^2,m_1^2,m_2^2,m_1^2)$ the series $\phi_1$, $\phi_3$, $\phi_5$, $\phi_6$, $\phi_8$ and $\phi_9$ are divergent for small values of $\epsilon>0$. We will write $\phi_i := \phi_{\sigma_i} (\nuu,z)$ for short, where the numeration of simplices $\sigma_i$ is oriented towards \cref{eq:sunsetTriang}. By the method described in \cref{sec:AnalyticContinuation} one can split all these series by the transformation formula for the ${_2}F_1$ Gauss' hypergeometric function in a convergent and a divergent part. The divergent parts of these series cancel each other. In doing so the resulting $\Gamma$-series have linear dependences and the dimension of the solution space will reduce from $10$ to $7$. 

By applying all these steps one arrives at the following series representation of the fully massive sunset integral
\begin{align}
    \mathcal I_\Aa (\nuu,z) &= \frac{s^{1-2\epsilon}}{\Gamma(3-3\epsilon)} \big[ y_2^{1-\epsilon} \phi_1^\prime + (y_1 y_2)^{1-\epsilon} \phi^\prime_2 + y_1^{1-\epsilon} \phi^\prime_3 + (y_1 y_3)^{1-\epsilon} \phi^\prime_4 + y_1^{1-\epsilon} \phi^\prime_5 \nonumber \\
    &\qquad  + y_3^{1-\epsilon} \phi^\prime_6 + (y_2 y_3)^{1-\epsilon} \phi^\prime_7 + y_3^{1-\epsilon} \phi^\prime_8 + y_2^{1-\epsilon} \phi^\prime_9 + \phi^\prime_{10} \big] \comma
\end{align}
where we adapted the notation of $\Gamma$-series slightly for convenience. Those $\Gamma$-series are given by
\begin{align}
    &\phi^\prime_1 = \sum_{k_2,k_3,k_4,k_5,k_6=0}^\infty\frac{ (-y_2)^{k_2} (-y_3)^{k_3} (-y_2 y_3)^{k_4} (-y_1 y_2)^{k_5} (-y_1)^{k_6} }{k_2!\, k_3!\, k_4!\, k_5!\, k_6!} \nonumber\\
    &\quad \Gamma (k_2-3 \epsilon +3) \Gamma (k_2+k_3+k_4-k_6-2 \epsilon +3) \Gamma (k_3-k_5-k_6-\epsilon +1) \nonumber \\
    &\quad \frac{\Gamma (k_2+k_3+2 k_4+2 k_5+k_6+\epsilon ) \Gamma (k_4+k_5+2 \epsilon -1) \Gamma (-k_2-k_3-k_4+k_6+2 \epsilon -2)}{\Gamma (k_2+k_4+k_5-\epsilon +2) \Gamma (k_3+k_4-k_6+\epsilon )}  \allowdisplaybreaks \nonumber\\[\baselineskip]
    &\phi^\prime_2 = \sum_{k_1,k_2,k_3,k_4,k_5,k_6=0}^\infty\frac{ (-y_1)^{k_1+k_5} (-y_2)^{k_2+k_6}  (-y_1 y_3)^{k_3} (-y_2 y_3)^{k_4}}{k_1!\, k_2!\, k_3!\, k_4!\, k_5!\, k_6!} \nonumber\\
    &\quad \Gamma (k_1+k_2+2 k_3+2 k_4+k_5+k_6+1) \Gamma (k_1+k_2-3 \epsilon +3) \nonumber \\
    &\quad \Gamma (-k_2-k_4+k_5-k_6+\epsilon -1) \Gamma (-k_1-k_3-k_5+k_6+\epsilon -1) \allowdisplaybreaks \nonumber\\[\baselineskip]
    &\phi^\prime_3 = \sum_{k_1,k_3,k_4,k_5,k_6=0}^\infty\frac{(-y_1)^{k_1}  (-y_1 y_3)^{k_3} (-y_3)^{k_4} (-y_1 y_2)^{k_5} (-y_2)^{k_6} }{k_1!\, k_3!\, k_4!\, k_5!\, k_6!} \nonumber\\
    &\quad \Gamma (k_1-3 \epsilon +3) \Gamma (k_1+k_3+k_4-k_6-2 \epsilon +3) \Gamma (k_4-k_5-k_6-\epsilon +1) \nonumber \\
    &\quad \frac{\Gamma (k_1+2 k_3+k_4+2 k_5+k_6+\epsilon ) \Gamma (k_3+k_5+2 \epsilon -1) \Gamma (-k_1-k_3-k_4+k_6+2 \epsilon -2)}{\Gamma (k_1+k_3+k_5-\epsilon +2) \Gamma (k_3+k_4-k_6+\epsilon )} \allowdisplaybreaks \nonumber\\[\baselineskip]
    &\phi^\prime_4 = \sum_{k_1,k_2,k_3,k_4,k_5,k_6=0}^\infty\frac{ (-y_1)^{k_1+k_3}  (-y_3)^{k_2+k_6} (-y_1 y_2)^{k_4}  (-y_2 y_3)^{k_5}}{k_1!\, k_2!\, k_3!\, k_4!\, k_5!\, k_6!} \nonumber\\
    &\quad \Gamma (k_1+k_2+k_3+2 k_4+2 k_5+k_6+1) \Gamma (k_1+k_2-3 \epsilon +3) \nonumber \\
    &\quad \Gamma (-k_2+k_3-k_5-k_6+\epsilon -1) \Gamma (-k_1-k_3-k_4+k_6+\epsilon -1) \allowdisplaybreaks \nonumber\\[\baselineskip]
    &\phi^\prime_5 = \sum_{k_1,k_2,k_3,k_4,k_5=0}^\infty\frac{ (-y_1)^{k_1} (-y_1 y_3)^{k_2} (-y_3)^{k_3}  (-y_1 y_2)^{k_4} (-y_2)^{k_5} }{k_1!\, k_2!\, k_3!\, k_4!\, k_5!} \nonumber\\
    &\quad \Gamma (k_1+k_2+k_3-k_5-2 \epsilon +2) \Gamma (-k_2-k_3+k_5-\epsilon +1) \Gamma (-k_1-k_2-k_4+\epsilon -1)  \nonumber \\
    &\quad \frac{\Gamma (k_1+2 k_2+k_3+2 k_4+k_5+\epsilon ) \Gamma (k_2+k_4+2 \epsilon -1) \Gamma (-k_1-k_2-k_3+k_5+2 \epsilon -1)}{\Gamma (-k_3+k_4+k_5+\epsilon ) \Gamma (-k_1+3 \epsilon -2)} \allowdisplaybreaks \nonumber\\[\baselineskip]
    &\phi^\prime_6 = \sum_{k_2,k_3,k_4,k_5,k_6=0}^\infty\frac{ (-y_3)^{k_2} (-y_2)^{k_3}  (-y_1)^{k_4}   (-y_2 y_3)^{k_5} (-y_1 y_3)^{k_6} }{k_2!\, k_3!\, k_4!\, k_5!\, k_6!} \nonumber\\
    &\quad \Gamma (k_2-3 \epsilon +3) \Gamma (k_2+k_3-k_4+k_5-2 \epsilon +3) \Gamma (k_3-k_4-k_6-\epsilon +1) \nonumber \\
    &\quad \frac{\Gamma (k_2+k_3+k_4+2 k_5+2 k_6+\epsilon ) \Gamma (-k_2-k_3+k_4-k_5+2 \epsilon -2) \Gamma (k_5+k_6+2 \epsilon -1)}{\Gamma (k_2+k_5+k_6-\epsilon +2) \Gamma (k_3-k_4+k_5+\epsilon )} \allowdisplaybreaks \nonumber\\[\baselineskip]
    &\phi^\prime_7 = \sum_{k_1,k_2,k_3,k_4,k_5,k_6=0}^\infty\frac{ (-y_2)^{k_1+k_3}(-y_3)^{k_2+k_5} (-y_1 y_2)^{k_4}  (-y_1 y_3)^{k_6}}{k_1!\, k_2!\, k_3!\, k_4!\, k_5!\, k_6!} \nonumber\\
    &\quad \Gamma (k_1+k_2+k_3+2 k_4+k_5+2 k_6+1) \Gamma (k_1+k_2-3 \epsilon +3) \nonumber \\
    &\quad  \Gamma (-k_1-k_3-k_4+k_5+\epsilon -1) \Gamma (-k_2+k_3-k_5-k_6+\epsilon -1) \allowdisplaybreaks \nonumber\\[\baselineskip]
    &\phi^\prime_8 = \sum_{k_1,k_3,k_4,k_5,k_6=0}^\infty\frac{(-y_3)^{k_1}  (-y_2)^{k_3} (-y_1)^{k_4} (-y_2 y_3)^{k_5}  (-y_1 y_3)^{k_6} }{k_1!\, k_3!\, k_4!\, k_5!\, k_6!} \nonumber\\
    &\quad \Gamma (k_1+k_3-k_4+k_5-2 \epsilon +2) \Gamma (-k_3+k_4-k_5-\epsilon +1) \Gamma (-k_1-k_5-k_6+\epsilon -1) \nonumber \\
    &\quad \frac{\Gamma (k_1+k_3+k_4+2 k_5+2 k_6+\epsilon ) \Gamma (-k_1-k_3+k_4-k_5+2 \epsilon -1) \Gamma (k_5+k_6+2 \epsilon -1)}{\Gamma (-k_3+k_4+k_6+\epsilon ) \Gamma (-k_1+3 \epsilon -2)} \allowdisplaybreaks \nonumber\\[\baselineskip]
    &\phi^\prime_9 = \sum_{k_1,k_2,k_3,k_4,k_6=0}^\infty\frac{  (-y_2)^{k_1} (-y_3)^{k_2} (-y_2 y_3)^{k_3} (-y_1 y_2)^{k_4}  (-y_1)^{k_6} }{k_1!\, k_2!\, k_3!\, k_4!\, k_6!} \nonumber\\
    &\quad \Gamma (k_1+k_2+k_3-k_6-2 \epsilon +2) \Gamma (-k_2-k_3+k_6-\epsilon +1) \Gamma (-k_1-k_3-k_4+\epsilon -1)  \nonumber \\
    &\quad \frac{\Gamma (k_1+k_2+2 k_3+2 k_4+k_6+\epsilon ) \Gamma (k_3+k_4+2 \epsilon -1) \Gamma (-k_1-k_2-k_3+k_6+2 \epsilon -1)}{\Gamma (-k_2+k_4+k_6+\epsilon ) \Gamma (-k_1+3 \epsilon -2)} \allowdisplaybreaks \nonumber\\[\baselineskip]
    &\phi^\prime_{10} = \sum_{k_1,k_2,k_3,k_4,k_5,k_6=0}^\infty\frac{ (-y_3)^{k_1+k_2} (-y_2)^{k_3+k_5} (-y_1)^{k_4+k_6} }{k_1!\, k_2!\, k_3!\, k_4!\, k_5!\, k_6!} \nonumber\\
    &\quad \Gamma (k_2-k_3+k_4-k_5-\epsilon +1) \Gamma (k_1+k_3-k_4-k_6-\epsilon +1) \nonumber \\
    &\quad  \Gamma (-k_1-k_2+k_5+k_6-\epsilon +1) \Gamma (k_1+k_2+k_3+k_4+k_5+k_6+2 \epsilon -1)
\end{align}
with $y_i = \frac{m_i^2}{m_1^2+m_2^2+m_3^2+p^2}$ and $s=m_1^2+m_2^2+m_3^2+p^2$. All these series converge for small values of $y_i$ and the series representation can be obtained by a very simple algorithm, which is a straightforward implementation of the steps described in the previous sections. In fact, some of these $\Gamma$-series are related to each other. One can reduce the whole system only to $\phi^\prime_1, \phi^\prime_2, \phi^\prime_5$ and $\phi^\prime_{10}$ by the relations
\begin{align}
    \phi^\prime_1 (y_1,y_2,y_3) &= \phi^\prime_3 (y_2,y_1,y_3) = \phi^\prime_6 (y_1,y_3,y_2) \\
    \phi^\prime_2 (y_1,y_2,y_3) &= \phi^\prime_4 (y_1,y_3,y_2) = \phi^\prime_7 (y_3,y_2,y_1) \\
    \phi^\prime_5 (y_1,y_2,y_3) &= \phi^\prime_8 (y_2,y_3,y_1) = \phi^\prime_9 (y_2,y_1,y_3) \point	
\end{align}
By these relations one can also verify the expected symmetry of the Feynman integral under the permutation $y_1 \leftrightarrow y_2 \leftrightarrow y_3$.

In order to expand the Feynman integral $\mathcal I_\Aa$ for small values of $\epsilon>0$ one can use the methods described in \cref{sec:epsilonExpansion}. The correctness of these results was checked numerically by \softwareName{Fiesta} \cite{SmirnovFIESTAOptimizedFeynman2016} with arbitrary kinematics and masses, satisfying $y_i< 0.5$ up to the order $\epsilon^2$. For small values of $y_i$ the resulting series converge fast, such that for a good approximation one only has to take the first summands into account.


\chapter{Kinematic singularities} 

\label{ch:singularities}
    

Feynman integrals are usually understood as functions depending on various observables and parameters. Even though the physical observables do not take complex values in measurements, these Feynman integrals can only be thought consistently in complex domains. By considering Feynman integrals as complex functions and examining their analytic properties, surprising connections were found, for example dispersion relations and Cutkosky's rules \cite{CutkoskySingularitiesDiscontinuitiesFeynman1960, BlochCutkoskyRulesOuter2015}. As conjectured for the first time by T. Regge, it seems that these connections are not just arbitrary and indicate a more fundamental relation between the monodromy group and the fundamental group for Feynman integrals in the context of Picard-Lefschetz theory (see e.g.\ \cite{SpeerGenericFeynmanAmplitudes1971, PonzanoMonodromyRingsClass1969} for Regge's perspective). Apart from these conceptual questions, the analytic structure plays also an important role in many practical approaches, 
for example sector decomposition \cite{BinothNumericalEvaluationMultiloop2004, AnastasiouEvaluatingMultiloopFeynman2007, BorowkaNumericalEvaluationMultiloop2013}, Steinman relations \cite{CahillOpticalTheoremsSteinmann1975} or certain methods in QCD \cite{LibbyMassDivergencesTwoparticle1978}.

Hence, formally speaking, a Feynman integral maps a Feynman graph $\Gamma$ containing loops to a multivalued function $\mathcal I_\Gamma(\nuu,z)$, which depends on several variables $z$ and parameters $\nuu$. As specific representations of these functions $\mathcal I_\Gamma$, we can write down different kinds of integrals (see \cref{sec:ParametricFeynmanIntegrals}), each valid only on a restricted domain. Alternatively, we can express $\mathcal I_\Gamma$ also by series representations as done in \cref{thm:FeynSeries} or by means of various other functions (see \cref{sec:EulerIntegrals}). Thus, we do not want to use the term ``Feynman integral'' to refer to individual integrals, but rather to the analytic, common continuation of these representations to a maximal domain for the parameters and the variables.

In the process of analytic continuation to complex numbers there will arise two kinds of singularities: singularities in the parameters $\nuu$ and singularities in the variables $z$. The first type were already discussed in \cref{sec:DimAnaReg}. These singularities are only poles and $\mathcal I_\Gamma(\nuu,z)$ will be a meromorphic function with respect to $\nuu\in\mathbb C^{n+1}$. The analytic behaviour with respect to $\nuu$ was fully described by \cref{thm:meromorphicContinuation}. Considerably more difficult is the situation for the variables $z$ of the Feynman integral. We will find certain combinations of momenta $p$ and masses $m$, such that the Feynman integral has lacking analyticity or differentiability. This chapter will be devoted to the study of those singularities, which we will call \textit{kinematic singularities} or \textit{Landau singularities}.

Up to now, we restricted ourselves to Euclidean kinematics $\Re (z_j) >0$. Equivalently, we assumed that norms of momenta are real numbers, e.g.\ for those $\left| \sum_{e_i\notin F} \pm \widehat q_i \right| \in\mathbb R$ appearing in the description of Symanzik polynomials (\cref{thm:SymanzikPolynomialsTreeForest}). Therefore, when taking also the Minkowskian region into account, we have to consider those norms to be complex or more generally, we have to assume $z\in\mathbb C^N$. Hence, the analytic continuation of variables $z$ to complex numbers is indispensable when considering Minkowskian momenta\footnote{At first sight, it seems sufficient to continue the variables from $z_j\in\mathbb R_{>0}$ to the real numbers $z_j\in\mathbb R$. However, for real numbers we will expect poles in the integrands of Feynman integrals, e.g.\ in \cref{eq:FeynmanParSpFeynman}. Hence, we have to elude those poles by going to the complex plane. Therefore, we have to shift the integration contour in the complex region, or equivalently we can assume the variables $z_j\in\mathbb C$ to be complex. A minimal version of introducing complex numbers to \cref{eq:FeynmanMomSp} is the so-called $i\varepsilon$ prescription. We will elaborate on this in \cref{sec:Coamoebas}.}. \bigskip

Before studying the kinematic singularities on the level of Feynman integrals, let us first encounter this subject from a different angle, which will motivate the appearance of thresholds. This perspective was developed in the '60s, e.g.\ in \cite{GunsonUnitarityMassShell1965, CosterPhysicalRegionDiscontinuity1969, CosterPhysicalRegionDiscontinuity1970}, and we will recall the very concise summary in \cite{HannesdottirWhatVarepsilonSmatrix2022}. As aforementioned in \cref{sec:FeynmanIntegralsIntro} the $S$-matrix is the central object of interest, describing the probabilities for certain events in a scattering experiment. From the conservation of probabilities $S^\dagger S=\mathbbm 1$ and the separation of the trivial scattering $\gls{Smatrix}=\mathbbm 1 + i \gls{Tmatrix}$, we will obtain
\begin{align}
	T T^\dagger = \frac{1}{i} \left(T - T^\dagger\right) = 2 \Im (T) \label{eq:OpticalTheorem} \comma
\end{align}
which is often referred as the optical theorem. From \cref{eq:OpticalTheorem} we will obtain $T^\dagger = (\mathbbm 1 +i T)^{-1} T = i \sum_{k\geq 0} (-iT)^{k+1}$ by Neumann series. Hence, we have
\begin{align}
	 \Im (T) = \frac{1}{2i} \left(T - T^\dagger\right) = - \frac{1}{2} \sum_{k\geq 2}(-iT)^k \label{eq:SMatrixImSeries} \point
\end{align}
Every term in this series stands for a sequence of $k$ separated, non-trivial scattering processes. In this manner, \cref{eq:SMatrixImSeries} is an expression of the fact that we typically cannot determine in a scattering experiment whether it is a single scattering process or a chain of such processes\footnote{This holds independently of the indistinguishability of particles in QFTs and is simply an effect of the experimental setup.}. This chain of processes will be connected by real (on-shell), intermediate particles. Hence, we will only have processes in this chain, if they are kinematically allowed, i.e.\ if there is enough center of mass energy to produce these intermediate particles. Therefore, from \cref{eq:SMatrixImSeries} we will expect ``jumps'' in the imaginary part of $T$ when exceeding certain center of mass energies, such that a new chain of processes is kinematically allowed.

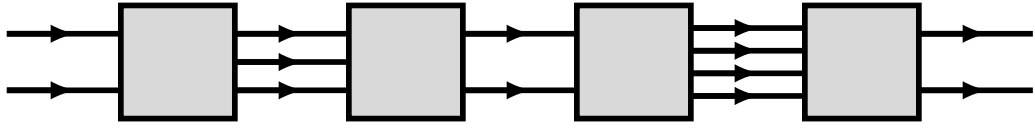
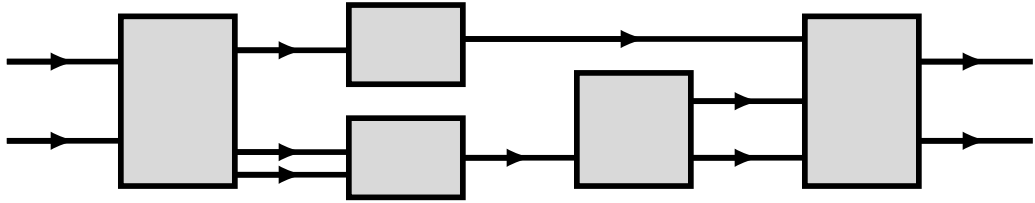
\begin{figure}[hbt]
	\centering
	\begin{subfigure}{\textwidth}
		\centering
		\begin{tikzpicture}[thick, scale=1.5]
			\filldraw[draw=black, line width=0.75mm, fill = gray!30] (1,0) rectangle ++(1,1);
			\filldraw[draw=black, line width=0.75mm, fill = gray!30] (3,0) rectangle ++(1,1);
			\filldraw[draw=black, line width=0.75mm, fill = gray!30] (5,0) rectangle ++(1,1);
			\filldraw[draw=black, line width=0.75mm, fill = gray!30] (7,0) rectangle ++(1,1);
			
			\draw [-latex, line width=0.75mm] (0,0.25) -- ++(0.6,0);
			\draw [-, line width=0.75mm]      (0,0.25) -- ++(1,0);
			\draw [-latex, line width=0.75mm] (0,0.75) -- ++(0.6,0);
			\draw [-, line width=0.75mm]      (0,0.75) -- ++(1,0);
			
			\draw [-latex, line width=0.75mm] (2,0.25) -- ++(0.6,0);
			\draw [-, line width=0.75mm]      (2,0.25) -- ++(1,0);
			\draw [-latex, line width=0.75mm] (2,0.5) -- ++(0.6,0);
			\draw [-, line width=0.75mm]      (2,0.5) -- ++(1,0);
			\draw [-latex, line width=0.75mm] (2,0.75) -- ++(0.6,0);
			\draw [-, line width=0.75mm]      (2,0.75) -- ++(1,0);
			
			\draw [-latex, line width=0.75mm] (4,0.25) -- ++(0.6,0);
			\draw [-, line width=0.75mm]      (4,0.25) -- ++(1,0);
			\draw [-latex, line width=0.75mm] (4,0.75) -- ++(0.6,0);
			\draw [-, line width=0.75mm]      (4,0.75) -- ++(1,0);
			
			\draw [-latex, line width=0.75mm] (6,0.2) -- ++(0.6,0);
			\draw [-, line width=0.75mm]      (6,0.2) -- ++(1,0);
			\draw [-latex, line width=0.75mm] (6,0.4) -- ++(0.6,0);
			\draw [-, line width=0.75mm]      (6,0.4) -- ++(1,0);
			\draw [-latex, line width=0.75mm] (6,0.6) -- ++(0.6,0);
			\draw [-, line width=0.75mm]      (6,0.6) -- ++(1,0);
			\draw [-latex, line width=0.75mm] (6,0.8) -- ++(0.6,0);
			\draw [-, line width=0.75mm]      (6,0.8) -- ++(1,0);
			
			\draw [-latex, line width=0.75mm] (8,0.25) -- ++(0.6,0);
			\draw [-, line width=0.75mm]      (8,0.25) -- ++(1,0);
			\draw [-latex, line width=0.75mm] (8,0.75) -- ++(0.6,0);
			\draw [-, line width=0.75mm]      (8,0.75) -- ++(1,0);
		\end{tikzpicture}
		\caption{Illustration of a chain of processes producing a normal threshold.} \label{fig:normalThreshold}
	\end{subfigure} \\
    \vspace{1cm}
    \begin{subfigure}{\textwidth}
		\centering
		\begin{tikzpicture}[thick, scale=1.5]
			\filldraw[draw=black, line width=0.75mm, fill = gray!30] (1,0) rectangle ++(1,1.5);
			\filldraw[draw=black, line width=0.75mm, fill = gray!30] (3,-0.1) rectangle ++(1,0.7);
			\filldraw[draw=black, line width=0.75mm, fill = gray!30] (3,0.9) rectangle ++(1,0.7);
			\filldraw[draw=black, line width=0.75mm, fill = gray!30] (5,0) rectangle ++(1,1);
			\filldraw[draw=black, line width=0.75mm, fill = gray!30] (7,0) rectangle ++(1,1.5);
			
			\draw [-latex, line width=0.75mm] (0,0.4) -- ++(0.6,0);
			\draw [-, line width=0.75mm]      (0,0.4) -- ++(1,0);
			\draw [-latex, line width=0.75mm] (0,1.1) -- ++(0.6,0);
			\draw [-, line width=0.75mm]      (0,1.1) -- ++(1,0);
			
			\draw [-latex, line width=0.75mm] (2,0.1) -- ++(0.6,0);
			\draw [-, line width=0.75mm]      (2,0.1) -- ++(1,0);
			\draw [-latex, line width=0.75mm] (2,0.3) -- ++(0.6,0);
			\draw [-, line width=0.75mm]      (2,0.3) -- ++(1,0);
			\draw [-latex, line width=0.75mm] (2,1.2) -- ++(0.6,0);
			\draw [-, line width=0.75mm]      (2,1.2) -- ++(1,0);
			
			\draw [-latex, line width=0.75mm] (4,0.25) -- ++(0.6,0);
			\draw [-, line width=0.75mm]      (4,0.25) -- ++(1,0);
			\draw [-latex, line width=0.75mm] (4,1.3) -- ++(1.6,0);
			\draw [-, line width=0.75mm]      (4,1.3) -- ++(3,0);
			
			\draw [-latex, line width=0.75mm] (6,0.25) -- ++(0.6,0);
			\draw [-, line width=0.75mm]      (6,0.25) -- ++(1,0);
			\draw [-latex, line width=0.75mm] (6,0.75) -- ++(0.6,0);
			\draw [-, line width=0.75mm]      (6,0.75) -- ++(1,0);
			
			\draw [-latex, line width=0.75mm] (8,0.4) -- ++(0.6,0);
			\draw [-, line width=0.75mm]      (8,0.4) -- ++(1,0);
			\draw [-latex, line width=0.75mm] (8,1.1) -- ++(0.6,0);
			\draw [-, line width=0.75mm]      (8,1.1) -- ++(1,0);
		\end{tikzpicture}
		\caption{Illustration of a ``chain'' of processes producing an anomalous threshold.} \label{fig:anomalousThreshold}
	\end{subfigure}
	\caption[Illustration of normal/anomalous thresholds from $S$-matrix theory]{Illustration of normal/anomalous thresholds from $S$-matrix theory. The figures are oriented towards \cite{CosterPhysicalRegionDiscontinuity1969,CosterPhysicalRegionDiscontinuity1970}. Every rectangle stands for a specific type of a process, i.e.\ a sum of Feynman graphs.} \label{fig:thresholds}
\end{figure}

We can classify those chains of scattering processes in two different types. In the first situation all intermediate, real particles outgoing of an intermediate process will join the next process (\cref{fig:normalThreshold}). Jumps in $\Im (T)$ relating to this situation are called \textit{normal thresholds}. They appear if the energy exceeds a squared sum of particle masses in the considered theory. The second possible situation we can imagine is, that the outgoing intermediate particles participate in several distinct further processes (\cref{fig:anomalousThreshold}). The analysis of those chains is much more involved and jumps in $\Im (T)$ relating to this situation are called \textit{anomalous thresholds}. 

Since the transfer matrix $T$ is built by algebraic expressions related to Feynman graphs, we will find the same analytic behaviour also on the level of Feynman integrals. Thus, Feynman integrals will also have singularities with respect to their variables $z$, whenever the energies of incoming momenta allow particles to go on-shell. Hence, those thresholds will also be apparent in Feynman integrals, and we will focus mainly on the anomalous thresholds, which are also known as Landau singularities or kinematic singularities. \bigskip

To begin with, we want to have a look at the kinematic singularities from the perspective of momentum space Feynman integrals \cref{eq:FeynmanMomSp}, where we will now allow the norm of momenta to be complex $|q_i|\in\mathbb C$. In these integrals those singularities may appear if some inverse propagators $D_i=q_i^2+m_i^2$ vanish and additionally the integration contour is trapped in such a way, that we are not able to elude the singularity by deforming the contour in the complex plane. These situations are called pinches and if they appear, the equations
\begin{align}
    x_i D_i = 0 \qquad \text{for all} \quad i=1,\ldots,n \label{eq:MomLandau1}\\
    \frac{\partial}{\partial k_j} \sum_{i=1}^n x_i D_i = 0 \qquad \text{for all} \quad j=1,\ldots,L \label{eq:MomLandau2}
\end{align}
have a solution for $x\in\mathbb C^n\setminus\{0\}$ and $k\in\mathbb C^{L\times d}$. We will call every point $z$ admitting such a solution a \textit{Landau critical point}. Landau critical points do not depend on the choice of internal momenta or their orientation.\bigskip

The equations \cref{eq:MomLandau1}, \cref{eq:MomLandau2} are called \textit{Landau equations} and were independently derived in 1959 by Bjorken \cite{BjorkenExperimentalTestsQuantum1959}, Landau \cite{LandauAnalyticPropertiesVertex1959} and Nakanishi \cite{NakanishiOrdinaryAnomalousThresholds1959}. We recommend \cite{MizeraCrossingSymmetryPlanar2021} for a comprehensive summary of the known research results in Landau's analysis of over 60 years and restrict ourselves to a very short historical overview. For a summary of the first steps of this subject from the 1960s we refer to the monograph of Eden et al. \cite{EdenAnalyticSmatrix1966}. A much more mathematically profound investigation was carried out by Pham et al. in terms of homology theory \cite{HwaHomologyFeynmanIntegrals1966, PhamSingularitiesIntegrals2011}. Pham's techniques have recently brought back into focus by Bloch and Kreimer \cite{BlochCutkoskyRulesOuter2015}. An alternative approach avoiding the introduction of homology theory was initiated by Regge, Ponzano, Speer and Westwater \cite{PonzanoMonodromyRingsClass1969, SpeerGenericFeynmanAmplitudes1971}. Their work was also the starting point for a mathematical treatment due to Kashiwara and Kawai \cite{KashiwaraConjectureReggeSato1976}, Sato \cite{SatoRecentDevolpmentHyperfunction1975} and Sato et al. \cite{SatoHolonomyStructureLandau1977}, which are all heavily based on $D$-modules. Currently, there is a renewed interested in Landau varieties, and we refer to \cite{CollinsNewCompleteProof2020, MuhlbauerMomentumSpaceLandau2020, MizeraLandauDiscriminants2021} for a selection of modern approaches as well as \cite{BonischAnalyticStructureAll2021, BonischFeynmanIntegralsDimensional2021}, where the analytic structure of specific Feynman integrals was studied in the context of differential equations by methods of topological string theory on Calabi-Yau manifolds. However, as already mentioned in \cite{MizeraCrossingSymmetryPlanar2021, CollinsNewCompleteProof2020} there is dismayingly little known about kinematic singularities. \bigskip

Strictly speaking the Landau equations \cref{eq:MomLandau1}, \cref{eq:MomLandau2} are neither necessary nor sufficient conditions to have a singularity of the analytic continuated Feynman integral. Thus, there are on the one hand singularities which do not correspond to a solution of Landau equations. Those singularities are often called second-type singularities or non-Landau singularities and were found for the first time in \cite{CutkoskySingularitiesDiscontinuitiesFeynman1960}. And on the other hand, not all solutions of Landau equations result in a singularity \cite{CollinsNewCompleteProof2020}. However, Landau equations are necessary and sufficient for the appearance of a trapped contour \cite{CollinsNewCompleteProof2020} and can be a necessary condition for certain restrictions on Feynman integrals. Apart from the distinction between normal thresholds and anomalous thresholds, certain further distinctions are common. Singularities with all $x_i\neq 0$ in \cref{eq:MomLandau1}, \cref{eq:MomLandau2} are known as \textit{leading singularities}, and we will further distinguish between solutions with real positive $x_i\geq 0$ and general complex $x_i\in\mathbb C$. Landau critical points corresponding to solutions with $x\in (\mathbb{C}\setminus\mathbb R_{>0})^n$ are also known as \textit{pseudo thresholds}.\bigskip

As pointed out above, Landau equations are understood to determine when internal (virtual) particles going on-shell. Hence, the Feynman diagram describes then an interaction between real particles with a specific lifetime \cite{ColemanSingularitiesPhysicalRegion1965}. The extraordinary meaning for Feynman integrals owing the Landau singularities also from various methods, which construct the whole Feynman integral on the basis of these singularities. All these methods root more or less in the optical theorem and the corresponding unitarity cuts, introduced by Cutkosky \cite{CutkoskySingularitiesDiscontinuitiesFeynman1960} shortly after Landau's article. However, it should be mentioned that Cutkosky's rules are unproven up today. We refer to \cite{BlochCutkoskyRulesOuter2015} for the recent progress of giving a rigorous proof of Cutkosky's rules. \bigskip

In this chapter, we want to take a look at kinematic singularities from the perspective of $\Aa$-hypergeometric systems. Especially, we can combine this with the considerations about the singular locus of $\Aa$-hypergeometric systems $\Sing(H_\Aa(\nuu))$ from \cref{ssec:SingularLocusAHyp}. This new point of view enables us to give a mathematical rigorous description of kinematic singularities. Furthermore, we will also notice certain discrepancies to the classical treatment of Landau varieties. Inter alia, it appears that in the common factorization of Landau varieties into leading Landau varieties of subgraphs certain non-trivial contributions were overlooked. Moreover, by means of the results in $\Aa$-hypergeometric theory, the most considerations can be reduced to polytopal combinatorics instead of algebraic topology as done in \cite{PhamSingularitiesIntegrals2011, HwaHomologyFeynmanIntegrals1966, MuhlbauerMomentumSpaceLandau2020, BrownPeriodsFeynmanIntegrals2010}. We also want to draw attention to the very interesting work of \cite{MizeraLandauDiscriminants2021}, published shortly after the article  \cite{KlausenKinematicSingularitiesFeynman2022} that constitutes the basis of this chapter.\bigskip

We will begin in \cref{sec:LandauVarieties} with a discussion of the Landau variety, which is the central object for the analytic structure of Feynman integrals, and we will relate this variety to principal $A$-determinants. Within this framework we will also notice overlooked contributions in Landau varieties for graphs beyond one loop or the banana family. Moreover, this relation to the principal $A$-determinant enables us to give an efficient but indirect determination of Landau varieties by means of the \HKp. However, the Landau variety will not describe all kinematic singularities, and we will consider all remaining singularities, also known as second-type singularities, in \cref{sec:2ndtypeSingularities}. By those methods, we will exemplarily determine the Landau variety of the dunce's cap graph in \cref{sec:ExampleDuncesCap}. 

Last but not least we will give a glimpse to the monodromy structure of Feynman integrals. Since kinematic singularities result in a non-trivial monodromy, Feynman integrals become multivalued functions. Unfortunately, the sheet structure of Feynman integrals is usually very sophisticated, and we will propose a related concept which is slightly simpler in \cref{sec:Coamoebas} called the coamoeba.


\section{Landau varieties} \label{sec:LandauVarieties}


Landau varieties are the central objects in the study of kinematic singularities. Unfortunately, Landau varieties come with several subtleties. In this section we want to give a definition of Landau varieties, and we will also discuss several of those subtleties. Furthermore, we want to relate Landau varieties to principal $A$-determinants. This will allow us to draw various consequences from the $\Aa$-hypergeometric theory to Landau varieties. In particular, we will see certain discrepancies in the classical approach of treating Landau varieties. But before giving a definition of Landau varieties, we will start with a reformulation of the Landau equations.\bigskip

The Landau equations stated in \cref{eq:MomLandau1}, \cref{eq:MomLandau2} involve the integration variables in momentum space as well as the integration variables of parametric space. There are also equivalent equations, which are stated in the parametric variables only. As aforementioned, the second Symanzik polynomial $\Ff (p,x)$ can be written as a discriminant of $\Lambda (k,p,x)\, \Uu(x)$ with respect to the loop momenta $k$, where $\Lambda(k,p,x) = \sum_{i=1}^n x_i D_i$ was defined in \cref{eq:LambdaxDrelation}. Hence, it is immediately clear, that Landau's equations  \cref{eq:MomLandau1}, \cref{eq:MomLandau2} will be conditions on the second Symanzik polynomial $\Ff$. Instead of eliminating $k$ from \cref{eq:MomLandau1}, \cref{eq:MomLandau2}, one can show the Landau equations in parametric space also directly by considering the parametric integral representations \cite{NakanishiOrdinaryAnomalousThresholds1959}.
      
\begin{lemma}[Parametric space Landau equations \cite{NakanishiGraphTheoryFeynman1971, EdenAnalyticSmatrix1966}]     
    \label{lem:ParLandau}
    Under the assumption $\Uu(x)\neq 0$, a point $z\in\mathbb C^N$ is a Landau critical point, if and only if the equations
    \begin{align}
    	x_i \pd{\Ff}{x_i} &= 0 \quad \text{for } i=1,\ldots,n \label{eq:LandauEqParameterSpA} \\
        \Ff &= 0 \label{eq:LandauEqParameterSpB}
    \end{align}
    have a solution in $x\in\mathbb P^{n-1}_{\mathbb C}$. Solutions with $\Uu(x)=0$ are connected with the second-type singularities, which we will examine in \cref{sec:2ndtypeSingularities}.
\end{lemma}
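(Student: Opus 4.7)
The strategy is to eliminate the loop momenta $k$ from Landau's original equations \eqref{eq:MomLandau1}--\eqref{eq:MomLandau2} via the ``envelope'' relation between $\Ff$, $\Uu$, and $\Lambda(k,p,x) = \sum_{i=1}^n x_i D_i = k^\top M k + 2 Q^\top k + J$ that was established in the proof of \cref{thm:SchwingerRepresentation}. Recall from there that when $\Uu = \det(M)\neq 0$, the matrix $M$ is invertible, so the unique stationary point of $\Lambda$ in $k$ is $k^\star(x,p) = -M^{-1} Q$, and substitution yields the key identity
\begin{equation}
\Lambda(k^\star(x,p), p, x) \;=\; J - Q^\top M^{-1} Q \;=\; \frac{\Ff(p,x)}{\Uu(x)}. \label{eq:enveloperel}
\end{equation}
Because $k^\star$ is a critical point of $\Lambda$ in $k$, the usual envelope argument (Danskin's lemma) gives $\partial_{x_i}\!\left[\Lambda(k^\star,p,x)\right] = (\partial_{x_i}\Lambda)(k^\star,p,x) = D_i(k^\star)$ for each $i$.

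For the direction ``momentum-space $\Rightarrow$ parametric-space'', assume $(x,k)\in(\mathbb C^n\setminus\{0\})\times \mathbb C^{L\times d}$ solves \eqref{eq:MomLandau1}--\eqref{eq:MomLandau2}. The second Landau equation says $\partial_{k_j}\Lambda = 0$, so $k=k^\star(x,p)$ under the assumption $\Uu(x)\neq 0$. Summing the first Landau equation $x_i D_i = 0$ gives $\Lambda(k^\star,p,x)=0$, hence $\Ff=0$ by \eqref{eq:enveloperel}. Differentiating the quotient $\Ff/\Uu$ and using $\Ff=0$ to kill the $(\partial\Uu)$-term yields $\partial_{x_i}(\Ff/\Uu) = \Uu^{-1}\partial_{x_i}\Ff$; combining this with the envelope identity and $x_iD_i(k^\star)=0$ gives $x_i\partial_{x_i}\Ff = \Uu\cdot x_i D_i(k^\star)=0$, which are exactly the parametric equations \eqref{eq:LandauEqParameterSpA}--\eqref{eq:LandauEqParameterSpB}. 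Since not all $x_i$ vanish, the solution descends to $\mathbb P^{n-1}_{\mathbb C}$.

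For the converse, start from a solution $x\in\mathbb P^{n-1}_{\mathbb C}$ of \eqref{eq:LandauEqParameterSpA}--\eqref{eq:LandauEqParameterSpB} with $\Uu(x)\neq 0$ and \emph{define} $k := k^\star(x,p) = -M^{-1}Q$; this automatically enforces the second Landau equation $\partial_{k_j}\Lambda=0$. Because $\Ff=0$, the same computation as above gives $\partial_{x_i}(\Ff/\Uu) = \Uu^{-1}\partial_{x_i}\Ff$, so $x_i\partial_{x_i}(\Ff/\Uu) = 0$; the envelope identity then converts this to $x_i D_i(k^\star) = 0$, which is the first Landau equation. Hence $(x,k^\star)$ is a Landau critical configuration.

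The main technical point---and the only place one must be careful---is the envelope/chain-rule computation and the harmless division by $\Uu$, both of which are clean under the standing hypothesis $\Uu(x)\neq 0$; the case $\Uu(x)=0$ is explicitly excluded here and is deferred to the second-type analysis in \cref{sec:2ndtypeSingularities}. Homogeneity of $\Ff$ in $x$ guarantees that the conditions \eqref{eq:LandauEqParameterSpA}--\eqref{eq:LandauEqParameterSpB} are well defined on $\mathbb P^{n-1}_{\mathbb C}$, so no projective ambiguity appears.
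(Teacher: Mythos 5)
Your proposal is correct and follows essentially the same route as the paper: both directions rest on the identity $\Lambda(-M^{-1}Q,p,x)=\Ff/\Uu$ valid for $\Uu=\det M\neq 0$, with the envelope argument converting $x_i D_i=0$ into $x_i\partial_{x_i}\Ff=0$ and vice versa, and the converse handled by explicitly choosing $k=-M^{-1}Q$. No gaps; your treatment of the division by $\Uu$ and the projective well-definedness is, if anything, slightly more explicit than the paper's.
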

\begin{proof}
    ``$\Rightarrow$'':
    Consider $\Lambda=k^\top\! M k + 2 Q^\top\! k + J$ from equation \cref{eq:LambdaxDrelation}. We find
    \begin{align}
        \pd{\Lambda(k,p,x)}{k} &= 2 Mk + 2 Q^\top \point
    \end{align}
    By the assumption $\Uu\neq 0$, $M$ is invertible and thus $\pd{\Lambda}{k_j}=0$ from \cref{eq:MomLandau2} implies $k=-M^{-1}Q^\top$. Inserting this equation for $k$ in  \cref{eq:LambdaxDrelation} and comparing with \cref{eq:SymanzikPolynomialsMatrices} we find $\Lambda(-M^{-1}Q^\top\!,p,x) = \Ff/\Uu$. Hence, we have shown that $\Ff$ is the discriminant of $\Uu \Lambda$ with respect to $k$. Therefore, $\Lambda=0$ and $\pd{\Lambda}{k_j}=0$ implies $\Ff=0$ and furthermore
    \begin{align}
        x_j \pd{\Ff}{x_j} = x_j \frac{\partial}{\partial x_j} \left( \Uu \Lambda \right) = \Uu x_j D_j = 0 \point
    \end{align}
    ``$\Leftarrow$'': Since the equations \cref{eq:MomLandau1},\cref{eq:MomLandau2} contain more variables than in parameter space, we can always find a value $k^\prime$, s.t. $\Lambda(k^\prime,p,x) = \Ff/\Uu$, without restricting the possible solutions for $x$. This can also be vindicated by the fact that the Feynman integral \cref{eq:FeynmanMomSp} is invariant under linear transformations of loop momenta. We conclude
    \begin{align}
        \frac{\partial\Lambda}{\partial k'} &=  2Mk^\prime + 2Q^\top = 0\\
        x_j D_j &= x_j \frac{\partial \Lambda}{\partial x_j} = x_j \frac{\partial \Uu^{-1}}{\partial x_j} \Ff + x_j \frac{\partial \Ff}{\partial x_j} \Uu = 0 \point
    \end{align}
\end{proof}

Note, that by Euler's theorem one of the $n+1$ equations in \cref{lem:ParLandau} is redundant, which is the reason why we look for solutions in projective space. \bigskip

According to \cite{MizeraLandauDiscriminants2021, BrownPeriodsFeynmanIntegrals2010, PhamSingularitiesIntegrals2011, MuhlbauerMomentumSpaceLandau2020} we will call the variety defined by the Zariski closure of all Landau critical points the \textit{Landau variety} $\mathcal L(\mathcal I_\Gamma)$. Due to Riemann's second removable theorem \cite{KaupHolomorphicFunctionsSeveral1983}, all singularities corresponding to a part of the Landau variety with $\codim \mathcal L(\mathcal I_\Gamma) > 1$ are removable singularities. Hence, we can focus on the codimension one part of $\mathcal L(\mathcal I_\Gamma)$, which we will denote by $\gls{Landau1}$. Based on the Landau equations in parameter space (\cref{lem:ParLandau}), we can directly read off the following relation between the Landau variety and the principal $A$-determinant (see \cref{ssec:principalAdet}). The following theorem can also be understood as an alternative definition of Landau varieties.

\begin{theorem}[Landau variety] \label{thm:LandauVar}
    Let $\Ff\in\mathbb C[x_1,\ldots,x_n]$ be the second Symanzik polynomial of a Feynman graph $\Gamma$ and let $\Aa_\Ff\subset \mathbb Z^n$ be the support of $\Ff$, i.e.\ $\Ff = \sum_{a^{(j)}\in \Aa_\Ff} z_j x^{a^{(j)}}$. The Landau variety is given by the (simple) principal $A$-determinant of $\Ff$
    \begin{align}
        \mathcal L_1(\mathcal I_\Gamma) = \Var\!\left( E_{\Aa_\Ff}(\Ff)\right) =  \Var\!\left( \widehat E_{\Aa_\Ff}(\Ff)\right) \point
    \end{align}
    In particular, $\mathcal L_1(\mathcal I_\Gamma)$ is independent of the parameters $\nuu$.
\end{theorem}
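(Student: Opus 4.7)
The plan is to match the parametric Landau equations in \cref{lem:ParLandau} term-by-term with the defining system of the principal $A$-determinant, and then use the factorization of $E_{\Aa_\Ff}(\Ff)$ into $A$-discriminants (\cref{thm:pAdet-factorization}) to organize the matching according to the stratification of $\Newt(\Ff)$ by its faces.

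First I would start from \cref{lem:ParLandau}: under the assumption $\Uu(x)\neq 0$, a point $z\in\mathbb C^N$ is a Landau critical point iff the system $\Ff = 0$ and $x_i\,\partial_{x_i}\Ff = 0$ for $i=1,\ldots,n$ has a solution in $x\in\mathbb P^{n-1}_{\mathbb C}$. Comparing this with the definition of the principal $A$-determinant in \cref{eq:principalAdetResultant}, namely
\begin{align*}
    E_{\Aa_\Ff}(\Ff) = R_{\Aa_\Ff}\!\left(\Ff,\, x_1\,\partial_{x_1}\Ff,\,\ldots,\,x_n\,\partial_{x_n}\Ff\right),
\end{align*}
shows that these are precisely the same defining equations. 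Thus, in the ``interior'' case when all $x_i\neq 0$ (i.e.\ a \emph{leading} Landau singularity with carrier $\tau = \Newt(\Ff)$), the condition is equivalent to the vanishing of the $A$-discriminant $\Delta_{\Aa_\Ff}(\Ff)$, by the characterization of $A$-discriminants in \cref{ssec:Adiscriminants}.

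Next I would handle solutions with some $x_i = 0$. If $\tau \subseteq \Newt(\Ff)$ is the carrier of such a solution, the equations with $i\notin\tau$ are satisfied trivially, while the remaining equations reduce to $\Ff_\tau(x) = 0$ and $\partial_{x_i}\Ff_\tau = 0$ for $i\in\tau$, with a solution in the torus $(\mathbb C^*)^{|\tau|}$. By definition this is exactly the condition $\Delta_{A\cap\tau}(\Ff_\tau) = 0$. Taking the Zariski closure of the union over all nonempty faces $\tau$ (and noting that $\Ff$ is quasi-homogeneous so one Landau equation is redundant, matching the projectivization in \cref{lem:ParLandau}), the set of Landau critical points is exactly
\begin{align*}
    \bigcup_{\emptyset\neq\tau\subseteq\Newt(\Ff)} \Var\!\left(\Delta_{A\cap\tau}(\Ff_\tau)\right).
\end{align*}
By the factorization \cref{thm:pAdet-factorization}, this union coincides with $\Var\!\left(E_{\Aa_\Ff}(\Ff)\right)$, and since multiplicities $\mu(A,\tau)$ do not affect the zero set, also with $\Var\!\left(\widehat E_{\Aa_\Ff}(\Ff)\right)$. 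Restricting to codimension one (all higher-codimension components give removable singularities by Riemann's second removable theorem) yields $\mathcal L_1(\mathcal I_\Gamma)$.

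The main obstacle is the handling of the hypothesis $\Uu(x)\neq 0$ in \cref{lem:ParLandau}: faces $\tau$ on which $\Uu_\tau$ vanishes identically correspond to second-type singularities and are not captured by the derivation above; these need a separate treatment (postponed to \cref{sec:2ndtypeSingularities}). A secondary technical point is to verify that the matching of solutions with $x\in\mathbb P^{n-1}_{\mathbb C}$ to solutions of the $A$-resultant system respects the face stratification without overcounting — this is precisely what the combinatorial content of \cref{thm:pAdet-factorization} provides. The parameter-independence assertion is then immediate from the definition of $E_{\Aa_\Ff}(\Ff)$, which depends only on the coefficients $z$ of $\Ff$.
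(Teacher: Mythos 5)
There is a genuine gap in the middle of your argument, and it is precisely the pitfall the paper devotes \cref{lem:SubgraphPolynomialsVsTruncated} and the surrounding discussion to. Your stratification of solutions ``with some $x_i=0$'' identifies a face $\tau$ with the set of non-vanishing coordinates, so the polynomial you obtain on each stratum is the subgraph polynomial $\Ff|_{\{x_i=0,\,i\notin\tau\}}$. Every subgraph polynomial is a truncated polynomial of some face of $\Newt(\Ff)$, but the converse fails: faces are subsets of the monomial labels $j\in\{1,\ldots,N\}$, not of the variable indices $i\in\{1,\ldots,n\}$, and in general there are far more faces than coordinate subspaces (the dunce's cap truncations $C_{12}$, $C_{14}$ in \cref{sec:ExampleDuncesCap} are explicit examples of faces with non-trivial discriminants that arise from no choice of vanishing $x_i$). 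Hence your intermediate claim that the set of Landau critical points equals $\bigcup_{\emptyset\neq\tau\subseteq\Newt(\Ff)}\Var(\Delta_{A\cap\tau}(\Ff_\tau))$ does not follow from the case analysis; your analysis only produces the union over subgraph-type faces, which is in general a proper subvariety. The correct bookkeeping over \emph{all} faces comes from the GKZ prime factorization \cref{thm:pAdet-factorization} itself (or, in the $D$-module picture, from \cref{lem:face-characteristic,lem:characteristic-disc}), not from deciding which coordinates of a solution vanish.

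For comparison, the paper does not attempt the set-theoretic equality you are after: it observes that the parametric Landau system of \cref{lem:ParLandau}, $\Ff = x_1\pd{\Ff}{x_1} = \ldots = x_n\pd{\Ff}{x_n} = 0$, is literally the defining system of $E_{\Aa_\Ff}(\Ff)$ in \cref{eq:principalAdetResultant}, and then \emph{adopts} $\Var(E_{\Aa_\Ff}(\Ff))$ as the (alternative) definition of $\mathcal L_1(\mathcal I_\Gamma)$, explicitly warning that the naive decomposition by subgraphs misses contributions. A second, smaller point: the resultant in \cref{eq:defMixedAresultantsZ} detects solutions in $(\mathbb C^*)^n$ and then takes a Zariski closure, whereas \cref{lem:ParLandau} asks for solutions in $\mathbb P^{n-1}_{\mathbb C}$; reconciling these two solution sets rigorously is exactly what your stratification was meant to do and what it cannot do face by face. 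The parameter-independence remark and the equality $\Var(E_{\Aa_\Ff}(\Ff))=\Var(\widehat E_{\Aa_\Ff}(\Ff))$ in your last lines are fine.
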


Usually, one  splits the calculation of Landau singularities into a leading singularity with all $x_i\neq 0$ in \cref{eq:LandauEqParameterSpA}, \cref{eq:LandauEqParameterSpB} and non-leading singularities with $x_i=0$ for $i\in I$, where $\emptyset\neq I\subsetneq \{1,\ldots,n\}$ \cite{EdenAnalyticSmatrix1966}. Every non-leading singularity can be interpreted as a leading singularity of the subgraph where all edges corresponding to $I$ are contracted. This is due to \cref{eq:UContracted} and its corresponding identity for $\Ff_0$. Hence, setting a Schwinger parameter $x_i$ in the second Symanzik polynomial $\Ff$ to zero is equivalent to considering a subgraph, where the corresponding edge $e_i$ is contracted. Note that the second Symanzik polynomial vanishes if the edge $e_i$ corresponds to a tadpole. This procedure seems at first glance to be a natural distinction of cases appearing in \cref{eq:LandauEqParameterSpA} and \cref{eq:LandauEqParameterSpB} and works for special cases. However, when considering systems of polynomials in several variables as \cref{eq:LandauEqParameterSpA}, \cref{eq:LandauEqParameterSpB}, we should rather consider the ideal generated by these polynomials \cite{SturmfelsSolvingSystemsPolynomial2002}. Due to the more intricate situation with multivariate polynomials, we cannot expect primary decomposition \cite{GreuelSingularIntroductionCommutative2002} to work so naively, except for special cases. Hence, in general we will not expect such a simple decomposition of Landau varieties $\mathcal L_1(\mathcal I_\Gamma)$.

When comparing with the results of \cref{ssec:principalAdet}, we will find a similar but in general different splitting. According to \cref{thm:pAdet-factorization} we have a factorization
\begin{align}
	\widehat E_{\Aa_\Ff}(\Ff) = \pm \prod_{\tau\subseteq \Newt(\Ff)} \Delta_{\Aa_\Ff\cap\tau} (\Ff_\tau) \label{eq:LandauFactorizationPrincipalADet}
\end{align}
into $A$-discriminants, where $\Ff_\tau$ denotes the truncated polynomial of $\Ff$ defined by \cref{eq:truncatedPolynomialDefinition} and the product runs over all faces $\tau$ of $\Newt(\Ff)$. The decomposition into subgraphs and the one into faces of $\Newt(\Ff)$ coincide if the second Symanzik polynomial $\Ff$ consists in all monomials of a given degree. However, in general the procedure of subgraphs will miss certain contributions.

\begin{lemma} \label{lem:SubgraphPolynomialsVsTruncated}
    For an index set $I\subseteq\{1,\ldots,n\}$ we call $\gls{FIx} :=\Ff(x)|_{\{x_i=0\,|\, i\in I\}}$ the subgraph polynomial associated to $I$. Every subgraph polynomial is also a truncated polynomial $\Ff_\tau$ with a face $\tau\subseteq\Newt(\Ff)$. The converse is true if $\Ff$ consists in all monomials of degree $L+1$. However, the converse is not true in general.
\end{lemma}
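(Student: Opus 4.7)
The first claim is essentially a statement about how substitution interacts with the Newton polytope. My plan is to fix $I\subseteq\{1,\ldots,n\}$ and exhibit an explicit linear functional $\phi_I\in(\mathbb R^n)^\vee$ that carves out the appropriate face. Since every monomial $x^a$ of $\Ff$ has $a\in\mathbb Z_{\geq 0}^n$, the functional $\phi_I(a) := -\sum_{i\in I} a_i$ is bounded above by $0$ on $\Newt(\Ff)$, with equality exactly on those $a$ satisfying $a_i=0$ for all $i\in I$. Hence $\tau_I := \{a\in\Newt(\Ff)\,\vert\,\phi_I(a) = \max_{b\in \Newt(\Ff)} \phi_I(b)\}$ is a face by definition \eqref{eq:facedef}, and by construction the monomials of $\Ff_{\tau_I}$ are exactly those surviving the substitution $x_i = 0$ for $i\in I$. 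Thus $\Ff_I = \Ff_{\tau_I}$.

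For the second claim, assume $\Ff = \sum_{|a| = L+1} z_a x^a$ with all $z_a\not\equiv 0$. Then $\Newt(\Ff)$ is the dilated standard simplex $(L+1)\Delta_{n-1}$, whose vertices are precisely the points $(L+1)\,e_i$. Every proper face of this simplex is of the form $\Conv\!\left(\{(L+1)\,e_i\,\vert\, i\in J\}\right)$ for some non-empty $J\subseteq\{1,\ldots,n\}$, and the lattice points of such a face are exactly those $a\in\mathbb Z_{\geq 0}^n$ with $|a| = L+1$ and $a_i=0$ for $i\notin J$. Comparing with the first part with $I = \{1,\ldots,n\}\setminus J$ shows $\Ff_\tau = \Ff_I$, so the subgraph polynomials exhaust the truncated polynomials in this special case. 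The full polytope $\tau = \Newt(\Ff)$ corresponds to $I = \emptyset$.

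For the third claim I expect the obstruction to already be visible at the level of vertices of $\Newt(\Ff)$ whenever $\Ff$ is not supported on all degree-$(L+1)$ monomials. A single vertex $v$ of $\Newt(\Ff)$ always is a face, hence gives a truncation $\Ff_{\{v\}}$ consisting of one monomial (the one with exponent $v$); but a subgraph polynomial $\Ff_I$ collects \emph{all} monomials whose exponents lie in the coordinate subspace $\{a_i=0\,\vert\,i\in I\}$, which typically contains many monomials. The concrete counterexample I would produce is taken from a fully massive two-loop graph such as the sunset (see \cref{sec:ExampleSunset}): the monomial $m_1^2 x_1^2 x_2$ corresponds to a vertex of $\Newt(\Ff)$, so $\tau = \{(2,1,0)\}$ yields $\Ff_\tau = m_1^2 x_1^2 x_2$, whereas no choice of $I$ isolates this single monomial, since setting $x_3=0$ still leaves $m_2^2 x_1 x_2^2$ present. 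This shows the converse fails.

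The conceptually delicate step is the third one, because it is the origin of the ``missing contributions'' in the factorization \eqref{eq:LandauFactorizationPrincipalADet} that the rest of the section exploits; one should choose a counterexample that will later reappear naturally in the analysis of the dunce's cap. For the first two steps the arguments are essentially combinatorial bookkeeping on lattice points of $\Newt(\Ff)$, and the main care is simply to keep the correspondence $I \leftrightarrow J = \{1,\ldots,n\}\setminus I$ straight when translating between ``variables set to zero'' and ``vertices retained''.
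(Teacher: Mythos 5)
Your proof is correct and follows essentially the same route as the paper: the same linear functional $\phi_I(a)=-\sum_{i\in I}a_i$ (maximized at $0$ exactly on the coordinate subspace, since $\Newt(\Ff)\subset\mathbb R^n_{\geq 0}$) for the first claim, and the identification of $\Newt(\Ff)$ with a dilated standard simplex whose faces biject with subsets of $\{1,\ldots,n\}$ for the second. Your explicit vertex counterexample from the sunset graph for the failure of the converse is valid (and is something the paper only illustrates later, via the dunce's cap), so no gaps.
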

\begin{proof}
    Let $\phi(p) = - \sum_{i=1}^n b_i p_i$ be a linear form with $b_i=1$ for all $i\in I$ and $b_i=0$ otherwise. This linear form takes its maximal value  $\max \phi(p) =0$ for precisely those points $p\in\mathbb R_+^n$ with $p_i=0$ for $i\in I$. Since all points of $\Newt(\Ff)$ are contained in the positive orthant $\mathbb R_+^n$, such a linear map $\phi$ defines the corresponding face $\tau$ according to \cref{eq:facedef}.

    In case, where $\Ff$ consists in all possible monomials of a given degree, the Newton polytope is an $n$-simplex. The faces of this simplex are trivially in a one to one correspondence with subsets of $\{1,\ldots,n\}$.
\end{proof}

Thus, beyond $1$-loop graphs and banana graphs, which contain all monomials of a given degree in the second Symanzik polynomial, one may get additional singularities from the truncated polynomials, which will be missed with the approach of the subgraphs. Remarkably, there are also non-trivial factors, which will be missed by the classical approach of subgraphs, i.e.\ there are $A$-discriminants of missed truncated polynomials which are neither $1$ nor contained in another $A$-discriminant. For example the Landau variety of the dunce's cap graph will contain a factor which has the shape of a Landau variety of a $1$-loop bubble graph. This contribution will be overlooked by subgraphs, see \cref{sec:ExampleDuncesCap}. Hence, this observation describes a serious and unexpected issue in the current understanding of Landau varieties beyond $1$-loop and banana graphs. Moreover, those additional, overlooked singularities can also appear on the principal sheet, i.e.\ we will find solutions $x\in\mathbb R^n_{> 0}$ (see \cref{sec:ExampleDuncesCap}).

That there is a serious issue in the decomposition of the Landau variety in the classical approach with subgraphs was also indicated in \cite{LandshoffHierarchicalPrinciplePerturbation1966} and further discussed in \cite{BoylingHomologicalApproachParametric1968}\footnote{I would like to thank Marko Berghoff who brought these two articles to my attention.}. By means of the principal $A$-determinant, this problem can now be cleared up and the correct decomposition of Landau varieties can be easily described by the truncated polynomials of $\Ff$. \bigskip

Apart from those general questions, the relation between Landau varieties and principal $A$-determinants leads also to a very efficient tool to determine Landau varieties. By means of \HKP (see \cref{ssec:Adiscriminants}) one can compute a parameterization of the Landau variety very fast. If we decompose the Landau variety into its irreducible components $\mathcal L_1 (\mathcal I_\Gamma) = \bigcup_\tau \mathcal L_1^{(\tau)} (\mathcal I_\Gamma)$ in the sense of \cref{eq:LandauFactorizationPrincipalADet}, every component corresponds to an $A$-discriminant. Hence, we can write these components as the image of a parameterization $\psi^{(\tau)}$ defined in \cref{eq:HKPpsi}
\begin{align}
	\mathcal L_1^{(\tau)} (\mathcal I_\Gamma) &= \psi^{(\tau)} \!\left(\mathbb P^{r-1}_{\mathbb C} \right) \qquad\text{with} \nonumber \\
    \psi^{(\tau)} [t_1 : \ldots : t_r ] &= \left( \prod_{i=1}^N \left(\sum_{j=1}^r b^{(\tau)}_{ij} t_j \right)^{b^{(\tau)}_{i1}}, \ldots, \prod_{i=1}^N \left(\sum_{j=1}^r  b^{(\tau)}_{ij} t_j \right)^{b^{(\tau)}_{ir}}\right) \label{eq:LandauHKPpsi}
\end{align}
where $b^{(\tau)}_{ij}$ are the components of a Gale dual of $\Aa\cap\tau\in\mathbb Z^{(n+1)\times N}$ and $r=N-n-1$. Hence, one only has to determine Gale duals of the corresponding vector configurations of $\Aa\cap\tau$. Since Gale duals can be determined very efficiently, this is particularly pleasing since the existing results for Landau varieties in the literature are limited to very few number of graphs \cite{OliveSingularitiesScatteringAmplitudes1962,IslamLeadingLandauCurves1966,RiskAnalyticityEnvelopeDiagrams1968}. Recently, there was published another new tool for an efficient determination of Landau varieties \cite{MizeraLandauDiscriminants2021} which also extends the calculable graphs significantly. We will present the scope of \HKP with an example in \cref{sec:ExampleDuncesCap}. However, we have to recall that \HKP only generates an indirect representation of Landau varieties. Depending on the purpose, a parameterization can be even more useful than the generating polynomial of the Landau variety. But turning \cref{eq:LandauHKPpsi} in a generating equation of a variety by elimination of parameters is still a very costly task.\bigskip

We want further to recall the observation of \cref{thm:NewtSec}, which relates the Newton polytope of the principal $A$-determinant $E_{\Aa_\Ff}(\Ff)$ by the secondary polytope $\Sigma(\Aa_\Ff)$. As sketched in \cref{ssec:principalAdet}, one can use this relation to determine the defining equation of the Landau variety by means of all regular triangulations of $\Conv(\Aa_\Ff)$. We refer to \cref{ex:cubicPrincipalAdetFromTriangs} for an illustration of this idea. Furthermore, this connection gives also a lower bound for the number of monomials in the defining polynomial of $\mathcal L_1(\mathcal I_\Gamma)$. Thus, this polynomial will contain at least as many monomials as there are regular triangulations of $\Newt(\Ff)$. Even though this estimation is far from being a sharp bound\footnote{The number of regular triangulations counts only the extreme monomials of the defining polynomial of $\mathcal L_1(\mathcal I_\Gamma)$, i.e.\ the vertices of $\Newt(E_{\Aa_\Ff}(\Ff))$.}, the number of these triangulations is growing very fast (see \cref{tab:characteristics}). \bigskip

Another direct consequence of the relation to $\Aa$-hypergeometric theory is, that singularities with respect to $z$ can not be worse than logarithmic singularities. This can be seen by the representation of Feynman integrals in the neighbourhood of the singular locus by means of canonical series solution \cite{SaitoGrobnerDeformationsHypergeometric2000}. Hence, $\Aa$-hypergeometric systems are ``regular'' in a generalized sense of regular singular points in linear ordinary differential equations \cite[sec. 2.3]{CattaniThreeLecturesHypergeometric2006}. A similar result was also shown in \cite[sec. 7.2]{HannesdottirWhatVarepsilonSmatrix2022}.\bigskip

However, we need to point out a further subtlety in the definition of Landau varieties as well as in \cref{thm:LandauVar}. In both places we will assume that the coefficients in the second Symanzik polynomial $z\in\mathbb C^N$ are generic. In the physically relevant case, there will be relations among these coefficients, and they are not necessarily generic. Hence, the application of these extra relations will restrict the Landau variety to a subspace. This is not only an issue in parametric representation, which involves the Symanzik polynomials. It also appears in momentum space, where the external momenta are treated as vectors in $d$-dimensional Minkowskian space. Thus, there can not be more than $d$ independent external momenta, and additionally we suppose an overall momentum conservation. If the variables are not generic, it can occur that a factor in the defining equation of the Landau variety is identical to zero. Therefore, the Landau variety would cover the whole space. On the other hand, we know that there can not be a singularity with unbounded functional value for all points $z\in\mathbb C^N$ due to the convergence considerations from \cref{sec:DimAnaReg}. Thus, in the limit to the physically relevant case, we want to exclude such that ``overall singularities'' in order to make the other singularities apparent\footnote{The problem of vanishing defining equation of the Landau variety in the presence of physical relations between variables is known for a long time and will usually be ignored as we will do with \cref{eq:LandauPrincipalADetPhysical}, see e.g.\ \cite[sec. 2.10]{EdenAnalyticSmatrix1966} and also \cref{ex:1loopSecondtype}. However, a deeper understanding of this behaviour would be desirable, and we would identify this as one of the most uncharted areas in this subject.}. Therefore, we want to define
\begin{align}
    \gls{pAdetPh} := \, \pm \hspace{-2em} \prod_{\substack{\tau\subseteq\Newt(\Ff) \\ \left.\Delta_{\Aa_\Ff\cap\tau}(\Ff_\tau)\right|_{z\rightarrow z^{(ph)}} \neq 0}} \hspace{-1em} \Delta_{\Aa_\Ff\cap\tau}(\Ff_\tau) \label{eq:LandauPrincipalADetPhysical}
\end{align}
a simple principal $A$-determinant which contains only the physically relevant parts of the principal $A$-determinant, i.e.\ we omit the factors, which vanish after inserting the physical restrictions $z^{(ph)}$ on the variables. Equivalently, we define $\gls{Landau1Ph} := \Var\!\left( \widehat E_{\Aa_\Ff}^{ph}(\Ff)\right)$ as the physically relevant Landau variety. Note that the Landau variety $\mathcal L_1(\mathcal I_\Gamma)$ is independent of the parameters $\nuu$, whereas the physical Landau variety $\mathcal L_1^{ph}(\mathcal I_\Gamma)$ may principally depend on the parameters $\nuu$ since the relations between the variables $z$ may depend on $\nuu$. For example, specific choices of spacetime dimension $d$ can change the relations between the external momenta.\bigskip

The fact that the singularities of Feynman integrals are related to principal $A$-determinants, comes with no surprise. As aforementioned, the singular locus of an $\Aa$-hypergeometric function will always be generated by a principal $A$-determinant. Comparing the Landau variety $\mathcal L_1(\mathcal I_\Gamma)$ from \cref{thm:LandauVar} with the results of \cref{ssec:SingularLocusAHyp} about $A$-hypergeometric functions (especially \cref{thm:SingularLocusPrincipalAdet}), we would rather expect $\Var(E_{A_\Gg}(\Uu+\Ff))$ instead of $\mathcal L_1(\mathcal I_\Gamma)$ to be the singular locus of $\mathcal I_\Gamma$. Directly from the factorization of the principal $A$-determinant we can see the relation of these two varieties.
\begin{lemma} \label{lem:LandauVarietyContainedInSing}
    The Landau variety is contained in the singular locus of the $A$-hyper\-ge\-o\-me\-tric function:
    \begin{align}
        \mathcal L_1 (\mathcal I_\Gamma) = \Var\big(E_{\Aa_\Ff}(\Ff)\big) \subseteq \Var \big(E_{A_\Gg}(\Uu+\Ff)\big) = \Sing(H_{\Aa_\Gg}(\nuu)) \point
    \end{align}
\end{lemma}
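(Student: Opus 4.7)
The plan is to apply the prime factorization of the principal $A$-determinant (the factorization theorem stated as Theorem~\ref{thm:pAdet-factorization}) to both sides and then exhibit each factor of $E_{\mathcal{A}_{\mathcal F}}(\mathcal F)$ as a factor of $E_{A_{\mathcal G}}(\mathcal U + \mathcal F)$. The singular locus side is already handled by Theorem~\ref{thm:SingularLocusPrincipalAdet}, which identifies $\mathrm{Sing}(H_{\mathcal{A}_{\mathcal G}}(\nuu))$ with $\Var(E_{A_{\mathcal G}}(\mathcal U + \mathcal F))$. So the real content is a combinatorial comparison of face lattices.

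The key geometric observation is that $\Newt(\mathcal F)$ is itself a face of $\Newt(\mathcal G)=\Newt(\mathcal U+\mathcal F)$. Indeed, $\mathcal U$ is homogeneous of degree $L$ and $\mathcal F$ is homogeneous of degree $L+1$, so the supports $A_{\mathcal U}$ and $A_{\mathcal F}$ lie on the disjoint parallel hyperplanes $\{\mu \in \mathbb R^n \mid \sum_i \mu_i = L\}$ and $\{\mu \in \mathbb R^n \mid \sum_i \mu_i = L+1\}$ respectively. The linear functional $\phi(\mu)=\sum_i \mu_i$ is maximized on $\Newt(\mathcal G)$ precisely along $\Newt(\mathcal F)$, so by the definition of a face in \cref{eq:facedef} it is a face. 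Consequently every face $\tau \subseteq \Newt(\mathcal F)$ is also a face of $\Newt(\mathcal G)$, and for such $\tau$ we have the two identifications $A_{\mathcal G}\cap\tau = A_{\mathcal F}\cap\tau$ (since $A_{\mathcal U}\cap\tau=\emptyset$ because $A_{\mathcal U}$ lies on a different hyperplane) and $(\mathcal U+\mathcal F)_\tau = \mathcal F_\tau$ (same reason at the level of polynomials).

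Plugging this into the factorization \cref{eq:pAdet-factorization} yields
\begin{align*}
E_{A_{\mathcal G}}(\mathcal U + \mathcal F) &= \pm \prod_{\tau \subseteq \Newt(\mathcal G)} \Delta_{A_{\mathcal G}\cap\tau}((\mathcal U+\mathcal F)_\tau)^{\mu(A_{\mathcal G},\tau)}\\
&= \pm \left(\prod_{\tau\subseteq \Newt(\mathcal F)} \Delta_{\mathcal{A}_{\mathcal F}\cap\tau}(\mathcal F_\tau)^{\mu(A_{\mathcal G},\tau)}\right) \cdot R,
\end{align*}
where $R$ collects the factors coming from faces of $\Newt(\mathcal G)$ that are not contained in $\Newt(\mathcal F)$. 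Since the multiplicities $\mu(A_{\mathcal G},\tau)$ are positive integers, the first bracketed product vanishes whenever $E_{\mathcal{A}_{\mathcal F}}(\mathcal F) = \pm\prod_{\tau\subseteq \Newt(\mathcal F)}\Delta_{\mathcal{A}_{\mathcal F}\cap\tau}(\mathcal F_\tau)^{\mu(\mathcal{A}_{\mathcal F},\tau)}$ vanishes, and hence so does $E_{A_{\mathcal G}}(\mathcal U+\mathcal F)$. Taking zero loci gives the claimed inclusion $\mathcal L_1(\mathcal I_\Gamma) \subseteq \mathrm{Sing}(H_{\mathcal{A}_{\mathcal G}}(\nuu))$.

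There is essentially no obstacle: the argument rests on a one-line combinatorial observation (that the top-degree part defines a face of the Newton polytope of $\mathcal U+\mathcal F$) combined with the factorization theorem. The only point requiring a little care is matching the two kinds of ``supports'' used in the statement — the affine $A_{\mathcal G}\subset\mathbb Z^n$ and the homogenized $\mathcal{A}_{\mathcal F}\subset\mathbb Z^{n+1}$ — but this is harmless because $A$-discriminants depend only on the affine geometry of the support, so $\Delta_{A_{\mathcal G}\cap\tau}=\Delta_{\mathcal{A}_{\mathcal F}\cap\tau}$ for faces $\tau\subseteq\Newt(\mathcal F)$.
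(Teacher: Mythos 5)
Your proof is correct and follows essentially the same route as the paper: both arguments observe that the homogeneity of $\Uu$ and $\Ff$ in different degrees makes $\Newt(\Ff)$ a face (in fact a facet) of $\Newt(\Gg)$, and then apply the prime factorization of \cref{thm:pAdet-factorization} so that every $A$-discriminant factor of $E_{\Aa_\Ff}(\Ff)$ reappears as a factor of $E_{A_\Gg}(\Gg)$. Your extra care about the possibly different multiplicities $\mu(A_\Gg,\tau)$ versus $\mu(\Aa_\Ff,\tau)$ is a nice touch the paper elides, but as you note it is immaterial at the level of zero loci.
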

\begin{proof}
    $\Uu$ and $\Ff$ are homogeneous polynomials of different degrees. Therefore, $\Newt(\Uu+\Ff)$ has points on two different, parallel hyperplanes and thus $\Newt(\Uu)$ and $\Newt(\Ff)$ are two facets of $\Newt(\Uu+\Ff)$. By \cref{thm:pAdet-factorization} we see that $\Var (E_{A_\Gg}(\Uu+\Ff)) = \Var (E_{\Aa_\Ff}(\Ff)) \cup \Var (E_{\Aa_\Uu}(\Uu)) \cup \Var (\Delta_{A_\Gg}(\Uu+\Ff)) \cup \Var(R)$, where the remaining polynomial $R$, corresponds to all discriminants coming from proper, mixed faces, i.e.\ faces $\tau\subsetneq \Newt(\Gg)$ having points of $\Uu$ and $\Ff$.
\end{proof}

Thus, \cref{lem:LandauVarietyContainedInSing} shows what we already indicated above: The Landau variety covers not all\footnote{These singularities of the Feynman integral, which are not contained in $\mathcal L_1(\mathcal I_\Gamma)$, have nothing to do with the overlooked singularities in $\mathcal L_1(\mathcal I_\Gamma)$  mentioned above. The overlooked singularities discussed above come from the fact that the usually assumed decomposition in subgraphs is in general not the correct approach. The singularities discussed here come from the fact that the second Symanzik polynomial $\Ff$ only represents a part of the Feynman integral.} kinematic singularities of the Feynman integrals and in general $\Var(E_{\Aa_\Ff}(\Ff))$ will be a proper subvariety of $\Var (E_{A_\Gg}(\Gg))$. Based on the prime factorization of the principal $A$-determinant, we will divide the singular locus of the Feynman integral $\Var (E_{A_\Gg}(\Gg))$ into four parts
\begin{align}
    \Var \big(E_{A_\Gg}(\Gg)\big) = \Var \big(\widehat E_{A_\Gg}(\Gg)\big) = \Var \big(\widehat E_{\Aa_\Ff}(\Ff)\big) \cup \Var \big(\widehat E_{\Aa_\Uu}(\Uu)\big) \cup \Var \big(\Delta_{A_\Gg}(\Gg)\big) \cup \Var (R) \point
\end{align}
Namely, we factorize $\widehat E_{A_\Gg}(\Gg)$ in a polynomial $\widehat E_{\Aa_\Ff}(\Ff)$ generating the classical Landau variety according to \cref{thm:LandauVar}, a polynomial $\widehat E_{\Aa_\Uu}(\Uu)$ which is constant in the physically relevant case and a polynomial $\Delta_{A_\Gg}(\Gg)$, which we will associate to the second-type singularities. The remaining polynomial
\begin{align}
    \gls{R} := \prod_{\substack{\tau\subsetneq \Newt(\Uu+\Ff)\\ \tau \nsubseteq \Newt (\Uu), \tau \nsubseteq \Newt (\Ff)}} \Delta_{A\cap\tau} (\Gg_\tau) \label{eq:defRproperMixedFaces}
\end{align}
will correspond to second-type singularities of subgraphs, and we will call the roots of $R$ the \textit{mixed type singularities of proper faces}. In the following section, we will analyze step by step these further contributions to the singular locus.\bigskip


\section{Second-type singularities} \label{sec:2ndtypeSingularities}


As aforementioned the defining polynomial of the singular locus $\Sing(H_\Aa(\nuu))$ splits into several $A$-discriminants. With the $A$-discriminant $\Delta_{A_\Gg}(\Uu+\Ff)$ we will associate the so-called second-type singularities \cite{CutkoskySingularitiesDiscontinuitiesFeynman1960, FairlieSingularitiesSecondType1962}. We have to remark, that the notion of second-type singularities differs slightly in various literature. Moreover, there is very little known about second-type singularities. Usually, a distinction is made between pure second-type singularities and mixed second-type singularities \cite{EdenAnalyticSmatrix1966, NakanishiGraphTheoryFeynman1971}. The pure second-type singularities do not depend on masses and can be expressed by Gram determinants, whereas the latter appear in higher loops. Second-type singularities are slightly better understood in momentum space, whereas they are endpoint singularities at infinity \cite{MuhlbauerMomentumSpaceLandau2020}. In parametric space, second-type singularities are connected to the case where $\Uu=0$ \cite[sec. 16]{NakanishiGraphTheoryFeynman1971}.\bigskip

In our approach we will call the variety generated by $\Delta_{A_\Gg}(\Gg)$ the second-type singularities. By introducing a new variable $x_0$, we can change to the homogeneous setting $\Delta_{\widetilde A_\Gg}(x_0 \Uu + \Ff)$ which has the same discriminant, since there is an appropriate injective, affine map connecting $A_\Gg$ with $\widetilde A_\Gg$ according to \cref{ssec:Adiscriminants}. Writing the corresponding polynomial equation system explicitly, the $A$-discriminant $\Delta_{A_\Gg}(\Gg)$ is the defining polynomial of the closure of the set of all coefficients $z\in\mathbb C^N$, such that the equations
\begin{align}
    \Uu = 0, \quad \Ff_0 = 0, \quad \pd{\Ff_0}{x_i} + \pd{\Uu}{x_i} \left( x_0 + \sum_{j=1}^n x_j m_j^2 \right) = 0 \quad\text{for}\quad i=1,\ldots,n \label{eq:secondtype} \point
\end{align}
have a solution for $(x_0,x)\in(\mathbb C^*)^{n+1}$.  As before, we denote by $\Ff_0$ the massless part of the second Symanzik polynomial. Not all conditions herein \cref{eq:secondtype} are independent, because the polynomial $x_0\Uu+\Ff$ is homogeneous again. Thus, we can drop an equation from \cref{eq:secondtype}. These equations for second-type singularities \cref{eq:secondtype} agree with the description in \cite[sec. 16]{NakanishiGraphTheoryFeynman1971}.

\begin{example}[\nth{2} type singularities of all banana graphs]
    Consider the family of massive $L$-loop $2$-point functions, which are also called banana graphs (see \cref{fig:bananaFamily}). These graphs having $n=L+1$ edges and the Symanzik polynomials
    \begin{align}
        \Uu = x_1 \cdots x_n \left(\frac{1}{x_1} + \ldots + \frac{1}{x_n} \right)\text{ , } \qquad  \Ff_0 = p^2 x_1\cdots x_n \point
    \end{align}
    Applying the conditions of \cref{eq:secondtype} we will find the second-type singularity for all banana graphs to be
    \begin{align}
        p^2 = 0 \point
    \end{align}
\end{example}
\begin{example}[\nth{2} type singularities of all $1$-loop graphs] \label{ex:1loopSecondtype}
    A massive $1$-loop graph with $n$ edges has Symanzik polynomials (see \cref{fig:1loopFamily}
    \begin{align}
        \Uu = x_1 + \ldots + x_n \qquad \Ff_0 = \sum_{1\leq i < j \leq n} u_{ij} x_i x_j 
    \end{align}
    where $u_{ij} := \left(\sum_{k=i}^{j-1} p_k\right)^2$ for $i<j$ defines the dependence on external momenta. We will set $u_{ij}=u_{ji}$ for $i>j$ and $u_{ii}=0$. By these definitions we obtain $\pd{\Uu}{x_j} = 1$ and $\pd{\Ff_0}{x_j} = \sum_{i\neq j} u_{ij} x_i$ for the derivatives. Since we can drop one equation from \cref{eq:secondtype} due to the homogenity of $x_0 \Uu +\Ff$, we will obtain a system of linear equations in the $1$-loop case. Eliminating $x_0$ by subtracting equations, we can combine these conditions to a determinant
    \begin{align}
        \begin{vmatrix}
        \hspace{.3cm} 1 & 1 & \cdots & 1 & \\
        \multicolumn{4}{c}{\multirow{2}{*}{$\bigl(u_{ij} - u_{jn}\bigr)_{\scalebox{0.7}{$\substack{1\leq i \leq n-1 \\ 1\leq j \leq n}$}}$}} \\
        &
        \end{vmatrix} = 0 \label{eq:1loopSecondtype}
    \end{align}
    as the (not necessarily irreducible) defining polynomial of the second-type singularity for all $1$-loop graphs. By the same argument as used in \cite[sec. 16]{NakanishiGraphTheoryFeynman1971}, the condition \cref{eq:1loopSecondtype} is equivalent to the vanishing of the Gram determinant of external momenta, which is usually set for the pure second-type singularity \cite{FairlieSingularitiesSecondType1962, EdenAnalyticSmatrix1966}. Note that in the $1$-loop case, the second-type singularity does not depend on masses. Furthermore, for higher $n$, the external momenta satisfy certain relations, since there can not be more than $d$ linearly independent vectors in $d$-dimensional Minkowski space. In addition, the external momenta ensure a conservation law. Thus, for\footnote{One can even find, that the Gram determinant vanishes for $n > d$, see \cite[sec. 2.10]{EdenAnalyticSmatrix1966}.} $n > d + 1$ the condition \cref{eq:1loopSecondtype} is satisfied for all physical external momenta. Hence, we will remove this contribution to the singular locus, when we restrict us to the physically relevant case as done by \cref{eq:LandauPrincipalADetPhysical} whenever $n > d+1$. We want to emphasize again that this removing of ``the unwanted zeros'' does not only appear for our special approach. As it can be seen by this example, such a phenomenon does also appear in the ``classical way'' of treating Landau singularities.
\end{example}

As the next contribution to the singular locus $\Sing(H_\Aa(\nuu))$ we will consider the principal $A$-determinant of the first Symanzik polynomial $E_{\Aa_\Uu}(\Uu)$. Since the coefficients of the first Symanzik polynomial are all equal to one, the principal $A$-determinant $E_{\Aa_\Uu}(\Uu)$ can be either $0$ or $\pm 1$.  Note that singularities corresponding to $E_{\Aa_\Uu}(\Uu)=0$ can only be pseudo thresholds, i.e.\ all solutions of $\Uu = \pd{\Uu}{x_1} = \ldots = \pd{\Uu}{x_n} = 0$ have to satisfy $x\notin \mathbb R_{>0}^n$. Moreover, we can determine $E_{\Aa_\Uu}(\Uu)$ also for certain classes of graphs.

\begin{lemma} \label{lem:EAUfor1LoopAndBanana}
    For all $1$-loop graphs and all $L$-loop banana graphs we obtain
    \begin{align}
        E_{\Aa_\Uu}(\Uu) = 1 \point
    \end{align}
\end{lemma}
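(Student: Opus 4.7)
My plan is to apply the prime factorization of the principal $A$-determinant from \cref{thm:pAdet-factorization},
\begin{align*}
E_{\Aa_\Uu}(\Uu) = \pm \prod_{\tau \subseteq \Newt(\Uu)} \Delta_{\Aa_\Uu \cap \tau}(\Uu_\tau)^{\mu(\Aa_\Uu,\tau)}\comma
\end{align*}
and to show that every factor on the right-hand side equals $1$. Since $E_{\Aa_\Uu}(\Uu)$ is a monomial in the $z$-variables of $\Uu$ (which all take the value $1$ for the first Symanzik polynomial), showing that the product is $\pm 1$ is enough; fixing the overall sign is a standard normalization.

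The key observation is that in both cases the Newton polytope $\Newt(\Uu)$ is a simplex, and hence so are all of its faces. For the $1$-loop family the polynomial $\Uu = x_1+\ldots+x_n$ has support $\Aa_\Uu = \{e_1,\ldots,e_n\}$, which is trivially affinely independent. For the $L$-loop banana family with $n=L+1$ edges, $\Uu = \sum_{i=1}^{n} \prod_{k\neq i} x_k$ has support $\Aa_\Uu = \{\mathbf 1_n - e_i \,|\, i=1,\ldots,n\}$. The differences $(\mathbf 1_n-e_i)-(\mathbf 1_n-e_n)=e_n-e_i$ for $i=1,\ldots,n-1$ are linearly independent, so $\Aa_\Uu$ consists of $n$ affinely independent points and $\Newt(\Uu)$ is an $(n-1)$-simplex. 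For every face $\tau\subseteq\Newt(\Uu)$ the restriction $\Aa_\Uu\cap\tau$ is then again a set of affinely independent points, i.e.\ the vertex set of a lower-dimensional simplex, and $\Uu_\tau$ is the corresponding truncated polynomial.

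So it remains to invoke the fact that the $A$-discriminant of a simplex configuration is trivial, $\Delta_A(f)=1$. This is a classical result of Gelfand, Kapranov and Zelevinsky (see \cref{ssec:Adiscriminants} and \cite{GelfandDiscriminantsResultantsMultidimensional1994}), and follows most directly from \cref{lem:trafoAres}: when $A$ is a simplex one may map $f$ by a monomial change of variables to a polynomial of the form $\sum_{i} x^{a^{(i)}}$ whose gradient has no common zero in $(\mathbb C^*)^n$, so that the discriminantal locus has codimension at least $2$ in $\mathbb C^A$ and therefore $\Delta_A(f)=1$ by definition. Applying this to each factor $\Delta_{\Aa_\Uu\cap\tau}(\Uu_\tau)$ yields the assertion.

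The only real point to watch is that the faces of $\Newt(\Uu)$ in the banana case are correctly identified: although $\Uu$ is not a polynomial in which $\Aa_\Uu$ is the standard simplex, the affine equivalence of simplex configurations (used already in \cref{ssec:Adiscriminants}) makes every $\Aa_\Uu\cap\tau$ equivalent to a standard simplex of appropriate dimension, and there the vanishing of the discriminant locus in the torus is immediate. I expect no subtlety beyond this observation, since no proper mixed face of the two Newton polytopes contains any ``extra'' lattice points of $\Aa_\Uu$ — $\Aa_\Uu$ already consists exclusively of vertices of $\Newt(\Uu)$.
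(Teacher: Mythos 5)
Your proof is correct, but it takes a genuinely different route from the paper's. The paper argues directly from the resultant definition \cref{eq:principalAdetResultant}: since $\Newt(\Uu)$ is a simplex and $\Uu$ carries no free coefficients, $E_{\Aa_\Uu}(\Uu)$ is a fixed integer equal to $0$ or $\pm 1$, and non-vanishing is then established by exhibiting an explicit contradiction in the system $\Uu = x_1\pd{\Uu}{x_1} = \ldots = x_n\pd{\Uu}{x_n} = 0$ --- for $1$-loop graphs because $\pd{\Uu}{x_i}=1$, and for banana graphs because the deletion-and-contraction relation \cref{eq:UContractedDeleted} gives $\Uu - x_i\pd{\Uu}{x_i} = -x_1\cdots x_n/x_i$, which cannot vanish on the torus. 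You instead route everything through the prime factorization \cref{thm:pAdet-factorization} together with the triviality of simplex discriminants. Both arguments rest on the same geometric input, namely that $\Aa_\Uu$ is exactly the vertex set of a simplex with no further support points; yours is more uniform (no case distinction between the two graph families once the simplex structure is noted) and generalizes verbatim to any $\Uu$ whose support is a simplex, while the paper's is more elementary and never has to invoke the factorization theorem at a non-generic coefficient vector.

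One step of yours needs a better justification: deriving $\Delta_A(f)=1$ for a simplex configuration from \cref{lem:trafoAres} is off target, since that lemma concerns linear combinations of the $f_i$ inside an $A$-resultant, and a monomial change of variables does not remove the coefficients from $f$. The clean argument is the one implicit in \cref{ssec:Adiscriminants}: the $A$-discriminant depends only on the affine geometry of $A$, any simplex configuration is affinely equivalent to the standard simplex, and for $f=z_0+z_1x_1+\ldots+z_nx_n$ the condition $\Var\!\left(f,\pd{f}{x_1},\ldots,\pd{f}{x_n}\right)\neq\emptyset$ forces $z_0=z_1=\ldots=z_n=0$, so $\overline{\nabla}_0$ has codimension $n+1\geq 2$ and $\Delta_A(f)=1$ by definition. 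With that substitution your proof is complete.
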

\begin{proof}
    Note, that in both cases $\Newt(\Uu)$ describes a simplex. Since $\Uu$ has no free coefficients, $E_{\Aa_\Uu}(\Uu)$ is either $0$ or $\pm 1$. Therefore, in order to prove $E_{\Aa_\Uu}(\Uu)=1$ it is sufficient to show, that $\Uu=\pd{\Uu}{x_1} = \ldots=\pd{\Uu}{x_n}=0$ contains a contradiction. For $1$-loop graphs this contradiction is obvious since $\pd{\Uu}{x_i}=1$.

    For $L$-loop banana graphs consider the relation following from \cref{eq:UContractedDeleted}
    \begin{align}
        \Uu - x_i \pd{\Uu}{x_i} = - \frac{x_1 \cdots x_n}{x_i} = 0
    \end{align}
    which has no solution for $x\in(\mathbb C^*)^n$.
\end{proof}

We want to emphasize, that \cref{lem:EAUfor1LoopAndBanana} does not hold for general Feynman graphs. The dunce's cap graph is the simplest example, which will result in a vanishing principal $A$-determinant of the first Symanzik polynomial. Note, that a vanishing of $E_{\Aa_\Uu}(\Uu)$ will not correspond to a singularity with unbounded functional value, due to the convergence considerations of \cref{sec:DimAnaReg}. Hence, singularities stemming from $E_{\Aa_\Uu}(\Uu)$ can be justifiably omitted in the physical approach, as their singular behaviour will only show up when we will add variables to the first Symanzik polynomials, i.e.\ they have an impact on the neighbourhood of $z_i=1$. Therefore, we will exclude these overall contributions to the singular locus in the spirit of \cref{eq:LandauPrincipalADetPhysical} if we consider the physically relevant case. Thus, we will set $E_{\Aa_\Uu}^{ph}(\Uu)=1$. \bigskip

The last contribution $R$ to the singular locus of Feynman integrals defined in \cref{eq:defRproperMixedFaces} comes from discriminants of proper, mixed faces, i.e.\ proper faces of $\Newt(\Uu+\Ff)$ which are neither completely contained in $\Newt(\Ff)$ nor in $\Newt(\Uu)$. These $A$-discriminants can be associated with second-type singularities of subgraphs as it can be observed in the following example.

\begin{example}[proper, mixed singularities for the triangle graph] \label{ex:triangleGraphProperMixed}

    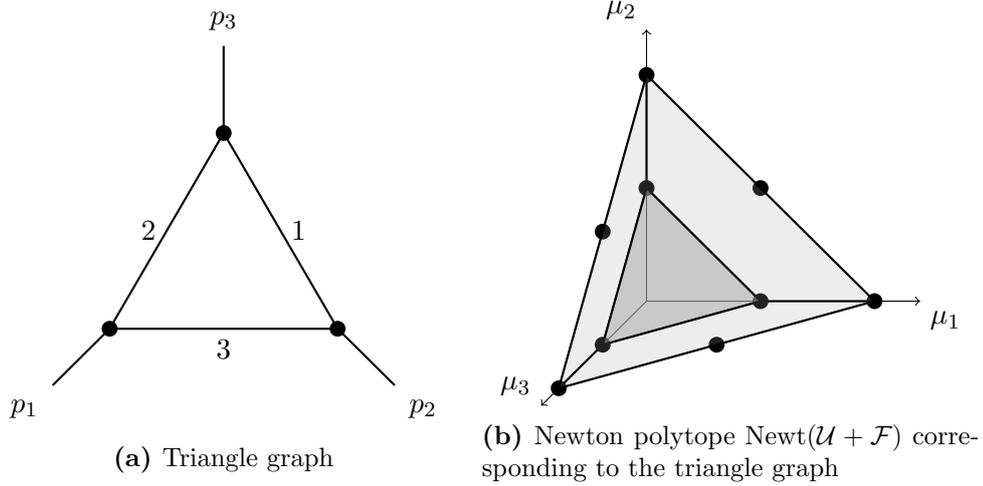
\begin{figure}
    	\centering
    	\begin{subfigure}{.45\textwidth}
            \centering
            \begin{tikzpicture}[thick, dot/.style = {draw, shape=circle, fill=black, scale=.5}, scale=1.5]
                \coordinate[dot] (A) at (-1,0);
                \coordinate (A1) at (-1.5,-0.5);
                \coordinate[dot] (B) at (1,0);
                \coordinate (B1) at (1.5,-0.5);
                \coordinate[dot] (C) at (0,1.73);
                \coordinate (C1) at (0,2.5);
                
                \draw (A) -- node[below] {$3$} (B);
                \draw (B) -- node[right] {$1$} (C);
                \draw (C) -- node[left] {$2$} (A);
                \draw (A1) -- (A);
                \draw (B1) -- (B);
                \draw (C1) -- (C);
                
                \node at (A1) [below left = 0.7mm of A1] {$p_1$};
                \node at (B1) [below right = 0.7mm of B1] {$p_2$};
                \node at (C1) [above = 0.7mm of C1] {$p_3$};
            \end{tikzpicture}
            \caption{Triangle graph}\label{fig:TriangleGraph}
    	\end{subfigure} 
    	\begin{subfigure}{.45\textwidth}
            \centering
    		\begin{tikzpicture}[thick, dot/.style = {draw, shape=circle, fill=black, scale=.5}, scale=1.5]
                \draw[thin,->] (0,0,0) -- (2.4,0,0) node[anchor=north west] {$\mu_1$};
                \draw[thin,->] (0,0,0) -- (0,2.4,0) node[anchor=south east] {$\mu_2$};
                \draw[thin,->] (0,0,0) -- (0,0,2.4) node[anchor=south east] {$\mu_3$};
                
                \coordinate[dot] (U1) at (1,0,0);
                \coordinate[dot] (U2) at (0,1,0);
                \coordinate[dot] (U3) at (0,0,1);
                
                \coordinate[dot] (F1) at (0,1,1);
                \coordinate[dot] (F2) at (1,0,1);
                \coordinate[dot] (F3) at (1,1,0);
                
                \coordinate[dot] (F4) at (2,0,0);
                \coordinate[dot] (F5) at (0,2,0);
                \coordinate[dot] (F6) at (0,0,2);
            
                \path[fill=gray!70, draw, opacity=0.5] (1,0,0) -- (0,1,0)--(0,0,1)--cycle; 
                \path[fill=gray!70, draw, opacity=0.2] (2,0,0) -- (0,2,0)--(0,0,2)--cycle; 
                \draw (U1) -- (F4); \draw (U2) -- (F5); \draw (U3) -- (F6);  
                \draw (U1) -- (U2) -- (U3) -- (U1);
                \draw (F4) -- (F5) -- (F6) -- (F4);
            \end{tikzpicture}
            \caption{Newton polytope $\Newt (\Uu+\Ff)$ corresponding to the triangle graph}\label{fig:NewtonPolytopeTriangleGraph}
    	\end{subfigure}    
    	\caption[Feynman graph and Newton polytope of the triangle graph]{Feynman graph and Newton polytope of the triangle graph as treated in \cref{ex:triangleGraphProperMixed}. The Newton polytope has one face of codimension zero, five faces of codimension one, nine faces of codimension two and six faces of codimension three. The two parallel triangles in this Newton polytope correspond to $\Uu$ and $\Ff$, respectively.}
    \end{figure}

    We will determine the discriminants of proper, mixed faces of the triangle graph (see \cref{fig:TriangleGraph}). The Symanzik polynomials for the triangle graph are given by
    \begin{align}
        \Uu &= x_1 + x_2 + x_3\\
        \Ff &= p_1^2 x_2 x_3 + p_2^2 x_1 x_3 + p_3^2 x_1 x_2 + \Uu \left(x_1 m_1^2 + x_2 m_2^2 + x_3 m_3^2\right) \point
    \end{align}
    The corresponding Newton polytope $\Newt(\Uu+\Ff)$ has in total $21$ faces, where one face is the Newton polytope $\Newt(\Uu+\Ff)$ itself, $7$ faces corresponding to faces of $\Newt(\Uu)$, another $7$ corresponding to faces of $\Newt(\Ff)$ and the remaining $6$ faces are proper, mixed faces (see \cref{fig:NewtonPolytopeTriangleGraph}). The truncated polynomials corresponding to the $6$ proper, mixed faces are up to permutations  $1\leftrightarrow 2 \leftrightarrow 3$ given by
    \begin{align}
        r_1 &= m_2^2 x_2^2 + m_3^2 x_3^2 + x_2 + x_3 + \left(p_1^2+m_2^2+m_3^2\right) x_2 x_3 \\
        h_1 &= m_1^2 x_1^2 + x_1 \point
    \end{align}
    As $h_1 = \pd{h_1}{x_1} = 0$ has no common solution we have $\Delta(h_1) = \Delta(h_2) = \Delta(h_3) = 1$ for the $A$-discriminant of $h_1$ as well as for its symmetric permutations. Considering $r_1$, this is nothing else than the sum of Symanzik polynomials of a $1$-loop bubble graph. Hence, the solutions of $r_1=\pd{r_1}{x_2}=\pd{r_1}{x_3} = 0$ describe the second-type singularity of the bubble graph, and we obtain $p_1^2=0$ and similarly for all the permutations. Thus, we get
    \begin{align}
        R = p_1^2 p_2^2 p_3^2
    \end{align}
    as the contribution to the singular locus of the triangle graph from proper, mixed faces. 
\end{example}

We want to mention, that also for the discriminants of proper, mixed faces $R$ there can be contributions which are identically zero in the physically relevant case. The simplest example where such a behaviour appears is the $2$-loop $2$-point function also known as the sunset graph. Again, we will define a physically relevant singular locus, where we have to remove these overall contributions.

Although, one does not usually consider the contribution of $R$ to the singular locus of Feynman integrals in the literature, we want to remark, that they can give a non-trivial contribution to the singular locus.


\section{Landau variety of the double-edged triangle graph} \label{sec:ExampleDuncesCap}


In order to demonstrate the methods described before, we want to calculate the Landau variety of a massive $2$-loop $3$-point function according to \cref{fig:duncescap}, which is known under various names, e.g.\ ``double-edged triangle graph'', ``dunce's cap'', ``parachute'' or ``ice cream cone''. To our knowledge this Landau variety was not published before. This graph is particularly interesting, since it is the simplest graph which does not belong to the cases discussed in \cref{lem:SubgraphPolynomialsVsTruncated} and \cref{lem:EAUfor1LoopAndBanana}. Using \HKP we will give the leading Landau variety in a parametrized form. Compared with the standard methods of eliminating variables from the Landau equations, the \HKP can be calculated very fast. We have to mention, that the reason for the effectiveness of the \HKP lies in a different representation of the result. To determine the defining polynomial of the Landau variety from the parametrized form is still a very time-consuming task. However, for many approaches the parametrized form of Landau varieties can be even more convenient, since the parameterization specifies the Landau singularities directly. \bigskip

\begin{figure}[bht]
    \centering
    \includegraphics[trim=0 0 0 1.5cm, clip]{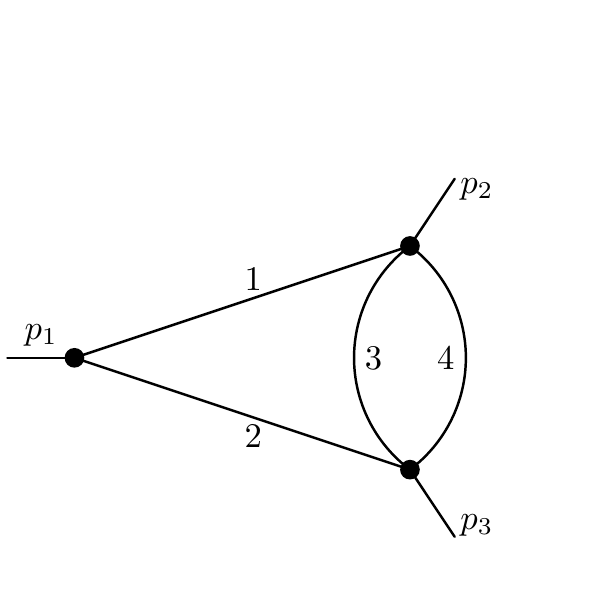}
    \caption{Double-edged triangle graph or ``dunce's cap'' graph} \label{fig:duncescap}
\end{figure}

For the Feynman graph of the dunce's cap graph in \cref{fig:duncescap} the Symanzik polynomials are given by
\begin{align}
    \Uu &= (x_1+x_2)(x_3+x_4) + x_3 x_4 \\
    \Ff &= s_1 x_1 x_2 (x_3 + x_4) + s_2 x_1 x_3 x_4 + s_3 x_2 x_3 x_4 \nonumber \\
    & + m_1^2 x_1^2 (x_3 + x_4) + m_2^2 x_2^2 (x_3 + x_4) + m_3^2 x_3^2 (x_1 + x_2 + x_4) + m_4^2 x_4^2 (x_1 + x_2 + x_3)
\end{align}
where we abbreviate $s_1 = p_1^2 + m_1^2 + m_2^2$, $s_2= p_2^2 + m_1^2 + m_3^2 + m_4^2$ and $s_3 = p_3^2 + m_2^2 + m_3^2 + m_4^2$. The Newton polytope $\Newt(\Uu+\Ff)$ has $85$ faces in total. Hence, we will expect $85$ contributions to the singular locus. Among these faces, $33$ belong to $\Newt(\Ff)$, $19$ to $\Newt(\Uu)$ and $32$ are proper, mixed faces. We will concentrate on the Landau variety and consider only the faces of $\Newt(\Ff)$. Recalling \cref{eq:LandauFactorizationPrincipalADet} we will have
\begin{align}
	\mathcal L_1 (\mathcal I_\Gamma) = \hspace{-1em} \bigcup_{\tau\subseteq\Newt(\Ff)} \hspace{-1em} \mathcal L_1^{(\tau)} (\mathcal I_\Gamma) \qquad\text{with}\quad \mathcal L_1^{(\tau)} (\mathcal I_\Gamma) = \Var \!\left( \Delta_{\Aa\cap\tau} (\Ff_\tau) \right) 
\end{align}
as the decomposition of the Landau variety $\mathcal L_1 (\mathcal I_\Gamma)$. We will treat the leading singularity $\Delta_\Aa(\Ff)$ by means of \HKp, whereas the $A$-discriminants of all the other truncated polynomials will be determinated by \softwareName{Macaulay2}. 

Starting with the leading singularity, we can read off the points generating the monomials of $\Ff=\sum_{a^{(j)}\in\Aa} z_j x^{a^{(j)}}$
\begin{align}
    \Aa &= \left(
            \begin{array}{cccccccccccccc}
            1 & 1 & 1 & 0 & 2 & 2 & 0 & 0 & 1 & 0 & 0 & 1 & 0 & 0 \\
            1 & 1 & 0 & 1 & 0 & 0 & 2 & 2 & 0 & 1 & 0 & 0 & 1 & 0 \\
            1 & 0 & 1 & 1 & 1 & 0 & 1 & 0 & 2 & 2 & 2 & 0 & 0 & 1 \\
            0 & 1 & 1 & 1 & 0 & 1 & 0 & 1 & 0 & 0 & 1 & 2 & 2 & 2 \\
            \end{array} \right) \\
    z &= (s_1,s_1,s_2,s_3,m_1^2,m_1^2,m_2^2,m_2^2,m_3^2,m_3^2,m_3^2,m_4^2,m_4^2,m_4^2) \label{eq:DuncesCapZvars} \point
\end{align}
The leading Landau variety is nothing else than the hypersurface $\{\Delta_{\Aa}(\Ff) = 0\}$, and we will determine this hypersurface by a convenient parameterization. For this \HKP we need a Gale dual, and we have several possibilities to choose a Gale dual of $\Aa$, e.g.\
\begin{align}
    \mathcal B = \resizebox{0.5\hsize}{!}{$ \left(
            \begin{array}{cccccccccc}
            1 & 1 & 1 & 0 & -1 & -1 & 0 & -1 & 0 & -1 \\
            0 & -1 & -1 & 1 & 1 & 1 & -1 & 0 & -1 & 0 \\
            -1 & 0 & -1 & -1 & 0 & -1 & 1 & 1 & -1 & -1 \\
            -1 & -1 & 0 & -1 & -1 & 0 & -1 & -1 & 1 & 1 \\
            0 & 0 & 0 & 0 & 0 & 0 & 0 & 0 & 0 & 1 \\
            0 & 0 & 0 & 0 & 0 & 0 & 0 & 0 & 1 & 0 \\
            0 & 0 & 0 & 0 & 0 & 0 & 0 & 1 & 0 & 0 \\
            0 & 0 & 0 & 0 & 0 & 0 & 1 & 0 & 0 & 0 \\
            0 & 0 & 0 & 0 & 0 & 1 & 0 & 0 & 0 & 0 \\
            0 & 0 & 0 & 0 & 1 & 0 & 0 & 0 & 0 & 0 \\
            0 & 0 & 0 & 1 & 0 & 0 & 0 & 0 & 0 & 0 \\
            0 & 0 & 1 & 0 & 0 & 0 & 0 & 0 & 0 & 0 \\
            0 & 1 & 0 & 0 & 0 & 0 & 0 & 0 & 0 & 0 \\
            1 & 0 & 0 & 0 & 0 & 0 & 0 & 0 & 0 & 0 \\
            \end{array}\right) $} \point
\end{align}
    
According to \cref{ssec:Adiscriminants} a Gale dual directly gives rise to a parameterization of generalized Feynman integrals. Let $z_1,\ldots,z_{14}$ be the coefficients of a generalized Feynman integral, i.e.\ coefficients of the second Symanzik polynomial $\Ff$. Then, the generic discriminant hypersurface $\{\Delta_{\Aa_\Ff}(\Ff)=0\}$ is parametrized by $t\in\mathbb P_{\mathbb C}^{10-1}$
\begin{align}
    y_1 &= \frac{z_1 z_{14}}{z_3 z_4} = \frac{R_1 t_1}{R_3 R_4} \quad,\qquad    
    y_2 = \frac{z_1 z_{13}}{z_2 z_4} = - \frac{R_1 t_2}{R_2 R_4} \quad,\qquad  
    y_3 = \frac{z_1 z_{12}}{z_2 z_3} = - \frac{R_1 t_3}{R_2 R_3} \nonumber\\
    y_4 &= \frac{z_2 z_{11}}{z_3 z_4} = \frac{R_2 t_4}{R_3 R_4} \quad,\qquad 
    y_5 = \frac{z_2 z_{10}}{z_1 z_4} = - \frac{R_2 t_5}{R_1 R_4} \quad,\qquad 
    y_6 = \frac{z_2 z_9}{z_1 z_3} = - \frac{R_2 t_6}{R_1 R_3} \nonumber\\
    y_7 &= \frac{z_3 z_8}{z_2 z_4} = \frac{R_3 t_7}{R_2 R_4} \quad,\qquad 
    y_8 = \frac{z_3 z_7}{z_1 z_4} = \frac{R_3 t_8}{R_1 R_4} \quad,\qquad 
    y_9 = \frac{z_4 z_6}{z_2 z_3} = \frac{R_4 t_9}{R_2 R_3} \nonumber\\
    y_{10} &= \frac{z_4 z_5}{z_1 z_3} = \frac{R_4 t_{10}}{R_1 R_3} \label{eq:duncescapHKPgeneric}
\end{align}
with
\begin{align}
    R_1 &= t_1+t_2+t_3-t_5-t_6-t_8-t_{10} \ \text{,}\qquad   
    R_2 = -t_2-t_3+t_4+t_5+t_6-t_7-t_9 \\
    R_3 &= t_1+t_3+t_4+t_6-t_7-t_8+t_9+t_{10} \ \text{,}\quad   
    R_4 = t_1+t_2+t_4+t_5+t_7+t_8-t_9-t_{10} \nonumber \ \text{.}
\end{align} 

However, up to now this \HKP gives the zero locus of discriminants for polynomials with generic coefficients. Thus, in order to adjust this parameterization to the physically relevant case we have to include the constraints given by the relations between coefficients of the Symanzik polynomials \cref{eq:DuncesCapZvars}. This can be accomplished e.g.\ by \softwareName{Mathematica} \cite{WolframResearchIncMathematicaVersion12} using the command ``Reduce'' or by \softwareName{Macaulay2} \cite{GraysonMacaulay2SoftwareSystem}. For algorithmical reasons it can be more efficient to reduce the effective variables $y_1,\ldots,y_{10}$, the linear forms $R_1,\ldots,R_4$ and the parameters $t_1,\ldots,t_{10}$ step by step. In doing so, we can eliminate $4$ parameters and the Landau variety splits into two components
\begin{align}
    \frac{m_2^2}{m_1^2} &= \frac{t_6^2 t_8}{t_5^2 t_{10}} \quad,\qquad 
    \frac{m_3^2}{m_1^2} = -\frac{t_6^2 t_9}{\left(t_5+t_6\right) \left(t_9+t_{10}\right)t_{10} } \quad,\qquad 
    \frac{m_4^2}{m_1^2} = -\frac{t_3 t_6 t_{10}}{\left(t_5+t_6\right) \left(t_9+t_{10}\right)t_9 } \nonumber\\
    \frac{s_1}{m_1^2} &= -\frac{t_5 \left(t_6 t_9+t_3 t_{10}\right)+t_6 \left(t_6 t_9+t_8 t_9+t_3 t_{10}+t_9t_{10}\right)}{t_5 t_9 t_{10}} \nonumber \\
    \frac{s_2}{m_1^2} &= -\frac{t_5 \left(t_9+t_{10}\right) \left(t_6 t_9+t_3 t_{10}\right) + t_6 \left[t_6 t_9^2+t_8 t_9 \left(t_9+t_{10}\right) -t_{10} \left(t_9^2+t_{10} t_9-t_3 t_{10}\right)\right]}{\left(t_5+t_6\right)  \left(t_9+t_{10}\right)t_9 t_{10}} \nonumber \\
    \frac{s_3}{m_1^2} &= -\frac{t_6 \left[t_6 \left(t_9+t_{10}\right) \left(t_6 t_9-t_8 t_9+t_3 t_{10} + t_9 t_{10}\right)+t_5 \left(t_6 t_9^2+t_3 t_{10}^2\right)\right]}{\left(t_5+t_6\right) \left(t_9+t_{10}\right) t_5 t_9 t_{10} } \label{eq:duncesCapResult1}
\end{align}
and 
\begin{align}
    \frac{m_2^2}{m_1^2} &=  \frac{t_6^2 t_8}{t_5^2 t_{10}} \quad,\qquad 
    \frac{m_3^2}{m_1^2} = \frac{t_6^2 t_9}{t_4 t_{10}^2} \quad,\qquad 
    \frac{m_4^2}{m_1^2} = \frac{t_6^2}{t_4 t_9} \quad,\qquad
    \frac{s_1}{m_1^2} = \frac{t_6 \left(\left(t_4-t_9\right) t_{10} -t_8 t_9\right)}{t_5 t_9 t_{10}} \nonumber \\
    \frac{s_2}{m_1^2} &= -\frac{t_6 \left[t_{10} \left(t_4 \left(t_9+t_{10}\right)+t_9 \left(2 t_6+t_9+t_{10}\right)\right)-t_8 t_9 \left(t_9+t_{10}\right)\right]}{t_4 t_9 t_{10}^2} \nonumber\\
    \frac{s_3}{m_1^2} &= -\frac{t_6^2 \left[t_8 t_9 \left(t_9+t_{10}\right)+t_{10} \left(t_4 \left(t_9+t_{10}\right)-t_9 \left(-2 t_5+t_9+t_{10}\right)\right)\right]}{t_4 t_5 t_9 t_{10}^2} \point \label{eq:duncesCapResult2}
\end{align}
Hence, the leading Landau variety of the dunce's cap graph will be given by the values of \cref{eq:duncesCapResult1} and \cref{eq:duncesCapResult2} for all values $[t_3 : t_5 : t_6 : t_8 : t_9 : t_{10}]\in\mathbb P^{6-1}_{\mathbb C}$ and $[t_4 : t_5 : t_6 : t_8 : t_9 : t_{10}]\in\mathbb P^{6-1}_{\mathbb C}$, respectively. We dispense with renaming of the parameters $t$ in order to ensure the reproducibility of the results from \cref{eq:duncescapHKPgeneric}. \bigskip


The remaining $32$ proper faces of $\Newt(\Ff)$ result in the contributions of the non-leading part of the Landau variety. We will determine them by means of a \softwareName{Macaulay2} \cite{GraysonMacaulay2SoftwareSystem,} routine, which can be found in the \cref{ssec:Macaulay2} and which is based on \cite{StaglianoPackageComputationsSparse2020, StaglianoPackageComputationsClassical2018}.

For the $7$ faces of $\Newt(\Ff)$ with codimension $1$ we obtain the following truncated polynomials $B_1,\ldots,B_7$ with their $A$-discriminants
\begin{align}
	B_1 &= s_3 x_2 x_3 x_4 + m_2^2 x_2^2 (x_3+x_4) + m_3^2 x_3^2 (x_2+x_4) + m_4^2 x_4^2 (x_2+x_3) = \Ff|_{x_1=0} \nonumber \\
	&\Rightarrow \Delta(B_1) = m_2^4 m_3^4 m_4^4 \left(s_3-m_2^2-m_3^2-m_4^2\right)^2 \nonumber \\
	&\qquad \left(s_3^{4} - 8 s_3^{2} m_2^2 m_3^2 + 16m_2^4m_3^4-8s_3^{2}m_2^2m_4^2-8s_3^{2}m_3^2m_4^2 + 64s_3m_2^2m_3^2m_4^2 \right. \nonumber \\
	&\qquad \left. - 32m_2^4m_3^2m_4^2-32m_2^2m_3^4m_4^2+16m_2^4m_4^4-32m_2^2m_3^2m_4^4+16m_3^4m_4^4\right)  \allowdisplaybreaks\\
	B_2 &= s_2 x_1 x_3 x_4 + m_1^2 x_1^2 (x_3+x_4) + m_3^2 x_3^2 (x_1+x_4) + m_4^2 x_4^2 (x_1+x_3) = \Ff|_{x_2=0} \nonumber \\
	&\Rightarrow \Delta(B_2) = m_1^4 m_3^4 m_4^4 \left( s_2-m_1^2-m_3^2-m_4^2 \right)^2 \nonumber \\
	&\qquad \left( s_2^{4}-8s_2^{2}m_1^2m_3^2+16m_1^4m_3^4-8s_2^{2}m_1^2m_4^2-8s_2^{2}m_3^2m_4^2 + 64s_2m_1^2m_3^2m_4^2 \right. \nonumber \\
	&\qquad \left. - 32m_1^4m_3^2m_4^2-32m_1^2m_3^4m_4^2+16m_1^4m_4^4-32m_1^2m_3^2m_4^4+16m_3^4m_4^4 \right)  \allowdisplaybreaks\\
	B_3 &= x_4 [ s_1 x_1 x_2 + m_1^2 x_1^2 + m_2^2 x_2^2 + m_4^2 x_4 (x_1+x_2)] = \Ff|_{x_3=0} \nonumber \\
	&\Rightarrow \Delta(B_3) = m_4^4(s_1-m_1^2-m_2^2)  \allowdisplaybreaks\\
	B_4 &= (x_3+x_4) ( s_1 x_1 x_2 + m_1^2 x_1^2 + m_2^2 x_2^2)  \Rightarrow \Delta(B_4) = s_1^2 - 4m_1^2m_2^2  \allowdisplaybreaks\\
	B_5 &= m_3^2 x_3^2 (x_1 + x_2 + x_4)  \Rightarrow \Delta(B_5) = 1  \allowdisplaybreaks\\
	B_6 &= x_3 ( s_1 x_1 x_2 + m_1^2 x_1^2 + m_2^2 x_2^2 + m_3^2 x_3 (x_1+x_2)) = \Ff|_{x_4=0} \nonumber \\
	&\Rightarrow \Delta(B_6) = m_3^4(s_1-m_1^2-m_2^2)  \allowdisplaybreaks \\
	B_7 &= m_4^2 x_4^2 (x_1 + x_2 + x_2)  \Rightarrow \Delta(B_7) = 1 \point
\end{align}
The truncated polynomials $B_1,B_2,B_3$ and $B_6$ would be also present in the approach of subgraphs, i.e.\ we can generate those polynomials also by setting certain variables $x_i$ to zero. However, for the other polynomials which can be not generated by subgraphs we will also obtain non-trivial contributions. Note, that the $A$-discriminant of $B_4$ will vanish when considering general solutions in $x\in\Csn$. The discriminant for $B_4$ given above is for the restriction of solutions with positive real part.

From the faces of $\Newt(\Ff)$ with codimension $2$ we will obtain $15$ truncated polynomials, which are listed below together with their $A$-discriminants
\begin{align}
	C_1 &= x_3 x_4 (m_3^2 x_3 + m_4^2 x_4) = \Ff|_{x_1=x_2=0} \Rightarrow \Delta(C_1) = 1 \allowdisplaybreaks\\
	C_2 &= x_2 x_4 (m_2^2 x_2 + m_4^2 x_4) = \Ff|_{x_1=x_3=0} \Rightarrow \Delta(C_2) = 1 \allowdisplaybreaks\\
	C_3 &= m_2^2 x_2^2 (x_3 + x_4) \Rightarrow \Delta(C_3) = 1 \allowdisplaybreaks\\
	C_4 &= m_3^2 x_3^2 (x_2 + x_4) \Rightarrow \Delta(C_4) = 1 \allowdisplaybreaks\\
	C_5 &= x_2 x_3 (m_2^2 x_2 + m_3^2 x_3) = \Ff|_{x_1=x_4=0} \Rightarrow \Delta(C_5) = 1 \allowdisplaybreaks\\
	C_6 &= m_4^2 x_4^2 (x_2 + x_3) \Rightarrow \Delta(C_6) = 1 \allowdisplaybreaks\\
	C_7 &= x_1 x_4 (m_1^2 x_1 + m_4^2 x_4) = \Ff|_{x_2=x_3=0} \Rightarrow \Delta(C_7) = 1 \allowdisplaybreaks\\
	C_8 &= m_1^2 x_1^2 (x_3 + x_4) \Rightarrow \Delta(C_8) = 1 \allowdisplaybreaks\\
	C_9 &= m_3^2 x_3^2 (x_1 + x_4) \Rightarrow \Delta(C_9) = 1 \allowdisplaybreaks\\
	C_{10} &= x_1 x_3 (m_1^2 x_1 + m_3^2 x_3) = \Ff|_{x_2=x_4=0} \Rightarrow \Delta(C_{10}) = 1 \allowdisplaybreaks\\
	C_{11} &= m_4^2 x_4^2 (x_1 + x_3) \Rightarrow \Delta(C_{11}) = 1 \allowdisplaybreaks\\
	C_{12} &= x_4 (s_1 x_1 x_2 + m_1^2 x_1^2 + m_2^2 x_2^2) \Rightarrow \Delta(C_{12}) = s_1^{2}-4 m_1^2 m_2^2 \allowdisplaybreaks\\
	C_{13} &= m_4^2 x_4^2 (x_1 + x_2) \Rightarrow \Delta(C_{13}) = 1 \allowdisplaybreaks\\
	C_{14} &= x_3 (s_1 x_1 x_2 + m_1^2 x_1^2 + m_2^2 x_2^2) \Rightarrow \Delta(C_{14}) = s_1^{2}-4 m_1^2 m_2^2 \allowdisplaybreaks\\ 
	C_{15} &= m_3^2 x_3^2 (x_1 + x_2) \Rightarrow \Delta(C_{15}) = 1 \point
\end{align}
The polynomials $C_1, C_2, C_5, C_7$ and $C_{10}$ also appear as Symanzik polynomials of subgraphs. Remarkably, the truncated polynomials $C_{12}$ and $C_{14}$, which do not correspond to subgraph polynomials, will result in non-trivial factors. These truncated polynomials have the shape of graph polynomials of a $1$-loop bubble graph. Their contribution to the singular locus is also not contained in any other discriminant coming from subgraph polynomials. Hence, those contributions were overlooked by the classical approach. Moreover, for $\Re(s_1)<0$ those additional singularities are on the principal sheet, i.e.\ they allow a solution in $x\in\mathbb R^n_{>0}$.

The remaining faces of codimension $3$ are the vertices of $\Newt(\Ff)$. They result in
\begin{align}
	D_1 &= m_3^2 x_3^2 x_4 \Rightarrow \Delta(D_1) = m_3^2 \qquad\quad  D_6 = m_3^2 x_3^2 x_2 \Rightarrow \Delta(D_6) = m_3^2 \allowdisplaybreaks\\  
	D_2 &= m_4^2 x_4^2 x_3 \Rightarrow \Delta(D_2) = m_4^2 \qquad\quad  D_7 = m_1^2 x_1^2 x_4 \Rightarrow \Delta(D_7) = m_1^2 \allowdisplaybreaks\\
	D_3 &= m_2^2 x_2^2 x_4 \Rightarrow \Delta(D_3) = m_2^2 \qquad\quad  D_8 = m_4^2 x_4^2 x_1 \Rightarrow \Delta(D_8) = m_4^2 \allowdisplaybreaks\\
    D_4 &= m_4^2 x_4^2 x_2 \Rightarrow \Delta(D_4) = m_4^2 \qquad\quad  D_9 = m_1^2 x_1^2 x_3 \Rightarrow \Delta(D_9) = m_1^2 \allowdisplaybreaks\\
	D_5 &= m_2^2 x_2^2 x_3 \Rightarrow \Delta(D_5) = m_2^2 \qquad\quad  D_{10} = m_3^2 x_3^2 x_1 \Rightarrow \Delta(D_{10}) = m_3^2 \point
\end{align}

Therefore, expressed by graphs, we will obtain the following contributions to the Landau variety of the dunce's cap graph

\begin{align}
    \widehat E_A (\Ff) &= \Delta \!\left( %
        \begin{tikzpicture}[baseline={([yshift=-.5ex]current bounding box.center)}, thick, dot/.style = {draw, shape=circle, fill=black, scale=.3}, every node/.style={scale=0.8}, scale=0.7]
            \coordinate[dot] (A) at (0,0);
            \coordinate[dot] (B) at (2,-1);
            \coordinate[dot] (C) at (2,1);
            \draw (A) -- node[above] {$1$} (C);  
            \draw (A) -- node[below] {$2$} (B);  
            \draw (C) arc[start angle=135, end angle=225, radius=1.4142];
            \draw (B) arc[start angle=-45, end angle=45, radius=1.4142];
            \node at (1.3,0) {$3$};
            \node at (2.7,0) {$4$};
            \draw (A) -- node[below] {$p_1$} ++(-1,0);
            \draw (B) -- node[below right,shift={(0,.1)}] {$p_3$} ++(0.2,-.7); 
            \draw (C) -- node[above right,shift={(0,.1)}] {$p_2$} ++(0.2,.7); 
        \end{tikzpicture} %
    \right) \cdot \Delta \!\left( %
        \begin{tikzpicture}[baseline={([yshift=-.5ex]current bounding box.center)}, thick, dot/.style = {draw, shape=circle, fill=black, scale=.3}, every node/.style={scale=0.8}, scale=0.7]
            \coordinate[dot] (A) at (0,0);
            \coordinate[dot] (B) at (2,0);
            \coordinate (A1) at (1,1);
            \coordinate (A2) at (1,-1);
            \draw (A) -- node[above] {$3$} (B);  
            \draw (1,0) circle (1); \node at (A1) [above = 0.5mm of A1] {$2$}; \node at (A2) [below = 0.5mm of A2] {$4$};
            \draw (A) -- node[below] {$p_3$} ++(-1,0);
            \draw (B) -- node[above,shift={(0,.1)}] {$p_1$} ++(1,1); 
            \draw (B) -- node[below,shift={(0,-.1)}] {$p_2$} ++(1,-1);
        \end{tikzpicture} %
    \right) \cdot \Delta \!\left( %
        \begin{tikzpicture}[baseline={([yshift=-.5ex]current bounding box.center)}, thick, dot/.style = {draw, shape=circle, fill=black, scale=.3}, every node/.style={scale=0.8}, scale=0.7]
            \coordinate[dot] (A) at (0,0);
            \coordinate[dot] (B) at (2,0);
            \coordinate (A1) at (1,1);
            \coordinate (A2) at (1,-1);
            \draw (A) -- node[above] {$3$} (B);  
            \draw (1,0) circle (1); \node at (A1) [above = 0.5mm of A1] {$1$}; \node at (A2) [below = 0.5mm of A2] {$4$};
            \draw (A) -- node[below] {$p_2$} ++(-1,0);
            \draw (B) -- node[above,shift={(0,.1)}] {$p_1$} ++(1,1); 
            \draw (B) -- node[below,shift={(0,-.1)}] {$p_3$} ++(1,-1);
        \end{tikzpicture} %
    \right) \cdot \nonumber \\
    & \Delta \!\left( %
        \begin{tikzpicture}[baseline={([yshift=-.5ex]current bounding box.center)}, thick, dot/.style = {draw, shape=circle, fill=black, scale=.3}, every node/.style={scale=0.8}, scale=0.7]
            \coordinate[dot] (A) at (0,0);
            \coordinate[dot] (B) at (2,0);
            \coordinate (A1) at (1,1);
            \coordinate (A2) at (1,-1);
            \coordinate (A3) at (3.2,0);
            \draw (1,0) circle (1); \node at (A1) [above = 0.5mm of A1] {$1$}; \node at (A2) [below = 0.5mm of A2] {$2$};
            \draw (A) -- node[below] {$p_1$} ++(-1,0);
            \draw (B) -- node[above,shift={(0,.6)}] {$p_2$} ++(0,1.3); 
            \draw (B) -- node[below,shift={(0,-.6)}] {$p_3$} ++(0,-1.3);
            \draw (2.6,0) circle (0.6); \node at (A3) [left = 0.5mm of A3] {$4$};
        \end{tikzpicture} %
    \right) \cdot \Delta \!\left( %
        \begin{tikzpicture}[baseline={([yshift=-.5ex]current bounding box.center)}, thick, dot/.style = {draw, shape=circle, fill=black, scale=.3}, every node/.style={scale=0.8}, scale=0.7]
            \coordinate[dot] (A) at (0,0);
            \coordinate[dot] (B) at (2,0);
            \coordinate (A1) at (1,1);
            \coordinate (A2) at (1,-1);
            \coordinate (A3) at (3.2,0);
            \draw (1,0) circle (1); \node at (A1) [above = 0.5mm of A1] {$1$}; \node at (A2) [below = 0.5mm of A2] {$2$};
            \draw (A) -- node[below] {$p_1$} ++(-1,0);
            \draw (B) -- node[above,shift={(0,.6)}] {$p_2$} ++(0,1.3); 
            \draw (B) -- node[below,shift={(0,-.6)}] {$p_3$} ++(0,-1.3);
            \draw (2.6,0) circle (0.6); \node at (A3) [left = 0.5mm of A3] {$3$};
        \end{tikzpicture} %
    \right) \cdot \Delta \!\left( %
        \begin{tikzpicture}[baseline={([yshift=-.5ex]current bounding box.center)}, thick, dot/.style = {draw, shape=circle, fill=black, scale=.3}, every node/.style={scale=0.8}, scale=0.7]
            \coordinate[dot] (A) at (0,0);
            \coordinate[dot] (B) at (2,0);
            \coordinate (A1) at (1,1);
            \coordinate (A2) at (1,-1);
            \coordinate (A3) at (4,0);
            \draw (1,0) circle (1); \node at (A1) [above = 0.5mm of A1] {$1$}; \node at (A2) [below = 0.5mm of A2] {$2$};
            \draw (A) -- node[below] {$p_1$} ++(-1,0);
            \draw (B) -- node[above,shift={(0,.1)}] {$p_2$} ++(1,1); 
            \draw (B) -- node[below,shift={(0,-.1)}] {$p_3$} ++(1,-1);
        \end{tikzpicture} %
    \right) \cdot\nonumber \\
    &  \Delta \!\left( %
        \begin{tikzpicture}[baseline={([yshift=-.5ex]current bounding box.center)}, thick, dot/.style = {draw, shape=circle, fill=black, scale=.3}, every node/.style={scale=0.8}, scale=0.7]
            \coordinate[dot] (A) at (0,0);
            \coordinate (A1) at (1.6,0);
            \draw (0.8,0) circle (.8); \node at (A1) [left = 0.5mm of A1] {$1$}; 
            \draw (A) -- node[left,shift={(-.3,0)}] {$p_2$} ++(-.7,0);
            \draw (A) -- node[left,shift={(-.3,.3)}] {$p_1$} ++(-.7,.7);
            \draw (A) -- node[left,shift={(-.3,-.3)}] {$p_3$} ++(-.7,-.7);
        \end{tikzpicture} %
    \right) \cdot \Delta \!\left( %
        \begin{tikzpicture}[baseline={([yshift=-.5ex]current bounding box.center)}, thick, dot/.style = {draw, shape=circle, fill=black, scale=.3}, every node/.style={scale=0.8}, scale=0.7]
            \coordinate[dot] (A) at (0,0);
            \coordinate (A1) at (1.6,0);
            \draw (0.8,0) circle (.8); \node at (A1) [left = 0.5mm of A1] {$2$}; 
            \draw (A) -- node[left,shift={(-.3,0)}] {$p_2$} ++(-.7,0);
            \draw (A) -- node[left,shift={(-.3,.3)}] {$p_1$} ++(-.7,.7);
            \draw (A) -- node[left,shift={(-.3,-.3)}] {$p_3$} ++(-.7,-.7);
        \end{tikzpicture} %
    \right) \cdot \Delta \!\left( %
        \begin{tikzpicture}[baseline={([yshift=-.5ex]current bounding box.center)}, thick, dot/.style = {draw, shape=circle, fill=black, scale=.3}, every node/.style={scale=0.8}, scale=0.7]
            \coordinate[dot] (A) at (0,0);
            \coordinate (A1) at (1.6,0);
            \draw (0.8,0) circle (.8); \node at (A1) [left = 0.5mm of A1] {$3$}; 
            \draw (A) -- node[left,shift={(-.3,0)}] {$p_2$} ++(-.7,0);
            \draw (A) -- node[left,shift={(-.3,.3)}] {$p_1$} ++(-.7,.7);
            \draw (A) -- node[left,shift={(-.3,-.3)}] {$p_3$} ++(-.7,-.7);
        \end{tikzpicture} %
    \right) \cdot \Delta \!\left( %
        \begin{tikzpicture}[baseline={([yshift=-.5ex]current bounding box.center)}, thick, dot/.style = {draw, shape=circle, fill=black, scale=.3}, every node/.style={scale=0.8}, scale=0.7]
            \coordinate[dot] (A) at (0,0);
            \coordinate (A1) at (1.6,0);
            \draw (0.8,0) circle (.8); \node at (A1) [left = 0.5mm of A1] {$4$}; 
            \draw (A) -- node[left,shift={(-.3,0)}] {$p_2$} ++(-.7,0);
            \draw (A) -- node[left,shift={(-.3,.3)}] {$p_1$} ++(-.7,.7);
            \draw (A) -- node[left,shift={(-.3,-.3)}] {$p_3$} ++(-.7,-.7);
        \end{tikzpicture} %
    \right) \label{eq:DuncesCapDecompositionGraphical}
\end{align}
where we suppressed multiplicities of those $A$-discriminants, and we dropped all the $A$-discriminants which are trivial. The first $5$ factors would be also expected by considering subgraphs. In contrast to the approach of subgraphs, we will additionally obtain contributions in terms of bubble graphs and tadpole graphs. Those parts will be missed within the classical approach. That the singularities corresponding to the truncated polynomials $C_{12}$ and $C_{14}$ result indeed to a threshold of the Feynman integral was found in \cite{AnastasiouTwoloopAmplitudesMaster2007}. \bigskip

We want to mention that it is by no means clear or trivial, that the overlooked contributions have the shape of discriminants of graph polynomials. In this particular example we will find that they behave like discriminants of graph polynomials. Furthermore, we can summarize the involved diagrams in \cref{eq:DuncesCapDecompositionGraphical} as follows: We take all subdiagrams which arise from the dunce's cap graph by shrinking edges and deleting tadpoles. Whether such a behaviour also applies in general can only be speculated at this point. \bigskip

%
%

%
%


\section{Coamoebas and Feynman's \texorpdfstring{$i\varepsilon$}{i\unichar{"03B5}} prescription} \label{sec:Coamoebas}

%

In the previous sections we only were asking if there are values of $p$ and $m$ such that there exists a multiple root of the Symanzik polynomials anywhere in the complex plane. However, when applied to Feynman integral representations, e.g.\ the Lee-Pomeransky representation \cref{eq:LeePomeranskyRepresentation} (or more generally to Euler-Mellin integrals) we only should worry if these multiple roots lie inside the integration region $\mathbb R^n_+$. Thus, in this section we want to study the position of singularities in the space of Schwinger parameters $x$. The other main aspect of this section will concern the multivalued structure of Feynman integrals, which is closely related to the previous question. 

In order to better investigate the nature of the multiple roots, we will introduce the concept of coamoebas. The coamoeba was invented by Mikael Passare as a related object to the amoeba, which goes back to Gelfand, Kapranov and Zelevinsky \cite{GelfandDiscriminantsResultantsMultidimensional1994}. Amoebas as well as coamoebas provide a link between algebraic geometry and tropical geometry \cite{NisseAmoebasCoamoebasLinear2016}. We suggest \cite{KazarnovskiiNewtonPolytopesTropical2021} for a recent survey about tropical geometry and its connection to toric varieties. The relations between coamoebas and Euler-Mellin integrals were investigated in \cite{NilssonMellinTransformsMultivariate2010, BerkeschEulerMellinIntegrals2013}, which state the key references for this section. For further reading about coamoebas we refer to \cite{NisseGeometricCombinatorialStructure2009, JohanssonCoamoebas2010, ForsgardHypersurfaceCoamoebasIntegral2012, JohanssonArgumentCycleCoamoeba2013, PassareDiscriminantCoamoebasHomology2012, NisseHigherConvexityCoamoeba2015, ForsgardOrderMapHypersurface2015, ForsgardTropicalAspectsReal2015}.\bigskip

One of the main reasons that only very little is known about Landau varieties is the very sophisticated nature of multivalued sheet structure of Feynman integrals. Hence, the purpose of this section is to suggest the study of a slightly simpler object. For many aspects as e.g.\ the monodromy of Feynman integrals, it turns out that the coamoeba provides deep insights in the sheet structure of Feynman integrals. Even though coamoebas are simpler objects than the sheet structure itself, they are still challenging and difficult to compute. However, coamoebas seem somehow easier accessible, as we will sketch in the end of this section.

This section does not intend to give a complete answer to those questions and should rather be seen as the first step of a bigger task. We will focus here on a very basic overview about the mathematical fundament of those objects and clarify some first questions.\bigskip

But before introducing coamoebas, we will concern with a very classical idea about handling the multivaluedness of Feynman integrals. As aforementioned, one often introduces a small imaginary part in the denominators of the Feynman integral in momentum space \cref{eq:FeynmanMomSp}. This so-called \textit{Feynman's $i\varepsilon$ prescription} is a kind of a minimal version to complexify the momenta and masses in the Feynman integral. When considering squared momenta and masses to be real numbers, we can elude poles in the integrand of \cref{eq:FeynmanMomSp} by introducing a small imaginary part. On the other hand, one can understand this $i\varepsilon$ also as an implementation of the time ordering \cite[sec. 6.2]{SchwartzQuantumFieldTheory2014}, whence the $i\varepsilon$ prescription is often related to causality \cite{HannesdottirWhatVarepsilonSmatrix2022}. Thus, we consider
\begin{align}
    \gls{FeynEps} = \int_{\mathbb R^{d\times L}} \prod_{j=1}^L \frac{\ddif k_j}{\pi^{d/2}} \prod_{i=1}^n \frac{1}{(q_i^2+m_i^2-i\varepsilon)^{\nu_i}} \comma \label{eq:FeynmanMomSpEpsilon}
\end{align}
where $m_i^2\in\mathbb R_{\geq 0}$ and $q_i^2 \in \mathbb R$, and replace \cref{eq:FeynmanMomSp} by $\lim_{\varepsilon\rightarrow 0^+} \mathcal I_\Gamma^\varepsilon$. Thus, the advantage of this procedure is, that the Feynman integral defined now by $\lim_{\varepsilon\rightarrow 0^+} \mathcal I_\Gamma^\varepsilon$ becomes a single-valued function. Hence, those definitions are nothing else than a choice of a principal sheet by determining the direction from which we want to enter the branch cut. Therefore, we will denote the discontinuity at those branch cuts by
\begin{align}
	\gls{Disc} = \lim_{\varepsilon\rightarrow 0^+} \left(\mathcal I_\Gamma^\varepsilon (d,\nu,p,m) - \mathcal I_\Gamma^{-\varepsilon} (d,\nu,p,m) \right) \label{eq:discontinuity1}
\end{align}
according to \cite{CutkoskySingularitiesDiscontinuitiesFeynman1960}. Note, that on the level of the transfer matrix $T$ the discontinuity is related to the imaginary part as seen in \cref{eq:OpticalTheorem}. By means of Schwarz' reflection principle, this behaviour can also be extended to individual Feynman integrals in many cases. We refer to \cite[sec. 4.4]{HannesdottirWhatVarepsilonSmatrix2022} for a discussion of this procedure. 

In equation \cref{eq:FeynmanMomSpEpsilon} we can absorb the introduction of $i\varepsilon$ by a transformation of $m_i^2 \mapsto m_i^2 - i\varepsilon$. Hence, we can simply determine the $i\varepsilon$-prescripted parametric Feynman integral representations, and we will obtain
\begin{align}
	\gls{FeynGenEps} &= \frac{\Gamma(\nu_0)}{\Gamma(\nu_0-\omega)\Gamma(\nu)} \int_{\mathbb R^n_+} \dif x \, x^{\nu-1} \left(\Uu+\Ff - i \varepsilon \Uu \sum_{i=1}^n x_i\right)^{-\nu_0} \nonumber \\
	&\stackrel{|\epsilon|\ll 1}{=} \frac{\Gamma(\nu_0)}{\Gamma(\nu_0-\omega)\Gamma(\nu)} \int_{\mathbb R^n_+} \dif x \, x^{\nu-1} \left(\Uu+\Ff - i \varepsilon\right)^{-\nu_0} \label{eq:LeePomeranskyRepresentationEpsilon}
\end{align}
as the equivalent for the generalized Feynman integral in representation \cref{eq:LeePomeranskyRepresentation}. When $|\varepsilon|\ll 1$ is small (as we usually want to assume), we can drop the factor $\Uu\sum_{i=1}^n x_i$ which is positive inside the integration region $(0,\infty)^n$. Hence, the $i\varepsilon$ prescription describes a possibility to circumvent poles in the integrand. However, this happens at the price of a limit that must subsequently be carried out.\bigskip

We want to suggest an alternative to Feynman's $i\varepsilon$ prescription, which occurs in the study of Euler-Mellin integrals \cite{NilssonMellinTransformsMultivariate2010,BerkeschEulerMellinIntegrals2013}. This variant coincides with the $i\varepsilon$ prescription in the limit $\varepsilon\rightarrow 0^+$. However, it avoids the need to consider a limit value. Since we want to make the argument slightly more general, let us introduce Euler-Mellin integrals for that purpose
\begin{align}
    \gls{EM} := \Gamma(t) \int_{\mathbb R^n_+} \dif x \, x^{s-1} f(x)^{-t} = \Gamma(t) \int_{\mathbb R^n} \dif w \, e^{s\cdot w} f\left(e^w\right)^{-t}  \label{eq:EulerMellin}
\end{align}
where $s\in \mathbb C^n$ and $t\in \mathbb C^k$ are complex parameters. Further, we denote $f^{-t}:= f_1^{-t_1}\cdots f_k^{-t_k}$ for powers of polynomials $f_i\in\mathbb C[x_1,\ldots,x_n]$, and we write $f:=f_1 \cdots f_k$. Parametric Feynman integrals can be treated as a special case of Euler-Mellin integrals with $k=1$ in the Lee-Pomeransky representation \cref{eq:LeePomeranskyRepresentation} or with $k=2$ in the Feynman representation \cref{eq:FeynmanParSpFeynman} for a convenient hyperplane $H(x)$.\bigskip

Following \cite{BerkeschEulerMellinIntegrals2013}, a polynomial $f\in\mathbb C[x_1,\ldots,x_n]$ with fixed coefficients is called \textit{\CNV on} $X\subseteq \mathbb C^n$ if for all faces $\tau\subseteq\Newt(f)$ of the Newton polytope, the truncated polynomials $f_\tau$ do not vanish on $X$. We can consider the vanishing of some truncated polynomial as a necessary condition in the approach of the principal $A$-determinant. Thus, if a polynomial $f$ is \CNV on a set $X$, roots of the principal $A$-determinant $E_A(f)$ will correspond to a solution $x\notin X$ outside of this set $X$. \bigskip

\begin{figure}[tb]
    \centering
    \begin{tikzpicture}[thick, dot/.style = {draw, cross out, scale=.5}, scale=1]
        \draw[thin,->] (-2,0) -- (4.3,0) node[anchor=north west] {$\Re (x)$};
        \draw[thin,->] (0,-1) -- (0,3) node[anchor=south east] {$\Im (x)$};
        \draw (0,0) -- (4.3,0);
        \draw (0,0) -- (4,1.5);
        \draw (1,0) arc[start angle=0, end angle=20.56, radius=1];
        \node at (0.84,0.14) {\footnotesize $\theta$};
        \node [anchor=west] (note1) at (5,0.5) {$\Arg^{-1}(0)=\mathbb R^n_+$};
        \draw [-stealth] (note1) to[out=180, in=70] (4.1,0.1);
        \node [anchor=west] (note2) at (4.5,2) {$\Arg^{-1}(\theta)$};
        \draw [-stealth] (note2) to[out=180, in=70] (3.8,1.525);
        \coordinate[dot] (a1) at (1.5,2);
        \coordinate[dot] (a2) at (0,2.2);
        \coordinate[dot] (a3) at (-1.4,0);
        \coordinate[dot] (a4) at (0.8,-0.8);
        \coordinate[dot] (a5) at (2,0);
        \node [anchor=north] (note3) at (3,4) {roots of $f(x)$};
        \draw [-stealth] (note3) to[out=270, in=60] (a1);
    \end{tikzpicture}
    \caption[Sketch of the idea behind the $\theta$-analogue Euler-Mellin integrals]{Sketch of the idea behind the $\theta$-analogue Euler-Mellin integrals. Thus, we allow the integration contour to rotate in the complex space $\mathbb C^n$. The value of $\mathscr M_f^\theta(s,t)$ changes only if we exceed zeros of $f(x)$ with the integration contour $\Arg^{-1}(\theta)$. Hence, $\mathscr M_f^\theta(s,t)$ is locally constant in $\theta$. Especially, when roots hit the original integration region $\mathbb R^n_+$ one has to consider a  $\theta$-analogue Euler-Mellin integral instead. For this picture, we would expect $5$ connected components in the complement of the closure of the coamoeba.} \label{fig:SketchThetaEM}	
\end{figure}
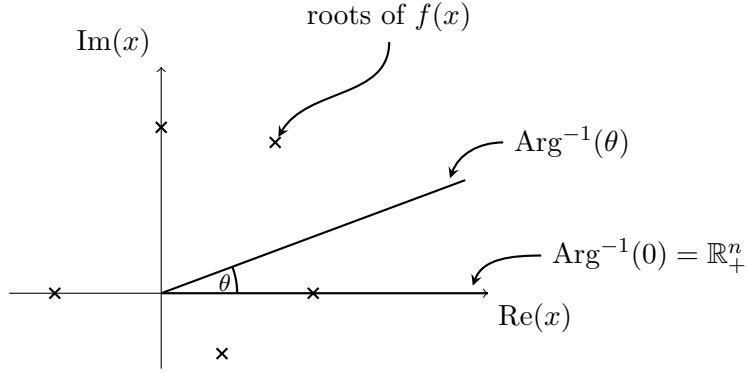

Analogue to the Feynman integral, Euler-Mellin integrals \cref{eq:EulerMellin} become ill-defined, when $f$ is not \CNV on the positive orthant $(0,\infty)^n$. Hence, in \cite{BerkeschEulerMellinIntegrals2013, NilssonMellinTransformsMultivariate2010} a slightly more general version of Euler-Mellin integrals was introduced, where we rotate the original integration contour by an angle $\theta=(\theta_1,\ldots,\theta_n)$ in the complex plane
\begin{align}
    \gls{EMtheta} := \Gamma(t) \int_{\Arg^{-1} (\theta)} \dif x\, x^{s-1} f(x)^{-t} = e^{is\theta}\, \Gamma(t)  \int_{\mathbb R_+^n} \dif x\, x^{s-1} f\!\left(x e^{i\theta}\right)^{-t} \label{eq:EulerMellinTheta}
\end{align}
with the component-wise argument map $\gls{Arg} := (\arg x_1,\ldots,\arg x_n)$. For short we write $f\!\left(x e^{i\theta}\right) := f\!\left(x_1 e^{i\theta_1},\ldots,x_n e^{i\theta_n}\right)$, and we will call \cref{eq:EulerMellinTheta} the \textit{$\theta$-analogue Euler-Mellin integral}. Deforming the integration contour slightly in cases where poles of the integrand hit the integration contour, is the same procedure as Feynman's $i\varepsilon$ prescription. To illustrate the idea behind $\theta$-analogue Euler-Mellin integrals we sketched those integration regions in \cref{fig:SketchThetaEM} in a one-dimensional case. By choosing $\theta=(\varepsilon,\ldots,\varepsilon)$ the $\theta$-analogue Euler-Mellin integral and the $i\varepsilon$-prescripted Feynman integral coincide for small $\varepsilon$
\begin{align}
    \mathcal I_\Aa^\varepsilon (\nuu,z) &= \frac{\Gamma(\nu_0)}{\Gamma(\nu)\Gamma\left(\nu_0-\omega\right)}\int_{\Arg^{-1}(\varepsilon,\ldots,\varepsilon)} \dif x\, x^{\nu-1} (\Uu+\Ff)^{-\nu_0} \nonumber\\
    &=\frac{\Gamma(\nu_0)}{\Gamma(\nu)\Gamma\left(\nu_0-\omega\right)}\int_{\mathbb R^n_+} \dif x\, x^{\nu-1} e^{i\varepsilon(\omega-\nu_0)}(\Uu e^{-i\varepsilon} +\Ff)^{-\nu_0} \nonumber\\
    &\stackrel{|\varepsilon|\ll 1}{=}  \frac{\Gamma(\nu_0)}{\Gamma(\nu)\Gamma\left(\nu_0-\omega\right)}\int_{\mathbb R^n_+} \dif x\, x^{\nu-1} (\Uu + \Ff - i\varepsilon)^{-\nu_0} \label{eq:iepsilonThetaEM}
\end{align}
where we used the homogenity of Symanzik polynomials. As we are only interested in the limit $\varepsilon\rightarrow 0^+$, we will assume that $\varepsilon$ is small enough\footnote{Obviously, \cref{eq:iepsilonThetaEM} holds only if the $\theta$-analogue Euler-Mellin integral $\mathscr M_\Gg^\varepsilon (\nuu,z)$ exists for small values of $\varepsilon$. Furthermore, we will expect in general that the different parametric Feynman representations have distinct $\varepsilon$-domains in their $\theta$-analogue extension.}.\bigskip

This alternative description by means of a rotated integration contour has several advantages. Note that the $\theta$-analogue Euler-Mellin integral is locally constant in $\theta$. Hence, the value of $\mathcal I^\varepsilon_\Aa$ and $\mathcal I^{\varepsilon^\prime}_\Aa$ can only differ if there are roots of $\Gg(x)$ in the segment of $\mathbb C^n$ spanned by $\Arg^{-1}(\varepsilon,\ldots,\varepsilon)$ and $\Arg^{-1}(\varepsilon^\prime,\ldots,\varepsilon^\prime)$. This follows directly from residue theorem by closing the integration region at infinity, since the integrand of Euler-Mellin integrals vanishes sufficiently fast when $|x|$ tends to infinity. Therefore, we will find regions in which a variation of $\varepsilon$ leaves the Feynman integral $\mathcal I_\Aa^\varepsilon$ constant. Especially, we do not have to take limits to determine the discontinuity, i.e.\ we will now have
\begin{align}
	\Disc \mathcal I_\Aa (\nuu,z) = \mathcal I_\Aa^\varepsilon (\nuu,z) - \mathcal I_\Aa^{-\varepsilon} (\nuu,z)
\end{align}
instead of \cref{eq:discontinuity1}, where $\varepsilon$ has to be sufficiently small (but not infinitesimally small). This can reduce the complexity of determining discontinuities significantly. \bigskip

As seen above, in order to get well-defined integrals in \cref{eq:EulerMellinTheta} we have to track the poles of the integrand. Denote by $\gls{Zf} := \left\{ x\in(\mathbb C^*)^n \,\rvert\, f(x) = 0 \right\}$ the set of (non-zero) roots of a polynomial $f$. To analyze when these poles meet the integration contour it is natural to consider
\begin{align}
    \gls{Coamoeba} := \operatorname{Arg}(\mathcal Z_f) \subseteq \mathbb T^n \label{eq:defCoamoeba}
\end{align}
the argument of the zero locus. We will call the set $\mathcal C_f$ the \textit{coamoeba} of $f$, which is motivated by the fact that the coamoeba presents the imaginary counterpart of the amoeba (see \cref{fig:ComplexLogarithmMaps}). Since the argument map is a periodic function we can restrict the discussion without loss of generality to the $n$-dimensional real torus $\gls{RTorus} := \left(\normalslant{\mathbb R}{2\pi\mathbb Z}\right)^n$. Moreover, the closure of the coamoeba is closely related to the \CNV of $f$, which we will see in the following lemma.

\begin{lemma}[\cite{BerkeschEulerMellinIntegrals2013,JohanssonCoamoebas2010}] \label{lem:CoamoebasAndCNV}
    For $\theta\in\mathbb T^n$, the polynomial $f(x)$ is \CNV on $\Arg^{-1} (\theta)$ if and only if $\theta \notin \overline{\mathcal C_f}$. Equivalently, we have 
    \begin{align}
    	\overline{\mathcal C_f} = \bigcup_{\tau\subseteq\Newt(f)} \mathcal C_{f_\tau} \point
    \end{align}
\end{lemma}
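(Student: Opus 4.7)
The two statements are equivalent: by definition, $f$ fails to be completely non-vanishing on $\Arg^{-1}(\theta)$ precisely when some truncated polynomial $f_\tau$ (including the trivial face $\tau = \Newt(f)$, for which $f_\tau = f$) possesses a root with argument $\theta$, i.e.\ when $\theta \in \bigcup_\tau \mathcal C_{f_\tau}$. Hence it suffices to establish the identity $\overline{\mathcal C_f} = \bigcup_\tau \mathcal C_{f_\tau}$, which I would split into two inclusions.

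For the inclusion $\bigcup_\tau \mathcal C_{f_\tau} \subseteq \overline{\mathcal C_f}$, the trivial face contributes $\mathcal C_f$ itself. For a proper face $\tau$, suppose $\theta = \Arg(x_0)$ with $f_\tau(x_0) = 0$, and pick a linear functional $\omega \in (\mathbb R^n)^\vee$ whose maximum on $\Newt(f)$ is attained exactly on $\tau$. The one-parameter family $x(t) := x_0 \cdot t^\omega$ for $t > 0$ has constant argument $\theta$, and with $c = \max_{a \in \Newt(f)} \omega \cdot a$ one finds $t^{-c} f(x(t)) = f_\tau(x_0) + O(t^{-\delta})$ for some $\delta > 0$. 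After possibly perturbing $x_0$ slightly within the smooth locus of $\mathcal Z_{f_\tau}$, the implicit function theorem applied to $t^{-c} f(\,\cdot \, t^\omega)$ in a direction transverse to $\mathcal Z_{f_\tau}$ yields actual roots $\widetilde x(t)$ of $f$ close to $x(t)$ for $t$ large, with $\Arg(\widetilde x(t)) \to \theta$, placing $\theta$ in $\overline{\mathcal C_f}$.

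For the reverse inclusion $\overline{\mathcal C_f} \subseteq \bigcup_\tau \mathcal C_{f_\tau}$, pick a sequence of roots $x^{(k)} \in \mathcal Z_f$ with $\Arg(x^{(k)}) \to \theta$, and decompose $x^{(k)} = \exp(w^{(k)} + i\theta^{(k)})$ with $w^{(k)} = \log|x^{(k)}|$. If $\{w^{(k)}\}$ is bounded, a convergent subsequence produces a root $x^* \in \mathcal Z_f$ with $\Arg(x^*) = \theta$, giving $\theta \in \mathcal C_f$. Otherwise, passing to a subsequence, $u^{(k)} := w^{(k)}/\|w^{(k)}\|$ tends to a unit vector $\omega$, and I let $\tau$ be the face of $\Newt(f)$ maximizing $a \mapsto \omega \cdot a$. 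Rescaling $0 = f(x^{(k)}) = \sum_a z_a (x^{(k)})^a$ by $\exp(-\|w^{(k)}\|\, \max_a \omega \cdot a)$, the monomials with $a \notin \tau$ are exponentially suppressed while those with $a \in \tau$ survive; after further extraction the limit yields $x^* \in (\mathbb C^*)^n$ with $f_\tau(x^*) = 0$ and $\Arg(x^*) = \theta$, so $\theta \in \mathcal C_{f_\tau}$.

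The main technical obstacle lies in the second inclusion: the tropical rescaling must be carried out carefully to ensure that, after subsequence extractions, the rescaled roots converge in the directions transverse to $\ker \omega$ to a point in $(\mathbb C^*)^n$ rather than escaping within $\tau$ itself. I would handle this by iterating the construction within the smaller Newton polytope $\tau$; the induction terminates since $\dim \tau$ strictly decreases and vertices correspond to non-vanishing monomials. The implicit function theorem step in the first inclusion requires $\nabla f_\tau(x_0) \neq 0$, but this genericity assumption is harmless because the smooth locus of $\mathcal Z_{f_\tau}$ is dense in $\mathcal Z_{f_\tau}$ and arguments depend continuously on the chosen point, so any $\theta \in \mathcal C_{f_\tau}$ is approximated by arguments of smooth points.
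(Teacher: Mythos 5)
The paper does not prove this lemma at all --- it is quoted from \cite{BerkeschEulerMellinIntegrals2013,JohanssonCoamoebas2010} --- so there is no in-text argument to compare against. Your proof is essentially the standard one from those references: reduce the CNV statement to the closure identity $\overline{\mathcal C_f}=\bigcup_\tau \mathcal C_{f_\tau}$, prove ``$\supseteq$'' by following a curve $x_0\,t^\omega$ of constant argument toward the face $\tau$ and perturbing to genuine roots of $f$, and prove ``$\subseteq$'' by a tropical limit of a sequence of roots, with an induction over faces to handle escape within the affine span of $\tau$. The overall structure is sound, and you correctly identified the two places where care is needed (the transverse convergence in the rescaling, and the nondegeneracy needed for the implicit function theorem).

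One technical point deserves a cleaner fix. In the inclusion $\mathcal C_{f_\tau}\subseteq\overline{\mathcal C_f}$ you invoke the implicit function theorem at a point $x_0$ with $\nabla f_\tau(x_0)\neq 0$ and then appeal to density of the smooth locus of $\mathcal Z_{f_\tau}$. If $f_\tau$ is non-reduced (e.g.\ $f_\tau=g^2h$), the set where $\nabla f_\tau\neq 0$ is \emph{not} dense in $\mathcal Z_{f_\tau}$ --- it can even be empty --- so the perturbation you describe is not always available. The standard repair is to avoid the implicit function theorem altogether: restrict $F_t(y):=t^{-c}f(y\,t^\omega)=f_\tau(y)+O(t^{-\delta})$ to a generic complex line through $x_0$ on which $f_\tau$ does not vanish identically; Hurwitz's theorem (or Rouch\'e) then produces zeros of $F_t$ converging to $x_0$ as $t\to\infty$, whose arguments tend to $\theta$, which is all that is needed since only the closure $\overline{\mathcal C_f}$ is claimed. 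With that replacement the argument is complete; also note in passing that the union should be taken over nonempty faces, the vertices contributing empty coamoebas since a single monomial never vanishes on $(\mathbb C^*)^n$.
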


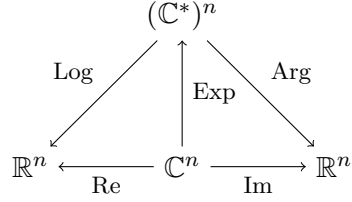
\begin{figure}[t]
    \centering
	\begin{tikzpicture}
        \node (C) at (0,0) {$\mathbb C^n$};
        \node (R1) at (2,0) {$\mathbb R^n$};
        \node (R2) at (-2,0) {$\mathbb R^n$};
        \node (CS) at (0,2) {$\Csn$};
        \draw[->] (C) to node[below] {\footnotesize $\Im$} (R1);
        \draw[->] (C) to node[below] {\footnotesize $\Re$} (R2);
        \draw[->] (C) to node[right] {\footnotesize $\operatorname{Exp}$} (CS);
        \draw[->] (CS) to node[above right] {\footnotesize $\Arg$} (R1);
        \draw[->] (CS) to node[above left] {\footnotesize $\operatorname{Log}$} (R2);
	\end{tikzpicture}
	\caption[Relations between the multivariate complex logarithm]{Relations between the multivariate complex logarithm \cite{NilssonMellinTransformsMultivariate2010}. We write $\operatorname{Log} (x) := (\ln |x_1|,\ldots,\ln |x_n|)$ for the logarithm of absolute values. For the set of roots $\mathcal Z_f$ we will call $\operatorname{Log}(\mathcal Z_f)$ the amoeba of $f$ and $\Arg(\mathcal Z_f)$ the coamoeba of $f$.} \label{fig:ComplexLogarithmMaps}
\end{figure}

Applied to Feynman integrals, this lemma gives a criterion, whether the limit of the $i\varepsilon$ prescripted Feynman integral \cref{eq:LeePomeranskyRepresentationEpsilon} $\lim_{\varepsilon\rightarrow 0^+} \mathcal I^\varepsilon_\Aa(\nuu,z)$ differs from the original Feynman integral $\mathcal I_\Aa(\nuu,z)$.

\begin{lemma} \label{lem:nonphysical}
    Assume that $z\in\Var(E_A(\Gg))$ is a singular point. If $0\notin \overline{\mathcal C_\Gg}$, the common solution of $\Gg=x_1 \pd{\Gg}{x_1} = \ldots = x_n \pd{\Gg}{x_n} = 0$ generating the singular point $z$ according to \cref{sec:ADiscriminantsReultantsPrincipalADets} does not lie on the integration contour $x \notin\mathbb R^n_+$. The same is true, when $\Gg$ is replaced by $\Ff$, which applies to the Landau variety $\mathcal L_1 (\mathcal I_\Gamma)$.
\end{lemma}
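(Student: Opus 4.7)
The proof reduces almost immediately to unwinding the relevant definitions; the work is really conceptual rather than technical. I will sketch the plan in two steps, treating $\Gg$ first and then remarking on the $\Ff$ case.

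First, I would recall the characterization of the singular locus via the principal $A$-determinant. By the definition in \cref{eq:principalAdetResultant}, a point $z\in\Var(E_A(\Gg))$ is generated by some $x\in(\mathbb C^*)^n$ that solves the system $\Gg(x)=x_1\partial_1\Gg(x)=\ldots=x_n\partial_n\Gg(x)=0$. In particular such an $x$ is a non-zero root of $\Gg$, i.e.\ $x\in\mathcal Z_\Gg$.

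Next, I would argue by contradiction. Assume the generating $x$ lies in $\mathbb R^n_+$. Every coordinate $x_i$ is then a strictly positive real, so $\arg(x_i)=0$ and hence $\Arg(x)=0\in\mathbb T^n$. Combined with $x\in\mathcal Z_\Gg$ this gives
\begin{align*}
0=\Arg(x)\in\Arg(\mathcal Z_\Gg)=\mathcal C_\Gg\subseteq\overline{\mathcal C_\Gg},
\end{align*}
contradicting the hypothesis $0\notin\overline{\mathcal C_\Gg}$. (As a sanity check one may instead invoke \cref{lem:CoamoebasAndCNV}: the hypothesis gives that $\Gg$ is \CNV on $\Arg^{-1}(0)=\mathbb R^n_+$, so in particular $\Gg$ itself cannot vanish on $\mathbb R^n_+$, which rules out $x\in\mathbb R^n_+$ directly.)

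Finally, the assertion for $\Ff$ and the Landau variety $\mathcal L_1(\mathcal I_\Gamma)=\Var(E_{\Aa_\Ff}(\Ff))$ is obtained by the identical argument with $\Ff$ in place of $\Gg$, the only modification being that the hypothesis reads $0\notin\overline{\mathcal C_\Ff}$ and the critical system is the one generating $E_{\Aa_\Ff}(\Ff)$. There is no genuine obstacle in either case; the entire content of the lemma is to make explicit that the coamoeba of $\Gg$ (resp.\ $\Ff$) detects precisely whether a pinch of the integration contour $\mathbb R^n_+$ can occur, so that singularities $z$ for which $0\notin\overline{\mathcal C_\Gg}$ (resp.\ $0\notin\overline{\mathcal C_\Ff}$) are non-physical in the sense that the pinch happens away from the chain of integration.
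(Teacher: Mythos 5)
Your core idea is the right one and is essentially the paper's: the hypothesis $0\notin\overline{\mathcal C_\Gg}$ says that no relevant vanishing can occur on $\Arg^{-1}(0)=\mathbb R^n_+$. However, your main argument only covers part of $\Var(E_A(\Gg))$. By \cref{thm:pAdet-factorization} the principal $A$-determinant factors as $\prod_{\tau\subseteq\Newt(\Gg)}\Delta_{A\cap\tau}(\Gg_\tau)^{\mu(A,\tau)}$, so a singular point $z$ may lie on a component $\Var\!\left(\Delta_{A\cap\tau}(\Gg_\tau)\right)$ for a \emph{proper} face $\tau$. For such $z$ the generating critical point $x\in(\mathbb C^*)^n$ solves the truncated system $\Gg_\tau=\partial_1\Gg_\tau=\ldots=\partial_n\Gg_\tau=0$ and need not be a root of $\Gg$ itself: it enters $E_A(\Gg)$ only through the Zariski closure in the definition of the $A$-resultant in \cref{eq:defMixedAresultantsZ}. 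Your chain $0=\Arg(x)\in\mathcal C_\Gg\subseteq\overline{\mathcal C_\Gg}$ therefore does not apply to these components, and neither does your parenthetical sanity check, which only rules out zeros of the full polynomial $\Gg$ on $\mathbb R^n_+$.

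The repair is one line and is exactly what the paper's proof does: use the full content of \cref{lem:CoamoebasAndCNV}, namely $\overline{\mathcal C_\Gg}=\bigcup_{\tau\subseteq\Newt(\Gg)}\mathcal C_{\Gg_\tau}$, or equivalently that $0\notin\overline{\mathcal C_\Gg}$ holds if and only if $\Gg$ is \CNV on $\mathbb R^n_+$, i.e.\ \emph{no} truncated polynomial $\Gg_\tau$ vanishes there. Then for any face $\tau$, a generating solution $x\in\mathbb R^n_+$ would give $0=\Arg(x)\in\mathcal C_{\Gg_\tau}\subseteq\overline{\mathcal C_\Gg}$, contradicting the hypothesis. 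With this addition your argument is complete, and the $\Ff$ case indeed carries over verbatim via $\widehat E_{\Aa_\Ff}(\Ff)=\pm\prod_{\tau\subseteq\Newt(\Ff)}\Delta_{\Aa_\Ff\cap\tau}(\Ff_\tau)$.
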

\begin{proof}
	The lemma follows obviously from the previous \cref{lem:CoamoebasAndCNV}, which states in particular that $0\notin\overline{\mathcal C_\Gg}$ is equivalent to $\Gg(x)$ being \CNV on $\mathbb R^n_+$. The latter means nothing more than that $\Gg_\tau(x)=0$ has no solutions with $x\in\mathbb R^n_+$ for all faces $\tau\subseteq\Newt(\Gg)$. The application of \cref{thm:pAdet-factorization} concludes the proof.
\end{proof}

Singular points corresponding to multiple roots outside the original integration contour $\mathbb R^n_+$ are also known as \textit{pseudo thresholds}. Thus, \cref{lem:nonphysical} provides a criterion for those pseudo thresholds. We have to remark, that \cref{lem:nonphysical} is not necessarily an effective criterion to determine pseudo thresholds. However, certain approximations of coamoebas are known, which may result in more effective criteria. We will present an overview of those approximations below.\bigskip

Hence, we are mainly interested in the complement of the closure of coamoebas $\gls{Cc}:=\mathbb T^n\setminus\overline{\mathcal C_f}$ because only for points $\theta\in\Cc$ the $\theta$-analogue Euler-Mellin integrals are well-defined. It is well known \cite{NisseGeometricCombinatorialStructure2009}, that this complement $\Cc$ is structured into a finite number of connected, convex components, and we will denote such a connected component by $\Theta$. Moreover, the number of connected components of $\Cc$ will be bounded by $\vol (\Newt (f))$. As pointed out above, the $\theta$-analogue Euler-Mellin integral \cref{eq:EulerMellinTheta} only depends on a choice of a connected component $\Theta$, as one can see simply by a homotopic deformation of the integration contour. Thus, for all $\theta\in\mathbb T^n$ inside the same connected component $\Theta\subset \Cc$, the value of $\mathscr M^\theta_f(s,t)$ stays the same, whereas the value may change for another component $\Theta^\prime$. Therefore, we can also write $\mathscr M^\Theta_f(s,t)$ instead of $\mathscr M^\theta_f(s,t)$ from \cref{eq:EulerMellinTheta}.

Connected components of $\Cc$ will also relate to different branches of the Feynman integral. Therefore, the connected components $\Theta$ will generate the multivalued sheet structure in a certain sense. For the connection between coamoeba and homology groups we refer to \cite{PassareDiscriminantCoamoebasHomology2012, NisseHigherConvexityCoamoeba2015}. However, we have to remark that the coamoeba may not generate the full fundamental group, as the coamoeba can not distinguish between poles of the integrand having the same argument, i.e.\ points which lie on the same ray.

\begin{example}[Coamoeba for the $1$-loop self-energy graph with one mass] \label{ex:CoamoebaBubble1Mass}
    To illustrate the concept of coamoebas we will consider the $1$-loop self-energy graph with one massive edge for different kinematic regions. The Lee-Pomeransky polynomial and the simple principal $A$-determinant are given by
    \begin{align}
    	\Gg &= x_1 + x_2 + (p^2+m_1^2) x_1 x_2 + m_1^2 x_1^2 \\
    	\widehat E_A(\Gg) &= p^2 m_1^2 (p^2+m_1^2)
    \end{align}
    for this particular graph. Therefore, with $m_1^2\neq 0$ we will expect two singular points $p^2=0$ and $p^2=-m_1^2$. Recall, the conventions for momenta from \cref{sec:FeynmanIntegralsIntro}, i.e.\ we will allow $p^2$ to be negative, when not restricting ourselves to Euclidean kinematics. Thus, when exceeding those thresholds the structure of the coamoeba changes, which we can be seen explicitly in \cref{fig:CoamoebaShellBubble1Mass}. \bigskip
    
    Thereby, every point $(\theta_1,\theta_2)\in\mathbb T^2$ which lies not in the closure of the coamoeba defines an integration contour of the Feynman integral $\mathscr M_\Gg^\theta (\nuu,z)$. As aforementioned, the value of $\mathscr M_\Gg^\theta(\nuu,z)$ does not change for two angles $\theta$ and $\theta^\prime$ which are in the same connected component of $\Cc[\Gg]$. The coamoebas in picture \cref{fig:coamoebaBubble1MassA} to \cref{fig:coamoebaBubble1MassE} have only one connected component of $\Cc[\Gg]$. However, when exceeding the threshold $p^2=-m_1^2$ the complement of the closure of the coamoeba consists in two components (\cref{fig:coamoebaBubble1MassF}). Thus, in the first cases we would expect a single analytic continuation, whereas we will expect two different analytic continuations in the region $p^2<-m_1^2$. This will agree with the actual behaviour of the Feynman integral, which is shown in \cref{fig:Bubble1MassBranches}.
    
    Furthermore, we can also see the application of \cref{lem:nonphysical}. Thus, for the threshold $p^2 = 0$ the origin $\theta_1=\theta_2=0$ does not lie in the closure of the coamoeba $0\notin\overline{\mathcal C_\Gg}$. Therefore, $p^2=0$ is only a pseudo threshold, because we have not necessarily to change the original integration contour $\mathbb R^2_+ = \Arg^{-1}(0,0)$.
    
    \begin{figure}
        \centering
        \begin{subfigure}{.49\textwidth}
            \centering\includegraphics[width=.75\textwidth]{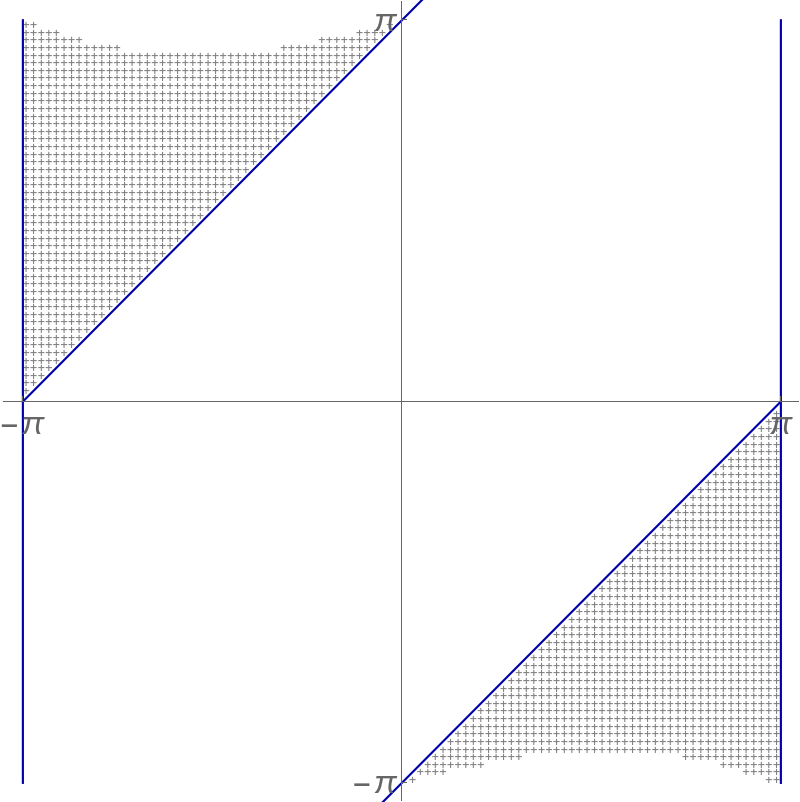}
            \caption{$-p^2<0$} \label{fig:coamoebaBubble1MassA}
        \end{subfigure}
        \begin{subfigure}{.49\textwidth}
            \centering\includegraphics[width=.75\textwidth]{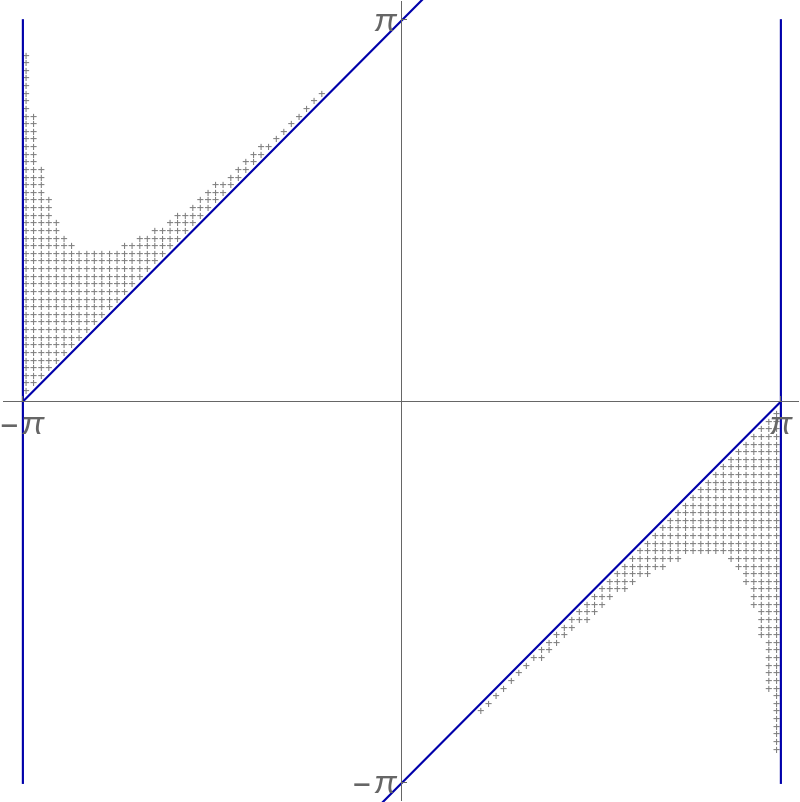}
            \caption{$-p^2<0$}
        \end{subfigure} \\[.5\baselineskip]
        \begin{subfigure}{.49\textwidth}
            \centering\includegraphics[width=.75\textwidth]{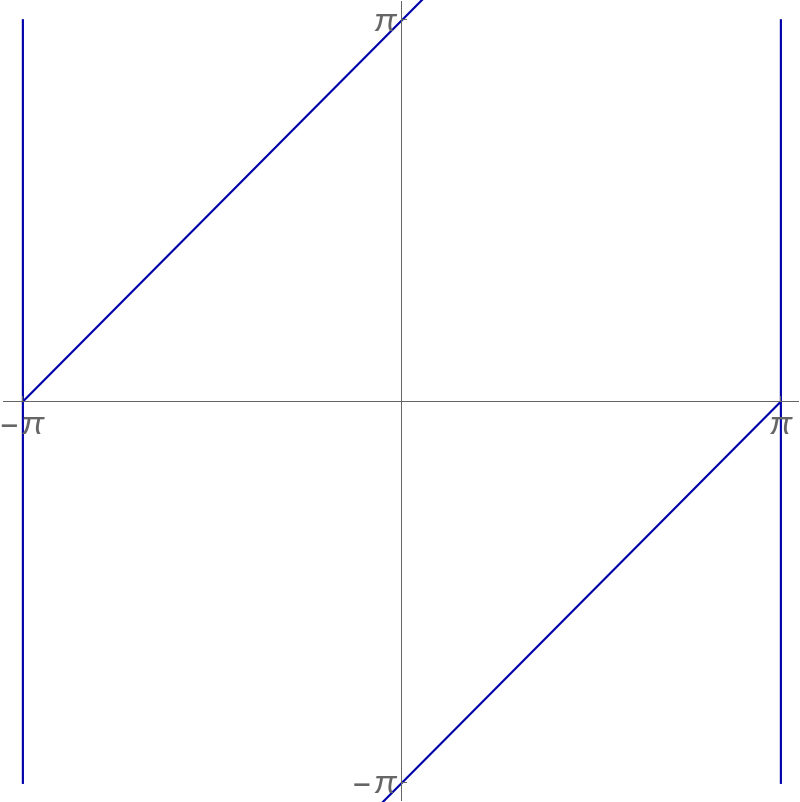} 
            \caption{$p^2=0$}
        \end{subfigure}
        \begin{subfigure}{.49\textwidth}
            \centering\includegraphics[width=.75\textwidth]{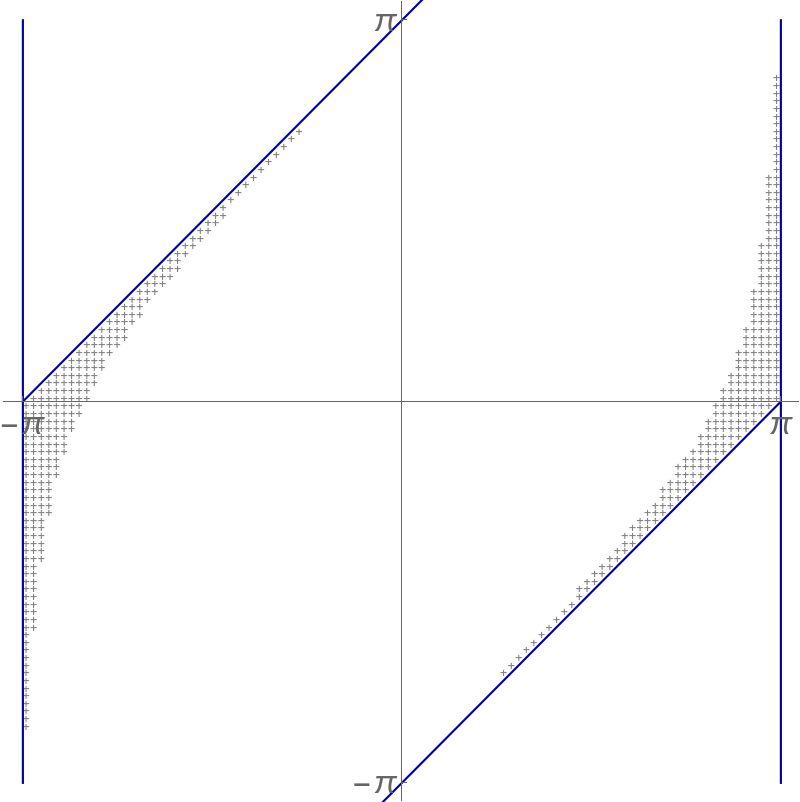}
            \caption{$0<-p^2<m_1^2$}
        \end{subfigure}  \\[.5\baselineskip]
        \begin{subfigure}{.49\textwidth}
            \centering\includegraphics[width=.75\textwidth]{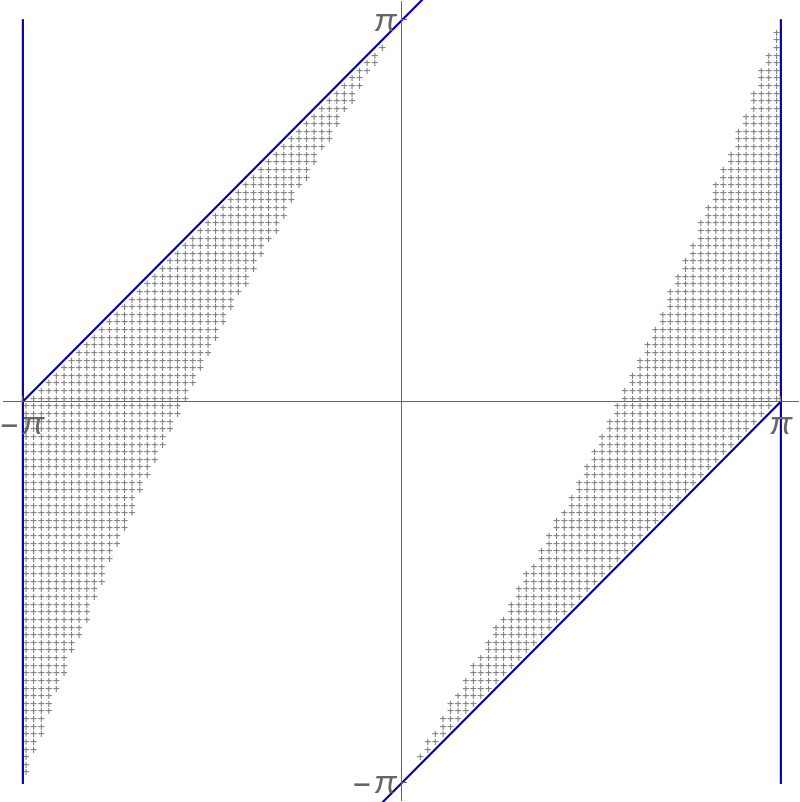}
            \caption{$0<-p^2<m_1^2$} \label{fig:coamoebaBubble1MassE}
        \end{subfigure}
        \begin{subfigure}{.49\textwidth}
            \centering\includegraphics[width=.75\textwidth]{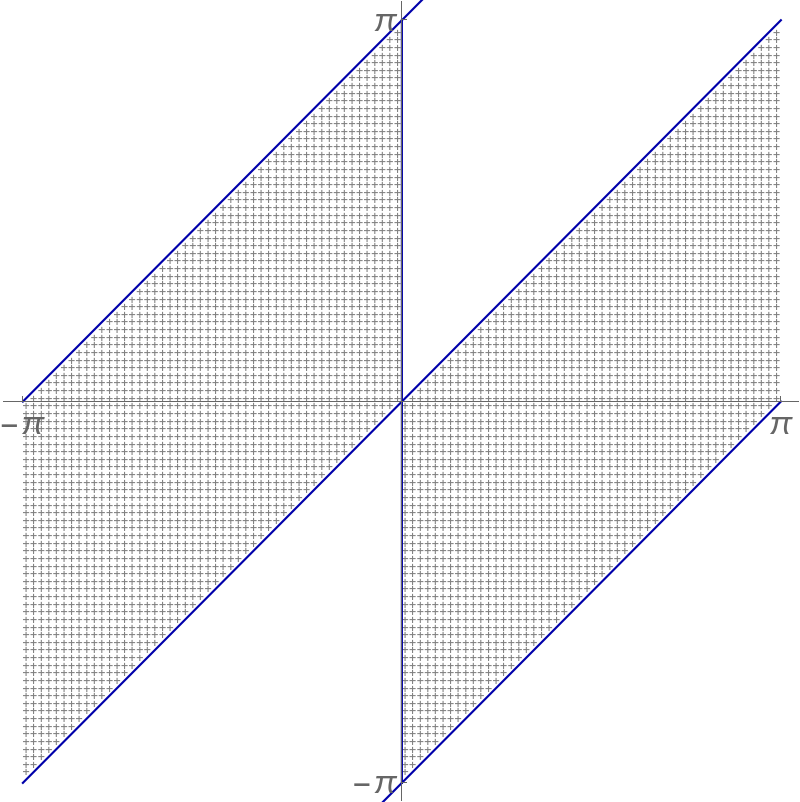}
            \caption{$0<m_1^2<-p^2$} \label{fig:coamoebaBubble1MassF}
        \end{subfigure}
        \caption[Coamoeba $\mathcal C_\Gg$ of the $1$-loop self-energy graph with one mass]{Coamoeba $\mathcal C_\Gg$ of the $1$-loop self-energy graph with one mass discussed in \cref{ex:CoamoebaBubble1Mass}. From \cref{fig:coamoebaBubble1MassA} to \cref{fig:coamoebaBubble1MassF} the momentum $-p^2$ is increased. Whenever a threshold is exceeded the structure of the coamoeba changes. All those coamoebas are drawn only on the principle domain $\mathbb T^2$. Exemplarily, we printed the coamoeba in the case $p^2>0$ on a larger domain in \cref{fig:CoamoebaShellBubble1MassBigPicture}. The blue lines depict the shell $\mathcal H_\Gg$ of the coamoeba.} \label{fig:CoamoebaShellBubble1Mass}
    \end{figure}
        
    \begin{figure}
        \centering\includegraphics[width=.5\textwidth]{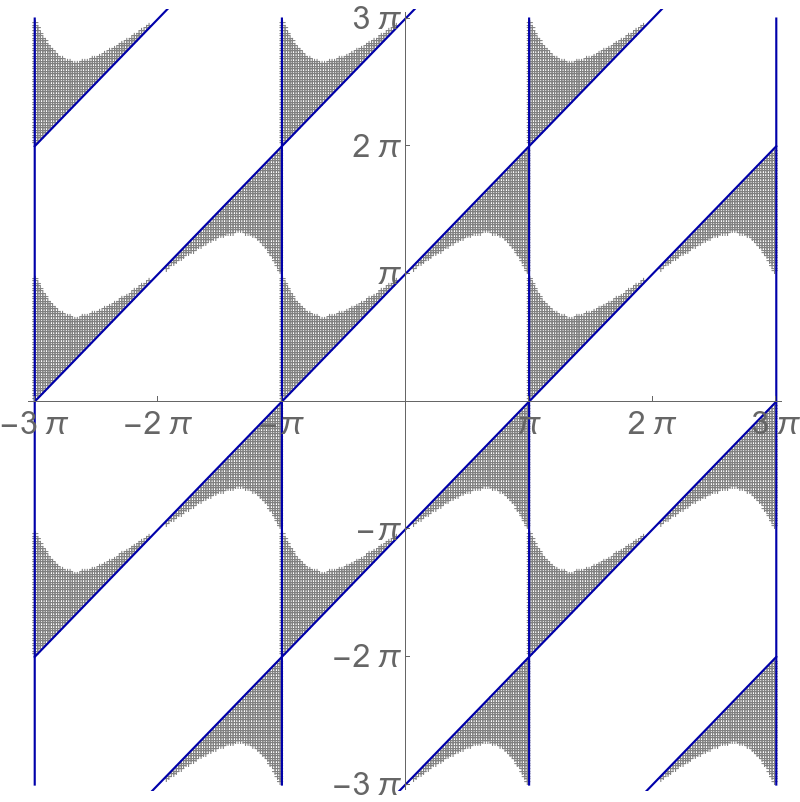}
        \caption[Coamoeba $\mathcal C_\Gg$ of the $1$-loop self-energy graph with one mass (larger region)]{Coamoeba $\mathcal C_\Gg$ of the $1$-loop self-energy graph with one mass for the domain $(-3\pi,3\pi)^2$. Depicted is a coamoeba for the kinematical region $p^2 > 0$. The blue lines show the shell $\mathcal H_\Gg$ of the coamoeba.} \label{fig:CoamoebaShellBubble1MassBigPicture}
    \end{figure}

    \begin{figure}
        \centering\includegraphics[width=.5\textwidth]{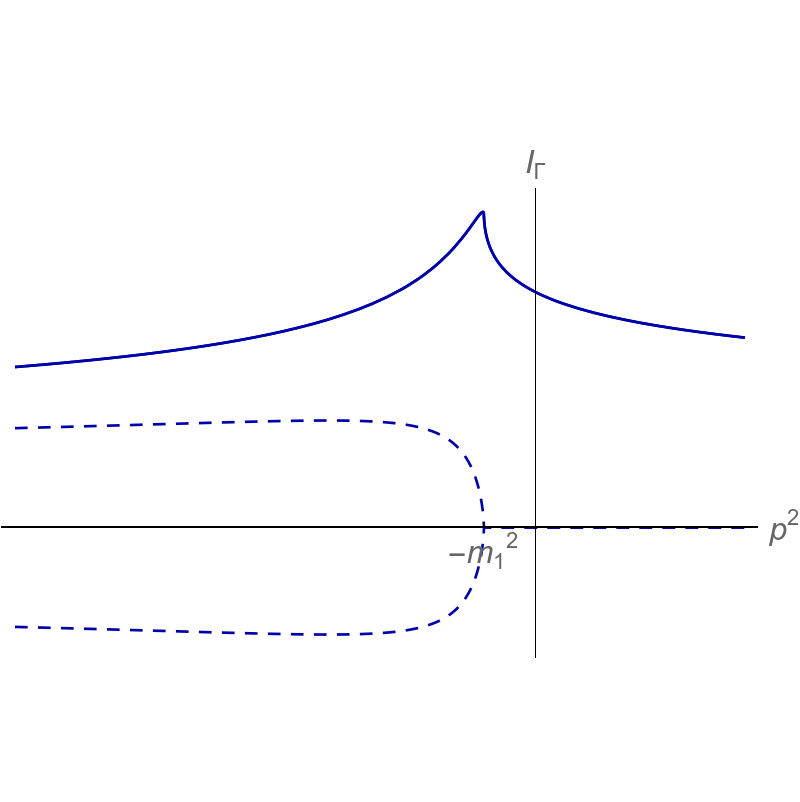}
        \caption[Real and imaginary part of $\mathcal I_\Gamma$ for the $1$-loop self-energy graph with one mass]{Real (solid line) and imaginary part (dashed line) of $\mathcal I_\Gamma$ for the $1$-loop self-energy graph with one mass. The imaginary part of $\mathcal I_\Gamma$ will split into two branches at the threshold $p^2+m_1^2=0$. That observation is consistent with \cref{fig:CoamoebaShellBubble1Mass}. When crossing this particular threshold, the coamoeba has two connected components in $\Cc[\Gg]$. Therefore, we also expect two analytic continuations.} \label{fig:Bubble1MassBranches}
    \end{figure}
\end{example}


As aforementioned, the coamoeba is not very easily accessible in its original definition \cref{eq:defCoamoeba}. However, there are certain ways to approximate coamoebas more efficiently. As seen in \cref{lem:CoamoebasAndCNV}, the closure of the coamoeba $\overline{\mathcal C_f}$ is the union of all coamoebas of truncated polynomials of $f$. Restricting us to the faces of dimension $1$ we obtain the \textit{shell} of $f$ \cite{JohanssonCoamoebas2010, NisseGeometricCombinatorialStructure2009, ForsgardOrderMapHypersurface2015} 
\begin{align}
    \gls{shell} = \bigcup_{\substack{\tau\subseteq\Newt(f) \\ \dim (\tau) = 1}} \mathcal C_{f_\tau} \point
\end{align}
It is known, that each cell of the hyperplane arrangement of $\mathcal H_f$ contains at most one connected component of $\Cc$ \cite{ForsgardHypersurfaceCoamoebasIntegral2012}. Moreover, every line segment with endpoints in $\Cc$ intersecting $\overline{\mathcal C}_f$ has also to intersect a coamoeba $\mathcal C_{f_\tau}$ of a truncated polynomial $f_\tau$ where $\dim(\tau)=1$ \cite[lem. 2.3]{ForsgardOrderMapHypersurface2015}. This behavior will also explain the name ``shell''. Thus, $\mathcal H_f$ carries the rough structure of $\mathcal C_f$. \bigskip

\begin{example}[Shell for the $1$-loop self-energy graph with one mass] \label{ex:ShellBubble1Mass}
    We want to continue \cref{ex:CoamoebaBubble1Mass} and determine the shell $\mathcal H_\Gg$ for the $1$-loop self-energy graph with one mass. There are four truncated polynomials corresponding to a face $\tau\subseteq\Newt(\Gg)$ having dimension one
    \begin{gather}
    	\Gg_{\tau_1} (x) = x_1 + x_2\ \text{,}\quad \Gg_{\tau_2} (x) = x_1 + m_1^2 x_1^2 \\
    	\Gg_{\tau_3} (x) = (p^2+m_1^2) x_1 x_2 + m_1^2 x_1^2\ \text{,}\quad \Gg_{\tau_4} (x) = x_2 + (p^2+m_1^2) x_1 x_2 \point
    \end{gather}
    Replacing the variables $x_i \mapsto r_i e^{i\theta_i}$ with $r_i\in\mathbb R_{>0}$ and $\theta_i\in\mathbb T$ we can directly conclude to the coamoebas
    \begin{align}
    	\mathcal C_{\Gg_{\tau_1}} &= \big\{ \theta\in\mathbb T^2 \,\big\rvert\, \theta_1-\theta_2 \in (2\mathbb Z +1)\pi \big\} \\
    	\mathcal C_{\Gg_{\tau_2}} &= \big\{ \theta\in\mathbb T^2 \,\big\rvert\, \theta_1 \in (2\mathbb Z +1)\pi \big\} \\
    	\mathcal C_{\Gg_{\tau_3}} &= \begin{cases}
                                        \big\{ \theta\in\mathbb T^2 \,\big\rvert\, \theta_1-\theta_2 \in (2\mathbb Z +1)\pi \big\} & \text{if}\  \frac{p^2+m_1^2}{m_1^2} > 0 \\
                                        \big\{ \theta\in\mathbb T^2 \,\big\rvert\, \theta_1-\theta_2 \in (2\mathbb Z) \pi  \big\} & \text{if}\ \frac{p^2+m_1^2}{m_1^2} < 0
    	                             \end{cases} \\
        \mathcal C_{\Gg_{\tau_4}} &= \begin{cases}
                                        \big\{ \theta\in\mathbb T^2 \,\big\rvert\, \theta_1 \in (2\mathbb Z +1)\pi \big\} & \text{if}\ p^2+m_1^2 > 0 \\
                                        \big\{ \theta\in\mathbb T^2 \,\big\rvert\, \theta_1 \in (2\mathbb Z)\pi \big\} & \text{if}\ p^2+m_1^2 < 0
    	                             \end{cases}                     
    \end{align}
    where we assume $m_1^2 \in \mathbb R_{>0}$ and $p^2\in\mathbb R$. We have depicted the shell together with its coamoeba in \cref{fig:CoamoebaShellBubble1Mass}. Thus, we will also see a structural change in the shell, when exceeding the threshold $p^2=-m_1^2$. In particular, we can read off from the shell, that below $p^2=m_1^2$ there can be at most one connected component in $\Cc[\Gg]$, whereas above this threshold there are at most two connected components of $\Cc[\Gg]$ possible. Hence, we get without much effort (the shell is always an arrangement of hyperplanes) the rough structure of the coamoeba.
\end{example}

\vspace{\baselineskip}
Another way, for approximating the coamoeba, is the so-called \textit{lopsided coamoeba} $\mathcal L \mathcal C_f$. According to \cite{ForsgardOrderMapHypersurface2015} we can characterize the lopsided coamoeba as
\begin{align}
	\gls{lop} := \left\{ \theta\in\mathbb T^n \, \big\rvert \, \exists t\in\mathbb R^N_{>0}\ \text{ s.t.}\ t_1 e^{i [\arg(z_1) + a^{(1)} \cdot \theta ]} + \ldots + t_N e^{i [\arg(z_N) + a^{(N)} \cdot \theta ]} = 0 \right\}
\end{align}
where $z_j\in\mathbb C$ and $a^{(j)}\in\mathbb Z^{n}$ refer to the representation of the polynomial $f(x) = \sum_{j=1}^N z_j x^{a^{(j)}}$. Therefore, we have to consider only a system of linear equations to determine if a point $\theta\in\mathbb T^n$ belongs to the lopsided coamoeba. This is considerably simpler than the situation for coamoebas where we have to consider in general non-linear systems $\mathcal C_f = \{\theta\in\mathbb T^n \,\rvert\, \exists r\in\mathbb R^n_{>0}\ \text{s.t.}\ f(r_1 e^{i\theta_1},\ldots,r_n e^{i\theta_n}) = 0 \}$. The lopsided coamoeba is a coarser version of the coamoeba and will contain the coamoeba $\mathcal C_f \subseteq \mathcal L \mathcal C_f$. Furthermore, each connected component of $\Cc$ contains at most one connected component of $\overline{\mathcal L \mathcal C}_f^c$ \cite[prop. 2.2.10]{ForsgardHypersurfaceCoamoebasIntegral2012}. But we will not necessarily find all connected components of $\Cc$ by considering $\overline{\mathcal L \mathcal C}_f^c$. We get immediately the following extension of \cref{lem:nonphysical}.

\begin{lemma} \label{lem:nonphysicalLopsided}
    Assume that $z\in\Var(E_A(\Gg))$ is a singular point. If $0\notin \overline{\mathcal L \mathcal C_\Gg}$, the common solution of $\Gg=x_1 \pd{\Gg}{x_1} = \ldots = x_n \pd{\Gg}{x_n} = 0$ generating the singular point $z$ according to \cref{sec:ADiscriminantsReultantsPrincipalADets} does not lie on the integration contour $x \notin\mathbb R^n_+$. The same is true, when $\Gg$ is replaced by $\Ff$, which applies to the Landau variety $\mathcal L_1 (\mathcal I_\Gamma)$.
\end{lemma}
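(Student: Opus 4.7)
The plan is to derive this as an essentially immediate corollary of Lemma \ref{lem:nonphysical}, exploiting the containment of the ordinary coamoeba in the lopsided one. First I would recall the basic fact, stated in the paragraph preceding the lemma, that $\mathcal{C}_f \subseteq \mathcal{LC}_f$ for any polynomial $f$; this inclusion is what makes $\mathcal{LC}_f$ a legitimate (coarser) approximation of the coamoeba. Taking topological closures preserves inclusions, so one obtains $\overline{\mathcal{C}_\Gg} \subseteq \overline{\mathcal{LC}_\Gg}$, and equivalently the reverse inclusion for the complements in $\mathbb{T}^n$.

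Next I would observe that the hypothesis $0 \notin \overline{\mathcal{LC}_\Gg}$ together with this inclusion immediately yields $0 \notin \overline{\mathcal{C}_\Gg}$. At that point the conclusion for $\Gg$ follows directly from Lemma \ref{lem:nonphysical}: the common solution of $\Gg = x_1 \partial_1 \Gg = \ldots = x_n \partial_n \Gg = 0$ producing the singular point $z$ cannot lie in $\mathbb{R}^n_+$, since otherwise its argument would be $0 \in \Arg(\mathcal{Z}_\Gg) = \mathcal{C}_\Gg \subseteq \overline{\mathcal{C}_\Gg}$, contradicting what we just established.

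For the analogous statement with $\Ff$ in place of $\Gg$, the argument is identical: the factorization of $E_{\Aa_\Ff}(\Ff)$ in Theorem \ref{thm:pAdet-factorization} shows that any root of a component $\Delta_{\Aa_\Ff \cap \tau}(\Ff_\tau)$ of the Landau variety corresponds to a common zero of $\Ff_\tau$ and its logarithmic derivatives in $(\mathbb{C}^*)^n$; applying the implication $0 \notin \overline{\mathcal{LC}_\Ff} \Rightarrow 0 \notin \overline{\mathcal{C}_\Ff}$ together with Lemma \ref{lem:nonphysical} rules out solutions with $x \in \mathbb{R}^n_+$.

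Honestly, there is no genuine obstacle here — the lemma is a true corollary, and the only conceptual content is the inclusion $\mathcal{C}_f \subseteq \mathcal{LC}_f$ which is built into the definition of the lopsided coamoeba. The main point of stating it separately is practical: because membership in $\mathcal{LC}_\Gg$ (respectively $\mathcal{LC}_\Ff$) can be tested by a linear feasibility problem in $t \in \mathbb{R}^N_{>0}$, rather than by solving the nonlinear system cutting out $\mathcal{Z}_\Gg$, the criterion $0 \notin \overline{\mathcal{LC}_\Gg}$ is computationally far more tractable than the one in Lemma \ref{lem:nonphysical}, even though it is strictly weaker.
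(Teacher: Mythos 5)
Your argument is correct and is exactly the paper's proof: the lemma follows from $\overline{\mathcal C_\Gg}\subseteq\overline{\mathcal L\mathcal C_\Gg}$ together with \cref{lem:nonphysical}. The extra remarks on the linear-feasibility test and on the statement for $\Ff$ are accurate but not needed beyond what the one-line deduction already gives.
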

\begin{proof}
	This follows trivially from \cref{lem:nonphysical} and $\overline{\mathcal C_\Gg} \subseteq \overline{\mathcal L \mathcal C_\Gg}$.
\end{proof}

There is also an alternative description of the lopsided coamoeba. Let $\operatorname{Tri}(f)$ be the set of trinomials, i.e.\ all polynomials which can be formed by removing all but three monomials from $f$. Then we have the following identity \cite[prop. 2.2.5]{ForsgardHypersurfaceCoamoebasIntegral2012}
\begin{align}
    \overline{\mathcal L \mathcal C_f} = \bigcup_{g\in\operatorname{Tri}(f)} \overline{\mathcal C_g} \point
\end{align}
where we only have to determine the roots of trinomials. Furthermore, lopsided coamoebas can also be seen as the union of all coamoebas $\mathcal C_f$, where we scale the coefficients $z_j$ of $f$ by positive reals numbers \cite[lem. 3.8]{ForsgardOrderMapHypersurface2015}. Hence, the lopsided coamoeba is in particular sensible to a change of signs in the coefficients of $f$.

\begin{figure}
	\centering
	\begin{subfigure}{.45\textwidth}
		\centering\includegraphics[width=.75\textwidth]{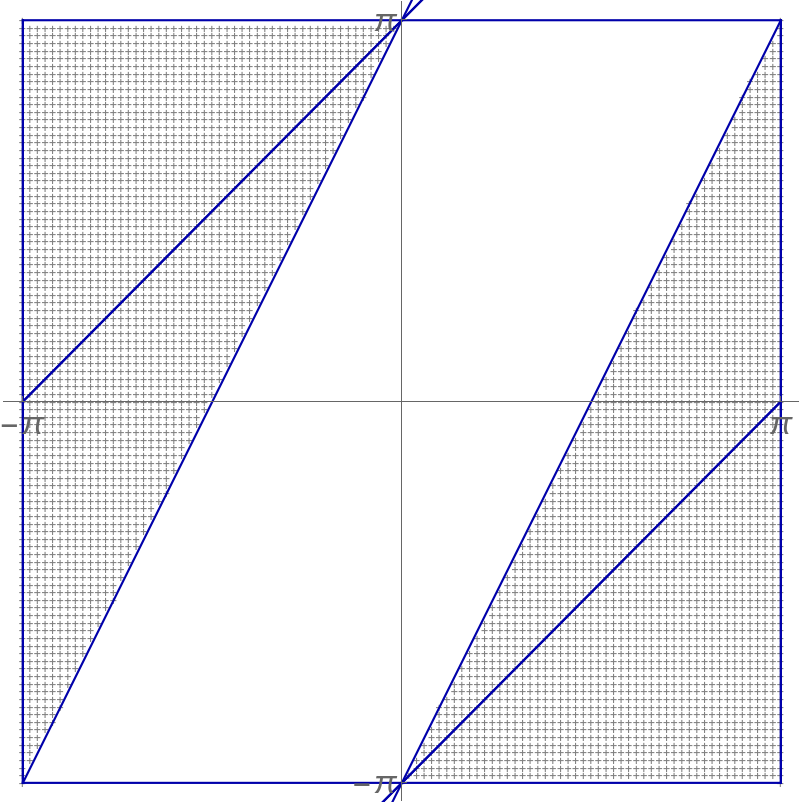}
        \caption{$p^2+m_1^2>0$} 
	\end{subfigure}
	\begin{subfigure}{.45\textwidth}
		\centering\includegraphics[width=.75\textwidth]{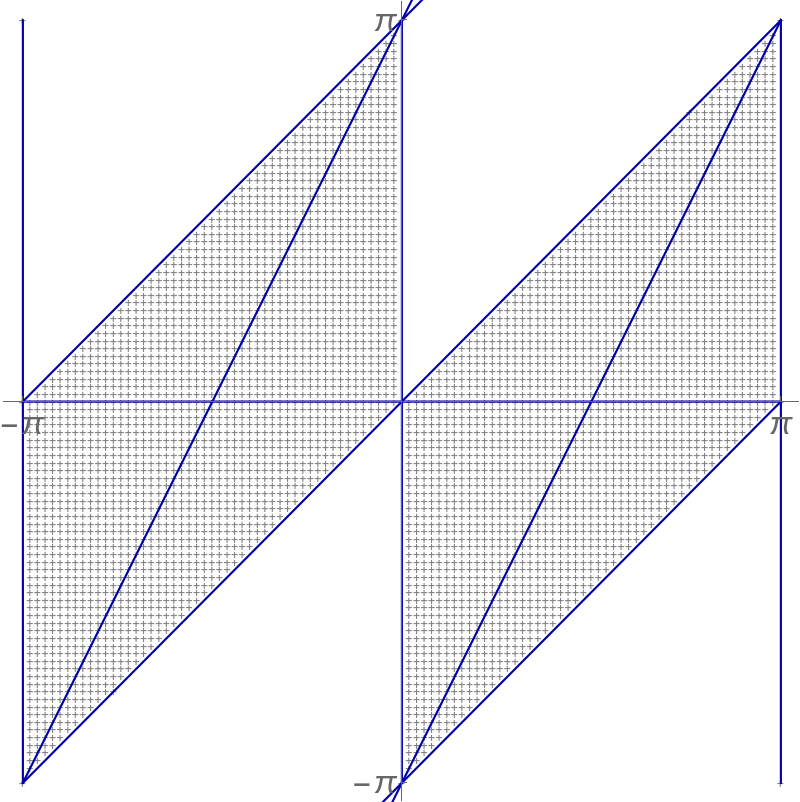}
        \caption{$p^2+m_1^2<0$} 
	\end{subfigure}
	\caption[Lopsided coamoeba $\mathcal C_\Gg$ of the $1$-loop self-energy graph with one mass]{The lopsided coamoeba $\mathcal L \mathcal C_\Gg$ of the $1$-loop self-energy graph with one mass. Unlike the coamoeba (\cref{fig:CoamoebaShellBubble1Mass}) we can only detect two different shapes when varying $p^2\in\mathbb R$ and $m_1^2\in\mathbb R_{>0}$. However, we can see that the threshold $p^2=0$ corresponds only to a pseudo threshold, since the origin $\theta_1=\theta_2=0$ is not contained in the closure of the lopsided coamoeba. The blue lines show the shell $\mathcal L\mathcal H_\Gg$ of the lopsided coamoeba.} \label{fig:LopsidedcoamoebaBubble1Mass}
\end{figure}

Similar to the shell of coamoebas, one can also define a shell of lopsided coamoebas \cite{ForsgardHypersurfaceCoamoebasIntegral2012}, which reads as
\begin{align}
	\gls{lopshell} = \bigcup_{g\in\operatorname{Bin}(f)} \mathcal C_g
\end{align}
where $\operatorname{Bin}(f)$ denotes the set of all binomials of $f$. As roots of binomials are trivial, one can always give a simple algebraic expression for the lopsided shell $\mathcal L \mathcal H_f$. Note, that the boundary $\overline{\mathcal L \mathcal C}_f$ of lopsided coamoebas will be contained in the shell of lopsided coamoebas $\mathcal L \mathcal H_f$ \cite[prop. 3.6]{ForsgardOrderMapHypersurface2015}. Hence, one can determine lopsided coamoebas effectively by their shells. In \cref{fig:LopsidedcoamoebaBubble1Mass} we continued the \cref{ex:CoamoebaBubble1Mass,ex:ShellBubble1Mass} related to \cref{fig:CoamoebaShellBubble1Mass} for the lopsided coamoeba. One can observe, that the lopsided coamoeba changes only for the second threshold $p^2+m_1^2=0$. However, we can detect by means of the much simpler lopsided coamoeba, that $p^2=0$ belongs to a pseudo threshold. Because the origin is not contained in the closure of the lopsided coamoeba $\overline{\mathcal L\mathcal C}_\Gg$ for $p^2=0$ and $m_1^2>0$, the origin will also not contained in the closure of coamoeba due to $\overline{\mathcal L \mathcal C}^c_\Gg \subseteq \overline{\mathcal C}^c_\Gg$. \bigskip

We want to give a second example to illustrate the behaviour of coamoebas and lopsided coamoebas further.

\begin{example}[Coamoebas for the $1$-loop self-energy graph with two masses] \label{eq:CoamoebaBubble2Masses}
	Let us add a second mass to the previous example, and consider
	\begin{align}
		\Uu = x_1 + x_2 \qquad \Ff = (p^2+m_1^2+m_2^2) x_1 x_2 + m_1^2 x_1^2 + m_2^2 x_2^2 \point
	\end{align}
	We can take two different perspectives and either consider the Euler-Mellin integral $\mathscr M_{\Uu+\Ff}^\theta (\nuu,z)$ according to \cref{eq:LeePomeranskyRepresentation} or the Euler-Mellin integral $\mathscr M_{\tilde \Uu\tilde \Ff}^\theta (\beta,z)$ where $\tilde \Uu=\Uu|_{x_n=1}$, $\tilde \Ff = \Ff|_{x_n=1}$ and $\beta = (\nu_0-\omega,\omega,\nu_1,\ldots,\nu_{n-1})$ according to \cref{eq:FeynmanParSpFeynman}. The advantage of $\mathscr M_{\tilde \Uu\tilde \Ff}^\theta (\beta,z)$ is that we have one Schwinger parameter less than in $\mathscr M_{\Uu+\Ff}^\theta (\nuu,z)$. Thus, the coamoeba of the first is an object in $\mathbb T$, whereas the coamoeba for the latter is in $\mathbb T^2$. However, they will both have the same singular locus, since there $\Aa$-hypergeometric systems are equivalent according to \cref{sec:FeynmanIntegralsAsAHyp}
	\begin{align}
        \widehat E_A (\Uu+\Ff) &= \pm p^2 m_1^2 m_2^2 \left[\!\left( p^2+m_1^2+m_2^2\right)^2 - 4m_1^2m_2^2\right] \\
		\widehat E_{\tilde A} (\tilde \Uu \tilde \Ff) &= \pm p^4 m_1^2 m_2^2 \left[\!\left( p^2+m_1^2+m_2^2\right)^2 - 4m_1^2m_2^2\right] \point
	\end{align}
	Hence, with $m_1^2\neq 0$, $m_2^2\neq 0$ we will expect thresholds for $p^2=0$ and $p^2 = - (m_1 \pm m_2)^2$. We printed the corresponding coamoebas $\mathcal C_\Gg$ in \cref{fig:CoamoebaShellBubble2MassG} as well as $\mathcal C_{\tilde \Uu\tilde\Ff}$ in \cref{fig:CoamoebaShellBubble2MassUF}. The main difference in those coamoebas lies in the behaviour for $p^2 < - (m_1+m_2)^2$, where $\mathcal C_\Gg$ does not allow to choose small values $\theta=(\varepsilon,\varepsilon)$ for $\mathscr M_{\Uu+\Ff}^\theta (\nuu,z)$, whereas the neighbourhood of the origin  is not contained in $\mathcal C_{\tilde \Uu\tilde\Ff}$. In both coamoebas we can locate the thresholds $p^2$ and $p^2=-(m_1-m_2)^2$ as pseudo thresholds by means of \cref{lem:nonphysical}.

    \begin{figure}
        \centering
        \begin{subfigure}{.49\textwidth}
            \centering\includegraphics[width=.75\textwidth]{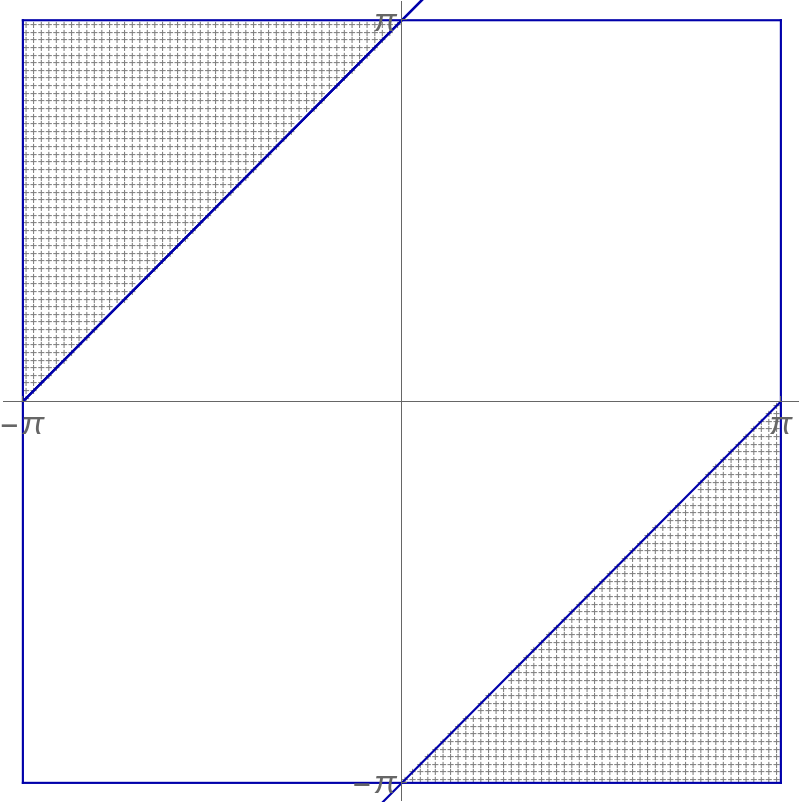}
            \caption{$p^2>0$} \label{fig:coamoebaBubble2MassA}
        \end{subfigure}
        \begin{subfigure}{.49\textwidth}
            \centering\includegraphics[width=.75\textwidth]{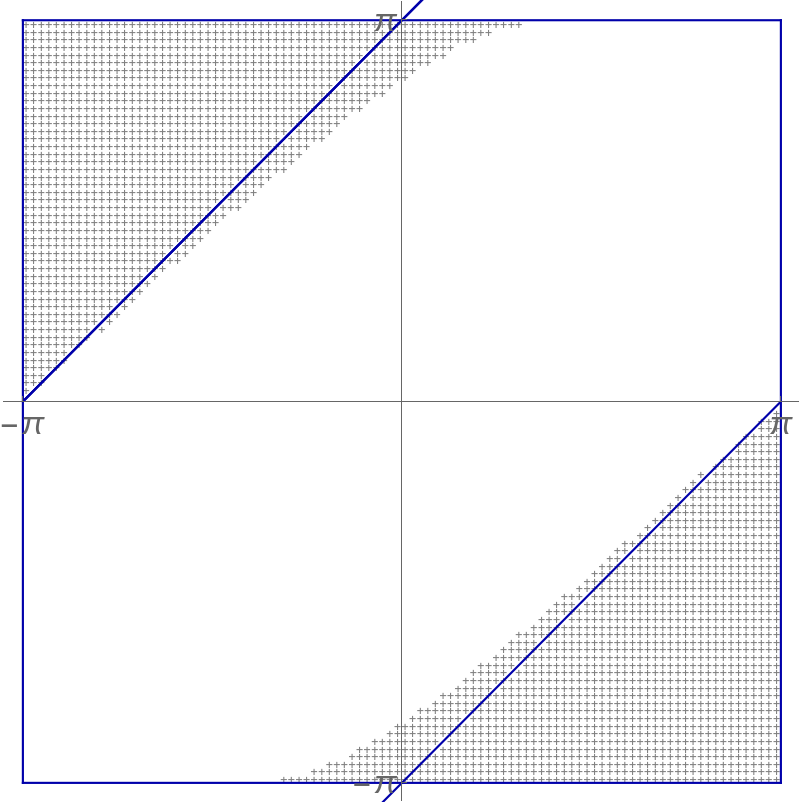}
            \caption{$0<-p^2<(m_1-m_2)^2$} \label{fig:coamoebaBubble2MassB}
        \end{subfigure} \\[.5\baselineskip]
        \begin{subfigure}{.49\textwidth}
            \centering\includegraphics[width=.75\textwidth]{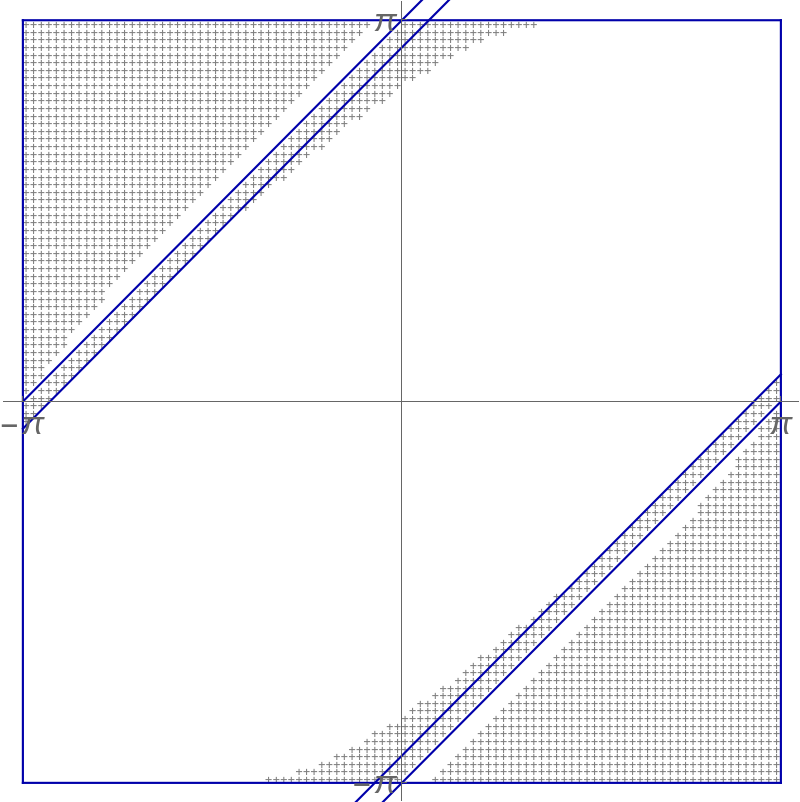}
            \caption{$(m_1-m_2)^2<-p^2<(m_1+m_2)^2$} \label{fig:coamoebaBubble2MassC}
        \end{subfigure}
        \begin{subfigure}{.49\textwidth}
            \centering\includegraphics[width=.75\textwidth]{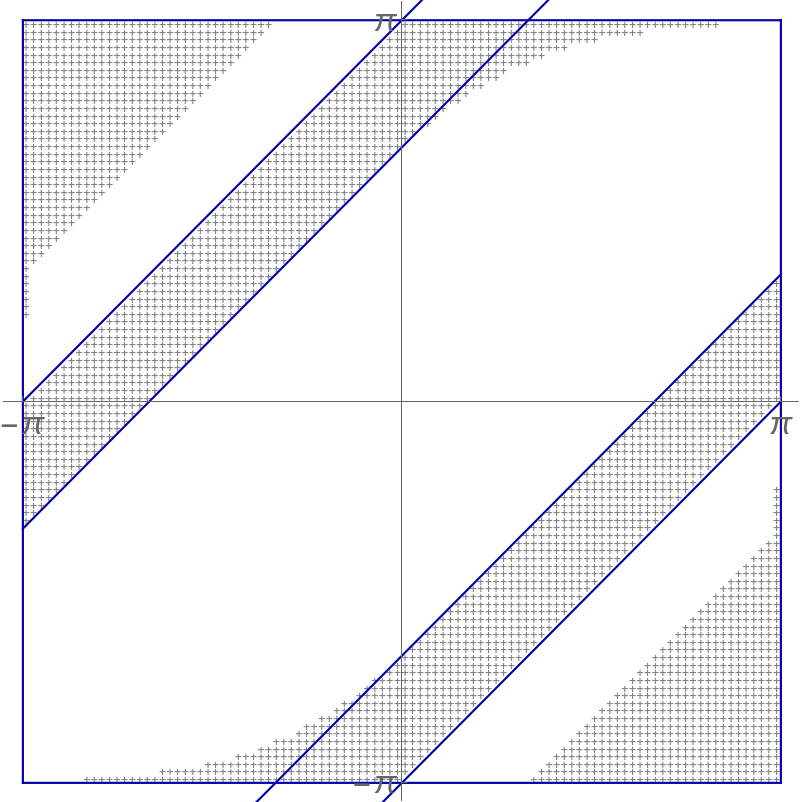}
            \caption{$(m_1-m_2)^2<-p^2<(m_1+m_2)^2$} \label{fig:coamoebaBubble2MassD}
        \end{subfigure}  \\[.5\baselineskip]
        \begin{subfigure}{.49\textwidth}
            \centering\includegraphics[width=.75\textwidth]{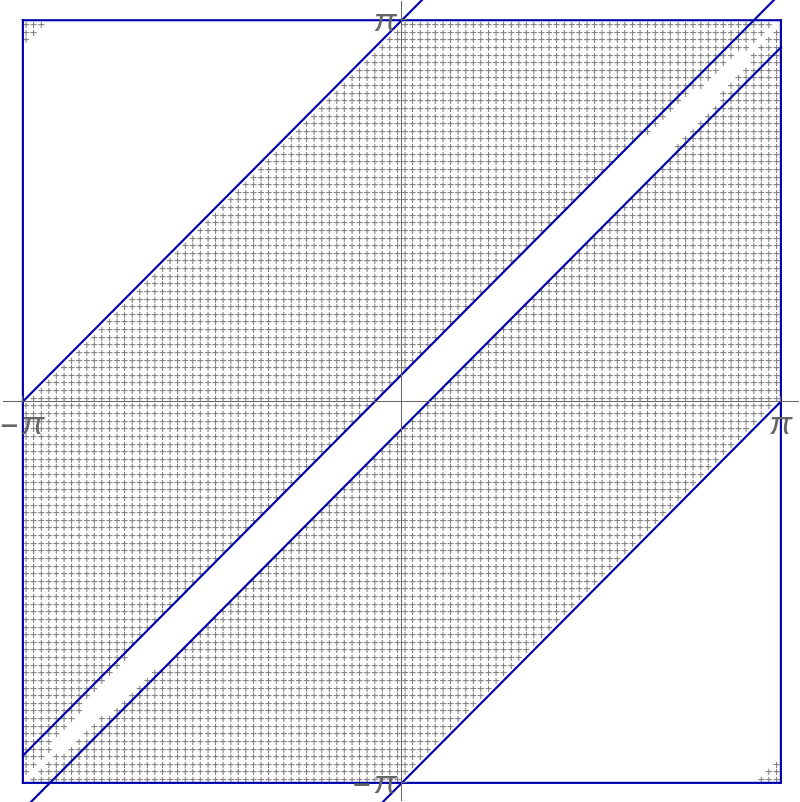}
            \caption{$(m_1-m_2)^2<-p^2<(m_1+m_2)^2$} \label{fig:coamoebaBubble2MassE}
        \end{subfigure}
        \begin{subfigure}{.49\textwidth}
            \centering\includegraphics[width=.75\textwidth]{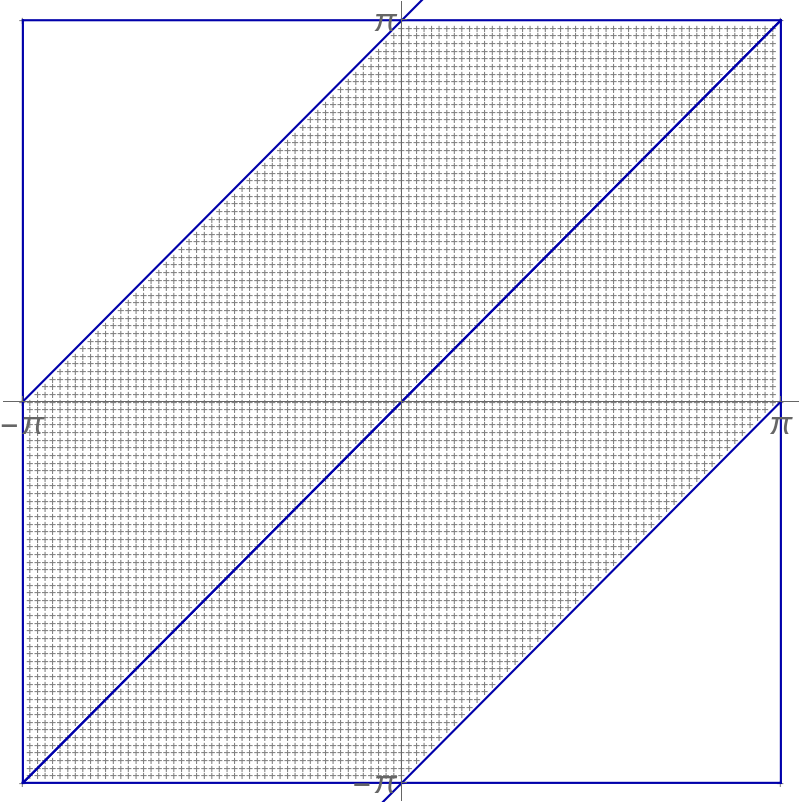}
            \caption{$(m_1+m_2)^2<-p^2$} \label{fig:coamoebaBubble2MassF}
        \end{subfigure}
        \caption[Coamoeba $\mathcal C_\Gg$ of the $1$-loop self-energy graph with two masses]{Coamoeba $\mathcal C_\Gg$ of the $1$-loop self-energy graph with two masses discussed in \cref{eq:CoamoebaBubble2Masses}. From \cref{fig:coamoebaBubble2MassA} to \cref{fig:coamoebaBubble2MassF} the momentum $-p^2$ is increased. Whenever a threshold is exceeded the structure of the coamoeba changes. In \cref{fig:coamoebaBubble2MassA,fig:coamoebaBubble2MassB} $\Cc[\Gg]$ consists in one component, in \cref{fig:coamoebaBubble2MassC,fig:coamoebaBubble2MassD,fig:coamoebaBubble2MassE} $\Cc[\Gg]$ contains three components, whereas for \cref{fig:coamoebaBubble2MassF} there are two components in $\Cc[\Gg]$. The blue lines depict the shell $\mathcal H_\Gg$ of the coamoeba. Remarkably, in \cref{fig:coamoebaBubble2MassF} the closure of the coamoeba $\overline{\mathcal C_\Gg}$ contains the full neighbourhood of the origin.} \label{fig:CoamoebaShellBubble2MassG}
    \end{figure}

   \begin{figure}
        \centering
        \begin{subfigure}{.45\textwidth}
            \centering\includegraphics[width=5cm,trim=0cm 2cm 0cm 2cm, clip]{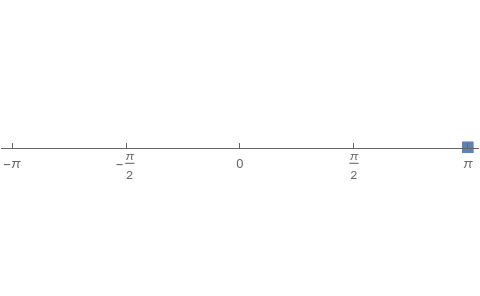}
            \caption{$-p^2<(m_1-m_2)^2$}
        \end{subfigure}
        \begin{subfigure}{.45\textwidth}
            \centering\includegraphics[width=5cm,trim=0cm 2cm 0cm 2cm, clip]{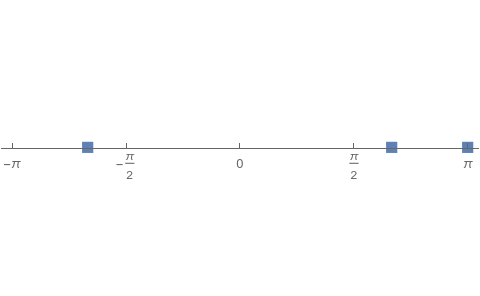}
            \caption{$(m_1-m_2)^2 < -p^2 < (m_1+m_2)^2$}
        \end{subfigure} \\
        \begin{subfigure}{.45\textwidth}
            \centering\includegraphics[width=5cm,trim=0cm 2cm 0cm 2cm, clip]{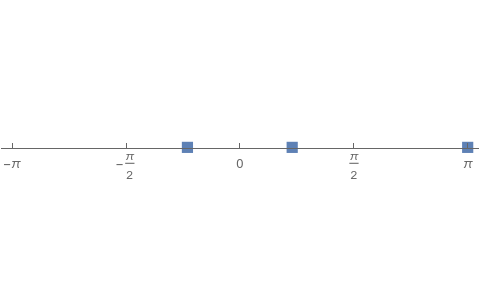}
            \caption{ $(m_1-m_2)^2 < -p^2 < (m_1+m_2)^2$}
        \end{subfigure}
        \begin{subfigure}{.45\textwidth}
            \centering\includegraphics[width=5cm,trim=0cm 2cm 0cm 2cm, clip]{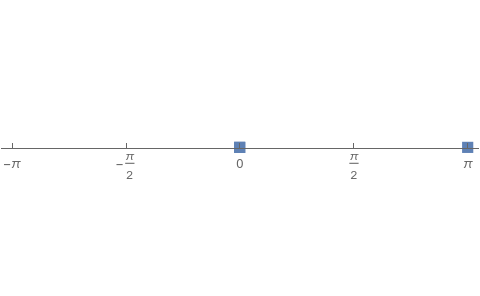}
            \caption{$-p^2>(m_1+m_2)^2$}
        \end{subfigure}
        \caption[Coamoeba $\mathcal{C}_{\tilde{\Uu} \tilde{\Ff}}$ of the $1$-loop self-energy graph with two masses]{Coamoeba $\mathcal C_{\tilde\Uu\tilde\Ff}$ of the $1$-loop self-energy graph with two masses discussed in \cref{eq:CoamoebaBubble2Masses}. In $\mathcal C_{\tilde\Uu\tilde\Ff}$ we will only observe the two thresholds $p^2=-(m_1\pm m_2)^2$, where we can determine $p^2=-(m_1-m_2)^2$ to be a pseudo threshold. Note, that after exceeding the second threshold $p^2=-(m_1+m_2)^2$, the neighbourhood of the origin belongs not to $\overline{\mathcal C_{\tilde\Uu\tilde\Ff}}$. However, the number of connected components in $\Cc[\tilde\Uu\tilde\Ff]$ coincides with the number of connected components in $\Cc[\Gg]$ from \cref{fig:CoamoebaShellBubble2MassG}.} \label{fig:CoamoebaShellBubble2MassUF}
    \end{figure}

   \begin{figure}
        \centering
        \begin{subfigure}{.45\textwidth}
            \centering\includegraphics[width=.75\textwidth]{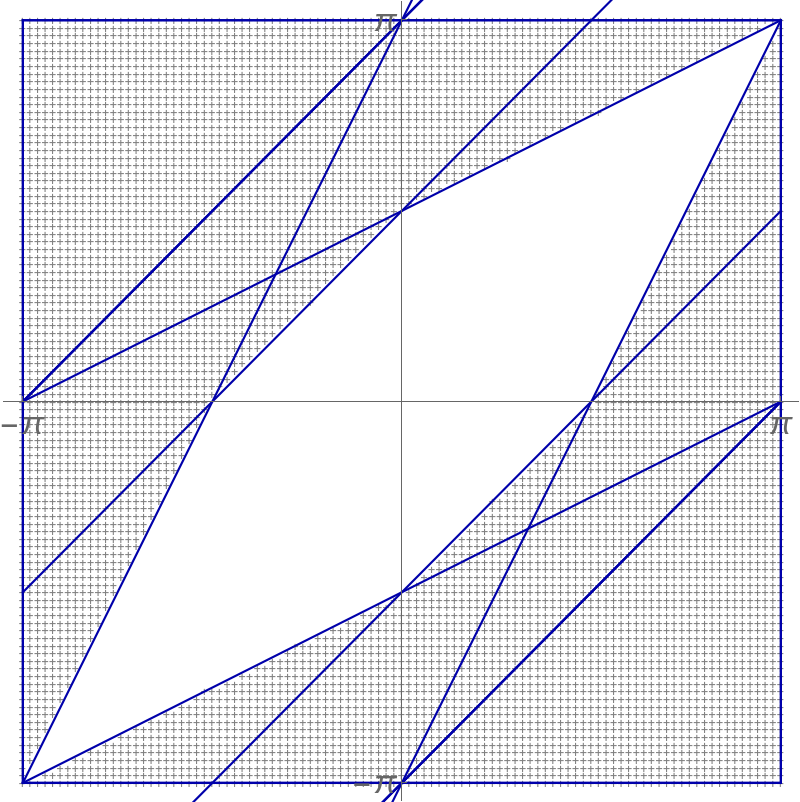}
            \caption{$p^2+m_1^2+m_2^2>0$}
        \end{subfigure}
        \begin{subfigure}{.45\textwidth}
            \centering\includegraphics[width=.75\textwidth]{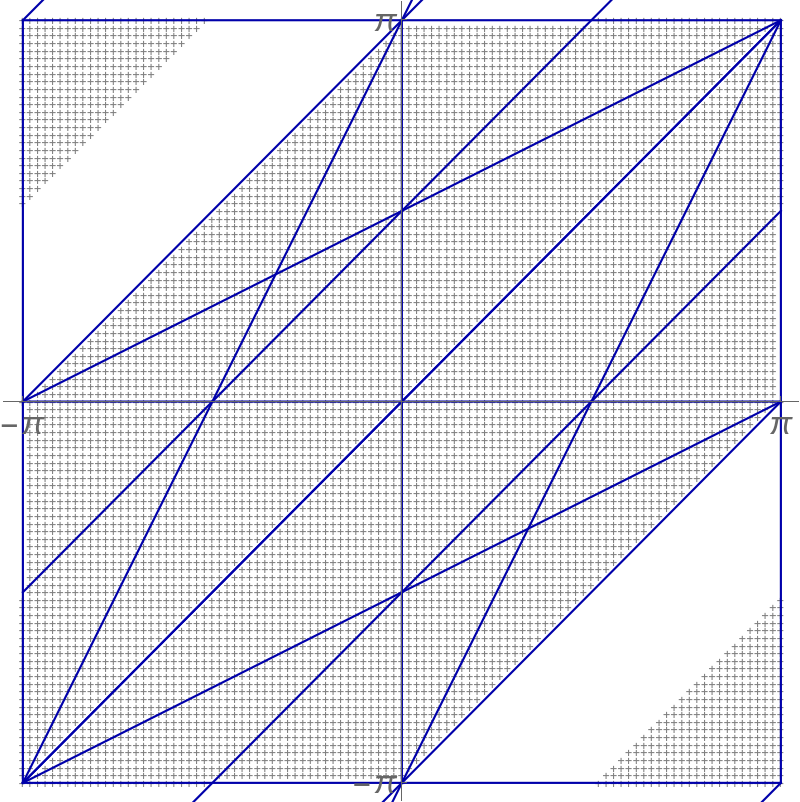}
            \caption{$p^2+m_1^2+m_2^2<0$}
        \end{subfigure}
        \caption[Lopsided coamoeba $\mathcal C_{\Gg}$ of the $1$-loop self-energy graph with two masses]{Lopsided coamoeba $\mathcal L \mathcal C_{\Gg}$ of the $1$-loop self-energy graph with two masses discussed in \cref{eq:CoamoebaBubble2Masses}. The lopsided coamoeba changes only when passing $p^2+m_1^2+m_2^2 = 0$, i.e.\ when the coefficients of $\Gg$ change their signs. However, one can conclude from the lopsided coamoeba, that $p^2=-(m_1-m_2)^2$ corresponds to a pseudo threshold. The blue lines depict the lopsided shell $\mathcal L\mathcal H_\Gg$.}
    \end{figure}
\end{example}

At the end of this chapter we want to give a comprehensive answer to the question of analytic continuations of Feynman integrals. We already discussed the analytic continuation with respect to parameters $\nuu$ in \cref{sec:DimAnaReg}. We also discussed the singular locus $\Sing(H_\Aa(\nuu))$ with respect to the variables $z$ in \cref{sec:LandauVarieties,sec:2ndtypeSingularities} and introduced the $\theta$-analogue, which applies whenever $\Gg(x)$ is not \CNV on $(0,\infty)^n$. Hence, we want to complete the picture by a result of \cite{BerkeschEulerMellinIntegrals2013}, that the $\theta$-analogue of a generalized Feynman integral has a multivalued analytic continuation to all points except the singular locus.

\begin{theorem}[{analytic continuation of $\theta$-analogue, generalized Feynman integrals \cite[thm. 4.2]{BerkeschEulerMellinIntegrals2013}}]
    Let $\Sing(H_\Aa(\nuu)) = \Var( E_A(\Gg)) \subset \mathbb C^N$ be the singular locus of $\mathscr M_\Gg^\Theta$, $z\in\mathbb C^N\setminus\Sing(H_\Aa(\nuu))$ a point outside the singular locus and $\Theta$ a connected component of $\mathbb T^n\setminus\overline{\mathcal C_\Gg}$. Then $\mathscr M_\Gg^\Theta(\nuu,z)$ has a multivalued analytic continuation to $\mathbb C^{n+1}\times (\mathbb C^N\setminus\Sing(H_\Aa(\nuu)))$, which is everywhere $\Aa$-hypergeometric.
\end{theorem}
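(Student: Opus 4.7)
The plan is to establish the theorem in four stages: first an initial domain of definition, then verifying the hypergeometric property there, then meromorphic continuation in $\nuu$, and finally analytic continuation in $z$ via holonomicity.

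\textbf{Initial domain.} The first step is to identify an open set in $(\nuu,z)$-space where $\mathscr M_\Gg^\Theta(\nuu,z)$ is defined by the convergent integral \eqref{eq:EulerMellinTheta}. After the substitution $x = r e^{i\theta}$ with $\theta \in \Theta$ fixed, the integral becomes $e^{i\nu\cdot\theta}\, \Gamma(\nu_0) \int_{\mathbb R_+^n} \dif r\, r^{\nu-1} \Gg(r e^{i\theta})^{-\nu_0}$. Since $\theta \notin \overline{\mathcal C_\Gg}$, Lemma~\ref{lem:CoamoebasAndCNV} tells us that $\Gg(r e^{i\theta})$ is completely non-vanishing on $\mathbb R_+^n$; consequently no truncated polynomial $\Gg_\tau$ vanishes on the shifted contour, and the same kind of power-counting argument used in Theorem~\ref{thm:FIconvergence} yields absolute convergence provided $\Re(\nu)/\Re(\nu_0) \in \relint(\Newt(\Gg))$ and $\Re(\nu_0) > 0$. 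This produces a non-empty open set $U \subset \mathbb C^{n+1} \times \mathbb C^N$ on which $\mathscr M_\Gg^\Theta$ is a holomorphic function of $(\nuu,z)$, independent of the choice of $\theta \in \Theta$ by Cauchy's theorem (the integrand decays fast enough at infinity in $x$ that one may homotope $\theta$ within $\Theta$ without crossing a zero of $\Gg$).

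\textbf{$\Aa$-hypergeometricity.} On $U$, the argument of Theorem~\ref{thm:FeynmanAHypergeometric} applies verbatim: differentiating under the integral sign, the toric operators $\square_l$ for $l\in\mathbb L$ annihilate $\mathscr M_\Gg^\Theta$ (since $\Aa$ is homogeneous so any two monomials in a pair $\partial^u - \partial^v$ with $\Aa u = \Aa v$ produce the same integrand), and the homogeneity relation obtained from the scaling $x_i \mapsto s x_i$ yields the Euler operators $E_i(\nuu)$. Hence $H_\Aa(\nuu) \bullet \mathscr M_\Gg^\Theta = 0$ on $U$.

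\textbf{Meromorphic continuation in $\nuu$.} With $\theta \in \Theta$ fixed, the procedure of Theorem~\ref{thm:meromorphicContinuation}, based on iterated integration by parts against the facet inequalities of $\Newt(\Gg)$, extends $\mathscr M_\Gg^\Theta(\nuu,z)$ to an entire function of $\nuu \in \mathbb C^{n+1}$ after stripping off an explicit product of $\Gamma$-factors coming from the facets. Since the integration-by-parts identities are algebraic in $\nuu$, they commute with the holomorphic dependence on $z$, so on any fixed $z$ away from $\Sing(H_\Aa(\nuu))$ we obtain a meromorphic function of $\nuu$ whose poles are prescribed.

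\textbf{Analytic continuation in $z$.} This is the core of the theorem and the main obstacle. Fix $\nuu$ generic; by Theorem~\ref{thm:HolRankAHyp} the ideal $H_\Aa(\nuu)$ is holonomic with finite holonomic rank equal to $\vol(\Conv(A))$, and by Theorem~\ref{thm:SingularLocusPrincipalAdet} its singular locus equals $\Var(E_A(\Gg))$. The Cauchy–Kovalevskaya–Kashiwara Theorem~\ref{thm:CKKT} then guarantees that on any simply connected open $V \subset \mathbb C^N \setminus \Sing(H_\Aa(\nuu))$, the solution sheaf of $H_\Aa(\nuu)$ is a local system of rank $\vol(\Conv(A))$. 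Since $\mathscr M_\Gg^\Theta$ is one of its holomorphic solutions on $V \cap \pi(U)$ (where $\pi$ projects to the $z$-coordinates), its germ admits a unique holomorphic continuation along any path in $\mathbb C^N \setminus \Sing(H_\Aa(\nuu))$. The continuation is multivalued precisely because $\mathbb C^N \setminus \Sing(H_\Aa(\nuu))$ need not be simply connected: parallel transport along non-trivial loops realises the monodromy representation of $H_\Aa(\nuu)$. Handling non-generic $\nuu$, where rank-jumping could a priori occur, is absorbed by the meromorphic continuation in $\nuu$ established in the previous step, after which the limit values still satisfy $H_\Aa(\nuu)$ and hence extend by the same local-system argument. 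This completes the multivalued extension to $\mathbb C^{n+1} \times (\mathbb C^N \setminus \Sing(H_\Aa(\nuu)))$ with the extension everywhere being $\Aa$-hypergeometric, as the defining relations are preserved under analytic continuation.
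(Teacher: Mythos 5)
The paper does not prove this theorem itself; it is quoted directly from the cited reference, and your reconstruction follows essentially the same route as that source: establish convergence on an initial domain using the complete non-vanishing of $\Gg$ on $\Arg^{-1}(\theta)$ for $\theta\in\Theta$, verify the GKZ equations by differentiating under the integral, strip off the facet $\Gamma$-factors for the continuation in $\nuu$, and then continue in $z$ along paths using holonomicity and the Cauchy--Kovalevskaya--Kashiwara theorem. The argument is sound. The only step I would tighten is your treatment of non-generic $\nuu$: saying the rank-jumping issue is ``absorbed by the meromorphic continuation in $\nuu$'' conflates two different continuations. The cleaner observation is that what the path-continuation argument actually needs — holonomicity of $H_\Aa(\nuu)$ and the identity $\Sing(H_\Aa(\nuu))=\Var(E_A(\Gg))$ — holds for \emph{all} $\nuu$, not just generic ones (\cref{thm:HolRankAHyp} and \cref{thm:SingularLocusPrincipalAdet}; the characteristic variety constructed in \cref{lem:characteristic-disc} does not depend on $\nuu$). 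Only the holonomic rank may jump, and the rank plays no role in the existence and uniqueness of the continuation of a given germ along a path in $\mathbb C^N\setminus\Sing(H_\Aa(\nuu))$. With that substitution your argument closes without appealing to limits in $\nuu$ at all.
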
 

Hence, we can analytically continue the generalized, $\theta$-analogue Feynman integral for its parameters $\nuu$ as well as for its variables $z$ for any connected component $\Theta$ of $\Cc[\Gg]$. \bigskip

We want to conclude this section by a short recapitulation of the main results. We have shown, that the $\theta$-analogue Euler-Mellin integrals (when they exist for small values of $\theta$) states an alternative to Feynman's $i\varepsilon$ prescription. Those $\theta$-analogue Euler-Mellin integrals will help us to investigate the kind of solutions of the Landau equations. Thereby, the behaviour of $\theta$-analogue Euler-Mellin integrals is determined by the coamoeba, i.e.\ the coamoeba shows which contours we can take in the Feynman integral. Hence, we can find the properties of Landau singularities by considering the coamoeba, e.g.\ we can give conditions for pseudo thresholds as well as we can determine the discontinuity of Feynman integrals by means of components of $\Cc[\Gg]$. Moreover, certain aspects of coamoebas can be approximated by the shells and the lopsided coamoeba.\bigskip

\pagestyle{withoutSections}

\chapter{Conclusion and outlook}


The aim of this thesis was to characterize Feynman integrals in the context of $\Aa$-hypergeometric theory. This point of view is quite recently and was initiated inter alia by one of the articles on which this thesis is based on. However, the connection between Feynman integrals and general hypergeometric functions is a long-standing discussion in research going back to Tullio Regge. As we have illustrated above, $\Aa$-hypergeometric functions are the appropriate framework to address and answer these questions. In particular, we showed in \cref{thm:FeynmanAHypergeometric} that every generalized, scalar Feynman integral is an $\Aa$-hypergeometric function. Thereby, $\Aa\in\mathbb Z^{(n+1)\times N}$ is given as the homogenized vector configuration of the support $A$ of the Lee-Pomeransky polynomial $\Gg=\Uu+\Ff = \sum_{a^{(j)}\in A} z_j x^{a^{(j)}}$, and the GKZ-parameter $\nuu=\left(\dhalf,\nu_1,\ldots,\nu_n\right)\in\mathbb C^{n+1}$ is determined by the spacetime dimension $d$ and the indices $\nu_i$ of the Feynman integral. Hence, the Feynman integral $\mathcal I_\Aa(\nuu,z)$ depends only on three objects: The vector configuration $\Aa\in\mathbb Z^{(n+1)\times N}$, which is determined by the graph topology, the parameters $\nuu\in\mathbb C^{n+1}$, which are especially important in dimensional and analytical regularization and on the variables $z\in\mathbb C^N$, which encode the dependence on masses and external momenta. The characterization of Feynman integrals to be $\Aa$-hypergeometric means, that every generalized, scalar Feynman integral $\mathcal I_\Aa(\nuu,z)$ satisfies the following two types of partial differential equations
\begin{align}
    \left(\sum_{j=1}^N a^{(j)} z_j \partial_j + \nuu\right) \bullet \mathcal I_\Aa(\nuu,z) &= 0 \\
    \left(\prod_{l_j>0} \partial_j^{l_j} - \prod_{l_j<0} \partial_j^{-l_j}\right) \bullet \mathcal I_\Aa(\nuu,z) &= 0 
\end{align}
for all $l\in\mathbb L = \Dep(\Aa)\cap\mathbb Z^N$, which generate the holonomic $D$-module $H_\Aa(\nuu)$. From the theory of holonomic $D$-modules, one can determine the dimension of the solution space $\Sol(H_\Aa(\nuu))$. Thus, the dimension of the vector space of holomorphic functions satisfying this system of partial differential equations is given by the volume of the Newton polytope $\Newt(\Gg)$, whenever $\nuu$ is generic (\cref{thm:CKKT,thm:HolRankAHyp}). For many classes of Feynman integrals, this holds even for non-generic $\nuu$ (\cref{thm:CohenMacaulayFeynman}). Furthermore, it was shown that also every coefficient in a Laurent expansion as appearing in dimensional or analytical regularization can be expressed by $\Aa$-hypergeometric functions (\cref{sec:epsilonExpansion}).\bigskip

From the property to be $\Aa$-hypergeometric, we also derived series representations for every Feynman integral. Hence, for each regular triangulation $\Tt$ of the Newton polytope $\Newt(\Gg)$ one obtains a series representation in the following form (\cref{thm:FeynSeries})
\begin{align}
    \mathcal I_\Aa(\nuu,z) = \frac{1}{\Gamma(\nu_0-\omega)\Gamma(\nu)} \sum_{\sigma\in \hatT} \frac{z_\sigma^{-\Aas^{-1}\nuu}}{|\det (\Aas)|} \sum_{\lambda\in\mathbb N_0^{N-n-1}} \!\!\!\! \frac{(-1)^{|\lambda|}}{\lambda!} \Gamma\!\left(\Aas^{-1}\nuu+\Aas^{-1}\Aabs\lambda\right) \frac{z_{\bar\sigma}^\lambda}{z_\sigma^{\Aas^{-1}\Aabs\lambda}} \ \text{.}
\end{align}
Power series of this type are also known as Horn hypergeometric functions. As there are in general many different ways to triangulate a polytope, there also exist many different series representations for a single Feynman integral, which can also be connected by hypergeometric transformations. Those series representations are especially made for an efficient numerical evaluation, since one can choose those series representations which converge fast for a given kinematic region. For the purpose of a practical usage of this concept, we discussed possible obstacles which can appear in the concrete evaluation and gave certain strategies to solve them. Especially, we have offered a simple method to determine the analytic continuation in case of non-generic values $z$ (\cref{sec:AnalyticContinuation}), as they appear for more complex Feynman graphs. An implementation of this method sketched in \cref{ch:seriesRepresentations} is already in progress and is planned for publication. 

Besides of Horn hypergeometric series, the GKZ approach allows also the representation of Feynman integrals in terms of other hypergeometric functions, as e.g.\ Euler integrals, which we discussed shortly in \cref{sec:EulerIntegrals}.\bigskip

Hence, we have characterized the Feynman integral in the three different notions in which hypergeometric functions can appear: as various types of integrals, as solutions of specific partial differential equations and by a certain type of series. From the perspective of general hypergeometric functions the common ground of these representations is the vector configuration $\Aa\in\mathbb Z^{(n+1)\times N}$, which is also the only information from the topology of the Feynman graph that has an influence to the Feynman integral. The possibility to characterize Feynman integrals in those three different ways is one of the key features of $\Aa$-hypergeometric functions. \bigskip

Besides numerical applications, there are also structurally very interesting implications for the Feynman integral from $\Aa$-hypergeometric theory. Thus, we have investigated the kinematic singularities of scalar Feynman integrals from the $\Aa$-hypergeometric perspective. This point of view provides a mathematically rigorous description of those singularities by means of principal $A$-determinants. More precisely, it turns out that the singular locus of Feynman integrals is the variety defined by the principal $A$-determinant of the sum of the Symanzik polynomials $\Gg=\Uu+\Ff$ (\cref{thm:SingularLocusPrincipalAdet}) 
\begin{align}
    \Sing(H_\Aa(\nuu)) = \Var (E_A(\Gg)) = \Var (\widehat E_A(\Gg))
\end{align}
or by the simple principal $A$-determinant $\widehat E_A(\Gg)$, respectively. Thereby, principal $A$-determinants $E_A(\Gg)\in\mathbb Z [z_1,\ldots,z_N]$ are polynomials in the coefficients of the polynomial $\Gg$, uniquely determined up to a sign. 
Every principal $A$-determinant $E_A(\Gg)$ can be factorized into $A$-discriminants, where each one corresponds to a face of $\Newt(\Gg)$  (\cref{thm:pAdet-factorization}). Further, we can divide the $A$-discriminants for $\widehat E_A(\Gg)$ into four different parts
\begin{align}
	\widehat E_A (\Gg) = \pm \vspace{-1em} \prod_{\tau\subseteq \Newt(\Gg)} \Delta_{A\cap\tau} (\Gg_\tau) = \pm \, \Delta_A(\Gg)  \cdot \widehat E_{\Aa_\Ff}(\Ff) \cdot \widehat E_{\Aa_\Uu}(\Uu) \cdot R \point
\end{align}
We have shown, that the $A$-discriminant of the full polytope $\Delta_A(\Gg)$ can be identified with the so-called second-type singularities (\cref{sec:2ndtypeSingularities}). The principal $A$-determinant of the second Symanzik polynomial $\Ff$ can be associated with the Landau variety, i.e.\ $\mathcal L_1 (\mathcal I_\Gamma) = \Var \Big(\widehat E_{\Aa_\Ff}(\Ff) \Big)$ (\cref{thm:LandauVar}). The polynomial $R$ contains all remaining $A$-discriminants coming from proper, mixed faces of $\Newt(\Gg)$ and corresponds to second-type singularities of subgraphs, whereas $\widehat E_{\Aa_\Uu} (\Uu)$ has no influence to the singular locus when considering the restriction to physically relevant values of $z$. Remarkably, we have found that the singular locus of Feynman integrals (especially the Landau variety $\mathcal L_1(\mathcal I_\Gamma)$) includes parts, which were overlooked in previous approaches. This is due to the fact that not all truncated polynomials will have an equivalent polynomial coming from a subgraph. However, this difference has an impact only for Feynman graphs beyond $1$-loop or banana graphs (\cref{lem:SubgraphPolynomialsVsTruncated}). This is may the reason why these forgotten singularities were not detected earlier. Exemplarily, we presented the Landau variety of the dunce's cap graph in \cref{sec:ExampleDuncesCap}, where one can observe also those additional contributions.\bigskip

From the perspective of $\Aa$-hypergeometric theory, the $\Aa$-hypergeometric functions are often said to be ``quantizations'' of $A$-discriminants \cite{GelfandDiscriminantsResultantsMultidimensional1994}. Here, ``quantization'' should not be confused with the quantization in physics, which is why the use of this term may be misleading in this context. However, the relation between $A$-discriminants and $\Aa$-hypergeometric functions has certain formal similarities with the quantization procedure in physics. Thus, if one maps the $A$-discriminants from the commutative polynomial ring in a specific manner to a non-commutative Weyl algebra, one obtains a partial differential equation system which precisely characterizes the $\Aa$-hypergeometric function. In this sense, Feynman integrals are the ``quantization'' of Landau varieties. \bigskip

By means of \cref{thm:NewtSec} we revealed an unexpected connection between Landau varieties and the set of all triangulations of $\Newt(\Ff)$ going back to \cite{GelfandDiscriminantsResultantsMultidimensional1994}. It states, that the Newton polytope of $E_{\Aa_\Ff}(\Ff)$ coincides with the secondary polytope
\begin{align}
	\Newt(E_{\Aa_\Ff}(\Ff)) = \Sigma(A_\Ff) \point
\end{align}
The same holds also for the principal $A$-determinant $E_A(\Gg)$ and the secondary polytope $\Sigma(A_\Gg)$. This relation can be understood against the background of series representations, where the structure of variables -- and thus also their singularities -- depends on triangulations. \bigskip

Apart from the description of the singular locus, we also introduced a powerful tool to determine the singular locus: the \HKp. This parameterization relies in the same vein on the above-mentioned relation between the singular locus and series representations. Thus, the calculation of a Gale dual is sufficient to obtain a parameterization of the hypersurface defined by an $A$-discriminant. Clearly, such a parametric representation of a variety differs from a representation via defining polynomials. However, such a representation can be even more convenient for many approaches, as we describe the singularities directly. Also having in mind, that a representation of Landau varieties by a defining polynomial will be an incommensurable effort for almost all Feynman graphs, we want to advertise the usage of \HKp.\bigskip

Finally, in order to study the monodromy of Feynman integrals, we introduced an Euler-Mellin integral with a rotated integration contour. We showed that this \mbox{$\theta$-analogue} Euler-Mellin integral coincides with Feynman's $i\varepsilon$ prescription in the limit $\varepsilon\rightarrow 0^+$, whenever this limit exists. However, these $\theta$-analogue Euler-Mellin integrals have several advantages compared to the $i\varepsilon$ prescription. In particular, in the descriptions of those $\theta$-analogue Euler-Mellin integrals one does not have to take a limit to discuss the discontinuity at a branch cut.

Moreover, the behaviour of $\theta$-analogue Euler-Mellin integrals and the monodromy of Feynman integrals is substantially determined by the coamoeba of the polynomial $\Gg$. From the shape of the coamoeba of $\Gg$, we can also conclude to the nature of the singularities of Feynman integrals, i.e.\ we can identify pseudo thresholds. We sketched also several ways to approximate the coamoeba in order to derive efficient algorithms. However, the application of coamoebas to Feynman integrals leaves many questions open and will surely be a worthwhile focus for future research.\bigskip

We would like to conclude this thesis with an incomplete list of questions and ideas that we think are of interest and that might be answered with the help of $\Aa$-hypergeometric theory. First of all, it would be desirable to extend the knowledge about kinematic singularities and monodromy of Feynman integrals. Especially, we would like to find a clear description what happens with the singular locus when variables take non-generic values, since this case is usually left out in classical treatments about this subject. In particular, we would like to find the dimension of the solution space when considering non-generic values. Further, it would be interesting to know if the truncated polynomials which contribute to the singular locus $\Sing(H_\Aa(\nuu))$ always have a graphical equivalent, as appeared in the example of \cref{sec:ExampleDuncesCap}. To study the monodromy of Feynman integrals, we would also like to investigate more in the coamoeba.

Moreover, it would be interesting to apply the $\Aa$-hypergeometric approach also on other stages of perturbative QFT. Hence, we can consider the GKZ systems for each Horn hypergeometric series appearing in the Laurent expansion of a Feynman integral. This would give an alternative way for the analytic continuation of those series, as well as a method to find simpler representations. Further, it would be promising to try the $\Aa$-hypergeometric approach also to the $S$-matrix itself, which goes back to a suggestion in \cite{KawaiMicrolocalStudySmatrix1975, SatoRecentDevolpmentHyperfunction1975}. 

As Feynman integrals form only a subclass of $\Aa$-hypergeometric functions, it would be interesting, if one can characterize this subclass further. For example, there is the legitimate hypothesis that Feynman integrals are always Cohen-Macaulay (see \cref{thm:CohenMacaulayFeynman}), when external momenta are considered to be generic. 

Furthermore, we would like to get an understanding of linear relations of Feynman integrals from the $\Aa$-hypergeometric perspective. Very recently, there was a method proposed to generate linear relations by means of Pfaffian systems coming from $\Aa$-hypergeometric system \cite{ChestnovMacaulayMatrixFeynman2022}. Usually, one constructs partial differential equations in the kinematic variables from those linear relations. Hence, it could be inspiring to see this relation on the more formal level of GKZ. Thus, one has to consider the relation\footnote{The connection between the $D$-module $H_\Aa(\nuu)$ and a shift algebra, shifting the values of $\nuu$ by integer amounts, is already indicated by \cref{eq:FeynmanJDerivative}. In this context we also refer to \cite[ch. 5]{SaitoGrobnerDeformationsHypergeometric2000}.} between the $D$-module $H_\Aa(\nuu)$ and the $s$-parametric annihilator $\Ann(\Gg^s)$ generating all linear relations among Feynman integrals \cite{BitounFeynmanIntegralRelations2019}.

To connect the hypergeometric perspective stronger to the graph perspective it would be fruitful to consider the graph operations deletion and contraction on the level of Feynman integrals. \bigskip

By the series representations and the singular locus of Feynman integrals we studied only two particular aspects which arise from the connection between $\Aa$-hypergeometric theory and Feynman integrals. Since Feynman integrals and $\Aa$-hypergeometric functions were mainly developed independently of each other, we strongly believe, that physicists and mathematicians can learn much from each other in this area.


\pagestyle{fancyStandard}
\appendix


\thispagestyle{plain}

\printglossary[type=symbols,title={List of Symbols}]

%
%
%
%
%
%
%
%
%

\chapter{Appendix}


\section{Stirling numbers} \label{sec:StirlingNumbers}


In \cref{sec:epsilonExpansion} we introduced Stirling numbers of the first kind $\StirlingFirstSmall{n}{k}$ in order to expand $\Gamma$-functions around integer arguments. We would like to recall here certain facts about Stirling numbers from previous sections, as well as include further useful relations. Those numbers were originally introduced by Stirling in 1730 \cite{TweddleJamesStirlingAnnotated2003} in the study of Pochhammer symbols $(x)_n$, which are also known as rising factorials\footnote{Note that there are many different, contradictory notational conventions in literature for rising and falling factorials.} \glsadd{Pochhammer}
\begin{gather}
	(x)_n := \frac{\Gamma(x+n)}{\Gamma(x)} \qquad\text{for } x,x+n\notin\mathbb Z_{\leq 0} \label{eq:PochhammerAppendixDef}\\
	(x)_n = \prod_{k=0}^{n-1} (x+k) \quad\text{and}\quad (x)_{-n} = \prod_{k=1}^n \frac{1}{x-k} \qquad\text{for } n\in\mathbb Z_{\geq 0} \point \label{eq:PochhammerAppendixIntegerDef}
\end{gather}
There are many useful relations for Pochhammer symbols known, which can be deduced directly from their definitions \cref{eq:PochhammerAppendixDef}, \cref{eq:PochhammerAppendixIntegerDef} by taking recourse to properties of $\Gamma$-functions
\begin{align}
	(x)_{-n} &= (-1)^n \frac{1}{(1-x)_n} \qquad \text{for } n\in\mathbb Z  \label{eq:PochhammerAppendixA} \\ 
	(x)_{m+n} &= (x)_m (x+m)_n  \\
	(x)_{mn} &= m^{mn} \prod_{j=0}^{m-1} \left(\frac{x+j}{m}\right)_n \qquad \text{for } m\in\mathbb Z_{>0} \\
	(x+y)_n &= \sum_{k=0}^n \Binomial{n}{k} (x)_k (y)_{n-k} \qquad \text{for } n\in\mathbb Z_{\geq 0} \point
\end{align}
We refer to \cite{WolframResearchIncMathematicalFunctionsSite} for a comprehensive collection of further identities. Stirling numbers of the first kind can be defined as the coefficients in the expansion of Pochhammer symbols $(x)_n$ around $x=0$ \glsadd{StirlingFirst}
\begin{align}
	(x)_n = \frac{\Gamma(x+n)}{\Gamma(x)} = \sum_{k\geq 0} \StirlingFirst{n}{k} x^k \point
\end{align}
Originally defined for $n\in\mathbb N$ only, this expansion extends naturally to $n\in\mathbb Z$ by means of \cref{eq:PochhammerAppendixA}. However, when $n\in\mathbb Z_{\geq 0}$, the Pochhammer symbol $(x)_n$ is only a polynomial in $x$. Hence, many Stirling numbers have to be zero
\begin{align}
	\StirlingFirst{n}{0} = \StirlingFirst{0}{k} = 0 \quad\text{for all } n,k\in\mathbb Z_{>0} \quad\text{and}\quad \StirlingFirst{n}{k}=0\quad\text{for all } n,k\in\mathbb Z_{>0}\quad\text{with } k>n\point
\end{align}
Furthermore, we have $\StirlingFirstSmall{n}{n} = 1$ for all $n\in\mathbb Z_{\geq 0}$. From the perspective of combinatorics, Stirling numbers of the first kind give the number of permutations of $\{1,\ldots,n\}$ into $k$ non-empty cycles \cite{ComtetAdvancedCombinatoricsTheory1974}.\bigskip

As aforementioned, one can relate Stirling numbers of the first kind with $S$-sums and $Z$-sums from \cite{MochNestedSumsExpansion2002}, and we find \glsadd{Zsum1} \glsadd{Ssum}\glsunset{Ssum} \glsadd{Ssum1}\glsunset{Ssum1}
\begin{align}
	\StirlingFirst{n}{k} &= (n-1)! \sum_{n>i_{k-1}>\ldots>i_1>0} \frac{1}{i_1\cdots i_{k-1}} = (n-1)! \, \Zsum{n-1}{\underbrace{\scriptstyle 1,\ldots,1}_{k-1}} \quad\text{for } n,k\in\mathbb Z_{>0} \\
	\StirlingFirst{-n}{k} &= \frac{(-1)^n}{n!} \sum_{n\geq i_k \geq \ldots \geq i_1 \geq 1} \frac{1}{i_1\cdots i_k} = \frac{(-1)^n}{n!} \, \Ssum{n}{\underbrace{\scriptstyle 1,\ldots,1}_{k}} \quad\text{for } n,k\in\mathbb Z_{> 0} \point
\end{align}
As explained in \cite{MochNestedSumsExpansion2002} one can transform $S$-sums and $Z$-sums into each other by a recursive approach of \cref{eq:splittingNested}. Thus, one can also find connections between Stirling numbers of the first kind with positive $n$ and those with negative $n$. For a more general consideration of relations between $S$-sums and multiple $\zeta$-values we suggest \cite{HoffmanMultipleHarmonicSeries1992}. Especially, for Stirling numbers of the first kind with negative first argument there is the following simpler expression \cite{BransonExtensionStirlingNumbers1996}
\begin{align}
	\StirlingFirst{-n}{k} = \frac{(-1)^n}{n!} \sum_{i=1}^n \frac{(-1)^{i+1}}{i^k} \Binomial{n}{i} \quad\text{for } n\in\mathbb Z_{\geq 0}, k\in\mathbb Z_{>0} \point \label{eq:StirlingFirstNegativeDefAppendix}
\end{align}
Stirling numbers $\StirlingFirstSmall{n}{k}$ with positive first argument $n\in\mathbb Z_{>0}$, have the alternative representation, which relates directly to their combinatorial meaning \cite{ComtetAdvancedCombinatoricsTheory1974, OlverNISTHandbookMathematical2010}
\begin{align}
	\StirlingFirst{n}{k} = \sum_{n > i_{n-k} > \ldots > i_1 > 0} i_1 \cdots i_{n-k} \quad\text{for } n,k\in\mathbb Z_{> 0}, n > k \comma
\end{align}
which is especially useful, when $n-k$ is small.\bigskip

For practical purpose, one can relate the Stirling numbers of the first kind $\StirlingFirstSmall{n}{k}$ with fixed $k$ to harmonic numbers $H_n^{(k)} := \sum_{i=1}^k \frac{1}{i^k}$ and $H_{n} := H_n^{(1)}$. We will extend the list of relations given in \cref{eq:StirlingPartPos} and \cref{eq:StirlingPartNeg}. For positive first argument $n\in\mathbb Z_{>0}$ we find by use of \cref{eq:StirlingNumbersRecursionPos}
\begin{align}
	\StirlingFirst{n}{0} &= 0 \allowdisplaybreaks \\
	\StirlingFirst{n}{1} &= (n-1)! \allowdisplaybreaks \\
	\StirlingFirst{n}{2} &= (n-1)!\ H_{n-1} \allowdisplaybreaks \\
	\StirlingFirst{n}{3} &= \frac{(n-1)!}{2!}  \left(H_{n-1}^2 - H_{n-1}^{(2)}\right) \allowdisplaybreaks \\
	\StirlingFirst{n}{4} &= \frac{(n-1)!}{3!}  \left( H_{n-1}^3 - 3 H_{n-1} H_{n-1}^{(2)} + 2 H_{n-1}^{(3)}\right) \allowdisplaybreaks \\
	\StirlingFirst{n}{5} &= \frac{(n-1)!}{4!} \left[ H_{n-1}^4 - 6 H_{n-1}^2 H_{n-1}^{(2)} + 3 \left(H_{n-1}^{(2)}\right)^2 + 8 H_{n-1} H_{n-1}^{(3)} - 6 H_{n-1}^{(4)}\right] \allowdisplaybreaks \\
	\StirlingFirst{n}{6} &= \frac{(n-1)!}{5!} \left[ H_{n-1}^5 - 10 H_{n-1}^3 H_{n-1}^{(2)} + 20 H_{n-1}^2 H_{n-1}^{(3)} \right. \nonumber\\
	& \left. + 15 H_{n-1} \left(H_{n-1}^{(2)}\right)^2  - 30 H_{n-1} H_{n-1}^{(4)} - 20 H_{n-1}^{(2)} H_{n-1}^{(3)} + 24 H_{n-1}^{(5)} \right] \allowdisplaybreaks \\
	\StirlingFirst{n}{7} &= \frac{(n-1)!}{6!} \left[ H_{n-1}^6 - 15 H_{n-1}^4 H_{n-1}^{(2)} + 40 H_{n-1}^3 H_{n-1}^{(3)} + 45 H_{n-1}^2 \left(H_{n-1}^{(2)}\right)^2 \right. \nonumber \\
	& - 90 H_{n-1}^2 H_{n-1}^{(4)} - 120  H_{n-1} H_{n-1}^{(2)} H_{n-1}^{(3)} + 144 H_{n-1} H_{n-1}^{(5)} - 15 \left(H_{n-1}^{(2)}\right)^3 \nonumber \\
	& \left. + 40 \left( H_{n-1}^{(3)} \right)^2 + 90 H_{n-1}^{(2)} H_{n-1}^{(4)} - 120 H_{n-1}^{(6)} \right] \allowdisplaybreaks \\
	\StirlingFirst{n}{8} &= \frac{(n-1)!}{7!} \left[ H_{n-1}^7 - 21 H_{n-1}^{(2)} H_{n-1}^5 + 70 H_{n-1}^{(3)} H_{n-1}^4 + 105 \left( H_{n-1}^{(2)} \right)^2 H_{n-1}^3  \right. \nonumber \\
	& \left. - 210 H_{n-1}^{(4)} H_{n-1}^3 - 420 H_{n-1}^{(2)} H_{n-1}^{(3)} H_{n-1}^2 + 504 H_{n-1}^{(5)} H_{n-1}^2 - 105 \left( H_{n-1}^{(2)} \right)^3 H_{n-1}  \right. \nonumber \\
	& \left. + 280 \left( H_{n-1}^{(3)} \right)^2 H_{n-1} + 630 H_{n-1}^{(2)} H_{n-1}^{(4)} H_{n-1} - 840 H_{n-1}^{(6)} H_{n-1} + 210 \left( H_{n-1}^{(2)} \right)^2 H_{n-1}^{(3)}  \right. \nonumber \\
	& \left. - 420 H_{n-1}^{(3)} H_{n-1}^{(4)} - 504 H_{n-1}^{(2)} H_{n-1}^{(5)} + 720 H_{n-1}^{(7)} \right] \allowdisplaybreaks \\
	\StirlingFirst{n}{9} &= \frac{(n-1)!}{8!} \left[ H_{n-1}^8 - 28 H_{n-1}^{(2)} H_{n-1}^6 + 112 H_{n-1}^{(3)} H_{n-1}^5 + 210 \left( H_{n-1}^{(2)} \right)^2 H_{n-1}^4  \right. \nonumber \\
	& \left. - 420 H_{n-1}^{(4)} H_{n-1}^4 - 1120 H_{n-1}^{(2)} H_{n-1}^{(3)} H_{n-1}^3 + 1344 H_{n-1}^{(5)} H_{n-1}^3 - 420 \left( H_{n-1}^{(2)} \right)^3 H_{n-1}^2  \right. \nonumber \\
	& \left. + 1120 \left( H_{n-1}^{(3)} \right)^2 H_{n-1}^2 + 2520 H_{n-1}^{(2)} H_{n-1}^{(4)} H_{n-1}^2 - 3360 H_{n-1}^{(6)} H_{n-1}^2  \right. \nonumber \\
	& \left. + 1680 \left( H_{n-1}^{(2)} \right)^2 H_{n-1}^{(3)} H_{n-1} - 3360 H_{n-1}^{(3)} H_{n-1}^{(4)} H_{n-1} - 4032 H_{n-1}^{(2)} H_{n-1}^{(5)} H_{n-1}  + \right. \nonumber \allowdisplaybreaks\\
	& \left. + 5760 H_{n-1}^{(7)} H_{n-1} + 105 \left( H_{n-1}^{(2)} \right)^4 - 1120 H_{n-1}^{(2)} \left( H_{n-1}^{(3)} \right)^2 + 1260 \left( H_{n-1}^{(4)} \right)^2  \right. \nonumber \\
	& \left. - 1260 \left( H_{n-1}^{(2)} \right)^2 H_{n-1}^{(4)} + 2688 H_{n-1}^{(3)} H_{n-1}^{(5)} + 3360 H_{n-1}^{(2)} H_{n-1}^{(6)} - 5040 H_{n-1}^{(8)} \right] \point
\end{align}

For Stirling numbers $\StirlingFirstSmall{-n}{k}$ where the first argument attends negative values, we can derive very similar relations by means of \cref{eq:StirlingFirstNegativeDefAppendix}
\begin{align}
	\StirlingFirst{-n}{0} &= \frac{(-1)^n}{n!} \allowdisplaybreaks \\
	\StirlingFirst{-n}{1} &= \frac{(-1)^n}{n!} H_n \allowdisplaybreaks \\
	\StirlingFirst{-n}{2} &= \frac{(-1)^n}{2!\, n!}  \left(H_n^2 + H_n^{(2)}\right) \allowdisplaybreaks \\
	\StirlingFirst{-n}{3} &= \frac{(-1)^n}{3!\, n!}  \left( H_n^3 + 3 H_n H_n^{(2)} + 2 H_n^{(3)}\right) \allowdisplaybreaks \\
	\StirlingFirst{-n}{4} &= \frac{(-1)^n}{4!\, n!} \left[ H_n^4 + 6 H_n^2 H_n^{(2)} + 3 (H_n^{(2)})^2 + 8 H_n H_n^{(3)} + 6 H_n^{(4)}\right] \allowdisplaybreaks \\
	\StirlingFirst{-n}{5} &= \frac{(-1)^n}{5!\, n!} \left[ H_n^5 + 10 H_n^3 H_n^{(2)} + 20 H_n^2 H_n^{(3)} \right. \nonumber\\
	& \left. + 15 H_n \left(H_n^{(2)}\right)^2  + 30 H_n H_n^{(4)} + 20 H_n^{(2)} H_n^{(3)} + 24 H_n^{(5)} \right] \allowdisplaybreaks \\
	\StirlingFirst{-n}{6} &= \frac{(-1)^n}{6!\, n!} \left[ H_n^6 + 15 H_n^4 H_n^{(2)} + 40 H_n^3 H_n^{(3)} + 45 H_n^2 \left(H_n^{(2)}\right)^2 + 90 H_n^2 H_n^{(4)} \right. \nonumber \\
	&\left. + 120  H_n H_n^{(2)} H_n^{(3)}\! + \! 144 H_n H_n^{(5)} \! + \! 15 \left(H_n^{(2)}\right)^3 \! + \! 40 \left( H_n^{(3)} \right)^2 \! + \! 90 H_n^{(2)} H_n^{(4)} \! + \! 120 H_n^{(6)} \right] \allowdisplaybreaks \\
	\StirlingFirst{-n}{7} &= \frac{(-1)^n}{7!\, n!} \left[ H_n^7 + 21 H_n^{(2)} H_n^5 + 70 H_n^{(3)} H_n^4 + 105 \left( H_n^{(2)} \right)^2 H_n^3  \right. \nonumber \\
	& \left. + 210 H_n^{(4)} H_n^3 + 420 H_n^{(2)} H_n^{(3)} H_n^2 + 504 H_n^{(5)} H_n^2 + 105 \left( H_n^{(2)} \right)^3 H_n  \right. \nonumber \\
	& \left. + 280 \left( H_n^{(3)} \right)^2 H_n + 630 H_n^{(2)} H_n^{(4)} H_n + 840 H_n^{(6)} H_n + 210 \left( H_n^{(2)} \right)^2 H_n^{(3)}  \right. \nonumber \\
	& \left. + 420 H_n^{(3)} H_n^{(4)} + 504 H_n^{(2)} H_n^{(5)} + 720 H_n^{(7)} \right] \allowdisplaybreaks \\
	\StirlingFirst{-n}{8} &= \frac{(-1)^n}{8!\, n!} \left[ H_n^8 + 28 H_n^{(2)} H_n^6 + 112 H_n^{(3)} H_n^5 + 210 \left( H_n^{(2)} \right)^2 H_n^4  \right. \nonumber \\
	& \left. + 420 H_n^{(4)} H_n^4 + 1120 H_n^{(2)} H_n^{(3)} H_n^3 + 1344 H_n^{(5)} H_n^3 + 420 \left( H_n^{(2)} \right)^3 H_n^2  \right. \nonumber \\
	& \left. + 1120 \left( H_n^{(3)} \right)^2 H_n^2 + 2520 H_n^{(2)} H_n^{(4)} H_n^2 + 3360 H_n^{(6)} H_n^2 + \right. \nonumber \allowdisplaybreaks \\
	& \left. + 1680 \left( H_n^{(2)} \right)^2 H_n^{(3)} H_n + 3360 H_n^{(3)} H_n^{(4)} H_n + 4032 H_n^{(2)} H_n^{(5)} H_n  \right. \nonumber \\
	& \left. + 5760 H_n^{(7)} H_n + 105 \left( H_n^{(2)} \right)^4 + 1120 H_n^{(2)} \left( H_n^{(3)} \right)^2 + 1260 \left( H_n^{(4)} \right)^2  \right. \nonumber \\
	& \left. + 1260 \left( H_n^{(2)} \right)^2 H_n^{(4)} + 2688 H_n^{(3)} H_n^{(5)} + 3360 H_n^{(2)} H_n^{(6)} + 5040 H_n^{(8)} \right]
\end{align}
where $n\in\mathbb Z_{\geq 0}$.\bigskip

Stirling numbers of the first kind $\StirlingFirstSmall{n}{k}$ will grow very fast for $n\rightarrow\infty $ and are close to zero when $n\rightarrow -\infty$. Hence, for numerical calculations it is much more convenient to work with a slightly adapted definition in order to avoid the handling of very large or very small numbers. For this reason, we introduced the following variant in \cref{sec:numerics}
\begin{align}
	\gls{sigmank} &= \frac{1}{(n-1)!} \StirlingFirst{n}{k+1} = \sum_{ n > i_k > \ldots > i_1 > 0} \frac{1}{i_1 \cdots i_k} \label{eq:AppendixSigmaPos} \\
	\sigma(-n,k) &= (-1)^n n! \StirlingFirst{-n}{k} = \sum_{ n \geq i_k \geq \ldots \geq i_1 \geq 1} \frac{1}{i_1 \cdots i_k} = \sum_{i=1}^n \frac{(-1)^{i+1}}{i^k} \Binomial{n}{i} \label{eq:AppendixSigmaNeg}
\end{align}
for $n,k\in\mathbb Z_{>0}$ and $\sigma(n,0)=1$ for $n\in\mathbb Z$. In comparison to the Stirling numbers, those $\sigma(n,k)$ are numerical stable even for high values of $|n|$. Immediately from \cref{eq:AppendixSigmaPos} and \cref{eq:AppendixSigmaNeg} we can read off the very rough estimations\footnote{These bounds are far from being strict and are only given for the purpose to see, that $\sigma(n,k)$ takes moderate values for high values of $|n|$. To derive those inequalities, note that they are nothing else than the restriction to the first summand for lower bounds and an extension of the summation region to $i\in \{1,\ldots,n-1\}^k$ and $i\in \{1,\ldots,n\}^k$, respectively, to obtain the upper bounds. However, it is not very hard to establish more sharp bounds for $\sigma(n,k)$ which show that $\sigma(n,k)$ is also stable for large $k$.} $\frac{1}{k!} \leq \sigma(n,k) \leq (H_{n-1})^k$ as well as $\frac{1}{k!} (H_n)^k \leq \sigma(-n,k) \leq (H_n)^k$ with $n,k\in\mathbb Z_{>0}$, which show that $\sigma(n,k)$ is stable even when $|n|$ attends high values. \bigskip

Closely related to the Stirling numbers of the first kind are the \textit{Stirling numbers of the second kind} $\StirlingSecondSmall{n}{k}$. Those numbers can be defined by
\begin{gather}
    \StirlingSecond{0}{0} = 1 \quad\text{,}\qquad \StirlingSecond{n}{0} = \StirlingSecond{0}{n} = 0 \quad\text{for } n\in\mathbb Z_{>0} \\
	\gls{StirlingSecond} = \frac{1}{k!} \sum_{i=0}^k (-1)^{k-i} \Binomial{k}{i} i^n \quad\text{for } n\in\mathbb Z_{>0},k\in\mathbb Z_{\geq 0}
\end{gather}
and give the number of partitions of $\{1,\ldots,n\}$ into $k$ non-empty subsets. Alternatively, we can write Stirling numbers of the second kind as the $k$-th order finite differences of the function $x^n$ at $x=0$, i.e.\ we have $\StirlingSecondSmall{n}{k} = \frac{1}{k!} \Delta^k x^n(0)$ (see \cref{eq:kthFiniteDifference} for the definition of finite differences). 

Stirling numbers of first and second kind can be understood as inverse to each other in the following sense. Let us consider two sequences $\{a_0,\ldots\}$ and $\{b_0,\ldots\}$, which are connected by the so-called \textit{Stirling transform} $\mathscr S:\{a_i\}_{i\in\mathbb N} \mapsto \{b_i\}_{i\in\mathbb N}$ defined by
\begin{align}
	b_n = \sum_{k=0}^n \StirlingSecond{n}{k} a_k \point
\end{align}
Then, the inverse transformation $\mathscr S^{-1}$ can be formulated by means of Stirling numbers of the first kind \cite{BernsteinCanonicalSequencesIntegers2002}
\begin{align}
	a_n = \sum_{k=0}^n (-1)^{n-k} \StirlingFirst{n}{k} b_k \point
\end{align}
Therefore, Stirling numbers obey a certain orthogonality relation \cite[sec. 26.8]{OlverNISTHandbookMathematical2010}
\begin{align}
	\sum_{j\geq 0} (-1)^{n-j} \StirlingFirst{n}{j} \StirlingSecond{j}{k} = \sum_{j\geq 0} (-1)^{j-k} \StirlingSecond{n}{j} \StirlingFirst{j}{k} = \delta_{nk} \label{eq:StirlingOrthogonality}
\end{align}
for $n,k\in\mathbb Z_{\geq 0}$. Note, that the summation in \cref{eq:StirlingOrthogonality} effectively runs only over $j=k,\ldots,n$ as all the other summands will vanish. 

On the level of the exponential generating functions $A(x) = \sum_{n\geq 0} a_n \frac{x^n}{n!}$ and $B(x) = \sum_{n\geq 0} b_n \frac{x^n}{n!}$, the Stirling transform $\mathscr S: A(x) \mapsto B(x)$ can be written as \cite{BernsteinCanonicalSequencesIntegers2002}
\begin{align}
	B(x) = A(e^x -1) \point \label{eq:StirlingTransformGeneratingFunctions}
\end{align}
Hence, we can efficiently calculate nested series of the form
\begin{align}
	A(x) = \sum_{n\geq k \geq 0} \frac{(-1)^{n-k} x^n}{n!} \StirlingFirst{n}{k} b_k \label{eq:nestedSumStirlingTransform}
\end{align}
by means of the one-fold series $B(x) = \sum_{n\geq 0} \frac{x^n}{n!} b_n$, which we want to illustrate with a short example.
\begin{example}[Series evaluation by Stirling transformation]
	By the Stirling transformation we could simply evaluate the nested series
	\begin{align}
		\sum_{n=0}^\infty \sum_{k=0}^n k \frac{(-x)^n (-y)^k}{n!} \StirlingFirst{n}{k} = \sum_{n=0}^\infty n \frac{\ln(1+x)^n y^n}{n!} = (1+x)^y y \ln(1+x)
	\end{align}
	where we used $b_k = k \, y^k$ in \cref{eq:nestedSumStirlingTransform} and inverted equation \cref{eq:StirlingTransformGeneratingFunctions}.
\end{example}
Moreover, the Stirling transformation has the remarkable property to connect finite differences with differentiations \cite[sec. 26.8]{OlverNISTHandbookMathematical2010}
\begin{align}
	\frac{1}{k!} \od[k]{}{x} f(x) &= \sum_{n\geq 1} \frac{(-1)^{n-k}}{n!} \StirlingFirst{n}{k} \Delta^n f(x) \label{eq:StirlingDifferences1} \\
	\frac{1}{k!} \Delta^k f(x) &= \sum_{n\geq 1} \frac{1}{n!} \StirlingSecond{n}{k} \od[n]{}{x} f(x) \point	
\end{align}
We want to demonstrate the advantage of these relations in a further small example.
\begin{example}[Series evaluation by finite differences]
	The $n$-th forward difference can be written as $\Delta^n f(x) = \sum_{k=0}^n \binom{n}{k} (-1)^{n-k} f(x+k)$. Hence, we can evaluate the following series
	\begin{align}
		\sum_{n=1}^\infty \sum_{k=0}^n \frac{(-1)^{k-p}}{k!\, (n-k)!\, (q+k)!}  \StirlingFirst{n}{p} = \sum_{n=1}^\infty \frac{(-1)^{n-p}}{n!} \StirlingFirst{n}{p} \Delta^n \frac{1}{\Gamma(q+1)} = \frac{1}{p!} \od[p]{}{q} \frac{1}{\Gamma(q+1)}
	\end{align}
	by the aid of the identity \cref{eq:StirlingDifferences1}.
\end{example}

Another useful transformation between sequences we want to mention here, is the \textit{binomial transformation}, which is closely related to the Stirling transformation. This transformation is defined by \cite{BernsteinCanonicalSequencesIntegers2002}
\begin{align}
	b_n = \sum_{k=0}^n \Binomial{n}{k} a_k \quad\text{,}\qquad a_n = \sum_{k=0}^n (-1)^{n-k} \Binomial{n}{k} b_k \point
\end{align}
The exponential generating functions of these two sequences $\{a_i\}_{i\in\mathbb N}$ and $\{b_i\}_{i\in\mathbb N}$ are connected by $B(x) = e^x A(x)$. Note, that Stirling numbers of the first $\StirlingFirstSmall{-n}{k}$ kind with negative first argument, can be expressed by a binomial transform by means of \cref{eq:StirlingFirstNegativeDefAppendix}. Using those transformations in a clever way can dramatically simplify the evaluation of nested series, as they appear in the approach of series representations from \cref{ch:seriesRepresentations}. However, an algorithmic procedure to use those transformations systematically in the evaluation and simplification of those series has not yet been developed.\bigskip

We will conclude this small overview about Stirling numbers with certain useful series including Stirling numbers. A very elementary type of Stirling series was considered in \cite{AdamchikStirlingNumbersEuler1997} and can be written directly by the approach of $Z$-sums into a multiple $\zeta$-value 
\begin{align}
	\sum_{n\geq 1} \frac{1}{n!\, n^q} \StirlingFirst{n}{k} = \Zsum{\infty}{\underbrace{\scriptstyle 1,\ldots,1}_{k-1},q+1} = \zeta(\underbrace{1,\ldots,1}_{k-1}, q+1) \point \label{eq:StirlingSeries1}
\end{align}
The study of those series is closely related to the Nielsen polylogarithm, which can be written as \cite{AdamchikStirlingNumbersEuler1997}
\begin{align}
	S_{n,p}(z) = \sum_{k\geq 1} \frac{z^k}{k!\, k^n} \StirlingFirst{k}{p} \point
\end{align}
A generalization of the series \cref{eq:StirlingSeries1} was derived in \cite[thm. 1]{KubaNoteStirlingSeries2010}
\begin{align}
	\sum_{n\geq 1} \frac{p!}{(p+n)!} \frac{1}{n^q} \StirlingFirst{n}{k} = (-1)^q \Ssum{p}{k+1,\underbrace{\scriptstyle 1,\ldots,1}_{q-1}} + \sum_{i=2}^{q+1} (-1)^{q+1-i} \Zsum{\infty}{\underbrace{\scriptstyle 1,\ldots,1}_{k-1}, i} \Ssum{p}{\underbrace{\scriptstyle 1,\ldots,1}_{q+1-i}} \label{eq:StirlingSeries2}
\end{align}
which can be written as a finite sum of multiple $\zeta$-values and harmonic sums, and we will assume $k\in\mathbb Z_{>0}$ and $p,q\in\mathbb Z_{\geq 0}$. By means of \cite{BorweinEvaluationsKfoldEuler1996} those multiple $\zeta$-values can also be written in terms of single $\zeta$-values. This result \cref{eq:StirlingSeries2} can be generalized further \cite[thm. 2]{KubaNoteStirlingSeries2010}. We also refer to \cite[table 265]{GrahamConcreteMathematicsFoundation1994} and \cite{SrivastavaZetaQZetaFunctions2012}, which contain a wide range of useful relations for sums containing Stirling numbers.


\section{Feynman's trick}

In \cref{sec:ParametricFeynmanIntegrals} we derived the parametric Feynman integral \cref{eq:FeynmanParSpFeynman} from the momentum space Feynman integral \cref{eq:FeynmanMomSp} via the diversion of Schwinger's representation (\ref{thm:SchwingerRepresentation}). Alternatively, one can show the parametric Feynman integral \cref{eq:FeynmanParSpFeynman} also directly by means of the so-called Feynman trick. This integral relation goes back to an identity in \cite{FeynmanSpaceTimeApproachQuantum1949} and can be found in most textbooks e.g.\ \cite{PeskinIntroductionQuantumField1995, SchwartzQuantumFieldTheory2014}. In order to provide an advantage over the classical derivations of these textbooks, we want to include the freedom to choose a hyperplane $H(x)$, which is sometimes referred as Cheng-Wu theorem.

\begin{lemma}[Feynman's trick] \label{lem:FeynmanTrick}
	Let $D_1,\ldots,D_n$ be positive real numbers $D_i>0$ and $\nu\in\mathbb C^n$ with $\Re(\nu_i)>0$. Then we have the following identity
	\begin{align}
		\frac{1}{\prod_{i=1}^n D_i^{\nu_i}} = \frac{\Gamma(|\nu|)}{\Gamma(\nu)} \int_{\mathbb R^n_+} \dif x\, x^{\nu-1} \frac{\delta(1-H(x))}{\left(\sum_{i=1}^n x_i D_i \right)^{|\nu|}} \comma
	\end{align}
	where $|\nu|:=\sum_{i=1}^n$, $H(x) = \sum_{i=1}^n h_i x_i$ defines a hyperplane with $h_i\geq 0$ not all zero and $\delta(x)$ denotes the $\delta$-distribution. As before we use a multi-index notation which shortens $\Gamma(\nu) := \prod_{i=1}^n \Gamma(\nu_i)$ and $\dif x \, x^{\nu-1} := \prod_{i=1}^n \dif x_i \, x_i^{\nu_i-1}$.
\end{lemma}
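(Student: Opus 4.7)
My plan is to derive Feynman's trick directly from Schwinger's trick (Lemma~\ref{lem:SchwingersTrick}), which is already available in the text, by inserting a delta-function representation of unity and exploiting the homogeneity of both $H$ and $\Lambda(x):=\sum_{i=1}^n x_i D_i$.

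Concretely, I would start from the right-hand side, call it $I$, and replace the factor $\Lambda^{-|\nu|}$ by the Cahen-Mellin-type representation $\Gamma(|\nu|)\Lambda^{-|\nu|} = \int_0^\infty \dif t\, t^{|\nu|-1} e^{-t\Lambda}$, which is valid because $\Lambda > 0$ on $\mathbb R^n_+$ and $\Re(|\nu|)>0$. This yields
\begin{align}
I = \frac{1}{\Gamma(\nu)} \int_0^\infty \dif t \int_{\mathbb R^n_+} \dif x\, x^{\nu-1}\, \delta(1-H(x))\, t^{|\nu|-1}\, e^{-t\Lambda(x)}.
\end{align}
Next I would perform the change of variables $x_i = y_i/t$, which is an invertible map of $\mathbb R^n_+$ onto itself for fixed $t>0$. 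Under this substitution the Jacobian contributes $t^{-n}$, the measure-weight transforms as $x^{\nu-1} \mapsto t^{n-|\nu|} y^{\nu-1}$, the linear functionals transform homogeneously as $H(x) = H(y)/t$ and $\Lambda(x) = \Lambda(y)/t$, and the delta distribution transforms as $\delta(1 - H(y)/t) = t\,\delta(t - H(y))$ for $t>0$. The total power of $t$ in the integrand collapses to zero and the exponential becomes $e^{-\Lambda(y)}$, independent of $t$.

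At this point the $t$-integration reduces to $\int_0^\infty \dif t\,\delta(t - H(y)) = 1$, which is valid because the hyperplane coefficients $h_i \geq 0$ are not all zero, so $H(y) > 0$ for almost every $y\in \mathbb R^n_+$ (the locus $H(y)=0$ has Lebesgue measure zero and does not contribute). The resulting expression is precisely the Schwinger representation
\begin{align}
I = \frac{1}{\Gamma(\nu)}\int_{\mathbb R^n_+} \dif y\, y^{\nu-1}\, e^{-\sum_{i=1}^n y_i D_i},
\end{align}
which by Lemma~\ref{lem:SchwingersTrick} equals $\prod_{i=1}^n D_i^{-\nu_i}$, as claimed.

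There is no serious technical obstacle here; the argument is essentially a bookkeeping exercise in changes of variables. The one point requiring some care is the behaviour of $\delta(1-H(x))$ under rescaling when $H$ has several vanishing coefficients $h_i$ — one must check that the delta distribution is well-defined as a tempered distribution on $\mathbb R^n_+$ and that Fubini is legitimate. Both follow from the absolute convergence guaranteed by $\Re(\nu_i)>0$ and the positivity of $\Lambda$ on the support, so the entire manipulation is justified.
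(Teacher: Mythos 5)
Your proof is correct, but it takes a genuinely different route from the one in the paper. The paper proves \cref{lem:FeynmanTrick} by induction on $n$: the case $n=1$ is trivial, the case $n=2$ is done by evaluating the delta constraint explicitly and recognizing the Euler integral representation of a ${}_2F_1$ that collapses to a binomial series, and the induction step splices the hypothesis together with the $n=2$ case. You instead start from the right-hand side, restore the Schwinger parameter via $\Gamma(|\nu|)\Lambda^{-|\nu|}=\int_0^\infty \dif t\, t^{|\nu|-1}e^{-t\Lambda}$ (this is just \cref{lem:SchwingersTrick} for a single factor, not really a Cahen--Mellin integral, which is the inverse Mellin representation of the exponential), rescale $x\mapsto y/t$, and let the delta function absorb the $t$-integration before invoking \cref{lem:SchwingersTrick} again. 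This is precisely the manipulation the paper itself uses in the proof of \cref{lem:FeynmanRepresentation}, run at the level of the bare propagator product rather than the full Feynman integral, so your argument is entirely in the spirit of the text; it handles arbitrary $n$ and an arbitrary admissible hyperplane $H$ in one stroke, whereas the induction has to carry the hyperplane through each step with some bookkeeping. All the delicate points you flag are indeed harmless: since $h_i\geq 0$ are not all zero, $H(y)=\sum_i h_iy_i>0$ everywhere on the open orthant $(0,\infty)^n$ (not merely almost everywhere), so $\int_0^\infty\dif t\,\delta(t-H(y))=1$ holds pointwise on the integration domain, and Fubini is justified by the absolute convergence of the same integral with $\nu$ replaced by $\Re(\nu)$.
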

\begin{proof}
	We will show this integral relation by an induction over $n$. For $n=1$ the lemma is trivial. For $n=2$ we will assume without loss of generality that $h_2\neq 0$. Integrating the $\delta$-distribution, we therefore obtain
	\begin{align}
		\int_{\mathbb R^2_+} \dif x \, x^{\nu-1}  \frac{\delta(1-h_1 x_1 - h_2 x_2)}{(x_1 D_1+x_2 D_2)^{\nu_1+\nu_2}} = \int_0^\infty \dif x_1 \, \theta(1-h_1 x_1) \frac{ x_1^{\nu_1-1} (1-h_1 x_1)^{\nu_2-1} h_2^{\nu_1} }{[D_2+ (h_2 D_1-h_1 D_2)x_1]^{\nu_1+\nu_2}} \ \text{.}
	\end{align}
	In case of $h_1=0$ the integral can be evaluated after a substitution easily by beta function, and we obtain the asserted result. In case of $h_1\neq 0$ we will substitute $t = h_1 x_1$ and obtain
	\begin{align}
		& \left(\frac{h_2}{h_1}\right)^{\nu_1} D_2^{-\nu_1-\nu_2} \int_0^1 \dif t \,  t^{\nu_1-1} (1-t)^{\nu_2-1} \left(1 + \frac{h_2 D_1 -  h_1 D_2}{D_2 h_1} t \right)^{-\nu_1-\nu_2} \nonumber \\
		& = \frac{\Gamma(\nu_1+\nu_2)}{\Gamma(\nu_1)\Gamma(\nu_2)} \left(\frac{h_2}{h_1}\right)^{\nu_1} D_2^{-\nu_1-\nu_2} {_2}F_1 \left(\nu_1+\nu_2,\nu_1,\nu_1+\nu_2 \Big\rvert 1 - \frac{D_1 h_2}{D_2 h_1} \right) \nonumber \\
		&= \frac{\Gamma(\nu_1+\nu_2)}{\Gamma(\nu_1)\Gamma(\nu_2)} D_1^{-\nu_1} D_2^{-\nu_2}
	\end{align}
	where we use the integral representation of Gauss' hypergeometric function \cref{eq:GaussEuler} and simplify the present case by the series representation \cref{eq:GaussSeries} which reduces to the binomial series.
	
	For $n>2$ we show the lemma by induction. We will write $\nu = (\nu_1,\ldots,\nu_n)$, $x=(x_1,\ldots,x_n)$ and  $\tilde\nu = (\nu_1,\ldots,\nu_{n+1})$, $\tilde x=(x_1,\ldots,x_{n+1})$. By using the induction hypothesis and the $n=2$ case we arrive at
	\begin{align}
		\prod_{i=1}^{n+1} D_i^{-\nu_i} &= \frac{\Gamma(|\nu|)}{\Gamma(\nu)} \int_{\mathbb R^n_+} \dif x \, x^{\nu-1} \frac{\delta(1-H(x))}{\left(\sum_{i=1}^n x_i D_i \right)^{|\nu|} D_{n+1}^{\nu_n+1}} \nonumber \\
		&= \frac{\Gamma(|\tilde\nu|)}{\Gamma(\tilde\nu)} \int_{\mathbb R^{n+1}_+} \dif \tilde x \, \tilde x^{\tilde \nu-1} \int_0^\infty \dif y \, y^{|\nu|-1} \frac{\delta(1-H(x)) \delta(1-y-h_{n+1}x_{n+1})}{\left(y \sum_{i=1}^n x_i D_i + x_{n+1} D_{n+1} \right)^{|\tilde\nu|}} \point 
	\end{align}
	Performing the $y$-integration by means of the second $\delta$-distribution and substituting $t_i = (1-h_{n+1} x_{n+1} ) x_i$ for $i=1,\ldots,n$ and $t_{n+1} = x_{n+1}$ this results in
	\begin{align}
		\prod_{i=1}^{n+1} D_i^{-\nu_i} &= \frac{\Gamma(|\tilde\nu|)}{\Gamma(\tilde\nu)} \int_{\mathbb R^{n+1}_+} \dif t \, t^{\tilde \nu-1} \frac{\delta\!\left(1- \sum_{i=1}^n \frac{t_i h_i}{1-h_{n+1} t_{n+1}}\right)\theta(1-h_{n+1}t_{n+1})}{(1-h_{n+1}t_{n+1}) \left(\sum_{i=1}^{n+1} t_i D_i\right)^{|\tilde \nu|}} \nonumber\\
		& = \frac{\Gamma(|\tilde\nu|)}{\Gamma(\tilde\nu)} \int_{\mathbb R^{n+1}_+} \dif t \, t^{\tilde \nu-1} \frac{\delta\!\left(1 - \sum_{i=1}^{n+1} h_i t_i\right)}{\left(\sum_{i=1}^{n+1} t_i D_i\right)^{|\tilde \nu|}} \point 
	\end{align}
\end{proof}

Thus, when applying \cref{lem:FeynmanTrick} to the momentum space representation \cref{eq:FeynmanMomSp}, we will obtain \cref{eq:FeynmanParSpFeynman} directly by similar steps as in the proof of \cref{thm:SchwingerRepresentation}. We omit the explicit proof here to avoid redundancy.


\section{Software tools} \label{sec:SoftwareTools}
\VerbatimFootnotes 


In this section we would like to illustrate the usage of various software tools related to polytopes and $D$-modules. In the following we will present computations with the ANSI C package \softwareName{lrslib} \cite{AvisLrslib}, the C++ package \softwareName{Topcom} \cite{RambauTOPCOMTriangulationsPoint2002}, the software \softwareName{Polymake} \cite{GawrilowPolymakeFrameworkAnalyzing2000, AssarfComputingConvexHulls2015}, the software system \softwareName{Macaulay2} \cite{GraysonMacaulay2SoftwareSystem} as well as with the computer algebra system \softwareName{Singular} \cite{DeckerSingular421Computer2021, GreuelSINGULARComputerAlgebra2009}. However, there are many further programs allowing computations with polyhedra and $D$-modules, e.g.\ \softwareName{Gfan} \cite{JensenGfanSoftwareSystem} or the \softwareName{Parma Polyhedra Library} \cite{BagnaraParmaPolyhedraLibrary2006}. \bigskip

For illustration reasons we go through all those software packages by treating the $1$-loop bubble graph as a running example, i.e.\ we consider
\begin{gather}
	\Uu = x_1 + x_2 \qquad \Ff = \left(p^2 + m_1^2 + m_2^2\right) x_1 x_2 + m_1^2 x_1^2 + m_2^2 x_2^2 \\
	\Aa =   \begin{pmatrix} 
                1 & 1 & 1 & 1 & 1\\
	      	    1 & 0 & 1 & 2 & 0\\
	      	    0 & 1 & 1 & 0 & 2
	        \end{pmatrix} \point \label{eq:SoftwareAppendixExampleBubble}
\end{gather}
In the following subsections we want to show how to determine the representation of $\Conv(\Aa)$ by means of intersections of halfspaces in the sense of \cref{eq:HPolytope}, how to calculate its volume $\vol(\Conv(\Aa))$ and generate its regular triangulations as well as to perform basic calculations with $D$-modules. Further, we will also see how to determine the truncated polynomials and calculate the corresponding $A$-discriminants.

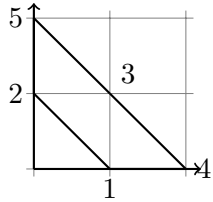
\begin{figure}
    \centering
    \begin{tikzpicture}[thick, dot/.style = {draw, shape=circle, fill=black, scale=.5}, scale=1]
        \draw[step=1cm,gray,very thin] (-0.1,-0.1) grid (2.1,2.1);  
        \draw[thick,->] (0,0) -- (2.2,0) node[anchor=north west] {};
        \draw[thick,->] (0,0) -- (0,2.2) node[anchor=south east] {};
        \coordinate[label=below:$1$] (A) at (1,0);
        \coordinate[label=left:$2$] (B) at (0,1);
        \coordinate[label=above right:$3$] (C) at (1,1);
        \coordinate[label=right:$4$] (D) at (2,0);
        \coordinate[label=left:$5$] (E) at (0,2);
        \draw (A) -- (B);
        \draw (B) -- (E);
        \draw (E) -- (C);
        \draw (C) -- (D);
        \draw (D) -- (A);
    \end{tikzpicture}
    \caption[Newton polytope for the fully massive $1$-loop self-energy graph]{Newton polytope $\Newt(\Gg) = \Conv(A)$ for the fully massive $1$-loop self-energy graph according to \cref{eq:SoftwareAppendixExampleBubble}. The numbering of vertices corresponds to the order of columns of $\Aa$.} \label{fig:BubbleNewtonPolytopeAppendix}
\end{figure}

\subsection{\softwareName{lrslib}} \label{ssec:lrslib}

The lean C library \softwareName{lrslib} \cite{AvisLrslib} provides algorithms to transform efficiently between the two representations of polyhedra, i.e.\ the vertex-based representation \cref{eq:VPolytope} and the halfspace-based representation \cref{eq:HPolytope}. Further we can also calculate the volume by means of \softwareName{lrslib}. This library is also contained in Debian/Ubuntu distributions as \texttt{lrslib}, which allows an easy installation. The \softwareName{lrslib} package works with input and output files in a specific format. In our example \cref{eq:SoftwareAppendixExampleBubble} the input file has to look like:
\begin{lstlisting}[title={\raggedright the file \texttt{bubble.ine}}, frame=single]
bubble.ine
V-representation
begin
5 3 rational
1 1 0
1 0 1
1 1 1
1 2 0
1 0 2
end
\end{lstlisting}
Thus, the format type is very elementary and contains (in that order) the filename, the type of representation (\texttt{V-representation} or \texttt{H-representation}), the size of $\Aa$ as well as the points in homogeneous coordinates defined by $\Aa$. Additional format specifications can be found in the manual \cite{AvisLrslib}. The conversion to the H-representation can be accomplished by running \texttt{lrs bubble.ine}, which generates in our example:
\begin{lstlisting}
$ lrs bubble.ine 
*Input taken from  bubble.ine
bubble.ine
H-representation
begin
***** 3 rational
 0  1  0 
 2 -1 -1 
-1  1  1 
 0  0  1 
end
*Totals: facets=4 bases=2
\end{lstlisting}
or can be alternatively stored in an output file by running \texttt{lrs [input file] [output file]}. Hence, the polyhedra $\Conv(\Aa)$ will be described by the inequalities
\begin{align}
	\nu_1 &\geq 0 \label{eq:lrslibFacets1}  \\ 
	2\nu_0 - \nu_1 - \nu_2 &\geq 0 \\
	-\nu_0 + \nu_1 + \nu_2 &\geq 0 \\
	\nu_2 &\geq 0 \point \label{eq:lrslibFacets4}
\end{align}
The polytope $\Conv(A)$ can be obtained by setting $\nu_0=1$ in \cref{eq:lrslibFacets1} -- \cref{eq:lrslibFacets4}. By adding the keyword \texttt{volume} in the end of an input file \texttt{bubble.ine} with a polytope in V-representation, we get also the Euclidean volume of the polytope. Thus, the input file
\begin{lstlisting}[title={\raggedright the file \texttt{bubble.ine}}, frame=single]
bubble.ine
V-representation
begin
5 3 rational
1 1 0
1 0 1
1 1 1
1 2 0
1 0 2
end
volume
\end{lstlisting}
will give additionally the Euclidean volume $\frac{3}{2}$. Note, that the volume from \cref{ssec:PolytopesPointConfigs} was a normalized Euclidean volume counting the number of standard simplices in $\Conv(A)$, i.e.\ we have to multiply by $n!$. Hence, the volume in this particular example equals $3$, which agrees with \cref{fig:BubbleNewtonPolytopeAppendix}.

\subsection{\softwareName{Topcom}} \label{ssec:topcom}

A powerful tool for the determination of triangulations is the C++ package \softwareName{Topcom} \cite{RambauTOPCOMTriangulationsPoint2002}. In general, it is a very simple task to construct a single triangulation (e.g.\ from a height vector or as a placing triangulation, see \cref{ssec:TriangulationsPolyhedra}). However, it is a hard problem to find all triangulations. \softwareName{Topcom} is a software tool which is made to manage this task. To use \softwareName{Topcom} we have to store the vector configuration $\Aa$ in a separate file, which for our example \cref{eq:SoftwareAppendixExampleBubble} looks as follows
\begin{lstlisting}[title={\raggedright the file \texttt{bubble.dat}}, frame=single]
[[1,1,0],[1,0,1],[1,1,1],[1,2,0],[1,0,2]]
\end{lstlisting}
As before, we have to use homogeneous coordinates. To find the triangulations of $\Aa$ one has to run \texttt{points2triangs < bubble.dat}, which results in
\begin{lstlisting}
$ points2triangs < bubble.dat
T[0]:=[0->5,3:{{0,1,2},{0,2,3},{1,2,4}}];
T[1]:=[1->5,3:{{1,2,4},{1,2,3},{0,1,3}}];
T[2]:=[2->5,3:{{0,2,3},{0,2,4},{0,1,4}}];
T[3]:=[3->5,3:{{0,1,3},{1,3,4}}];
T[4]:=[4->5,3:{{0,1,4},{0,3,4}}];
\end{lstlisting}
Hence, we have $5$ triangulations for our running example, where the triangulations are identified by the vertices generating the maximal simplices. Similar to most of the software packages in this area, \softwareName{Topcom} starts the numbering of vertices with $0$. Note, that the command \texttt{points2triangs} produces all triangulations, which are connected by flips starting with an arbitrary regular triangulation. Thus, this command will construct all regular triangulations, but potentially also non-regular triangulations. One can specify the behaviour of \softwareName{Topcom} by the options \texttt{--regular} and \texttt{--nonregular} to consider only regular and only non-regular triangulations\footnote{Note that the graph of flips between regular triangulations is connected, whereas the graph of flips between non-regular triangulations is not necessarily connected \cite[ch. 5]{DeLoeraTriangulations2010}. Hence, we may find not all non-regular triangulations by \texttt{points2triangs --nonregular}. However, one can use the alternative command \texttt{points2alltriangs --nonregular}.}, respectively. 

To find unimodular triangulations one can restrict to a certain number $k$ of simplices generating the triangulation by the option \texttt{--cardinality [k]}. The option \texttt{--heights} will print also heights $\omega$ which generate the triangulations. For further commands and options we refer to the manual of \softwareName{Topcom}. There exists also a parallelized version of \softwareName{Topcom} \cite{JordanParallelEnumerationTriangulations2018}, which is convenient for more elaborate examples.

\subsection{\softwareName{Polymake}} \label{ssec:Polymake}

A very comprehensive software for calculations with polyhedra is \softwareName{Polymake} \cite{GawrilowPolymakeFrameworkAnalyzing2000, AssarfComputingConvexHulls2015}. As with \softwareName{lrslib} we can transform between V- and H-polyhedra. Furthermore, \softwareName{Polymake} also allows to determine regular triangulations from given heights. \softwareName{Polymake} consists in different parts. Thus, when starting \softwareName{Polymake} one has to make sure to use the polytope part by typing \texttt{application 'polytope';}. For our running example \cref{eq:SoftwareAppendixExampleBubble} we can determine the H-representation by
\begin{lstlisting}
> $P = new Polytope(POINTS=>[[1,1,0],[1,0,1],[1,1,1],[1,2,0],[1,0,2]]); 
> print $P->FACETS; 
0 1 0
-1 1 1
0 0 1
2 -1 -1
\end{lstlisting}
The output should be understood in the same way as for \softwareName{lrslib} (\cref{ssec:lrslib}), i.e.\ the polytope is also characterized by the inequalities \cref{eq:lrslibFacets1} -- \cref{eq:lrslibFacets4}. Also the conversion in the other direction works fine:
\begin{lstlisting}
> $P1 = new Polytope(INEQUALITIES=>[[0,1,0],[-1,1,1],[0,0,1],[2,-1,-1]]);
> print $P1->VERTICES;
1 2 0
1 0 2
1 0 1
1 1 0
\end{lstlisting}
Further, we can determine the Euclidean volume of polytopes by \softwareName{Polymake}:
\begin{lstlisting}
> print $P->VOLUME;
3/2
\end{lstlisting}
As before, we have to multiply the Euclidean volume by $n!$ to obtain the normalized volume defined in \cref{ssec:PolytopesPointConfigs}. Moreover, we can construct regular triangulations from a given height, e.g.\ $\omega=(0,1,2,1,2)$
\begin{lstlisting}
> $M = new Matrix<Rational>([[1,1,0],[1,0,1],[1,1,1],[1,2,0],[1,0,2]]);
> $w = new Vector<Rational>([0,1,2,1,2]);
> $S = new fan::SubdivisionOfPoints(POINTS=>$M, WEIGHTS=>$w);
> print $S->MAXIMAL_CELLS;
{0 1 4}
{0 3 4}
\end{lstlisting}
where we generated the non-unimodular triangulation \texttt{T[4]} from \cref{ssec:topcom}. Note, that a regular subdivision is not for every choice of $\omega$ a triangulation. However, it is a triangulation for generic $\omega$, see also \cref{ssec:TriangulationsPolyhedra}.

\subsection{\softwareName{Macaulay2}} \label{ssec:Macaulay2}

\softwareName{Macaulay2} \cite{GraysonMacaulay2SoftwareSystem} is a very powerful computer algebra system which is devoted to calculations in algebraic geometry. Thus, there is also a package for \softwareName{Macaulay2} to work with polyhedra and as before we can transform between H- and V-representations and calculate volumes\footnote{In principle \softwareName{Macaulay2} can also calculate regular triangulations. However, it seems to subdivide rather the polytope instead of the point configuration. This will cause problems, if we are taking also points into account, which are not vertices of the polytope. This can be seen by extending the following \softwareName{Macaulay2}-session by \verb|i11 : w = matrix {{0,1,2,1,2}}; regularSubdivision(A,w)| resulting in \verb|o12 = {{0, 1, 3}, {0, 2, 3}}}| and comparing the result with the previous section (\cref{ssec:Polymake}).}. 
We will illustrate these operations with our example \cref{eq:SoftwareAppendixExampleBubble}. Below we will print a \softwareName{Macaulay2}-session determine the basic properties for our example \cref{eq:SoftwareAppendixExampleBubble}. Comments in \softwareName{Macaulay2} start with a double dash \texttt{--}. 


\begin{lstlisting}[escapeinside={(*}{*)}]
$ M2 --no-preload
Macaulay2, version 1.18
i1 : loadPackage "Polyhedra";
i2 : QQ[s,b1,b2][x1,x2]; -- (*defines the base ring $\mathbb Q[x_1,x_2]$ with parameters $s, b_1, b_2$*)
i3 : U = x1+x2; -- (* 1. Symanzik polynomial*)
i4 : F = s*x1*x2 + b1*x1^2 + b2*x2^2; -- (* 2. Symanzik polynomial*)
i5 : P = newtonPolytope(U+F); -- (*definition of $P$ as Newton polytope*)
i6 : A = matrix {{1,0,1,2,0},{0,1,1,0,2}}; 
i7 : P1 = convexHull A; -- (* alternative definition by vertices*)                                                
i8 : volume P -- (*Euclidean volume*)
     3
o8 = -
     2
i9 : isFullDimensional P
o9 = true
i10 : facets P -- (*gives the H-representation of $P$*)
o10 = (| -1 0  |, | 0  |)
      | 0  -1 |  | 0  |
      | -1 -1 |  | -1 |
      | 1  1  |  | 2  |
\end{lstlisting}

Besides of polyhedral computations, \softwareName{Macaulay2} is also very well suited for the handling of multivariate polynomials. Therefore, we also want to use it for the calculation of $A$-discriminants and related objects. This is implemented in the two additional libraries for classical (i.e.\ dense) discriminants and resultants \cite{StaglianoPackageComputationsClassical2018} and $A$-discriminants and $A$-resultants \cite{StaglianoPackageComputationsSparse2020}. For convenience reasons we will present below an elementary package \texttt{Landau.m2}, which is adjusted to the approach of Feynman integrals. To use this additional package it should be stored in the \softwareName{Macaulay2} path.

\begin{lstlisting}[title={\raggedright the package \texttt{Landau.m2}}, frame=single, basicstyle=\tiny\ttfamily, basewidth=0.5em, breaklines=true,breakatwhitespace=true,columns=flexible]
-- Landau - a small package to calculate Landau varieties, 
-- i.e. the singular locus of a GKZ system or the principal A-determinant, mostly fitted to Feynman integrals 
--
-- general instructions:
-- the packages Polyhedra, Resultants and "SparseResultants" have to be installed. If they are not yet installed, run e.g. installPackage "Polyhedra"
-- this file has to be stored in a Macaulay2 path. The Macaulay2 paths can be displayed by the command: path
-- this package can be used by the command: loadPackage "Landau"

newPackage(
        "Landau",
        Version => "2.1", 
        Date => "August 24, 2021",
        Authors => {{Name => "Rene P. Klausen", 
                  Email => "klausen@physik.hu-berlin.de"}},
        Headline => "Calculating Landau varieties by means of principal A-determinants",
        DebuggingMode => true
        )
        
export {"principalAdet","generalDiscriminant","allTruncs"}

needsPackage "Polyhedra"
needsPackage "Resultants"
needsPackage "SparseResultants"

-- general stuff
ListTimes = (L1,L2) -> apply(L1,L2, (i,j) -> i*j)

-- truncation of polynomials
poly2A = f -> transpose(matrix(exponents(f))) -- giving the exponents of a monomial list for a polynomial
ptsOfFace = (A,face) -> apply(splice({0..numgens(source(A))-1}),i -> if contains(face,convexHull(submatrix(A,{i})))==true then 1 else 0 )    
faceTruncation = (f,A,face) -> sum(ListTimes(terms(f),ptsOfFace(A,face)))
truncatedPolynomial = (f,k) -> (A := poly2A(f); P := convexHull(A); face := facesAsPolyhedra(k,P); for i from 0 to #face-1 list faceTruncation(f,A,face_i)) 
allTruncs = f -> (n := numgens(ring(f)); mingle delete({},for i from 0 to n list truncatedPolynomial(f,i)))

-- fit rings
usedVars = (f,R) -> (gR := gens(R); delete("del",for i from 0 to #gR-1 list if diff(gR_i,f)==0 then "del" else i ))
fitRing = f -> (R := ring(f); substitute(f,first(selectVariables(usedVars(f,R),R))))
factorOut = (f,var) -> if pseudoRemainder(f,var)==0 then f//var else f
completeFactorOut = (f,varList) -> ((for i from 0 to #varList-1 do f = factorOut(factorOut(f,varList_i),varList_i)); f) -- at most for quadratic expressions (as in Symanzik polynomials)
dehomogenize = f -> (if isHomogeneous(f) then (sub(f,last(gens(ring f))=>1) ) else f);

--principal A determinant
generalDiscriminant = f -> (try ( -- distinguish cases by the number of monomials
  m:= # terms f;
  if m == 1 then ( -- contains only one monomial
    print("vertex type"); 
    coeff := ((coefficients(f))_1)_(0,0);
    sub(coeff, coefficientRing ring f) )
  else (
    f = fitRing f;
    n:= numgens ring f;
    if m-1 <= n then ( 
      print("dense discriminant"); 
      f = fitRing dehomogenize f; 
      f = completeFactorOut(f, gens ring f); --try to factor out trivial parts
      f = fitRing dehomogenize f;
      denseDiscriminant f )
    else (
      print("sparse discriminant"); 
      sparseDiscriminant f) )
  ) else (print("NN"); "NN")
);

principalAdet = f -> apply(allTruncs(f),generalDiscriminant);


beginDocumentation()
document { 
        Key => Landau,
        Headline => "Calculating Landau varieties by means of principal A-determinants",
        EM "Landau", " is a basic package to calculate Landau singularities."
        }
document {
        Key => {allTruncs},
        Headline => "all truncated polynomials",
        Usage => "allTruncs(f)",
        Inputs => {"a polynomial f"},
        Outputs => {"a list of all truncated polynomials, ordered by their codimension"},
        EXAMPLE lines ///
           QQ[m1,m2,s][x1,x2]; F = m1*x1^2 + m2*x2^2 + s*x1*x2;
           allTruncs F
        ///
        }
document {
        Key => {generalDiscriminant},
        Headline => "calculate the A-discriminant",
        Usage => "generalDiscriminant f",
        Inputs => {"f a polynomial"},
        Outputs => {"A-discriminant"},
        Caveat => {"Is mostly adjusted to polynomials which are at most quadratic in every variable. This behaviour can simply be generalized"},
        EXAMPLE lines ///
           QQ[m1,m2,s][x1,x2]; F = m1*x1^2 + m2*x2^2 + s*x1*x2;
           generalDiscriminant F
        ///
        }     
document {
        Key => {principalAdet},
        Headline => "calculate the simple principal A-determinant",
        Usage => "principalAdet f",
        Inputs => {"f a polynomial"},
        Outputs => {"A list of all A-discriminants of all truncated polynomials, the (simple) principal A-determinant is the product of all list elements. Additionally the used method for every A-discriminant is printed on screen."},
        Caveat => {"Is mostly adjusted to polynomials which are at most quadratically in every variable. This behaviour can simply generalized"},
        EXAMPLE lines ///
           QQ[m1,m2,s][x1,x2]; F = m1*x1^2 + m2*x2^2 + s*x1*x2;
           principalAdet F
        ///
        }             
end--
\end{lstlisting}

This package \texttt{Landau.m2} is written for convenience reasons and will preprocess the polynomials for the application of the algorithms from the packages \texttt{Resultants} and \texttt{SparseResultants}. It mainly contains the command \texttt{principalAdet}, which returns a list of $A$-discriminants, whose product is the simple principal $A$-determinant and the command \texttt{allTruncs}, which gives a list of all truncated polynomials. Further information is provided in the included documentation, which can be printed by means of the \texttt{help} command.

In our particular example \cref{eq:SoftwareAppendixExampleBubble} this would look like the following:
\begin{lstlisting}[basewidth=0.5em,basicstyle=\footnotesize\ttfamily,breaklines=false]
$M2 --no-preload
Macaulay2, version 1.18
i1 : loadPackage "Landau";
i2 : QQ[s,b1,b2][x1,x2]; 
i3 : U = x1+x2; F = s*x1*x2 + b1*x1^2 + b2*x2^2; G = U+F; E = principalAdet G
                                            2
o6 = {- s + b1 + b2, 1, 1, 1, b2, 1, 1, - s  + 4b1*b2, b1}
o6 : List
i7 : allTruncs(U+F)
           2                  2                 2                2            2     
o7 = {b1*x1  + s*x1*x2 + b2*x2  + x1 + x2, b2*x2  + x2, x2, b1*x1  + x1, b2*x2 , 
      --------------------------------------------------------------------------
                        2                  2       2
      x1 + x2, x1, b1*x1  + s*x1*x2 + b2*x2 , b1*x1 }
o7 : List
\end{lstlisting}
One can also use the original variables, which looks like
\begin{lstlisting}[basewidth=0.5em,basicstyle=\footnotesize\ttfamily,breaklines=false,escapeinside={(*}{*)}]
i8 : QQ[p,m1,m2][x1,x2]; 
i9 :  U = x1+x2; F = -p^2*x1*x2 + U*(x1*m1^2 + x2*m2^2); G = U+F; E = principalAdet G
        2             2           4     2  2     4     2  2      2  2     4    2
o12 = {p , 1, 1, 1, m2 , 1, 1, - p  + 2p m1  - m1  + 2p m2  + 2m1 m2  - m2 , m1 }
o12 : List
i13 : factor product E
          2    2   2
o13 = (m2) (m1) (p) (p - m1 - m2)(p - m1 + m2)(p + m1 - m2)(p + m1 + m2)(-1)
o13 : Expression of class Product
\end{lstlisting}
%

Furthermore, we will use \softwareName{Macaulay2} also for computations with $D$-modules. Below we will determine a Gale dual $\Bb$ of $\Aa$ (i.e.\ a basis of the kernel $\ker_{\mathbb Z}(\Aa)$) and generate the GKZ ideal $H_\Aa(\nuu)$. Note, that \softwareName{Macaulay2} uses the opposite sign convention for the GKZ parameter $\nuu$ as we did. Further, we will check holonomicity of the GKZ ideal $H_\Aa(\nuu)$ and determine its holonomic rank. We also check the toric ideal $I_\Aa$ for Cohen-Macaulayness, which is true in this example. Hence, we will have $\vol(\Conv(\Aa)) = \rank(H_\Aa(\nuu))$ for all values of $\nuu$. Moreover, we calculated the singular locus of the GKZ system and compared it with our previous result using \texttt{Landau.m2}.

\begin{lstlisting}[basewidth=0.5em,basicstyle=\footnotesize\ttfamily,breaklines=false,escapeinside={(*}{*)}]
i14 : loadPackage "Dmodules";
i15 : A = matrix {{1,1,1,1,1},{1,0,1,2,0},{0,1,1,0,2}};
i16 : kernel A -- calculate the kernel of A, i.e. a Gale dual (*$\Bb$*)
o16 = image | -1 2  |
            | 1  -2 |
            | 1  0  |
            | 0  -1 |
            | -1 1  |
                                5
o16 : ZZ-module, submodule of ZZ
i17 : I = gkz(A,{-2,-1,-1}) -- the GKZ ideal (*$H_\Aa(\nuu)$ with $\nuu =(2,1,1)$*)                                                                              
o17 = ideal (x D  + x D  + x D  + x D  + x D  + 2, x D  + x D  + 2x D  + 1, x D 
              1 1    2 2    3 3    4 4    5 5       1 1    3 3     4 4       2 2
     ---------------------------------------------------------------------------
                          2
     + x D  + 2x D  + 1, D  - D D , - D D  + D D , D D  - D D )
        3 3     5 5       3    4 5     2 3    1 5   1 3    2 4
o17 : Ideal of QQ[x ..x , D ..D ]
                   1   5   1   5

i18 : isHolonomic I  -- check holonomicity
o18 = true
i19 : holonomicRank I  -- calculate (*$\rank H_\Aa(\nuu)$*)
o19 = 3

i20 : loadPackage "Depth";  -- package to check Cohen-Macaulayness
i21 : R = QQ[d1,d2,d3,d4,d5]; -- commutative ring for (*$I_\Aa \subset \mathbb Q [\partial_1,\ldots,\partial_5]$*)
i22 : T = ideal(d1*d5-d2*d3, d1^2*d5 - d2^2*d4);  -- toric ideal (*$I_\Aa$*)
i23 : isCM(R/T)  -- ask for Cohen-Macaulayness of quotient ring
o23 = true

i24 : S = singLocus I  -- calculate (*$\Sing(H_\Aa(\nuu))$*)
             2 2 3          3 2 2      3   2   2     2 2   2 2       3 3 2     3   2 3
o24 = ideal(x x x x x  - x x x x x  - x x x x x  - 4x x x x x  + 4x x x x  + 4x x x x )
             1 2 3 4 5    1 2 3 4 5    1 2 3 4 5     1 2 3 4 5     1 2 4 5     1 2 4 5
o24 : Ideal of QQ[x ..x , D ..D ]
                   1   5   1   5
i25 : f = (first entries gens S)_0; v = gens ring f; factor sub(f,{v_0=>1,v_1=>1}) 
           -- compare with the previous result via (*$E_A(\Gg)$*) by specifying the coefficients
                              2
o27 = (x )(x )(x  - x  - x )(x  - 4x x )
        5   4   3    4    5   3     4 5
o27 : Expression of class Product
\end{lstlisting}

\subsection{\softwareName{Singular}}

The last computer algebra system we would like to present here is \softwareName{Singular} \cite{DeckerSingular421Computer2021, GreuelSINGULARComputerAlgebra2009}. Similar to \softwareName{Macaulay2}, \softwareName{Singular} provides also tools to determine the holonomic rank and the singular locus of the GKZ ideal $H_\Aa(\nuu)$, which we will demonstrate below.

\begin{lstlisting}[basewidth=0.5em,basicstyle=\footnotesize\ttfamily,breaklines=true,escapeinside={(*}{*)}]
> LIB "ncalg.lib";
> LIB "dmodloc.lib";
> intmat A[3][5]=
. 1,1,1,1,1,
. 1,0,1,2,0,
. 0,1,1,0,2;
> def D1 = GKZsystem(A,"lp","ect");
> setring D1;
> D1;
> print(GKZid);
x(1)*d(1)+x(2)*d(2)+x(3)*d(3)+x(4)*d(4)+x(5)*d(5)+(-b(1)),
x(1)*d(1)+x(3)*d(3)+2*x(4)*d(4)+(-b(2)),
x(2)*d(2)+x(3)*d(3)+2*x(5)*d(5)+(-b(3)),
d(3)^2-d(4)*d(5),
d(1)*d(5)-d(2)*d(3),
d(1)*d(3)-d(2)*d(4)
> holonomicRank(GKZid);
3
> DsingularLocus(GKZid);
_[1]=-x(1)^3*x(2)*x(3)^2*x(4)*x(5)^2+4*x(1)^3*x(2)*x(4)^2*x(5)^3+x(1)^2*x(2)^2*x(3)^3*x(4)*x(5)-4*x(1)^2*x(2)^2*x(3)*x(4)^2*x(5)^2-x(1)*x(2)^3*x(3)^2*x(4)^2*x(5)+4*x(1)*x(2)^3*x(4)^3*x(5)^2
\end{lstlisting}

Furthermore, we can also determine the secondary polytope $\Sigma(A)$ by \softwareName{Singular}. For this reason also \softwareName{Topcom} should be installed on the system as it is internally used for the determination of triangulations.
\begin{lstlisting}[basewidth=0.5em,basicstyle=\footnotesize\ttfamily,breaklines=true,escapeinside={(*}{*)}]
> LIB "polymake.lib";
> list A1 = intvec(1,0),intvec(0,1),intvec(1,1),intvec(2,0),intvec(0,2);
> list secpoly = secondaryPolytope(A1);
> print(secpoly[1]);
     2     2     3     1     1
     1     3     2     2     1
     3     1     2     1     2
     1     3     0     3     2
     3     1     0     2     3
> print(secpoly[2]);
[1]:
   [1]:
      1,2,3
   [2]:
      1,3,4
   [3]:
      2,3,5
[2]:
   [1]:
      2,3,5
   [2]:
      2,3,4
   [3]:
      1,2,4
[3]:
   [1]:
      1,3,4
   [2]:
      1,3,5
   [3]:
      1,2,5
[4]:
   [1]:
      1,2,4
   [2]:
      2,4,5
[5]:
   [1]:
      1,2,5
   [2]:
      1,4,5
\end{lstlisting}
These lines give the vertices of $\Sigma(A)$, i.e.\ the weights of triangulations $\varphi_\Tt(A)$ by \texttt{print(secpoly[1]);}. Thus, the secondary polytope in this example \cref{eq:SoftwareAppendixExampleBubble} is the $2$-dimensional polytope generated by
\begin{align}
	\Sigma(A) = \Conv \left\{ %
	   \begin{pmatrix}  2 \\ 2 \\ 3 \\ 1 \\ 1 \end{pmatrix}, %
	   \begin{pmatrix}  1 \\ 3 \\ 2 \\ 2 \\ 1 \end{pmatrix}, %
	   \begin{pmatrix}  3 \\ 1 \\ 2 \\ 1 \\ 2 \end{pmatrix}, %
	   \begin{pmatrix}  1 \\ 3 \\ 0 \\ 3 \\ 2 \end{pmatrix}, %
	   \begin{pmatrix}  3 \\ 1 \\ 0 \\ 2 \\ 3 \end{pmatrix}\right\} \subset\mathbb R^5 \point
\end{align}
The particular triangulations can be displayed by \texttt{print(secpoly[2]);}.


\section{Characteristics of specific Feynman graphs} \label{sec:AppendixCharacteristics}

In order to classify different Feynman graphs with regard to their complexity from a hypergeometric perspective, we collate certain characterizing numbers for several standard Feynman graphs in \cref{tab:characteristics}. According to \cref{thm:FeynSeries} a series representation of a Feynman integral consists in a linear combination of $\vol(\Newt(\Gg))$ multivariate series in $r:=N-n-1$ variables, each one having depth $r$. Every summand of those series contains a product of $(n+1)$ $\Gamma$-functions. Furthermore, every regular triangulation $\Tt$ provides a way to write the Feynman integral in terms of Horn hypergeometric functions. Hence, we will have $\mathfrak T_\Gg$ ways to write series representations for a given generalized Feynman integral, where $\mathfrak T_\Gg$ denotes the number of regular triangulations of $\Newt(\Gg)$. As aforementioned, unimodular regular triangulations behave slightly simpler. Thus, we listed also the number of unimodular regular triangulations $\mathfrak T_\Gg^{\text{unimod}}$. Non-regular triangulations $\mathfrak T_\Gg^{\text{non-reg}}$ are a relatively rare phenomenon. \bigskip

The geometry of the Newton polytopes of Symanzik polynomials will also have an influence to the kinematic singularities. In particular, the Landau variety $\mathcal L_1 (\mathcal I_\Gamma)$ consists in $|\Newt(\Ff)|$ irreducible components, where $|\Newt(\Ff)|$ stands for the number of faces of $\Newt(\Ff)$. The full singular locus (including second-type singularities and singularities of proper mixed faces) $\Sing(H_\Aa(\nuu)) = \Var (E_A(\Gg))$ decomposes into $|\Newt(\Gg)|$ components. Moreover, $\mathfrak T_\Ff$ will give a lower bound of the number of monomials in the defining polynomial of the Landau variety $\mathcal L_1 (\mathcal I_\Gamma)$ and $\mathfrak T_\Gg$ is a lower bound of the number of monomials in the defining polynomial of the full singular locus. As one can observe, this lower bound of the number of monomials grows very fast. This shows that the determination of the Landau variety by means of a defining polynomial is an almost hopeless endeavour for more complex graphs. \bigskip

However, we can also notice, that certain Feynman graphs results in the same Newton polytopes. Hence, also their Feynman integrals are equal when replacing their variables in a specific way. For example, the Symanzik polynomials of the ``dunce's cap'' graph with massive edges $e_1$, $e_3$ and of the ``flying saucer'' graph with massive edges $e_1$, $e_2$ have the same support. Also, relations between different polynomials exist, e.g.\ the Newton polytope of the second Symanzik polynomial $\Ff$ for the ``kite'' graph with massive edge $e_1$ coincides with the Newton polytope of the Lee-Pomeransky polynomial $\Gg$ of the ``dunce's cap'' graph with massive edge $e_3$ as well as with the Newton polytope of the Lee-Pomeransky polynomial $\Gg$ of the ``flying saucer'' graph with massive edge $e_2$.\bigskip

Therefore, by the following numbers one can estimate the complexity for calculations from the hypergeometric perspective.

\clearpage
\newgeometry{inner=17.4mm, top=27mm, outer=35.0mm, bottom=50mm, marginparsep=0mm, bindingoffset=10mm, heightrounded} 
\begin{landscape}
    {\small
    \begin{longtable}[l]{c|>{\centering\arraybackslash}p{2cm}|cccc| >{$}c<{$} >{$}c<{$} >{\centering\arraybackslash $}p{3cm}<{$} >{$}c<{$} >{$}c<{$}}
        \topologyDescr{2}{topology}{L}{n}{m} & \tmultirow{masses} & %
        \SymanzikSpec{N_\Ff}{N}{N^{ph}}{r} 
        \vol (\Newt(\Ff)) & |\Newt(\Ff)| & \mathfrak T_\Ff & \mathfrak T_\Ff^{\text{unimod}}  & \mathfrak T_\Ff^{\text{non-reg}} \\
        & & & & & & \vol(\Newt(\Gg)) & |\Newt(\Gg)| &  \mathfrak T_\Gg & \mathfrak T_\Gg^{\text{unimod}}  & \mathfrak T_\Gg^{\text{non-reg}} \\
        \hhline{=|=|====|=====}
        \endhead
        \topologyDescrTikz{6}{bubble}{1}{2}{2}{
                \coordinate[dot] (A) at (0,0);
                \coordinate[dot] (B) at (2,0);
                \draw (1,0) circle (1);
                \draw (A) -- ++(-0.7,0);
                \draw (B) -- ++(0.7,0);
                \node at (1,1.3) {$1$};
                \node at (1,-1.3) {$2$};}{0.5} & %
        \tmultirow{--} & %
            \SymanzikSpec{1}{3}{1}{0}
            \triangsF{1}{1}{1}{1}{0}{}
            \triangsG{1}{7}{1}{1}{0}
        \hhline{~|-|----|-----}
        & \mmultirow{1} & %
            \SymanzikSpec{2}{4}{2}{1}
            \triangsF{1}{3}{1}{1}{0}{}
            \triangsG{2}{9}{2}{2}{0}
        \hhline{~|-|----|-----}
        & \mmultirow{1,2} & %
            \SymanzikSpec{3}{5}{3}{2}
            \triangsF{2}{3}{2}{1}{0}{}
            \triangsG{3}{9}{5}{3}{0}
        \hhline{-|-|----|-----}
        \topologyDescrTikz{8}{sunset}{2}{3}{2}{
                \coordinate[dot] (A) at (0,0);
                \coordinate[dot] (B) at (2,0);
                \draw (A) -- node[above] {$2$} (B);
                \draw (1,0) circle (1);
                \draw (A) -- ++(-0.7,0);
                \draw (B) -- ++(0.7,0); 
                \node at (1,1.3) {$1$};
                \node at (1,-1.3) {$3$};}{0.7} & %
        \tmultirow{--} & %
            \SymanzikSpec{1}{4}{1}{0}
            \triangsF{1}{1}{1}{1}{0}{}
            \triangsG{1}{15}{1}{1}{0}
        \hhline{~|-|----|-----}
        & \mmultirow{1} & %
            \SymanzikSpec{3}{6}{2}{2}
            \triangsF{1}{7}{1}{1}{0}{}
            \triangsG{3}{21}{6}{6}{0}
        \hhline{~|-|----|-----}
        & \mmultirow{1,2} & %
            \SymanzikSpec{5}{8}{3}{4}
            \triangsF{3}{9}{5}{3}{0}{}
            \triangsG{6}{25}{68}{44}{0}
        \hhline{~|-|----|-----}
        & \mmultirow{1,2,3} & %
            \SymanzikSpec{7}{10}{4}{6}
            \triangsF{6}{13}{32}{18}{0}{}
            \triangsG{10}{33}{826}{466}{0}
        \hhline{-|-|----|-----}
        \topologyDescrTikz{10}{banana}{3}{4}{2}{
                 \coordinate[dot] (A) at (0,0);
                \coordinate[dot] (B) at (2,0);
                \draw (1,0) circle (1);
                \draw (B) arc[start angle=35, end angle=145, radius=1.220775];
                \draw (A) arc[start angle=215, end angle=325, radius=1.220775];
                \draw (A) -- ++(-0.7,0);
                \draw (B) -- ++(0.7,0); 
                \node at (1,1.3) {$1$};
                \node at (1,.3) {$2$};
                \node at (1,-.3) {$3$};
                \node at (1,-1.3) {$4$};}{0.7} & %
        \tmultirow{--} & %
            \SymanzikSpec{1}{5}{1}{0}
            \triangsF{1}{1}{1}{1}{0}{}
            \triangsG{1}{31}{1}{1}{0}
        \hhline{~|-|----|-----}
        & \mmultirow{1} & %
            \SymanzikSpec{4}{8}{2}{3}
            \triangsF{1}{15}{1}{1}{0}{}
            \triangsG{4}{45}{24}{24}{0}
        \hhline{~|-|----|-----}
        & \mmultirow{1,2} & %
            \SymanzikSpec{7}{11}{3}{6}
            \triangsF{4}{21}{18}{12}{0}{}
            \triangsG{10}{57}{2\,486}{1\,618}{0}
        \hhline{~|-|----|-----}
        & \mmultirow{1,2,3} & %
            \SymanzikSpec{10}{14}{4}{9}
            \triangsF{10}{33}{826}{466}{0}{}
            \triangsG{20}{81}{522\,206}{248\,420}{3\,952\mNA}
        \hhline{~|-|----|-----}
        & \mmultirow{1,2,3,4} & %
            \SymanzikSpec{13}{17}{5}{12}
            \triangsF{20}{51}{78\,764}{34\,184}{1\,120}{}
            \triangsG{35}{117}{> 24\,729\,630\mNR}{}{} 
        \hhline{-|-|----|-----}
        \topologyDescrTikz{8}{vertex}{1}{3}{3}{
                \coordinate[dot] (A) at (0,0);
                \coordinate[dot] (B) at (2,0);
                \coordinate[dot] (C) at (1,1.73205);
                \draw (A) -- node[below] {$1$} (B);
                \draw (B) -- node[above right] {$2$} (C);
                \draw (C) -- node[above left] {$3$} (A);
                \draw (A) -- ++(-0.7,-0.2);
                \draw (B) -- ++(0.7,-0.2); 
                \draw (C) -- ++(0,0.7);}{0.7} & %
        \tmultirow{--} & %
            \SymanzikSpec{3}{6}{3}{2}
            \triangsF{1}{7}{1}{1}{0}{}
            \triangsG{4}{27}{3}{3}{0}
        \hhline{~|-|----|-----}
        & \mmultirow{1} & %
            \SymanzikSpec{4}{7}{4}{3}
            \triangsF{2}{9}{2}{2}{0}{}
            \triangsG{5}{27}{16}{15}{0}
        \hhline{~|-|----|-----}
        & \mmultirow{1,2} & %
            \SymanzikSpec{5}{8}{5}{4}
            \triangsF{3}{9}{5}{3}{0}{}
            \triangsG{6}{25}{68}{44}{0}
        \hhline{~|-|----|-----}
        & \mmultirow{1,2,3} & %
            \SymanzikSpec{6}{9}{6}{5}
            \triangsF{4}{7}{14}{4}{0}{}
            \triangsG{7}{21}{261}{99}{0}
        \hhline{-|-|----|-----}
        \topologyDescrTikz{10}{box}{1}{4}{4}{
                \coordinate[dot] (A) at (0,0);
                \coordinate[dot] (B) at (2,0);
                \coordinate[dot] (C) at (2,2);
                \coordinate[dot] (D) at (0,2);
                \draw (A) -- node[below] {$1$} (B);
                \draw (B) -- node[right] {$2$} (C);
                \draw (C) -- node[above] {$3$} (D);
                \draw (D) -- node[left] {$4$} (A);
                \draw (A) -- ++(-0.4,-0.4);
                \draw (B) -- ++(0.4,-0.4); 
                \draw (C) -- ++(0.4,0.4);
                \draw (D) -- ++(-0.4,0.4);}{0.7} & %
        \tmultirow{--} & %
            \SymanzikSpec{6}{10}{4}{5}
            \triangsF{4}{27}{3}{3}{0}{}
            \triangsG{11}{81}{102}{102}{0}
        \hhline{~|-|----|-----}
        & \mmultirow{1} & %
            \SymanzikSpec{7}{11}{5}{6}
            \triangsF{5}{27}{16}{15}{0}{}
            \triangsG{12}{75}{1\,689}{1\,260}{0}
        \hhline{~|-|----|-----}
        & \mmultirow{1,2} & %
            \SymanzikSpec{8}{12}{6}{7}
            \triangsF{6}{25}{68}{44}{0}{}
            \triangsG{13}{67}{14\,003}{8\,004}{0}
        \hhline{~|-|----|-----}
        & \mmultirow{1,2,3} & %
            \SymanzikSpec{9}{13}{7}{8}
            \triangsF{7}{21}{261}{99}{0}{}
            \triangsG{14}{57}{87\,657}{34\,143}{0}
        \hhline{~|-|----|-----}
        & \mmultirow{1,2,3,4} & %
            \SymanzikSpec{10}{14}{8}{9}
            \triangsF{8}{15}{948}{192}{0}{}
            \triangsG{15}{45}{469\,722}{114\,276}{192\mNA}
        \hhline{-|-|----|-----}
        \topologyDescrTikz{22}{flying saucer}{2}{4}{2}{
                \coordinate[dot] (A) at (0,0);
                \coordinate[dot] (B) at (2,0);
                \coordinate[dot] (C) at (1,-1);
                \draw (1,0) circle (1);
                \draw (C) arc[start angle=0, end angle=90, radius=1];
                \draw (A) -- ++(-0.7,0);
                \draw (B) -- ++(0.7,0); 
                \node at (1,1.3) {$1$};
                \node at (.85,-.15) {$2$};
                \node at (.15,-.95) {$3$};
                \node at (1.85,-.85) {$4$};}{0.7} & %
        \tmultirow{--} & %
            \SymanzikSpec{3}{8}{1}{3}
            \triangsF{3}{7}{1}{1}{0}{}
            \triangsG{5}{51}{20}{20}{0}
        \hhline{~|-|----|-----}
        & \mmultirow{1} & %
            \SymanzikSpec{5}{10}{2}{5}
            \triangsF{2}{19}{2}{2}{0}{}
            \triangsG{8}{57}{448}{432}{0}
        \hhline{~|-|----|-----}
        & \mmultirow{2} & %
            \SymanzikSpec{7}{12}{2}{7}
            \triangsF{5}{27}{16}{15}{0}{}
            \triangsG{16}{79}{15\,040}{11\,122}{106}
        \hhline{~|-|----|-----}
        & \mmultirow{4} & %
            \SymanzikSpec{6}{11}{2}{6}
            \triangsF{3}{21}{6}{6}{0}{}
            \triangsG{11}{67}{2\,388}{1\,968}{0}
        \hhline{~|-|----|-----}
        & \mmultirow{1,2} & %
            \SymanzikSpec{9}{14}{3}{9}
            \triangsF{8}{29}{274}{169}{0}{}
            \triangsG{20}{77}{992\,603}{474\,855}{17\,978\mNA}
        \hhline{~|-|----|-----}
        & \mmultirow{1,4} & %
            \SymanzikSpec{8}{13}{3}{8}
            \triangsF{5}{21}{60}{38}{0}{}
            \triangsG{14}{61}{91\,052}{43\,864}{0}
        \hhline{~|-|----|-----}
        & \mmultirow{2,3} & %
            \SymanzikSpec{10}{15}{3}{10}
            \triangsF{10}{31}{963}{425}{0}{}
            \triangsG{25}{87}{6\,297\,182}{1\,929\,202}{279\,796\mNA}
        \hhline{~|-|----|-----}
        & \mmultirow{2,4} & %
            \SymanzikSpec{9}{14}{3}{9}
            \triangsF{8}{29}{274}{169}{0}{}
            \triangsG{20}{77}{992\,603}{474\,885}{17\,978\mNA}
        \hhline{~|-|----|-----}
        & \mmultirow{1,2,3} & %
            \SymanzikSpec{12}{17}{4}{12}
            \triangsF{14}{35}{27\,286}{10\,580}{0\mNA}{}
            \triangsG{30}{89}{> 22\,768\,460}{}{}
        \hhline{~|-|----|-----}
        & \mmultirow{1,2,4} & %
            \SymanzikSpec{11}{16}{4}{11}
            \triangsF{11}{25}{5\,718}{1\,664}{0}{}
            \triangsG{24}{69}{53\,236\,122}{> 6\,184\,000}{1\,285\,610\mNA}
        \hhline{~|-|----|-----}
        & \mmultirow{2,3,4} & %
            \SymanzikSpec{12}{17}{4}{12}
            \triangsF{14}{35}{27\,286}{10\,580}{0\mNA}{}
            \triangsG{30}{89}{> 23\,973\,000\mNR}{}{}
        \hhline{~|-|----|-----}
        & \mmultirow{1,2,3,4} & %
            \SymanzikSpec{14}{19}{5}{14}
            \triangsF{18}{33}{885\,524}{196\,214}{3\,520\mNA}{}
            \triangsG{35}{85}{> 15\,067\,000\mNR}{}{}
        \hhline{-|-|----|-----}
        \topologyDescrTikz{24}{double bubble}{2}{5}{2}{
                \coordinate[dot] (A) at (0,0);
                \coordinate[dot] (B) at (2,0);
                \coordinate[dot] (C) at ($(1,0) + (135:1)$);
                \coordinate[dot] (D) at ($(1,0) + (45:1)$);
                \draw (1,0) circle (1);
                \draw (C) arc (225:315:1);
                \draw (A) -- ++(-0.7,0);
                \draw (B) -- ++(0.7,0); 
                \node at (1,-.8) {$1$};
                \node at (-.2,0.4) {$2$};
                \node at (1,1.25) {$3$};
                \node at (1,.15) {$4$};
                \node at (2.2,0.4) {$5$};
                }{0.7} & %
        \tmultirow{--} & %
            \SymanzikSpec{5}{12}{1}{6}
            \triangsF{8}{19}{1}{1}{0}{}
            \triangsG{11}{123}{3\,164}{3\,004}{0}
        \hhline{~|-|----|-----}
        & \mmultirow{1} & %
            \SymanzikSpec{10}{17}{2}{11}
            \triangsF{8}{57}{448}{432}{0}{}
            \triangsG{31}{175}{> 25\,468\,000\mNR}{}{}
        \hhline{~|-|----|-----}
        & \mmultirow{2} & %
            \SymanzikSpec{7}{14}{2}{8}
            \triangsF{3}{43}{6}{6}{0}{}
            \triangsG{15}{129}{190\,680}{164\,892}{0\mNA}
        \hhline{~|-|----|-----}
        & \mmultirow{3} & %
            \SymanzikSpec{11}{18}{2}{12}
            \triangsF{11}{67}{2\,388}{1\,968}{0}{}
            \triangsG{40}{195}{> 18\,217\,000\mNR}{}{}
        \hhline{~|-|----|-----}
        & \mmultirow{1,2} & %
            \SymanzikSpec{12}{19}{3}{13}
            \triangsF{11}{57}{10\,372}{5\,508}{0\mNA}{}
            \triangsG{35}{157}{> 16\,960\,000\mNR}{}{}
        \hhline{~|-|----|-----}
        & \mmultirow{1,3} & %
            \SymanzikSpec{15}{22}{3}{16}
            \triangsF{23}{83}{8\,345\,888}{3\,297\,834}{259\,286\mNA}{}
            \triangsG{60}{215}{> 12\,008\,000\mNR}{}{}
        \hhline{~|-|----|-----}
        & \mmultirow{2,3} & %
            \SymanzikSpec{13}{20}{3}{14}
            \triangsF{15}{69}{101\,540}{55\,028}{0\mNA}{}
            \triangsG{45}{181}{> 15\,441\,000\mNR}{}{}
        \hhline{~|-|----|-----}
        & \mmultirow{2,5} & %
            \SymanzikSpec{12}{19}{3}{13}
            \triangsF{11}{57}{10\,372}{5\,508}{0\mNA}{}
            \triangsG{35}{157}{> 16\,960\,000\mNR}{}{}
        \hhline{~|-|----|-----}        
        & \mmultirow{3,4} & %
            \SymanzikSpec{15}{22}{3}{16}
            \triangsF{20}{75}{3\,507\,528}{1\,072\,166}{1\,248\mNA}{}
            \triangsG{59}{211}{> 12\,323\,000\mNR}{}{}
        \hhline{~|-|----|-----}
        & \mmultirow{1,2,3} & %
            \SymanzikSpec{17}{24}{4}{18}
            \triangsF{27}{73}{> 20\,208\,000\mNR}{}{}{}
            \triangsG{65}{189}{> 10\,753\,000\mNR}{}{}
        \hhline{~|-|----|-----}
        & \mmultirow{1,2,3,4} & %
            \SymanzikSpec{21}{28}{5}{22}
            \triangsF{41}{87}{> 13\,648\,000\mNR}{}{}{}
            \triangsG{90}{127}{> 8\,939\,000\mNR}{}{}
        \hhline{~|-|----|-----}
        & \mmultirow{1,2,3,4,5} & %
            \SymanzikSpec{23}{30}{6}{24}
            \triangsF{46}{73}{> 11\,750\,000\mNR}{}{}{}
            \triangsG{96}{189}{> 8\,698\,000\mNR}{}{}            
        \hhline{-|-|----|-----}
        \topologyDescrTikz{6}{kite}{2}{5}{2}{
                \coordinate[dot] (A) at (0,0);
                \coordinate[dot] (B) at (2,0);
                \coordinate[dot] (C) at (1,1);
                \coordinate[dot] (D) at (1,-1);
                \draw (1,0) circle (1);
                \draw (C) -- node[left] {$5$} (D);
                \draw (A) -- ++(-0.7,0);
                \draw (B) -- ++(0.7,0); 
                \node at (.15,.85) {$1$};
                \node at (1.85,.85) {$2$};
                \node at (1.85,-.85) {$3$};
                \node at (.15,-.85) {$4$};}{0.7} & %
        \tmultirow{--} & %
            \SymanzikSpec{8}{16}{1}{10}
            \triangsF{5}{51}{20}{20}{0}{}
            \triangsG{42}{219}{> 37\,916\,000\mNR}{}{}
        \hhline{~|-|----|-----}
        & \mmultirow{1} & %
            \SymanzikSpec{12}{20}{2}{14}
            \triangsF{16}{79}{15\,040}{11\,122}{106}{}
            \triangsG{61}{235}{> 12\,816\,000\mNR}{}{}
        \hhline{~|-|----|-----}
        & \mmultirow{5} & %
            \SymanzikSpec{14}{22}{2}{16}
            \triangsF{22}{93}{901\,622}{418\,170}{59\,302\mNA}{}
            \triangsG{76}{265}{> 11\,558\,000\mNA}{}{}
        \hhline{~|-|----|-----}
        & \mmultirow{1,2} & %
            \SymanzikSpec{15}{23}{3}{17}
            \triangsF{25}{89}{> 29\,254\,000\mNR}{}{}{}
            \triangsG{75}{229}{> 10\,339\,000\mNR}{}{}
        \hhline{~|-|----|-----}
        & \mmultirow{1,2,3} & %
            \SymanzikSpec{19}{27}{4}{21}
            \triangsF{37}{81}{> 13\,541\,000\mNR}{}{}{}
            \triangsG{95}{213}{> 8\,728\,000\mNR}{}{}
        \hhline{~|-|----|-----}
        & \mmultirow{1,2,3,4} & %
            \SymanzikSpec{22}{30}{5}{24}
            \triangsF{47}{69}{> 12\,047\,000\mNA}{}{}{}
            \triangsG{110}{189}{> 8\,534\,000\mNA}{}{}
        \hhline{~|-|----|-----}
        & \mmultirow{1,2,3,4,5} & %
            \SymanzikSpec{26}{34}{6}{28}
            \triangsF{62}{81}{> 10\,143\,000\mNA}{}{}{}
            \triangsG{136}{213}{> 7\,357\,000\mNA}{}{}
        \hhline{-|-|----|-----}
        \topologyDescrTikz{18}{dunce's cap}{2}{4}{3}{
                \coordinate[dot] (A) at (0,0);
                \coordinate[dot] (B) at (2,-1);
                \coordinate[dot] (C) at (2,1);
                \draw (A) -- node[above] {$1$} (C);  
                \draw (A) -- node[below] {$2$} (B);  
                \draw (C) arc[start angle=135, end angle=225, radius=1.4142];
                \draw (B) arc[start angle=-45, end angle=45, radius=1.4142];
                \draw (A) -- ++(-1,0);
                \draw (B) -- ++(0.2,-.7); 
                \draw (C) -- ++(0.2,.7); 
                \node at (1.3,0) {$3$};
                \node at (2.7,0) {$4$};}{0.7} & %
        \tmultirow{--} & %
            \SymanzikSpec{4}{9}{3}{4}
            \triangsF{1}{15}{1}{1}{0}{}
            \triangsG{8}{67}{42}{42}{0}
        \hhline{~|-|----|-----}
        & \mmultirow{1} & %
            \SymanzikSpec{6}{11}{4}{6}
            \triangsF{3}{21}{6}{6}{0}{}
            \triangsG{11}{67}{2\,388}{1\,968}{0}
        \hhline{~|-|----|-----}
        & \mmultirow{3} & %
            \SymanzikSpec{7}{12}{4}{7}
            \triangsF{5}{27}{16}{15}{0}{}
            \triangsG{16}{79}{15\,040}{11\,122}{106}
        \hhline{~|-|----|-----}
        & \mmultirow{1,2} & %
            \SymanzikSpec{8}{13}{5}{8}
            \triangsF{5}{21}{60}{30}{0}{}
            \triangsG{14}{61}{91\,052}{43\,864}{0\mNA}
        \hhline{~|-|----|-----}
        & \mmultirow{1,3} & %
            \SymanzikSpec{9}{14}{5}{9}
            \triangsF{8}{29}{274}{169}{0}{}
            \triangsG{20}{77}{992\,603}{474\,855}{17\,978\mNA}
        \hhline{~|-|----|-----}
        & \mmultirow{3,4} & %
            \SymanzikSpec{10}{15}{5}{10}
            \triangsF{10}{31}{963}{425}{0}{}
            \triangsG{25}{87}{6\,297\,182}{1\,929\,202}{279\,796\mNA}
        \hhline{~|-|----|-----}
        & \mmultirow{1,2,3} & %
            \SymanzikSpec{11}{16}{6}{11}
            \triangsF{11}{25}{5\,718}{1\,664}{0}{}
            \triangsG{24}{69}{53\,236\,122}{> 6\,184\,000}{1\,285\,610\mNA}
        \hhline{~|-|----|-----}
        & \mmultirow{2,3,4} & %
            \SymanzikSpec{12}{17}{6}{12}
            \triangsF{14}{35}{27\,286}{10\,580}{0\mNA}{}
            \triangsG{30}{89}{> 23\,973\,000\mNR}{}{}
        \hhline{~|-|----|-----}
        & \mmultirow{1,2,3,4} & %
            \SymanzikSpec{14}{19}{7}{14}
            \triangsF{18}{33}{885\,524}{196\,214}{3\,520\mNA}{}
            \triangsG{35}{85}{>15\,067\,000\mNR}{}{}
        \hhline{-|-|----|-----}
        \caption{\normalsize Classification of certain basic Feynman graphs with respect to their behaviour from the perspective of $\Aa$-hypergeometric theory. We list the loop number $L$, the number of edges $n$, the number of external edges $m$ (thus, there are $m-1$ independent external momenta) and the set of massive edges. Thereby, masses assumed to be different, and we will not always list all possible mass configurations. $N_\Ff$ denotes the number of monomials in $\Ff$, $N$ is the number of monomials in $\Gg=\Uu+\Ff$, $N^{ph}$ is the number of physically relevant variables and $r=N-n-1$ is the corank of $\Aa$. Furthermore, we will give the volume of the Newton polytopes $\vol(\Newt(\Ff))$ and $\vol(\Newt(\Gg))$, the number of faces of those Newton polytopes $|\Newt(\Ff)|$ and $|\Newt(\Gg)|$, the number of regular triangulations $\mathfrak T_\Ff$ and $\mathfrak T_\Gg$, the number of unimodular, regular triangulations $\mathfrak T_\Ff^{\text{unimod}}$ and $\mathfrak T_\Gg^{\text{unimod}}$ as well as the number of non-regular triangulations $\mathfrak T_\Ff^{\text{non-reg}}$ and $\mathfrak T_\Gg^{\text{non-reg}}$ for the second Symanzik polynomial $\Ff$ and for $\Gg=\Uu+\Ff$, respectively. Those numbers were calculated by \softwareName{Macaulay2} and \softwareName{Topcom}, according to \cref{ssec:Macaulay2} and \cref{ssec:topcom}. For more complex graphs it is not possible to count all of the triangulations, because the memory is quickly exhausted on a usual computer. In these cases, we have given lower limits for the number of all triangulations. For polytopes with larger volume and more vertices, the counting of triangulations stops earlier because the memory consumption is higher. However, one should expect that a polytope with a larger volume admits usually more triangulations. The large number of triangulations shows also how difficult it is to determine Landau varieties by means of a defining polynomial. We recall that the number of triangulations is a lower bound for the number of monomials in the defining polynomial of the Landau variety.} \label{tab:characteristics}
    \end{longtable}
    \vfill 
    \hrule 
    \noindent$\dagger$: may not contain all non-regular triangulations (the flip graph of non-regular triangulations is not necessarily connected) \\
    $\ddagger$: may include also non-regular triangulations
    }

\end{landscape}

\clearpage

\restoregeometry



\chapter{Bibliography}

\printbibliography[heading=none]

@article{AbreuGeneralizedHypergeometricFunctions2019,
  title = {Generalized Hypergeometric Functions and Intersection Theory for {{Feynman}} Integrals},
  author = {Abreu, Samuel and Britto, Ruth and Duhr, Claude and Gardi, Einan and Matthew, James},
  date = {2019-12-10},
  journaltitle = {PoS RACOR2019},
  volume = {067},
  eprint = {1912.03205},
  eprinttype = {arxiv},
  url = {http://arxiv.org/abs/1912.03205},
  archiveprefix = {arXiv},
  keywords = {High Energy Physics - Theory}
}

@article{AdamchikStirlingNumbersEuler1997,
  title = {On {{Stirling}} Numbers and {{Euler}} Sums},
  author = {Adamchik, Victor},
  date = {1997-03-03},
  journaltitle = {Journal of Computational and Applied Mathematics},
  shortjournal = {Journal of Computational and Applied Mathematics},
  volume = {79},
  number = {1},
  pages = {119--130},
  issn = {0377-0427},
  doi = {10.1016/S0377-0427(96)00167-7},
  url = {https://www.sciencedirect.com/science/article/pii/S0377042796001677},
  langid = {english}
}

@article{AdolphsonHypergeometricFunctionsRings1994,
  title = {Hypergeometric Functions and Rings Generated by Monomials},
  author = {Adolphson, Alan},
  date = {1994-02},
  journaltitle = {Duke Mathematical Journal},
  shortjournal = {Duke Math. J.},
  volume = {73},
  number = {2},
  pages = {269--290},
  issn = {0012-7094},
  doi = {10.1215/S0012-7094-94-07313-4},
  url = {http://projecteuclid.org/Dienst/getRecord?id=euclid.dmj/1077288812/},
  langid = {english}
}

@article{AnastasiouEvaluatingMultiloopFeynman2007,
  title = {Evaluating Multi-Loop {{Feynman}} Diagrams with Infrared and Threshold Singularities Numerically},
  author = {Anastasiou, Charalampos and Beerli, Stefan and Daleo, Alejandro},
  date = {2007-05-21},
  journaltitle = {Journal of High Energy Physics},
  shortjournal = {J. High Energy Phys.},
  volume = {2007},
  number = {05},
  eprint = {hep-ph/0703282},
  eprinttype = {arxiv},
  issn = {1029-8479},
  doi = {10.1088/1126-6708/2007/05/071},
  url = {http://arxiv.org/abs/hep-ph/0703282},
  archiveprefix = {arXiv},
  keywords = {High Energy Physics - Phenomenology}
}

@article{AnastasiouTwoloopAmplitudesMaster2007,
  title = {Two-Loop Amplitudes and Master Integrals for the Production of a {{Higgs}} Boson via a Massive Quark and a Scalar-Quark Loop},
  author = {Anastasiou, Charalampos and Beerli, Stefan and Bucherer, Stefan and Daleo, Alejandro and Kunszt, Zoltan},
  date = {2007-01-23},
  journaltitle = {Journal of High Energy Physics},
  shortjournal = {J. High Energy Phys.},
  volume = {2007},
  number = {01},
  eprint = {hep-ph/0611236},
  eprinttype = {arxiv},
  issn = {1029-8479},
  doi = {10.1088/1126-6708/2007/01/082},
  url = {http://arxiv.org/abs/hep-ph/0611236},
  archiveprefix = {arXiv},
  keywords = {High Energy Physics - Phenomenology}
}

@article{AntipovaInversionMultidimensionalMellin2007,
  title = {Inversion of Multidimensional {{Mellin}} Transforms},
  author = {Antipova, I A},
  date = {2007-10-31},
  journaltitle = {Russian Mathematical Surveys},
  shortjournal = {Russ. Math. Surv.},
  volume = {62},
  number = {5},
  pages = {977--979},
  issn = {0036-0279, 1468-4829},
  doi = {10.1070/RM2007v062n05ABEH004459},
  url = {http://stacks.iop.org/0036-0279/62/i=5/a=L05?key=crossref.0594fc273dc0d6893fcc264a99ae6474},
  langid = {english}
}

@article{AomotoStructureIntegralsPower1977,
  title = {On the Structure of {{Integrals}} of {{Power Product}} of {{Linear Functions}}},
  author = {Aomoto, Kazuhiko},
  date = {1977},
  journaltitle = {Sci. Papers College Gen. Ed. Univ. Tokyo},
  shortjournal = {Sci. Papers College Gen. Ed. Univ. Tokyo},
  volume = {27},
  number = {2},
  pages = {49--61}
}

@book{AomotoTheoryHypergeometricFunctions2011,
  title = {Theory of {{Hypergeometric Functions}}},
  author = {Aomoto, Kazuhiko and Kita, Michitake},
  date = {2011},
  series = {Springer {{Monographs}} in {{Mathematics}}},
  publisher = {{Springer Japan}},
  location = {{Tokyo}},
  doi = {10.1007/978-4-431-53938-4},
  url = {http://link.springer.com/10.1007/978-4-431-53938-4},
  isbn = {978-4-431-53912-4 978-4-431-53938-4},
  langid = {english}
}

@article{Arkani-HamedStringyCanonicalForms2021,
  title = {Stringy {{Canonical Forms}}},
  author = {Arkani-Hamed, Nima and He, Song and Lam, Thomas},
  date = {2021-03-03},
  journaltitle = {Journal of High Energy Physics},
  shortjournal = {J. High Energy Phys.},
  volume = {02},
  number = {69},
  eprint = {1912.08707},
  eprinttype = {arxiv},
  url = {http://arxiv.org/abs/1912.08707},
  archiveprefix = {arXiv},
  keywords = {High Energy Physics - Theory,Mathematics - Combinatorics}
}

@article{AssarfComputingConvexHulls2015,
  title = {Computing Convex Hulls and Counting Integer Points with Polymake},
  author = {Assarf, Benjamin and Gawrilow, Ewgenij and Herr, Katrin and Joswig, Michael and Lorenz, Benjamin and Paffenholz, Andreas and Rehn, Thomas},
  date = {2017},
  journaltitle = {Math. Program. Comput.},
  volume = {9},
  number = {1},
  eprint = {1408.4653},
  eprinttype = {arxiv},
  primaryclass = {math},
  pages = {1--38},
  doi = {10.48550/arXiv.1408.4653},
  url = {http://arxiv.org/abs/1408.4653},
  archiveprefix = {arXiv},
  keywords = {90-08; 52-04,Mathematics - Combinatorics,Mathematics - Optimization and Control}
}

@software{AvisLrslib,
  title = {Lrslib},
  author = {Avis, David},
  url = {http://cgm.cs.mcgill.ca/~avis/C/lrs.html},
  version = {7.1}
}

@book{BachemLinearProgrammingDuality1992,
  title = {Linear Programming Duality: An Introduction to Oriented Matroids},
  shorttitle = {Linear Programming Duality},
  author = {Bachem, A. and Kern, Walter},
  date = {1992},
  series = {Universitext},
  publisher = {{Springer-Verlag}},
  location = {{Berlin ; New York}},
  isbn = {978-3-540-55417-2 978-0-387-55417-4},
  langid = {english},
  pagetotal = {216},
  keywords = {Linear programming,Oriented matroids}
}

@article{BagnaraParmaPolyhedraLibrary2006,
  title = {The {{Parma Polyhedra Library}}: {{Toward}} a {{Complete Set}} of {{Numerical Abstractions}} for the {{Analysis}} and {{Verification}} of {{Hardware}} and {{Software Systems}}},
  shorttitle = {The {{Parma Polyhedra Library}}},
  author = {Bagnara, Roberto and Hill, Patricia M. and Zaffanella, Enea},
  date = {2008},
  journaltitle = {Sci. Comput. Programming},
  shortjournal = {Sci Comput Program.},
  volume = {72},
  number = {1-2},
  eprint = {cs/0612085},
  eprinttype = {arxiv},
  pages = {3--21},
  doi = {10.48550/arXiv.cs/0612085},
  url = {http://arxiv.org/abs/cs/0612085},
  archiveprefix = {arXiv},
  keywords = {Computer Science - Mathematical Software,Computer Science - Programming Languages,D.2.4,G.4}
}

@book{BapatGraphsMatrices2014,
  title = {Graphs and {{Matrices}}},
  author = {Bapat, Ravindra B.},
  date = {2014},
  series = {Universitext},
  publisher = {{Springer London}},
  location = {{London}},
  doi = {10.1007/978-1-4471-6569-9},
  url = {http://link.springer.com/10.1007/978-1-4471-6569-9},
  isbn = {978-1-4471-6568-2 978-1-4471-6569-9},
  langid = {english}
}

@article{BerendsClosedExpressionsSpecific1994,
  title = {Closed Expressions for Specific Massive Multiloop Self-Energy Integrals},
  author = {Berends, F. A. and Böhm, M. and Buza, M. and Scharf, R.},
  date = {1994-06},
  journaltitle = {Zeitschrift für Physik C Particles and Fields},
  shortjournal = {Z. Phys. C - Particles and Fields},
  volume = {63},
  number = {2},
  pages = {227--234},
  issn = {0170-9739, 1434-6052},
  doi = {10.1007/BF01411014},
  url = {http://link.springer.com/10.1007/BF01411014},
  langid = {english}
}

@article{BergereAsymptoticExpansionFeynman1974,
  title = {Asymptotic Expansion of {{Feynman}} Amplitudes},
  author = {Bergère, M. C. and Lam, Yuk-Ming P.},
  date = {1974-03-01},
  journaltitle = {Communications in Mathematical Physics},
  shortjournal = {Commun.Math. Phys.},
  volume = {39},
  number = {1},
  pages = {1--32},
  issn = {1432-0916},
  doi = {10.1007/BF01609168},
  url = {https://doi.org/10.1007/BF01609168},
  langid = {english}
}

@unpublished{BerghoffGraphComplexesFeynman2021,
  title = {Graph Complexes and {{Feynman}} Rules},
  author = {Berghoff, Marko and Kreimer, Dirk},
  date = {2021-03-07},
  shortjournal = {ArXiv200809540 Hep-Th Physicsmath-Ph},
  eprint = {2008.09540},
  eprinttype = {arxiv},
  primaryclass = {hep-th, physics:math-ph},
  url = {http://arxiv.org/abs/2008.09540},
  archiveprefix = {arXiv},
  keywords = {81T15; 18G85,High Energy Physics - Theory,Mathematical Physics}
}

@article{BerkeschEulerMellinIntegrals2013,
  title = {Euler--{{Mellin}} Integrals and A-hyperge\-ometric Functions},
  author = {Berkesch, Christine and Forsgård, Jens and Passare, Mikael},
  date = {2014},
  journaltitle = {Michigan Math. J.},
  shortjournal = {Mich. Math J},
  volume = {63},
  number = {1},
  eprint = {1103.6273},
  eprinttype = {arxiv},
  pages = {101--123},
  url = {http://arxiv.org/abs/1103.6273},
  archiveprefix = {arXiv},
  langid = {english},
  keywords = {Mathematics - Complex Variables}
}

@article{BerkeschZamaereSingularitiesHolonomicityBinomial2014,
  title = {Singularities and Holonomicity of Binomial {{D-modules}}},
  author = {Berkesch Zamaere, Christine and Matusevich, Laura Felicia and Walther, Uli},
  date = {2015},
  journaltitle = {J. Algebra},
  volume = {439},
  eprint = {1308.5898},
  eprinttype = {arxiv},
  pages = {360--372},
  url = {http://arxiv.org/abs/1308.5898},
  archiveprefix = {arXiv},
  langid = {english},
  keywords = {32C38; 14B05; 33C70; 14M25,Mathematics - Algebraic Geometry,Mathematics - Classical Analysis and ODEs,Mathematics - Commutative Algebra}
}

@article{BernshteinAnalyticContinuationGeneralized1973,
  title = {The Analytic Continuation of Generalized Functions with Respect to a Parameter},
  author = {Bernshtein, I. N.},
  date = {1973},
  journaltitle = {Functional Analysis and Its Applications},
  shortjournal = {Funct Anal Its Appl},
  volume = {6},
  number = {4},
  pages = {273--285},
  issn = {0016-2663, 1573-8485},
  doi = {10.1007/BF01077645},
  url = {http://link.springer.com/10.1007/BF01077645},
  langid = {english}
}

@article{BernsteinCanonicalSequencesIntegers2002,
  title = {Some {{Canonical Sequences}} of {{Integers}}},
  author = {Bernstein, Mira and Sloane, N. J. A.},
  date = {2002-05-28},
  doi = {10.48550/arXiv.math/0205301},
  url = {https://arxiv.org/abs/math/0205301v1},
  langid = {english}
}

@article{BeukersMonodromyAhypergeometricFunctions2013,
  title = {Monodromy of {{A-hypergeometric}} Functions},
  author = {Beukers, Frits},
  date = {2016},
  journaltitle = {J. Reine Angew. Math.},
  volume = {718},
  eprint = {1101.0493},
  eprinttype = {arxiv},
  pages = {183--206},
  url = {http://arxiv.org/abs/1101.0493},
  archiveprefix = {arXiv},
  langid = {english},
  keywords = {33C70; 14D05,Mathematics - Algebraic Geometry}
}

@article{BilleraConstructionsComplexitySecondary1990,
  title = {Constructions and Complexity of Secondary Polytopes},
  author = {Billera, Louis J. and Filliman, Paul and Sturmfels, Bernd},
  date = {1990-10},
  journaltitle = {Advances in Mathematics},
  shortjournal = {Advances in Mathematics},
  volume = {83},
  number = {2},
  pages = {155--179},
  issn = {00018708},
  doi = {10.1016/0001-8708(90)90077-Z},
  url = {https://linkinghub.elsevier.com/retrieve/pii/000187089090077Z},
  langid = {english}
}

@article{BinothNumericalEvaluationMultiloop2004,
  title = {Numerical Evaluation of Multi-Loop Integrals by Sector Decomposition},
  author = {Binoth, T. and Heinrich, G.},
  date = {2004-03},
  journaltitle = {Nuclear Physics B},
  shortjournal = {Nuclear Physics B},
  volume = {680},
  number = {1-3},
  eprint = {hep-ph/0305234},
  eprinttype = {arxiv},
  pages = {375--388},
  issn = {05503213},
  doi = {10.1016/j.nuclphysb.2003.12.023},
  url = {http://arxiv.org/abs/hep-ph/0305234},
  archiveprefix = {arXiv},
  keywords = {High Energy Physics - Phenomenology}
}

@article{BitounFeynmanIntegralRelations2019,
  title = {Feynman Integral Relations from Parametric Annihilators},
  author = {Bitoun, Thomas and Bogner, Christian and Klausen, René~Pascal and Panzer, Erik},
  date = {2019-03},
  journaltitle = {Letters in Mathematical Physics},
  shortjournal = {Lett Math Phys},
  volume = {109},
  number = {3},
  pages = {497--564},
  issn = {0377-9017, 1573-0530},
  doi = {10.1007/s11005-018-1114-8},
  url = {http://link.springer.com/10.1007/s11005-018-1114-8},
  langid = {english}
}

@book{BjorkAnalyticDModulesApplications1993,
  title = {Analytic {{D-Modules}} and {{Applications}}},
  author = {Björk, Jan-Erik},
  date = {1993},
  publisher = {{Springer Netherlands}},
  location = {{Dordrecht}},
  doi = {10.1007/978-94-017-0717-6},
  url = {http://link.springer.com/10.1007/978-94-017-0717-6},
  isbn = {978-90-481-4238-5 978-94-017-0717-6},
  langid = {english}
}

@thesis{BjorkenExperimentalTestsQuantum1959,
  title = {Experimental Tests of {{Quantum}} Electrodynamics and Spectral Representations of {{Green}}'s Functions in Perturbation Theory},
  author = {Bjorken, James Daniel},
  date = {1959-04},
  institution = {{Stanford U.}}
}

@book{BjornerOrientedMatroids1999,
  title = {Oriented Matroids},
  author = {Björner, Anders and Sturmfels, Bernd and Ziegler, Gunter M and Las Vergnas, Michel and White, Neil},
  date = {1999},
  series = {Encyclopedia of Mathematics and Its Applications},
  edition = {2nd ed},
  number = {v. 46},
  publisher = {{Cambridge University Press}},
  location = {{Cambridge ; New York}},
  isbn = {978-0-521-77750-6},
  langid = {english},
  pagetotal = {548},
  keywords = {Linear programming,Oriented matroids}
}

@unpublished{BlochCutkoskyRulesOuter2015,
  title = {Cutkosky {{Rules}} and {{Outer Space}}},
  author = {Bloch, Spencer and Kreimer, Dirk},
  date = {2015-12-05},
  shortjournal = {ArXiv151201705 Hep-Th},
  eprint = {1512.01705},
  eprinttype = {arxiv},
  primaryclass = {hep-th},
  url = {http://arxiv.org/abs/1512.01705},
  archiveprefix = {arXiv},
  keywords = {81T18; 81T15,High Energy Physics - Theory,Mathematics - Algebraic Geometry}
}

@article{BognerFeynmanGraphPolynomials2010,
  title = {Feynman Graph Polynomials},
  author = {Bogner, Christian and Weinzierl, Stefan},
  date = {2010-05-20},
  journaltitle = {International Journal of Modern Physics A},
  shortjournal = {Int. J. Mod. Phys. A},
  volume = {25},
  number = {13},
  eprint = {1002.3458},
  eprinttype = {arxiv},
  pages = {2585--2618},
  issn = {0217-751X, 1793-656X},
  doi = {10.1142/S0217751X10049438},
  url = {http://arxiv.org/abs/1002.3458},
  archiveprefix = {arXiv},
  langid = {english},
  keywords = {High Energy Physics - Phenomenology,Mathematical Physics}
}

@article{BognerFeynmanIntegralsIterated2015,
  title = {Feynman Integrals and Iterated Integrals on Moduli Spaces of Curves of Genus Zero},
  author = {Bogner, Christian and Brown, Francis},
  date = {2015-03-27},
  journaltitle = {Commun.Num.Theor.Phys.},
  shortjournal = {Commun.Num.Theor.Phys.},
  volume = {09},
  eprint = {1408.1862},
  eprinttype = {arxiv},
  pages = {189--238},
  url = {http://arxiv.org/abs/1408.1862},
  archiveprefix = {arXiv},
  keywords = {High Energy Physics - Phenomenology,High Energy Physics - Theory}
}

@thesis{BognerMathematicalAspectsFeynman2009,
  title = {Mathematical {{Aspects}} of {{Feynman Integrals}}},
  author = {Bogner, Christian},
  date = {2009},
  institution = {{Johannes Gutenberg-Universität}},
  location = {{Mainz}}
}

@article{BognerPeriodsFeynmanIntegrals2009,
  title = {Periods and {{Feynman}} Integrals},
  author = {Bogner, Christian and Weinzierl, Stefan},
  date = {2009-04},
  journaltitle = {Journal of Mathematical Physics},
  shortjournal = {Journal of Mathematical Physics},
  volume = {50},
  number = {4},
  eprint = {0711.4863},
  eprinttype = {arxiv},
  issn = {0022-2488, 1089-7658},
  doi = {10.1063/1.3106041},
  url = {http://arxiv.org/abs/0711.4863},
  archiveprefix = {arXiv},
  langid = {english},
  keywords = {High Energy Physics - Phenomenology,High Energy Physics - Theory}
}

@article{BognerSymbolicIntegrationMultiple2012,
  title = {Symbolic Integration and Multiple Polylogarithms},
  author = {Bogner, Christian and Brown, Francis},
  date = {2012-09-28},
  journaltitle = {PoS LL2012},
  volume = {053},
  eprint = {1209.6524},
  eprinttype = {arxiv},
  primaryclass = {hep-ph, physics:math-ph},
  doi = {10.48550/arXiv.1209.6524},
  url = {http://arxiv.org/abs/1209.6524},
  archiveprefix = {arXiv},
  keywords = {High Energy Physics - Phenomenology,Mathematical Physics}
}

@article{BolliniDimensionalRenorinalizationNumber1972,
  title = {Dimensional Renorinalization : {{The}} Number of Dimensions as a Regularizing Parameter},
  shorttitle = {Dimensional Renorinalization},
  author = {Bollini, C. G. and Giambiagi, J. J.},
  date = {1972-11-01},
  journaltitle = {Il Nuovo Cimento B (1971-1996)},
  shortjournal = {Nuovo Cim B},
  volume = {12},
  number = {1},
  pages = {20--26},
  issn = {1826-9877},
  doi = {10.1007/BF02895558},
  url = {https://doi.org/10.1007/BF02895558},
  langid = {english}
}

@article{BonischAnalyticStructureAll2021,
  title = {Analytic {{Structure}} of All {{Loop Banana Amplitudes}}},
  author = {Bönisch, Kilian and Fischbach, Fabian and Klemm, Albrecht and Nega, Christoph and Safari, Reza},
  date = {2021-05-14},
  journaltitle = {JHEP},
  volume = {05},
  number = {66},
  eprint = {2008.10574},
  eprinttype = {arxiv},
  doi = {10.1007/JHEP05(2021)066},
  url = {http://arxiv.org/abs/2008.10574},
  archiveprefix = {arXiv},
  keywords = {High Energy Physics - Phenomenology,High Energy Physics - Theory,Mathematics - Algebraic Geometry,Mathematics - Number Theory}
}

@unpublished{BonischFeynmanIntegralsDimensional2021,
  title = {Feynman {{Integrals}} in {{Dimensional Regularization}} and {{Extensions}} of {{Calabi-Yau Motives}}},
  author = {Bönisch, Kilian and Duhr, Claude and Fischbach, Fabian and Klemm, Albrecht and Nega, Christoph},
  date = {2021-08-11},
  shortjournal = {ArXiv210805310 Hep-Ph Physicshep-Th},
  eprint = {2108.05310},
  eprinttype = {arxiv},
  primaryclass = {hep-ph, physics:hep-th},
  url = {http://arxiv.org/abs/2108.05310},
  archiveprefix = {arXiv},
  keywords = {High Energy Physics - Phenomenology,High Energy Physics - Theory,Mathematics - Algebraic Geometry}
}

@article{BoosMethodCalculatingMassive1991,
  title = {A Method of Calculating Massive {{Feynman}} Integrals},
  author = {Boos, É. É. and Davydychev, A. I.},
  date = {1991-10},
  journaltitle = {Theoretical and Mathematical Physics},
  shortjournal = {Theor Math Phys},
  volume = {89},
  number = {1},
  pages = {1052--1064},
  issn = {0040-5779, 1573-9333},
  doi = {10.1007/BF01016805},
  url = {http://link.springer.com/10.1007/BF01016805},
  langid = {english}
}

@article{BorowkaNumericalEvaluationMultiloop2013,
  title = {Numerical Evaluation of Multi-Loop Integrals for Arbitrary Kinematics with {{SecDec}} 2.0},
  author = {Borowka, Sophia and Carter, Jonathon and Heinrich, Gudrun},
  date = {2013-02},
  journaltitle = {Computer Physics Communications},
  shortjournal = {Computer Physics Communications},
  volume = {184},
  number = {2},
  eprint = {1204.4152},
  eprinttype = {arxiv},
  pages = {396--408},
  issn = {00104655},
  doi = {10.1016/j.cpc.2012.09.020},
  url = {http://arxiv.org/abs/1204.4152},
  archiveprefix = {arXiv},
  keywords = {High Energy Physics - Phenomenology}
}

@unpublished{BorweinEvaluationsKfoldEuler1996,
  title = {Evaluations of K-Fold {{Euler}}/{{Zagier}} Sums: A Compendium of Results for Arbitrary k},
  shorttitle = {Evaluations of K-Fold {{Euler}}/{{Zagier}} Sums},
  author = {Borwein, J. M. and Bradley, D. M. and Broadhurst, D. J.},
  date = {1996-11-02},
  shortjournal = {ArXivhep-Th9611004},
  eprint = {hep-th/9611004},
  eprinttype = {arxiv},
  url = {http://arxiv.org/abs/hep-th/9611004},
  archiveprefix = {arXiv},
  keywords = {High Energy Physics - Theory}
}

@article{BourjailyBoundedBestiaryFeynman2019,
  title = {A ({{Bounded}}) {{Bestiary}} of {{Feynman Integral Calabi-Yau Geometries}}},
  author = {Bourjaily, Jacob L. and McLeod, Andrew J. and von Hippel, Matt and Wilhelm, Matthias},
  options = {useprefix=true},
  date = {2019-01-24},
  journaltitle = {Physical Review Letters},
  shortjournal = {Phys. Rev. Lett.},
  volume = {122},
  number = {3},
  eprint = {1810.07689},
  eprinttype = {arxiv},
  issn = {0031-9007, 1079-7114},
  doi = {10.1103/PhysRevLett.122.031601},
  url = {http://arxiv.org/abs/1810.07689},
  archiveprefix = {arXiv},
  langid = {english},
  keywords = {High Energy Physics - Theory}
}

@article{BradenCalculatingSumsInfinite1992,
  title = {Calculating {{Sums}} of {{Infinite Series}}},
  author = {Braden, Bart},
  date = {1992},
  journaltitle = {The American Mathematical Monthly},
  shortjournal = {Am. Math. Mon.},
  volume = {99},
  number = {7},
  eprint = {2324995},
  eprinttype = {jstor},
  pages = {649--655},
  publisher = {{Mathematical Association of America}},
  issn = {0002-9890},
  doi = {10.2307/2324995}
}

@article{BransonExtensionStirlingNumbers1996,
  title = {An {{Extension}} of {{Stirling Numbers}}},
  author = {Branson, David},
  date = {1996},
  journaltitle = {The Fibonacci Quaterly},
  shortjournal = {Fibonacci Quaterly},
  volume = {34},
  number = {3},
  pages = {213},
  langid = {english}
}

@book{BrondstedIntroductionConvexPolytopes1983,
  title = {An Introduction to Convex Polytopes},
  author = {Brøndsted, Arne},
  date = {1983},
  series = {Graduate Texts in Mathematics},
  number = {90},
  publisher = {{Springer-Verlag}},
  location = {{New York}},
  isbn = {978-0-387-90722-2},
  langid = {english},
  pagetotal = {160},
  keywords = {Convex polytopes}
}

@article{BrownFeynmanAmplitudesCosmic2017,
  title = {Feynman {{Amplitudes}} and {{Cosmic Galois}} Group},
  author = {Brown, Francis},
  date = {2017-02-01},
  journaltitle = {Commun. Num. Theor. Phys.},
  shortjournal = {Commun. Num. Theor. Phys.},
  volume = {11},
  eprint = {1512.06409},
  eprinttype = {arxiv},
  pages = {453--556},
  url = {http://arxiv.org/abs/1512.06409},
  archiveprefix = {arXiv},
  keywords = {Mathematical Physics}
}

@article{BrownMasslessHigherloopTwopoint2009,
  title = {The Massless Higher-Loop Two-Point Function},
  author = {Brown, Francis},
  date = {2009-05},
  journaltitle = {Communications in Mathematical Physics},
  shortjournal = {Commun. Math. Phys.},
  volume = {287},
  number = {3},
  eprint = {0804.1660},
  eprinttype = {arxiv},
  primaryclass = {hep-ph},
  pages = {925--958},
  issn = {0010-3616, 1432-0916},
  doi = {10.1007/s00220-009-0740-5},
  url = {http://arxiv.org/abs/0804.1660},
  archiveprefix = {arXiv},
  keywords = {High Energy Physics - Phenomenology,Mathematics - Algebraic Geometry,Mathematics - Number Theory}
}

@unpublished{BrownPeriodsFeynmanIntegrals2010,
  title = {On the Periods of Some {{Feynman}} Integrals},
  author = {Brown, Francis},
  date = {2010-07-20},
  shortjournal = {ArXiv09100114 Math-Ph},
  eprint = {0910.0114},
  eprinttype = {arxiv},
  primaryclass = {math-ph},
  url = {http://arxiv.org/abs/0910.0114},
  archiveprefix = {arXiv},
  langid = {english},
  keywords = {11M32; 05C31; 81T18,Mathematical Physics,Mathematics - Algebraic Geometry}
}

@book{BrunsPolytopesRingsKtheory2009,
  title = {Polytopes, Rings, and {{K-theory}}},
  author = {Bruns, Winfried and Gubeladze, Joseph},
  date = {2009},
  series = {Springer Monographs in Mathematics},
  publisher = {{Springer}},
  location = {{Dordrecht ; New York}},
  isbn = {978-0-387-76355-2 978-0-387-76356-9},
  pagetotal = {461},
  keywords = {Algebraische K-Theorie,K-theory,Kommutativer Ring,Konvexe Geometrie,Polytopes,Rings (Algebra)},
  annotation = {OCLC: ocn416289393}
}

@article{BuhringGeneralizedHypergeometricFunctions1992,
  title = {Generalized {{Hypergeometric Functions}} at {{Unit Argument}}},
  author = {Buhring, Wolfgang},
  date = {1992},
  journaltitle = {Proceedings of the American Mathematical Society,},
  shortjournal = {Proc. Am. Math. Soc.},
  volume = {114},
  number = {1},
  pages = {145--153},
  langid = {english}
}

@article{BuschmanFunctionAssociatedCertain1990,
  title = {The {{H}} Function Associated with a Certain Class of {{Feynman}} Integrals},
  author = {Buschman, R G and Srivastava, H M},
  date = {1990-10-21},
  journaltitle = {Journal of Physics A: Mathematical and General},
  shortjournal = {J. Phys. A: Math. Gen.},
  volume = {23},
  number = {20},
  pages = {4707--4710},
  issn = {0305-4470, 1361-6447},
  doi = {10.1088/0305-4470/23/20/030},
  url = {https://iopscience.iop.org/article/10.1088/0305-4470/23/20/030},
  langid = {english}
}

@unpublished{BytevDerivativesHorntypeHypergeometric2017,
  title = {Derivatives of {{Horn-type}} Hypergeometric Functions with Respect to Their Parameters},
  author = {Bytev, V. and Kniehl, B. and Moch, S.},
  date = {2017-12-18},
  shortjournal = {ArXiv171207579 Hep-Th Physicsmath-Ph},
  eprint = {1712.07579},
  eprinttype = {arxiv},
  primaryclass = {hep-th, physics:math-ph},
  url = {http://arxiv.org/abs/1712.07579},
  archiveprefix = {arXiv},
  langid = {english},
  keywords = {High Energy Physics - Theory,Mathematical Physics}
}

@article{CahillOpticalTheoremsSteinmann1975,
  title = {Optical Theorems and Steinmann Relations},
  author = {Cahill, Kevin E and Stapp, Henry P},
  date = {1975-04-01},
  journaltitle = {Annals of Physics},
  shortjournal = {Annals of Physics},
  volume = {90},
  number = {2},
  pages = {438--514},
  issn = {0003-4916},
  doi = {10.1016/0003-4916(75)90006-8},
  url = {https://www.sciencedirect.com/science/article/pii/0003491675900068},
  langid = {english}
}

@article{CalabreseNoteAlternatingSeries1962,
  title = {A {{Note}} on {{Alternating Series}}},
  author = {Calabrese, Philip},
  date = {1962},
  journaltitle = {The American Mathematical Monthly},
  shortjournal = {Am. Math. Mon.},
  volume = {69},
  number = {3},
  eprint = {2311056},
  eprinttype = {jstor},
  pages = {215--217},
  publisher = {{Mathematical Association of America}},
  issn = {0002-9890},
  doi = {10.2307/2311056}
}

@unpublished{CattaniThreeLecturesHypergeometric2006,
  title = {Three {{Lectures}} on Hypergeometric Functions},
  author = {Cattani, Eduardo},
  date = {2006},
  langid = {english}
}

@unpublished{ChestnovMacaulayMatrixFeynman2022,
  title = {Macaulay {{Matrix}} for {{Feynman Integrals}}: {{Linear Relations}} and {{Intersection Numbers}}},
  shorttitle = {Macaulay {{Matrix}} for {{Feynman Integrals}}},
  author = {Chestnov, Vsevolod and Gasparotto, Federico and Mandal, Manoj K. and Mastrolia, Pierpaolo and Matsubara-Heo, Saiei J. and Munch, Henrik J. and Takayama, Nobuki},
  date = {2022-04-27},
  shortjournal = {ArXiv220412983 Hep-Th},
  eprint = {2204.12983},
  eprinttype = {arxiv},
  primaryclass = {hep-th},
  url = {http://arxiv.org/abs/2204.12983},
  archiveprefix = {arXiv},
  keywords = {High Energy Physics - Theory}
}

@article{ChetyrkinIntegrationPartsAlgorithm1981,
  title = {Integration by Parts: {{The}} Algorithm to Calculate {$\beta$}-Functions in 4 Loops},
  shorttitle = {Integration by Parts},
  author = {Chetyrkin, K. G. and Tkachov, F. V.},
  date = {1981-11-23},
  journaltitle = {Nuclear Physics B},
  shortjournal = {Nuclear Physics B},
  volume = {192},
  number = {1},
  pages = {159--204},
  issn = {0550-3213},
  doi = {10.1016/0550-3213(81)90199-1},
  url = {https://www.sciencedirect.com/science/article/pii/0550321381901991},
  langid = {english}
}

@article{ChisholmCalculationMatrixElements1952,
  title = {Calculation of {{{\emph{S}}}} -Matrix Elements},
  author = {Chisholm, J. S. R.},
  date = {1952-04},
  journaltitle = {Mathematical Proceedings of the Cambridge Philosophical Society},
  shortjournal = {Math. Proc. Camb. Phil. Soc.},
  volume = {48},
  number = {2},
  pages = {300--315},
  issn = {0305-0041, 1469-8064},
  doi = {10.1017/S0305004100027651},
  url = {https://www.cambridge.org/core/product/identifier/S0305004100027651/type/journal_article},
  langid = {english}
}

@book{CoddingtonTheoryOrdinaryDifferential2012,
  title = {Theory of Ordinary Differential Equations},
  author = {Coddington, Earl A. and Levinson, Norman},
  date = {2012},
  series = {Tata {{McGraw-Hill}} Edition},
  publisher = {{Tata McGraw-Hill}},
  location = {{New Delhi}},
  isbn = {978-0-07-099256-6},
  langid = {english},
  pagetotal = {429}
}

@article{CohenConvergenceAccelerationAlternating2000,
  title = {Convergence {{Acceleration}} of {{Alternating Series}}},
  author = {Cohen, Henri and Villegas, Fernando Rodriguez and Zagier, Don},
  date = {2000},
  journaltitle = {Experimental Mathematics},
  shortjournal = {Exp. Math.},
  volume = {9},
  number = {1},
  langid = {english}
}

@article{ColemanSingularitiesPhysicalRegion1965,
  title = {Singularities in the Physical Region},
  author = {Coleman, S. and Norton, R. E.},
  date = {1965-07},
  journaltitle = {Il Nuovo Cimento},
  shortjournal = {Nuovo Cim},
  volume = {38},
  number = {1},
  pages = {438--442},
  issn = {0029-6341, 1827-6121},
  doi = {10.1007/BF02750472},
  url = {http://link.springer.com/10.1007/BF02750472},
  langid = {english}
}

@unpublished{CollinsNewCompleteProof2020,
  title = {A New and Complete Proof of the {{Landau}} Condition for Pinch Singularities of {{Feynman}} Graphs and Other Integrals},
  author = {Collins, John},
  date = {2020-07-08},
  shortjournal = {ArXiv200704085 Hep-Ph Physicshep-Th},
  eprint = {2007.04085},
  eprinttype = {arxiv},
  primaryclass = {hep-ph, physics:hep-th},
  url = {http://arxiv.org/abs/2007.04085},
  archiveprefix = {arXiv},
  keywords = {High Energy Physics - Phenomenology,High Energy Physics - Theory}
}

@book{CollinsRenormalizationIntroductionRenormalization1984,
  title = {Renormalization: An Introduction to Renormalization, the Renormalization Group, and the Operator-Product Expansion},
  shorttitle = {Renormalization},
  author = {Collins, John C.},
  date = {1984},
  series = {Cambridge Monographs on Mathematical Physics},
  publisher = {{Cambridge University Press}},
  location = {{Cambridge [Cambridgeshire] ; New York}},
  isbn = {978-0-521-24261-5},
  pagetotal = {380},
  keywords = {Operator product expansions,Particles (Nuclear physics),Renormalization (Physics),Renormalization group,Scattering (Physics)}
}

@book{ComtetAdvancedCombinatoricsTheory1974,
  title = {Advanced Combinatorics: {{Theory}} of Finite and Infinite Expansions},
  shorttitle = {Advanced Combinatorics},
  author = {Comtet, L.},
  date = {1974},
  edition = {rev. and enlarged ed},
  publisher = {{Reidel}},
  location = {{Dordrecht}},
  isbn = {978-90-277-0441-2},
  pagetotal = {343}
}

@article{ConnesRenormalizationQuantumField1999,
  title = {Renormalization in Quantum Field Theory and the {{Riemann-Hilbert}} Problem {{I}}: The {{Hopf}} Algebra Structure of Graphs and the Main Theorem},
  shorttitle = {Renormalization in Quantum Field Theory and the {{Riemann-Hilbert}} Problem {{I}}},
  author = {Connes, Alain and Kreimer, Dirk},
  date = {2000},
  journaltitle = {Commun.Math.Phys.},
  shortjournal = {Commun.Math.Phys.},
  volume = {210},
  eprint = {hep-th/9912092},
  eprinttype = {arxiv},
  pages = {249--273},
  doi = {10.1007/s002200050779},
  url = {http://arxiv.org/abs/hep-th/9912092},
  archiveprefix = {arXiv},
  keywords = {High Energy Physics - Theory,Mathematical Physics,Mathematics - Quantum Algebra}
}

@article{ConnesRenormalizationQuantumField2001,
  title = {Renormalization in Quantum Field Theory and the {{Riemann-Hilbert}} Problem {{II}}: The {$\beta$}-Function, Diffeomorphisms and the Renormalization Group},
  shorttitle = {Renormalization in Quantum Field Theory and the {{Riemann-Hilbert}} Problem {{II}}},
  author = {Connes, Alain and Kreimer, Dirk},
  date = {2001-01},
  journaltitle = {Communications in Mathematical Physics},
  shortjournal = {Commun. Math. Phys.},
  volume = {216},
  number = {1},
  eprint = {hep-th/0003188},
  eprinttype = {arxiv},
  pages = {215--241},
  issn = {0010-3616},
  doi = {10.1007/PL00005547},
  url = {http://arxiv.org/abs/hep-th/0003188},
  archiveprefix = {arXiv},
  keywords = {High Energy Physics - Phenomenology,High Energy Physics - Theory,Mathematics - Quantum Algebra}
}

@article{CosterPhysicalRegionDiscontinuity1969,
  title = {Physical‐{{Region Discontinuity Equations}} for {{Many}}‐{{Particle Scattering Amplitudes}}. {{I}}},
  author = {Coster, Joseph and Stapp, Henry P.},
  date = {1969-03},
  journaltitle = {Journal of Mathematical Physics},
  shortjournal = {J. Math. Phys.},
  volume = {10},
  number = {3},
  pages = {371--396},
  publisher = {{American Institute of Physics}},
  issn = {0022-2488},
  doi = {10.1063/1.1664852},
  url = {https://aip.scitation.org/doi/10.1063/1.1664852}
}

@article{CosterPhysicalRegionDiscontinuity1970,
  title = {Physical‐{{Region Discontinuity Equation}}},
  author = {Coster, Joseph and Stapp, Henry P.},
  date = {1970-09},
  journaltitle = {Journal of Mathematical Physics},
  shortjournal = {J. Math. Phys.},
  volume = {11},
  number = {9},
  pages = {2743--2763},
  publisher = {{American Institute of Physics}},
  issn = {0022-2488},
  doi = {10.1063/1.1665443},
  url = {https://aip.scitation.org/doi/10.1063/1.1665443}
}

@book{CoxUsingAlgebraicGeometry2005,
  title = {Using Algebraic Geometry},
  author = {Cox, David A. and Little, John B. and O'Shea, Donal},
  date = {2005},
  series = {Graduate Texts in Mathematics},
  edition = {2nd ed},
  number = {185},
  publisher = {{Springer}},
  location = {{New York}},
  isbn = {978-0-387-20706-3 978-0-387-20733-9},
  langid = {english},
  keywords = {Geometry; Algebraic}
}

@unpublished{CuetoResultsInhomogeneousDiscriminants2006,
  title = {Some Results on Inhomogeneous Discriminants},
  author = {Cueto, Maria Angelica and Dickenstein, Alicia},
  date = {2006-10-02},
  eprint = {math/0610031},
  eprinttype = {arxiv},
  url = {http://arxiv.org/abs/math/0610031},
  archiveprefix = {arXiv},
  keywords = {Mathematics - Algebraic Geometry,Mathematics - Commutative Algebra}
}

@article{CurranRestrictionADiscriminantsDual2006,
  title = {Restriction of {{A-Discriminants}} and {{Dual Defect Toric Varieties}}},
  author = {Curran, Raymond and Cattani, Eduardo},
  date = {3007},
  journaltitle = {J. Symbolic Comput.},
  shortjournal = {J Symb. Comput},
  volume = {42},
  number = {1-2},
  eprint = {math/0510615},
  eprinttype = {arxiv},
  pages = {115--135},
  url = {http://arxiv.org/abs/math/0510615},
  archiveprefix = {arXiv},
  keywords = {13P05,14M25,Mathematics - Algebraic Geometry}
}

@article{CutkoskySingularitiesDiscontinuitiesFeynman1960,
  title = {Singularities and {{Discontinuities}} of {{Feynman Amplitudes}}},
  author = {Cutkosky, R. E.},
  date = {1960-09},
  journaltitle = {Journal of Mathematical Physics},
  shortjournal = {Journal of Mathematical Physics},
  volume = {1},
  number = {5},
  pages = {429--433},
  issn = {0022-2488, 1089-7658},
  doi = {10.1063/1.1703676},
  url = {http://aip.scitation.org/doi/10.1063/1.1703676},
  langid = {english}
}

@article{DavydychevSimpleFormulaReducing1991,
  title = {A {{Simple}} Formula for Reducing {{Feynman}} Diagrams to Scalar Integrals},
  author = {Davydychev, Andrei I.},
  date = {1991},
  journaltitle = {Phys. Lett. B},
  volume = {263},
  pages = {107--111},
  doi = {10.1016/0370-2693(91)91715-8},
  keywords = {Feynman graph,integrability,n-point function}
}

@software{DeckerSingular421Computer2021,
  title = {Singular 4-2-1 -- {{A}} Computer Algebra System for Polynomial Computations},
  author = {Decker, Wolfram and Greuel, Gert-Martin and Pfister, Gerhard and Schönemann, Hans},
  date = {2021},
  url = {http://www.singular.uni-kl.de}
}

@article{DeLaCruzFeynmanIntegralsAhypergeometric2019,
  title = {Feynman Integrals as {{A-hypergeometric}} Functions},
  author = {de la Cruz, Leonardo},
  options = {useprefix=true},
  date = {2019-12},
  journaltitle = {Journal of High Energy Physics},
  shortjournal = {J. High Energ. Phys.},
  volume = {2019},
  number = {12},
  eprint = {1907.00507},
  eprinttype = {arxiv},
  pages = {123},
  issn = {1029-8479},
  doi = {10.1007/JHEP12(2019)123},
  url = {http://arxiv.org/abs/1907.00507},
  archiveprefix = {arXiv},
  langid = {english},
  keywords = {High Energy Physics - Phenomenology,High Energy Physics - Theory,Mathematical Physics}
}

@article{DelamotteHintRenormalization2004,
  title = {A Hint of Renormalization},
  author = {Delamotte, B.},
  date = {2004-02},
  journaltitle = {American Journal of Physics},
  shortjournal = {American Journal of Physics},
  volume = {72},
  number = {2},
  eprint = {hep-th/0212049},
  eprinttype = {arxiv},
  pages = {170--184},
  issn = {0002-9505, 1943-2909},
  doi = {10.1119/1.1624112},
  url = {http://arxiv.org/abs/hep-th/0212049},
  archiveprefix = {arXiv},
  keywords = {Condensed Matter - Statistical Mechanics,High Energy Physics - Theory}
}

@book{DeLoeraTriangulations2010,
  title = {Triangulations},
  author = {De Loera, Jesús A. and Rambau, Jörg and Santos, Francisco},
  date = {2010},
  series = {Algorithms and {{Computation}} in {{Mathematics}}},
  volume = {25},
  publisher = {{Springer Berlin Heidelberg}},
  location = {{Berlin, Heidelberg}},
  doi = {10.1007/978-3-642-12971-1},
  url = {http://link.springer.com/10.1007/978-3-642-12971-1},
  isbn = {978-3-642-12970-4 978-3-642-12971-1},
  langid = {english}
}

@article{DickensteinTropicalDiscriminants2007,
  title = {Tropical {{Discriminants}}},
  author = {Dickenstein, Alicia and Feichtner, Eva Maria and Sturmfels, Bernd},
  date = {2007-01-13},
  journaltitle = {J. Amer. Math. Soc.},
  volume = {20},
  number = {4},
  eprint = {math/0510126},
  eprinttype = {arxiv},
  pages = {1111--1133},
  url = {http://arxiv.org/abs/math/0510126},
  archiveprefix = {arXiv},
  keywords = {13P05; 52B20; 52B40; 55R80 secondary,14M25 primary,Mathematics - Algebraic Geometry,Mathematics - Combinatorics}
}

@article{DysonDivergencePerturbationTheory1952,
  title = {Divergence of {{Perturbation Theory}} in {{Quantum Electrodynamics}}},
  author = {Dyson, F. J.},
  date = {1952-02-15},
  journaltitle = {Physical Review},
  shortjournal = {Phys. Rev.},
  volume = {85},
  number = {4},
  pages = {631--632},
  publisher = {{American Physical Society}},
  doi = {10.1103/PhysRev.85.631},
  url = {https://link.aps.org/doi/10.1103/PhysRev.85.631}
}

@book{EdenAnalyticSmatrix1966,
  title = {The Analytic {{S-matrix}}},
  author = {Eden, R. J. and Landshoff, P. V. and Olive, D. I. and Polkinghorne, J. C.},
  date = {1966},
  publisher = {{Cambridge U.P.}}
}

@article{EsterovNewtonPolyhedraDiscriminants2010,
  title = {Newton Polyhedra of Discriminants of Projections},
  author = {Esterov, Alexander},
  date = {2010-07-31},
  journaltitle = {Discrete Comput. Geom.},
  volume = {44},
  number = {1},
  eprint = {0810.4996},
  eprinttype = {arxiv},
  pages = {96--148},
  url = {http://arxiv.org/abs/0810.4996},
  archiveprefix = {arXiv},
  keywords = {55R80 (Primary); 52A39; 52B20 (Secondary),Mathematics - Algebraic Geometry,Mathematics - Combinatorics}
}

@incollection{EtingofNoteDimensionalRegularization1999,
  title = {Note on Dimensional Regularization},
  booktitle = {Quantum Fields and Strings: A Course for Mathematicians. {{Vols}}. 1, 2. {{Material}} from the {{Special Year}} on {{Quantum Field Theory}} Held at the {{Institute}} for {{Advanced Study}}, {{Princeton}}, {{NJ}}, 1996–1997},
  author = {Etingof, Pavel},
  date = {1999},
  pages = {597--607},
  publisher = {{AMS, American Mathematical Society}},
  location = {{Providence, RI}},
  isbn = {978-0-8218-1987-6 978-0-8218-1988-3 978-0-8218-1198-6},
  langid = {english},
  keywords = {58J90,81-02,81T16}
}

@article{FairlieSingularitiesSecondType1962,
  title = {Singularities of the {{Second Type}}},
  author = {Fairlie, D. B. and Landshoff, P. V. and Nuttall, J. and Polkinghorne, J. C.},
  date = {1962-07},
  journaltitle = {Journal of Mathematical Physics},
  shortjournal = {Journal of Mathematical Physics},
  volume = {3},
  number = {4},
  pages = {594--602},
  issn = {0022-2488, 1089-7658},
  doi = {10.1063/1.1724262},
  url = {http://aip.scitation.org/doi/10.1063/1.1724262},
  langid = {english}
}

@article{FengGKZhypergeometricSystemsFeynman2020,
  title = {{{GKZ-hypergeometric}} Systems for {{Feynman}} Integrals},
  author = {Feng, Tai-Fu and Chang, Chao-Hsi and Chen, Jian-Bin and Zhang, Hai-Bin},
  date = {2020-04},
  journaltitle = {Nuclear Physics B},
  shortjournal = {Nuclear Physics B},
  volume = {953},
  eprint = {1912.01726},
  eprinttype = {arxiv},
  primaryclass = {hep-th},
  pages = {114952},
  issn = {05503213},
  doi = {10.1016/j.nuclphysb.2020.114952},
  url = {http://arxiv.org/abs/1912.01726},
  archiveprefix = {arXiv},
  keywords = {High Energy Physics - Theory}
}

@article{Fernandez-FernandezIrregularHypergeometricDmodules2009,
  title = {Irregular Hypergeometric {{D-modules}}},
  author = {Fernández-Fernández, María-Cruz},
  date = {2010},
  journaltitle = {Adv. Math.},
  volume = {224},
  number = {5},
  eprint = {0906.3478},
  eprinttype = {arxiv},
  pages = {1735--1764},
  url = {http://arxiv.org/abs/0906.3478},
  archiveprefix = {arXiv},
  langid = {english},
  keywords = {32C38; 13N10,Mathematics - Algebraic Geometry}
}

@article{FeynmanSpaceTimeApproachQuantum1949,
  title = {Space-{{Time Approach}} to {{Quantum Electrodynamics}}},
  author = {Feynman, R. P.},
  date = {1949},
  journaltitle = {Physical Review},
  shortjournal = {Phys. Rev.},
  volume = {76},
  number = {6},
  pages = {769}
}

@article{FleischerNewHypergeometricRepresentation2003,
  title = {A {{New Hypergeometric Representation}} of {{One-Loop Scalar Integrals}} in {$d$} {{Dimensions}}},
  author = {Fleischer, J. and Jegerlehner, F. and Tarasov, O. V.},
  date = {2003-11},
  journaltitle = {Nuclear Physics B},
  shortjournal = {Nuclear Physics B},
  volume = {672},
  number = {1-2},
  eprint = {hep-ph/0307113},
  eprinttype = {arxiv},
  pages = {303--328},
  issn = {05503213},
  doi = {10.1016/j.nuclphysb.2003.09.004},
  url = {http://arxiv.org/abs/hep-ph/0307113},
  archiveprefix = {arXiv},
  langid = {english},
  keywords = {High Energy Physics - Phenomenology}
}

@article{ForsbergLaurentDeterminantsArrangements2000,
  title = {Laurent {{Determinants}} and {{Arrangements}} of {{Hyperplane Amoebas}}},
  author = {Forsberg, Mikael and Passare, Mikael and Tsikh, August},
  date = {2000-04-15},
  journaltitle = {Advances in Mathematics},
  shortjournal = {Advances in Mathematics},
  volume = {151},
  number = {1},
  pages = {45--70},
  issn = {0001-8708},
  doi = {10.1006/aima.1999.1856},
  url = {https://www.sciencedirect.com/science/article/pii/S000187089991856X},
  langid = {english}
}

@thesis{ForsgardHypersurfaceCoamoebasIntegral2012,
  title = {On {{Hypersurface Coamoebas}} and {{Integral Representations}} of {{A-Hypergeometric Functions}}},
  author = {Forsgård, Jens},
  date = {2012},
  institution = {{Stockholm University}}
}

@article{ForsgardOrderMapHypersurface2015,
  title = {On the Order Map for Hypersurface Coamoebas},
  author = {Forsgård, Jens and Johansson, Petter},
  date = {2015-04},
  journaltitle = {Arkiv för Matematik},
  shortjournal = {Ark. Mat.},
  volume = {53},
  number = {1},
  pages = {79--104},
  issn = {0004-2080},
  doi = {10.1007/s11512-013-0195-y},
  url = {http://projecteuclid.org/euclid.afm/1485802698},
  langid = {english}
}

@thesis{ForsgardTropicalAspectsReal2015,
  title = {Tropical Aspects of Real Polynomials and Hypergeometric Functions},
  author = {Forsgård, Jens},
  date = {2015},
  location = {{Stockholm}},
  url = {http://urn.kb.se/resolve?urn=urn:nbn:se:su:diva-116358},
  langid = {english}
}

@unpublished{ForsgardZariskiTheoremMonodromy2020,
  title = {A {{Zariski}} Theorem for Monodromy of {{A-hypergeometric}} Systems},
  author = {Forsgård, Jens and Matusevich, Laura Felicia},
  date = {2020-05-01},
  shortjournal = {ArXiv200500275 Math},
  eprint = {2005.00275},
  eprinttype = {arxiv},
  primaryclass = {math},
  url = {http://arxiv.org/abs/2005.00275},
  archiveprefix = {arXiv},
  langid = {english},
  keywords = {Mathematics - Algebraic Geometry,Mathematics - Algebraic Topology,Mathematics - Analysis of PDEs,Mathematics - Number Theory}
}

@incollection{GawrilowPolymakeFrameworkAnalyzing2000,
  title = {Polymake: A {{Framework}} for {{Analyzing Convex Polytopes}}},
  shorttitle = {Polymake},
  booktitle = {Polytopes — {{Combinatorics}} and {{Computation}}},
  author = {Gawrilow, Ewgenij and Joswig, Michael},
  editor = {Kalai, Gil and Ziegler, Günter M.},
  date = {2000},
  series = {{{DMV Seminar}}},
  pages = {43--73},
  publisher = {{Birkhäuser}},
  location = {{Basel}},
  doi = {10.1007/978-3-0348-8438-9_2},
  url = {https://doi.org/10.1007/978-3-0348-8438-9_2},
  isbn = {978-3-0348-8438-9},
  langid = {english},
  keywords = {Analyze Convex,Convex Hull,Positive Orthant,Regular Polytopes,Rule Base}
}

@article{GelfandAdiscriminantsCayleyKoszulComplexes1990,
  title = {A-Discriminants and {{Cayley-Koszul}} Complexes},
  author = {Gelfand, I. M. and Kapranov, M. M. and Zelevinskii, A. V.},
  date = {1990},
  journaltitle = {Soviet Mathematics Doklady},
  shortjournal = {Soviet Math. Dokl.},
  volume = {40},
  number = {1},
  pages = {239--243},
  annotation = {translation of Dokl. Akad. Nausk SSSR vol. 307 no 6 (1989), pp. 1308-1311}
}

@book{GelfandCollectedPapersVol1989,
  title = {Collected Papers. {{Vol}}. 3},
  shorttitle = {Collected Papers. {{Vol}}. 3},
  author = {Gel'fand, Izrail M. and Gindikin, Semen G.},
  date = {1989},
  publisher = {{Springer}},
  location = {{Berlin}},
  isbn = {978-3-540-19399-9 978-0-387-19399-1},
  langid = {english},
  pagetotal = {1075},
  annotation = {OCLC: 832316339}
}

@article{GelfandDiscriminantsPolynomialsMany1990,
  title = {Discriminants of Polynomials in Many Variables},
  author = {Gelfand, I. M. and Zelevinskii, A. V. and Kapranov, M. M.},
  date = {1990},
  journaltitle = {Functional Analysis and Its Applications},
  shortjournal = {Funct Anal Its Appl},
  volume = {24},
  number = {1},
  pages = {1--4},
  issn = {0016-2663, 1573-8485},
  doi = {10.1007/BF01077912},
  url = {http://link.springer.com/10.1007/BF01077912},
  langid = {english}
}

@article{GelfandDiscriminantsPolynomialsSeveral1991,
  title = {Discriminants of {{Polynomials}} in {{Several Variables}} and {{Triangulations}} of {{Newton Polyhedra}}},
  author = {Gelfand, I M and Zelevinskii, A. V. and Kapranov, M. M.},
  date = {1991},
  journaltitle = {Leningrad mathematical journal},
  shortjournal = {Leningrad Math. J.},
  volume = {2},
  number = {3},
  pages = {449--505}
}

@book{GelfandDiscriminantsResultantsMultidimensional1994,
  title = {Discriminants, {{Resultants}}, and {{Multidimensional Determinants}}},
  author = {Gelfand, Israel M. and Kapranov, Mikhail M. and Zelevinsky, Andrei V.},
  date = {1994},
  publisher = {{Birkhäuser Boston}},
  location = {{Boston}},
  doi = {10.1007/978-0-8176-4771-1},
  url = {http://link.springer.com/10.1007/978-0-8176-4771-1},
  isbn = {978-0-8176-4770-4 978-0-8176-4771-1},
  langid = {english}
}

@article{GelfandEquationsHypergeometricType1988,
  title = {Equations of Hypergeometric Type and {{Newton}} Polyhedra},
  author = {Gelfand, I. M. and Zelevinsky, A.V and Kapranov, M. M.},
  date = {1988},
  journaltitle = {Soviet Mathematics Doklady},
  shortjournal = {Soviet Math. Dokl.},
  volume = {37},
  number = {3},
  pages = {678}
}

@article{GelfandGeneralHypergeometricSystems1992,
  title = {General Hypergeometric Systems of Equations and Series of Hypergeometric Type},
  author = {Gelfand, I M and Graev, M I and Retakh, V S},
  date = {1992-08-31},
  journaltitle = {Russian Mathematical Surveys},
  shortjournal = {Russ. Math. Surv.},
  volume = {47},
  number = {4},
  pages = {1--88},
  issn = {0036-0279, 1468-4829},
  doi = {10.1070/RM1992v047n04ABEH000915},
  url = {http://stacks.iop.org/0036-0279/47/i=4/a=R01?key=crossref.3019f9d8ae56e88fc1b71667e6d4c517},
  langid = {english}
}

@article{GelfandGeneralizedEulerIntegrals1990,
  title = {Generalized {{Euler}} Integrals and {{A-hypergeometric}} Functions},
  author = {Gelfand, I.M and Kapranov, M.M and Zelevinsky, A.V},
  date = {1990-12},
  journaltitle = {Advances in Mathematics},
  shortjournal = {Advances in Mathematics},
  volume = {84},
  number = {2},
  pages = {255--271},
  issn = {00018708},
  doi = {10.1016/0001-8708(90)90048-R},
  url = {https://linkinghub.elsevier.com/retrieve/pii/000187089090048R},
  langid = {english}
}

@article{GelfandHolonomicSystemsEquations1988,
  title = {Holonomic Systems of Equations and Series of Hypergeometric Type},
  shorttitle = {Collected Papers. {{Vol}}. 3, Pp 977-982},
  author = {Gelfand, I M and Graev, M I and Zelevinskii, A V},
  date = {1988},
  journaltitle = {Soviet Mathematics Doklady},
  shortjournal = {Sov. Math. Dokl.},
  volume = {36},
  langid = {english},
  annotation = {English translation of Dokl. Akad. Nauk SSSR 295 (1) (1987) 14-19}
}

@article{GelfandHypergeometricFunctionsToral1989,
  title = {Hypergeometric Functions and Toral Manifolds},
  author = {Gelfand, I. M. and Zelevinskii, A. V. and Kapranov, M. M.},
  date = {1989},
  journaltitle = {Functional Analysis and Its Applications},
  shortjournal = {Funct Anal Its Appl},
  volume = {23},
  number = {2},
  pages = {94--106},
  issn = {0016-2663, 1573-8485},
  doi = {10.1007/BF01078777},
  url = {http://link.springer.com/10.1007/BF01078777},
  langid = {english}
}

@incollection{GelfandHypergeometricFunctionsToric1991,
  title = {Hypergeometric {{Functions}}, {{Toric Varieties}} and {{Newton Polyhedra}}},
  booktitle = {{{ICM-90 Satellite Conference Proceedings}}: {{Special Functions}}},
  author = {Gelfand, I. M. and Kapranov, M. M. and Zelevinskii, A. V.},
  editor = {Kashiwara, Masaki and Miwa, Tetsuji},
  date = {1991},
  pages = {104--121},
  publisher = {{Springer Japan}},
  location = {{Tokyo}},
  doi = {10.1007/978-4-431-68170-0},
  url = {http://link.springer.com/10.1007/978-4-431-68170-0},
  isbn = {978-4-431-70085-2 978-4-431-68170-0},
  langid = {english},
  keywords = {Convex Hull,Cotangent Bundle,Hypergeometric Function,Laurent Polynomial,Toric Variety}
}

@article{GelfandNewtonPolytopesClassical1990,
  title = {Newton Polytopes of the Classical Resultant and Discriminant},
  author = {Gelfand, I.M and Kapranov, M.M and Zelevinsky, A.V},
  date = {1990-12},
  journaltitle = {Advances in Mathematics},
  shortjournal = {Advances in Mathematics},
  volume = {84},
  number = {2},
  pages = {237--254},
  issn = {00018708},
  doi = {10.1016/0001-8708(90)90047-Q},
  url = {https://linkinghub.elsevier.com/retrieve/pii/000187089090047Q},
  langid = {english}
}

@article{GolubevaReggeGelfandProblem2014,
  title = {On the {{Regge}}–{{Gelfand Problem}} of {{Construction}} of a {{Pfaff System}} of {{Fuchsian Type}} with a {{Given Singular Divisor}}},
  author = {Golubeva, V. A.},
  date = {2014-11-01},
  journaltitle = {Journal of Mathematical Sciences},
  shortjournal = {J Math Sci},
  volume = {202},
  number = {5},
  pages = {653--666},
  issn = {1573-8795},
  doi = {10.1007/s10958-014-2070-1},
  url = {https://doi.org/10.1007/s10958-014-2070-1},
  langid = {english}
}

@misc{GoncharovMultiplePolylogarithmsMixed2001,
  title = {Multiple Polylogarithms and Mixed {{Tate}} Motives},
  author = {Goncharov, A. B.},
  date = {2001-05-15},
  number = {arXiv:math/0103059},
  eprint = {math/0103059},
  eprinttype = {arxiv},
  publisher = {{arXiv}},
  doi = {10.48550/arXiv.math/0103059},
  url = {http://arxiv.org/abs/math/0103059},
  archiveprefix = {arXiv},
  keywords = {14F42,Mathematics - Algebraic Geometry,Mathematics - Number Theory}
}

@unpublished{GonzalezGeneralizedRamanujanMaster2011,
  title = {A Generalized {{Ramanujan Master Theorem}} Applied to the Evaluation of {{Feynman}} Diagrams},
  author = {Gonzalez, Ivan and Moll, V. H. and Schmidt, Ivan},
  date = {2011-03-02},
  shortjournal = {ArXiv11030588 Hep-Th Physicsmath-Ph},
  eprint = {1103.0588},
  eprinttype = {arxiv},
  primaryclass = {hep-th, physics:math-ph},
  url = {http://arxiv.org/abs/1103.0588},
  archiveprefix = {arXiv},
  langid = {english},
  keywords = {High Energy Physics - Theory,Mathematical Physics}
}

@book{GrahamConcreteMathematicsFoundation1994,
  title = {Concrete Mathematics: A Foundation for Computer Science},
  shorttitle = {Concrete Mathematics},
  author = {Graham, Ronald L. and Knuth, Donald Ervin and Patashnik, Oren},
  date = {1994},
  edition = {2nd ed},
  publisher = {{Addison-Wesley}},
  location = {{Reading, Mass}},
  isbn = {978-0-201-55802-9},
  langid = {english},
  pagetotal = {657},
  keywords = {Computer science,Mathematics}
}

@misc{GraysonMacaulay2SoftwareSystem,
  title = {Macaulay2, a Software System for Research in Algebraic Geometry},
  author = {Grayson, Daniel R. and Stillman, Michael E.},
  annotation = {Available at http://www.math.uiuc.edu/Macaulay2/}
}

@article{GreuelSINGULARComputerAlgebra2009,
  title = {{{SINGULAR}}: A Computer Algebra System for Polynomial Computations},
  shorttitle = {{{SINGULAR}}},
  author = {Greuel, G. M. and Pfister, G. and Schönemann, H.},
  date = {2009-02-06},
  journaltitle = {ACM Communications in Computer Algebra},
  shortjournal = {ACM Commun. Comput. Algebra},
  volume = {42},
  number = {3},
  pages = {180--181},
  issn = {1932-2240},
  doi = {10.1145/1504347.1504377},
  url = {https://dl.acm.org/doi/10.1145/1504347.1504377},
  langid = {english}
}

@book{GreuelSingularIntroductionCommutative2002,
  title = {A {{Singular Introduction}} to {{Commutative Algebra}}},
  author = {Greuel, Gert-Martin and Pfister, Gerhard},
  date = {2002},
  publisher = {{Springer Berlin Heidelberg}},
  location = {{Berlin, Heidelberg}},
  doi = {10.1007/978-3-662-04963-1},
  url = {http://link.springer.com/10.1007/978-3-662-04963-1},
  isbn = {978-3-540-42897-8 978-3-662-04963-1},
  langid = {english}
}

@book{GrunbaumConvexPolytopesSecond2003,
  title = {Convex {{Polytopes}}: {{Second Edition Prepared}} by {{Volker Kaible}}, {{Victor Klee}}, and {{Günter Ziegler}}},
  shorttitle = {Convex {{Polytopes}}},
  author = {Grünbaum, Branko},
  date = {2003},
  edition = {2. Ed},
  publisher = {{Springer Verlag New York}},
  location = {{New York, NY}},
  isbn = {978-0-387-40409-7},
  langid = {english},
  pagetotal = {466}
}

@article{GunsonUnitarityMassShell1965,
  title = {Unitarity and {{On}}‐{{Mass}}‐{{Shell Analyticity}} as a {{Basis}} for {{S}}‐{{Matrix Theories}}. {{I}}},
  author = {Gunson, J.},
  date = {1965-06},
  journaltitle = {Journal of Mathematical Physics},
  shortjournal = {J. Math. Phys.},
  volume = {6},
  number = {6},
  pages = {827--844},
  publisher = {{American Institute of Physics}},
  issn = {0022-2488},
  doi = {10.1063/1.1704342},
  url = {https://aip.scitation.org/doi/10.1063/1.1704342}
}

@article{HaiConvergenceProblemCertain1995,
  title = {The Convergence Problem of Certain Multiple {{Mellin-Barnes}} Contour Integrals Representing {{H-functions}} in Several Variables},
  author = {Hai, N.T. and Srivastava, H.M.},
  date = {1995-03},
  journaltitle = {Computers \& Mathematics with Applications},
  shortjournal = {Computers \& Mathematics with Applications},
  volume = {29},
  number = {6},
  pages = {17--25},
  issn = {08981221},
  doi = {10.1016/0898-1221(95)00003-H},
  url = {https://linkinghub.elsevier.com/retrieve/pii/089812219500003H},
  langid = {english}
}

@unpublished{HannesdottirWhatVarepsilonSmatrix2022,
  title = {What Is the {$i\varepsilon$} for the {{S-matrix}}?},
  author = {Hannesdottir, Holmfridur S. and Mizera, Sebastian},
  date = {2022-04-06},
  shortjournal = {ArXiv220402988 Hep-Ph Physicshep-Th},
  eprint = {2204.02988},
  eprinttype = {arxiv},
  primaryclass = {hep-ph, physics:hep-th},
  url = {http://arxiv.org/abs/2204.02988},
  archiveprefix = {arXiv},
  keywords = {High Energy Physics - Phenomenology,High Energy Physics - Theory}
}

@book{HarayGraphTheory1969,
  title = {Graph {{Theory}}},
  author = {Haray, Frank},
  date = {1969},
  publisher = {{Addison-Wesley Pub. Co}}
}

@book{HartshorneAlgebraicGeometry1977,
  title = {Algebraic {{Geometry}}},
  author = {Hartshorne, Robin},
  date = {1977},
  series = {Graduate {{Texts}} in {{Mathematics}}},
  volume = {52},
  publisher = {{Springer New York}},
  location = {{New York, NY}},
  doi = {10.1007/978-1-4757-3849-0},
  url = {http://link.springer.com/10.1007/978-1-4757-3849-0},
  isbn = {978-1-4419-2807-8 978-1-4757-3849-0},
  langid = {english}
}

@incollection{HenkBasicPropertiesConvex2017,
  title = {Basic Properties of Convex Polytopes},
  booktitle = {Handbook of {{Discrete}} and {{Computational Geometry}}},
  author = {Henk, Martin and Richter-Gebert, Jurgen and Ziegler, Gunter M},
  date = {2017},
  langid = {english}
}

@misc{HiddingAllOrdersStructure2019,
  title = {All Orders Structure and Efficient Computation of Linearly Reducible Elliptic {{Feynman}} Integrals},
  author = {Hidding, Martijn and Moriello, Francesco},
  date = {2019-01-15},
  number = {arXiv:1712.04441},
  eprint = {1712.04441},
  eprinttype = {arxiv},
  primaryclass = {hep-ph},
  publisher = {{arXiv}},
  doi = {10.48550/arXiv.1712.04441},
  url = {http://arxiv.org/abs/1712.04441},
  archiveprefix = {arXiv},
  keywords = {High Energy Physics - Phenomenology}
}

@article{HoffmanMultipleHarmonicSeries1992,
  title = {Multiple Harmonic Series.},
  author = {Hoffman, Michael E.},
  date = {1992-01},
  journaltitle = {Pacific Journal of Mathematics},
  shortjournal = {Pac. J. Math.},
  volume = {152},
  number = {2},
  pages = {275--290},
  publisher = {{Pacific Journal of Mathematics, A Non-profit Corporation}},
  issn = {0030-8730},
  url = {https://projecteuclid.org/journals/pacific-journal-of-mathematics/volume-152/issue-2/Multiple-harmonic-series/pjm/1102636166.full},
  keywords = {11M06,40A05}
}

@book{HolmeRoyalRoadAlgebraic2012,
  title = {A {{Royal Road}} to {{Algebraic Geometry}}},
  author = {Holme, Audun},
  date = {2012},
  publisher = {{Springer Berlin Heidelberg}},
  location = {{Berlin, Heidelberg}},
  doi = {10.1007/978-3-642-19225-8},
  url = {http://link.springer.com/10.1007/978-3-642-19225-8},
  isbn = {978-3-642-19224-1 978-3-642-19225-8},
  langid = {english}
}

@article{HornUberHypergeometrischeFunktionen1940,
  title = {Über Hypergeometrische {{Funktionen}} Zweier {{Veränderlichen}}},
  author = {Horn, J},
  date = {1940},
  pages = {32}
}

@article{HosonoMirrorSymmetryMirror1995,
  title = {Mirror {{Symmetry}}, {{Mirror Map}} and {{Applications}} to {{Complete Intersection Calabi-Yau Spaces}}},
  author = {Hosono, S. and Klemm, A. and Theisen, S. and Yau, Shing-Tung},
  date = {1995-01},
  journaltitle = {Nuclear Physics B},
  shortjournal = {Nuclear Physics B},
  volume = {433},
  number = {3},
  eprint = {hep-th/9406055},
  eprinttype = {arxiv},
  pages = {501--552},
  issn = {05503213},
  doi = {10.1016/0550-3213(94)00440-P},
  url = {http://arxiv.org/abs/hep-th/9406055},
  archiveprefix = {arXiv},
  langid = {english},
  keywords = {High Energy Physics - Theory,Mathematics - Quantum Algebra}
}

@article{HosonoMirrorSymmetryMirror1995a,
  title = {Mirror {{Symmetry}}, {{Mirror Map}} and {{Applications}} to {{Calabi-Yau Hypersurfaces}}},
  author = {Hosono, S. and Klemm, A. and Theisen, S. and Yau, S.-T.},
  date = {1995-02},
  journaltitle = {Communications in Mathematical Physics},
  shortjournal = {Commun.Math. Phys.},
  volume = {167},
  number = {2},
  eprint = {hep-th/9308122},
  eprinttype = {arxiv},
  pages = {301--350},
  issn = {0010-3616, 1432-0916},
  doi = {10.1007/BF02100589},
  url = {http://arxiv.org/abs/hep-th/9308122},
  archiveprefix = {arXiv},
  langid = {english},
  keywords = {High Energy Physics - Theory,Mathematics - Algebraic Geometry}
}

@article{HuberCayleyTrickLifting2000,
  title = {The {{Cayley Trick}}, Lifting Subdivisions and the {{Bohne-Dress}} Theorem on Zonotopal Tilings},
  author = {Huber, Birkett and Rambau, J\&\#x000F6;rg and Santos, Francisco},
  date = {2000-06-01},
  journaltitle = {Journal of the European Mathematical Society},
  shortjournal = {Journal of the European Mathematical Society},
  volume = {2},
  number = {2},
  pages = {179--198},
  issn = {1435-9855, 1435-9863},
  doi = {10.1007/s100970050003},
  url = {http://www.ems-ph.org/doi/10.1007/s100970050003},
  langid = {english}
}

@book{HwaHomologyFeynmanIntegrals1966,
  title = {Homology and {{Feynman}} Integrals},
  author = {Hwa, Rudolph C. and Teplitz, Vigdor L.},
  date = {1966},
  publisher = {{W.A. Benjamin}}
}

@article{Inayat-HussainNewPropertiesHypergeometric1987,
  title = {New Properties of Hypergeometric Series Derivable from {{Feynman}} Integrals {{II}}. {{A}} Generalisation of the {{H}} Function},
  author = {Inayat-Hussain, A A},
  date = {1987-09-11},
  journaltitle = {Journal of Physics A: Mathematical and General},
  shortjournal = {J. Phys. A: Math. Gen.},
  volume = {20},
  number = {13},
  pages = {4119--4128},
  issn = {0305-4470, 1361-6447},
  doi = {10.1088/0305-4470/20/13/020},
  url = {https://iopscience.iop.org/article/10.1088/0305-4470/20/13/020},
  langid = {english}
}

@article{Inayat-HussainNewPropertiesHypergeometric1987a,
  title = {New Properties of Hypergeometric Series Derivable from {{Feynman}} Integrals. {{I}}. {{Transformation}} and Reduction Formulae},
  author = {Inayat-Hussain, A A},
  date = {1987-09-11},
  journaltitle = {Journal of Physics A: Mathematical and General},
  shortjournal = {J. Phys. A: Math. Gen.},
  volume = {20},
  number = {13},
  pages = {4109--4117},
  issn = {0305-4470, 1361-6447},
  doi = {10.1088/0305-4470/20/13/019},
  url = {https://iopscience.iop.org/article/10.1088/0305-4470/20/13/019},
  langid = {english}
}

@article{IslamLeadingLandauCurves1966,
  title = {Leading {{Landau Curves}} of a {{Class}} of {{Feynman Diagrams}}},
  author = {Islam, Jamal N.},
  date = {1966-04-01},
  journaltitle = {Journal of Mathematical Physics},
  shortjournal = {J. Math. Phys.},
  volume = {7},
  number = {4},
  pages = {652--660},
  publisher = {{American Institute of Physics}},
  issn = {0022-2488},
  doi = {10.1063/1.1704978},
  url = {https://aip.scitation.org/doi/10.1063/1.1704978}
}

@book{ItzyksonQuantumFieldTheory1980,
  title = {Quantum Field Theory},
  author = {Itzykson, Claude and Zuber, Jean Bernard},
  date = {1980},
  series = {International Series in Pure and Applied Physics},
  publisher = {{McGraw-Hill International Book Co}},
  location = {{New York}},
  isbn = {978-0-07-032071-0},
  keywords = {Quantum field theory}
}

@software{JensenGfanSoftwareSystem,
  title = {Gfan, a Software System for {{Gröbner}} Fans and Tropical Varieties},
  author = {Jensen, Anders N.},
  url = {http://home.imf.au.dk/jensen/software/gfan/gfan.html}
}

@article{JohanssonArgumentCycleCoamoeba2013,
  title = {The Argument Cycle and the Coamoeba},
  author = {Johansson, Petter},
  date = {2013-03-01},
  journaltitle = {Complex Variables and Elliptic Equations},
  shortjournal = {Complex Var. Elliptic Equ.},
  volume = {58},
  number = {3},
  pages = {373--384},
  publisher = {{Taylor \& Francis}},
  issn = {1747-6933},
  doi = {10.1080/17476933.2011.592581},
  url = {https://doi.org/10.1080/17476933.2011.592581},
  keywords = {14M10,14M25,coamoeba,degree of a continuous mapping,Gröbner basis,hyperplane arrangement,initial form},
  annotation = {\_eprint: https://doi.org/10.1080/17476933.2011.592581}
}

@report{JohanssonCoamoebas2010,
  title = {Coamoebas},
  author = {Johansson, Petter},
  date = {2010},
  series = {Research {{Reports}} in {{Mathematics}}},
  number = {06},
  institution = {{Department of Mathematics}},
  location = {{Stockholm University}}
}

@article{JordanParallelEnumerationTriangulations2018,
  title = {Parallel {{Enumeration}} of {{Triangulations}}},
  author = {Jordan, Charles and Joswig, Michael and Kastner, Lars},
  date = {2018-07-13},
  journaltitle = {The Electronic Journal of Combinatorics},
  shortjournal = {Electron. J. Comb.},
  issn = {1077-8926},
  doi = {10.37236/7318},
  url = {https://www.combinatorics.org/ojs/index.php/eljc/article/view/v25i3p6},
  langid = {english},
  keywords = {Reverse search,Triangulations of point configurations}
}

@article{KaiserPhysicsFeynmanDiagrams2005,
  title = {Physics and {{Feynman}}'s {{Diagrams}}},
  author = {Kaiser, David},
  date = {2005},
  journaltitle = {American Scientist},
  shortjournal = {Am. Sci.},
  volume = {93},
  langid = {english}
}

@article{KalmykovFeynmanDiagramsDifferential2009,
  title = {Feynman {{Diagrams}}, {{Differential Reduction}}, and {{Hypergeometric Functions}}},
  author = {Kalmykov, M. Yu and Bytev, V. V. and Kniehl, Bernd A. and Ward, B. F. L. and Yost, S. A.},
  date = {2008},
  journaltitle = {PoS ACAT08},
  volume = {125},
  eprint = {0901.4716},
  eprinttype = {arxiv},
  primaryclass = {hep-ph, physics:hep-th, physics:math-ph},
  doi = {10.48550/arXiv.0901.4716},
  url = {http://arxiv.org/abs/0901.4716},
  archiveprefix = {arXiv},
  keywords = {High Energy Physics - Phenomenology,High Energy Physics - Theory,Mathematical Physics,Mathematics - Analysis of PDEs}
}

@unpublished{KalmykovHypergeometricFunctionsTheir2008,
  title = {Hypergeometric Functions, Their Epsilon Expansions and {{Feynman}} Diagrams},
  author = {Kalmykov, M. Yu and Kniehl, Bernd A. and Ward, B. F. L. and Yost, S. A.},
  date = {2008-10-31},
  shortjournal = {ArXiv08103238 Hep-Ph Physicshep-Th Physicsmath-Ph},
  eprint = {0810.3238},
  eprinttype = {arxiv},
  primaryclass = {hep-ph, physics:hep-th, physics:math-ph},
  url = {http://arxiv.org/abs/0810.3238},
  archiveprefix = {arXiv},
  langid = {english},
  keywords = {High Energy Physics - Phenomenology,High Energy Physics - Theory,Mathematical Physics,Mathematics - Classical Analysis and ODEs}
}

@article{KapranovCharacterizationAdiscriminantalHypersurfaces1991,
  title = {A Characterization of {{A-discriminantal}} Hypersurfaces in Terms of the Logarithmic {{Gauss}} Map},
  author = {Kapranov, M. M.},
  date = {1991-03},
  journaltitle = {Mathematische Annalen},
  shortjournal = {Math. Ann.},
  volume = {290},
  number = {1},
  pages = {277--285},
  issn = {0025-5831, 1432-1807},
  doi = {10.1007/BF01459245},
  url = {http://link.springer.com/10.1007/BF01459245},
  langid = {english}
}

@unpublished{KapranovThermodynamicsMomentMap2011,
  title = {Thermodynamics and the Moment Map},
  author = {Kapranov, Mikhail},
  date = {2011-08-17},
  shortjournal = {ArXiv11083472 Math},
  eprint = {1108.3472},
  eprinttype = {arxiv},
  primaryclass = {math},
  url = {http://arxiv.org/abs/1108.3472},
  archiveprefix = {arXiv},
  keywords = {Mathematics - Quantum Algebra}
}

@article{KashiwaraConjectureReggeSato1976,
  title = {On a {{Conjecture}} of {{Regge}} and {{Sato}} on {{Feynman Integrals}}},
  author = {Kashiwara, Masaki and Kawai, Takahiro},
  date = {1976},
  journaltitle = {Proc. Japan Acad.},
  shortjournal = {Proc. Japan Acad.},
  number = {52},
  pages = {161}
}

@article{KashiwaraHolonomicSystemsLinear1976,
  title = {Holonomic Systems of Linear Differential Equations and {{Feynman}} Integrals},
  author = {Kashiwara, Masaki and Kawai, Takahiro},
  date = {1976},
  journaltitle = {Publications of the Research Institute for Mathematical Sciences},
  shortjournal = {Publ. Res. Inst. Math. Sci.},
  volume = {12},
  pages = {131--140},
  issn = {0034-5318},
  doi = {10.2977/prims/1195196602},
  url = {http://www.ems-ph.org/doi/10.2977/prims/1195196602},
  langid = {english}
}

@book{KaupHolomorphicFunctionsSeveral1983,
  title = {Holomorphic Functions of Several Variables: An Introduction to the Fundamental Theory},
  shorttitle = {Holomorphic Functions of Several Variables},
  author = {Kaup, Ludger and Kaup, Burchard and Barthel, Gottfried},
  date = {1983},
  series = {De {{Gruyter}} Studies in Mathematics},
  number = {3},
  publisher = {{W. de Gruyter}},
  location = {{Berlin ; New York}},
  isbn = {978-3-11-004150-7},
  langid = {english},
  keywords = {Holomorphic functions}
}

@inproceedings{KawaiMicrolocalStudySmatrix1975,
  title = {Microlocal Study of {{S-matrix}} Singularity Structure},
  booktitle = {International {{Symposium}} on {{Mathematical Problems}} in {{Theoretical Physics}}},
  author = {Kawai, Takahiro and Stapp, Henry P.},
  editor = {Araki, Huzihiro},
  date = {1975},
  series = {Lecture {{Notes}} in {{Physics}}},
  pages = {38--48},
  publisher = {{Springer}},
  location = {{Berlin, Heidelberg}},
  doi = {10.1007/BFb0013298},
  isbn = {978-3-540-37509-8},
  langid = {english},
  keywords = {External Line,Internal Line,Landau Equation,Singularity Spectrum,Structure Theorem}
}

@article{KazarnovskiiNewtonPolytopesTropical2021,
  title = {Newton Polytopes and Tropical Geometry},
  author = {Kazarnovskii, B. Ya and Khovanskii, A. G. and Esterov, A. I.},
  date = {2021-02-01},
  journaltitle = {Russian Mathematical Surveys},
  shortjournal = {Russ. Math. Surv.},
  volume = {76},
  number = {1},
  pages = {91},
  publisher = {{IOP Publishing}},
  issn = {0036-0279},
  doi = {10.1070/RM9937},
  url = {https://iopscience.iop.org/article/10.1070/RM9937/meta},
  langid = {english}
}

@article{KershawFeynmanAmplitudesPower1973,
  title = {Feynman {{Amplitudes}} as {{Power Series}}},
  author = {Kershaw, David S.},
  date = {1973-10-15},
  journaltitle = {Physical Review D},
  shortjournal = {Phys. Rev. D},
  volume = {8},
  number = {8},
  pages = {2708--2713},
  publisher = {{American Physical Society}},
  doi = {10.1103/PhysRevD.8.2708},
  url = {https://link.aps.org/doi/10.1103/PhysRevD.8.2708}
}

@article{KirchhoffUeberAuflosungGleichungen1847,
  title = {Ueber Die {{Auflösung}} Der {{Gleichungen}}, Auf Welche Man Bei Der {{Untersuchung}} Der Linearen {{Vertheilung}} Galvanischer {{Ströme}} Geführt Wird},
  author = {Kirchhoff, G.},
  date = {1847},
  journaltitle = {Annalen der Physik},
  shortjournal = {Ann. Phys.},
  volume = {148},
  number = {12},
  pages = {497--508},
  issn = {1521-3889},
  doi = {10.1002/andp.18471481202},
  url = {https://onlinelibrary.wiley.com/doi/abs/10.1002/andp.18471481202},
  langid = {english},
  annotation = {\_eprint: https://onlinelibrary.wiley.com/doi/pdf/10.1002/andp.18471481202}
}

@article{KlausenHypergeometricSeriesRepresentations2019,
  title = {Hypergeometric {{Series Representations}} of {{Feynman Integrals}} by {{GKZ Hypergeometric Systems}}},
  author = {Klausen, René Pascal},
  date = {2019},
  journaltitle = {Journal of High Energy Physics},
  shortjournal = {J. High Energ. Phys.},
  volume = {2020},
  number = {4},
  eprint = {1910.08651},
  eprinttype = {arxiv},
  pages = {121},
  issn = {1029-8479},
  doi = {10.1007/JHEP04(2020)121},
  url = {http://arxiv.org/abs/1910.08651},
  archiveprefix = {arXiv},
  keywords = {High Energy Physics - Theory}
}

@article{KlausenKinematicSingularitiesFeynman2022,
  title = {Kinematic Singularities of {{Feynman}} Integrals and Principal {{A-determinants}}},
  author = {Klausen, René Pascal},
  date = {2022-02},
  journaltitle = {Journal of High Energy Physics},
  shortjournal = {J. High Energ. Phys.},
  volume = {2022},
  number = {2},
  eprint = {2109.07584},
  eprinttype = {arxiv},
  pages = {4},
  issn = {1029-8479},
  doi = {10.1007/JHEP02(2022)004},
  url = {http://arxiv.org/abs/2109.07584},
  archiveprefix = {arXiv},
  keywords = {High Energy Physics - Theory}
}

@article{KlemmLoopBananaAmplitude2020,
  title = {The \$l\$-Loop {{Banana Amplitude}} from {{GKZ Systems}} and Relative {{Calabi-Yau Periods}}},
  author = {Klemm, Albrecht and Nega, Christoph and Safari, Reza},
  date = {2020-04},
  journaltitle = {Journal of High Energy Physics},
  shortjournal = {J. High Energ. Phys.},
  volume = {2020},
  number = {4},
  eprint = {1912.06201},
  eprinttype = {arxiv},
  pages = {88},
  issn = {1029-8479},
  doi = {10.1007/JHEP04(2020)088},
  url = {http://arxiv.org/abs/1912.06201},
  archiveprefix = {arXiv},
  keywords = {High Energy Physics - Theory,Mathematical Physics}
}

@article{KniehlFindingNewRelationships2012,
  title = {Finding New Relationships between Hypergeometric Functions by Evaluating {{Feynman}} Integrals},
  author = {Kniehl, Bernd A. and Tarasov, Oleg V.},
  date = {2012-01},
  journaltitle = {Nuclear Physics B},
  shortjournal = {Nuclear Physics B},
  volume = {854},
  number = {3},
  eprint = {1108.6019},
  eprinttype = {arxiv},
  pages = {841--852},
  issn = {05503213},
  doi = {10.1016/j.nuclphysb.2011.09.015},
  url = {http://arxiv.org/abs/1108.6019},
  archiveprefix = {arXiv},
  langid = {english},
  keywords = {High Energy Physics - Phenomenology,High Energy Physics - Theory,Mathematical Physics}
}

@unpublished{KnuthTwoNotesNotation1992,
  title = {Two Notes on Notation},
  author = {Knuth, Donald E.},
  date = {1992-04-30},
  eprint = {math/9205211},
  eprinttype = {arxiv},
  url = {http://arxiv.org/abs/math/9205211},
  archiveprefix = {arXiv},
  keywords = {Mathematics - History and Overview}
}

@incollection{KomatsuMicrofunctionsPseudodifferentialEquations1973,
  title = {Microfunctions and Pseudo-Differential Equations},
  booktitle = {Hyperfunctions and Pseudo-Differential Equations: Proceedings of a Conference at {{Katata}}, 1971},
  editor = {Komatsu, Hikosaburo and Sato, Mikio and Kawai, Takahiro and Kashiwara, Masaki},
  date = {1973},
  series = {Lecture Notes in Mathematics},
  number = {287},
  pages = {265--529},
  publisher = {{Springer}},
  location = {{Berlin}},
  eventtitle = {Hyperfunctions and {{Pseudo-Differential Equations}}},
  isbn = {978-3-540-06218-9 978-0-387-06218-1},
  langid = {english},
  annotation = {OCLC: 649993}
}

@incollection{KontsevichPeriods2001,
  title = {Periods},
  booktitle = {Mathematics {{Unlimited}} — 2001 and {{Beyond}}},
  author = {Kontsevich, M and Zagier, D},
  editor = {Engquist, Björn and Schmid, Wilfried},
  date = {2001},
  pages = {771--808},
  publisher = {{Springer Berlin Heidelberg}},
  location = {{Berlin, Heidelberg}},
  doi = {10.1007/978-3-642-56478-9},
  url = {http://link.springer.com/10.1007/978-3-642-56478-9},
  isbn = {978-3-642-63114-6 978-3-642-56478-9},
  langid = {english}
}

@unpublished{KreimerPropertiesCorollaPolynomial2012,
  title = {Properties of the Corolla Polynomial of a 3-Regular Graph},
  author = {Kreimer, Dirk and Yeats, Karen},
  date = {2012-07-23},
  shortjournal = {ArXiv12075460 Hep-Th},
  eprint = {1207.5460},
  eprinttype = {arxiv},
  primaryclass = {hep-th},
  url = {http://arxiv.org/abs/1207.5460},
  archiveprefix = {arXiv},
  keywords = {05C31,High Energy Physics - Theory,Mathematics - Combinatorics}
}

@article{KreimerQuantizationGaugeFields2013,
  title = {Quantization of Gauge Fields, Graph Polynomials and Graph Cohomology},
  author = {Kreimer, Dirk and Sars, Matthias and van Suijlekom, Walter D.},
  options = {useprefix=true},
  date = {2013-09},
  journaltitle = {Annals of Physics},
  shortjournal = {Annals of Physics},
  volume = {336},
  eprint = {1208.6477},
  eprinttype = {arxiv},
  pages = {180--222},
  issn = {00034916},
  doi = {10.1016/j.aop.2013.04.019},
  url = {http://arxiv.org/abs/1208.6477},
  archiveprefix = {arXiv},
  keywords = {High Energy Physics - Theory}
}

@article{KubaNoteStirlingSeries2010,
  title = {A {{Note}} on {{Stirling Series}}},
  author = {Kuba, Markus and Prodinger, Helmut},
  date = {2010-09-01},
  journaltitle = {Integers A},
  volume = {10},
  number = {34},
  pages = {393--406},
  publisher = {{De Gruyter}},
  doi = {10.1515/integ.2010.034},
  url = {https://www.degruyter.com/document/doi/10.1515/integ.2010.034/html},
  langid = {english},
  keywords = {harmonic numbers,multiple zeta values,Stirling numbers of the first kind}
}

@article{LandauAnalyticPropertiesVertex1959,
  title = {On Analytic Properties of Vertex Parts in Quantum Field Theory},
  author = {Landau, L. D.},
  date = {1959},
  journaltitle = {Nuclear Physics},
  shortjournal = {Nuclear Physics},
  volume = {13},
  pages = {181--192}
}

@article{LeeCriticalPointsNumber2013,
  title = {Critical Points and Number of Master Integrals},
  author = {Lee, Roman N. and Pomeransky, Andrei A.},
  date = {2013-11},
  journaltitle = {Journal of High Energy Physics},
  shortjournal = {J. High Energ. Phys.},
  volume = {2013},
  number = {11},
  eprint = {1308.6676},
  eprinttype = {arxiv},
  pages = {165},
  issn = {1029-8479},
  doi = {10.1007/JHEP11(2013)165},
  url = {http://arxiv.org/abs/1308.6676},
  archiveprefix = {arXiv},
  langid = {english},
  keywords = {High Energy Physics - Phenomenology,High Energy Physics - Theory}
}

@article{LibbyMassDivergencesTwoparticle1978,
  title = {Mass Divergences in Two-Particle Inelastic Scattering},
  author = {Libby, Stephen B. and Sterman, George},
  date = {1978-12-15},
  journaltitle = {Physical Review D},
  shortjournal = {Phys. Rev. D},
  volume = {18},
  number = {12},
  pages = {4737--4745},
  publisher = {{American Physical Society}},
  doi = {10.1103/PhysRevD.18.4737},
  url = {https://link.aps.org/doi/10.1103/PhysRevD.18.4737}
}

@article{LowdonConditionsViolationCluster2016,
  title = {Conditions on the Violation of the Cluster Decomposition Property in {{QCD}}},
  author = {Lowdon, Peter},
  date = {2016-10},
  journaltitle = {Journal of Mathematical Physics},
  shortjournal = {Journal of Mathematical Physics},
  volume = {57},
  number = {10},
  eprint = {1511.02780},
  eprinttype = {arxiv},
  issn = {0022-2488, 1089-7658},
  doi = {10.1063/1.4965715},
  url = {http://arxiv.org/abs/1511.02780},
  archiveprefix = {arXiv},
  keywords = {High Energy Physics - Lattice,High Energy Physics - Phenomenology,High Energy Physics - Theory}
}

@article{LowensteinPowerCountingTheorem1975,
  title = {The Power Counting Theorem for {{Feynman}} Integrals with Massless Propagators},
  author = {Lowenstein, J. H. and Zimmermann, W.},
  date = {1975-02-01},
  journaltitle = {Communications in Mathematical Physics},
  shortjournal = {Commun.Math. Phys.},
  volume = {44},
  number = {1},
  pages = {73--86},
  issn = {1432-0916},
  doi = {10.1007/BF01609059},
  url = {https://doi.org/10.1007/BF01609059},
  langid = {english}
}

@unpublished{Matsubara-HeoLaplaceResidueEuler2018,
  title = {Laplace, {{Residue}}, and {{Euler}} Integral Representations of {{GKZ}} Hypergeometric Functions},
  author = {Matsubara-Heo, Saiei-Jaeyeong},
  date = {2018-08-21},
  shortjournal = {ArXiv180104075 Math},
  eprint = {1801.04075},
  eprinttype = {arxiv},
  primaryclass = {math},
  url = {http://arxiv.org/abs/1801.04075},
  archiveprefix = {arXiv},
  langid = {english},
  keywords = {Mathematics - Algebraic Geometry,Mathematics - Analysis of PDEs,Mathematics - Classical Analysis and ODEs}
}

@article{Matsubara-HeoMellinBarnesIntegralRepresentations2018,
  title = {On {{Mellin-Barnes}} Integral Representations for {{GKZ}} Hypergeometric Functions},
  author = {Matsubara-Heo, Saiei-Jaeyeong},
  date = {2020},
  journaltitle = {Kyushu J. Math.},
  volume = {74},
  number = {1},
  eprint = {1802.04939},
  eprinttype = {arxiv},
  pages = {109--125},
  url = {http://arxiv.org/abs/1802.04939},
  archiveprefix = {arXiv},
  langid = {english},
  keywords = {33C70,Mathematics - Analysis of PDEs,Mathematics - Classical Analysis and ODEs}
}

@article{MatusevichHomologicalMethodsHypergeometric2004,
  title = {Homological {{Methods}} for {{Hypergeometric Families}}},
  author = {Matusevich, Laura Felicia and Miller, Ezra and Walther, Uli},
  date = {2005},
  journaltitle = {J. Amer. Math. Soc.},
  volume = {18},
  number = {4},
  eprint = {math/0406383},
  eprinttype = {arxiv},
  pages = {919--941},
  url = {http://arxiv.org/abs/math/0406383},
  archiveprefix = {arXiv},
  keywords = {Mathematics - Algebraic Geometry,Mathematics - Commutative Algebra}
}

@article{MizeraCrossingSymmetryPlanar2021,
  title = {Crossing {{Symmetry}} in the {{Planar Limit}}},
  author = {Mizera, Sebastian},
  date = {2021},
  journaltitle = {Phys.Rev.D},
  shortjournal = {Phys.Rev.D},
  volume = {104},
  number = {4},
  url = {https://arxiv.org/abs/2104.12776v1},
  langid = {english}
}

@unpublished{MizeraLandauDiscriminants2021,
  title = {Landau {{Discriminants}}},
  author = {Mizera, Sebastian and Telen, Simon},
  date = {2021-09-16},
  shortjournal = {ArXiv210908036 Hep-Th Physicsmath-Ph},
  eprint = {2109.08036},
  eprinttype = {arxiv},
  primaryclass = {hep-th, physics:math-ph},
  url = {http://arxiv.org/abs/2109.08036},
  archiveprefix = {arXiv},
  keywords = {High Energy Physics - Theory,Mathematical Physics,Mathematics - Algebraic Geometry}
}

@article{MochNestedSumsExpansion2002,
  title = {Nested {{Sums}}, {{Expansion}} of {{Transcendental Functions}} and {{Multi-Scale Multi-Loop Integrals}}},
  author = {Moch, Sven and Uwer, Peter and Weinzierl, Stefan},
  date = {2002-06},
  journaltitle = {Journal of Mathematical Physics},
  shortjournal = {Journal of Mathematical Physics},
  volume = {43},
  number = {6},
  eprint = {hep-ph/0110083},
  eprinttype = {arxiv},
  pages = {3363--3386},
  issn = {0022-2488, 1089-7658},
  doi = {10.1063/1.1471366},
  url = {http://arxiv.org/abs/hep-ph/0110083},
  archiveprefix = {arXiv},
  langid = {english},
  keywords = {High Energy Physics - Phenomenology,High Energy Physics - Theory,Mathematics - Quantum Algebra}
}

@article{MochXSummerTranscendentalFunctions2006,
  title = {- {{XSummer}} - {{Transcendental Functions}} and {{Symbolic Summation}} in {{Form}}},
  author = {Moch, S. and Uwer, P.},
  date = {2006-05},
  journaltitle = {Computer Physics Communications},
  shortjournal = {Computer Physics Communications},
  volume = {174},
  number = {9},
  eprint = {math-ph/0508008},
  eprinttype = {arxiv},
  pages = {759--770},
  issn = {00104655},
  doi = {10.1016/j.cpc.2005.12.014},
  url = {http://arxiv.org/abs/math-ph/0508008},
  archiveprefix = {arXiv},
  keywords = {High Energy Physics - Phenomenology,Mathematical Physics}
}

@unpublished{MuhlbauerMomentumSpaceLandau2020,
  title = {Momentum {{Space Landau Equations Via Isotopy Techniques}}},
  author = {Mühlbauer, Maximilian},
  date = {2020-11-20},
  shortjournal = {ArXiv201110368 Math-Ph},
  eprint = {2011.10368},
  eprinttype = {arxiv},
  primaryclass = {math-ph},
  url = {http://arxiv.org/abs/2011.10368},
  archiveprefix = {arXiv},
  keywords = {Mathematical Physics}
}

@book{NakanishiGraphTheoryFeynman1971,
  title = {Graph Theory and {{Feynman}} Integrals},
  author = {Nakanishi, Noboru},
  date = {1971},
  series = {Mathematics and Its Applications, v. 11},
  publisher = {{Gordon and Breach}},
  location = {{New York}},
  isbn = {978-0-677-02950-4},
  keywords = {Feynman integrals,Graph theory}
}

@article{NakanishiIntegralRepresentationsScattering1961,
  title = {Integral {{Representations}} for {{Scattering Amplitudes}} in {{Perturbation Theory}}:},
  shorttitle = {Integral {{Representations}} for {{Scattering Amplitudes}} in {{Perturbation Theory}}},
  author = {Nakanishi, Noboru},
  date = {1961-09-01},
  journaltitle = {Progress of Theoretical Physics},
  shortjournal = {Progress of Theoretical Physics},
  volume = {26},
  number = {3},
  pages = {337--355},
  issn = {0033-068X},
  doi = {10.1143/PTP.26.337},
  url = {https://doi.org/10.1143/PTP.26.337}
}

@article{NakanishiOrdinaryAnomalousThresholds1959,
  title = {Ordinary and {{Anomalous Thresholds}} in {{Perturbation Theory}}},
  author = {Nakanishi, Noboru},
  date = {1959-07-01},
  journaltitle = {Progress of Theoretical Physics},
  shortjournal = {Progress of Theoretical Physics},
  volume = {22},
  number = {1},
  pages = {128--144},
  issn = {0033-068X},
  doi = {10.1143/PTP.22.128},
  url = {https://doi.org/10.1143/PTP.22.128}
}

@unpublished{NasrollahpoursamamiPeriodsFeynmanDiagrams2016,
  title = {Periods of {{Feynman Diagrams}} and {{GKZ D-Modules}}},
  author = {Nasrollahpoursamami, Emad},
  date = {2016-06-23},
  shortjournal = {ArXiv160504970 Math-Ph},
  eprint = {1605.04970},
  eprinttype = {arxiv},
  primaryclass = {math-ph},
  url = {http://arxiv.org/abs/1605.04970},
  archiveprefix = {arXiv},
  langid = {english},
  keywords = {Mathematical Physics,Mathematics - Algebraic Geometry,Mathematics - Combinatorics}
}

@article{NilssonMellinTransformsMultivariate2010,
  title = {Mellin Transforms of Multivariate Rational Functions},
  author = {Nilsson, Lisa and Passare, Mikael},
  date = {2013},
  journaltitle = {J. Geom. Anal.},
  volume = {23},
  number = {1},
  eprint = {1010.5060},
  eprinttype = {arxiv},
  pages = {24--46},
  url = {http://arxiv.org/abs/1010.5060},
  archiveprefix = {arXiv},
  langid = {english},
  keywords = {32; 51,Mathematics - Algebraic Geometry,Mathematics - Complex Variables}
}

@unpublished{NisseAmoebasCoamoebasLinear2016,
  title = {Amoebas and Coamoebas of Linear Spaces},
  author = {Nisse, Mounir and Passare, Mikael},
  date = {2016-09-13},
  shortjournal = {ArXiv12052808 Math},
  eprint = {1205.2808},
  eprinttype = {arxiv},
  primaryclass = {math},
  url = {http://arxiv.org/abs/1205.2808},
  archiveprefix = {arXiv},
  keywords = {14T05; 32A60,Mathematics - Algebraic Geometry}
}

@unpublished{NisseGeometricCombinatorialStructure2009,
  title = {Geometric and {{Combinatorial Structure}} of {{Hypersurface Coamoebas}}},
  author = {Nisse, Mounir},
  date = {2009-06-15},
  shortjournal = {ArXiv09062729 Math},
  eprint = {0906.2729},
  eprinttype = {arxiv},
  primaryclass = {math},
  url = {http://arxiv.org/abs/0906.2729},
  archiveprefix = {arXiv},
  keywords = {14J70,14Q10,32A99,51M20,52B20,Mathematics - Algebraic Geometry,Mathematics - Complex Variables,Mathematics - Geometric Topology}
}

@article{NisseHigherConvexityCoamoeba2015,
  title = {Higher Convexity of Coamoeba Complements},
  author = {Nisse, Mounir and Sottile, Frank},
  date = {2015-05-01},
  journaltitle = {Bull. Lond. Math. Soc.},
  volume = {47},
  number = {5},
  eprint = {1405.4900},
  eprinttype = {arxiv},
  pages = {853--865},
  url = {http://arxiv.org/abs/1405.4900},
  archiveprefix = {arXiv},
  keywords = {14T05; 32A60,Mathematics - Algebraic Geometry}
}

@article{OakuAlgorithmsDmodulesRestriction1998,
  title = {Algorithms for {{D-modules}} --- Restriction, Tensor Product, Localization, and Local Cohomology Groups},
  author = {Oaku, Toshinori and Takayama, Nobuki},
  date = {2001},
  journaltitle = {J. Pure Appl. Algebra},
  volume = {156},
  number = {2-3},
  eprint = {math/9805006},
  eprinttype = {arxiv},
  pages = {267--308},
  url = {http://arxiv.org/abs/math/9805006},
  archiveprefix = {arXiv},
  keywords = {14F40,14Q99,55N30,Mathematics - Algebraic Geometry}
}

@article{OakuComputationCharacteristicVariety1994,
  title = {Computation of the Characteristic Variety and the Singular Locus of a System of Differential Equations with Polynomial Coefficients},
  author = {Oaku, Toshinori},
  date = {1994-10-01},
  journaltitle = {Japan Journal of Industrial and Applied Mathematics},
  shortjournal = {Japan J. Indust. Appl. Math.},
  volume = {11},
  number = {3},
  pages = {485},
  issn = {1868-937X},
  doi = {10.1007/BF03167233},
  url = {https://doi.org/10.1007/BF03167233},
  langid = {english}
}

@article{OakuLocalizationAlgorithmModules1999,
  title = {A {{Localization Algorithm}} for {$D$}-Modules},
  author = {Oaku, Toshinori and Takayama, Nobuki and Walther, Uli},
  date = {2000},
  journaltitle = {J. Symbolic Comput.},
  shortjournal = {J Symb. Comput},
  volume = {29},
  number = {4-5},
  eprint = {math/9811030},
  eprinttype = {arxiv},
  pages = {721--728},
  url = {http://arxiv.org/abs/math/9811030},
  archiveprefix = {arXiv},
  keywords = {Mathematics - Algebraic Geometry}
}

@article{OliveSingularitiesScatteringAmplitudes1962,
  title = {Singularities of Scattering Amplitudes at Isolated Real Points},
  author = {Olive, D. I. and Taylor, J. C.},
  date = {1962-06-01},
  journaltitle = {Il Nuovo Cimento (1955-1965)},
  shortjournal = {Nuovo Cim},
  volume = {24},
  number = {5},
  pages = {814--822},
  issn = {1827-6121},
  doi = {10.1007/BF02751455},
  url = {https://doi.org/10.1007/BF02751455},
  langid = {english}
}

@book{OlverNISTHandbookMathematical2010,
  title = {{{NIST}} Handbook of Mathematical Functions},
  editor = {Olver, Frank W. J. and Lozier, Daniel W. and Boisvert, Ronald F. and Clark, Charles W.},
  date = {2010},
  publisher = {{Cambridge University Press}},
  location = {{Cambridge ; New York}},
  isbn = {978-0-521-19225-5 978-0-521-14063-8},
  langid = {english},
  pagetotal = {951},
  keywords = {Functions,Funktion <Mathematik>},
  annotation = {OCLC: ocn502037224}
}

@incollection{OpdamMultivariableHypergeometricFunctions2001,
  title = {Multivariable {{Hypergeometric Functions}}},
  booktitle = {European {{Congress}} of {{Mathematics}}},
  author = {Opdam, Eric M.},
  editor = {Casacuberta, Carles and Miró-Roig, Rosa Maria and Verdera, Joan and Xambó-Descamps, Sebastià},
  date = {2001},
  pages = {491--508},
  publisher = {{Birkhäuser Basel}},
  location = {{Basel}},
  doi = {10.1007/978-3-0348-8268-2_29},
  url = {http://link.springer.com/10.1007/978-3-0348-8268-2_29},
  isbn = {978-3-0348-9497-5 978-3-0348-8268-2},
  langid = {english}
}

@unpublished{PanzerFeynmanIntegralsHyperlogarithms2015,
  title = {Feynman Integrals and Hyperlogarithms},
  author = {Panzer, Erik},
  date = {2015-03-06},
  shortjournal = {ArXiv150607243 Hep-Th Physicsmath-Ph},
  eprint = {1506.07243},
  eprinttype = {arxiv},
  primaryclass = {hep-th, physics:math-ph},
  doi = {10.18452/17157},
  url = {http://arxiv.org/abs/1506.07243},
  archiveprefix = {arXiv},
  langid = {english},
  keywords = {High Energy Physics - Theory,Mathematical Physics,Mathematics - Number Theory}
}

@book{ParisAsymptoticsMellinBarnesIntegrals2001,
  title = {Asymptotics and {{Mellin-Barnes}} Integrals},
  author = {Paris, R. B and Kaminski, D},
  date = {2001},
  publisher = {{Cambridge University Press}},
  location = {{Cambridge; New York}},
  url = {https://doi.org/10.1017/CBO9780511546662},
  isbn = {9780521790017 9780511067068 9780511069192 9786610418053 9780511546662 9781107121126 9781280418051 9781139146616 9780511174094 9780511060755 9780511328008},
  langid = {english},
  annotation = {OCLC: 70756548}
}

@article{PassareAmoebasComplexHypersurfaces2013,
  title = {Amoebas of Complex Hypersurfaces in Statistical Thermodynamics},
  author = {Passare, Mikael and Pochekutov, Dmitry and Tsikh, August},
  date = {2013-03},
  journaltitle = {Mathematical Physics, Analysis and Geometry},
  shortjournal = {Math Phys Anal Geom},
  volume = {16},
  number = {1},
  eprint = {1112.4332},
  eprinttype = {arxiv},
  pages = {89--108},
  issn = {1385-0172, 1572-9656},
  doi = {10.1007/s11040-012-9122-x},
  url = {http://arxiv.org/abs/1112.4332},
  archiveprefix = {arXiv},
  keywords = {Mathematical Physics,Mathematics - Complex Variables,Mathematics - Quantum Algebra}
}

@article{PassareDiscriminantCoamoebasHomology2012,
  title = {Discriminant Coamoebas through Homology},
  author = {Passare, Mikael and Sottile, Frank},
  date = {2013},
  journaltitle = {J. Commut. Algebra},
  volume = {5},
  number = {3},
  eprint = {1201.6649},
  eprinttype = {arxiv},
  pages = {413--440},
  url = {http://arxiv.org/abs/1201.6649},
  archiveprefix = {arXiv},
  keywords = {14H45; 14T05,Mathematics - Algebraic Geometry}
}

@article{PassareSingularitiesHypergeometricFunctions2005,
  title = {Singularities of Hypergeometric Functions in Several Variables},
  author = {Passare, Mikael and Sadykov, Timur and Tsikh, August},
  date = {2005-05},
  journaltitle = {Compositio Mathematica},
  shortjournal = {Compositio Math.},
  volume = {141},
  number = {03},
  pages = {787--810},
  issn = {0010-437X, 1570-5846},
  doi = {10.1112/S0010437X04001411},
  url = {http://www.journals.cambridge.org/abstract_S0010437X04001411},
  langid = {english}
}

@article{PassarinoOneloopCorrectionsAnnihilation1979,
  title = {One-Loop Corrections for {$e^{+} e^{-}$} Annihilation into {${\mu^{+}}{\mu^{-}}$} in the {{Weinberg}} Model},
  author = {Passarino, G. and Veltman, M.},
  date = {1979-11-26},
  journaltitle = {Nuclear Physics B},
  shortjournal = {Nuclear Physics B},
  volume = {160},
  number = {1},
  pages = {151--207},
  issn = {0550-3213},
  doi = {10.1016/0550-3213(79)90234-7},
  url = {https://www.sciencedirect.com/science/article/pii/0550321379902347},
  langid = {english}
}

@article{PedersenProductFormulasResultants1993,
  title = {Product Formulas for Resultants and {{Chow}} Forms},
  author = {Pedersen, Paul and Sturmfels, Bernd},
  date = {1993-09-01},
  journaltitle = {Mathematische Zeitschrift},
  shortjournal = {Math Z},
  volume = {214},
  number = {1},
  pages = {377--396},
  issn = {1432-1823},
  doi = {10.1007/BF02572411},
  url = {https://doi.org/10.1007/BF02572411},
  langid = {english}
}

@book{PeskinIntroductionQuantumField1995,
  title = {An Introduction to Quantum Field Theory},
  author = {Peskin, Michael Edward and Schroeder, Daniel V.},
  date = {1995},
  publisher = {{Addison-Wesley Pub. Co}},
  location = {{Reading, Mass}},
  isbn = {978-0-201-50397-5},
  pagetotal = {842},
  keywords = {Feynman diagrams,Gauge fields (Physics),Quantum field theory,Renormalization (Physics)}
}

@book{Petkovsek1996,
  title = {A={{B}}},
  author = {Petkovšek, Marko and Wilf, Herbert S. and Zeilberger, Doron},
  date = {1996},
  publisher = {{A K Peters}},
  location = {{Wellesley, Mass}},
  isbn = {978-1-56881-063-8},
  pagetotal = {212},
  keywords = {Combinatorial identities,Data processing}
}

@book{PhamSingularitiesIntegrals2011,
  title = {Singularities of Integrals},
  author = {Pham, Frédéric},
  date = {2011},
  series = {Universitext},
  publisher = {{Springer London}},
  location = {{London}},
  doi = {10.1007/978-0-85729-603-0},
  url = {http://link.springer.com/10.1007/978-0-85729-603-0},
  isbn = {978-0-85729-602-3 978-0-85729-603-0},
  langid = {english}
}

@article{PhanScalar1loopFeynman2019,
  title = {Scalar 1-Loop {{Feynman}} Integrals as Meromorphic Functions in Space-Time Dimension d},
  author = {Phan, Khiem Hong and Riemann, Tord},
  date = {2019-04-10},
  journaltitle = {Physics Letters B},
  shortjournal = {Physics Letters B},
  volume = {791},
  pages = {257--264},
  issn = {0370-2693},
  doi = {10.1016/j.physletb.2019.02.044},
  url = {https://www.sciencedirect.com/science/article/pii/S0370269319301509},
  langid = {english},
  keywords = {Generalized hypergeometric functions,Massive one-loop Feynman integrals,Tensor integral reduction}
}

@article{PinskyAveragingAlternatingSeries1978,
  title = {Averaging an {{Alternating Series}}},
  author = {Pinsky, Mark A.},
  date = {1978},
  journaltitle = {Mathematics Magazine},
  shortjournal = {Math. Mag.},
  volume = {51},
  number = {4},
  eprint = {2689470},
  eprinttype = {jstor},
  pages = {235--237},
  publisher = {{Mathematical Association of America}},
  issn = {0025-570X},
  doi = {10.2307/2689470}
}

@article{PonzanoMonodromyRingsClass1969,
  title = {The Monodromy Rings of a Class of Self-Energy Graphs},
  author = {Ponzano, G. and Regge, T. and Speer, E. R. and Westwater, M. J.},
  date = {1969-06-01},
  journaltitle = {Communications in Mathematical Physics},
  shortjournal = {Commun.Math. Phys.},
  volume = {15},
  number = {2},
  pages = {83--132},
  issn = {1432-0916},
  doi = {10.1007/BF01645374},
  url = {https://doi.org/10.1007/BF01645374},
  langid = {english}
}

@article{PrinzAlgebraicStructuresCoupling2021,
  title = {Algebraic {{Structures}} in the {{Coupling}} of {{Gravity}} to {{Gauge Theories}}},
  author = {Prinz, David},
  date = {2021-03},
  journaltitle = {Annals of Physics},
  shortjournal = {Annals of Physics},
  volume = {426},
  eprint = {1812.09919},
  eprinttype = {arxiv},
  primaryclass = {gr-qc, physics:hep-th, physics:math-ph},
  issn = {00034916},
  doi = {10.1016/j.aop.2021.168395},
  url = {http://arxiv.org/abs/1812.09919},
  archiveprefix = {arXiv},
  keywords = {16; 53; 81; 83,General Relativity and Quantum Cosmology,High Energy Physics - Theory,Mathematical Physics}
}

@report{RambauTOPCOMTriangulationsPoint2002,
  title = {{{TOPCOM}}: {{Triangulations}} of {{Point Configurations}} and {{Oriented Matroids}}},
  shorttitle = {{{TOPCOM}}},
  author = {Rambau, Jörg},
  date = {2002-04-08},
  number = {02-17},
  institution = {{ZIB (Konrad-Zuse-Zentrum für Informationstechnik Berlin)}},
  url = {https://opus4.kobv.de/opus4-zib/frontdoor/index/index/docId/684},
  langid = {english}
}

@incollection{ReggeAlgebraicTopologyMethods1968,
  title = {Algebraic {{Topology Methods}} in the {{Theory}} of {{Feynman Relativistic Amplitudes}}},
  booktitle = {Battelle Rencontres; 1967 Lectures in Mathematics and Physics.},
  author = {Regge, T.},
  editor = {DeWitt, Cecile M. and Wheeler, John A.},
  date = {1968},
  pages = {433--458},
  publisher = {{Benjamin}},
  location = {{New York}},
  langid = {english}
}

@article{ReicheltAlgebraicAspectsHypergeometric2020,
  title = {Algebraic Aspects of Hypergeometric Differential Equations},
  author = {Reichelt, Thomas and Schulze, Mathias and Sevenheck, Christian and Walther, Uli},
  date = {2021},
  journaltitle = {Beitr. Algebra Geom.},
  volume = {62},
  number = {1},
  pages = {137--203},
  url = {https://arxiv.org/abs/2004.07262v2},
  langid = {english}
}

@article{RemiddiHarmonicPolylogarithms2000,
  title = {Harmonic {{Polylogarithms}}},
  author = {Remiddi, E. and Vermaseren, J. A. M.},
  date = {2000-02-20},
  journaltitle = {International Journal of Modern Physics A},
  shortjournal = {Int. J. Mod. Phys. A},
  volume = {15},
  number = {05},
  eprint = {hep-ph/9905237},
  eprinttype = {arxiv},
  pages = {725--754},
  issn = {0217-751X, 1793-656X},
  doi = {10.1142/S0217751X00000367},
  url = {http://arxiv.org/abs/hep-ph/9905237},
  archiveprefix = {arXiv},
  keywords = {High Energy Physics - Phenomenology}
}

@article{RiskAnalyticityEnvelopeDiagrams1968,
  title = {Analyticity of the {{Envelope Diagrams}}},
  author = {Risk, Clifford},
  date = {1968-12-01},
  journaltitle = {Journal of Mathematical Physics},
  shortjournal = {J. Math. Phys.},
  volume = {9},
  number = {12},
  pages = {2168--2172},
  publisher = {{American Institute of Physics}},
  issn = {0022-2488},
  doi = {10.1063/1.1664557},
  url = {https://aip.scitation.org/doi/10.1063/1.1664557}
}

@article{RobertsHypergeometricMotives2021,
  title = {Hypergeometric {{Motives}}},
  author = {Roberts, David P. and Villegas, Fernando Rodriguez},
  date = {2022},
  journaltitle = {Notices Amer. Math. Soc.},
  shortjournal = {Not. Amer Math Soc},
  volume = {69},
  number = {6},
  eprint = {2109.00027},
  eprinttype = {arxiv},
  pages = {914--929},
  url = {http://arxiv.org/abs/2109.00027},
  archiveprefix = {arXiv},
  keywords = {11G99,Mathematics - Algebraic Geometry,Mathematics - Number Theory}
}

@article{RomaoResourceSignsFeynman2012,
  title = {A Resource for Signs and {{Feynman}} Diagrams of the {{Standard Model}}},
  author = {Romao, Jorge C. and Silva, Joao P.},
  date = {2012-10-20},
  journaltitle = {International Journal of Modern Physics A},
  shortjournal = {Int. J. Mod. Phys. A},
  volume = {27},
  number = {26},
  eprint = {1209.6213},
  eprinttype = {arxiv},
  issn = {0217-751X, 1793-656X},
  doi = {10.1142/S0217751X12300256},
  url = {http://arxiv.org/abs/1209.6213},
  archiveprefix = {arXiv},
  keywords = {High Energy Physics - Phenomenology}
}

@thesis{SadykovHypergeometricFunctionsSeveral2002,
  title = {Hypergeometric {{Functions}} in {{Several Complex Variables}}},
  author = {Sadykov, Timur},
  date = {2002},
  institution = {{Stockholm University}}
}

@book{SaitoGrobnerDeformationsHypergeometric2000,
  title = {Gröbner {{Deformations}} of {{Hypergeometric Differential Equations}}},
  author = {Saito, Mutsumi and Sturmfels, Bernd and Takayama, Nobuki},
  date = {2000},
  series = {Algorithms and {{Computation}} in {{Mathematics}}},
  volume = {6},
  publisher = {{Springer Berlin Heidelberg}},
  location = {{Berlin, Heidelberg}},
  doi = {10.1007/978-3-662-04112-3},
  url = {http://link.springer.com/10.1007/978-3-662-04112-3},
  editorb = {Becker, E. and Bronstein, M. and Cohen, H. and Eisenbud, D. and Gilman, R.},
  editorbtype = {redactor},
  isbn = {978-3-642-08534-5 978-3-662-04112-3},
  langid = {english}
}

@article{SatoHolonomyStructureLandau1977,
  title = {Holonomy {{Structure}} of {{Landau Singularities}} and {{Feynman Integrals}}},
  author = {Sato, Mikio and Miwa, Tetsuji and Jimbo, Michio and Oshima, Toshio},
  date = {1977},
  journaltitle = {Publications of the Research Institute for Mathematical Sciences},
  shortjournal = {Publ. Res. Inst. Math. Sci.},
  volume = {12},
  pages = {387--438},
  doi = {10.2977/prims/1195196618}
}

@incollection{SatoRecentDevolpmentHyperfunction1975,
  title = {Recent Devolpment in Hyperfunction Theory and Its Application to Physics (Microlocal Analysis of {{S-matrices}} and Related Quantities)},
  booktitle = {International {{Symposium}} on {{Mathematical Problems}} in {{Theoretical Physics}}: {{January}} 23 - 29, {{Kyoto University}}, {{Kyoto}}/{{Japan}}},
  author = {Sato, Mikio},
  editor = {Araki, Huzihiro},
  date = {1975},
  series = {Lecture Notes in Physics},
  number = {39},
  publisher = {{Springer}},
  location = {{Berlin}},
  eventtitle = {International {{Symposium}} on {{Mathematical Problems}} in {{Theoretical Physics}}},
  isbn = {978-3-540-07174-7 978-0-387-07174-9},
  langid = {english}
}

@unpublished{SattelbergerDModulesHolonomicFunctions2019,
  title = {D-{{Modules}} and {{Holonomic Functions}}},
  author = {Sattelberger, Anna-Laura and Sturmfels, Bernd},
  date = {2019-12-17},
  shortjournal = {ArXiv191001395 Cs Math},
  eprint = {1910.01395},
  eprinttype = {arxiv},
  primaryclass = {cs, math},
  url = {http://arxiv.org/abs/1910.01395},
  archiveprefix = {arXiv},
  keywords = {Computer Science - Symbolic Computation,Mathematics - Algebraic Geometry}
}

@article{SchnetzGraphicalFunctionsSinglevalued2014,
  title = {Graphical Functions and Single-Valued Multiple Polylogarithms},
  author = {Schnetz, Oliver},
  date = {2014-11-11},
  journaltitle = {Commun. Num. Theor. Phys.},
  shortjournal = {Commun. Num. Theor. Phys.},
  volume = {08},
  eprint = {1302.6445},
  eprinttype = {arxiv},
  primaryclass = {hep-th},
  pages = {589--675},
  doi = {10.48550/arXiv.1302.6445},
  url = {http://arxiv.org/abs/1302.6445},
  archiveprefix = {arXiv},
  keywords = {High Energy Physics - Theory,Mathematics - Number Theory}
}

@article{SchnetzQuantumPeriodsCensus2009,
  title = {Quantum Periods: {{A}} Census of {$\phi^{4}$}-Transcendentals},
  shorttitle = {Quantum Periods},
  author = {Schnetz, Oliver},
  date = {2010},
  journaltitle = {Commun. Num. Theor. Phys.},
  shortjournal = {Commun. Num. Theor. Phys.},
  volume = {4},
  eprint = {0801.2856},
  eprinttype = {arxiv},
  primaryclass = {hep-th},
  pages = {1--48},
  doi = {10.48550/arXiv.0801.2856},
  url = {http://arxiv.org/abs/0801.2856},
  archiveprefix = {arXiv},
  keywords = {High Energy Physics - Theory}
}

@book{SchrijverTheoryLinearInteger1986,
  title = {Theory of Linear and Integer Programming},
  author = {Schrijver, Alexander},
  date = {1986},
  series = {Wiley-{{Interscience}} Series in Discrete Mathematics and Optimization},
  publisher = {{Wiley}},
  location = {{Chichester ; New York}},
  isbn = {978-0-471-90854-8},
  langid = {english},
  pagetotal = {471},
  keywords = {Integer programming,Linear programming}
}

@unpublished{SchultkaToricGeometryRegularization2018,
  title = {Toric Geometry and Regularization of {{Feynman}} Integrals},
  author = {Schultka, Konrad},
  date = {2018-06-04},
  shortjournal = {ArXiv180601086 Hep-Ph Physicshep-Th Physicsmath-Ph},
  eprint = {1806.01086},
  eprinttype = {arxiv},
  primaryclass = {hep-ph, physics:hep-th, physics:math-ph},
  url = {http://arxiv.org/abs/1806.01086},
  archiveprefix = {arXiv},
  langid = {english},
  keywords = {High Energy Physics - Phenomenology,High Energy Physics - Theory,Mathematical Physics,Mathematics - Algebraic Geometry}
}

@article{SchulzeIrregularityHypergeometricSystems2006,
  title = {Irregularity of Hypergeometric Systems via Slopes along Coordinate Subspaces},
  author = {Schulze, Mathias and Walther, Uli},
  date = {2008},
  journaltitle = {Duke Math. J.},
  volume = {142},
  number = {3},
  pages = {465--509},
  doi = {10.1215/00127094-2008-011},
  url = {https://arxiv.org/abs/math/0608668v3},
  langid = {english}
}

@book{SchwartzQuantumFieldTheory2014,
  title = {Quantum Field Theory and the Standard Model},
  author = {Schwartz, Matthew Dean},
  date = {2014},
  publisher = {{Cambridge University Press}},
  location = {{New York}},
  isbn = {978-1-107-03473-0},
  langid = {english},
  pagetotal = {850},
  keywords = {Particles (Nuclear physics),Quantum field theory,SCIENCE / Mathematical Physics,Textbooks}
}

@book{SeshuLinearGraphsElectrical1961,
  title = {Linear {{Graphs}} and {{Electrical Networks}}},
  author = {Seshu, Sundaram and Reed, Myril B.},
  date = {1961},
  publisher = {{Addison-Wesley Pub. Co}}
}

@book{SmirnovAnalyticToolsFeynman2012,
  title = {Analytic {{Tools}} for {{Feynman Integrals}}},
  author = {Smirnov, Vladimir A.},
  date = {2012},
  series = {Springer {{Tracts}} in {{Modern Physics}}},
  publisher = {{Springer Berlin Heidelberg}},
  location = {{Berlin, Heidelberg}},
  doi = {10.1007/978-3-642-34886-0},
  url = {http://link.springer.com/10.1007/978-3-642-34886-0},
  isbn = {978-3-642-34885-3 978-3-642-34886-0},
  langid = {english}
}

@book{SmirnovFeynmanIntegralCalculus2006,
  title = {Feynman Integral Calculus},
  author = {Smirnov, V. A.},
  date = {2006},
  publisher = {{Springer}},
  location = {{Berlin ; New York}},
  isbn = {978-3-540-30610-8},
  langid = {english},
  pagetotal = {283},
  keywords = {Calculus; Integral,Feynman integrals,Textbooks}
}

@article{SmirnovFIESTAOptimizedFeynman2016,
  title = {{{FIESTA}} 4: Optimized {{Feynman}} Integral Calculations with {{GPU}} Support},
  shorttitle = {{{FIESTA}} 4},
  author = {Smirnov, Alexander V.},
  date = {2016-07},
  journaltitle = {Computer Physics Communications},
  shortjournal = {Computer Physics Communications},
  volume = {204},
  eprint = {1511.03614},
  eprinttype = {arxiv},
  pages = {189--199},
  issn = {00104655},
  doi = {10.1016/j.cpc.2016.03.013},
  url = {http://arxiv.org/abs/1511.03614},
  archiveprefix = {arXiv},
  langid = {english},
  keywords = {Computer Science - Mathematical Software,High Energy Physics - Phenomenology}
}

@article{SpeerAnalyticRenormalization1968,
  title = {Analytic {{Renormalization}}},
  author = {Speer, Eugene R.},
  date = {1968-09},
  journaltitle = {Journal of Mathematical Physics},
  shortjournal = {J. Math. Phys.},
  volume = {9},
  number = {9},
  pages = {1404--1410},
  publisher = {{American Institute of Physics}},
  issn = {0022-2488},
  doi = {10.1063/1.1664729},
  url = {https://aip.scitation.org/doi/10.1063/1.1664729}
}

@book{SpeerGeneralizedFeynmanAmplitudes1969,
  title = {Generalized {{Feynman Amplitudes}}},
  author = {Speer, Eugene R.},
  date = {1969},
  publisher = {{Princeton University Press}}
}

@article{SpeerGenericFeynmanAmplitudes1971,
  title = {Generic Feynman amplitudes},
  author = {Speer, E. R. and Westwater, M. J.},
  date = {1971},
  journaltitle = {Annales de l'I.H.P. Physique théorique},
  shortjournal = {Ann. IHP Phys. Théorique},
  volume = {14},
  number = {1},
  pages = {1--55},
  url = {http://www.numdam.org/item/AIHPA_1971__14_1_1_0/},
  langid = {french}
}

@article{SpeerUltravioletInfraredSingularity1975,
  title = {Ultraviolet and Infrared Singularity Structure of Generic {{Feynman}} Amplitudes},
  author = {Speer, Eugene R.},
  date = {1975},
  journaltitle = {Annales de l'I.H.P. Physique théorique},
  shortjournal = {Ann. IHP Phys. Théorique},
  volume = {23},
  number = {1},
  pages = {1--21},
  publisher = {{Gauthier-Villars}},
  issn = {0246-0211},
  url = {https://eudml.org/doc/75859},
  langid = {english}
}

@book{SrivastavaTreatiseGeneratingFunctions1984,
  title = {A Treatise on Generating Functions},
  author = {Srivastava, H. M. and Manocha, H. L.},
  date = {1984},
  series = {Ellis {{Horwood}} Series in Mathematics and Its Applications},
  publisher = {{E. Horwood ; Halsted Press}},
  location = {{Chichester, West Sussex, England : New York}},
  isbn = {978-0-470-20010-0},
  pagetotal = {569},
  keywords = {Generating functions}
}

@book{SrivastavaZetaQZetaFunctions2012,
  title = {Zeta and Q-{{Zeta}} Functions and Associated Series and Integrals},
  author = {Srivastava, H. M. and Junesang, Choi and Srivastava, H. M.},
  date = {2012},
  series = {Elsevier {{Insights}}},
  edition = {1st ed},
  publisher = {{Elsevier}},
  location = {{Amsterdam ; Boston}},
  isbn = {978-0-12-385218-2},
  pagetotal = {657},
  keywords = {Functions; Zeta},
  annotation = {OCLC: ocn752671347}
}

@article{StaglianoPackageComputationsClassical2018,
  title = {A Package for Computations with Classical Resultants},
  author = {Staglianò, Giovanni},
  date = {2018-06-29},
  journaltitle = {Journal of Software for Algebra and Geometry},
  shortjournal = {J. Softw. Alg. Geom.},
  volume = {8},
  number = {1},
  eprint = {1705.01430},
  eprinttype = {arxiv},
  primaryclass = {math},
  pages = {21--30},
  issn = {1948-7916},
  doi = {10.2140/jsag.2018.8.21},
  url = {http://arxiv.org/abs/1705.01430},
  archiveprefix = {arXiv},
  keywords = {13P15; 68W30,Mathematics - Algebraic Geometry}
}

@article{StaglianoPackageComputationsSparse2020,
  title = {A Package for Computations with Sparse Resultants},
  author = {Staglianò, Giovanni},
  date = {2021-12-31},
  journaltitle = {Journal of Software for Algebra and Geometry},
  shortjournal = {J. Softw. Alg. Geom.},
  volume = {11},
  number = {1},
  eprint = {2010.00286},
  eprinttype = {arxiv},
  primaryclass = {math},
  pages = {61--69},
  issn = {1948-7916},
  doi = {10.2140/jsag.2021.11.61},
  url = {http://arxiv.org/abs/2010.00286},
  archiveprefix = {arXiv},
  keywords = {13P15; 68W30,Mathematics - Algebraic Geometry,Mathematics - Commutative Algebra}
}

@book{SteinComplexAnalysis2003,
  title = {Complex Analysis},
  author = {Stein, Elias M. and Shakarchi, Rami},
  date = {2003},
  series = {Princeton Lectures in Analysis},
  number = {2},
  publisher = {{Princeton University Press}},
  location = {{Princeton, N.J}},
  isbn = {978-0-691-11385-2},
  langid = {english},
  pagetotal = {379},
  keywords = {Functions of complex variables,Mathematical analysis},
  annotation = {OCLC: ocm51738532}
}

@book{StermanIntroductionQuantumField1993,
  title = {An Introduction to Quantum Field Theory},
  author = {Sterman, George},
  date = {1993},
  publisher = {{Cambridge University Press}},
  location = {{Cambridge, New York, NY, USA}},
  isbn = {978-0-521-32258-4 978-0-521-31132-8},
  langid = {english},
  keywords = {Quantum field theory}
}

@unpublished{StienstraGKZHypergeometricStructures2005,
  title = {{{GKZ Hypergeometric Structures}}},
  author = {Stienstra, Jan},
  date = {2005-11-14},
  eprint = {math/0511351},
  eprinttype = {arxiv},
  url = {http://arxiv.org/abs/math/0511351},
  archiveprefix = {arXiv},
  langid = {english},
  keywords = {33C70; 14M25 (Primary); 14N35 (Secondary),High Energy Physics - Theory,Mathematics - Algebraic Geometry}
}

@article{StueckelbergRemarqueProposCreation1941,
  title = {Remarque à propos de la création de paires de particules en theorie de relativité},
  author = {Stueckelberg, E.C.G.},
  date = {1941},
  shortjournal = {Helv. Phys. Acta},
  volume = {14},
  pages = {588},
  doi = {10.1007/978-3-7643-8878-2_19},
  url = {http://link.springer.com/10.1007/978-3-7643-8878-2_19},
  langid = {french}
}

@book{SturmfelsGrobnerBasesConvex1995,
  title = {Gröbner {{Bases}} and {{Convex Polytopes}}},
  author = {Sturmfels, Bernd},
  date = {1995-12-14},
  series = {University {{Lecture Series}}},
  volume = {8},
  publisher = {{American Mathematical Society}},
  location = {{Providence, Rhode Island}},
  doi = {10.1090/ulect/008},
  url = {http://www.ams.org/ulect/008},
  isbn = {978-0-8218-0487-2 978-1-4704-2157-1},
  langid = {english}
}

@incollection{SturmfelsIntroductionResultants1997,
  title = {Introduction to Resultants},
  booktitle = {Proceedings of {{Symposia}} in {{Applied Mathematics}}},
  author = {Sturmfels, Bernd},
  editor = {Cox, David and Sturmfels, Bernd},
  date = {1997},
  volume = {53},
  pages = {25--39},
  publisher = {{American Mathematical Society}},
  location = {{Providence, Rhode Island}},
  doi = {10.1090/psapm/053/1602347},
  url = {http://www.ams.org/psapm/053},
  isbn = {978-0-8218-0750-7 978-0-8218-9268-8},
  langid = {english}
}

@article{SturmfelsNewtonPolytopeResultant1994,
  title = {On the {{Newton Polytope}} of the {{Resultant}}},
  author = {Sturmfels, Bernd},
  date = {1994-04-01},
  journaltitle = {Journal of Algebraic Combinatorics},
  shortjournal = {Journal of Algebraic Combinatorics},
  volume = {3},
  number = {2},
  pages = {207--236},
  issn = {1572-9192},
  doi = {10.1023/A:1022497624378},
  url = {https://doi.org/10.1023/A:1022497624378},
  langid = {english}
}

@book{SturmfelsSolvingSystemsPolynomial2002,
  title = {Solving {{Systems}} of {{Polynomial Equations}}},
  author = {Sturmfels, Bernd},
  date = {2002},
  langid = {english}
}

@report{SymanzikCalculationsConformalInvariant1972,
  title = {On Calculations in {{Conformal Invariant Field Theories}}},
  author = {Symanzik, K},
  date = {1972},
  number = {Desy 72/6},
  institution = {{Deutsches Elektronen Synchrotron DESY}}
}

@article{SymanzikDispersionRelationsVertex1958,
  title = {Dispersion {{Relations}} and {{Vertex Properties}} in {{Perturbation Theory}}},
  author = {Symanzik, K},
  date = {1958},
  shortjournal = {Progress of Theoretical Physics},
  volume = {20},
  number = {5},
  pages = {690}
}

@incollection{TakayamaGrobnerBasisRings2013,
  title = {Gröbner {{Basis}} for {{Rings}} of {{Differential Operators}} and {{Applications}}},
  booktitle = {Gröbner {{Bases}}: {{Statistics}} and {{Software Systems}}},
  author = {Takayama, Nobuki},
  editor = {Hibi, Takayuki},
  date = {2013},
  pages = {279--344},
  publisher = {{Springer Japan}},
  location = {{Tokyo}},
  doi = {10.1007/978-4-431-54574-3_6},
  url = {https://doi.org/10.1007/978-4-431-54574-3_6},
  isbn = {978-4-431-54574-3},
  langid = {english},
  keywords = {Hilbert Function,Hypergeometric Series,Left Ideal,Singular Locus,Weight Vector}
}

@misc{TantauTikZPGFManual2021,
  title = {{{TikZ}} \& {{PGF Manual}} for {{Version}} 3.1.9a},
  author = {Tantau, Till},
  date = {2021},
  url = {https://github.com/pgf-tikz/pgf}
}

@article{TarasovConnectionFeynmanIntegrals1996,
  title = {Connection between {{Feynman}} Integrals Having Different Values of the Space-Time Dimension},
  author = {Tarasov, O. V.},
  date = {1996-11-15},
  journaltitle = {Physical Review D},
  shortjournal = {Phys. Rev. D},
  volume = {54},
  number = {10},
  eprint = {hep-th/9606018},
  eprinttype = {arxiv},
  pages = {6479--6490},
  issn = {0556-2821, 1089-4918},
  doi = {10.1103/PhysRevD.54.6479},
  url = {http://arxiv.org/abs/hep-th/9606018},
  archiveprefix = {arXiv},
  keywords = {High Energy Physics - Phenomenology,High Energy Physics - Theory}
}

@article{TarasovGeneralizedRecurrenceRelations1997,
  title = {Generalized {{Recurrence Relations}} for {{Two-loop Propagator Integrals}} with {{Arbitrary Masses}}},
  author = {Tarasov, O. V.},
  date = {1997-09},
  journaltitle = {Nuclear Physics B},
  shortjournal = {Nuclear Physics B},
  volume = {502},
  number = {1-2},
  eprint = {hep-ph/9703319},
  eprinttype = {arxiv},
  pages = {455--482},
  issn = {05503213},
  doi = {10.1016/S0550-3213(97)00376-3},
  url = {http://arxiv.org/abs/hep-ph/9703319},
  archiveprefix = {arXiv},
  langid = {english},
  keywords = {High Energy Physics - Phenomenology}
}

@unpublished{TellanderCohenMacaulayPropertyFeynman2021,
  title = {Cohen-{{Macaulay Property}} of {{Feynman Integrals}}},
  author = {Tellander, Felix and Helmer, Martin},
  date = {2021-08-03},
  shortjournal = {ArXiv210801410 Hep-Th Physicsmath-Ph},
  eprint = {2108.01410},
  eprinttype = {arxiv},
  primaryclass = {hep-th, physics:math-ph},
  url = {http://arxiv.org/abs/2108.01410},
  archiveprefix = {arXiv},
  keywords = {High Energy Physics - Theory,Mathematical Physics,Mathematics - Algebraic Geometry}
}

@book{ThomasLecturesGeometricCombinatorics2006,
  title = {Lectures in Geometric Combinatorics},
  author = {Thomas, Rekha R.},
  date = {2006},
  series = {Student Mathematical Library, {{IAS}}/{{Park City}} Mathematical Subseries},
  number = {v. 33},
  publisher = {{American Mathematical Society ; Institute for Advanced Study}},
  location = {{Providence}},
  isbn = {978-0-8218-4140-2},
  langid = {english},
  pagetotal = {143},
  keywords = {Combinatorial analysis,Combinatorial geometry},
  annotation = {OCLC: ocm65302363}
}

@article{THooftRegularizationRenormalizationGauge1972,
  title = {Regularization and Renormalization of Gauge Fields},
  author = {\noop{Hooft} 't Hooft, G. and Veltman, M.},
  date = {1972-07},
  journaltitle = {Nuclear Physics B},
  shortjournal = {Nuclear Physics B},
  volume = {44},
  number = {1},
  pages = {189--213},
  issn = {05503213},
  doi = {10.1016/0550-3213(72)90279-9},
  url = {https://linkinghub.elsevier.com/retrieve/pii/0550321372902799},
  langid = {english}
}

@book{ThulasiramanHandbookGraphTheory2016,
  title = {Handbook of {{Graph Theory}}, {{Combinatorial Optimization}}, and {{Algorithms}}},
  editor = {Thulasiraman, Krishnaiyan and Arumugam, Subramanian and Brandstädt, Andreas and Nishizeki, Takao},
  date = {2016-01-05},
  publisher = {{Chapman and Hall/CRC}},
  doi = {10.1201/b19163},
  url = {https://www.taylorfrancis.com/books/9781420011074},
  isbn = {978-0-429-15023-4},
  langid = {english}
}

@article{TkachovTheoremAnalyticalCalculability1981,
  title = {A Theorem on Analytical Calculability of 4-Loop Renormalization Group Functions},
  author = {Tkachov, F. V.},
  date = {1981-03-19},
  journaltitle = {Physics Letters B},
  shortjournal = {Physics Letters B},
  volume = {100},
  number = {1},
  pages = {65--68},
  issn = {0370-2693},
  doi = {10.1016/0370-2693(81)90288-4},
  url = {https://www.sciencedirect.com/science/article/pii/0370269381902884},
  langid = {english}
}

@book{TutteGraphTheory1984,
  title = {Graph Theory},
  author = {Tutte, W. T.},
  date = {1984},
  series = {Encyclopedia of Mathematics and Its Applications ; {{Section}}, {{Combinatorics}}},
  number = {v. 21},
  publisher = {{Addison-Wesley Pub. Co., Advanced Book Program}},
  location = {{Menlo Park, Calif}},
  isbn = {978-0-201-13520-6},
  pagetotal = {333},
  keywords = {Graph theory}
}

@book{TweddleJamesStirlingAnnotated2003,
  title = {James {{Stirling}}'s: An {{Annotated Translation}} of {{Stirling}}'s {{Text}}.},
  shorttitle = {James {{Stirling}}'s},
  author = {Tweddle, Ian},
  date = {2003},
  publisher = {{Springer London}},
  location = {{London}},
  url = {http://public.ebookcentral.proquest.com/choice/publicfullrecord.aspx?p=3071143},
  isbn = {978-1-4471-0021-8},
  langid = {english},
  annotation = {OCLC: 958522986}
}

@article{UsyukinaRepresentationThreepointFunction1975,
  title = {On a Representation for the Three-Point Function},
  author = {Usyukina, N. I.},
  date = {1975-03-01},
  journaltitle = {Theoretical and Mathematical Physics},
  shortjournal = {Theor Math Phys},
  volume = {22},
  number = {3},
  pages = {210--214},
  issn = {1573-9333},
  doi = {10.1007/BF01037795},
  url = {https://doi.org/10.1007/BF01037795},
  langid = {english}
}

@unpublished{VanhoveFeynmanIntegralsToric2018,
  title = {Feynman Integrals, Toric Geometry and Mirror Symmetry},
  author = {Vanhove, Pierre},
  date = {2018-09-11},
  shortjournal = {ArXiv180711466 Hep-Th Physicsmath-Ph},
  eprint = {1807.11466},
  eprinttype = {arxiv},
  primaryclass = {hep-th, physics:math-ph},
  url = {http://arxiv.org/abs/1807.11466},
  archiveprefix = {arXiv},
  langid = {english},
  keywords = {High Energy Physics - Theory,Mathematical Physics,Mathematics - Algebraic Geometry,Mathematics - Number Theory}
}

@article{VermaserenHarmonicSumsMellin1999,
  title = {Harmonic Sums, {{Mellin}} Transforms and {{Integrals}}},
  author = {Vermaseren, J. A. M.},
  date = {1999-05-20},
  journaltitle = {International Journal of Modern Physics A},
  shortjournal = {Int. J. Mod. Phys. A},
  volume = {14},
  number = {13},
  eprint = {hep-ph/9806280},
  eprinttype = {arxiv},
  pages = {2037--2076},
  issn = {0217-751X, 1793-656X},
  doi = {10.1142/S0217751X99001032},
  url = {http://arxiv.org/abs/hep-ph/9806280},
  archiveprefix = {arXiv},
  langid = {english},
  keywords = {High Energy Physics - Phenomenology}
}

@inbook{VilenkinGelFandHypergeometric1995,
  title = {Gel’fand {{Hypergeometric Functions}}},
  booktitle = {Representation of {{Lie Groups}} and {{Special Functions}}},
  author = {Vilenkin, N. Ja. and Klimyk, A. U.},
  date = {1995},
  pages = {393--462},
  publisher = {{Springer Netherlands}},
  location = {{Dordrecht}},
  doi = {10.1007/978-94-017-2885-0_6},
  url = {http://link.springer.com/10.1007/978-94-017-2885-0_6},
  bookauthor = {Vilenkin, N. Ja. and Klimyk, A. U.},
  isbn = {978-90-481-4486-0 978-94-017-2885-0},
  langid = {english}
}

@article{WaltherAlgorithmicComputationRham1998,
  title = {Algorithmic {{Computation}} of de {{Rham Cohomology}} of {{Complements}} of {{Complex Affine Varieties}}},
  author = {Walther, Uli},
  date = {2000},
  journaltitle = {J. Symbolic Comput.},
  shortjournal = {J Symb. Comput},
  volume = {29},
  number = {4-5},
  eprint = {math/9807176},
  eprinttype = {arxiv},
  pages = {795--839},
  url = {http://arxiv.org/abs/math/9807176},
  archiveprefix = {arXiv},
  keywords = {Mathematics - Algebraic Geometry}
}

@misc{WaltherFeynmanGraphsMatroids2022,
  title = {On {{Feynman}} Graphs, Matroids, and {{GKZ-systems}}},
  author = {Walther, Uli},
  date = {2022-06-10},
  number = {arXiv:2206.05378},
  eprint = {2206.05378},
  eprinttype = {arxiv},
  primaryclass = {math-ph},
  publisher = {{arXiv}},
  doi = {10.48550/arXiv.2206.05378},
  url = {http://arxiv.org/abs/2206.05378},
  archiveprefix = {arXiv},
  keywords = {Mathematical Physics,Mathematics - Algebraic Geometry,Mathematics - Combinatorics}
}

@article{WeinbergHighEnergyBehaviorQuantum1960,
  title = {High-{{Energy Behavior}} in {{Quantum Field Theory}}},
  author = {Weinberg, Steven},
  date = {1960-05-01},
  journaltitle = {Physical Review},
  shortjournal = {Phys. Rev.},
  volume = {118},
  number = {3},
  pages = {838--849},
  issn = {0031-899X},
  doi = {10.1103/PhysRev.118.838},
  url = {https://link.aps.org/doi/10.1103/PhysRev.118.838},
  langid = {english}
}

@book{WeinbergQuantumTheoryFields1995,
  title = {The Quantum Theory of Fields},
  author = {Weinberg, Steven},
  date = {1995},
  publisher = {{Cambridge University Press}},
  location = {{Cambridge ; New York}},
  isbn = {978-0-521-55001-7 978-0-521-55002-4 978-0-521-66000-6},
  pagetotal = {3},
  keywords = {Quantum field theory}
}

@unpublished{WeinbergWhatQuantumField1997,
  title = {What Is {{Quantum Field Theory}}, and {{What Did We Think It Is}}?},
  author = {Weinberg, Steven},
  date = {1997-02-03},
  shortjournal = {ArXivhep-Th9702027},
  eprint = {hep-th/9702027},
  eprinttype = {arxiv},
  url = {http://arxiv.org/abs/hep-th/9702027},
  archiveprefix = {arXiv},
  keywords = {Condensed Matter,General Relativity and Quantum Cosmology,High Energy Physics - Phenomenology,High Energy Physics - Theory}
}

@article{WeinzierlArtComputingLoop2006,
  title = {The {{Art}} of {{Computing Loop Integrals}}},
  author = {Weinzierl, Stefan},
  date = {2007},
  journaltitle = {Fields Inst.Commun.},
  volume = {50},
  eprint = {hep-ph/0604068},
  eprinttype = {arxiv},
  pages = {345--395},
  url = {http://arxiv.org/abs/hep-ph/0604068},
  archiveprefix = {arXiv},
  langid = {english},
  keywords = {High Energy Physics - Phenomenology}
}

@article{WeinzierlExpansionHalfintegerValues2004,
  title = {Expansion around Half-Integer Values, Binomial Sums and Inverse Binomial Sums},
  author = {Weinzierl, Stefan},
  date = {2004-07},
  journaltitle = {Journal of Mathematical Physics},
  shortjournal = {Journal of Mathematical Physics},
  volume = {45},
  number = {7},
  eprint = {hep-ph/0402131},
  eprinttype = {arxiv},
  pages = {2656--2673},
  issn = {0022-2488, 1089-7658},
  doi = {10.1063/1.1758319},
  url = {http://arxiv.org/abs/hep-ph/0402131},
  archiveprefix = {arXiv},
  keywords = {High Energy Physics - Phenomenology,High Energy Physics - Theory}
}

@unpublished{WeinzierlFeynmanIntegrals2022,
  title = {Feynman {{Integrals}}},
  author = {Weinzierl, Stefan},
  date = {2022-01-10},
  shortjournal = {ArXiv220103593 Hep-Ph Physicshep-Th Physicsmath-Ph},
  eprint = {2201.03593},
  eprinttype = {arxiv},
  primaryclass = {hep-ph, physics:hep-th, physics:math-ph},
  url = {http://arxiv.org/abs/2201.03593},
  archiveprefix = {arXiv},
  keywords = {High Energy Physics - Phenomenology,High Energy Physics - Theory,Mathematical Physics}
}

@article{WentzelSchwereElektronenUnd1938,
  title = {Schwere Elektronen und Theorien der Kernvorgänge},
  author = {Wentzel, G.},
  date = {1938-05-01},
  journaltitle = {Naturwissenschaften},
  shortjournal = {Naturwissenschaften},
  volume = {26},
  number = {18},
  pages = {273--279},
  issn = {1432-1904},
  doi = {10.1007/BF01590308},
  url = {https://doi.org/10.1007/BF01590308},
  langid = {ngerman}
}

@book{WhittakerCourseModernAnalysis2021,
  title = {A {{Course}} of {{Modern Analysis}}},
  author = {Whittaker, E. T. and Watson, G. N.},
  editor = {Moll, Victor H.},
  date = {2021-08-31},
  edition = {5},
  publisher = {{Cambridge University Press}},
  doi = {10.1017/9781009004091},
  url = {https://www.cambridge.org/core/product/identifier/9781009004091/type/book},
  isbn = {978-1-00-900409-1 978-1-316-51893-9},
  langid = {english}
}

@software{WolframResearchIncMathematicalFunctionsSite,
  title = {Mathematical {{Functions Site}} -- {{Pochhammer}} Symbol},
  author = {Wolfram Research Inc},
  url = {https://functions.wolfram.com/GammaBetaErf/Pochhammer/}
}

@software{WolframResearchIncMathematicaVersion12,
  title = {Mathematica, {{Version}} 12.3.1},
  author = {{Wolfram Research Inc}},
  url = {https://www.wolfram.com/mathematica}
}

@article{WuGeneralizedEulerPochhammerIntegral1974,
  title = {Generalized {{Euler-Pochhammer}} Integral Representation for Single-Loop {{Feynman}} Amplitudes},
  author = {Wu, Alfred C. T.},
  date = {1974-01-15},
  journaltitle = {Physical Review D},
  shortjournal = {Phys. Rev. D},
  volume = {9},
  number = {2},
  pages = {370--373},
  publisher = {{American Physical Society}},
  doi = {10.1103/PhysRevD.9.370},
  url = {https://link.aps.org/doi/10.1103/PhysRevD.9.370}
}

@incollection{ZagierValuesZetaFunctions1994,
  title = {Values of {{Zeta Functions}} and {{Their Applications}}},
  booktitle = {First {{European Congress}} of {{Mathematics Paris}}, {{July}} 6–10, 1992: {{Vol}}. {{II}}: {{Invited Lectures}} ({{Part}} 2)},
  author = {Zagier, Don},
  editor = {Joseph, Anthony and Mignot, Fulbert and Murat, François and Prum, Bernard and Rentschler, Rudolf},
  date = {1994},
  series = {Progress in {{Mathematics}}},
  volume = {120},
  pages = {497--512},
  publisher = {{Birkhäuser Basel}},
  location = {{Basel}},
  doi = {10.1007/978-3-0348-9112-7},
  url = {http://link.springer.com/10.1007/978-3-0348-9112-7},
  editorb = {Oesterlé, J. and Weinstein, A.},
  editorbtype = {redactor},
  isbn = {978-3-0348-9912-3 978-3-0348-9112-7},
  langid = {english}
}

@article{ZeilbergerMethodCreativeTelescoping1991,
  title = {The {{Method}} of {{Creative Telescoping}}},
  shorttitle = {{{PII}}},
  author = {Zeilberger, Doron},
  date = {1991},
  journaltitle = {J. Symbolic Computation},
  shortjournal = {J Symb. Comput.},
  volume = {11},
  pages = {195--204},
  doi = {10.1016/S0747-7171(08)80044-2},
  url = {https://reader.elsevier.com/reader/sd/pii/S0747717108800442?token=06ABA4443B8F71EA0C0D68763B7423505BD4E05252A1F6E9BB68DB371501227723CE78BF3B3FD86F9F5E2C4C06795F30&originRegion=eu-west-1&originCreation=20220503195930},
  langid = {english}
}

@book{ZieglerLecturesPolytopes1995,
  title = {Lectures on Polytopes},
  author = {Ziegler, Günter M.},
  date = {1995},
  series = {Graduate Texts in Mathematics},
  number = {152},
  publisher = {{Springer-Verlag}},
  location = {{New York}},
  isbn = {978-0-387-94329-9 978-3-540-94329-7 978-0-387-94365-7 978-3-540-94365-5},
  pagetotal = {370},
  keywords = {Polytopes}
}

@article{BoylingHomologicalApproachParametric1968,
  title = {A Homological Approach to Parametric {{Feynman}} Integrals},
  author = {Boyling, J. B.},
  date = {1968-01-01},
  journaltitle = {Il Nuovo Cimento A (1965-1970)},
  shortjournal = {Nuovo Cimento A (1965-1970)},
  volume = {53},
  number = {2},
  pages = {351--375},
  issn = {1826-9869},
  langid = {english},
}

@article{LandshoffHierarchicalPrinciplePerturbation1966,
  title = {The Hierarchical Principle in Perturbation Theory},
  author = {Landshoff, P. V. and Olive, D. I. and Polkinghorne, J. C.},
  date = {1966-05-01},
  journaltitle = {Il Nuovo Cimento A (1971-1996)},
  shortjournal = {Il Nuovo Cimento A (1971-1996)},
  volume = {43},
  number = {2},
  pages = {444--453},
  issn = {1826-9869},
  langid = {english}
}

\glsaddallunused

\cleardoublepage

\end{document}